\newcommand{\pr}  {\mathrm{pr}}
\newcommand{\supp}{\operatorname{supp}}
\newcommand{\rot} {\operatorname{rot}}
\DeclareMathOperator\ran{ran}
\DeclareMathOperator\rank{rank}
\newcommand{\spann}{\mathrm{span}}
\newcommand{\ad}{{\operatorname{ad}}}
\newcommand{\Opl}{{\operatorname{Op^l\!}}}
\newcommand{\Opr}{{\operatorname{Op^r\!}}}
\newcommand{\Opw}{{\operatorname{Op\!^w\!}}}
\newcommand{\N}{{\mathbb{N}}}
\newcommand{\R}{{\mathbb{R}}}
\newcommand{\C}{{\mathbb{C}}}
\renewcommand{\S}{{\mathbb{S}}}
\newcommand{\T}{{\mathbb{T}}}
\newcommand{\nfrac}[2]{\genfrac{}{}{0pt}{}{#1}{#2}}
\renewcommand\i{\mathrm{i}}
\newcommand{\y}{\mbox{\boldmath $y$}}
\renewcommand{\c}{{\mathrm c}}
\renewcommand{\r}{{\mathrm r}}
\newcommand{\sph}{{\mathrm {sph}}}
\newcommand{\bd}{{\mathrm {bd}}}
\newcommand{\e}{{\mathrm e}}
\newcommand{\ess}{{\mathrm {ess}}}
\renewcommand{\d}{{\mathrm d}}
\newcommand{\dol}{{\mathrm {dol}}}
\newcommand{\diag}{{\mathrm {diag}}}
\newcommand{\disc}{{\mathrm {disc}}}
\newcommand{\pupo}{{\mathrm {pp}}}
\renewcommand{\sc}{{\mathrm {sc}}}
\newcommand{\crt}{{\mathrm {crt}}}
\renewcommand{\Re}{\operatorname{Re}}
\renewcommand{\Im}{\operatorname{Im}}
\DeclarePairedDelimiter\inp\langle\rangle
\newcommand\parb[2][]{#1 \big ( #2#1\big )}
\newcommand\parbb[2][]{#1 \Big ( #2#1\Big )}
\newcommand{\pp}{{\mathrm {pp}}}
\renewcommand{\exp}{{\mathrm {exp}}}
\newcommand{\mand}{\text{ and }}
\newcommand{\mfor}{\text{ for }}
\newcommand{\mforall}{\text{ for all }}
\newcommand{\mif}{\text{ if }}
\newcommand{\mon}{\text{ on }}
\newcommand{\cas}{{\textrm {the Cauchy-Schwarz inequality }}}
\newcommand{\caS}{{\textrm {the Cauchy-Schwarz inequality}}}
\newcommand{\1}{\hspace{ 1cm}}
\newcommand{\w}[1]{\langle {#1} \rangle}
\DeclarePairedDelimiter\ket{\lvert}{\rangle}
\DeclarePairedDelimiter\bra{\langle}{\rvert}
\DeclareMathOperator*{\wslim}{w^\star-lim}
\DeclareMathOperator*{\slim}{s-lim}
\DeclareMathOperator*{\swslim}{s- w^\star-lim}
\DeclareFontFamily{U}{mathx}{\hyphenchar\font45}
\DeclareFontShape{U}{mathx}{m}{n}{
      <5> <6> <7> <8> <9> <10>
      <10.95> <12> <14.4> <17.28> <20.74> <24.88>
      mathx10
      }{}
\DeclareSymbolFont{mathx}{U}{mathx}{m}{n}
\DeclareMathAccent{\widecheck}{0}{mathx}{"71}
\DeclarePairedDelimiter\abs\lvert\rvert
\DeclarePairedDelimiter\norm\lVert\rVert
\DeclarePairedDelimiter\set{\{}{\}}
\DeclarePairedDelimiter\comm{[}{]}
\let\pf\proof
\let\ef\endproof
\newcommand\Step[1]{
  \par\bigskip
  \noindent
  \textbf{#1}.\enspace
}
\newcommand\subStep[1]{
  \par\bigskip
  \noindent
  \underline{\textit{#1}}.\enspace
}
\newcommand{\be}{\begin{equation}}
\newcommand{\ee}{\end{equation}}
\newcommand{\bea}{\begin{eqnarray}}
\newcommand{\eea}{\end{eqnarray}}
\newcommand{\ep}{{\epsilon}}
\newcommand{\f}{\frac}
\newcommand{\gO}{{\mathfrak O}}
\def\f{\frac}
\def\la{\lambda}
\newcommand{\cB}{{\check B}}
\newcommand{\brH}{{\breve H}}
\newcommand{\bD}{{\mathbf D}}
\newcommand{\bY}{{\mathbf Y}}
\newcommand{\bX}{{\mathbf X}}
\newcommand{\vA}{{\mathcal A}}
\newcommand{\vB}{{\mathcal B}}
\newcommand{\vC}{{\mathcal C}}
\newcommand{\vD}{{\mathcal D}}
\newcommand{\vE}{{\mathcal E}}
\newcommand{\vF}{{\mathcal F}}
\newcommand{\vG}{{\mathcal G}}
\newcommand{\vH}{{\mathcal H}}
\newcommand{\vK}{{\mathcal K}}
\newcommand{\vL}{{\mathcal L}}
\newcommand{\vM}{{\mathcal M}}
\newcommand{\vO}{{\mathcal O}}
\newcommand{\vR}{{\mathcal R}}
\newcommand{\vT}{{\mathcal T}}
\newcommand{\vU}{{\mathcal U}}
\newcommand{\vV}{{\mathcal V}}
\newcommand{\vZ}{{\mathcal Z}}
\newcommand{\vS}{{\mathcal S}}
\newcommand{\vW}{{\mathcal W}}
\newtheorem{hypothesis}{Hypothesis}
     \theoremstyle{plain}
     \newtheorem{thm}{Theorem}[chapter]
     \newtheorem{prop}[thm]{Proposition}
     \newtheorem{lemma}[thm]{Lemma}
      \newtheorem{cor}[thm]{Corollary}
     \theoremstyle{definition}
     \newtheorem{defn}[thm]{Definition}
     \newtheorem{example}[thm]{Example}
     \newtheorem{cond}[thm]{Condition}
     \newtheorem{remark}[thm]{Remark}
     \newtheorem{remarks}[thm]{Remarks}
\newtheorem*{remarks*}{Remarks}
\newtheorem*{remark*}{Remark}
\begin{document}

\begin{titlingpage}
  \title{Spectral analysis of $N$-body Schr\"odinger operators at
    two-cluster thresholds }

\author{Erik Skibsted,\\ Matematisk Institut,
Aarhus Universitet\\ Ny Munkegade  8000 Aarhus C,
Denmark\\[1em] Xue Ping Wang,\\Laboratoire de Math\'ematiques Jean Leray\\
 UMR CNRS 6629, 
 Nantes Universit\'e\\
44322 Nantes Cedex, France}

  \maketitle

\date{\today}

\end{titlingpage}

\frontmatter

\tableofcontents*

\mainmatter

\chapter{Introduction, examples and  notation} \label{Notation}

\section{Scope and results}\label{Scope, goals and results}
The spectral and scattering theory for the quantum mechanical   one-body problem at zero
energy is a well studied subject. The classical theory \cite{JK, JN, JN2}
involves a   real potential $V(x)$ on $\R^3$  fastly decaying at least  like
$ \vO( |x|^{-\rho})$
for some  $ \rho >2$. The slowly  decaying case for which the decay
rate  $ \rho\in (0,2)$ requires additional conditions, roughly sign
conditions \cite{Ya1, Ya2, Na, FS, DS1, DS2}.
 The critical case is defined by $ V(x)\approx C|x|^{-2}$, possibly
 with angular dependence, and the results depend on the coupling
 constant  \cite{ Wa5,SW}. The  obtained results for the above  models
 are  highly  model and case sensitive and include  possible existence of zero
 energy bound and/or
 resonance states  as well as  zero
 energy asymptotics of   the resolvent and scattering matrix quantities
 (like scattering phase shifts). Thresholds of an $N$-body
 Schr\"odinger operator are eigenvalues of the sub-Hamiltonians. There
 exists much less literature on threshold spectral analysis for  the $N$-body problem. We only mention \cite{Wa2} on the resolvent expansion in a special case of the lowest threshold which is the bottom of the essential spectrum.

The goal of the present work is to present a systematic study  of
spectral and scattering theory for the quantum mechanical   $N$-body
problem at any negative two-cluster threshold $\lambda_0$, {\it i.e.}, $\lambda_0$ is an eigenvalue of (possibly several) sub-Hamiltonians associated with two-cluster decomposition,
 but not of those with three or more clusters. These
restrictions  on the nature
of the considered threshold exclude the  presence  of the Efimov
effect there. So for example for  the (dynamical nuclei physics) $3$-body
problem, the threshold  zero  is excluded from
our analysis, while all other
thresholds for this model are negative of two-cluster type. (We shall later in this chapter
give precise definitions.) Philosophically,  the two-cluster threshold
problem is amenable to simplification in terms of an effective one-body problem by the Feshbach-Grushin dimension reduction
method. This is indeed realized in \cite{Wa2} for  fastly  decaying pair
potentials   for the case of the lowest threshold $\lambda_0=\Sigma_2$ (assumed
non-multiple).
However  in the present work we extend the
framework considerably, so that it covers the usual atom
physics models (see Subsections \ref{First principal example} and
\ref{Second  principal example}) for which the slowly decaying
nature  of the Coulomb pair potentials requires refined analysis. Also
we include the cases where the two-cluster threshold $\lambda_0>\Sigma_2$
as well as multiple two-cluster and degenerate eigenvalue cases, which
also call for refined analysis, in particular micro-local analysis.

One main ingredient which enables us to attain the goal of spectral
analysis at any two-cluster threshold is  the Mourre's estimate for
the Hamiltonian with one threshold removed. For a given two-cluster
threshold $\la_0$, the restriction of the total Hamiltonian onto the
orthogonal complement of the associated spectral subspace is a
non-local $N$-body  Hamiltonian for which $\lambda_0$ is no longer a
threshold. We essentially prove the Mourre's estimate at $\lambda_0$ for this reduced Hamiltonian  and deduce the limiting  absorption principles and micro-local resolvent estimates. The limiting absorption principles are used to construct an appropriate Grushin problem such that we can reduce  the two-cluster problem to an effective one-body problem near an arbitrary two-cluster threshold.

To be more concrete let us now consider the dynamical nuclei physics
model from Subsection \ref{First principal example} (see
\eqref{0.1Cou}) and  assume that the particle dimension $n=3$. Let  a
two-cluster decompostion $a=(C_1,C_2)$ of $N$ particles
be given. We then
  write the full Hamiltonian as
\begin{equation*}
 H=H^1\otimes 1\otimes 1+1\otimes H^2\otimes 1+1\otimes 1 \otimes p_a^2 +I_a
\end{equation*}  where  $H^k$, $k=1,2$,  are
cluster-Hamiltonians (defined like for $H$ in their center of mass
frames), $p_a^2$ is the inter-cluster kinetic energy Hamiltonian and
$I_a$ is the \emph{inter-cluster potential}. Suppose  $\lambda_a=\lambda_0$ is  an
eigenvalue of the sub-Hamiltonian $H^a=H^1\otimes
1+1\otimes H^2$ ($\lambda_0$ being of two-cluster type, see \eqref{ass0}). Picking  a
corresponding orthonormal basis  $\varphi^a_1,\dots\varphi^a_{m}\in
\ker(H^a-\lambda_0)\subset L^2(\bX^a)$,
$m=m_a$,  the \emph{effective
    inter-cluster potential} is the $m\times m$ matrix-valued function
  in the relative position
  variable of the two clusters, viz.  $R=R_1-R_2$,
\begin{align}
  \label{eq:effe_pot1ee00}
  V(R)_{kl}:=\inp{\varphi^a_k,I_a\varphi^a_l}_{L^2(\bX^a)}=Q_1Q_2\delta_{kl}|R|^{-1}+Q_{kl}(\widehat
  R)|R|^{-2}+\vO\parb{|R|^{-3}}.
\end{align} Here $Q_1$ and
$Q_2$ are  the
total charge of the particles in the clusters $C_1$ and $C_2$,
respectively, and $\delta_{kl}$ is the Kronecker symbol.    In
addition we denote by $Q_a$ the matrix-valued homogeneous
potential $\parb{Q_{kl}}$ and $\widehat R
=R/\abs{R}$. Let $P^a$ denote the orthogonal (rank $m$) projection onto
   $\ker(H^a-\lambda_0)$ in
   $L^2(\bX^a)$. Then obviously   $\Pi^a=P^a\otimes 1$
   projects  onto
   the span of functions of the form $\varphi^a\otimes f_a$,
   $\varphi^a\in\ker(H^a-\lambda_0)$, in $L^2(\bX)$.

In terms of \eqref{eq:effe_pot1ee00} a  relevant classification reads:
\begin{description}
                                   \item[Case 1] (slowly  decaying case)\quad
                                     $Q_1Q_2\neq 0$.
                                   \item[Case 2] (critically  decaying case)\quad
                                     $Q_1Q_2= 0$\, and the
                                    function\,
                                     $Q_a\neq 0$.
                                   \item[Case 3] (fastly  decaying case)\quad
                                     $Q_1Q_2= 0$\, and the
                                    function\,
                                     $Q_a= 0$.

 \end{description}

 In general $\lambda_0$ might be a
 multiple two-cluster threshold, which might suggest  that we group
 the set  of thresholds $a$ for which $\lambda_0\in
\sigma_{\pp}(H^a)$, say denoted by $\widetilde A$,  into $\widetilde A=\vA_1\cup\vA_2\cup\vA_3$
specified  as follows.
\begin{description}
\item [$\vA_1$:] the effective
inter-cluster interaction   is to leading order attractive Coulombic,
i.e. $Q_1Q_2 < 0$.
\item [$\vA_2$:] the effective
inter-cluster interaction  is to leading order repulsive  Coulombic,
i.e. $Q_1Q_2 > 0$.
\item [$\vA_3$:] the effective
inter-cluster interaction  is $\vO(|x_a|^{-2})$, i.e. $Q_1Q_2 =0$.
\end{description} Clearly the elements of   $\vA_1\cup\vA_2$ are  classified as
Case 1, while the elements of   $\vA_3$ are  classified either as
Case 2 or Case 3. This motivates the  splitting   $\vA_3= \vA ^{\mathrm
{cd}}_3\cup \vA ^{\mathrm
{fd}}_3$ by specifying
\begin{equation*}
  \vA ^{\mathrm
{cd}}_3=\set{a\in\vA _3|\, Q_a\neq 0}\mand \vA ^{\mathrm
{fd}}_3=\set{a\in\vA _3|\, Q_a=0}
\end{equation*}
 corresponding to Case 2
and  Case 3, respectively.

For any  $a\in \vA_3$ there are  computable numbers
$s_a\geq 1$ and $d_a\in \N_0$ determined by spectral properties of the
vector-valued Schr\"odinger operator on the unit-sphere $\S^2$ with
the matrix-valued potential $Q_a$
(see Section \ref{sec:CoulRellich}) such
that in terms of standard
weighted $L^2$-space and weighted Sobolov-space notation (see
Subsection \ref{Spaces} for  definitions), referring
here to our
most general result  (see Section \ref{sec:CoulRellich}):
\begin{thm}\label{thm:physical-modelsRell00}  For any  two-cluster
  threshold $\lambda_0$:
\begin{enumerate}[1)]
  \item \label{item:PRel101} The space  of locally $H^1$ solutions  to
    $(H-\lambda_0)u=0$ in
    \begin{subequations}
    \begin{equation}\label{eq:spac1}
      \sum_{a\in \vA_1} \Pi^a L^2_{-3/4}+ \sum_{a\in \vA_2} \Pi^a
      L^2_{(-3/2)^+}+\sum_{a\in \vA_3} \Pi^a
      L^2_{(-\min\set{3/2,\,s_a})^+}+L^2_{-1/2},
    \end{equation} say  denoted by $\mathcal{E}$,
 has finite dimension.
\item \label{item:PRel202} If $\vA _3=\emptyset$, then  $\vE\subset H^1_{\infty}$.
  \item \label{item:PRel203} The dimension of the space of
    \emph{resonance states}
    \begin{equation*}
      n_{\mathrm{res}}=\dim\parb{\vE /\ker
       (H-\lambda_0)_{|{H^{1}}}}\leq \sum_{a\in \vA_3} d_a.
   \end{equation*}
 \item \label{item:PRel203aa}   The numbers  $s_a=3/2$ and $d_a=m_a$ for any $a\in \vA ^{\mathrm
{fd}}_3$. In particular if $\vA ^{\mathrm
{cd}}_3=\emptyset$, then \eqref{eq:spac1} simplifies as
\begin{equation}\label{eq:spac2}
      \sum_{a\in \vA_1} \Pi^a L^2_{-3/4}+ \sum_{a\in \vA_2\cup \vA ^{\mathrm
{fd}}_3 }\Pi^a
      L^2_{(-3/2)^+}+L^2_{-1/2},
    \end{equation}
    \end{subequations}
and
\ref{item:PRel203} reads
\begin{equation*}
  n_{\mathrm{res}}\leq \sum_{a\in \vA ^{\mathrm
{fd}}_3} \,m_a.
\end{equation*}
\end{enumerate}
\end{thm}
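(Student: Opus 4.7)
The plan is to reduce the $N$-body equation $(H-\lambda_0)u=0$ to a family of effective one-body problems on the inter-cluster configuration spaces $\bX_a$ indexed by $a \in \wt A$, and then to invoke the known threshold theory for one-body Schr\"odinger operators in each of Case~1--Case~3. The bridge between the two steps is a Feshbach--Grushin type reduction built out of the projections $\Pi^a$, combined with the limiting absorption principle at $\lambda_0$ for the non-local Hamiltonian obtained after removing the threshold, announced in the introduction as a main ingredient of the paper.

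Concretely I would decompose any $u\in\vE$ as $u=\sum_{a\in\wt A}\Pi^a u+u^\perp$. Projecting $(H-\lambda_0)u=0$ onto $\ran(\Pi^\perp)$ with $\Pi^\perp=1-\sum_a\Pi^a$ yields $(\Pi^\perp H\Pi^\perp-\lambda_0)u^\perp=-\Pi^\perp H\sum_a\Pi^a u$, and the reduced operator on the left no longer has $\lambda_0$ as a threshold. The Mourre estimate and the resulting limiting absorption principle then give $u^\perp\in L^2_{1/2-\varepsilon}$, with a companion weighted-$H^1$ estimate; this decay is strictly better than any of the weights appearing in \eqref{eq:spac1}, so $u^\perp$ imposes no obstruction. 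Next, for each $a\in\wt A$, writing $\Pi^a u=\sum_{k=1}^{m_a}\varphi^a_k\otimes f_{a,k}$ and taking inner products against $\varphi^a_k$ in $L^2(\bX^a)$ produces a matrix-valued effective Schr\"odinger equation on $\bX_a$ for $f_a=(f_{a,k})_k$, with potential $V_a$ as in \eqref{eq:effe_pot1ee00} and a source term controlled by the previous step and the decay of $I_a$.

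The heart of the proof is the one-body zero-energy analysis for this effective equation, split according to the trichotomy of the introduction. For $a\in\vA_1$, classical attractive-Coulomb threshold theory produces solutions with the $|R|^{-3/4}$ asymptotics matching the weight in \eqref{eq:spac1}, and all such solutions lie in $H^1_\infty$. For $a\in\vA_2$ the repulsive Coulomb term forces sub-exponential decay of every solution in $L^2_{(-3/2)^+}$, again with $H^1_\infty$ membership. For $a\in\vA^{\mathrm{cd}}_3$ the leading potential is $Q_a(\widehat R)/|R|^2$; diagonalizing the matrix-valued angular operator $-\Delta_{\S^2}+Q_a$ on $L^2(\S^2;\C^{m_a})$ reduces the problem to a family of radial Bessel-type ODEs, from which $s_a\geq 1$ is read off as the smallest positive exponent of the homogeneous solutions and $d_a$ counts the resonant modes whose growing parts are still admissible in \eqref{eq:spac1}. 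For $a\in\vA^{\mathrm{fd}}_3$ the effective potential is $\vO(|R|^{-3})$, integrable at infinity, so the classical Jensen--Kato analysis applies and gives $s_a=3/2$ and $d_a=m_a$. Collecting the per-cluster dimensional bounds yields finite dimensionality of $\vE$, i.e.\ part \ref{item:PRel101}; the absence of $\vA_3$-channels removes every non-$H^1_\infty$ contribution and produces \ref{item:PRel202}; the map $u\mapsto\bigoplus_{a\in\vA_3}(\text{leading resonant coefficients of }\Pi^a u)$ is a linear map on $\vE$ of rank at most $\sum_{a\in\vA_3}d_a$ whose kernel sits inside $\ker(H-\lambda_0)_{|H^1}$, which gives \ref{item:PRel203}; and \ref{item:PRel203aa} is the explicit specialization to the fast-decay case.

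The main technical obstacle I expect is the uniform handling of several distinct decompositions $a\in\wt A$ simultaneously: the projections $\Pi^a$ are not mutually orthogonal in $L^2(\bX)$, the inter-cluster coordinate systems differ from one $a$ to another, and one must micro-localize away from the other collision subspaces in order to read off the asymptotics of each $\Pi^a u$ cleanly; this is where the micro-local resolvent estimates announced together with the Mourre estimate in the introduction become essential. A closely related hurdle is to show that the $\vO(|R|^{-3})$ remainder in \eqref{eq:effe_pot1ee00} genuinely acts as a controlled perturbation of the leading Coulomb or inverse-square behavior in the relevant weighted and Sobolev spaces; for Case~2 in particular this is borderline and calls for the refined sphere analysis of Section~\ref{sec:CoulRellich}.
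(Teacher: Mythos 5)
Your overall plan --- a Feshbach--Grushin reduction through the channel projections $\Pi^a$ to a matrix-valued effective one-body problem on each $\bX_a$, followed by case-by-case zero-energy analysis (attractive Coulomb, repulsive Coulomb, inverse-square via $-\Delta_\theta+Q_a$, fast decay), and a dimension count through leading asymptotic coefficients --- is the same strategy the paper follows in Section \ref{sec:CoulRellich}. The gap lies in how you dispose of the complement $u^\perp$. First, $\Pi^\perp=1-\sum_a\Pi^a$ is in general not a projection: the $\Pi^a$ are not mutually orthogonal and their ranges may even intersect, so one must either impose and exploit the direct-sum/closedness property \eqref{eq:dirCo1} (Proposition \ref{prop2.2}) or pass to the modified space used when it fails; you flag this but do not resolve it. More seriously, for $\lambda_0>\Sigma_2$ the threshold lies in the essential spectrum of the reduced Hamiltonian $H'=\Pi'H\Pi'$, so $(H'-\lambda_0)$ is not invertible and the Mourre estimate/LAP by itself only produces boundary values $R'(\lambda_0\pm\i 0)$ with $\vB^*_{1/2}$-type bounds, i.e. control no better than $L^2_{(-1/2)^-}$ --- not the decay $L^2_{1/2-\varepsilon}$ you assert for $u^\perp$, and not enough to start any bootstrap. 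The decisive extra ingredient in the paper is the coincidence $R'(\lambda_0+\i 0)F=R'(\lambda_0-\i 0)F$ for the specific vectors $F=\sum_a I_aS_af_a$ attached to a generalized eigenfunction, obtained from an imaginary-part/positivity argument (cf. \eqref{eq:im0}, \eqref{eq:im02}); only after this do the micro-local resolvent estimates (\eqref{eq:2bnd}, Corollary \ref{cor:microlocal-bounds}) upgrade the complement to a genuinely better weighted space so that the channel-wise iteration gains weight at each step. Without it, the sign ambiguity in the effective operators $E^\pm_{\vH}(\lambda_0)$ --- whose non-local, non-symmetric entries $K^\pm_{ab}$ also couple the channels, so the system cannot be solved channel by channel with a fixed ``source term'' --- is never removed.

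A second omission is the exceptional case $\lambda_0\in\sigma_{\pp}(H')$: there even the boundary values of the reduced resolvent do not suffice and your reduction breaks down outright; the paper enlarges the projection by the $H'$-eigenfunctions (rapidly decaying by Theorem \ref{thm:priori-decay-b_0}) and works with $H''$ and $R''$ instead. Since the statement concerns an arbitrary two-cluster threshold, some such device is needed (the paper itself gives a complete proof only under \eqref{eq:dirCo1}--\eqref{eq:dirCo2} and sketches the rest, but it at least identifies the mechanism). Apart from these points, the per-channel ingredients you invoke are the right ones, and your dimension count via the leading resonant coefficients corresponds to the paper's functionals $\check l_{a,\nu,k}$.
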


One may view Theorem \ref{thm:physical-modelsRell00} as a
 version of
the well-known Rellich theorem for non-threshold energies
\cite[Theorem 1.4]{AIIS} (see also Theorem \ref{thm:priori-decay-b_0})
although the above analogue at a two-cluster threshold is
considerably more complex. For example the
analogue of \eqref{eq:spac1} and \eqref{eq:spac2} in the continuous
spectrum away from thresholds
reads $L^2_{-1/2}$,
and we note that for almost  all (probably valid  for all) such energies the space of
generalized eigenfunctions in $L^2_{-1/2-\epsilon}$ (for any
$\epsilon>0$) has
infinite dimension \cite{Sk6}. In comparison  in the context of Theorem
\ref{thm:physical-modelsRell00} with $\widetilde{\vA}=\vA_1=\set{a}$ we
show  that the  space of
generalized eigenfunctions at $\lambda_0$ in $\Pi^a
L^2_{-3/4-\epsilon}+L^2_{-1/2-\epsilon}$ is infinitely dimensional  (see Theorem \ref{thm:repEigen} \ref{it:uniq2}).

The proof of Theorem \ref{thm:physical-modelsRell00} is complicated, in fact we give a full
proof only under the two technical conditions  \eqref{eq:dirCo1} and
\eqref{eq:dirCo2} (not to be elaborated on in this
introduction), treating  the general case in a somewhat  sketchy fashion.

One of the threshold phenomena indicated by Theorem
\ref{thm:physical-modelsRell00} is the possible existence of resonance
states combined freely with the possible existence of $L^2$ eigenfunctions
at the two-cluster threshold $\lambda_0$. This is completely  analogous
to the situation for the one-body problem
for  fastly  decaying  potentials  \cite{JK} (exhibiting  a somewhat similar
sophisticated Rellich theorem at zero energy), giving  rise to the
classification into a  Regular Case (where $\lambda_0$ is neither an
eigenvalue nor  a
resonance  of $H$) and Exceptional Cases 1,2 and 3 (see
Subsection \ref{sec:case-lambda_0=sub}).  The resolvent asymptotics
at zero energy for the  one-body  problem is determined by this
classification. It is a separate issue for us to obtain similar resolvent
asymptotics  at $\lambda_0$ in  the present  framework.   However
our    analysis is not complete, mainly due to  lack of
strong
decay of Coulombic potentials  hampering the   analysis. Of course the  Regular Case is the
easiest  case and we shall actually treat this with
$\widetilde{\vA}=\vA_1\cup \vA_2\cup \vA^{\rm{fd}}_3$  (see Theorem \ref{thm:resolv-asympt-physReg0}). For the Exceptional Cases 1 and
3 (defined by the presence of a resonance) we show the following
result.
\begin{thm}[Exceptional
point of $1$st or $3$rd kind]
    \label{thm:resolv-asympt-phys1st3rd00} Let  $\lambda_0$ be any
    two-cluster threshold for which $n_{\mathrm{res}}\geq 1$,
    i.e. $\lambda_0$ is a
resonance of $H$. Suppose the technical conditions
   \eqref{eq:dirCo1} and \eqref{eq:dirCo2} (referred to above),
\begin{equation}
    \label{eq:dec2000}
    \ran \Pi_H\subset L^2_t\text{  for some }t> 3/2,
  \end{equation} where $\Pi_H$ is the orthogonal projection onto $\ker
  (H-\lambda_0)$ (i.e. the eigenprojection if $\lambda_0$ is
  an eigenvalue  of $H$ and zero otherwise),  and suppose
\begin{equation}
    \label{eq:2plus300}
    \widetilde{\vA}=\vA_2\cup \vA^{\rm{fd}}_3.
  \end{equation}  Then the following asymptotics hold for $R(\lambda_0
+ z)=(H-\lambda_0
+ z)^{-1}$ as an operator from $H^{-1}_{s}$ to $H^{1}_{-s}$, $s>1$,  for
$z\to 0$ in $\vZ_{\pm}=\set{\Re z \geq 0,\, \pm\Im z>0}$
and for some $\epsilon=\epsilon(s)>0$.
\begin{equation} \label{asymRz1aa9last00}
R(\lambda_0 + z)=-z^{-1} \Pi_H + \frac{\i}{\sqrt{z}} \sum_{j=1}^{n_{\mathrm{res}}} \w{u_j, \cdot} u_j + \vO(|z|^{-\f 1 2 +\ep}).
\end{equation} Here
$\set{u_1, \dots, u_{n_{\mathrm{res}}}}\subset H^1_{(-1/2)^-}$ is a basis of   resonance states  of
$H$ being independent
 of the choice of the  sign of  $\vZ_{\pm}$.
  \end{thm}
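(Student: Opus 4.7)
The plan is to apply the Feshbach-Grushin reduction with the combined projection $\Pi = \sum_{a\in \widetilde{\vA}}\Pi^a$ onto the two-cluster threshold eigenspaces, and then to extract the singular part of the resulting effective resolvent by a second, finite-dimensional reduction tailored to the eigenvalue/resonance structure at $\lambda_0$. Concretely, I first form the Grushin matrix
\begin{equation*}
\vP(z)=\begin{pmatrix} H-\lambda_0+z & \Pi\\ \Pi & 0\end{pmatrix},
\end{equation*}
which is invertible for small $z\in \vZ_\pm$ because the limiting absorption principles and micro-local resolvent estimates for $H$ on $(\ran\Pi)^\perp$, established earlier in the paper via a Mourre estimate with $\lambda_0$ removed, provide uniform control of the $(1,1)$ block up to $z=0$. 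Writing the inverse in block form with lower-right effective Hamiltonian $E_{-+}(z)$, the Schur-complement identity
\begin{equation*}
R(\lambda_0+z)=E(z)-E_+(z)E_{-+}(z)^{-1}E_-(z)
\end{equation*}
reduces the problem to analyzing $E_{-+}(z)^{-1}$ near $z=0$, since $E(z)$ and $E_\pm(z)$ extend continuously to $z=0$ in the relevant weighted operator norms. Under \eqref{eq:2plus300}, $E_{-+}(z)$ is, up to a finite-rank correction, a matrix-valued one-body operator $p_a^2+V_{\rm eff}(R)-\lambda_0+z$ whose effective potential is block-diagonal with repulsive Coulomb entries on the $\vA_2$ blocks and $\vO(|R|^{-3})$ entries on the $\vA_3^{\rm fd}$ blocks.

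I then expand $E_{-+}(z)$ around $z=0$. By the Grushin correspondence with the equation $(H-\lambda_0)u=0$ in the space $\vE$ of Theorem \ref{thm:physical-modelsRell00}, together with \eqref{eq:2plus300}, the kernel of $E_{-+}(0)$ is canonically identified with $\ran\Pi_H\oplus\spann\{u_1,\ldots,u_{n_{\rm res}}\}$. Performing a second, finite-dimensional Grushin reduction onto $\ker E_{-+}(0)$ and carrying out a Puiseux expansion in $\sqrt{z}$ yields the desired singular structure: the eigenvalue block produces the simple pole $-z^{-1}\Pi_H$, the residue being represented as a genuine $L^2$ eigenprojection thanks to \eqref{eq:dec2000}, while the resonance block produces the $z^{-1/2}$ singularity. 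The coefficient $\i/\sqrt{z}$ originates from the standard outgoing zero-energy expansion of $(p_a^2+z)^{-1}$ in three dimensions (kernel $(4\pi|R|)^{-1}\e^{-\sqrt{z}|R|}$, with branch of $\sqrt{z}$ fixed by $\vZ_\pm$), while a rank count combined with the explicit leading asymptotics of resonance states at infinity identifies the bilinear coefficient with $\sum_j\inp{u_j,\cdot}u_j$. Independence of the sign in $\vZ_\pm$ holds because flipping $\sqrt{z}\to-\sqrt{z}$ complex-conjugates the resonance functions, leaving the symmetric bilinear form invariant. The remainder $\vO(|z|^{-1/2+\epsilon})$ is controlled by the next terms of the expansion, with clean gain from the $|R|^{-3}$ decay on $\vA_3^{\rm fd}$ and an $\epsilon$-loss, dictated by the weight $s>1$, arising from the repulsive Coulomb tails on $\vA_2$.

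The main difficulty is the $\vA_2$ block: the non-integrable tail of $Q_1Q_2/|R|$ obstructs a naive Hilbert-Schmidt treatment of the effective zero-energy resolvent, and isolating exactly the $\i/\sqrt{z}$ leading singularity, as opposed to a Dollard-type singularity contaminated by logarithmic phases, requires careful zero-energy asymptotics for $(p_a^2+Q_1Q_2/|R|+z)^{-1}$ via Coulomb/Whittaker function expansions, together with verification that the long-range phases cancel when matched with the $\vA_3^{\rm fd}$ block. A secondary subtlety is showing that the $z^{-1/2}$ coefficient depends only on the intrinsic resonance basis $\{u_j\}$ and not on the Grushin choice, and that it transforms consistently under $\vZ_+\leftrightarrow\vZ_-$; this is where the technical hypotheses \eqref{eq:dirCo1}, \eqref{eq:dirCo2} and \eqref{eq:dec2000} become indispensable.
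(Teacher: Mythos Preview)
Your two-layer Grushin strategy is reasonable in outline and matches what the paper does in the simpler setting of Section~\ref{sec:resolv-asympt-nearLOWEST} (Proposition~\ref{prop5.8}, Case~3) when $\rho_0>3$. But for the physics models the effective potential on the $\vA_3^{\rm fd}$ blocks decays exactly like $|R|^{-3}$, i.e.\ $\rho_0=3$, and the paper explicitly notes (around~\eqref{eq:redImp2}) that in this borderline case the second-layer expansion $\vE_{-+}(z)=\sqrt{z}\,\vE_{-+,1}+z\,\vE_{-+,2}+\vO(|z|^{1+\epsilon})$ \emph{fails}: one can only obtain $\vE_{-+}(z)=\vO(|z|^{1-\epsilon})$, which is too weak to separate the $z^{-1}$ eigenvalue pole from the $z^{-1/2}$ resonance singularity in a single Puiseux step. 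Since your proposal requires both orders simultaneously, this is a genuine gap.

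The paper circumvents this by a different device (see the proof of Theorem~\ref{thm:rhoThree}): replace $H$ by $H_\sigma:=H-\sigma\Pi_H$ for small $\sigma>0$. Because $\Pi_H u_j=0$ for resonance states, $\lambda_0$ is now only a resonance of $H_\sigma$ (Case~1), and the Case~1 analysis needs only the $\sqrt{z}$ term, which \emph{does} survive at $\rho_0=3$ (Lemma~\ref{prop5.17}). The eigenvalue pole is then recovered from the elementary identity~\eqref{eq:sigmeFormR}. The hypothesis~\eqref{eq:dec2000} enters precisely here, ensuring that $-\sigma\Pi_H$ is a sufficiently decaying perturbation for the $H_\sigma$-Grushin machinery (cf.\ Remark~\ref{remark:The case lambda0insigma}~\ref{item:25}), not in your second reduction.

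Two smaller corrections. First, you misidentify the main difficulty: the $\vA_2$ repulsive Coulomb blocks are benign---their zero-energy resolvent $h_a^{-1}$ is a bounded pseudodifferential operator (Lemma~\ref{lemma:posit-slowly-decay}) with no singularity at $z=0$ and no Dollard phases to cancel; the $z^{-1/2}$ comes exclusively from the free Laplacians on $\vA_3^{\rm fd}$. Second, the role of~\eqref{eq:2plus300} is not about long-range phase matching but about guaranteeing the analogue of the direct-sum decomposition~\eqref{eq:directSum} for $H_\sigma$, i.e.\ that the eigenvalue $-1$ of $K^\pm$ is semisimple---something the paper explicitly does not know how to verify when $\vA_1\neq\emptyset$.
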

 Among the appearing conditions \eqref{eq:dec2000}  is the
 `unpleasant one'. It is an implicit (possibly redundant) condition appearing as  an artifact of
 our methods. If $\lambda_0$ is exceptional
point of $1$st kind, \eqref{eq:dec2000}  is obviously fulfilled since
then $\Pi_H=0$. Our  Theorem \ref{thm:resolv-asympt-physReg0} as well
the above Theorem \ref{thm:resolv-asympt-phys1st3rd00} require
explicitly $\vA^{\textrm{cd}}_3=\emptyset$.

Under a spectral condition
for certain   elements of  $\vA^{\textrm{cd}}_3$ (those for  which
\eqref{eq:hard0} in Subsection \ref{Appplications to
  scattering theory} is violated) oscillatory
behaviour of the resolvent near the  two-cluster  threshold is
expected. This  is  expected thanks to arguments of  \cite{SW}.

\subsection{Applications to   threshold scattering}\label{Appplications to
  scattering theory}
Here we briefly outline our main applications in  Chapter
\ref{Applications}. One of our results concerns the following  generalization
of  a  result from \cite{DS1,DS2} (see also \cite{Fr}).
\begin{thm}
  \label{thm:non-elast-scatt000} Suppose that   $\lambda_0=\Sigma_2$
  is a
    two-cluster threshold, the technical conditions
   \eqref{eq:dirCo1} and \eqref{eq:dirCo2},
\begin{equation}
    \label{eq:2plus3000}
    \widetilde{\vA}=\vA_1,
  \end{equation} and suppose the Regular Case  (i.e. that $\lambda_0$
  is neither  an eigenvalue  nor  a resonance  of $H$). Let $\vC$
  denote the set of  scattering channels
  $\alpha=(a,\lambda_0, \varphi_\alpha)$ with $a\in \widetilde\vA$ (note that
  $\lambda_0$ is a simple eigenvalue of $H^a$). For $\alpha,\beta\in\vC$ the element of the
  scattering matrix $S_{\beta\alpha}(\lambda)$ (modelled after
  \cite{DS1,DS2}) is  well-defined for $\lambda$ slightly above
  $\lambda_0$ and possessing a strong  limit  as $\lambda\to (\lambda_0)_+$. Moreover the
  singular support of the limiting element $S_{\beta\alpha}(\lambda_0)$ fulfills
\begin{align*}
{ \mathrm
{sing \,supp\,}}S_{\beta\alpha}(\lambda_0)\quad
  \begin{cases}
   &\subset\{(\omega,\omega')\mid \omega\cdot \omega'= -1\}\quad \text{for}\quad \alpha=\beta,\\
&=\emptyset\quad \text{for}\quad \alpha\neq\beta.
  \end{cases}
  \end{align*}
\end{thm}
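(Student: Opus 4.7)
The plan is to realize the stationary scattering matrix $S_{\beta\alpha}(\lambda)$ via the Feshbach--Grushin reduction outlined earlier in the chapter, thereby replacing the study of $H$ near $\lambda_0=\Sigma_2$ by an effective (matrix-valued, but here scalar channel-by-channel since $m_a=1$) one-body Schr\"odinger problem with attractive Coulombic leading inter-cluster potential $V(R)\sim Q_1Q_2|R|^{-1}$, $Q_1Q_2<0$. Once this reduction is in place, the theorem is precisely the $N$-body analogue of the Derezi\'nski--Skibsted zero-energy scattering result for the attractive Coulomb problem, and the task is to verify that the reduction respects the stationary S-matrix construction and to identify the coupling between distinct channels $\alpha\neq\beta$ as producing only smoother, non-singular contributions.

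First I would use Theorem \ref{thm:resolv-asympt-physReg0} (the Regular Case resolvent asymptotics with $\widetilde{\vA}=\vA_1\cup\vA_2\cup\vA_3^{\mathrm{fd}}$, specialized here to $\widetilde{\vA}=\vA_1$) together with the micro-local resolvent estimates obtained from the Mourre theory for $H$ with the threshold $\lambda_0$ removed. These yield a limiting absorption principle for $R(\lambda_0+z)$ as $z\to 0$ in $\vZ_\pm$ in the appropriate weighted spaces, and they provide the wave-operator/boundary-value framework in which $S_{\beta\alpha}(\lambda)$, modelled on \cite{DS1,DS2}, is defined as an operator on $L^2(\mathbb{S}^2)$ for $\lambda$ slightly above $\lambda_0$. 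Existence of the strong limit $S_{\beta\alpha}(\lambda_0)$ then follows from the uniform resolvent bounds and the continuity of the channel projectors $\Pi^a$ as maps into the relevant weighted spaces.

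Next I would analyze the singular support. For the diagonal element $\alpha=\beta$, compose the S-matrix representation with the effective attractive Coulomb resolvent coming from $\Pi^aR(\lambda_0+z)\Pi^a$; after the Grushin reduction the leading order behaviour near $\lambda_0$ is driven by the attractive Coulomb Hamiltonian on $\bX_a\cong\mathbb{R}^3$, and the stationary phase / geometric-optics  analysis of \cite{DS1,DS2} identifies the only singularity of the zero-energy scattering amplitude as the backscattering glory on $\{\omega\cdot\omega'=-1\}$, the lower-order correction terms in $V(R)$ contributing only smoothing operators. For $\alpha\neq\beta$ (different two-cluster decompositions or different internal states sharing the same threshold), the kernel of $S_{\beta\alpha}(\lambda_0)$ is expressed through the inter-channel coupling $\inp{\varphi^b,I_b\Pi^a\cdots}$, which, in view of the vanishing diagonal inter-channel interaction to leading order and the short-range character of the remaining $b\neq a$ inter-cluster interactions tested against bound states $\varphi^a$ in $L^2(\bX^a)$, produces a smooth kernel on $\mathbb{S}^2\times\mathbb{S}^2$; hence the singular support is empty.

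The main obstacle will be the first step, namely justifying the stationary representation of $S_{\beta\alpha}(\lambda)$ and its limit at threshold in a slowly decaying $N$-body setting: away from thresholds this is classical, but at $\lambda_0=\Sigma_2$ the ordinary $N$-body Mourre theory degenerates because $\lambda_0$ is itself a threshold, so one must invoke the threshold-removed Mourre estimate together with the Grushin reduction to effectively compute the $\Pi^aR(\lambda_0+z)\Pi^a$ piece and control the remainder $(1-\Pi^a)R(\lambda_0+z)(1-\Pi^a)$ uniformly down to $z=0$. Once this uniformity is achieved, the attractive-Coulombic glory analysis of \cite{DS1,DS2} (and \cite{Fr}) transplants channel by channel, and the cross-channel smoothness reduces to a short-range perturbation argument in the weighted spaces provided by \ref{item:PRel202} of Theorem \ref{thm:physical-modelsRell00}.
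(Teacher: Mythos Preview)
Your high-level strategy---Grushin reduction plus the attractive-Coulomb zero-energy analysis of \cite{DS1,DS2}---is exactly the route the paper takes, and your identification of the threshold-removed Mourre theory as the input for controlling the $\Pi'$-piece is correct. But there is a genuine gap in the central step of your argument for the diagonal case.

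You write that after isolating the leading attractive-Coulomb behaviour, ``the lower-order correction terms in $V(R)$ contribute only smoothing operators''. This is not what the subtraction $S_{\alpha\alpha}(\lambda)-S_w(\lambda-\lambda_0)$ gives: that difference gains only a \emph{fixed finite} number of Sobolev derivatives (this is Theorem~\ref{thm:ScatN} and Remark~\ref{remarks:inelastic-scattering-n}~\ref{item:multCase3}), not full smoothing. The non-local piece $S^*I_aR'(\lambda_0)I_aS$ in the effective Hamiltonian is not a short-range multiplication operator, and the weighted-space resolvent bounds of Theorem~\ref{thm:resolv-asympt-physReg0} are far too coarse to yield smoothness of the kernel. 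The paper closes this gap by a substantial microlocal construction (Subsection~\ref{subsubsec: Elastic scattering at lowest threshold}): one builds \emph{improved} $N$-body FIOs $J^\pm_{\alpha,M}(\lambda_0)$ by solving transport equations iteratively, using the reduced sub-Hamiltonian resolvent $\tilde r_\alpha(\lambda_0)$ to peel off the $\Pi'I_a\Pi$-contributions order by order (Lemma~\ref{lem:conTransp}); one introduces a fibered scattering wave front set $WF^a_s$ and proves that $\breve R(\lambda_0)\Pi'$ preserves it (Lemma~\ref{lemma:Respectelast-scatt-at}, which is precisely where $\lambda_0=\Sigma_2$ enters, since then $R'(\lambda_0)$ is bounded on \emph{all} weighted spaces); and one combines this with propagation of singularities for the one-body piece $r^+_{\lambda_0}$ (Lemma~\ref{lemma:propSing}) to track $WF^a_s$ through each term of the decomposition $R(\lambda_0+\i0)=Sr^+_{\lambda_0}S^*+\breve R(\lambda_0)\Pi'+\widecheck R(\lambda_0+\i0)$. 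Only then does the classical-orbit geometry of \cite{DS1} pin down the glory at $\omega\cdot\omega'=-1$.

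Your treatment of the off-diagonal case $\alpha\neq\beta$ has the same defect: the coupling $\langle\varphi^b,I_b\Pi^a\cdots\rangle$ is indeed polynomially decreasing (by the geometry $\bX_a\cap\bX_b=\{0\}$), but this again yields only a finite smoothing gain via the subtraction-type argument. The paper obtains emptiness of the singular support by the same wave-front-set machinery applied with the improved $J^\pm_{\beta,M}$, $T^\pm_{\alpha,M}$; the disjointness of the relevant classical orbits (supported near $\bX_b$ versus $\bX_a$) then forces the intersection of wave front sets to be empty.
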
 The proof of Theorem \ref{thm:non-elast-scatt000}  may be
considered as an extension of the one used in \cite{DS1,DS2} to obtain
a similar `semi-classical' result on the
scattering matrix $S_{\rm
  cou}(E)$ for the one-body problem with an attractive Coulomb
potential. See also \cite{Va1,Va2} for $N$-body scattering matrices in the short-range case.

Another result under \eqref{eq:2plus3000} concerns the
difference
\begin{equation*}
  S_{\alpha\alpha}(\lambda)- S_{\rm
  cou}(\lambda-\lambda_0);\quad \lambda\in [\lambda_0,\lambda_0+\delta].
\end{equation*} Under conditions, in particular including  the
non-multiple property  $\#\widetilde\vA=1$ (primarily
used to simplify
the  presentation)  however covering the case where
  the
    two-cluster threshold $\lambda_0>\Sigma_2$, we show that this
    difference is a
`partial  smoothing operator'  (see Theorem \ref{thm:ScatN} and Remark
\ref{remarks:inelastic-scattering-n} \ref{item:multCase3}). Yet another
result is a  characterization  of the  limiting element
$S_{\alpha\alpha}(\lambda_0)$ given by asymptotics in terms of
appropriate `channel quasi-modes' (see Theorem \ref{thm:repEigen} \ref{it:uniq2}).

This leads  to another subject of interest, more precisely
\emph{non-transmission} at  $\lambda_0$. This  is a `geometric
concept' amounting to the feature
\begin{equation}\label{eq:asymU00}
  \norm{1-S_{\alpha\alpha}(\lambda)^*S_{\alpha\alpha}(\lambda)}\to 0 \text{ for
  }\lambda\to (\lambda_0)_+.
\end{equation} We derive a formula   under \eqref{eq:2plus3000}
(see Corollary \ref{cor:norm_less22})  indicating that  transmission does
occur in this case if $\lambda_0>\Sigma_2$ (see also Remarks
\ref{remark:elast-scattA} and \ref{remarks:non-transmission-bem} \ref{item:tran4}).

In contrast to  the attractive slowly  decaying case  we do prove non-transmission
in the following three  cases (assuming as above in all cases the non-multiple
property  $\#\widetilde\vA=1$):
\begin{enumerate}[\bf I)]
\item
 Effective  repulsive Coulombic
 case, i.e.  $\widetilde{\vA}=\vA_2$.
\item $I_a(x^a=0)=0$, `above the
 Hardy limit' and $\lambda_0$  be regular.
\item $I_a(x^a=0)=0$ and  `fastly decaying case',
  i.e. $\widetilde{\vA}=\vA ^{\mathrm
{fd}}_3$,  and $\lambda_0$  be
  `maximally exceptional of $1$st kind'.
\end{enumerate} A  special case of II) is that $\widetilde{\vA}=\vA ^{\mathrm
{fd}}_3$ and   $\lambda_0$ neither be an eigenvalue nor a
resonance. The notions in  II) and III) are in general given as follows
(see also Section \ref{sec:Transmission problem at threshold}).  The
phrase  `above the
 Hardy limit' refers to a spectral property  of the
vector-valued Schr\"odinger operator on the unit-sphere $\S^2$ with
the matrix-valued potential $Q_a$  (writing $\widetilde\vA=\set{a}$), more
precisely
\begin{equation}\label{eq:hard0}
  \inf \sigma\parb{-\Delta_\theta+Q_a(\theta)}>-1/4.
\end{equation}
 For III) the potential $Q_a=0$ and `maximally
exceptional of $1$st kind' refers to Exceptional Case 1 and the condition
$n_{\mathrm{res}}=m_a=\dim \ker(H^a-\lambda_0)$. We note for
comparison  that if
$m_a>1$ and
$n_{\mathrm{res}}=\set{1,\dots, m_a-1}$ then indeed transmission can
occur for $\lambda_0$  be
  exceptional of $1$st kind  (see
Subsection \ref{subsec: An example of transmission}).

The last  subject of interest concerns    total cross-sections for atom-ion
scattering. It is an observed phenomenon at the very beginning of the quantum mechanics that when there is no dipole moment for the atom, the total cross-sections are finite.  A mathematical proof for this  physics folklore  is given in \cite{JKW}.
The operator under consideration is a special case of the  dynamical nuclei physics
model from Subsection \ref{First principal example} with the particle
dimension $n=3$. Assume  $\lambda_0=\Sigma_2$
  is a two-cluster threshold, the technical conditions
   \eqref{eq:dirCo1} and \eqref{eq:dirCo2}, \eqref{eq:dec2000} and  $ \widetilde{\vA}=\vA ^{\mathrm
{fd}}_3$. It is known  from  \cite{JKW} that for any channel  $\alpha =(a,\lambda_0,\varphi_\alpha )$,
$a\in \widetilde{\vA}$,  and any incident direction $\omega\in \S^2$,
\begin{equation*}
   \text{the  total
cross-section }
\sigma_\alpha(\lambda, \omega)
\end{equation*}
is finite for  non-threshold $\lambda$'s
above $\lambda_0$. In the present work we
derive bounds and asymptotics of this quantity as $\lambda\to
(\lambda_0)_+$  (see Section \ref{total cross-sections}). The result depends on whether $\lambda_0$ is regular
or exceptional of $2$nd kind (yielding bounded asymptotics, see Theorem \ref{cs-caseReg+3}) or if  $\lambda_0$ is
of $1$st or $3$rd kind (yielding $(\lambda-\lambda_0)^{-1}$ type
unbounded asymptotics, see Theorem \ref{cs-caseReg+40}). Our proof
relies on the derivation  of Theorem \ref{thm:resolv-asympt-phys1st3rd00}.

\section{Many-body Schr\"odinger operators}\label{$N$-body
  Schr\"odinger operators}

Let $H$ denote the many-body Schr\"odinger operator obtained
by  the removal of the center of mass from the total
Hamiltonian
\begin{equation}
\label{0.1}
\widetilde{H} =-\sum_{j = 1}^N \frac{1}{2m_j}\Delta_{x_j} + \sum_{1 \le i<j \le N} V_{ij}(x_i
-x_j), \quad x_j\in\R^n,
\end{equation}
where $x_j$ and $m_j$ denote the position and mass of the $j$'th
particle. The pair potentials
$V_{ij}$ are  assumed to be real and relatively compact with respect to
$-\Delta$ in $L^2(\R^n)$,  and they satisfy for some $\rho>0$ the condition
\begin{equation*}
|V_{ij}(y) |  \le C_{ij} |y|^{-\rho}\text { for }y \in \R^n\text {
  with }|y| > R,
\end{equation*}
for some $R>0$.
 However we shall need  some extra regularity. It is convenient to use the following condition.
\begin{cond}\label{cond:smoothg}
  There exists $\rho >0$ such that for all pair potentials $V_{ij}$
  there is a splitting $V_{ij}=V_{ij}^{(1)}+V_{ij}^{(2)}$, where
  \begin{enumerate}[label=(\arabic*)]
  \item \label{item:cond1g} $V_{ij}^{(1)}$ is smooth and
    \begin{equation}
      \label{eq:1g}
      \partial ^\alpha_yV_{ij}^{(1)}(y)=\vO\big(|y|^{-\rho-|\alpha|}\big ).
    \end{equation}
  \item \label{item:cond12g} $V_{ij}^{(2)}$ is compactly supported and
    \begin{equation}
      \label{eq:2g}
      V_{ij}^{(2)}(-\Delta+1)^{-1}\text{ is compact on
      }L^2(\R^n_y).
    \end{equation}
  \end{enumerate}
\end{cond}

 The Hamiltonian $H$ is regarded as a self-adjoint operator
on $L^2(\bX)$, where $\bX$ is the $n(N-1)$ dimensional real vector
space  $ \bX:= \set[\big] { \sum_{j=1}^{N} m_j x_j = 0}$.
Let $\vA$  denote the set of all cluster
decompositions of the $N$-particle system. The notation $a_{\max}$ and
$a_{\min}$ refers to the $1$-cluster and $N$-cluster decompositions,
respectively.
  Let for $a\in\vA$ the notation  $\# a$ denote the number of
clusters in $a$.
For $i,j \in\{1, \dots, N\}$, $i< j$, we denote by $(ij) $ the
$(N-1)$-cluster decomposition given by letting $C=\{i,j\}$ form a
cluster and all other particles $l\notin C$ form $1$-particle clusters. We write $(ij) \subset a$ if $i$ and $j$ belong to the same cluster
in $a$.   More general, we write $b\subset a$ if each cluster of $b$
is a subset of a cluster of $a$. If $a$ is a $k$-cluster decomposition, $a= (C_1, \dots, C_k)$,
we let
\[
\bX^a = \set[\big] { x\in\bX\mid  \sum_{l\in C_j } m_l x_l = 0,  j = 1, \dots,
k}=\bX^{C_1}\oplus\cdots \oplus\bX^{C_k},
\]
and
\[
\bX_a  = \set[\big] { x\in\bX\mid  x_i = x_j \mbox{ if } i,j \in C_m  \mbox{ for some }
m \in \{ 1, \dots, k\}  }.
\]
 Note that $a\subset b\Leftrightarrow \bX^a\subset\bX^b$. Moreover $\bX^a$ and $\bX_a$  give an orthogonal decomposition for $\bX$
equipped  with
the quadratic form
\[
q(x) = \sum_j 2m_j|x_j|^2,  \1 x\in {\bX}.
\]
 For $x\in \bX$, we have the corresponding orthogonal decomposition:
 $x =x^{a} + x_{a}$ with $x^a =\pi^a x\in\bX^a$ and $x_a =\pi_a x\in \bX_a$.

With this notation, the many-body Schr\"odinger operator $H$ introduced above
can be written in the form
\[
 H = H_0 + V
\]
where  $H_0=p^2$ is (minus)  the Laplace-Beltrami operator on   the
Euclidean space  $(\bX, q)$ and
$V=V(x) =  \sum_{a=(ij)\in\vA} V_{a}(x^{a}) $ with $ V_a (x^a) = V_{ij} (x_i - x_j)$ for the  $(N-1)$-cluster decomposition $a=(ij)$. More precisely, for example,
\begin{align*}
  x^{(12)}=\parb{\tfrac{m_2}{m_1+m_2}(x_1-x_2),-\tfrac{m_1}{m_1+m_2}(x_1-x_2),0,\dots,0}.
\end{align*}

We note the following geometric properties for $N\geq 3$: For all
$a,b\in \vA$ with
  $\#a=2$,  $\#b=N-1$  and $b\not\subset a$
\begin{subequations}
  \begin{align}
    \label{eq:22A}
    &\ran\parb{\pi^b\pi^{a}}= \ran\,\pi^b,\\
    \label{eq:89A}
    &\pi^b:\bX_{a}\to \bX^b\text{  is
bijective}.
  \end{align}
   \end{subequations}
\subsection{Principal example, dynamical nuclei}\label{First principal example}
Consider a system of  $N$ particles interacting by Coulomb forces. The
Hamiltonian  then reads
\begin{equation}
\label{0.1Cou}
H=-\sum_{j = 1}^N \frac{1}{2m_j}\Delta_{x_j} + \sum_{1 \le i<j \le N} q_iq_j|x_i
-x_j|^{-1}, \quad x_j\in\R^n,\,n\geq 3,
\end{equation}
where $x_j$,  $m_j$ and $q_j$ denote the position, mass and charge of
the $j$'th particle, respectively. $H$ is regarded as a self-adjoint operator in $L^2(\bX)$ (with mass center removed).

Let us consider a two-cluster decomposition $a=(C_1,C_2)$. For
convenience assume $C_1=\{1,\dots, J\}$ and $C_2=\{J+1,\dots, N\}$. We
can write
\begin{equation*}
 H=H^1\otimes 1\otimes 1+1\otimes H^2\otimes 1+1\otimes 1 \otimes p_a^2 +I_a
\end{equation*}  where  $H^k$, $k=1,2$,  are
cluster-Hamiltonians (defined similarly in their center of mass
frames) and
\begin{equation*}
  I_a=\sum_{i\in C_1,\,j\in C_2} q_iq_j|x_i
-x_j|^{-1}.
\end{equation*}
 To  expand $I_a$  we let for $k=1,2$
 \begin{align*}
   Q_k&=\sum_{j\in C_k} q_j,\,M_k=\sum_{j\in C_k} m_j,\\R_k&=R_k(x)=\sum_{j\in
     C_k} \tfrac{m_j}{M_k}x_j,\,\widetilde Q_k=\widetilde Q_k(x^{C_k})=\sum_{j\in C_k} q_j(x_j-R_k),\\M&=M_1+M_2,\,R=R_1-R_2,
 \end{align*} and we decompose for all $x\in \bX$
 \begin{align*}
   x&=x^{C_1}+x^{C_2}+x_a,\\
x^{C_1}&=(x_1-R_1,\dots,x_J-R_1, 0,\dots,0)\in \bX^{C_1},\\
x^{C_2}&=(0,\dots,0,x_{J+1}-R_2,\dots,x_N-R_2)\in \bX^{C_2},\\
x_a&=\parb{\tfrac{M_2}{M}R,\dots,\tfrac{M_2}{M}R,-\tfrac{M_1}{M}R,\dots,-\tfrac{M_1}{M}R}\in \bX_a.
\end{align*}  Note  that indeed the center of charge $\widetilde Q_k$
is a function of $x^{C_k}$.

Consequently we can expand for $i\in C_1$ and $j\in C_2$
\begin{align*}
  |x_i-x_j|^{-1}= |R|^{-1}-\tfrac{R}{|R|^{3}}\cdot \parb{(x_i-R_1)-(x_j-R_2)}+\vO\parb{|R|^{-3}}|x^a|^2.
\end{align*} This is in the regime $|R|\to \infty$ and
$|x_i-R_1|+|x_j-R_2|\leq \tfrac 12 |R|$.

Whence in turn we obtain  for $|R|\to \infty$
\begin{align}\label{eq:expans1}
  I_a=Q_1Q_2|R|^{-1}+\tfrac{R}{|R|^{3}}\cdot \parb{Q_1\widetilde Q_2(x^{C_2})-Q_2\widetilde Q_1(x^{C_1})}+\vO\parb{|R|^{-3}}|x^a|^2,
\end{align} which leads to various cases. We use the notation
$\varphi^k$, $k=1,2$, to denote a  cluster bound state (for
the cluster Hamiltonian $H^k$) and $\inp{\cdot,\cdot}_k$ to denote  the
corresponding cluster inner product. The \emph{effective
potential}
\begin{equation*}
  V(R):=\inp{\varphi^1\otimes\varphi^2,I_a\varphi^1\otimes\varphi^2}_{L^2(\bX^a)}.
\end{equation*}
\begin{description}
                                   \item[Case 1] $V\approx |R|^{-1}$:\quad  \quad  \quad $Q_1Q_2\neq 0$.
                                   \item[Case 2]  $V\approx |R|^{-2}$:
                                   \item[Subcase 2a] \quad
                                     $Q_1\neq 0$, $Q_2=0$ and
                                     $\inp{\varphi^2,\widetilde
                                       Q_2\varphi^2}_2 \neq 0$.
                                   \item[Subcase 2b] \quad
                                     $Q_2\neq 0$, $Q_1=0$ and
                                     $\inp{\varphi^1,\widetilde
                                       Q_1\varphi^1}_1 \neq 0$.
                                   \item[Case 3]
                                     $V= \vO\parb{|R|^{-3}}$:\quad
                                     \quad \quad
                                     \item[Subcase 3a] \quad $Q_1=Q_2=0$.
                                     \item[Subcase 3b] \quad
                                       $Q_1\neq 0$, $Q_2=0$ and
                                       $\inp{\varphi^2,\widetilde
                                         Q_2\varphi^2}_2= 0$.
                                     \item[Subcase 3c] \quad
                                       $Q_2\neq 0$, $Q_1=0$ and
                                       $\inp{\varphi^1,\widetilde
                                         Q_1\varphi^1}_1 = 0$.
                                   \end{description}

                                   Note that  for Subcases 2a, 3a and 3b,
 assuming sufficient decay of the  cluster bound states,   the {effective
potential}
\begin{equation}
  \label{eq:effe_pot1}
  V(R)=\inp{\varphi^1\otimes\varphi^2,I_a\varphi^1\otimes\varphi^2}_{L^2(\bX^a)}=Q_1\tfrac{R}{|R|^{3}}\cdot \inp{\varphi^2,\widetilde
  Q_2\varphi^2}_2+\vO\parb{|R|^{-3}}.
\end{equation} Whence indeed $V\approx |R|^{-2}$ at infinity in Subcase
2a,  while indeed  $V= \vO\parb{|R|^{-3}}$ for Subcases  3a and
3b. We can argue similarly for Subcases 2b and  3c. Note also that $|R|^{-2}$  is the  \emph{critical decay rate}
for threshold analysis, cf. \cite{SW}. Case 1 is the slowly decaying
case. In  Case 1 the potential $V\approx |R|^{-1}$, and $V$  is said to be
\emph{slowly decaying}. For $Q_1Q_2<0$ and $Q_1Q_2>0$ the one-body
results of \cite{FS} and \cite{Na, Ya2}  will be useful, respectively. In  Case 3  the effective
potential is said to be
\emph{fastly  decaying} and  other one-body results/techniques   will be
useful, cf. for
example \cite{JK}.
Case  2 (the critical case) is different and rather `rich'.

A detailed analysis of  the structure of a  class
of generalized eigenfunctions at a two-cluster threshold, possibly  a
multiple and/or a
non-simple two-cluster threshold, will be carried out  for
physical models in Section \ref{sec:CoulRellich}.  (See \eqref{ass0} for
the  definition of a `two-cluster threshold'.)

From the derivation
it follows that it could happen that the second  term $\vO\parb{|R|^{-3}}$ of
\eqref{eq:effe_pot1} actual has  homogeneity $-3$ at infinity. For
example this happens for Subcase 3a exactly when the moments $\widetilde R_1:=\inp{\varphi^1,\widetilde
  Q_1\varphi^1}_1\neq 0$ and $\widetilde R_2:=\inp{\varphi^2,\widetilde
  Q_2\varphi^2}_2\neq 0$ due to the computation for this case
\begin{align*}
  \vO\parb{|R|^{-3}}=\abs{R}^{-5}\parb{\abs{R}^2 \widetilde R_1\cdot
  \widetilde R_2-3 (R\cdot \widetilde R_1) (R\cdot \widetilde R_2)} +\vO\parb{|R|^{-4}}
\end{align*}
  If certain    `moments'
  vanish for Subcases 3a and  3b the order of the  second  term  of
\eqref{eq:effe_pot1} is  of  the form  $\vO\parb{|R|^{-4}}$, cf. \cite [Appendix A]{JKW}. In Chapter
\ref{chap:resolv-asympt-near} we shall obtain leading order resolvent expansions for
Case 3 without distinguishing between whether  the  homogeneous
$-3$ term vanishes or not.
In Section  \ref{total cross-sections} we shall study a case, where in
fact the effective potential is (at least) of order
$\vO\parb{|R|^{-4}}$. In the same section  an  explicit
calculation of the Hamiltonian is  given
in terms of so-called clustered atomic coordinates.

Strictly speaking
the distinction between Cases 2 and 3 as defined above makes best
sense for a   simple two-cluster threshold  and we will not use this
classification in the non-simple case. Rather in  the general possibly non-simple case one
needs the following  (slightly) different definition, see Section
\ref{sec:CoulRellich} for further details. Let $\lambda_a$ be a
non-threshold eigenvalue of the sub-Hamiltonian $H^a=H^1\otimes
1+1\otimes H^2$ (more precisely,  we will need $\lambda_a\in
\vT_2$, see  \eqref{ass0}). Picking  an
orthonormal basis  $\varphi^a_1,\dots\varphi^a_{m}\in L^2(\bX^a)$,
$m=m_a$ (being one or possibly bigger),
in the range of the corresponding eigenprojection,  the \emph{effective
    potential} is the $m\times m$-matrix-valued function in the
  variable $R=R_1-R_2$
\begin{align}
  \label{eq:effe_pot1ee}
  V(R)_{kl}:=\inp{\varphi^a_k,I_a\varphi^a_l}_{L^2(\bX^a)}=Q_1Q_2\delta_{kl}|R|^{-1}+Q_{kl}(\widehat
  R)|R|^{-2}+\vO\parb{|R|^{-3}}.
\end{align} Here $\delta_{kl}$ is the Kronecker symbol and $\widehat R
=R/\abs{R}$. In terms of \eqref{eq:effe_pot1ee} the more general
(and correct) classification reads:
\begin{description}
                                   \item[Case 1] \quad  \quad  \quad
                                     $Q_1Q_2\neq 0$.
                                   \item[Case 2] \quad \quad \quad
                                     $Q_1Q_2= 0$\, and the
                                     matrix-valued function\,
                                     $Q_a=\parb{Q_{kl}}\neq 0$.
                                   \item[Case 3] \quad \quad \quad
                                     $Q_1Q_2= 0$\, and the
                                     matrix-valued function\,
                                     $Q_a=\parb{Q_{kl}}= 0$.

 \end{description}

\section{$N$-body Schr\"odinger operators with infinite mass
  nuclei}\label{$N$-body Schr\"odinger operators with infinite mass
  nuclei} In the case of $M\geq 1$ infinite mass
  nuclei located at $R_m\in \R^n$, $m=1,\dots,M$,  the Hamiltonian reads
\begin{equation}
\label{0.1b}
H=-\sum_{j = 1}^N \frac{1}{2m_j}\Delta_{x_j} + \sum_{1 \le i<j \le N} V_{ij}(x_i
-x_j)+ \sum_{1 \le j \le N,\;1 \le m \le M} V^{\rm ncl}_{jm}(x_j
-R_m),
\end{equation} where we impose similar conditions on $ V^{\rm
  ncl}_{jm}$ as  for $ V_{ij}$ in Condition \ref{cond:smoothg}. The
one-body problem $N=1$ is included in
\eqref{0.1b} (the middle term is absent in that case).  The configuration space reads $\bX=\R^{nN}$, and we
 use  the metric $q$ as before. The `electron-electron'
interaction $V_{ij}(x_i
-x_j)$ takes as before the form $V_a(x^a)$ where $x^a=\pi^ax$,
$a=(ij)$,  is the
orthogonal projection of $x$ onto an  $n$-dimensional
subspace. Similarly the `electron-nuclei'
interaction   $\sum_{1 \le m \le M} V^{\rm ncl}_{jm}(x_j
-R_m)$ takes  the form $V_a(x^a)$ where again $x^a=\pi^ax$,
$a=a(j)$,  is the
orthogonal projection of $x$ onto an $n$-dimensional
subspace (let $x^a=(0,\dots,0,x_j,0,\dots,0)$, i.e. all other
coordinates than the $j$'th are put equal to zero). Rather than using the cluster decompositions to label a family
of  `subspaces of internal motion'  $\{\bX^a\}$ similar to those
considered in Section \ref{$N$-body
    Schr\"odinger operators}  we prefer henceforth to appeal to abstract
labeling.  Precisely we consider the smallest finite family
$\{\bX^a\mid a\in\vA\}$ of
subspaces of $\bX$  which is stable under addition and which contains
$\{0\}$ and  the $n$-dimensional
subspaces discussed above. See Section \ref{Generalized $N$-body
  Schr\"odinger operators},  and see \cite[Section 5.1]{DG} for a discussion
of the abstract $N$-body problem. On the other hand  there is a
concrete description of the index set $\vA$ and
this  family  $\{\bX^a\mid a\in\vA\}$ which can   be useful to have in mind: Consider $a=(C_1,\dots, C_p)$ where
the sets
$C_q$ are disjoint subsets of $\{1,\dots,N\}$. For $p\geq 2$ and $q<p$ we have $\#C_q\geq 2$  and we let  $\bX^{C_q}=\{x\in
\bX\mid x_j=0\text{ if }j\notin C_q\mand \sum_{i\in
  C_q}m_ix_i=0\}$. Either similarly  $\bX^{C_p}=\{x\in
\bX\mid x_j=0\text{ if }j\notin C_p\mand \sum_{i\in
  C_p}m_ix_i=0\}$ (in that case we have $\#C_p\geq 2$) or $\bX^{C_p}=\{x\in
\bX\mid x_j=0\text{ if }j\notin C_p\}$.
In  both cases let
 correspondingly  $\bX^a=\bX^{C_1}\oplus\dots \oplus
 \bX^{C_p}$. Moreover we
 supplement  by writing $\bX^{a_{\min}}=\{0\}$ where, for example, $a_{\min}:=
\emptyset$. This is a concrete labeling of the family of subspaces of internal motion.

The ordering of subspaces yields an  ordering
of the abstract set of indices $\vA$, by definition $a\subset b\Leftrightarrow \bX^a\subset\bX^b$. We denote
$\bX=\bX^{a_{\max}}$ and $\bX^a+\bX^b=\bX^{a\cup b}$. The orthogonal
complement of $\bX^a$ is denoted by $\bX_a$. To have a uniform
language we refer to the  indices
$a\in \vA$ as `cluster decompositions'.  The length of a chain of
cluster decompositions $a_1\subsetneq \cdots   \subsetneq a_k$ is the
number $k$. This  chain is said to connect $a=a_1$ and $b=a_k$. The
maximal length of all chains connecting a given $a\in \vA\setminus\{a_{\max}\}$  and
$a_{\max}$ is denoted by $\# a$. We define $\# a_{\max}=1$ and note
that $\# a_{\min}=N+1$.  We say $a\in \vA$  is $k$-cluster if $\# a=k$.

We note the following geometric properties for $N\geq 2$: For all
$a,b\in \vA$ with
  $\#a=2$,  $\#b=N$  and $b\not\subset a$
\begin{subequations}
  \begin{align}
    \label{eq:22AA}
    &\ran\parb{\pi^b\pi^{a}}= \{0\}\text{ or }\ran\parb{\pi^b\pi^{a}}= \ran\,\pi^b,\\
    \label{eq:89AA}
    &\pi^b:\bX_{a}\to \bX^b\text{  is
bijective}.
  \end{align}
   \end{subequations}

\subsection{Principal example, fixed nuclei}\label{Second  principal example}
Consider a system of  $N$ $n$-dimensional particles, $n\geq 3$,  interacting by Coulomb forces. The
Hamiltonian \eqref{0.1b}  then reads
\begin{equation}
\label{0.1Cou2}
H=-\sum_{j = 1}^N \frac{1}{2m_j}\Delta_{x_j} + \sum_{1 \le i<j \le N} q_iq_j|x_i
-x_j|^{-1}+ \sum_{1 \le j \le N,\,1 \le m \le M,} q_j q_m^{\rm ncl}|x_j
-R_m|^{-1},
\end{equation}
where $x_j$,  $m_j$ and $q_j$ denote the position, mass and charge of
the $j$'th `electron', and $R_m$ and $q_m^{\rm ncl}$ are the
position and charge of
the $m$'th `nucleus'.

Consider the two-cluster decomposition
$a=(C)$, $C=\{1,\dots,N-1\}$, meaning $\bX^a=\{x=(x_1,\dots,x_N)\in
\bX=\R^{nN}\mid x_N=0\}$. Letting $R=x_N$ we write $H=H^1\otimes 1+1
\otimes p_{R}^2+I_a$
where  $H^1$ is the
cluster-Hamiltonian (i.e. the Hamiltonian for the first $N-1$ electrons) and
\begin{align*}
  I_a=\sum_{1 \le i\le N-1} q_iq_N|x_i
-R|^{-1}+ \sum_{1 \le m \le M,} q_N q_m^{\rm ncl}|R
-R_m|^{-1}.
\end{align*} Introducing
\begin{align*}
  Q&=\sum_{1\leq j\leq N-1} q_j+\sum_{1\leq m\leq M}q_m^{\rm ncl},\\\widetilde Q&=\widetilde Q(x^a)=\sum_{1\leq j\leq N-1}q_jx_j,\\\widetilde Q^{\rm ncl}&=\sum_{1\leq m\leq M}q_m^{\rm ncl}R_m,
\end{align*}
  the asymptotics  of $I_a$
for $|R|\to \infty$ reads
\begin{align}\label{eq:expans2}
  I_a=q_NQ|R|^{-1}+q_N\tfrac{R}{|R|^{3}}\cdot \parb{\widetilde Q(x^a)+\widetilde Q^{\rm ncl}}+\vO\parb{|R|^{-3}}\parb{1+\abs{x^a}^2}.
\end{align} For the expectation in a cluster  bound state $\varphi=\varphi^a(x^a)$ with sufficient
decay we consequently obtain the asymptotics for $|R|\to \infty$
\begin{align}
  \label{eq:effe_pot2}
 \inp{\varphi,I_a\varphi}_{L^2(\bX^a)}=q_NQ|R|^{-1}+q_N\tfrac{R}{|R|^{3}}\cdot \parb{\inp{\varphi,\widetilde
    Q\varphi}_{L^2(\bX^a)}+\widetilde Q^{\rm ncl}}+\vO\parb{|R|^{-3}}.
\end{align} This leads to various cases.
\begin{description}
                                   \item[Case 1] \quad  \quad  \quad $q_NQ\neq 0$.
                                   \item[Case 2]  \quad  \quad \quad  $q_N\inp{\varphi,\widetilde
    Q\varphi}_{L^2(\bX^a)}\neq -q_N\widetilde Q^{\mathrm {ncl}}$ and $Q=0$.

\item[Case 3]   \quad  \quad  \quad  $q_N =0$, or $q_N \neq 0$, $Q=
  0$ and $\inp{\varphi,\widetilde
    Q\varphi}_{L^2(\bX^a)}=-\widetilde Q^{\mathrm {ncl}}$.

                                   \end{description}

 Case 1 is the
  slowly decaying  case, Case 2
  is the critical case and Case 3 is the fastly  decaying  case. Strictly speaking
 this classification makes best
sense for $\varphi$ being unique, i.e. for the    simple case; in the non-simple case one
needs a slightly different terminology, see Subsection \ref{First
  principal example} and Section \ref{sec:CoulRellich}.

\section{Generalized $N$-body Schr\"odinger
  operators}\label{Generalized $N$-body Schr\"odinger operators}
Motivated by Sections \ref{$N$-body Schr\"odinger operators} and \ref{$N$-body Schr\"odinger operators with infinite mass
  nuclei}  we discuss the abstract $N$-body problem,
  cf.  \cite[Section 5.1]{DG}.
Let $\bX\neq \{0\}$ be a real finite dimensional vector space with an inner
product $q$. We consider a finite family $\{\bX^a\mid a\in\vA\}$ of
subspaces of $\bX^a\subset\bX$  which is stable under addition and which contains
$\{0\}$ and  $\bX$. The ordering of subspaces yields an ordering
of the abstract set of indices $\vA$, $a\subset b\Leftrightarrow \bX^a\subset\bX^b$. We denote $\{0\}=\bX^{a_{\min}}$,
$\bX=\bX^{a_{\max}}$ and $\bX^a+\bX^b=\bX^{a\cup b}$. The orthogonal
complement of $\bX^a$ is denoted by $\bX_a$. We refer to the  indices
$a\in \vA$ as `cluster decompositions'. The \emph{length} of a chain of
cluster decompositions $a_1\subsetneq \cdots   \subsetneq a_k$ is the
number $k$. This  chain is said to connect $a=a_1$ and $b=a_k$. The
maximal length of all chains connecting a given $a\in \vA\setminus\{a_{\max}\}$  and
$a_{\max}$ is denoted by $\# a$. We define $\# a_{\max}=1$ and denoting
 $\# a_{\min}=N+1$ we say the  family $\{\bX^a\mid a\in\vA\}$ is of $N$-body
 type. Note that  for the setup of Sections  \ref{$N$-body Schr\"odinger
   operators} and \ref{$N$-body Schr\"odinger operators with infinite mass
  nuclei} these examples are  of $(N-1)$-body
 type  and of $N$-body
 type, respectively. This terminology might appear  slightly misleading for
 Section  \ref{$N$-body Schr\"odinger
   operators}. Henceforth we shall treat  the generalized $N$-body framework
 only. This would  consequently  apply to the   many-body
 framework of Section  \ref{$N$-body Schr\"odinger
   operators} with $N$ there replaced by $N+1$. A cluster
 decomposition  $a$  is said to be \emph{$k$-cluster} if $\# a=k$.

Given the above uniform setup of structure  of `internal subspaces'
we can introduce corresponding generalized  Schr\"odinger
  operators. Let  $ -\Delta^{a}=(p^a)^2$  and $-\Delta_{a}=p_a^2$ denote (minus) the
Laplacians on  $L^2(\bX^a)$ and  $L^2(\bX_a)$, respectively. Here
$p^a=\pi^ap$ and $p_a=\pi_ap$ denote the internal (i.e. `within clusters') and the
inter-cluster components of the momentum operator $p=-\i\nabla$, respectively.
For all  $a\in \vA':=\vA\setminus\{a_{\min}\}$, we introduce
\[  H^{a} = -\Delta^{a} + V^a(x^a), \, V^a(x^a)=\sum_{b\subset a}   V_{b}(x^{b}), \, H_{a}= H^{a}- \Delta_{a} ,\,
I_{a}(x) = \sum_{b\not\subset a} V_{b}(x^{b}),\]
where the potentials fulfill  the condition below.
We define $H^{a_{\min}}=0$ on $L^2(\bX^{a_{\min}})=\C$ and
$H=H^{a_{\max}}$  on $L^2(\bX)$.

\begin{cond}\label{cond:smooth}
  There exists $\rho >0$ such that for all  $a\in
  \vA'$ there is given a function $V_a:\bX^a\to \R$
    with a splitting $V_a=V_a^{(1)}+V_a^{(2)}$, where
  \begin{enumerate}[label=(\arabic*)]
  \item \label{item:cond1} $V_a^{(1)}$ is smooth and
    \begin{equation}
      \label{eq:1}
      \partial ^\alpha_yV_a^{(1)}(y)=\vO\big(|y|^{-\rho-|\alpha|}\big ).
    \end{equation}
  \item \label{item:cond12} $V_a^{(2)}$ is compactly supported and
    \begin{equation}
      \label{eq:2}
      V_a^{(2)}(-\Delta_y+1)^{-1}\text{ is compact on
      }L^2(\R^{\dim \bX^a}_y).
    \end{equation}
  \end{enumerate}
\end{cond}

\emph{Condition \ref{cond:smooth} will be imposed throughout this work}. To
 treat local singularities we shall impose an additional condition,
depending on an $a\in\vA$ from a given  context. The condition  is fulfilled for the
  models of Sections \ref{$N$-body Schr\"odinger operators} and ~\ref{$N$-body Schr\"odinger operators with infinite mass
  nuclei}, cf. \eqref{eq:22A},
\eqref{eq:89A}, \eqref{eq:22AA} and \eqref{eq:89AA}.

Consider  for a given  $a\in \vA$ with
  $\#a=2$  the following properties  for   $b\not\subset a$:
\begin{subequations}
  \begin{align}
    \label{eq:22}
    &\ran\parb{\pi^b\pi^{a}}= \{0\}\text{ or }\ran\parb{\pi^b\pi^{a}}= \ran\,\pi^b,\\
    \label{eq:89}
    &\pi^b:\bX_{a}\to \bX^b \text{ is onto}.
  \end{align}
\end{subequations} We note that  the map in
\eqref{eq:89}  is necessarily injective, and hence bijective if
\eqref{eq:89} is fulfilled, see  \eqref{eq:26}.

\begin{cond}\label{cond:geom_singl}   For all
  $b\not\subset a$ for which  the conditions  \eqref{eq:22} and
  \eqref{eq:89}  are   not
  fulfilled,   the singular part $V^{(2)}_b=0$ and hence $V_b=V^{(1)}_b\in
  C^\infty(\bX^b)$.
  \end{cond}

 If
$T$ is a self-adjoint operator on a Hilbert space the subsets of $\C$,
\begin{align*}
  \sigma(T),
\sigma_\d(T), \sigma_\ess (T)\mand  \sigma_\pp(T)
\end{align*}
 refer to the spectrum, the discrete spectrum, the essential spectrum
 and the set of eigenvalues of $T$, respectively.

The operator $H^a$ is the sub-Hamiltonian associated with the cluster decomposition
$a$ and $I_a$ is the sum of all inter-cluster interactions. The detailed
expression of $H^a$ depends on the choice of coordinates on
$\bX^a$.
 Let
\[
\vT =\vT (H)= \cup_{a\in\vA, \# a\ge 2} \;\sigma_{\pupo}( H^a)
\]
be the set of thresholds of $H$.
 The HVZ theorem \cite[Theorem XIII.17]{RS} gives
the bottom of the essential spectrum $\Sigma_2 :=\inf \sigma_{\ess}(H) $ of $H$ by the formula
\begin{equation}
  \label{eq:3}
  \Sigma_2 = \min_{a\in\vA\setminus\{a_{\max}\}} \inf\sigma( H^a) =
\min_{a\in\vA, \# a = 2} \inf\sigma( H^a).
\end{equation}

If $N\geq 2$ we also introduce
\begin{equation}
  \label{eq:3i}
  \Sigma_3 := \min_{a\in\vA, \# a\ge 3} \inf\sigma( H^a).
\end{equation}

Under Condition \ref{cond:smooth} it is known that non-threshold
 bound
states decay exponentially \cite{FH}. It is also well known
\cite{FH} and \cite{Pe} that under rather general conditions
generalized Schr\"odinger
  operators do
not have
 positive eigenvalues  and  that the  negative
eigenvalues can at most
accumulate at the thresholds from below, see also \cite{AIIS}.

The goal  of the present  work
is to obtain   spectral and  scattering properties of $H$ near a general \emph{two-cluster
threshold} $\lambda_0\leq 0$, i.e.
\begin{equation}
\label{ass0}
\lambda_0 \in \vT_2:=\vT \setminus  \cup_{a\in\vA, \# a\ge 3}
\;\sigma_{\pupo}( H^a).
\end{equation}
Note that $\lambda_0=0$ for  $N=1$, while $\lambda_0<0$ for
$N\geq2$. Note also that $\lambda_0\neq \Sigma_3 $ for $N\geq2$ and
that $\lambda_0> \Sigma_3 $ could occur  for $N\geq3$. Since the
case $N=1$ has been treated  extensively in the literature we assume
throughout our  work that $N\geq 2$ and therefore $\lambda_0<0$. The special case for
$N\geq2$ when $\lambda_0$ is equal to $\Sigma_2$ and is given as
the eigenvalue of a  unique two-cluster sub-Hamiltonian   is studied in \cite{Wa2} for fastly
decaying  potentials, meaning $\rho >2$ or bigger.

\subsection{Spaces and notation}\label{Spaces}
For given Banach (or Fr\'echet) spaces $X$ and $Y$,  the space of linear continuous  operators $T:X\to
Y$ is denoted by $\vL(X,Y)$, and we abbreviate $\vL(X)=\vL(X,X)$.

Let $H_s^{k}$, $k, s\in\R $,  be the weighted Sobolev space on $\bX$
 (or possibly  $\bX_a$ for any $a\in \vA'$) equipped with the norm
 \begin{align*}
\norm{u}_{k,s}:= \parbb{\int \,\abs[\big]{\inp{x}^s( 1-\Delta)^{k/2}u}^2
\d x}^{1/2},\quad\text{where } \inp{x}:=(1+|x|^2)^{1/2}.
   \end{align*}
 The space $ H^{-k}_{-s }$ can be identified as the dual space of  $ H^{k}_s$
 with the usual $L^2$ inner product used as  `pairing'.  Let
 ${\mathcal L}(k,s;k',s')={\mathcal L}(H^{k}_{s},H^{k'}_{s'})$.
It will be convenient to regard $H$ as an operator in ${\mathcal
  L}(2,0; 0,0)$ or as an operator in ${\mathcal
  L}(1,0; -1,0)$. We abbreviate
 $H^{k}=H_0^{k}$, $L_s^{2}=H^{0}_s$,  and $H^{k}_{\infty}=\cap_{s\in \R}H^{k}_{s}$,
$H^{k}_{-\infty}=\cup_{s\in \R}H^{k}_{s}$,  $
 L_\infty^{2}=\cap_{s\in\R}\,L_s^{2}$ and $
 L_{-\infty}^{2}=\cup_{s\in\R}\,L_s^{2}$. Introduce also $
 H^k_{t^+}=\cup_{s>t}H^k_{s}$, $ H^k_{t^-}=\cap_{s<t}H^k_{s}$, $
 L^2_{t^+}=\cup_{s>t}L^2_{s}$ and $ L^2_{t^-}=\cap_{s<t}L^2_{s}$ for
any $k, t\in\R $.

 We shall also use weighted
  Sobolev spaces of $\C^m$-valued functions/distributions on $\bX$
  indicated by similar notation, for example  $H^{k}=H^{k}(\bX;\C^m)$.
For any complex Hilbert spaces $\vH_1$ and $\vH_1$ we use the notation
$\vC\parb{\vH_1,\vH_2}\subset \vL\parb{\vH_1,\vH_2}$ for  the space of
compact operators  $T:\vH_1 \to \vH_2$.  By standard Sobolev
embedding theory, for all $k_1>k_2$ and $s_1>s_2$ the operator
$H^{k_1}_{s_1}\ni f\to f \in H^{k_2}_{s_2}$ is
compact. Let  $H^1_{\rm loc}$ denote the set of locally $H^1$
functions, more precisely the set of
$v\in L^2_{-\infty}$ for which $\varphi v\in H^1$ for all  $\varphi\in C_\c^\infty$.

Consider balls $B(R)=\{ x\in \bX|\,\abs{x}<R\}$, $R\geq 1$, and
the characteristic functions
\begin{align*}
F_0=F\parb{B({R_0})}\mand F_{\nu+1}=F\parb{B({R_{\nu+1}})\setminus B({R_\nu})},\ R_\nu=2^{\nu},
\,\nu\in\N_0:=\N\cup\{0\},
\end{align*}
 where $F(M)$ denotes the sharp characteristic
 function of a subset $M\subset \bX$.
We introduce the Besov spaces $\vB_s=\vB_s(\bX)$ and
$\vB_s^*=\vB^*_s(\bX)=\parb{\vB_s(\bX)}^*$, $s>0$,  as follows.
\begin{align*}
\vB_s&=\{\psi\in  L^2_{\mathrm{loc}}\mid  \norm{\psi}_{\vB_s}<\infty\},\quad
\|\psi\|_{\vB_s}=\sum_{\nu=0}^\infty R_\nu^s
\|F_\nu\psi\|_{{L^2}},\\
\vB_s^*&=\{\psi\in  L^2_{\mathrm{loc}}\mid  \norm{\psi}_{\vB_s^*}<\infty\},\quad
\|\psi\|_{\vB_s^*}=\sup_{\nu\ge 0} \,R_\nu^{-s}\|F_\nu\psi\|_{{L^2}},
\end{align*}
respectively. (Note that indeed $\vB_s^*$ is the dual space of $\vB_s$.)
We  define $\vB_{s,0}^*$ to be the closure of $L^2$ in
$\vB_s^*$. Note that
\begin{align*}
  u\in \vB_{s}^*\Leftrightarrow u\in L^2_{\rm loc}
  \mand
  \sup_{R\geq 1}\,R^{-s}\|F\parb{B(R)} u\|_{L^2}<\infty,
\end{align*} and that
\begin{align*}
  u\in \vB_{s,0}^*\Leftrightarrow u\in L^2_{\rm loc}
  \mand
  \lim_{R\to \infty}\,R^{-s}\|F\parb{B(R)}u\|_{L^2}=0.
\end{align*}

Note the following relations between the
standard weighted $L^2$ spaces and the Besov spaces:
\begin{align}\label{eq:nest}
 \forall s_1>s>0:\quad L^2_{s_1}\subsetneq \vB_s\subsetneq  L^2_{s}
\subsetneq L^2
\subsetneq L^2_{-s}\subsetneq \vB_{s,0}^*\subsetneq \vB_s^*\subsetneq  L^2_{-s_1}.
\end{align}

We introduce for any $R\geq 1$
\begin{align}
\chi_R(t)=\chi(t/R) \text{ for a  real-valued }\chi\in
C^\infty(\R):  \quad\chi(t)
=\left\{\begin{array}{ll}
1 &\mbox{ for } t \le 4/3 \\
0 &\mbox{ for } t \ge 5/3
\end{array},
\right.
\label{eq:14.1.7.23.24}
\end{align}
 and $\bar\chi_R=1-\chi_R$. We assume  $-\chi'\geq 0$ and that
$\sqrt{-\chi'}\in C^\infty$.


\chapter{Reduction to a one-body problem} \label{Reduction to a two-body problem}

In this chapter  we show that the spectral analysis of a generalized  $N$-body
operator near a   two-cluster threshold can be reduced to the analysis
of a one-body  operator with a non-linear spectral parameter. This is under
Condition \ref{cond:smooth}. Of course such reduction would not be
needed for the one-body problem, however we recall that  throughout our  work
we impose the condition  $N\geq 2$.

The idea of the reduction
goes as  follows.  For a given two-cluster threshold $\lambda_0< 0$, denote by $\vF$
the closed subspace in $L^2(\bX)$ `spanned by'  bound states of all
possible two-cluster sub-Hamiltonians with eigenvalue $\lambda_0$. (For a
precise definition/representation in a special case see \eqref{eq:4}
and the discussion there.) Let $\Pi$ be the orthogonal projection from
$L^2(\bX)$ onto $\vF$, $\Pi' = 1-\Pi$  and $H' = \Pi'H \Pi'$. We study
a   Grushin problem for an operator of the form
\begin{equation}
{ H}(z)=\left( \begin{array}{cc}
 H-z  &  S\\[.1in]
       S^* & 0
\end{array} \right) \,   \text{ on } L^2(\bX) \oplus \vH,
\end{equation}
 where the space $\vH$ is some auxiliary
one-body type space,
$S : \vH \to L^2(\bX)$ is an  appropriately defined  operator whose
range coincides with $\vF$, i.e. the range of the projection $\Pi$,  and $S^*$ is the adjoint of
$S$.  The choice of $\vH$ and $S$ may vary according to various
situations and is to make $H(z)$  invertible for $z$ near $\lambda_0$ and $\Im z \neq 0$.  If this is realized,  we set
\begin{equation}
{ H}(z)^{-1}=\left(
\begin{array}{ll} E(z)  & E_+(z)\\[.1in]
 E_-(z)  & E_{\vH}(z) \end{array}
\right).
\end{equation}
 Then the resolvent $R(z)=(H-z)^{-1}$ is represented as
\begin{equation} \label{rep}
R(z)= E(z)- E_{+}(z) E_{\vH}(z)^{-1} E_{-}(z).
\end{equation}
 The operator ${ H}(z)^{-1}$ can be computed in terms of the reduced
 resolvent $R'(z) =(H'-z)^{-1}\Pi'$. If $R'(z)$ has some good
 properties near $\lambda_0$, the spectral analysis of $H$ near $\lambda_0$ is
 then reduced to the analysis of $E_{\vH}(z)^{-1}$ on $\vH$.
 $E_{\vH}(z)$ is a one-body  operator with a non-linear dependence
 on the spectral parameter $z$. For
 example  to obtain a resolvent expansion,  one can then try to use  known methods for one-body operators to study the asymptotics of  $E_{\vH}(z)^{-1}$ as $z \to \lambda_0$, $\Im z >0$.

\fancybreak{}

Below we shall only give detailed analysis in some  situations. In
fact, the reductions given below are only useful in the case  $\lambda_0
\not\in \sigma_{\pupo}(H')$. The case $\lambda_0 \in \sigma_{\pupo}(H')$ can
be treated by adding an  additional finite dimensional space, as done
in \cite[Section 4]{Wa2}  for the lowest threshold. Before we go
into details on various concrete cases we study the  reduction
scheme  from a more general point of view. As we will see afterwards, the abstract
scheme  can be  applied to these cases.

\section{An abstract reduction scheme}
\label{An abstract reduction scheme} In this section we shall describe an
abstract reduction scheme that will be used many times.

\paragraph{\textbf Abstract framework} Suppose $\vH$ and $\vG$ are given
Hilbert spaces and that $\vF$ is a closed subspace of $\vG$. Suppose
$S: \vH \to\vF\subset \vG$ is a bi-continuous isomorphism (i.e. $S$ is
linear, one-to-one, onto $\vF$ and as a map from $\vH$ to $\vF$
bi-continuous). Let $\Pi$ denote the orthogonal projection in $\vG$
onto $\vF$, and let $\Pi'=1-\Pi$. Suppose $H$ is a self-adjoint
operator on $\vG$ with $\Pi: \vD (H)\to \vD (H)$, and that $\Pi'H\Pi$
and $\Pi H\Pi'$ (initially defined on the domain $\vD (H)$ of $H$) extend to bounded
operators on $\vG$. Define $H' = \Pi' H \Pi'$ with domain $\vD
(H')=\vD (H)\cap \ran (\Pi')=\Pi'\,\vD (H)$ and $H_{\Pi} = \Pi H \Pi$
with domain $\vD (H_\Pi)=\vD (H)\cap \ran (\Pi)=\Pi\,\vD (H)$. Then
$H' $ and $H_\Pi$ are self-adjoint on $\Pi'\vG$ and $\vF$,
respectively (see the proof of Lemma \ref{Lemma:basic}
\ref{item:3m}).  Let $R'(z) =(H'-z)^{-1}\Pi'$ for $\Im z\neq 0$.

Introduce for all $z\in \C$ with $\Im z\neq
  0$
\begin{subequations}
 \begin{align}
 \label{eq:5pm}E(z) &= R'(z),   \\
\label{eq:6pm} E_+(z)&= S -R'(z)H S,   \\
\label{eq:7pm} E_-(z) &= S^*- S^*H R'(z),\\
\label{eq:8pm} E_{\vH}(z) &=  S^* \big (z -  H +  H R'(z)H \big ) S.
\end{align}
\end{subequations} Obviously $E(z)\in \vL(\vG)$, $E_+(z)\in \vL(\vH,
\vG)$ and $E_-(z)\in \vL(\vG,\vH)$. Also note
that  $S^*HS=S^*H_{\Pi}S$ is self-adjoint and
consequently that the operator $E_{\vH}(z)$ of \eqref{eq:8pm} is a closed
operator on $\vH$ with domain
  given by
  $S^{-1}\vD (H_\Pi)$.

\begin{prop}
  \label{prop:reduc_form_abstract} Under the above conditions, for $\Im z\neq 0$:
  \begin{enumerate}[label=\textnormal{\arabic*)}]
  \item\label{item:11}
 $E_{\vH}(z)^*=E_{\vH}(\bar z)$.
  \item \label{item:13} $E_{\vH}(z)$ is an invertible operator on $\vH$
    obeying
    \begin{equation}
      \label{eq:86}
      \frac {\Im E_{\vH}(z)}{\Im z }\geq S^* S.
    \end{equation}
\item \label{item:14}
\begin{equation} \label{rep3ipm}
R(z)= E(z)-  E_{+}(z)  E_{\vH}(z)^{-1}  E_{-}(z).
\end{equation}
  \end{enumerate}
\end{prop}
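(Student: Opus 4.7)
The plan is to handle the three claims in order, with one clean projection identity as the main organizing tool. First I would record a few elementary facts. Since $S$ is a bi-continuous isomorphism onto $\vF$, the operator $S^*S\in\vL(\vH)$ is positive and bounded below (in fact $S^*S\geq cI_\vH$ with $c=\|S^{-1}\|^{-2}$), and checking idempotency, self-adjointness and range yields the projection formula $\Pi=S(S^*S)^{-1}S^*$. One also has $S^*\Pi'=0=\Pi'S$, $\Pi R'(z)=0$, $\Pi'R'(z)=R'(z)$ (because $R'(z)$ takes values in $\Pi'\vG$) and $R'(z)^*=R'(\bar z)$. Part \ref{item:11} then follows immediately from the definition \eqref{eq:8pm}: self-adjointness of $S^*HS=S^*H_\Pi S$ (stated in the setup) handles the linear-in-$H$ term, while $(HR'(z)H)^*=HR'(\bar z)H$ by $H^*=H$ and $R'(z)^*=R'(\bar z)$, so $E_\vH(z)^*=E_\vH(\bar z)$ on the common domain $S^{-1}\vD(H_\Pi)$.

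For part \ref{item:13}, I would compute the imaginary part of $E_\vH(z)$ via the first resolvent identity $R'(z)-R'(\bar z)=(z-\bar z)R'(z)R'(\bar z)$, valid here because $\Pi'$ commutes with $(H'-\bar z)^{-1}$. Subtracting $E_\vH(\bar z)=E_\vH(z)^*$ from $E_\vH(z)$ and substituting yields
\begin{equation*}
\Im E_\vH(z) = (\Im z)\bigl[S^*S + S^*HR'(z)R'(z)^*HS\bigr].
\end{equation*}
The second summand is of the form $T^*T$ with $T=R'(z)^*HS$ and hence nonnegative, which gives \eqref{eq:86}. Combining with $S^*S\geq cI_\vH$ produces $\|E_\vH(z)u\|\geq c|\Im z|\|u\|$, so $E_\vH(z)$ is injective with closed range; applying the same bound to $E_\vH(\bar z)=E_\vH(z)^*$ yields dense range, whence $E_\vH(z)$ is bijective with $\|E_\vH(z)^{-1}\|\leq (c|\Im z|)^{-1}$.

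The main algebraic step is \ref{item:14}, and I would verify $(H-z)R(z)=I_\vG$ by direct computation. Using $\Pi R'(z)=0$, $\Pi'R'(z)=R'(z)$ and the decomposition $H-H'=H\Pi+\Pi H\Pi'$ one first checks $(H-z)E(z)=\Pi'+\Pi HR'(z)$. Setting $M(z):=(H-z)-HR'(z)H$, a parallel calculation gives $(H-z)E_+(z)=\Pi M(z)S$, while by \eqref{eq:8pm} one has $E_\vH(z)=-S^*M(z)S$. Now I invoke the projection formula to rewrite
\begin{equation*}
\Pi M(z)S = S(S^*S)^{-1}\bigl[S^*M(z)S\bigr] = -S(S^*S)^{-1}E_\vH(z),
\end{equation*}
so that $(H-z)E_+(z)E_\vH(z)^{-1}=-S(S^*S)^{-1}$. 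Multiplying on the right by $E_-(z)=S^*(I-HR'(z))$ and using $\Pi=S(S^*S)^{-1}S^*$ once more gives $-\Pi+\Pi HR'(z)$, which subtracted from $(H-z)E(z)$ leaves exactly $\Pi'+\Pi=I_\vG$. The dual identity $R(z)(H-z)u=u$ on $\vD(H)$ then follows either by a symmetric calculation or by taking adjoints and appealing to \ref{item:11}.

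The main obstacle is the domain bookkeeping in the final step, in particular checking that $E_+(z)$ carries $S^{-1}\vD(H_\Pi)$ into $\vD(H)$, so that $(H-z)E_+(z)$ is well-defined and the telescoping $[S^*M(z)S]\cdot E_\vH(z)^{-1}=-I_\vH$ is rigorous rather than purely formal. Once the projection identity $\Pi=S(S^*S)^{-1}S^*$ is isolated, the remaining algebra is essentially one application of the first resolvent identity combined with the decomposition $H=H_\Pi+\Pi H\Pi'+\Pi'H\Pi+H'$; each step should nonetheless be verified to respect the relevant dense cores.
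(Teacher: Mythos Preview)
Your proposal is correct and follows essentially the same route as the paper: the same imaginary-part computation via the first resolvent identity for \ref{item:13}, and the same left-multiplication by $H-z$ for \ref{item:14}, with your $\Pi=S(S^*S)^{-1}S^*$ being exactly the paper's insertion $\Pi=(S^*)^{-1}S^*$ written out explicitly. Your packaging via $M(z)$ and the explicit identity $(H-z)E_+(z)=\Pi M(z)S$ is a cleaner bookkeeping of the same cancellation the paper describes in words, and your caution about domains (that $E_+(z)f\in\vD(H)$ precisely when $f\in S^{-1}\vD(H_\Pi)$) is well placed.
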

\begin{proof} The identity \ref{item:11} is trivial due to the
   self-adjointness property stated before the
  proposition.

The inequality \eqref{eq:86} follows from the identity
\begin{equation}
    \label{eq:14}
    \Im  E_{\vH}(z)=\Im z \,S^* \big (1+  H R'(z)^*R'(z)H \big ) S.
  \end{equation} The invertibility property of \ref{item:13} follows
  from a standard numerical range argument
   combining \ref{item:11} and  \eqref{eq:86}.

 Finally the identity \eqref{rep3ipm}
  follows by an elementary calculation outlined here: Apply $H-z$ from
  the
  left on  the right-hand side of the identity  and
  write for the first  term of \eqref{rep3ipm}
  \begin{equation}
    \label{eq:87}
    H\Pi'=
  \Pi'H'\Pi'+\Pi H \Pi'.
  \end{equation}  The result is the expression
  $\Pi'+\Pi HR'(z)$.  For the second term of \eqref{rep3ipm} we substitute the
  expression \eqref{eq:6pm} and use  again \eqref{eq:87} (to deal with
  the second term of \eqref{eq:6pm}). We see that by applying $H-z$  to the second term of
  \eqref{rep3ipm} we obtain an operator taking values in
  $\vF$. Consequently we may insert $\Pi=(S^*)^{-1} S^*$ to the left
  and then combine the factor $S^*$ with the given calculated expression. This gives  a factor $E_{\vH}(z)E_{\vH}(z)^{-1} $
  which of course can be omitted on $\vH= \ran E_{-}(z)$. Then
  by substituting  the
  expression \eqref{eq:7pm}  we conclude that the contribution from
  the second term of \eqref{rep3ipm} is $\Pi-\Pi HR'(z)$. Hence the
result of applying $H-z$ from  the
  left on the right-hand side of \eqref{rep3ipm} is
  \begin{equation*}
    \big (\Pi'+\Pi HR'(z)\big)+\big(\Pi-\Pi HR'(z)\big)=1,
  \end{equation*} which coincides with the
  result of applying $H-z$ from  the
  left on the left-hand side of the identity.
\end{proof}

  \begin{remark}\label{remark:Grushin} There is an alternative
    approach for  deriving
    the formula \eqref{rep3ipm} based on a
certain  abstract Grushin problem. For a review of
this method we may refer to \cite{SZ}, but for  sake of completeness of presentation
we  outline  this   alternative
  approach here, cf. the beginning of this chapter:

We
consider  a   Grushin problem for an operator of the form
\begin{equation} \label{g2o}
{H}(z)=\left( \begin{array}{cc}
 H-z  & S\\[.1in]
 S^* & 0
\end{array} \right)  \text{ on }\vG \oplus \vH.
\end{equation}
To simplify the notation, we denote still by
$(S S^*)^{-1}= (S S^*)^{-1}\Pi$
the extension  of $(S S^*)^{-1}$ to $\vG$ by setting $(S S^*)^{-1}=0$
on $\vF^\perp$.

The invertibility of $SS^*$ on $\vF$ allows us to show that ${ H}(z)$ is
invertible on $\vG \oplus \vH$. In fact, one can explicitly compute its inverse.
Let $T:=(S S^*)^{-1}S$, and let ${ Q}(z)$ and  ${ B}(z)$  be defined by
\[{ Q}(z)=\left(\begin{array}{cc} R'(z)  & T \\ T^* & T^*(z- H )T
\end{array} \right)\mand B(z) = \left(
                \begin{array}{cc}
                  \Pi H R'(z) & \Pi' H T\\[.1in]
                   0 & 0
                \end{array}
                \right).
\]
Then one has, at least formally,
\begin{equation*}
{ H}(z){ Q}(z) =
   1      + { B}(z).
\end{equation*}
  Since $B(z)^3 = 0$ the operator  $1 + B(z)$ is invertible and therefore
${H}(z)$ has a right inverse. Similarly, one can show that
${ H}(z)$ has a left inverse. Consequently ${ H}(z)$ should be  invertible
 with inverse
\begin{equation}
{H}(z)^{-1} = {Q}(z) ( 1 - B(z) + B(z)^2).
\end{equation}
Write  $ { H }(z)^{-1}$ in the form
\begin{equation} \label{g3o}
H(z)^{-1}=\left(
\begin{array}{ll} \widetilde E(z)  & \widetilde E_+(z)\\[.1in]
 \widetilde E_-(z)  & \widetilde E_{\vH}(z) \end{array}
\right).
\end{equation}
We have the formulas
\begin{subequations}
 \begin{align}
 \label{eq:5o}\widetilde E(z) &= R'(z),   \\
\label{eq:6o} \widetilde E_+(z)&= T -R'(z)H T,   \\
\label{eq:7o} \widetilde E_-(z) &= T^*- T^*H R'(z),\\
\label{eq:8o} \widetilde E_{\vH}(z) &=  T^* \big (z -  H +  H R'(z)H \big ) T,
\end{align}
\end{subequations}
and
\begin{equation} \label{rep3io}
R(z)= \widetilde E(z)-  \widetilde E_{+}(z)  \widetilde E_{\vH}(z)^{-1}  \widetilde E_{-}(z).
\end{equation}
 Now \eqref{rep3ipm} follows from \eqref{rep3io} and
the identity
\begin{align}\label{eq:commIdent}
T=(S S^*)^{-1}S=S(S^* S)^{-1}.
\end{align}
\end{remark}

\begin{remark}\label{remark:GrushinBB}
Suppose $\lambda_0\notin \sigma(H')$ so that the operators in \eqref{eq:5o}--\eqref{eq:8o} have limits
when taking $z\to \lambda =\lambda_0\in \R$. Then we learn from the identities
\begin{align*}
  \widetilde E_-(\lambda)(H-\lambda)&\subset- \widetilde E_{\vH}(\lambda) S^*,\\
S\widetilde E_{\vH}(\lambda)&\subset -(H-\lambda)\widetilde E_+(\lambda),
\end{align*} that
\begin{align*}
  S^*: &\ker (H-\lambda)\to \ker \widetilde E_{\vH}(\lambda),\\
\widetilde E_+(\lambda): &\ker \widetilde E_{\vH}(\lambda) \to \ker (H-\lambda),
\end{align*} respectively.

Combined with the identities
\begin{align*}
  \widetilde E_+(\lambda) S^*&\supset1-\widetilde E(\lambda)(H-\lambda),\\
S^*\widetilde E_+(\lambda)&=1,
\end{align*} we then conclude that in fact
\begin{align*}
  \widetilde E_+(\lambda)S^*&=1\text{ on }\ker (H-\lambda),\\
S^*\widetilde E_+(\lambda)&=1\text{ on }\ker \widetilde E_{\vH}(\lambda),
\end{align*} i.e. $S^*: \ker (H-\lambda)\to \ker \widetilde
E_{\vH}(\lambda)$ is a linear  isomorphism.

Whence, cf. \eqref{eq:commIdent}, $T^*: \ker (H-\lambda)\to \ker
E_{\vH}(\lambda)$ is a linear isomorphism with inverse $E_+(\lambda):
\ker E_{\vH}(\lambda) \to \ker (H-\lambda)$. We shall refer to the
 vector  $f=T^*\phi\in\ker E_{\vH}(\lambda)$ as the
\emph{eigentransform} of a given $\phi\in \ker (H-\lambda)$, and the
equation $\phi=E_+(\lambda)f$ as the \emph{inversion formula} for the
eigentransform of $\phi$. The Hilbert space setting discussed here is
in our application an $L^2$ setting which will have
extensions to Besov space  settings  and settings where $\lambda_0\in \sigma(H')$,
that are  incompatible with the framework discussed
here. These concrete extensions will be  studied in Chapter \ref{chap:lowest thr}.
\end{remark}

\section{Non-multiple  two-cluster threshold case}
\label{Reduction near a simple two-cluster threshold}

 Assume in this section that $\lambda_0<0$ is a \emph{non-multiple  two-cluster
 threshold} in the following sense:
\begin{cond} \label{cond:uniq}
 There exists a unique $a_0 \in\vA\setminus\{a_{\max}\}$ such
 that $\lambda_0 \in\sigma_{\pupo}(H^{a_0})$. This cluster decomposition is
 a two-cluster decomposition, i.e.
   $ \# a_0 =2$.  Condition \ref{cond:geom_singl} is fulfilled for  $a=a_0$.
\end{cond}
Note that we do not here impose that  $\lambda_0 \in
\sigma_{\d}(H^{a_0})$. Nevertheless $\lambda_0$ is not a threshold for the
 Hamiltonian $H^{a_0}$, and consequently the corresponding bound states
 have exponential decay, cf. \cite{FH, AIIS}.
Note that the multiplicity of $\lambda_0$  as an eigenvalue of $H^{a_0}$, say  $m$,   can  be arbitrary.
The simplest case is $\lambda_0 = \Sigma_2$ (under Condition \ref{cond:uniq}
necessarily $m=1$ in this case), and it is studied in \cite{Wa2} for
fastly  decaying potentials.
For $\lambda_0 > \Sigma_2$, we can  use the same idea to reduce $H$
to a one-body type operator with a non-linear spectral parameter,
although additional complications arise, in particular for  $\lambda_0>\Sigma_3$.

\fancybreak{}

Let $\{\varphi_1, \dots, \varphi_m\} $ be an orthonormal basis of the eigenspace of  $H^{a_0}$ associated with $\lambda_0$.
Let $\Pi $ be the projection in $\vG:=L^2(\bX)$  defined by
\begin{equation}
  \label{eq:4}
  \Pi g = \sum_{j=1}^m \varphi_j\otimes \inp{\varphi_j,g}_{L^2(\bX^{a_0})},  \;  g\in \vG.
\end{equation}

Let $\Pi' = 1 - \Pi$ and $H' = \Pi' H \Pi'$. Note  that $H' $ is
self-adjoint with  domain $\vD (H')=\vD (H)\cap \ran
\,\Pi'=\Pi'\,\vD
(H)$. In fact $H_{a_0} $ is reduced by $\Pi'$ (`reduced' in the sense of
\cite[Subsection V.3.9]{Ka})  implying in particular that $H_{a_0}'= \Pi' H_{a_0} \Pi'$ is
self-adjoint.  Since $I_{a_0}'= \Pi' I_{a_0} \Pi'$ is infinitesimally small
relatively to $H_{a_0}' $ it follows from \cite[Theorem X.12]{RS} that
indeed $H' $ is
self-adjoint. For
$\Im z \neq 0$ we  set
\[
R'(z) = (H'-z)^{-1} \Pi'.
\]

 Let $I_0=I_{a_0}$ and abbreviate similarly $p_0=p_{a_0}$. Since we
  have imposed Condition \ref{cond:geom_singl}  with $a=a_0$ we can
  use  a `free factor' $(|p^{a_0}|^2+1)^{-1}$ (from $\Pi$) to conclude
  that
\begin{align}
    \label{eq:assump_singa}
    \Pi'I_{0}\Pi\in \vL \parb{\vG}.
  \end{align}
 Let $\vH=\vH_{a_0}:=L^2(\bX_{a_0}; \C^m) $, $H^2_{a_0}:=H^2(\bX_{a_0};
 \C^m) \subset \vH_{a_0}$ and the operator  $S : \vH\to
 \vG$ be  defined by
\begin{equation}\label{eq:S}
  S : f= (f_1, \dots, f_m) \to  Sf =\sum_{j=1}^m S_jf_j=\sum_{j=1}^m \varphi_j(x^{a_0}) f_j(x_{a_0}).
\end{equation} Obviously $S : H^2_{a_0}\to
 H^2(\bX)$ and the ($L^2)$ adjoint $S^*:H^2\to
H^2_{a_0}$. Since $S$ in this case is isometric the formulas
\eqref{eq:5pm}--\eqref{eq:8pm} and \eqref{eq:5o}--\eqref{eq:8o}
coincide. They  read (in terms of the identity matrix ${\textbf 1}_m$
of size  $m$)
\begin{subequations}
  \begin{align}
\label{eq:9}E(z) &= R'(z),   \\
\label{eq:10}E_+(z)&=(1-R'(z)I_{0})S,   \\
 \label{eq:11}E_-(z) &= S^*(1- I_{0}R'(z)),\\
\label{eq:12}E_{\vH}(z)&= (z-\lambda_0)-(|p_{0}|^2{\textbf 1}_m+ S^*{ I_{0}}S -
S^* I_{0}R'(z)I_{0} S),
\end{align}
\end{subequations} which can be used in combination with \eqref{rep3ipm}, i.e.
\begin{equation} \label{rep1}
R(z)= E(z)- E_{+}(z) E_{\vH}(z)^{-1} E_{-}(z).
\end{equation}

   We examine
  in the following
  some basic properties of $E_{\vH}(z)$.
Note that $S^*{ I_{a_0}}S $ is a matrix-valued potential. Let us
  first split $ I_{0}=  I^{(1)}_{0}+ I^{(2)}_{0}=I^{(1)}_{a_0}+ I^{(2)}_{a_0}$ in agreement with the
  splitting of Condition \ref{cond:smooth} and look at the
  contribution from $I^{(1)}_{0}$.  Recalling  the elementary geometric property
\begin{equation}
  \label{eq:26}
  X_a\cap X_b=\{0\} \text { if }\#a=2\mand b\not\subset a,
\end{equation} guaranteeing that
$\pi^b:\bX_{a_0}\to \bX^b$ is injective for all $b\not\subset
a_0$, we obtain by  a  Taylor
expansion  that the leading term
has a  scalar leading form. More precisely we obtain by a zero'th
order Taylor
expansion, by using \eqref{eq:elementEst} (stated below)  and by using the
polynomial decay of the bound states $\varphi_j$ that
\begin{subequations}
\begin{equation}\label{eq:22asym}
S_i^*{ I^{(1)}_{0}}S_j  =  I^{(1)}_{0}(0+ y)\delta_{ij} +
\vO(|y|^{-\rho-1})\text{ for }|y|  \to \infty;\quad i,j\leq m.
\end{equation} Here  the variable   $y=x_{a_0}$ can be thought of as a vector in $\R^n$,  abbreviating here and henceforth $\dim \bX_{a_0}=n$.
  We
 claim the following analogue of \eqref{eq:22asym} for $I^{(2)}_{0}$:
\begin{equation}\label{eq:22asymBa}
S_i^*{ I^{(2)}_{0}}S_j  -  I^{(2)}_{0}(0+ y)\delta_{ij} \in
\vC\parb{H^{2}_{s}(\bX_{a_0}),L^{2}_{t}(\bX_{a_0})}\mforall s,t\in\R.
\end{equation}
 \end{subequations}

   To prove \eqref{eq:22asymBa}  it suffices to
consider the contribution from $V_b^{(2)} $ for any $b\not\subset
a_0$ under the conditions
\eqref{eq:22} and \eqref{eq:89}, in fact with the second condition of
\eqref{eq:22} fulfilled.  Due to
\eqref{eq:89} the restriction
$V_b^{(2)}(=V_b^{(2)}(y))\in\vC\parb{H_s^{2}(\bX_{a_0}),L_t^{2}(\bX_{a_0})}$.
 It is compactly supported,  possibly with singularities. On the other
 hand
$S_i^*{ V_b^{(2)}}S_j $ is a bounded potential due to the
second  condition  of
\eqref{eq:22}, and it remains to show that this potential decays
faster than any negative power $\inp{y}^{-s}$.   By the compact
support property we can pick  $R>1$ such that
\begin{align*}
  F(|\pi^b\cdot|>R)S_i^*{ V_b^{(2)}}S_j =S_i^*{ F\parb{|\pi^by|>R}F\parb{|\pi^bx^{a_0}|\geq\tfrac12|\pi^by|}V_b^{(2)}}S_j.
\end{align*}
By the polynomial decay of the cluster bound
states the right-hand side
has arbitrary power decay too, showing the desired decay and therefore \eqref{eq:22asymBa}.

We conclude from \eqref{eq:22asym} and \eqref{eq:22asymBa} that
\begin{align}\label{eq:22asymB}
  \begin{split}
   & S_i^*I_{0}S_j  -  I^{(1)}_{0}(0+ y)\delta_{ij} \in
\vC\parb{H^{2}_{s}(\bX_{a_0}),L^{2}_{t}(\bX_{a_0})}\mfor
s,t\in\R,\,t<\rho+1+s,\\
&S_i^*I_{0}S_j  -  I^{(1)}_{0}(0+ y)\delta_{ij} \in
\vL\parb{H^{2}_{s}(\bX_{a_0}),L^{2}_{\rho+1+s}(\bX_{a_0})}\mfor
s\in\R.
\end{split}
\end{align}

It remains to examine the  term
$S^* I_{0}R'(z)I_{0} S$ in
\eqref{eq:12}.  Note that the two factors of $I_{0}$ freely can be
changed to $\Pi I_{0}\Pi'$ and $\Pi' I_{0}\Pi$,
respectively.  We can then conveniently
implement the following improvements of  \eqref{eq:assump_singa}:
\begin{align}
  \label{eq:90}
  \Pi I_{0}\Pi',\,\Pi' I_{0}\Pi\in \vL\parb{L^{2}_{s}(\bX),L^{2}_{\rho+1+s}(\bX)}\mforall s\in\R.
\end{align} For these bounds it suffices to consider  $\Pi' I_{0}\Pi$. The contribution
from  $I_0^{(1)}$ is treated by a Taylor expansion as in
\eqref{eq:22asym} (and by \eqref{eq:elementEst}).  The contribution from  $I_0^{(2)}$ is treated by
  using a `free factor' $(|p^{a_0}|^2+1)^{-1}$  (as for
  \eqref{eq:assump_singa}) as well as using the above proof of
  \eqref{eq:22asymBa}.

We conclude  the following representation of the operator
$E_{\vH}(z)$:
\begin{align*}
  E_{\vH}(z)&+\parb{p^2_{0}+I^{(1)}_{0}(\pi_{a_0}\cdot)+\lambda_0-z}{\textbf 1}_m=
              \widetilde{V}-\inp{y}^{-\rho-1}\widetilde K(z)\inp{y}^{-\rho-1};\\
&\widetilde{V}=\vO\parb{\inp{y}^{-\rho-1}},\quad \widetilde K(z)=\vO\parb{\inp{y}^0};
\end{align*} here the meaning of the $\vO(\cdot)$ notation is given more
precisely by \eqref{eq:22asymB} and \eqref{eq:90}, respectively. Note
that $\widetilde{V}$ is multiplicative matrix-valued, while
$\widetilde K\in\vL(\vH)$ is
non-multiplicative with a $z$-dependence through the appearance of the
factor $R'(z)$. Note also that $\inp{y}^{\rho}I_{0}^{(1)}(y)$ is
bounded. Although the above discussion is based on a natural operator
interpretation of $E_{\vH}(z)$ we shall prefer to use the
corresponding form
interpretation (see  Remark \ref{remark:formbnd} for details in a
different but similar context).

  The representation formula \eqref{rep1} is valid for any $z$ with
  $\Im z \neq 0$, but for our purposes it is only useful if one has good properties of
  $R'(z)$ for $z$ near $\lambda_0$.  When $\lambda_0$ is the lowest
  threshold, the essential spectrum of $H'$ is shifted to the right
  and the representation \eqref{rep1} can be used in the case
  $\lambda_0$ is not an eigenvalue of $H'$, cf. \cite[Lemma 2.1]{Wa2}
  (see Lemma \ref{lemma2.1} for an extension).    With
  this assumption, the reduced resolvent $R'(z)$ is holomorphic in a small neighbourhood of
  $\lambda_0$, and so are $E_{+}(z), E_{\vH}(z)$ and $ E_{-}(z)$. Moreover
  in this case  $R'(z)$  has uniform bounds in weighted $L^2$ spaces
   (see  \eqref{eq:91BB} for a similar assertion for the multiple case). This is helpful in
   the study of asymptotic expansions of $E_{\vH}(z)^{-1}$ at
   $\lambda_0$  and therefore in turn (by \eqref{rep1}) for expansions
   of $R(z)$ near $\lambda_0$ (see Section
   \ref{sec:resolv-asympt-nearLOWEST} for the multiple case).  In the case $\lambda_0$ happens to be
  an eigenvalue of $H'$ another reduction is needed.

If $\lambda_0 >\Sigma_2$, $\lambda_0$ is in the essential
spectrum of $H'$. Under Conditions \ref{cond:smooth} and
\ref{cond:uniq} we shall show in Chapter \ref{Spectral analysis of H'
  near E_0} that a limiting absorption principle and microlocal
resolvent estimates hold for $R'(z)$ with $z$ near $\lambda_0$
provided $\lambda_0$ is not an eigenvalue of $H'$. This leads in this
case to an extension of the `eigentransform' discussed in Remark
\ref{remark:GrushinBB} (see Chapter
\ref{chap:lowest thr}), and then in
turn \eqref{rep1} can again be used to   analyse  the resolvent of
 $H$ near the threshold $\lambda_0$ (see Section \ref{sec:resolv-asympt-nearHIGHER}). The analysis for $\lambda_0 >\Sigma_2$
is more complicated  than for $\lambda_0 =\Sigma_2$ in that, some  more
refined mapping properties of  $R'(z)$ near $\lambda_0$ are needed,
causing in particular some `loss of weight'.

\section{Multiple two-cluster threshold,\,
  $\vF_1\cap\vF_2=\{0\}$}\label{sec:Reduction near a multiple two-cluster threshold}

Consider now the case where  there exist at least two   two-cluster
decompositions such that the two-cluster threshold $\lambda_0(<0)$ is an eigenvalue of
the corresponding sub-Hamiltonians. To simplify the presentation, let us only consider in detail the
case where $\lambda_0$ is  double occurring without  eigenvalue multiplicity:
\begin{cond} \label{cond:uniqdd}
 There exist  unique $a_1, a_2 \in\vA\setminus\{a_{\max}\}$, $a_1\neq  a_2 $, such
 that $\lambda_0 \in\sigma_{\pupo}(H^{a_j})$, $j=1,2$. The cluster
 decompositions $a_1$ and $  a_2 $ are
  two-cluster decompositions, i.e.
   $ \# a_1 =\# a_2 =2$, and   Condition \ref{cond:geom_singl} is
   fulfilled for  both of them. The number $\lambda_0$ is a simple eigenvalue
for both of the operators $H^{a_1}$ and $H^{a_2}$.
\end{cond}

Denote
\[
\bX^j = \bX^{a_j}, \quad \bX_j = \bX_{a_j}\quad \text{and}\quad n_j =
\dim\bX_{_j}\quad \text{for}\quad j =1,2,
\] and similarly for elements $x$ of $\bX^{a_j}$ and $\bX_{a_j}$,
viz. $x=x^j\oplus x_j$.
  Let
$\varphi_j$ denote the corresponding (real) normalized eigenvector of
$H^{a_j}$, $j=1,2$. Let
\[
\vF_j =\{ g = \varphi_{j}(x^j)f_j(x_j)\mid  f_j \in L^2(\R^{n_j}_{x_j})\},  \quad j =1, 2.
\]
Note that $\vF_j$ is a closed subspace of $\vG:=L^2(\bX)$. Let  $\Pi $ be the orthogonal
projection in  $\vG$ onto  $ \vF$, which by definition is the closure of $ \vF_1 +
\vF_2$ in $\vG$. Let  $\vH = L^2(\bX_1) \oplus L^2(\bX_2)$.

To construct an associated Grushin problem  we need to distinguish
between two cases:
a) $\vF_1 \cap \vF_2 =\{0\}$ and b) $\vF_1 \cap \vF_2 \neq\{0\}$.
Here we shall  impose the condition a), i.e.
\begin{equation} \label{ass2}
\vF_1 + \vF_2 = \vF_1 \oplus \vF_2.
\end{equation}
  We believe that this condition is always fulfilled  for the finite mass
  many-body operators of  Section \ref{$N$-body Schr\"odinger
    operators}, however we do not have a proof.
  A case where it  fails for the infinite
 mass many-body operators of  Section \ref{$N$-body Schr\"odinger operators with infinite mass
  nuclei}  will be discussed in   Section \ref{sec:The case when the
  condition {ass2}}. Moreover   the general case of b) will be studied there.

Let
$S= (S_1, S_2) : \vH  \to \vG$ be  defined by $Sf=S_1f_1+S_2f_2$
for $f=(f_1,f_2)$ and with
\begin{align}\label{eq:19}
  \begin{split}
  S_j&:  L^2(\bX_j) \to \vG,\quad  f_j \to S_jf_j =
\varphi_j(x^j) \otimes f_j(x_j),\\
S_j^*&:  \vG\to  L^2(\bX_j),
\quad f\to  S_j^*f =
\inp{\varphi_j,f}_j;
\quad j =1,2.
  \end{split}
\end{align}
  Here $\inp{.,.}_j$ denotes the scalar product in
$L^2(\bX^j)$;   the notation $\w{\cdot, \cdot}$ will be used to denote  the scalar product in $\vG$.
One has
\[
S^*S =  1 + \left( \begin{array}{cc}
                  0 & s_{12}\\[.1in]
                   s_{21} & 0
                \end{array}
                \right) \quad \mbox{ on } \vH =L^2(\bX_1) \oplus L^2(\bX_2),
\]
where $s_{ij} \in \vL\parb{ L^2(\bX_j),L^2(\bX_i)}, \;i\neq j,$ are given by
\begin{equation*}
  s_{ij}f_j= \inp{\varphi_i, \varphi_j \otimes f_j}_i.
\end{equation*}
  In the proof of
 Proposition \ref{prop2.2} stated below we shall use that $s_{ij}
\in \vC\parb{ L^2(\bX_j),L^2(\bX_i)}$, $i\neq j$.
This property
is a consequence of the following lemma.
\begin{lemma} \label{Lemma:basic2A} For all $r,t\in \R$
\begin{equation*}
  s_{ij} \in \vL
  \big ( L^{2}_{r}(\bX_j), H^{2}_{t}(\bX_i)\big )\cap \vL
  \big ( H^{-2}_{r}(\bX_j), L^{2}_{t}(\bX_i)\big ).
  \end{equation*}
  \end{lemma}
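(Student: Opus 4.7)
The plan combines three ingredients. First, since $\lambda_0$ is non-threshold for both $H^{a_1}$ and $H^{a_2}$, Froese-Herbst \cite{FH} yields exponential decay $|\varphi_i(x^i)|\le C\mathrm{e}^{-\delta|x^i|}$ for some $\delta>0$, $i=1,2$; interior elliptic regularity for the cluster operators then gives $\mathrm{e}^{\delta'|\cdot|}\varphi_i\in H^2(\bX^i)$ for some $\delta'\in(0,\delta)$. Second, the relation $\bX_i\cap\bX_j=\{0\}$ (equation \eqref{eq:26}, valid since $\#a_i=2$ and $a_j\not\subset a_i$) gives the quantitative injectivity $|\pi^j y|\ge c|y|$ for $y\in\bX_i$, and, by orthogonal complements, the identity $\bX^i+\bX^j=\bX$; the latter entails surjectivity of $\pi_j|_{\bX^i}:\bX^i\to\bX_j$. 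Third, the size bounds $|x_j|\le|x^i|+|x_i|$ and $|x^j|\ge c|x_i|-|x^i|$.

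Step 1 establishes $s_{ij}\in\vL(L^2_r(\bX_j),L^2_t(\bX_i))$ for arbitrary $r,t\in\R$. From
\[
(s_{ij}f_j)(x_i)=\int_{\bX^i}\overline{\varphi_i(x^i)}\,\varphi_j(x^j)\,f_j(x_j)\,\mathrm{d}x^i,\qquad x^j=\pi^j(x^i+x_i),\ x_j=\pi_j(x^i+x_i),
\]
I apply Cauchy-Schwarz with weight $\inp{x_j}^{\pm r}$ to get $|(s_{ij}f_j)(x_i)|^2\le A(x_i)B(x_i)$, where $A(x_i)=\int|\varphi_i||\varphi_j|\inp{x_j}^{-2r}\,\mathrm{d}x^i$ and $B(x_i)=\int|\varphi_i||\varphi_j|\inp{x_j}^{2r}|f_j|^2\,\mathrm{d}x^i$. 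Passing to $(x^j,x_j)$-coordinates yields $\int B(x_i)\,\mathrm{d}x_i\le\|\varphi_i\|_{L^\infty}\|\varphi_j\|_{L^1}\|f_j\|_{L^2_r}^2$. The crucial bound is $A(x_i)\le C_N\inp{x_i}^{-N}$ for every $N\in\N$: split the $x^i$-integration at $|x^i|=c|x_i|/2$; on the inner region the third ingredient gives $|x^j|\ge c|x_i|/2$, whence exponential smallness of $\varphi_j$ in $|x_i|$, while on the outer region exponential decay of $\varphi_i$ directly provides the decay. In both cases these exponentials absorb the polynomial factor $\inp{x_j}^{-2r}\le C\inp{x^i}^{2|r|}\inp{x_i}^{2|r|}$ along with the target weight $\inp{x_i}^{2t}$, yielding $\|s_{ij}f_j\|_{L^2_t}\le C_{r,t}\|f_j\|_{L^2_r}$.

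Step 2 upgrades this to the $H^2_t$-norm. Differentiating under the integral, for $v\in\bX_i$,
\[
\partial_v(s_{ij}f_j)(x_i)=\int\bar\varphi_i\bigl[(\nabla\varphi_j\cdot\pi^j v)\,f_j+\varphi_j\,(\nabla f_j\cdot\pi_j v)\bigr]\,\mathrm{d}x^i.
\]
The first summand has the structural form of $s_{ij}$ with $\varphi_j$ replaced by a component of $\nabla\varphi_j$, which shares the exponential decay since $\mathrm{e}^{\delta'|\cdot|}\varphi_j\in H^2(\bX^j)$, so Step 1 applies. For the second, surjectivity of $\pi_j|_{\bX^i}$ permits a choice $w\in\bX^i$ with $\pi_j w=\pi_j v$; then $\nabla f_j\cdot\pi_j v=\partial_w[f_j(\pi_j(\cdot+x_i))]$ is an $x^i$-derivative, and integration by parts in $x^i$ (boundary terms vanishing by exponential decay of $\varphi_i$) transfers it onto $\bar\varphi_i\,\varphi_j(\pi^j\cdot)$. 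Iterating for the second derivative produces a finite sum of integrands of the same structural type involving $\bar\varphi_i,\nabla\bar\varphi_i,\nabla^2\bar\varphi_i$ and analogous factors for $\varphi_j$, all still in $\mathrm{e}^{-\delta'|\cdot|}L^\infty\cap L^2$ by the first ingredient, and Step 1 then gives $\|(1-\Delta_{x_i})s_{ij}f_j\|_{L^2_t}\le C_{r,t}\|f_j\|_{L^2_r}$.

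The second inclusion follows by duality: since $\varphi_i,\varphi_j$ are real, $s_{ij}^*=s_{ji}$, and applying Steps 1--2 to $s_{ji}$ with the pair $(-t,-r)$ in place of $(r,t)$ gives $s_{ji}\in\vL(L^2_{-t}(\bX_i),H^2_{-r}(\bX_j))$, whose adjoint with respect to the $L^2$-pairing is precisely $s_{ij}\in\vL(H^{-2}_r(\bX_j),L^2_t(\bX_i))$. The main obstacle I anticipate is Step 2: the bookkeeping of the iterated integrations by parts, and verifying at each stage that the resulting integrand still falls within the scope of Step 1. The saving grace is that each step only modifies $\bar\varphi_i$ or $\varphi_j$ by a first-order differential operator preserving the exponential decay, so the estimate's structural type is preserved throughout.
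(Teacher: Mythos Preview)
Your argument is correct and follows essentially the same route as the paper: the key geometric input is $\bX_i\cap\bX_j=\{0\}$ (equation \eqref{eq:26}), giving both the surjectivity of $\pi_j|_{\bX^i}$ used for integration by parts in $x^i$ and the bound $|x^i|+|x^j|\ge c|x|$ (equivalently your $|x^j|\ge c|x_i|-|x^i|$) used to extract decay from the bound states; the second inclusion is obtained by duality via $s_{ij}=s_{ji}^*$ exactly as you do. The paper's write-up is more compressed (it folds your Step~1 into the derivative formula and uses only polynomial decay of $\partial^\beta\varphi_k$, which already suffices), whereas you invoke exponential decay and a Schur-type splitting---a harmless overkill. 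One small point: your bound on $\int B\,\mathrm{d}x_i$ uses $\|\varphi_i\|_{L^\infty}$, which in high cluster dimension does not follow from $H^2$ alone; either bootstrap elliptic regularity or reorganize the Cauchy--Schwarz so that only $L^2$-norms of (weighted) $\varphi_i$ appear.
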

  \begin{proof}
      Note that \eqref{eq:26} implies that
\begin{align}
  \label{eq:21}
  \ran(\pi_b\pi^a)=\ran (\pi_b) \text { if }\#b=2\mand a\not\subset
 b.
\end{align}
 We apply \eqref{eq:21} to $a=a_i$ and $b=a_j$ and  do  integration by
parts  (or alternatively change variables) obtaining  that
\begin{equation*}
  s_{ij} \in \vL
  \big ( L^{2}(\bX_j), H^{2}(\bX_i)\big ).
  \end{equation*}  Whence for  $|\alpha|\leq 2$ we have  a formula for
  $\partial^\alpha_{x_i}s_{ij}$ involving partial derivatives
  $\partial^\beta\varphi_k$, $k=1,2$ and $|\beta|\leq 2$. By using this
  formula, polynomial decay of  $\partial^\beta\varphi_k$ and the bound
  \begin{align*}
    |x^i|+|x^j|\geq c|x|
  \end{align*} (the latter is   a consequence of    \eqref{eq:26}) we
  obtain that indeed
\begin{equation*}
  s_{ij} \in \vL
  \big ( L^{2}_{r}(\bX_j), H^{2}_{t}(\bX_i)\big ).
  \end{equation*}  Since $ s_{ij}= s_{ji}^*$ also
\begin{equation*}
  s_{ij} \in \vL
  \big ( H^{-2}_{r}(\bX_j), L^{2}_{t}(\bX_i)\big ).
  \end{equation*}
\end{proof}

\begin{prop} \label{prop2.2}
 Suppose  \eqref{ass2}. Then
  \begin{enumerate}[\normalfont 1)]
\item  \label{item:1}$\vF= \vF_1
  \oplus \vF_2$ (i.e. $\vF_1
  + \vF_2$  is closed).

\item \label{item:2}$S^*=0$ on $\vF^\perp$ and $S^*: \vF \to \vH$ is
  a bi-continuous  isomorphism. Similarly $S: \vH \to \vF$ is
  a bi-continuous  isomorphism, and therefore
 $S S^* :  \vF \to \vF$  is invertible on $\vF$. One has
 \begin{equation}\label{eq:Pisone}
  S^*(S S^*)^{-1}S = 1 \text{ on } \vH.
\end{equation}

\item  \label{item:3}$S^*S$ is invertible on $\vH$ and
\begin{equation}\label{eq:Qispro}
 S (S^* S)^{-1} S^* = \Pi\text{ on } \vG.
\end{equation}
\end{enumerate}
\end{prop}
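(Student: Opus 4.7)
The plan is to reduce everything to a single fact — the invertibility of $S^*S$ on $\vH$ — after which all three parts follow by routine Hilbert-space algebra, the hypothesis \eqref{ass2} entering the argument at exactly one point. Since each $\varphi_j$ is normalized, $S_j$ is isometric with $S_j^*S_j=1$, so
\[
S^*S=1+K,\qquad K=\begin{pmatrix} 0 & s_{12}\\ s_{21} & 0 \end{pmatrix}.
\]
Lemma~\ref{Lemma:basic2A} yields $s_{ij}\in \vL(L^2(\bX_j),H^2_t(\bX_i))$ for any $t>0$, and the weighted Rellich embedding $H^2_t\hookrightarrow L^2$ then shows that $s_{ij}$ is compact from $L^2$ to $L^2$; hence $K\in\vC(\vH)$ and $S^*S=1+K$ is Fredholm of index zero. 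For injectivity, $S^*Sf=0$ implies $\|Sf\|_\vG^2=\inp{S^*Sf,f}_\vH=0$, whence $S_1f_1=-S_2f_2\in \vF_1\cap \vF_2=\{0\}$ by \eqref{ass2}; since each $S_j$ is isometric this forces $f=0$. Thus $S^*S$ is invertible.

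Part \ref{item:1} is now immediate: the invertibility of $S^*S$ supplies a lower bound $\|Sf\|_\vG^2\geq c\|f\|_\vH^2$, so $S:\vH\to\vG$ has closed range, and this range is the algebraic sum $\vF_1+\vF_2$, which is direct by \eqref{ass2}. For part \ref{item:2}, injectivity and closed range upgrade $S:\vH\to\vF$ to a bi-continuous isomorphism via the open mapping theorem. Taking adjoints, $\ker S^*=(\ran S)^\perp=\vF^\perp$, so $S^*$ vanishes on $\vF^\perp$ and restricts to a bi-continuous isomorphism $\vF\to\vH$; consequently $SS^*:\vF\to\vF$ is invertible. The identity \eqref{eq:Pisone} then follows from the intertwining $(SS^*)S=S(S^*S)$, which gives $(SS^*)^{-1}S=S(S^*S)^{-1}$ on $\vH$, and composing with $S^*$ on the left produces $S^*(SS^*)^{-1}S=S^*S(S^*S)^{-1}=1$.

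For part \ref{item:3}, set $P:=S(S^*S)^{-1}S^*\in \vL(\vG)$. Self-adjointness is inherited from $(S^*S)^{-1}$, $P^2=P$ is immediate by cancellation, $\ran P\subset \ran S=\vF$, and $P$ vanishes on $\vF^\perp=\ker S^*$; finally, for $g=Sf\in \vF$ the cancellation $Pg=S(S^*S)^{-1}(S^*S)f=Sf=g$ shows $P|_\vF=1_\vF$. These four properties characterize $P$ as the orthogonal projection $\Pi$ onto $\vF$, giving \eqref{eq:Qispro}. The only genuinely non-trivial input in the whole scheme is the compactness of the coupling operators $s_{ij}$ supplied by Lemma~\ref{Lemma:basic2A}, whose proof rests on the geometric injectivity \eqref{eq:26} and the polynomial decay of the bound states $\varphi_j$; this compactness is what makes the Fredholm alternative available for $S^*S=1+K$ and is therefore the main obstacle. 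Everything else is standard manipulation.
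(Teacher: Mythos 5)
Your proof is correct, and it organizes the argument differently from the paper. You pivot on the Fredholm alternative: writing $S^*S=1+K$, you get compactness of $K$ from Lemma~\ref{Lemma:basic2A} plus the compact embedding $H^2_t\hookrightarrow L^2$, prove injectivity using the isometry of the $S_j$ and \eqref{ass2}, conclude invertibility of $S^*S$, and only then deduce that $S$ is bounded below, hence that $\vF_1+\vF_2$ is closed, with parts \ref{item:2} and \ref{item:3} following by standard adjoint/projection algebra. The paper runs the logic in the opposite direction: it first proves closedness by showing the norm $N(f)=\|f_1\|+\|f_2\|$ is equivalent to $\|f\|_\vG$ on $\vF_1+\vF_2$, via a hands-on contradiction argument with a weakly convergent sequence and the compactness of $s_{12}s_{21}$ (with \eqref{ass2} ruling out a nontrivial limit), then obtains the bi-continuous isomorphisms in \ref{item:2} and derives invertibility of $S^*S$ as a consequence in \ref{item:3}; it also gets \eqref{eq:Pisone} by observing that $S^*(SS^*)^{-1}S$ is a bijective idempotent on $\vH$, whereas you use the intertwining identity $(SS^*)^{-1}S=S(S^*S)^{-1}$ (which is in fact the identity \eqref{eq:commIdent} used later in the paper). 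The two routes rest on exactly the same inputs — compactness of the coupling operators $s_{ij}$ and the hypothesis \eqref{ass2} — and your version is somewhat more streamlined, since the Fredholm alternative packages the paper's sequential compactness argument; the paper's version is more self-contained in that it never invokes Fredholm theory by name.
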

\begin{proof}

 \subStep{\ref{item:1}}  For $f \in \vF_1 + \vF_2$, one has for some $f_j \in L^2(\bX_j)$,
\[
f = \varphi_1\otimes f_1 + \varphi_2\otimes f_2.
\]
Since $\vF_1\cap \vF_2 =\{0\}$, this decomposition is unique. We claim that $N(f) = \|f_1\| + \|f_2\|$ defines a norm on $\vF_1 + \vF_2$ which is equivalent with the norm of $\vG$. Clearly, one has
\[
\|f\| \le  N(f).
\]
Conversely, we want to show the existence of a constant $C>0$ such that $N(f) \le C\|f\|$ on $\vF_1 + \vF_2$. If this is not true, there would be a sequence $g_n \in \vF_1 + \vF_2$ such that $\|g_n\| <1/n$ and $N(g_n) =1$.
Write $g_n$ as
\[
g_n = \varphi_1\otimes f_{1,n} + \varphi_2\otimes f_{2,n}
\]
with $\|f_{1,n}\| + \|f_{2,n}\|=1$. Without loss of generality, we can assume that $\{f_{k,n}\}$ converges weakly to $f_k$ in $L^2(\bX_k)$ as $n \to \infty$. Then, one has
\begin{align*}
f_{1,n}&= \inp{\varphi_1,g_n}_1 - s_{12}f_{2,n} =  \vO(n^{-1}) - s_{12}f_{2,n},\\
f_{2,n}&= \inp{\varphi_2,g_n}_2 - s_{21}f_{1,n} =  \vO(n^{-1}) - s_{21}f_{1,n}.
\end{align*}
Substituting the second equation into the first one, we obtain that
\[
f_{1,n}=   \vO(n^{-1}) + K_{1} f_{1,n},\;K_1= s_{12}s_{21}.
\]
 Since $f_{1,n}$ is weakly convergent and $K_1$ is compact, $K_1f_{1,n}$ is strongly convergent.
This implies that $f_{1,n}$  converges to  some $f_1$ in
$L^2(\bX_1)$. Similarly one  shows that $f_{2,n}$ converges to
some $f_2$ in $L^2(\bX_2)$. Since $g_n \to 0$, this implies
\[
 \varphi_1\otimes f_{1} + \varphi_2\otimes f_{2}=0, \quad \|f_1\| + \|f_2\|=1,
\]
which is impossible since $\vF_1\cap \vF_2 =\{0\}$. This proves the equivalence of the norms. It follows that $ \vF_1+ \vF_2$ is closed.

\subStep{\ref{item:2}} Let  $\widetilde S:\vH\to
\vF$ act as $S$, i.e. $ \widetilde S f=S f$ for  $f\in \vH$. By the equivalence
of norms shown above  we see that $\widetilde S$ is a bi-continuous
isomorphism. Clearly $S^*=0$ on $\vF^\perp$. On the other hand we can
identify $(S^*)_{|\vF}$ as the adjoint of the map $\widetilde S$, i.e. $(\widetilde
S)^*$, which is a bi-continuous
isomorphism (since $\widetilde S$ is). In particular  $S S^*$ is
invertible on $\vF$. As for  the last   part of \ref{item:2} we note  that $ P:=S^*
(S S^*)^{-1} S$ is bijective on $\vH$ and that it is also a projection, $P^2=P$.
Therefore  $P=1$, showing \eqref{eq:Pisone}.

\subStep{\ref{item:3}} The invertibility of $S^*S$ on $\vH$
follows from \ref{item:2}.  Letting  $Q = S (S^* S)^{-1} S^*$, we note
\[
Q^2 = Q, \quad Q^*=Q, \quad \ran (Q)=  \ran (S) = \vF,
\]
 showing $Q=\Pi$.

\end{proof}

Put $\Pi' = 1 - \Pi$ and $H' = \Pi' H \Pi'$ with domain $\vD (H')=\vD (H)\cap \ran (\Pi')=\Pi'\,\vD
(H)$. We  show below that indeed $\Pi'$ preserves $\vD
  (H)$ and that $H'$ is self-adjoint. Let $\Pi_j=S_jS_j^*$ and $\Pi'_j=1-\Pi_j$; $j=1,2$. We recall
 that Condition \ref{cond:geom_singl} is imposed for $a=a_1$ and
 $a=a_2$. As in   \eqref{eq:assump_singa} we  record  that
\begin{align}
    \label{eq:assump_sing}
    \Pi_j'I_{j}\Pi_j\in \vL ({\vG});\, j=1,2.
  \end{align} Here and henceforth we abbreviate $I_j(x) = I_{a_j}(x)$; $j=1,2$.

\begin{lemma} \label{Lemma:basic} Suppose
  \eqref{ass2}.
 Then
  \begin{enumerate}[\normalfont 1)]
\item  \label{item:1m} The operators
  \begin{equation}
    \label{eq:24}
   \Pi-S S^*\in \vL
  \big ( \vG, H^2(\bX)\big ).
  \end{equation}
\item  \label{item:4m} $\Pi$ preserves $\vD
  (H)$, and $\Pi'H\Pi$   and  $\Pi H\Pi'$ (initially defined on
  $\vD (H)$) extend to  bounded operators on $\vG$.
\item  \label{item:3m} $H'$ is self-adjoint.
\end{enumerate}
\end{lemma}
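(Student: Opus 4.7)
The plan is to establish the three parts in order, with part \ref{item:1m} providing the main technical input from which \ref{item:4m} and \ref{item:3m} follow by structural arguments.

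First, for part \ref{item:1m}, I would start from the identity $\Pi = S(S^*S)^{-1}S^*$ of Proposition \ref{prop2.2}\,\ref{item:3} and write $S^*S = 1 + K$ with
\[
K = \begin{pmatrix} 0 & s_{12} \\ s_{21} & 0 \end{pmatrix} \quad \text{on } \vH = L^2(\bX_1) \oplus L^2(\bX_2).
\]
The Neumann identity $(S^*S)^{-1} - 1 = -(S^*S)^{-1} K$ then gives
\[
\Pi - SS^* = S\bigl[(S^*S)^{-1} - 1\bigr]S^* = -S(S^*S)^{-1} K S^*.
\]
The key point is that $KS^* : \vG \to \vH$ in fact lands in $H^2(\bX_1) \oplus H^2(\bX_2)$, because Lemma \ref{Lemma:basic2A} gives $s_{ij} \in \vL(L^2(\bX_j), H^2(\bX_i))$ while $S_j^* \in \vL(\vG, L^2(\bX_j))$. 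Iterating the identity $(S^*S)^{-1} = 1 - K(S^*S)^{-1}$ once more shows that $(S^*S)^{-1}$ preserves this $H^2$-subspace. Finally, $S$ maps $H^2(\bX_1)\oplus H^2(\bX_2)$ into $H^2(\bX)$, because by elliptic regularity and the decay estimates of \cite{FH} each $\varphi_j$ lies in $H^2_\infty(\bX^j)$, so the tensor product $\varphi_j \otimes f_j$ inherits $H^2$-regularity from $f_j$. Composing the three bounded maps yields \eqref{eq:24}.

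Next, for part \ref{item:4m}, I would first check that $\Pi$ preserves $\vD(H) = H^2(\bX)$: the piece $SS^* = \Pi_1 + \Pi_2$ does because each $\Pi_j$ commutes with $p_{a_j}^2$ and acts as multiplication by $\lambda_0$ on $\ran H^{a_j}\rvert_{\ran \Pi_j}$, while the remainder $\Pi - SS^*$ sends $\vG$ into $H^2(\bX)$ by part \ref{item:1m}. To control $\Pi' H\Pi$, I would decompose $\Pi = SS^* + (\Pi - SS^*)$. The term $\Pi' H(\Pi - SS^*)$ extends boundedly on $\vG$ because $\Pi - SS^* : \vG \to \vD(H)$ and $H : \vD(H) \to \vG$. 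For the term $\Pi' H \Pi_j$, I would use $H\Pi_j = \Pi_j H_{a_j} + I_j \Pi_j$ on $\vD(H)$, together with $\Pi'\Pi_j = 0$, to collapse it to $\Pi' I_j \Pi_j$, which is bounded by \eqref{eq:assump_sing}. Summing in $j$ gives $\Pi' H SS^* \in \vL(\vG)$, and the claim for $\Pi H\Pi'$ follows by taking adjoints.

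Finally, for part \ref{item:3m}, set $V_c := \Pi H\Pi' + \Pi' H\Pi$. By part \ref{item:4m}, $V_c$ extends to a bounded operator on $\vG$; since $H$ is self-adjoint the bounded extension is symmetric, hence self-adjoint. Therefore $H_0 := H - V_c$ on $\vD(H)$ is self-adjoint as a bounded perturbation of $H$. On $\vD(H)$ one has $H_0 = \Pi H\Pi + \Pi' H\Pi'$, so $H_0$ commutes with both $\Pi$ and $\Pi'$ and is thus reduced by $\Pi'$ in the sense of \cite[V.3.9]{Ka}. Its restriction to $\Pi'\vG$ with domain $\Pi'\vD(H)$ is then self-adjoint, and by construction this restriction is precisely $H'$. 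The main obstacle I expect lies in part \ref{item:1m}, namely verifying that $(S^*S)^{-1}$ preserves the relevant $H^2$-subspace without spoiling regularity; once this Neumann-type bookkeeping is in place, parts \ref{item:4m} and \ref{item:3m} fall into line.
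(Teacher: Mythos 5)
Your proof is correct, and parts \ref{item:1m} and \ref{item:4m} essentially coincide with the paper's own argument: the paper derives \eqref{eq:24} from the identities \eqref{eq:23} and \eqref{eq:25}, where $(S^*S)^{-1}-1=-K+K(S^*S)^{-1}K$ places your operator $K$ on the outside so that Lemma \ref{Lemma:basic2A} supplies the $H^2$-gain without ever discussing whether $(S^*S)^{-1}$ preserves $H^2(\bX_1)\oplus H^2(\bX_2)$; your one-step Neumann identity plus a single iteration buys the same thing at the cost of that extra (correct) observation. In \ref{item:4m} your collapse of $\Pi'H\Pi_j$ to $\Pi'I_j\Pi_j$ is exactly the paper's computation; only note that \eqref{eq:assump_sing} bounds $\Pi_j'I_j\Pi_j$, so one more use of $\Pi'\Pi_j=0$, i.e. $\Pi'=\Pi'\Pi_j'$, is needed before citing it, as in the paper's display $\Pi'H\Pi_j=\Pi'\Pi_j'I_{a_j}\Pi_j$. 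The genuine difference is in \ref{item:3m}: the paper shows directly that $\vD\big((H')^*\big)\subset\vD(H')$, by proving for $g\in\vD\big((H')^*\big)\subset\Pi'\vG$ that $\vD(H)\ni f\mapsto\inp{g,Hf}$ is bounded (the cross term being controlled by \ref{item:4m}), hence $g\in\vD(H^*)=\vD(H)$ and so $g\in\Pi'\vD(H)=\vD(H')$. You instead subtract the bounded symmetric cross term $V_c=\Pi H\Pi'+\Pi'H\Pi$, apply Kato--Rellich to obtain self-adjointness of $H_0=\Pi H\Pi+\Pi'H\Pi'$ on $\vD(H)$, and invoke reduction by $\Pi'$ in the sense of \cite[Subsection V.3.9]{Ka}. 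Both routes rest entirely on \ref{item:4m}; the paper's is shorter and self-contained, while yours makes the underlying structure explicit and is in fact the same mechanism (reduction plus a relatively bounded perturbation, cf. \cite[Theorem X.12]{RS}) that the paper uses to prove self-adjointness of $H'$ in the non-multiple case of Section \ref{Reduction near a simple two-cluster threshold}.
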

\begin{proof}
\subStep{\ref{item:1m}}  We shall use
 the identities
 \begin{subequations}
\begin{align}
\Pi&=S (S^* S)^{-1} S^*=S S^*+S \parbb{S^* S)^{-1} -1}S^*,\label{eq:23}\\
(S^*S)^{-1} -1&=   -\left( \begin{array}{cc}
                  0 & s_{12}\\[.1in]
                   s_{21} & 0
                \end{array}
                \right) + \left( \begin{array}{cc}
                  0 & s_{12}\\[.1in]
                   s_{21} & 0
                \end{array}
                \right)
                 (S^*S)^{-1} \left( \begin{array}{cc}
                  0 & s_{12}\\[.1in]
                   s_{21} & 0
                \end{array}
                \right), \label{eq:25}
\end{align}
 \end{subequations} which follow from Proposition  \ref{prop2.2} and the general identity
 \begin{equation*}
   (A+1)^{-1}-1=-A(A+1)^{-1}=-A+A(A+1)^{-1}A,
 \end{equation*} respectively.
The statement  \eqref{eq:24}  follows from Lemma \ref{Lemma:basic2A},
\eqref{eq:23} and \eqref{eq:25}.

\subStep{\ref{item:4m}}
Writing $S S^*=\Pi_1+\Pi_2$ it suffices (due to \ref{item:1m})  to show that $\Pi_j$ preserves $\vD
  (H)$   and that $\Pi'H\Pi_j$
   extends  to a  bounded operator on
 $\vG$ (note that boundedness of $\Pi H\Pi'$ follows from
 boundedness of $\Pi ' H\Pi$). Since $\Pi_j$ reduces $H_{a_j}$ clearly $\Pi_j$ preserves $\vD
  (H)=\vD
  (H_{a_j})$,  and since $\Pi'H\Pi_j=\Pi'\Pi'_jI_{a_j}\Pi_j\in
  \vL (\vG)$,  cf. \eqref{eq:assump_sing}, we are done.

 \subStep{\ref{item:3m}}  Take $g\in \vD\big
 ((H')^*\big)\subset \Pi'\vG$.  The functional $\Pi'\vD(H)\ni
 f\to \inp{g,H'\Pi'f}$ extends to a bounded functional, so  by
 \eqref{item:4m} also $
 \vD(H)\ni f\to \inp{g,Hf}$ extends to a bounded functional. This
 shows that $g\in \vD
 (H^*)=\vD (H)$. Whence  $g\in \vD
 (H')$.
\end{proof}

 Some parts of the previous lemma can be generalized as follows.

\begin{lemma} \label{Lemma:basic2} Suppose \eqref{ass2}. Then
  \begin{enumerate}[\normalfont 1)]
\item  \label{item:1mq} For all $r,t\in \R$
  \begin{equation*}
  \Pi-S S^*\in \vL
  \big ( L^{2}_{r}(\bX), H^{2}_{t}(\bX)\big )\cap \vL
  \big ( H^{-2}_{r}(\bX), L^{2}_{t}(\bX)\big ).
  \end{equation*}
\item  \label{item:3mq} For all $t\in \R$ there exist extensions
  \begin{equation*}
   \w{x}^{\rho+1} \Pi'H\Pi,\, \w{x}^{\rho+1} \Pi H\Pi'\in \vL
  \big (L^{2}_{t}(\bX)\big ).
  \end{equation*}
\end{enumerate}
\end{lemma}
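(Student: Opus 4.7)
The plan is to derive part 1) from the identities \eqref{eq:23} and \eqref{eq:25} established in the proof of Lemma \ref{Lemma:basic} \ref{item:1m}, combined with the weighted mapping properties of $s_{ij}$ supplied by Lemma \ref{Lemma:basic2A}. Part 2) will then be reduced to part 1) together with an argument paralleling the non-multiple case treatment of \eqref{eq:90}.

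For part 1), combining \eqref{eq:23} and \eqref{eq:25} yields the explicit representation
\begin{equation*}
\Pi - SS^* = -SKS^* + SK(S^*S)^{-1}KS^*,\qquad K = \begin{pmatrix} 0 & s_{12}\\ s_{21} & 0\end{pmatrix}.
\end{equation*}
The exponential decay of the eigenfunctions $\varphi_j$, combined with Peetre's inequality, implies that $S$ and $S^*$ are bounded between any corresponding weighted $L^2$ or $H^2$ spaces. By Proposition \ref{prop2.2} \ref{item:3}, $(S^*S)^{-1}$ is bounded on unweighted $\vH$. Hence in each of the two terms above the outer factors of $s_{ij}$, by Lemma \ref{Lemma:basic2A}, absorb any prescribed weight and regularity, yielding the mapping $L^2_r \to H^2_t$ for all $r, t$. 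The second inclusion $H^{-2}_r \to L^2_t$ follows by duality, since $\Pi - SS^*$ is self-adjoint.

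For part 2), I would decompose $\Pi = \Pi_1 + \Pi_2 + R$ with $R = \Pi - SS^*$. By part 1), $R \in \vL(L^2_r, H^2_t)$ for arbitrary $r, t$, each $\Pi_j$ is bounded on every weighted $L^2$ (by the decay of $\varphi_j$), and consequently $\Pi' = 1-\Pi$ is bounded on every weighted $L^2$. Since under Condition \ref{cond:smooth} also $H$ maps $H^2_s \to L^2_s$ boundedly for every $s$, the contribution $\w{x}^{\rho+1}\Pi' H R$ is bounded on $L^2_t$ for any $t$. For the main terms $\Pi' H \Pi_j$ the inclusion $\vF_j \subset \vF$ forces $\Pi'\Pi_j = 0$, and since $\Pi_j$ reduces $H_{a_j}$ we obtain $\Pi' H \Pi_j = \Pi' I_{a_j}\Pi_j$. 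Applying the same observation to multiplication by $I_{a_j}^{(1)}(x_j)$, which depends only on $x_j$ and thus maps $\ran \Pi_j$ back into $\vF_j$, yields $\Pi' I_{a_j}^{(1)}(x_j)\Pi_j = 0$, hence
\begin{equation*}
\Pi' H \Pi_j = \Pi'\bigl(I_{a_j} - I_{a_j}^{(1)}(x_j)\bigr)\Pi_j.
\end{equation*}
The parenthesis splits into the smooth difference $I_{a_j}^{(1)}(x)-I_{a_j}^{(1)}(x_j)$, handled by a zero'th order Taylor expansion as in \eqref{eq:22asym} and producing the decay $\vO(|x^{a_j}|\w{x_j}^{-\rho-1})$, and the singular part $I_{a_j}^{(2)}$, handled by the cut-off argument leading to \eqref{eq:22asymBa}. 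Together with the polynomial/exponential decay of $\varphi_j$ both contributions belong to $\vL(L^2_t, L^2_{t+\rho+1})$, and the symmetric statement for $\Pi H \Pi'$ follows by duality.

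The principal technical obstacle is the singular part $I_{a_j}^{(2)}$: to extract the full weight gain $\w{x}^{\rho+1}$ one must carefully combine the compact support of each $V_b^{(2)}$ for $b\not\subset a_j$ with the exponential decay of $\varphi_j$ on the remote set $\{|\pi^b x^{a_j}| > \tfrac12|\pi^b x_j|\}$. The required cut-off arithmetic is essentially that of the non-multiple case, but now one must track the two decompositions $a_1$ and $a_2$ simultaneously and verify that the scalar subtraction $I_{a_j}^{(1)}(x_j)$ continues to annihilate the composition $\Pi'(\cdot)\Pi_j$ even though $\Pi'$ here projects off the larger subspace $\vF_1\oplus\vF_2$ rather than $\vF_j$ alone.
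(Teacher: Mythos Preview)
Your proof is correct and for part 1) essentially identical to the paper's (which just says ``mimic the proof of Lemma~\ref{Lemma:basic}'', i.e.\ use \eqref{eq:23}, \eqref{eq:25} and Lemma~\ref{Lemma:basic2A}). For part 2) there is a small organizational difference worth noting: the paper first replaces $\Pi'$ by $\Pi'_j-\Pi_i$ (modulo the nice remainder $R$ from part 1)), obtaining the two separate bounds \eqref{eq:2proj}, namely that the cross term $\Pi_i I_j\Pi_j$ is polynomially decreasing in \emph{both} weights while the diagonal term $\Pi'_j I_j\Pi_j$ gains exactly $\langle x\rangle^{-\rho-1}$. You instead use $\Pi'\Pi_j=0$ directly to subtract the scalar $I^{(1)}_{a_j}(x_j)$ and then bound $\Pi'\bigl(I_{a_j}-I^{(1)}_{a_j}(x_j)\bigr)\Pi_j$ in one stroke. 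Both routes reduce to the same Taylor expansion/cut-off analysis from \eqref{eq:22asym}--\eqref{eq:22asymBa} and \eqref{eq:90}; the paper's decomposition has the minor advantage that the sharper cross-term estimate in \eqref{eq:2proj} is recorded separately and reused later (e.g.\ in the proof of Lemma~\ref{lemma2.1} and in \eqref{eq:procP}).
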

\begin{proof}
  For \ref{item:1mq} we mimic the proof of Lemma \ref{Lemma:basic}.
  For \ref{item:3mq} it suffices (seen by using  \ref{item:1mq}) to show that
\begin{equation*}
   \w{x}^{\rho+1} \parb{\Pi'_j-\Pi_i}I_{j}\Pi_j,\, \w{x}^{\rho+1} \Pi_j I_{j}\parb{\Pi'_j-\Pi_i}\in \vL
  \big (L^{2}_{t}(\bX)\big );\, i\neq j.
\end{equation*} Next we implement  \eqref{eq:22} of Condition
\ref{cond:geom_singl}.  We obtain, cf. Lemma \ref{Lemma:basic2A} and \eqref{eq:90},
  \begin{align}\label{eq:2proj}
    \begin{split}
     \forall r,t&\in \R:\quad\Pi_iI_{j}\Pi_j,\,\Pi_j I_{j}\Pi_i\in \vL
  \big ( L^{2}_{r}(\bX), L^{2}_{t}(\bX)\big )\, (i\neq j),\\
 &\w{x}^{\rho+1} \Pi'_jI_{j}\Pi_j,\, \w{x}^{\rho+1} \Pi_j I_{j}\Pi'_j\in \vL
  \big (L^{2}_{t}(\bX)\big ).
    \end{split}
  \end{align}
\end{proof}

\subsection{ $\vF_1\cap\vF_2=\{0\}$;  the case $\lambda_0=\Sigma_2$   and
  $\lambda_0\notin\sigma_{\pp}(H')$}\label{subsec:The case bottom}
 The case  $\lambda_0=\Sigma_2$ is simpler and there are better mapping properties of
 various operators compared to the case
 $\lambda_0>\Sigma_2$. Consequently we pay special attention to the
 former case.  We shall show the following  extension of \cite[Lemma 2.1]{Wa2}.
\begin{lemma}
\label{lemma2.1}
Suppose Condition \ref{cond:uniqdd} with  $\lambda_0=\Sigma_2$ and
\eqref{ass2}.  Then there exists $\epsilon_0 >0$ such that the essential spectrum of
$H'=\Pi' H\Pi' $ satisfies
\begin{equation}
  \label{eq:29}
  \sigma_\mathrm {ess }(H') \subset
 [ \lambda_0 + \epsilon_0, \infty ).
\end{equation}
\end{lemma}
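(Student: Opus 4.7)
The plan is to argue by contradiction, combining a Weyl sequence for $H'$ with an HVZ-style geometric decomposition. Suppose the conclusion fails; then there exist $\lambda \leq \lambda_0$ in $\sigma_{\ess}(H')$ and a singular sequence $(u_n)\subset \Pi'\vD(H)$ with $\|u_n\|=1$, $u_n\rightharpoonup 0$ in $\vG$, and $(H'-\lambda)u_n \to 0$. Since $u_n \in \ran \Pi'$, one has $Hu_n = H'u_n + \Pi H \Pi' u_n$, so the first stage is to upgrade this to $(H-\lambda)u_n \to 0$ in $\vG$, i.e.\ to show $\Pi H\Pi' u_n \to 0$ strongly. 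Boundedness of $\Pi H \Pi'$ on $\vG$ from Lemma~\ref{Lemma:basic}~\ref{item:4m} combined with $(H'-\lambda)u_n \to 0$ gives boundedness of $(Hu_n)$ and hence of $(u_n)$ in $H^2$. Using the $\w{x}^{\rho+1}$-decay of $\Pi H\Pi'$ from Lemma~\ref{Lemma:basic2}~\ref{item:3mq} together with the $H^2$-smoothing of $\Pi$ from Lemma~\ref{Lemma:basic2}~\ref{item:1mq}, a cutoff splitting $\chi_R + (1-\chi_R)$ combined with local Rellich compactness yields $\Pi H\Pi' u_n \to 0$ strongly. Thus $(u_n)$ is a Weyl sequence for $H$ at $\lambda$, and since $\lambda_0 = \Sigma_2 = \inf\sigma_{\ess}(H)$ necessarily $\lambda = \lambda_0$.

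The second stage is the HVZ localization ruling out such a Weyl sequence inside $\ran \Pi'$. Take a quadratic partition of unity $\{J_b^2\}$ indexed by the $2$-cluster decompositions $b$, each $J_b$ supported at infinity in a cone $|x^b|\leq c|x_b|$, so that $I_b \to 0$ on $\supp J_b$. Standard IMS commutator estimates combined with the first stage give
\begin{equation*}
\inp{u_n, H u_n} = \sum_{\#b=2}\inp{J_b u_n, H_b J_b u_n} + o(1) \to \lambda_0.
\end{equation*}
For $b \notin \{a_1,a_2\}$ with $\#b=2$, Condition~\ref{cond:uniqdd} together with $\lambda_0 = \Sigma_2$ forces $\inf \sigma(H^b) \geq \lambda_0 + \delta_b$ with $\delta_b>0$: indeed $\inf \sigma(H^b)\geq \Sigma_2$, and equality would make $\lambda_0$ an eigenvalue of $H^b$ (since $H^b$ is a tensor sum of cluster Hamiltonians whose ground states are eigenfunctions), violating the uniqueness clause. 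Hence $\inp{J_b u_n, H_b J_b u_n} \geq (\lambda_0+\delta_b)\|J_b u_n\|^2$, and the energy balance forces $J_b u_n \to 0$ strongly for each such $b$.

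The main obstacle is the case $b=a_j$, $j=1,2$, where I must exploit $u_n \in \ran \Pi'$ to remove the channel $\vF_j$ of $H_{a_j}$ at energy $\lambda_0$. The key point is that $\Pi \approx \Pi_j = S_j S_j^*$ on $\supp J_{a_j}$: the cross term $\Pi_{3-j}$ carries the bound state $\varphi_{3-j}$ of the other cluster, and its overlap with functions localized in the $a_j$-cone is negligible thanks to exponential decay of $\varphi_{3-j}$ together with the geometric fact \eqref{eq:26} that $\bX_{a_j}\cap \bX^{a_{3-j}}=\{0\}$ (the same mechanism underpinning the compactness in Lemma~\ref{Lemma:basic2A}). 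Consequently $J_{a_j}u_n = \Pi'_j J_{a_j}u_n + o(1)$ in $\vG$. Since $H^{a_j}$ factorizes as a sum of two cluster Hamiltonians and $\lambda_0$ is its simple ground state eigenvalue, it is an isolated eigenvalue, so the restriction of $H^{a_j}$ to $\varphi_j^\perp \subset L^2(\bX^{a_j})$ has spectrum in $[\lambda_0 + \delta'_j, \infty)$ for some $\delta'_j > 0$. This yields $\inp{J_{a_j}u_n, H_{a_j}J_{a_j}u_n}\geq (\lambda_0+\delta'_j)\|J_{a_j}u_n\|^2 + o(1)$. Summing over all $b$ forces $\|u_n\|\to 0$, contradicting $\|u_n\|=1$; the claimed $\epsilon_0$ is the minimum of the $\delta_b$ and the $\delta'_j$.
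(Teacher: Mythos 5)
Your argument is correct in substance but follows a genuinely different route from the paper. The paper never produces a Weyl sequence: it introduces the auxiliary operators $\breve H=H'+p_1^2\Pi_1+p_2^2\Pi_2$ and $\widetilde H=H-\lambda_0(\Pi_1+\Pi_2)$ (see \eqref{eq:27}--\eqref{eq:28}), observes that $\vF$ reduces $\breve H$ so that $\sigma_{\ess}(H')\subset\sigma_{\ess}(\breve H)$, shows $\breve H-\widetilde H$ is $H$-compact, and then proves the operator inequality $\widetilde H\geq \lambda_0+\epsilon_0+\widetilde K$ with $\widetilde K$ compact by the same IMS localization you use, concluding with Weyl's theorem. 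The subtraction of $\lambda_0(\Pi_1+\Pi_2)$ is what lets the paper treat the channels $a_1,a_2$ without ever constraining states to $\ran\Pi'$; you instead keep the constraint and exploit it along the singular sequence, via $\Pi_j u_n=\Pi_j\Pi u_n=0$ plus commutator/decay estimates for $[\Pi_j,J_{a_j}]$ and for the cross term $\Pi_{3-j}J_{a_j}$ (your cone mechanism is the same one the paper uses to show $j_{a_j}\Pi_j j_{a_j}-\Pi_j$ is $H$-compact, cf. \eqref{eq:31} and \eqref{eq:34}). Both architectures rest on the same two inputs (spectral gap for $b\notin\{a_1,a_2\}$, and the gap of $H_{a_j}$ on $\ran\Pi_j'$ coming from simplicity and isolation of $\lambda_0$ in $\sigma(H^{a_j})$); the paper's version packages them into a single quadratic-form bound, which also delivers the uniform $\epsilon_0$ without the closedness argument your contradiction setup needs at the end.

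Two justification slips are worth fixing, though neither is fatal. First, your reason for $\inf\sigma(H^b)\geq\lambda_0+\delta_b$ when $\#b=2$, $b\notin\{a_1,a_2\}$ --- that equality "would make $\lambda_0$ an eigenvalue of $H^b$ since ground states of cluster Hamiltonians are eigenfunctions" --- is false in general: the bottom of a cluster Hamiltonian's spectrum need not be an eigenvalue. The correct dichotomy is: either $\lambda_0\in\sigma_{\pp}(H^b)$, contradicting the uniqueness clause of Condition \ref{cond:uniqdd}, or $\lambda_0=\inf\sigma_{\ess}(H^b)\geq\Sigma_3$ by HVZ, which is excluded in this framework ($\Sigma_2\leq\Sigma_3$ always, and $\lambda_0\neq\Sigma_3$ under \eqref{ass0}); this is exactly the inequality the paper records as \eqref{c2}. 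Second, there is no ``$H^2$-smoothing of $\Pi$'': Lemma \ref{Lemma:basic2}~\ref{item:1mq} smooths only $\Pi-SS^*$, and $\Pi_1,\Pi_2$ regularize only the internal variables, so $\chi_R\Pi$ is not compact and local Rellich applied to $(H+\i)u_n$ would not close the first stage. What does work --- and is what you should say --- is to put the weight on the right: $\Pi H\Pi'\w{x}^{\rho+1}$ is bounded on $L^2$ (the adjoint of the $\w{x}^{\rho+1}\Pi'H\Pi$ bound in Lemma \ref{Lemma:basic2}~\ref{item:3mq}), and then $\Pi H\Pi'u_n=\bigl(\Pi H\Pi'\w{x}^{\rho+1}\bigr)\w{x}^{-\rho-1}u_n\to0$ because $u_n$ is bounded in $H^2$ and weakly null, so it converges strongly in $L^2_{-\rho-1}$ by the compact embedding. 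This is precisely the relative compactness of $\Pi H\Pi'$ that the paper invokes when asserting that $K_1$ in \eqref{eq:28} is $H$-compact.
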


\begin{proof} Note
\begin{equation} \label{c2}
\lambda_0 =\Sigma_2< \inf
\sigma({H^b}), \text{ for }b\not \in \{a_1, a_2, a_{\max} \}.
\end{equation}
In particular we have
\begin{equation*}
  \lambda_0 = \min \sigma_{\d}(H^{a_j}).
\end{equation*}

\begin{subequations}

We introduce a
 auxiliary Hamiltonians (with inter-cluster  momenta $p_j=p_{a_j}$)
\begin{align}\label{eq:27}
  \begin{split}
 &\breve H=H'+ p_1^2\Pi_1+p_2^2\Pi_2,\\
\;&\widetilde
H=H-\lambda_0\parb{\Pi_1+\Pi_2};\\\,& \vD (\breve H)=\vD (\widetilde  H)=\vD ( H).
  \end{split}
\end{align} These  operators  differ by  $H$-compact  terms:
\begin{align}
  \label{eq:28}\breve H&=\widetilde
H -K_1+K_2;\\
K_1&=\Pi H\Pi'+\Pi' H\Pi+\Pi_1 I_{1}\Pi_1 +\Pi_2I_{2}\Pi_2,\nonumber \\
K_2&=\Pi_1 H\Pi_1+\Pi_2 H\Pi_2-\Pi H\Pi.\nonumber
\end{align}
 \end{subequations}
 Note that  the first and second terms of  $K_1$ are   $H$-compact due to Lemma
\ref{Lemma:basic2}. The relative compactness of the third and fourth
terms of  $K_1$ follows  from
rewriting $\Pi_jI_{j}\Pi_j=\varphi_j \otimes S_j^*I_jS_j\bra{\varphi_j}$
and then invoking the complete analogue of \eqref{eq:22asymB}. As for $K_2$ we can also use Lemma
\ref{Lemma:basic2} first by replacing the factors of $\Pi$ in the
third term  by $SS^*=\Pi_1+\Pi_2$ and then expanding into four
terms. We are left with considering the sum of cross terms
\begin{equation*}
 \Pi_1 H\Pi_2+\Pi_2 H\Pi_1.
\end{equation*} By writing
\begin{align}\label{eq:procP}
  \begin{split}
   \Pi_1 H\Pi_2&=\Pi_1 I_{1}\Pi_2+H_{a_1}\Pi_1\Pi_2,\\
\Pi_1\Pi_2&=\varphi_1\otimes s_{12}\langle \varphi_2|,
  \end{split}
\end{align} and  then using \eqref{eq:2proj} and Lemma
\ref{Lemma:basic2A},
we conclude that the term $\Pi_1 H\Pi_2$ is
$H$-compact. We
can argue similarly for the term $\Pi_2 H\Pi_1$ and then conclude that
  also $K_2$ is
$H$-compact.

Next, it follows from the very definition \eqref{eq:27} that $\vF$ reduces $\breve
H$. Whence $\sigma_\textrm {ess }(H') \subset \sigma_\textrm {ess }(\breve
H)$,
and consequently it suffices to show \eqref{eq:29} with    $H'$ replaced
by $\breve
H$
and hence in turn  with    $H'$  replaced
by $
\widetilde  H$ (by the compactness shown above).

Consider  a family of smooth non-negative  functions $ \{j_b| b\in \vA, \# b =2 \} $ on $\bX$
obeying that for some $c>0$:
\begin{subequations}
\begin{align}
\sum_{\# b =2 } j_b (x)^2 &= 1,   \label{eq:33}\\
\label{eq:34}|x^a|j_b (x)&\geq c |x|j_b (x) \text{ for }
a\not\subset b \mand |x|\geq 1,\\
|\partial^\alpha j_b(x) | &\le C_\alpha \w{x}^{-|\alpha|}\text{ for all } \alpha \in \N_0^{\dim \bX}.\label{eq:36}
\end{align}
\end{subequations}
We have, using this family of functions,
\begin{equation}
H =\sum_{\#a=2}j_a  H_a j_a+\sum_{\#a=2} I_{a} j_a^2
 - \sum_{\#a=2}|\nabla j_a|^2=\sum_{\#a=2}j_a  H_a j_a+ K,\label{eq:32}
\end{equation} where $K$ is $H$-compact.

For $a \neq a_j$,  $j=1,2$,  it follows from  \eqref{c2} that  $\inf \sigma( H_a)  = \inf \sigma (H^a)
\ge \lambda_0 + \epsilon_0'$ for some $\epsilon_0'>0$.
Therefore,
\begin{equation}
\label{es1}
\sum_{\#a=2, a\ne a_j, j = 1,2}j_a  H_a j_a
\ge
(\lambda_0 + \epsilon_0') \sum_{\#a=2, a\ne a_j, j=1,2} j_a ^2.
\end{equation}

Since $\lambda_0$ is the lowest eigenvalue of $H^{a_j}$,  $j=1,2$, there exists  $\epsilon_1>0$ such that
\[
\Pi_j'H_{a_j} \Pi_j' \ge  (\lambda_0 + \epsilon_1) \Pi'_j.
\]
Hence for $a = a_j$,  $j=1,2$,
we obtain (assuming in the last step that $\lambda_0 + \epsilon_1\leq 0$ and
using that $p_j^2\geq 0$)
\begin{align}
\label{es2}
j_{a_j}  H_{a_j} j_{a_j}
 & \ge  (\lambda_0 + \epsilon_1) j_{a_j}  \Pi'_j  j_{a_j} + j_{a_j}  H_{a_j}\Pi_j j_{a_j}  \nonumber \\
& \ge   (\lambda_0 + \epsilon_1) j_{a_j}^2   +\lambda_0j_{a_j}  \Pi_j j_{a_j}.
\end{align}
It is a  consequence of  \eqref{eq:33} and \eqref{eq:34}  that
the operator $\Pi_j j_{a_j}^2-  \Pi_j $ is $H$-compact. Using \eqref{eq:36} it follows by a Taylor
expansion that $[j_{a_j},  \Pi_j]\inp{x}$ is bounded (cf.
  \eqref{eq:hminus}). We conclude that also
\begin{equation}\label{eq:31}
   K'_j:=\lambda_0 \parb{j_{a_j}  \Pi_j j_{a_j}-  \Pi_j} \text { is }H\text{-compact}.
\end{equation}  We write \eqref{es2} as
\begin{equation}\label{eq:31z}
  j_{a_j}  H_{a_j} j_{a_j}  -\lambda_0\Pi_j \geq (\lambda_0 + \epsilon_1)
  j_{a_j}^2 + K'_j,\;j=1,2.
\end{equation}

Let $\epsilon_0 =\min\{ \epsilon_0', \epsilon_1\}$.
We then  deduce  from \eqref{eq:33}, \eqref{eq:32}, \eqref{es1} and \eqref{eq:31z}
that
\begin{equation}\label{eq:30}
 \widetilde H \ge  (\lambda_0 + \epsilon_0)    + \widetilde K,
\end{equation}
where $\widetilde K = K +  K'_1+  K'_2$.

Since  $\widetilde K$ is  $\widetilde H$-compact it follows from \eqref{eq:30} and Weyl's theorem   \cite[Theorem
XIII.14]{RS}  that
$$
\sigma_{\ess}(\widetilde H)
\subset [ \lambda_0+\epsilon_0, \infty).
$$
  \end{proof}

  \begin{remark}\label{remark:Sigma3vf_1c-case-lambd} By appropriately enlarging the projection $\Pi$ as
    to include the span of all threshold eigenstates corresponding
    to thresholds $\lambda^a\in [\Sigma_2,\Sigma_3-\epsilon_0)$ for
    any given $ \epsilon_0>0$, one can make sure that $ \sigma_{\ess}(H')
    \subset [\Sigma_3-\epsilon_0,\infty)$. Here $H'=\Pi'H\Pi'$ as
    before, and the proof is essentially the same as the one of Lemma
    \ref{lemma2.1}. This trick of `subtracting' all low-energy
    $2$-cluster channels is limited to energies below $\Sigma_3$. We
    prefer for simplicity of presentation to treat two-cluster
    thresholds $\lambda_0> \Sigma_2$ on an equal footing not
    distinguishing between the cases $\lambda_0\in (\Sigma_2,\Sigma_3)$
    and $\lambda_0>\Sigma_3$. In fact our results would be the same
    anyway for the two cases.
\end{remark}

Assume from this point that
\begin{equation}
  \label{eq:20}
  \lambda_0\text{ is not an eigenvalue of }H'.
\end{equation}
By Lemma \ref{lemma2.1}  we can then chose $\delta >0$
such that $\sigma(H') \cap \{ |z-\lambda_0| \leq \delta\} = \emptyset$.
Due to  Proposition \ref{prop:reduc_form_abstract}
\begin{equation} \label{rep3ip}
R(z)= E(z)-  E_{+}(z)  E_{\vH}(z)^{-1}  E_{-}(z),
\end{equation} where the quantities \eqref{eq:5pm}--\eqref{eq:8pm} in
this case are analytic in $\{ |z-\lambda_0| <\delta\} $.

Let us analyse the form of $ E_{\vH}(z) $. For $f \in \vF$, $f
=\varphi_1 \otimes f_1 + \varphi_2 \otimes f_2$ and we can write
\[
H f = \lambda_0 f + ( p_1^2 + I_1) (\varphi_1 \otimes f_1) + ( p_2^2 + I_2)(\varphi_2 \otimes f_2),
\]
recalling  $p_j = p_{a_j}$ and $I_j = I_{a_j}$, $j=1,2$.
Therefore, for $i,j,k\in \{1,2\}$
\begin{subequations}
\begin{align}
S_k^* (H-z) S_k & =   p_k^2 +\lambda_0-z+ W_k(x_k),\\
S_i^* (H-z)  S_j &=  \check s_{ij}(z)+ W_{ij},\\
S_i^*  H R'(z)H S_j & =  -K_{ij}(z),
\end{align}
\end{subequations}
where
\begin{subequations}
\begin{align} \check s_{ij}(z)&=  \inp{\varphi_i, \varphi_j \otimes (p_j^2+\lambda_0-z)\cdot}_i,\\
W_{ij}&= \inp{\varphi_i,I_j\varphi_j\otimes \cdot}_i,  \\
W_k (\cdot) =W_{kk}&=  \inp{\varphi_k,I_k\varphi_k}_k(\cdot),\\
K_{ij}(z)&=  -\inp{\varphi_i,I_i R'(z) I_j (\varphi_j \otimes \cdot)}_i.
\end{align}
\end{subequations}

 Introducing $\vH^2=H^2(\bX_1) \oplus H^2(\bX_2)$ this yields  the
 following expression for the operator $E_{\vH}(z) : \vH^2\to   \vH$.
\begin{align}\label{eq:13}
E_{\vH}(z)  ={}  &z- \lambda_0 \nonumber\\&-  \left( \begin{array}{cc}
                   - \Delta_{x_1} + W_1(x_1) + K_{11}(z) &  \check  s_{12}(z)+W_{12}+K_{12}(z) \\
                    \check  s_{21}(z)+ W_{21}+K_{21}(z) &   - \Delta_{x_2} + W_2(x_2) + K_{22}(z)
                \end{array}\right).
\end{align}

Clearly $ \check s_{ij}(z)=s_{ij}(\lambda_0-z+p_j^2)$ for
$i\neq j$, and therefore   Lemma \ref{Lemma:basic2A} yields that $ \check s_{ij}(z)$ for
$i\neq j$ is bounded
and in fact polynomially
decreasing (uniformly  in $z$ near $\lambda_0$).  Here and henceforth an operator $b_{ij}: L^2(\bX_j) \to L^2(\bX_i)$ is
said to be  \emph{polynomially decreasing}  if for all $r,t\in \R $
\begin{equation*}
b_{ij} \in\vC\parb{ H^{2}_{r}(\bX_j), L^{2}_{t}(\bX_i)}\cap \vC\parb{ L^{2}_{r}(\bX_j), H^{-2}_{t}(\bX_i)}.
\end{equation*} We note  that also the operators  $W_{12}$ and  $W_{21}$  are polynomially
decreasing. An operator $B$ on $\vH$ is
said to be  polynomially decreasing if its  entries $b_{ij}: L^2(\bX_j) \to L^2(\bX_i)$ are polynomially
decreasing.

We claim that  also $K_{12}(z)$ and $K_{21}(z)$ are polynomially
decreasing, in fact uniformly  in $z$ near $\lambda_0$. To see this it suffices (by symmetry) to  consider
$K_{12}(z)$, and it suffices to show that $K_{12}(z)\in\vL\parb{
  L^{2}_{-s}(\bX_2), L^{2}_{s}(\bX_1)}$ for any $s\geq 0$.  So let us
fix $s\geq 0$. A small consideration using the argument for
\eqref{eq:assump_singa} and the polynomial decay of the cluster
 bound states shows,  that it suffices to check that
\begin{align}
  \label{eq:enough} \inp{ x^{1}}^{-2s}  \inp{ x}^sR'(z)\inp{ x}^s\inp{
   x^2}^{-2s} \text{ is
   uniformly bounded near }\lambda_0.
\end{align}
Recall that $\{z| |z-\lambda_0| \leq \delta\}$ is included in the resolvent
set of $H'$  for a small $\delta>0$,  cf.   \eqref{eq:20}.  We also
record  the
following
elementary estimates,
\begin{subequations}
  \begin{align}\label{eq:elementEst}
  \inp{x+y}^t&\leq 2^{|t|/2}\inp{x}^{|t|}\inp{y}^t, \\
\inp{x}&\leq C(\inp{x^1}+\inp{x^2})\leq 2C\inp{x^1}\inp{x^2}. \label{eq:elementEst2}
\end{align}
\end{subequations}
In turn, by interpolation and by using \eqref{eq:elementEst2}, the
assertion  \eqref{eq:enough} is a consequence of
\begin{align}\label{eq:91}
 \inp{ x^{2}}^t  R'(z)\inp{
   x^2}^{-t} \mand  \inp{ x^1}^{-t} R'(z)\inp{ x^{1}}^t
 \text{ are
   uniformly bounded};\quad t=2s.
\end{align} We bound  the first expression only.
 Representing, using notation of the proof of Lemma \ref{lemma2.1},
\begin{align}\label{eq:92}
  R'(z)=\breve R(z)\Pi',\,\,\breve R(z)=(\breve H-z)^{-1},
\end{align} we first
 bound $\inp{ x^{2}}^t  \breve R(z)\inp{
   x^2}^{-t}$.
 Whence we want  to bound $\inp{ \kappa x^{2}}^t  \breve R(z) \inp{
   \kappa x^2}^{-t}$  with
$\kappa=1$,
 for which it suffices to   bound this quantity for any small
$\kappa>0$. We show such bound uniformly  in $z$ near
$\lambda_0$. As in the proof of Lemma \ref{lemma2.1}
\begin{align}\label{eq:93}
  \breve H=H-\lambda_0(\Pi_1+\Pi_2)-K_1+K_2.
\end{align} Letting $\breve H_{\kappa,t} =\inp{\kappa x^{2}}^t  \breve H \inp{
   \kappa x^2}^{-t}$ we obtain from \eqref{eq:93} that
\[
\breve H_{\kappa,t} = \breve H + \vO(\kappa) (\breve H-\i).
\]  For example, seen by using  Taylor
expansion and \eqref{eq:elementEst},
\begin{align}\label{eq:comkap}
  \begin{split}
 \inp{\kappa x^{2}}^t \Pi_1 \inp{\kappa x^2}^{-t}- \Pi_1
  &=[\inp{\kappa \pi^2x}^t -\inp{\kappa \pi^2x_{1}}^t,\Pi_1
  ]\inp{\kappa x^2}^{-t}\\&=\kappa \vO\parb{\inp{\kappa\pi^2
      x_1}^{t-1}}\inp{\kappa x_1}^{-t}\\&=\vO(\kappa).
  \end{split}
\end{align} We can treat the terms $K_1$ and $K_2$ similarly.
Therefore
\begin{align*}
  \parb{\breve H_{\kappa,t}-z}\breve R(z)=1 + \vO(\kappa) (\breve H-\i)\breve R(z)
\end{align*}
  is invertible for
$|z- \lambda_0| \leq \delta$ and for small $\kappa>0$. This shows that $
(\breve H_{\kappa,t} -z)^{-1}$ is uniformly bounded accomplishing our
first goal. The second goal, cf. \eqref{eq:92}, is to
bound $\inp{ x^{2}}^t \Pi'\inp{ x^2}^{-t}$, but indeed $\inp{ x^{2}}^t
\Pi\inp{ x^2}^{-t}$  is bounded due to Lemma \ref{Lemma:basic2}
\ref{item:1mq} and \eqref{eq:comkap}.  Consequently we have justified
\eqref{eq:91}  and therefore shown that
$K_{12}(z) $ is polynomially
decreasing uniformly  near $\lambda_0$.

One can somewhat similarly show that
\begin{equation}\label{eq:35}
\w{x_j}^{\rho_1} K_{jj}(z) \w{x_j}^{\rho_2} \text{ is uniformly bounded for on }L^2(\bX_j) \text{ for }\rho_1 + \rho_2 \le 2
\rho+2.
\end{equation}
In fact we can use a  refinement of \eqref{eq:assump_singa} related to
\eqref{eq:90} and Lemma \ref{Lemma:basic2} \ref{item:3mq}. Note here  the
Taylor expansion
$I^{(1)}_j(x) =I^{(1)}_j(x_j)+\vO(\inp{x^j}^{\rho+2})\inp{x_j}^{-\rho-1}$,
cf. \eqref{eq:elementEst}, which in turn leads to the following bounds  for
all $s\in\R$ and for $\rho'\in\set{\rho_1,\rho_2}$,
\begin{align*}
  \Pi_j \w{x_j}^{\rho'}I_{j}\Pi',\quad\Pi'
  I_{j}\w{x_j}^{\rho'}\Pi_j\in
  \vL\parb{L^{2}_{s}(\bX),L^{2}_{\rho-\rho'+1+s}(\bX)}.
\end{align*}
 We deduce \eqref{eq:35} by combining these bounds with  the following consequence of
\eqref{eq:elementEst}
and \eqref{eq:91},
\begin{align}\label{eq:91BB}
 \inp{ x}^s  \breve R(z)\Pi'\inp{
   x}^{-s} \text{ is bounded for all }s\in \R \,\,(\text{uniformly near }\lambda_0).
\end{align}

Due to the above discussion and     \eqref{eq:22asymB} we finally
obtain a simplified version of \eqref{eq:13}:

\begin{prop}\label{prop2.3}  Suppose Condition \ref{cond:uniqdd} with
  $\lambda_0=\Sigma_2$,
\eqref{ass2} and \eqref{eq:20}. As a bounded  operator $ E_{\vH}(z) :
  \vH^2 \to  \vH $  one then has
\begin{equation}\label{eq:efbNd}
  E_{\vH}(z)  \equiv z- \lambda_0 -   \left( \begin{array}{c !{\kern-7mm} c}
      - \Delta_{x_1} + W_1(x_1) + K_{11}(z) & 0 \\[2mm]
      0 &   - \Delta_{x_2} + W_2(x_2) + K_{22}(z)
                \end{array}\right),
\end{equation}
where ``$\equiv$'' means the equality modulo a polynomially decreasing
term which  depends holomorphically on $z$ near $\lambda_0$. We have
\begin{subequations}
\begin{align}\label{eq:22asymBC}
W_j(x_j) -  I^{(1)}_{j}( x_j) &\in
\vL\parb{H^{2}_{s}(\bX_j),L^{2}_{\rho+1+s}(\bX_j)}\mfor s\in\R,\\
W_j(x_j) -  I^{(1)}_{j}( x_j) &\in
\vC\parb{H^{2}_{s}(\bX_j),L^{2}_{t}(\bX_j)}\mfor s\in\R,\,t<\rho+1+s,\label{eq:22asymBC2}\\
K_{jj}(z) &\in \vL\parb{L^{2}_{s}(\bX_j),L^{2}_{2\rho+2+s}(\bX_j)}\mfor s\in\R.\label{eq:second_bnd}
\end{align}
 \end{subequations} Moreover  the operator $K_{jj}(z)$ depends holomorphically on $z$,
 and the potential $W_j$  is $p_j^2$-compact with  the singularities of
 located in a bounded set.
\end{prop}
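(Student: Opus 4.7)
The plan is to read off the claims of the proposition directly from the expression \eqref{eq:13} for $E_{\vH}(z)$ by isolating the diagonal and off-diagonal contributions and appealing to the preparatory estimates established just above the statement. I would first argue that every off-diagonal entry in the matrix appearing in \eqref{eq:13} is polynomially decreasing, uniformly in $z$ near $\lambda_0$, and then analyse the surviving diagonal entries in detail.

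For the off-diagonals, I would note first that $\check{s}_{ij}(z) = s_{ij}(\lambda_0 - z + p_j^2)$ for $i\neq j$ is polynomially decreasing thanks to Lemma \ref{Lemma:basic2A} (the factor $p_j^2$ only shifts the Sobolev index, which is absorbed by the lemma allowing all $r,t\in\R$). The terms $W_{ij}$ with $i\neq j$ are polynomially decreasing by the same change-of-variables argument based on \eqref{eq:21}, \eqref{eq:26} and the polynomial decay of $\varphi_1,\varphi_2$. The delicate terms are $K_{12}(z)$ and $K_{21}(z)$: here I would invoke the reduction to \eqref{eq:enough} carried out in the text, which in turn follows from the uniform weighted bound \eqref{eq:91} for $R'(z)$. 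That bound is established through the identity $R'(z) = \breve R(z)\Pi'$ from \eqref{eq:92}, the decomposition \eqref{eq:93} of $\breve H$, and the commutator gain \eqref{eq:comkap} for conjugating by $\inp{\kappa x^j}^{\pm t}$.

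For the diagonal entries, the bound \eqref{eq:22asymBC} on $W_j - I_j^{(1)}(x_j)$ is the special case $i=j$ of \eqref{eq:22asymB}, and \eqref{eq:22asymBC2} then follows by combining this weighted boundedness with the compact Sobolev embedding. The bound \eqref{eq:second_bnd} is precisely \eqref{eq:35}, which one verifies by Taylor-expanding $I_j^{(1)}(x) = I_j^{(1)}(x_j) + \vO(\inp{x^j}^{\rho+2})\inp{x_j}^{-\rho-1}$ so that each of the two factors of $I_j$ in $K_{jj}(z) = -\inp{\varphi_j, I_j R'(z) I_j \varphi_j\otimes \cdot}_j$ yields a gain $\inp{x_j}^{-(\rho+1)}$; the resolvent sandwiched between the two factors is controlled by the weighted bound \eqref{eq:91BB}, and the compact part $I_j^{(2)}$ is handled by Lemma \ref{Lemma:basic2} \ref{item:3mq} together with Condition \ref{cond:geom_singl}.

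Finally, holomorphy of $K_{jj}(z)$ (and of the polynomially decreasing remainder in $z$) near $\lambda_0$ follows at once from holomorphy of $R'(z)$ on $\{|z-\lambda_0|\leq \delta\}$, a consequence of \eqref{eq:20} and Lemma \ref{lemma2.1}. The $p_j^2$-compactness of $W_j$ with its singularities confined to a bounded set is obtained by splitting $W_j = \inp{\varphi_j, I_j^{(1)}\varphi_j}_j + \inp{\varphi_j, I_j^{(2)}\varphi_j}_j$: the first summand is smooth by Condition \ref{cond:smooth}\ref{item:cond1}, while the second is compactly supported in $x_j$ (using polynomial decay of $\varphi_j$ and Condition \ref{cond:geom_singl} for $a=a_j$) and $p_j^2$-compact by Condition \ref{cond:smooth}\ref{item:cond12}. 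The main technical obstacle, essentially already discharged above the statement, is securing the uniform weighted resolvent bound \eqref{eq:91}; without it, neither the polynomially decreasing character of $K_{12}(z)$ nor the full order of decay in \eqref{eq:second_bnd} can be obtained.
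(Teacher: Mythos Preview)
Your proposal is correct and follows the paper's approach exactly: the paper itself simply writes ``Due to the above discussion and \eqref{eq:22asymB}'' before stating the proposition, and your outline recapitulates precisely that preceding discussion. One minor imprecision: the term $\inp{\varphi_j, I_j^{(2)}\varphi_j}_j$ is not literally compactly supported in $x_j$ but rather decays faster than any polynomial (cf.\ the argument for \eqref{eq:22asymBa}); its possible singularities---arising only in the case $\ran(\pi^b\pi^{a_j})=\{0\}$ of \eqref{eq:22}---are what is confined to a bounded set.
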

\begin{remark}\label{remark:formbnd} We shall prefer a version of
  Proposition \ref{prop2.3}  based on forms rather than
  operators, although this is not essential. Thus we  consider $
  E_{\vH}(z) $ as an operator $ E_{\vH}(z) :
  H^1(\bX_1) \oplus H^1(\bX_2) \to   H^{-1}(\bX_1) \oplus
  H^{-1}(\bX_2)$.
 Now an operator $b_{ij}: L^2(\bX_j) \to L^2(\bX_i)$ is
said to be  \emph{polynomially decreasing}  if for all $r,t\in \R $
\begin{equation*}
b_{ij} \in\vC\parb{ H^{1}_{r}(\bX_j), H^{-1}_{t}(\bX_i)}.
\end{equation*} Also we note that the expansion \eqref{eq:22asymBC}
 should be  given the interpretation  of being in the space
$\vL\parb{H^{1}_{s}(\bX_j),H^{-1}_{\rho+1+s}(\bX_j)}$,  $s\in\R$, rather than in the
stated space (and similarly for \eqref{eq:22asymBC2}).  Note for comparison that
$I_{j}^{(1)}(x_j)\in\vL\parb{L^2_{s}(\bX_j),L^2_{\rho+s}(\bX_j)}$,
$s\in\R$. With these modifications \eqref{eq:efbNd} still holds.
 Moreover it  is easy to show by  a resolvent equation that the null space  $\ker
E_{\vH}(\lambda_0) $ is independent on whether the operator or the
form interpretation of $E_{\vH}(\lambda_0) $ is used.
  \end{remark}

\section{The case  $\lambda_0\in\sigma_{\pp}(H')$}\label{sec:The
  case where  lambda0insigma}
We discuss briefly the modifications needed in the previous two
sections to treat  the case $\lambda_0\in\sigma_{\pp}(H')$.

\subsection{ $\lambda_0\in\sigma_{\pp}(H')$; non-multiple  case}\label{subsec:The case
  lambda0insigma, non-degenerate case} For simplicity we assume
$m=1$ in the setting of Section \ref{Reduction near a simple
  two-cluster threshold}. The corresponding eigenfunction is denoted
by $\varphi$ (rather than  $\varphi_1$), and we shall use the notation $\langle \varphi|\cdot\rangle_0=\langle \varphi,\cdot\rangle_0=\langle \varphi,\cdot\rangle_{L^2(\bX^{a_0})}$. We follow \cite{Wa2} assuming for simplicity
that $\lambda_0$ is a simple eigenvalue of $H'$ and   introduce a
corresponding normalized eigenfunction $\psi$,
$(H'-\lambda_0)\psi=0$. Let $I_0=I_{a_0}$.

Now
$\vH=L^2(\bX_{a_0})\oplus \C$, and $S:\vH\to \vG=L^2(\bX)$ is given by
\begin{align*}
  S(f,c)=\varphi\otimes f+c \psi.
\end{align*} We let the orthogonal
projection $SS^*$ onto the range of this  new $S$  be  denoted by
$1-\Pi''$ (for the  $S$ in Section \ref{Reduction near a simple
  two-cluster threshold} the projection  is denoted
by  $1-\Pi'$), and we introduce
$H''=\Pi''H\Pi''$ on the Hilbert space $\Pi''\vG$. By construction
$\lambda_0$ is not an eigenvalue of $H''$. Using
\eqref{eq:5pm}--\eqref{eq:8pm} as before we obtain \eqref{rep1} now with
\begin{align*}
  -E_{\vH}(z) &=  -S^* \big (z -  H +  H R''(z)H \big ) S\\
 &= \left( \begin{array}{cc}
 \parbb{\lambda_0-z+p^2_{a_0}+ \langle \varphi|{ I_{0}}|\varphi\otimes
 \cdot\rangle_0 -
\langle \varphi| I_{0}R''(z)I_{0} |\varphi\otimes
 \cdot\rangle_0} &  \langle \varphi, I_{0}\psi\rangle_0 \\[.1in]
 \langle I_{0}\psi,\varphi\otimes
 \cdot\rangle  & \lambda_0-z
\end{array} \right).
\end{align*}
Let
\begin{subequations}
\begin{align}
  \label{eq:38pt}
  \begin{split}
 \widetilde H&=H-\lambda_0\Pi,\quad \Pi=\varphi\otimes\langle \varphi|\cdot\rangle_0,\\
 \breve H&=\widetilde{H}-\breve K;\\
&\quad \breve K=\Pi I_{0}+
  I_{0}\Pi -\Pi I_{0}\Pi+\lambda_0|\psi\rangle\langle \psi|.
  \end{split}
\end{align} Note that $\breve K$ is  $H$-compact. This
construction is motivated by the following direct sum decomposition.
\begin{equation}
  \label{eq:73t}
  \breve H= H''+p^2_{a_0}\Pi;\;H''=\Pi'' H\Pi''.
\end{equation}
 The basic structure of  resolvents is  given as follows.
  \begin{align}\label{eq:basresbreve0}
  \begin{split}
  R''(z)&= \breve R(z)-(p_{a_0}^2-z)^{-1}\Pi-z^{-1}|\psi\rangle\langle \psi|,\\
 R''(z)&=\Pi''\breve R(z)\Pi''=\Pi''\breve R(z)=\breve R(z)\Pi'';
\quad\breve R(z)=(\breve H-z)^{-1}.
  \end{split}
\end{align}

We shall prove that $\psi\in H^2_\infty(\bX)$ (see Theorem
\ref{thm:priori-decay-b_0}), which in turn implies `good' properties
of $\breve R(z)$ and $\Pi''$ near $\lambda_0$ (see Remark \ref{remark:The case
  lambda0insigma}).

\end{subequations}
\subsection{ $\lambda_0\in\sigma_{\pp}(H')$; multiple case}\label{subsec:The case
  lambda0insigma, degenerate case}
We adapt the setting of Section \ref{sec:Reduction near a multiple
  two-cluster threshold} in the case \eqref{eq:20}  of Subsection
\ref{subsec:The case bottom} is not
fulfilled and without imposing the condition
$\lambda_0=\Sigma_2$ as in that subsection. In particular \eqref{ass2} is  satisfied.

Assume  for simplicity
that $\lambda_0$ is a simple eigenvalue of $H'$. We introduce a
corresponding normalized eigenfunction $(H'-\lambda_0)\psi=0$. Now
$\vH=L^2(\bX_{1})\oplus L^2(\bX_{2})\oplus \C$, and $S:\vH\to \vG=L^2(\bX)$ is given by
\begin{align*}
  S(f_1,f_2,c)=\varphi_1\otimes f_1+\varphi_2\otimes f_2+c \psi.
\end{align*}

We introduce $\Pi''$  in terms of this $S$ as in
the previous subsection and let again   $H''=\Pi''H\Pi''$. Note that
$\Pi''=1-\Pi-|\psi\rangle\langle \psi|=\Pi'-|\psi\rangle\langle \psi|$
where $\Pi$ is given as in Section \ref{sec:Reduction near a multiple
  two-cluster threshold}. We   obtain \eqref{rep1} now with
\begin{align*}
  E_{\vH}(z) =  S^* \big (z -  H +  H R''(z)H \big ) S),
\end{align*} and this operator has a similar representation as in the previous
subsection, now by a $3\times 3$-block  representation
$(e_{ij})_{i,j\leq 3}$ rather than a
$2\times 2$-block representation as given there. Here
$(e_{ij})_{i,j\leq 2}$ is given as in Proposition \ref{prop2.3} (with
$R'(z)$ replaced by $R''(z)$), $e_{33}=z-\lambda_0$, $e_{i3}=-\langle
\varphi_i, I_{i}\psi\rangle_i $ and $e_{3i}=e_{i3}^*=\langle e_{3i}|$; $i=1,2$.
\begin{subequations} The analogue of \eqref{eq:27}  reads
\begin{align}\label{eq:27p}
  \begin{split}
 &\breve H=H''+ p_1^2\Pi_1+p_2^2\Pi_2,\\
\;&\widetilde
H=H-\lambda_0\parb{\Pi_1+\Pi_2};\\\,& \vD (\breve H)=\vD (\widetilde  H)=\vD ( H).
  \end{split}
\end{align} These  operators  differ by  $H$-compact  terms, cf.   \eqref{eq:28}:
\begin{align}
  \label{eq:28p}\breve H&=\widetilde
H -K_1+K_2;\\
K_1&=\Pi H\Pi'+\Pi' H\Pi+\Pi_1 I_{1}\Pi_1 +\Pi_2I_{2}\Pi_2+\lambda_0|\psi\rangle\langle \psi|,\nonumber \\
K_2&=\Pi_1 H\Pi_1+\Pi_2 H\Pi_2-\Pi H\Pi.\nonumber
\end{align}

 The basic structure of  resolvents is  given as follows.
\begin{align}\label{eq:basresbreveo}
  \begin{split}
  R''(z)&= \breve R(z)-(p_1^2\Pi_1+p_2^2\Pi_2-z)^{-1}\Pi-z^{-1}|\psi\rangle\langle \psi|,\\
 R''(z)&=\Pi''\breve R(z)\Pi''=\Pi''\breve R(z)=\breve R(z)\Pi'';
\quad\breve R(z)=(\breve H-z)^{-1}.
  \end{split}
\end{align}
\end{subequations}

Again we have good decay properties of $\psi$, cf. Theorem
\ref{thm:priori-decay-b_0} and Remark
\ref{remark:microlocal-boundsGEN}, which in turn  yields good
properties of $\breve R(z)$ near $\lambda_0$, cf. Remark \ref{remark:The case
  lambda0insigma}.

\section{Multiple two-cluster case, \,$\vF_1\cap\vF_2\neq\{0\}$}\label{sec:The case when the condition {ass2}}

It may happen that the condition \eqref{ass2} is not satisfied. We
start out by  presenting an
example (the only one we know). It is  given by  an atomic type $2$-body Schr\"odinger
operator with a third particle of infinite mass fixed at the origin
(fitting into the framework of  $N$-body case of Section \ref{$N$-body Schr\"odinger operators with infinite mass
  nuclei}):
\begin{equation}
H = \sum_{j=1}^2 ( - \Delta_{x^j} + V_j(x^j)) + V_{12}(x^1-x^2),
\end{equation}
where $x^j \in \R^n$.  Here $x=(x^1, x^2)$ is used as global coordinates on $\R^{2n}$.
Let $\lambda_0$ be the lowest threshold of $H$. Assume that this threshold is double and is attained by the lowest eigenvalue of $H_j$
\[
H_j =   - \Delta_{x^j} + V_j(x^j), \quad j =1,2.
\]
In this case, the condition \eqref{ass2} is not satisfied. In fact, let $\varphi_j(x^j)$ be the eigenfunction of $H_j$ associated with $\lambda_0$.
Then   clearly one has
\[ \vF_1 \cap \vF_2 = \mbox{span} \{\varphi_{1}(x^1)\varphi_{2}(x^2) \}.
\]

\fancybreak{}

\subsection{$\vF_1\cap\vF_2\neq\{0\}$; a general
  approach}\label{sec:The case when the condition {ass2}, second
  approach}
We discuss a   method which is easy to generalize to the case of
an arbitrary multiplicity of the two-cluster threshold $\lambda_0$.

We define
\begin{align}
  \label{eq:Hmod}
  \vH=\set[\big]{f\in L^2(\bX_1)\oplus  L^2(\bX_2)\mid  f\perp \ker {S^*S}},
\end{align} where the components of this $S=(S_i)$ are given by
\eqref{eq:19}. Due to Lemma \ref{Lemma:basic2A} the space
$\ker {S^*S}$ is finite dimensional consisting of vectors with
components in $H^2_\infty(\bX_j)$. By assumption
$\dim(\ker {S^*S})\geq1 $. Clearly $S:\vH\to
\vF=\vF_1+\vF_2\subset \vG=L^2(\bX)$ is a continuous
isomorphism. Arguing by contradiction as in the proof of Proposition
\ref{prop2.2} we deduce that in fact $S:\vH\to \vF$ is
bi-continuous. In particular $\vF$ is closed in $\vG$, and we let
correspondingly $\Pi$, $\Pi'$, $H'$ and $R'(z)$ be given (as in
Section \ref{An abstract reduction scheme}). If we consider $S$ as a
map from the bigger space $ L^2(\bX_1)\oplus L^2(\bX_2)$ (as in
\eqref{eq:Hmod}) the notation $S^*g$, $g\in \vG$, may seem
ambiguous. However this is in fact not the case since then $S^*g\perp
\ker {S^*S}$, so $S^*g\in \vH$. For this reason the conclusion of
Lemmas \ref{Lemma:basic} and \ref{Lemma:basic2} are still valid and
the formula \eqref{eq:13} applies again.

As before we would like to use \eqref{eq:13} in a neighbourhood of
$\lambda_0$. Let us here assume that $\lambda_0=\Sigma_2$  and
\eqref{eq:20}. Then of course we  can
let $z=\lambda_0$ in \eqref{eq:13} and obtain formulas as in
Proposition \ref{prop2.3} and Remark \ref{remark:formbnd}. We  consider
correspondingly
the operator $E_{\vH}(\lambda_0)$ either as an operator mapping
$ \parb{H^2(\bX_1)\oplus  H^2(\bX_2)}\cap \vH\to \vH$ or as an operator  mapping
\begin{align*}
  \parb{H^1(\bX_1)\oplus  H^1(\bX_2)}\cap \vH\to \set{f\in
  H^{-1}(\bX_1)\oplus  H^{-1}(\bX_2)\mid  f\perp \ker {S^*S}}.
\end{align*}

If
 $\lambda_0\in\sigma_{\pp}(H')$ we need to modify the
construction of $S$ and $E_{\vH}(\lambda_0)$. This is doable along the lines of Subsection
\ref{subsec:The case lambda0insigma, degenerate case}.
 Finally letting $P_{0}$ denote the orthogonal projection onto
$\ker S^*S$ in $L^2(\bX_1)\oplus L^2(\bX_2)$ it is convenient
to study $E_{\vH}(\lambda_0)+P_0$ on $L^2(\bX_1)\oplus L^2(\bX_2)$
(rather than $E_{\vH}(\lambda_0)$ on $\vH$). This is because we have
a `good
parametrix'  of $\diag(h_1,h_2)$ on
$L^2(\bX_1)\oplus L^2(\bX_2)$.

In  Section \ref{sec:CoulRellich} we
shall see what the above  ideas amount  to in
 the  setting of the  models of physics introduced in Chapter
 \ref{Notation}. This will be a general treatment not assuming $\lambda_0=\Sigma_2$.


 \chapter{Spectral analysis of $H'$ near $\lambda_0$}\label{Spectral analysis of H' near E_0}

 In the bulk of this chapter we impose Conditions \ref{cond:smooth}
 and \ref{cond:uniq}. We shall prove various $N$-body resolvent
 estimates  for the operator $H'$
 appearing in the Grushin method (or more prescisely for the operator
 $\breve H$ defined below). The analysis overlaps \cite{AIIS}, in
 particular it is based on an appropriate Mourre estimate. Sharing the
 spirit of \cite{AIIS} our procedure avoids in any other sense the
 `classical Mourre theory' 
 \cite{Mo, Je, JMP, Wa1}. The multiple two-cluster case can be treated in
 a similar way,  although  it is notationally more complicated, see
 Remarks \ref{remark:microlocal-boundsGEN}--\ref{remark:excase}.

 Recall from Section \ref{Reduction near a simple two-cluster
   threshold} the notation $I_{0}=I_{a_0}$ and $p_{0}=p_{a_0}$. We introduce the following modifications of $H$ (note the
 similarity with \eqref{eq:27} and \eqref{eq:28}),
\begin{subequations}
\begin{align}
  \label{eq:38p}
  \begin{split}
 \widetilde H&=H-\lambda_0\Pi,\\
 \breve H&=\widetilde{H}-\breve K;\\
&\quad \breve K=\Pi I_{0}+
  I_{0}\Pi -\Pi I_{0}\Pi.
  \end{split}
\end{align} Note that $\breve K$ is  $H$-compact. This
construction is motivated by the following direct sum decomposition.
\begin{equation}
  \label{eq:73}
  \breve H= H'+p^2_{0}\Pi;\;H'=\Pi' H\Pi'.
\end{equation}
 The basic structure for resolvents is  given as follows.

\end{subequations}

\begin{subequations}
\begin{align}\label{eq:basresbreve00}
  \begin{split}
  R'(z)&= \breve R(z)-(p_{0}^2-z)^{-1}\Pi,\\
 R'(z)&=\Pi'\breve R(z)\Pi'=\Pi'\breve R(z)=\breve R(z)\Pi';
\quad\breve R(z)=(\breve H-z)^{-1}.
  \end{split}
\end{align}

Note that \eqref{eq:basresbreve00} yields `good estimates' of $R'(z)$
near $\lambda_0$ provided we can show `good estimates' of $\breve
R(z)$ near $\lambda_0$. The goal of this chapter is to prove the
latter, which more or less correspond to \eqref{eq:bRESSIM} and Corollary
\ref{cor:microlocal-bounds},  stated as follows: Suppose
$\lambda_0\notin \sigma_{\pp}(\breve H)$. Then there exists the strong
weak-star limits
\begin{align}\label{eq:bRES0}
  \swslim_{\epsilon\to 0_+} \breve R(\lambda_0\pm \i \epsilon)=\breve R(\lambda_0\pm \i 0)\in \vL(\vB_{1/2}(\bX),\vB_{1/2}^*(\bX)),
\end{align} and if $\breve R(\lambda_0+ \i 0)f=\breve R(\lambda_0- \i
0)f$ for  a given $f\in L^2_s$ for some  $s>1/2$, then
 \begin{align}\label{eq:2bnd0}
  \breve R(\lambda_0+ \i 0)f=\breve R(\lambda_0- \i 0)f\in  L^2_{s-1}.
\end{align}

\end{subequations}

We shall prove a Mourre estimate for $\breve H$ near $\lambda_0$ and
show that $\lambda_0$ cannot be an accumulating point of eigenvalues
of $\breve H$ (which are the same as those of $H'$). If $\lambda_0$ is
not an eigenvalue, the limiting absorption principle and some
microlocal estimates hold for the resolvent of $\breve H$ near
$\lambda_0$. If $\lambda_0$ is an eigenvalue of $\breve H$, the
associated eigenfunctions are polynomially decaying. We don't assume
that $\lambda_0<\Sigma_3$.

The case of a multiple two-cluster
threshold can be treated similarly and will not be discussed in
detail. See Remark \ref{remark:microlocal-boundsGEN} for a discussion
and see \eqref{eq:basresbreve}--\eqref{eq:2bnd} for results similar to
\eqref{eq:basresbreve00}--\eqref{eq:2bnd0}.

\section{Mourre estimate}\label{Mourre estimates  and multiple commutators}
  We shall use the vector
  field constructed by Graf \cite{Gr} and the associated family
  of conjugate operators, cf. \cite{Sk1}--\cite{Sk3} and \cite{IS1}. This vector field satisfies the following
  properties, cf. \cite[Lemma 4.3]{Sk3}.
  \begin{lemma}
    \label{lemma:vector field} There exist on $\bX$ a smooth vector
    field $\tilde\omega$ with symmetric derivative $\tilde\omega_*$ and a
    partition of unity $\{\tilde q_a\}$ indexed by $a\in \vA$ and
    consisting of smooth functions, $0\leq \tilde q_a\leq 1$, such
    that for some positive constants $r_1$ and $r_2$
    \begin{enumerate}[\normalfont (1)]
    \item \label{item:4}$\tilde\omega_*(x)\geq\sum _a\pi_a \tilde q_a.$
    \item \label{item:5}$\tilde\omega^a(x)=0\text{ if }|x^a|<r_1.$
    \item\label{item:6} $|x^b|>r_1\mon \supp(\tilde q_a)\mif b\not
      \subset a.$
    \item\label{item:7} $|x^a|<r_2\mon \supp(\tilde q_a).$
    \item\label{item:7i} For all $\alpha\in \N_0^{\dim \bX}$ and
      $k\in \N_0$ there exist $C\in \R$:
      \begin{equation*}
        |\partial ^\alpha_x\tilde q_a|+|\partial ^\alpha_x(x\cdot
        \nabla)^k\parb{\tilde\omega(x)-x}|\leq C.
      \end{equation*}
    \end{enumerate}
  \end{lemma}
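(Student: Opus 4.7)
The plan is to adapt the Graf vector field construction and realize $\tilde\omega$ as the gradient of a smooth convex function $g: \bX \to \R$ tuned so that its Hessian dominates $\sum_a \pi_a \tilde q_a$ pointwise. Convexity of $g$ automatically gives $\tilde\omega_* = \nabla^2 g \geq 0$ (the positivity in (1)), and the quantitative bound in (1) will be inherited from a cluster-adapted structure of $g$. For each $a \in \vA$ I would fix a smooth convex function $f_a: \bX \to [0,\infty)$ depending only on $x_a$, equal to $\tfrac{1}{2}|x_a|^2$ for $|x_a|$ small and asymptotic to $|x_a|$ for $|x_a|$ large, so that $\nabla^2 f_a = \pi_a$ on the quadratic regime. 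Then I would pick a scale family $\{\sigma_a\}_{a \in \vA}$ with $\sigma_a \gg \sigma_b$ whenever $a \subsetneq b$, form
\begin{equation*}
  g_0(x) = \max_{a \in \vA} \sigma_a^2 f_a(x/\sigma_a),
\end{equation*}
and smooth $g_0$ by inf-convolution with a narrow quadratic to obtain a $C^\infty$ convex function $g$ differing from $g_0$ by a uniformly bounded term with bounded derivatives. Set $\tilde\omega = \nabla g$.

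The partition of unity $\{\tilde q_a\}$ would be produced by mollifying the indicator functions of the open regions $\Omega_a$ on which $\sigma_a^2 f_a(x/\sigma_a)$ strictly dominates the other terms in the max, and then normalizing. Asymptotic linearity of each $f_a$ together with the smoothing forces $\tilde\omega(x)-x$ and the iterates $(x\cdot\nabla)^k(\tilde\omega(x) - x)$ to be bounded, yielding property (5). On $\supp \tilde q_a$ the scale hierarchy forces $|x^a| < r_2$ (property (4)), because for $f_a$ to dominate finer decompositions $|x^a|$ must be small on the scale $\sigma_a$. Symmetrically, for $b \not\subset a$, if $|x^b|$ were small then some finer $f_c$ with $c \subsetneq b$ would take over in the max, contradicting $\tilde q_a(x) \neq 0$; this gives $|x^b| > r_1$ (property (3)). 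Finally, if $r_1$ is fixed smaller than the quadratic regime of $f_a$, then on $\{|x^a| < r_1\} \cap \supp \tilde q_a$ the function $g$ depends only on $x_a$, so $\tilde\omega^a = \pi^a \nabla g = 0$ (property (2)).

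The main obstacle is property (1). On $\supp \tilde q_a$ one has $g(x) \approx \sigma_a^2 f_a(x/\sigma_a)$ so $\nabla^2 g$ is approximately $\pi_a$ within the quadratic regime, but one must control Hessian contributions from the mollification and from the nearby branches of the max. These errors are of order $\sigma_b^{-2}$ for $b \neq a$ or proportional to the mollifier scale, and can be absorbed by choosing the scale gaps $\sigma_a/\sigma_b$ and the mollifier fineness appropriately. Outside the quadratic regime of $f_a$ a small uniform modification $g \mapsto g + \tfrac{\epsilon}{2}|x|^2$ restores the inequality, at the cost of an $\vO(\epsilon)$ contribution in (5) which can still be absorbed into the stated bounds by rescaling. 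This delicate bookkeeping is exactly what is carried out in detail in \cite[Lemma 4.3]{Sk3}, building on \cite{Gr}, and I would invoke that construction rather than redo it from scratch.
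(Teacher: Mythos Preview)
The paper does not prove this lemma; it simply cites \cite[Lemma~4.3]{Sk3} and the Graf construction \cite{Gr}, which is exactly what you do in your last sentence. In that sense your proposal and the paper agree.

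However, the sketch you give before deferring to the reference contains a genuine error. You take each $f_a$ to be asymptotic to $|x_a|$ at infinity, so that $g_0(x)=\max_a \sigma_a^2 f_a(x/\sigma_a)$ grows only linearly in $|x|$. Then $\tilde\omega=\nabla g$ is bounded, and $\tilde\omega(x)-x\sim -x$ is unbounded, contradicting property~(5). Your proposed fix $g\mapsto g+\tfrac{\epsilon}{2}|x|^2$ with small $\epsilon$ does not help: it gives $\tilde\omega(x)-x\approx(\epsilon-1)x$, still unbounded. The actual Graf construction uses functions that are quadratic everywhere, of the form $g_0(x)=\max_a(|x_a|^2+c_a)$ with additive constants $c_a$ chosen hierarchically (not multiplicative scales), and after smoothing one has $g=r^2/2$ with $r^2-|x|^2$ bounded; the paper in fact records this immediately after the lemma. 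Since you ultimately invoke the correct reference, this is a defect of the heuristic sketch rather than a fatal gap, but the sketch as written would not produce a vector field satisfying~(5).
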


  For each $a\in \vA$ there is a similar vector field, denoted by
  $\tilde\omega^a$,  and from the
  construction of these vector fields  there is a relationship we are
  going to use (see \cite[Appendix
  A]{Sk2} for a proof for  the model  of Section \ref{First principal
    example}, see also \cite[Section 5]{Sk6}).

  For any $\delta>0$ there exists $\widetilde R =\widetilde R(\delta)>1$ such that for all
  $a\in \vA$
  \begin{align}
    \label{eq:37}
    \tilde\omega(x)&=x_a+\tilde\omega^a(x^a)\mforall x\in \bY,\\
    \bY=\bY_{a,\delta,\widetilde R}&:=\{x\in \bX \mid|x|>\widetilde
    R,\,|x^{b}|>\delta |x|\mif b\not \subset a\}.\nonumber
  \end{align} Considering  rescaled vector fields $\tilde\omega^a_R(x):=R\tilde\omega(x^a/R)$,
   $R\geq 1$, obviously a
  consequence of \eqref{eq:37} is the analogous result for the
  rescaled fields,
  \begin{equation}
    \label{eq:39}
    \tilde\omega_R(x)=x_a+\tilde\omega^a_R(x^a)\mforall x\in
    \bY_{a,\delta,R\widetilde R},\;R \geq 1.
  \end{equation}

  Now, proceeding as in \cite{Sk3}, we introduce
  \begin{equation*}
    A_R=\tfrac12 \parb{\tilde\omega_R(x)\cdot p+p\cdot \tilde\omega_R(x)},\;R\geq 1,
  \end{equation*} and a function $d:\R\to\R$  by
  \begin{equation}\label{eq:44}
    d(\lambda)=\begin{cases}\inf _{\tau\in \vT (\lambda)}(\lambda-\tau),\;\vT
      (\lambda):=\vT\cap \,]-\infty,\lambda]\neq \emptyset,\\
      1,\;\vT
      (\lambda)=\emptyset.
    \end{cases}
  \end{equation} These devices enter into the following Mourre estimate
  (we refer to \cite{PSS} for another $N$-body Mourre estimate). We refer to
  \cite[Corollary 4.5]{Sk3} noting  that  all  inputs needed for the proof are
  stated in Lemma \ref{lemma:vector field}.
  \begin{lemma}
    \label{lemma:Mourre1} For all $\lambda\in \R$ and $\epsilon>0$ the
    exists $R_0\geq 1$ such that for all $R\geq R_0$ there is a neighbourhood
    $\vV$ of $\lambda$ and a compact operator $K$ on $L^2$ such that
    \begin{equation}\label{eq:40}
      f(H)\i [H,A_R]f(H)\geq
      f(H)\{2d(\lambda)-\epsilon-K\}f(H)\mforall \text{ real }
      f\in C^\infty_\c(\vV).
    \end{equation}
  \end{lemma}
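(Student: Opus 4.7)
The plan is to prove the Mourre estimate by induction on the number of particles in subsystems, following the strategy of Perry--Sigal--Simon but adapted to the Graf vector-field construction, as in \cite[Corollary 4.5]{Sk3}. The inductive hypothesis is that an analogous Mourre estimate holds at every energy for each strict sub-Hamiltonian $H^a$ with respect to its own rescaled conjugate operator $A_R^a$ built from $\tilde\omega^a$. The base case is a one-particle system where the estimate reduces to $\i[p^2, A_R] \geq 2 p^2 + (\text{compact})$, consistent with $d(\lambda) = \lambda$ when $\vT(\lambda) = \emptyset$.

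First I would compute the commutator explicitly. A pseudodifferential calculation gives
\begin{equation*}
\i [H, A_R] = 2\, p \cdot \tilde\omega_{R,*}(x)\, p \,-\, \tilde\omega_R(x)\cdot (\nabla V)(x) \,+\, (\text{terms from }\partial^2 \tilde\omega_R).
\end{equation*}
By Lemma \ref{lemma:vector field} \ref{item:7i} the last group is bounded of order $\vO(\inp{x}^{-1})$, while Condition \ref{cond:smooth} makes $\tilde\omega_R\cdot \nabla V$ relatively $H$-compact; after sandwiching with $f(H)$ these contributions are absorbed into the compact operator $K$. For the leading term, property \ref{item:4} of Lemma \ref{lemma:vector field} applied after scaling gives $\tilde\omega_{R,*}(x) \geq \sum_a \pi_a\, \tilde q_a(x/R)^2$, and an IMS-type rearrangement yields
\begin{equation*}
2\, p \cdot \tilde\omega_{R,*}(x)\, p \;\geq\; 2 \sum_a \tilde q_a(\cdot/R)\, p_a^2\, \tilde q_a(\cdot/R) \;+\; \vO(R^{-2}),
\end{equation*}
since $[\tilde q_a(\cdot/R), p_a]$ is of order $R^{-1}$.

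Next I would exploit the scaling relation \eqref{eq:39}: on $\supp \tilde q_a(\cdot/R)$, for $b \not\subset a$ one has $|x^b| > r_1 R$, so $|V_b^{(1)}(x^b)| = \vO(R^{-\rho})$, while $V_b^{(2)}$ contributions are compactly supported and contribute only an $H$-compact remainder after cutoffs. Consequently $\tilde q_a(\cdot/R)\, I_a\, \tilde q_a(\cdot/R)$ tends to zero in norm modulo a compact piece, allowing the substitution $p_a^2 = H - H^a - I_a$. Sandwiching with $f(H)$, commuting the cutoffs $\tilde q_a(\cdot/R)$ past $f(H)$ modulo compact errors, and invoking the inductive Mourre estimate for each $H^a$ with the internal vector field $\tilde\omega^a_R$ (whose $A_R^a$ appears in the decomposition of $A_R$ on $\supp\tilde q_a(\cdot/R)$ via \eqref{eq:39}) yields on $\ran f(H)$ a lower bound of $2 d(\lambda) - \epsilon - K$, using the definition \eqref{eq:44} and the fact that the thresholds of $H$ at or below $\lambda$ are precisely the eigenvalues of sub-Hamiltonians $H^a$ at or below $\lambda$.

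The main obstacle is the combinatorial bookkeeping in the induction step: verifying that the infima of the subsystem gaps produced by the cluster expansion collapse exactly to the single constant $2 d(\lambda)$ (and not to a strictly smaller quantity), and organising the error terms---those from cutoff commutators, the $\vO(R^{-\rho})$ inter-cluster decay, and the relatively compact potential gradients---into a single compact operator $K$ plus an $\epsilon$-sized bounded term absorbable into the left-hand side, once $R$ is taken large enough depending on $\epsilon$ and $\lambda$. This is exactly the content encapsulated by \cite[Corollary 4.5]{Sk3}, which we invoke.
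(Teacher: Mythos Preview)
Your proposal is correct and aligns with the paper's approach: the paper does not give a proof of this lemma at all but simply refers to \cite[Corollary 4.5]{Sk3}, noting that all the required inputs are collected in Lemma~\ref{lemma:vector field}. Your sketch of the inductive PSS-type argument using the Graf vector field, the rescaled partition $\tilde q_a(\cdot/R)$, and the decomposition \eqref{eq:39} is precisely what that reference carries out, and you correctly conclude by invoking it.
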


We can write $\tilde\omega=\nabla r^2/2$ for some positive smooth
function $r$ with  $r^2-x^2$  bounded, cf. \cite{De}.

A basic  ingredient of our procedure  is the operator
\begin{align}\label{eq:Bdefined}
  B=\tfrac12\parb{\omega(x)\cdot p+p\cdot \omega(x)}\in \vL(H^1,L^2),
\end{align}
where $\omega=\omega_R=\tilde \omega_R/r_R$, $r_R(x)=Rr(\tfrac x
R)$. Note that $\tilde\omega _R=\nabla r_R^2/2$,  and whence that
${\omega_R}=\nabla r_R$.
We shall suppress the dependence of the parameter $R$ (which
eventually is taken
as a  large number, depending on $\lambda$). In particular we shall slightly abuse the
notation  writing for example $r$ rather than the rescaled
version $r_R$. Using the notation ${\textbf D}$ for the Heisenberg
derivative $\i[H, \cdot ]$ we note  the
 computations $2B={\textbf D}
r$, $A=r^{1/2}Br^{1/2}$ and (formally)
 \begin{align}\label{eq:formulasy}
 {\textbf D}B=r^{-1/2}\parb{{\bD} A -2B^2}r^{-1/2} +\vO(r^{-3}).
\end{align}
Here the function
\begin{align*}
  \vO(r^{-3})=
\tfrac12 \omega\cdot (\nabla^2r)\omega/r^2=r^{-3}v(x),
\end{align*}
  where $v$ belongs to the algebra $\vF=\vF(\bX)$ of smooth functions   on $\bX$
obeying
\begin{equation*}
    \forall \alpha\in \N_0^{\dim \bX}\quad\forall
    k\in \N_0: |\partial_x^\alpha(x\cdot
    \nabla)^k v(x)|\leq C_{\alpha,k}.
  \end{equation*}
   Note also that  the function
  $r^2-x^2\in
  \vF$. Obviously $\vF(\bX)\supset \vF(\bX^a)$ for any $a\neq a_{\min}$. The exact computation of  ${\bD} A$ reads
\begin{align}
    \label{eq:c2com}
     {\bD} A &=2p\tilde\omega_*\big (\tfrac xR\big
)p-\parb{4R^2}^{-1}\parb{\triangle(\nabla\cdot \tilde\omega)}\big (\tfrac xR\big
)-R\tilde\omega\big (\tfrac xR\big
)\cdot \nabla V.
  \end{align}
 In particular
  \begin{align*}
    {\bD} A=\i[H,A]=\sum_{|\alpha\leq 2}v_\alpha p^\alpha;\quad v_\alpha\in \vF,
  \end{align*} which make sense as a bounded form on
  $H^1=Q(H)$. Although we  define $\bD A$ by \eqref{eq:c2com}, it can be computed as a strong limit,
  \begin{align}
    \label{eq:limit2A0}
    \begin{split}
     \i [  H,A]&=\slim_{t\to 0} t^{-1}\parb{ H\e^{\i t A}-\e^{\i
    t A} H}\in\vL(H^1, H^{-1}).
    \end{split}
  \end{align}    Similarly the formal
 commutator in  \eqref{eq:formulasy}  can be computed as a strong
 limit in $\vL(H^1, H^{-1})$,
  \begin{align}
    \label{eq:limit}
    \begin{split}
     {\textbf D}B&=\slim_{t\to 0} t^{-1}\parb{H\e^{\i tB}-\e^{\i
    tB}H}\\&=r^{-1/2}\parb{{\bD} A -2B^2}r^{-1/2}
    +r^{-3}v\\
&=\sum_{|\alpha|\leq 2}r^{-1}v_\alpha p^\alpha;\quad v_\alpha\in \vF.
    \end{split}
  \end{align} The assertions \eqref{eq:limit2A0} and \eqref{eq:limit}
  are standard results, which follow from mapping properties of the
  involved  groups and the fact that the formal commutators are
  bounded forms on $H^1$.

We will  need  modifications  of $A$ and $B$ in terms of a
parameter $\kappa$. This  parameter is needed to control certain multiple commutators. Let
\begin{align}
  \label{eq:41} B_{\kappa,R}=B(\kappa^2 B^2+1)^{-1},\quad
  A=A_{\kappa,R}=r^{1/2}B_{\kappa,R}r^{1/2};\quad \kappa\in [0,1].
\end{align}

We may consider $\widetilde H$ defined in \eqref{eq:38p} as a `generalized'
$N$-body Schr{\" o}dinger operator. The set of thresholds of $\widetilde
H$, say denoted by $\widetilde \vT$, coincides with the set $\vT$ of thresholds of $
H$ except for having one less point. This exception follows from the
identity
\begin{equation*}
\sigma_{\pupo}\parb{\widetilde H^{a_0}}=\Big (\sigma_{\pupo}\parb{H^{a_0}}
\setminus\{\lambda_0\}\Big )\cup\{0\};\quad \widetilde H^{a_0}:=H^{a_0}-\lambda_0\Pi.
\end{equation*} Note in particular that it follows that $\lambda_0 \not\in
\widetilde \vT$. On the other hand  there is no simple relationship between the
eigenvalues of $\widetilde H$ and those of $H$. A similar `subtraction'
of a genuine  eigenprojection $P$ of $H$ was employed in \cite{AHS} in
which case
indeed a
similar relation between $\sigma_{\pupo}\parb{H-\lambda_0P}$ and
$\sigma_{\pupo}\parb{ H}$ hold. Let in the following $\tilde
d:\R\to\R$ refer to the (lower) distance function defined by replacing
$\vT\to \widetilde \vT$ in \eqref{eq:44}. Note that $\tilde
d(\lambda_0)>0$.
\begin{prop} \label{prop:mour2} Suppose Conditions \ref{cond:smooth} and
  \ref{cond:uniq}. Let $A_{\kappa,R}$ and  $\breve  H$  be given by
  \eqref{eq:41} and \eqref{eq:38p}, respectively.

  For all $\lambda\in \R$ and
  $\epsilon>0$ there  exist
  $R_0\geq 1$  and $\kappa_0 \in
      (0,1]$ such for  all $R\geq R_0$,  there exist  a neighbourhood $\vU$ of
  $\lambda $ and a compact operator $K$ on $L^2$:
  \begin{align} \label{eq:43breve}
    \begin{split}
    f(\breve H)&
 \i [\breve  H,A_{\kappa,R}]f(\breve  H)\\&\geq f(\breve
    H)\{2\tilde d(\lambda)-\epsilon-K\}f(\breve  H)\text{ for all
    }\kappa\in [0,\kappa_0] \text{ and real }
      f\in C^\infty_\c(\vU).
    \end{split}
  \end{align}
 \end{prop}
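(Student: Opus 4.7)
The plan is to adapt the vector field proof underlying Lemma \ref{lemma:Mourre1} (from \cite[Corollary 4.5]{Sk3}) to the modified operator $\breve H$. The guiding observation is that $\breve H$ may be viewed as a generalized $N$-body operator with the same geometric cluster structure as $H$ but with the cluster sub-Hamiltonian at $a_0$ effectively replaced by $\widetilde H^{a_0}:=H^{a_0}-\lambda_0 P^{a_0}$, where $P^{a_0}$ denotes the eigenprojection onto $\ker(H^{a_0}-\lambda_0)$ in $L^2(\bX^{a_0})$ (so that $\Pi=P^{a_0}\otimes 1$). Since $\sigma_{\pp}(\widetilde H^{a_0})=\parb{\sigma_{\pp}(H^{a_0})\setminus\{\lambda_0\}}\cup\{0\}$, the threshold set of $\breve H$ is $\widetilde\vT$ and the natural distance function is $\tilde d$, which accounts for the strict improvement from $d(\lambda_0)=0$ to $\tilde d(\lambda_0)>0$ at the critical energy.

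I would proceed in three steps. First, using $\breve H=\widetilde H-\breve K$ with $\breve K$ being $H$-compact, both $f(\breve H)\breve K$ and $f(\breve H)[A_{\kappa,R},\breve K]f(\breve H)$ are compact, so it suffices to establish the estimate for $\widetilde H$ modulo a compact error that can be absorbed into $K$. Second, I would follow the vector field proof of \cite[Corollary 4.5]{Sk3} for $\widetilde H$: in each region $\tilde q_a$ with $a\neq a_0$ the analysis is unchanged from that for $H$, while in the region $\tilde q_{a_0}$ one employs the tensor product decomposition $L^2(\bX)=L^2(\bX^{a_0})\otimes L^2(\bX_{a_0})$ together with the spectral theorem for $\widetilde H^{a_0}$. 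On $\ran P^{a_0}\otimes L^2(\bX_{a_0})$ the free channel operator $\widetilde H^{a_0}\otimes 1+1\otimes p_{a_0}^2$ reduces to $1\otimes p_{a_0}^2$ with spectrum $[0,\infty)$, so its spectral subspace at $\lambda_0<0$ is trivial; on the orthogonal factor the relevant thresholds are those of $H^{a_0}$ other than $\lambda_0$, yielding $p_{a_0}^2\geq \tilde d(\lambda_0)$ modulo compact errors on the spectral subspace near $\lambda_0$. The remaining induction on cluster decompositions $a\subsetneq a_0$ is identical to \cite[Corollary 4.5]{Sk3}. Third, the passage from $A_R$ to $A_{\kappa,R}$ is routine: since $B_{\kappa,R}$ depends continuously on $\kappa$ and converges strongly to $B$ on $H^1$ as $\kappa\to 0_+$, the difference $\i[\widetilde H,A_{\kappa,R}-A_R]$ tends to zero on spectral subspaces of $\widetilde H$, and one fixes $\kappa_0=\kappa_0(\lambda,\epsilon,R)$ so that \eqref{eq:43breve} holds uniformly for $\kappa\in[0,\kappa_0]$.

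The main obstacle is Step 2, specifically the interplay between the non-local perturbation $-\lambda_0\Pi$ and the position-dependent partition $\{\tilde q_a\}$ used in the vector field localization. Although $\Pi$ does not commute with the $\tilde q_a$, the commutators $[\Pi,\tilde q_a]$ are compact thanks to the exponential decay of the non-threshold eigenfunctions $\varphi_j\in H^2_\infty(\bX^{a_0})$: their integral kernels $\sum_j \varphi_j(x^{a_0})\overline{\varphi_j(y^{a_0})}\parb{\tilde q_a(y^{a_0},x_{a_0})-\tilde q_a(x^{a_0},x_{a_0})}$ are Hilbert--Schmidt in the variables $x^{a_0},y^{a_0}$ for each fixed $x_{a_0}$, with norm uniformly bounded in $x_{a_0}$. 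This compactness permits the IMS-type localization to carry through modulo the compact error $K$, preserving the inductive structure of \cite[Corollary 4.5]{Sk3} when applied to the non-local operator $\widetilde H$.
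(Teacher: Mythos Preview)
Your three-step outline mirrors the paper's strategy (reduce $\breve H$ to $\widetilde H$ via compactness of $\breve K$, prove a Mourre estimate for $\widetilde H$, then pass to $A_{\kappa,R}$), but there are two genuine gaps. The more serious one is in Step~2: you address the localization commutators $[\Pi,\tilde q_a]$ but not the term $-\lambda_0\,\i[\Pi,A_R]$ that appears directly in $\i[\widetilde H,A_R]=\i[H,A_R]-\lambda_0\,\i[\Pi,A_R]$. This commutator is bounded but \emph{not} compact, so your claim that ``in each region $\tilde q_a$ with $a\neq a_0$ the analysis is unchanged from that for $H$'' does not follow. The paper handles this separately (Lemma~\ref{lemma:Mourre3}, Step~III): it decomposes $\i[\Pi,A_R]$ via the partition $\chi_1+\chi_2$ of Lemma~\ref{lem:Tcont}, uses \eqref{eq:37i} to replace $A_R$ by $A_R^{a_0}$ on $\supp\chi_1$, and then invokes $\norm{[\Pi,A_R^{a_0}]}\to 0$ as $R\to\infty$; the $\chi_2$ contributions are shown to be $H$-compact. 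Incidentally, your argument for compactness of $[\Pi,\tilde q_a]$ itself is incomplete: Hilbert--Schmidt in $(x^{a_0},y^{a_0})$ with norm uniformly bounded in $x_{a_0}$ gives only boundedness on $L^2(\bX)$, not compactness. The paper's Step~II instead uses a conical partition $j_b$ with $|x^{a_0}|\geq c|x|$ on $\supp j_b$ for $a_0\not\subset b$, whence $j_b\Pi$ is $H$-compact by the rapid decay of the $\varphi_j$.

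Your Step~3 is also too loose. Strong convergence of $B_{\kappa,R}\to B$ on $H^1$ does not deliver an operator inequality; one needs the quantitative norm bound $\norm{\i[\breve H,A_{\kappa,R}]-\i[\breve H,A_R]}_{\vL(H^2,H^{-2})}\leq C\kappa$ uniformly in $R\geq 1$, which the paper proves in Lemma~\ref{lemma:Mourre2} via the resolvent representation \eqref{eq:hkappa}. This uniformity matters: the statement requires $\kappa_0$ chosen \emph{before} $R$ (valid for all $R\geq R_0$), whereas your $\kappa_0=\kappa_0(\lambda,\epsilon,R)$ depends on $R$.
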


The meaning of the appearing commutator will be explained in
Subsection \ref{subsec:Proof of Proposition mour}. However for  $\kappa=0$ there
is an alternative interpretation to be elaborated on now.
 We claim that we can use \eqref{eq:43breve}  at $\lambda=\lambda_0$ and for $\kappa=0$ to
 conclude,  that there are at most a finite number of eigenvalues for
 $H'$ in a neighbourhood of $\lambda_0$. To see this  it suffices to
 show that the
 commutator in  \eqref{eq:43breve}, interpretated as  the formal
 commutator, can be computed as a strong limit
  \begin{align}
    \label{eq:limit2A}
    \begin{split}
     \i [\breve  H,A_{R}]&=\slim_{t\to 0} t^{-1}\parb{\breve H\e^{\i t A_{R}}-\e^{\i
    t A_{R}}\breve H}\in\vL(H^2, H^{-2}).
    \end{split}
  \end{align} Writing $\breve H=H-T$,  $T:=\lambda_0\Pi +\breve
    K$,  the part of \eqref{eq:limit2A} related
  to $H$ is justified by \eqref{eq:limit2A0}. For the  part  related
  to   the second term  it
  suffices to show that the form, say a priori defined on
    $ C_{\c}^\infty(\bX)$,  extends  as follows. \begin{lemma}\label{lem:Tcont} The form $\i\ad _{A_{R}}(T):=\i [T,A_{R}]$   extends to a bounded form on $H^2$. More generally $\i\ad _{A_{R}}(T)\in\vL(H^{2}, H^{-1})\cap \vL(H^{1}, H^{-2})$.
  \end{lemma}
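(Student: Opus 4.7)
The plan is to expand $\i[T,A_R]$ via the Leibniz rule, using the equivalent rewriting
\[
T = \lambda_0\Pi + \Pi I_{0} + \Pi' I_{0}\Pi
\]
(which follows from $\breve K=\Pi I_{0}+I_{0}\Pi-\Pi I_{0}\Pi$ together with $I_{0}\Pi=\Pi I_{0}\Pi+\Pi' I_{0}\Pi$), and then bound each of the six resulting terms separately. Each such term has shape $X\cdot Y\cdot Z$ where $Y\in\{\i[\Pi,A_R],\i[I_{0},A_R]\}$ and $X,Z\in\{\mathbf 1,\Pi,\Pi',I_{0}\}$. The structural input I will rely on is that $\Pi$ has the tensor form $\Pi=P^{a_0}\otimes\mathbf 1_{L^2(\bX_{a_0})}$, where $P^{a_0}$ is the finite-rank $L^2(\bX^{a_0})$-projection onto smooth, exponentially decaying eigenfunctions $\varphi_j$ of $H^{a_0}$ (Agmon decay applies since $\lambda_0\notin\widetilde\vT$, cf.\ \cite{FH}). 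A direct tensor-product computation using Cauchy--Schwarz on $x_{a_0}$-partial derivatives and $\varphi_j\in H^2(\bX^{a_0})$ yields $\Pi,\Pi'\in\vL(H^k(\bX))$ for $k\in\{0,1,2\}$, and by self-adjointness of $\Pi$ also for $k\in\{-2,-1\}$.

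For $\i[\Pi,A_R]$, decompose $A_R=-\i x\cdot\nabla+\tilde A_R$ where by Lemma~\ref{lemma:vector field}\ref{item:7i} the field $\varrho_R=\tilde\omega_R-x$ has bounded derivatives, so that $\tilde A_R=-\i\varrho_R\cdot\nabla-\tfrac{\i}{2}\nabla\cdot\tilde\omega_R$ is a first-order operator with bounded coefficients. Further split $-\i x\cdot\nabla=-\i x^{a_0}\cdot\nabla^{a_0}-\i x_{a_0}\cdot\nabla_{a_0}$. The inter-cluster piece $-\i x_{a_0}\cdot\nabla_{a_0}$ acts only in the $x_{a_0}$-variable and commutes with $\Pi$ exactly. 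The intra-cluster piece has commutator $[P^{a_0},x^{a_0}\cdot\nabla^{a_0}]\otimes\mathbf 1$, a finite-rank operator on $L^2(\bX^{a_0})$ whose kernel is built from $\varphi_j$ and $x^{a_0}\cdot\nabla\varphi_j$, both exponentially decaying, hence bounded on $L^2(\bX)$. Finally, $[\Pi,\tilde A_R]\in\vL(H^1,L^2)$ by composition and a similar kernel estimate. In total $\i[\Pi,A_R]\in\vL(H^1,L^2)\subset\vL(H^2,H^{-1})$. Since $\Pi$ and $A_R$ are both formally self-adjoint, the form $\inp{f,\i[\Pi,A_R]g}$ is symmetric on $H^1\times H^1$, so by duality $\i[\Pi,A_R]\in\vL(L^2,H^{-1})\subset\vL(H^1,H^{-2})$ as well.

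The commutator $\i[I_{0},A_R]$ is standard Mourre data. With $I_{0}=I_{0}^{(1)}+I_{0}^{(2)}$ per Condition~\ref{cond:smooth}, the smooth part gives $\i[I_{0}^{(1)},A_R]=\tilde\omega_R\cdot\nabla I_{0}^{(1)}$, a bounded multiplication operator because $x\cdot\nabla V_b^{(1)}(x^b)=x^b\cdot\nabla V_b^{(1)}(x^b)=\vO(|x^b|^{-\rho})$ for every $b\not\subset a_0$ and the $\varrho_R\cdot\nabla V_b^{(1)}$ contribution is $\vO(|x^b|^{-\rho-1})$. For the singular part I work at the form level, $\inp{f,\i[I_{0}^{(2)},A_R]g}=\inp{I_{0}^{(2)}f,A_R g}-\inp{A_R f,I_{0}^{(2)}g}$, and each side is bounded on $H^2\times H^1$ and on $H^1\times H^2$ using $I_{0}^{(2)}\in\vL(H^2,L^2)\cap\vL(L^2,H^{-2})$ from \eqref{eq:2}. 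This gives $\i[I_{0},A_R]\in\vL(H^2,H^{-1})\cap\vL(H^1,H^{-2})$.

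To assemble, each of the six Leibniz terms lies in $\vL(H^2,H^{-1})\cap\vL(H^1,H^{-2})$ by composing the middle factor $Y$ (in either of the two target spaces, by the previous two paragraphs) with outer factors using $\Pi,\Pi'\in\vL(H^k)$ for $k\in\{-2,\dots,2\}$ and $I_{0}\in\vL(H^2,L^2)\cap\vL(H^1,H^{-1})\cap\vL(L^2,H^{-2})$ (the middle inclusion being relative form-boundedness). The main delicate point I anticipate is the form-level bookkeeping in terms like $\i[\Pi,A_R]I_{0}^{(2)}$ or $I_{0}^{(2)}\i[\Pi,A_R]$, where one cannot invoke $I_{0}^{(2)}\in\vL(H^1,L^2)$; the symmetric pair of mapping properties of $\i[\Pi,A_R]$ derived in the second paragraph is precisely what allows distribution of derivatives so that $I_{0}^{(2)}$ is only ever used via $\vL(H^2,L^2)$ or $\vL(L^2,H^{-2})$, never $\vL(H^1,L^2)$.
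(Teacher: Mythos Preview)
Your argument is correct and takes a genuinely different route from the paper. The paper introduces a partition of unity $1=\chi_1+\chi_2$ adapted to the geometric splitting property \eqref{eq:39} of the Graf vector field, so that on $\supp\chi_1$ one has $\tilde\omega_R(x)=x_{a_0}+\tilde\omega_R^{a_0}(x^{a_0})$ and hence $\chi_1[\Pi,A_R]\chi_1=\chi_1[\Pi,A_R^{a_0}]\chi_1\in\vL(L^2)$; on $\supp\chi_2$ the commutator is undone and the growth of $r_R$ is compensated by the bound $|x^{a_0}|\gtrsim|x|$ together with polynomial decay of the $\varphi_j$. Your approach bypasses this geometric machinery entirely: the global decomposition $\tilde\omega_R=x+\varrho_R$ with $\varrho_R$ bounded, together with the orthogonal splitting $x\cdot\nabla=x^{a_0}\cdot\nabla^{a_0}+x_{a_0}\cdot\nabla_{a_0}$, reduces $[\Pi,A_R]$ to a finite-rank intra-cluster piece $[P^{a_0},x^{a_0}\cdot\nabla^{a_0}]\otimes 1$ plus a first-order operator with bounded coefficients commuted against $\Pi$. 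This is cleaner for the lemma in isolation.

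Two minor remarks on the bookkeeping. First, in the composition step you implicitly need $\i[\Pi,A_R]\in\vL(H^k,H^{k-1})$ for $k\in\{-1,2\}$ as well (e.g.\ for $\i[\Pi,A_R]I_0\in\vL(H^1,H^{-2})$ via $I_0\in\vL(H^1,H^{-1})$); this follows at once from $\Pi\in\vL(H^k)$ and $A_R\in\vL(H^k,H^{k-1})$ by expanding the commutator, but you only stated the sharper $\vL(H^1,L^2)\cap\vL(L^2,H^{-1})$. Second, the paper's more elaborate construction is not incidental: the $\chi_1,\chi_2$ partition with the property numbered (1)' is reused in the proof of the Mourre estimate (Lemma~\ref{lemma:Mourre3}, Step~III), where one needs $\|\Pi A_R^{a_0}-A_R^{a_0}\Pi\|\to 0$ as $R\to\infty$, and again in the calculus of Lemma~\ref{lem:orderP}. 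Your argument does not directly supply this $R$-dependence, though an analogous elementary argument could likely recover it.
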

  \begin{proof}
    Write the form as
\begin{align*}
    \i [\lambda_0 \Pi +\Pi I_{0}+
  I_{0}\Pi -\Pi I_{0}\Pi,A_{R}]=T_1+\cdots +T_4.
  \end{align*} We know that $ [I_{0},A_{R}]\in
  \vF\subset\vL(L^2)$. Let us only consider the contribution   $
  [\Pi,A_{R}]I_{0}$ (from $T_2$) and only show that  $
  [\Pi,A_{R}]I_{0}\in \vL(H^2, H^{-1}$). For that it suffices to show that $
  [\Pi,A_{R}]\in \vL(L^2, H^{-1}$).

Let $1=\chi_1+\chi_2$ be  a partition of unity on $\bX$ given as
follows. We demand that for
      some (small) $\sigma>0$ and all $\widecheck R\geq 1$:

\begin{enumerate}[\normalfont (1)]
\item \label{item:82} $\chi_1\in C^\infty(\bX)$ and for all $\alpha\in  \N_0^{\dim \bX}$ there exists $C\in \R$:
      \begin{equation*}
        |\w{x}^{|\alpha|}\partial ^\alpha_x\chi_1(x)|\leq C.
      \end{equation*}

    \item \label{item:12} $\chi_1(x)=1$ for
$x\in \{|x|\geq 2\widecheck R,\;|x^{a_0}|\leq \sigma |x|\}$ and
      \begin{equation*}
        \supp \chi_1\subset \{|x|> \widecheck R,\;|x^{a_0}|<
 2\sigma
        |x|\}\subset\bX\setminus\cup_{b\not\subset a_0}\bX_b.
      \end{equation*}
\end{enumerate} Note that \ref{item:12} implies, referring here to
notation of \eqref{eq:37} where $\delta=\delta(\sigma)>0$ is taken
sufficiently small (independently of $\widecheck R$),
\begin{equation}
  \label{eq:43}
  \supp \chi_1\subset \bY_{a_0,\delta,\widecheck R}.
\end{equation}
 We write
 \begin{align}
   \label{eq:46q}
  \i[\Pi,A_R]=(\chi_1+\chi_2)\i[\Pi,A_R](\chi_1+\chi_2)=\chi_1\i[\Pi,A_R]\chi_1+S.
 \end{align}

Now for all sufficiently big values of $\widecheck R$  (viz. $ \widecheck R\geq R
\widetilde R=R\widetilde R(\delta)$) we obtain by combining  \eqref{eq:39}
and \eqref{eq:43} that
\begin{equation}
    \label{eq:37i}
    \tilde\omega_R(x)=x_{a_0}+\tilde\omega^{a_0}_R(x^{a_0})\mforall x\in \supp \chi_1.
  \end{equation} The above construction can be done in an
  explicit way (including an explicit dependence of
  parameter $R$): Let us here and henceforth fix  $\widecheck R=R\widetilde
  R(\delta)$ and choose
  $\chi_1(x)=\bar\chi_{\widecheck R}(|x|)\Theta(x/|x|)$ with
  $\bar\chi_{\widecheck R}$ specified in \eqref{eq:14.1.7.23.24} and for a
  suitable real-valued smooth function
  $\Theta$. We record the following improvement of \ref{item:82}.
\begin{enumerate}[\normalfont (1)']
\item \label{item:8} For all $\alpha\in \N_0^{\dim \bX}$ and $k\in \N$
  there exists $C\in \R$
  such that for all $R\geq 1$:
      \begin{equation*}
        |\w{x}^{|\alpha|}\partial   ^\alpha_x\chi_1(x)|+\big|\w{x}^{|\alpha|}\partial
        ^\alpha_x\parbb{\tfrac{\w{x}^k}{\w{x^{a_0}}^k}\bar\chi_{2\widecheck R}({|x|})\chi_2(x)}\big|\leq C.
      \end{equation*}
\end{enumerate}

Since the operator $A_R$ is a local operator the first term in
\eqref{eq:46q} simplifies due to \eqref{eq:37i} as
\begin{equation}
  \label{eq:76}
  \chi_1\i[\Pi,A_R]\chi_1=\chi_1\i[\Pi,A_R^{a_0}]\chi_1=\i\chi_1\parb{\Pi
    A_R^{a_0}-A_R^{a_0}\Pi}\chi_1\in \vL(L^2).
\end{equation}

Now let us look at the term $S$ in \eqref{eq:46q}. It is given by
\begin{equation}\label{eq:77}
  S=\chi_2\i[\Pi,A_R]+\chi_1\i[\Pi,A_R]\chi_2.
\end{equation} The two terms are treated similarly, so let us only
elaborate on the first term. We write
\begin{align*}
  A_R=r_RB_R-\tfrac \i 2\abs{(\nabla r)(x/R)}^2,
\end{align*} and then
\begin{align}\label{eq:79}
  \begin{split}
   \chi_2[\Pi,A_R]&= \chi_2\Pi A_R-\chi_2A_R\Pi\\
&=K_1 B_R-B_R K_2-\i K_3;\\
&K_1=\chi_2\Pi
r_R,\\&K_2=r_R\chi_2\Pi,\\ &K_3=
   \tfrac 12\chi_2 \Pi
                            \,\abs{(\nabla r)(x/R)}^2+ \tfrac 12 \chi_2
                            \abs{(\nabla r)(x/R)}^2 \,\Pi+ \parb{\tilde\omega_R\cdot\nabla \chi_2}\Pi.
  \end{split}
\end{align}
 Noting, cf. \ref{item:8},  that
 \begin{align}\label{eq:Kbnds}
    K_1, K_2\in\vL(H^{-1})\cap \vL(L^2),\,K_3\in\vL(L^2) \mand  B_R\in\vL(L^2, H^{-1}),
 \end{align}
 we are done.
\end{proof}

\begin{remark} \label{remark:2comm}
  It follows from the technique of  the proof that the second order
  commutator $\ad^2_{A_{R}}(T)\in
  \vL(H^2, H^{-2})$ (and possibly no better for singular potentials). This
  suffices for the limiting absorption principle at $\lambda_0$,
  cf. \cite{PSS}. However higher commutators do not exist and we need
   refined micro-local estimates, which usually require multiple
   commutators. (In particular we need bounds  with weights in position
  space.)  Using the
  $\kappa$-distortions of $A_R$  and $B_R$ will allow us to treat multiple
  commutators. Note for example that for $\kappa>0$
  the above proof yields,  at least formally, the improved result
   $ \ad _{A_{\kappa,R}}(T)\in\vL(H^1, H^{-1})$. We introduce a  calculus of the
 $\kappa$-distortion of  $B$ which will be a
 major object  to study. This is done in Section
 \ref{subsec:Higher commutators}, and we give a number of applications of
 Proposition \ref{prop:mour2} and this calculus in Section \ref{subsec:Positive
  commutator estimates}.
\end{remark}

\subsection{Proof of Proposition \ref{prop:mour2}}\label{subsec:Proof of Proposition mour}

Introduce  for $\varepsilon>0$ the operators
\begin{align}
  \label{eq:41delta} X_{\varepsilon}=X_{\varepsilon,R}=r/(1+\varepsilon r ),\quad
  A_{\varepsilon,\kappa,R}=X^{1/2}_\varepsilon B_{\kappa,R}X_\varepsilon^{1/2};\quad \kappa\in [0,1].
\end{align} Note that
$A_{\varepsilon,\kappa,R}\in\vL(H^2,H^1)$. Whence  $\i [  H,A_{\varepsilon,
  \kappa,R}]$ is a well-defined
form on $H^2$. We
\emph{define} the commutators    $\i [
H,A_{\kappa,R}]$, $\i [ \widetilde{H},A_{\kappa,R}]$ and $\i [ \breve
H,A_{\kappa,R}]$  as  strong weak-star limits in $\vL(H^2,H^{-2})$ in
the following way. For technical reasons to be explained after the
definitions,  this works  for small enough values of $\kappa$ only. Let for
each of the operators $H^{\#}=H,\,\widetilde{H}$ or  $\breve H$
\begin{align}\label{eq:limFod}
  \i [  H^{\#},A_{\kappa,R}]=\swslim_{\varepsilon\to 0_+} \i [ H^{\#},A_{\varepsilon,
    \kappa,R}].
\end{align} It is easy to see that for $\kappa=0$, these limiting
forms exist and coincide with our previous computations, cf.
\eqref{eq:limit2A} and Lemma \ref{lem:Tcont}. The case of $\kappa>0$
requires an elaboration to be given in Lemma \ref{lemma:Mourre2}.

An immediate virtue of
\eqref{eq:limFod} is the validity of the virial theorem. Thus for
example,
\eqref{eq:limFod} combined  with Proposition \ref{prop:mour2} yields
that eigenvalues of $\breve H$ cannot accumulate at $\lambda_0$. Note
that this assertion does not require $\lambda_0=\Sigma_2$ for which
case  in
fact $\lambda_0\notin\sigma_\mathrm {ess }(\breve H)$, cf. \cite[Lemma
2.1]{Wa2} or the proof of Lemma \ref{lemma2.1}. Proposition
\ref{prop:mour2} will be a crucial tool for us only for
$\lambda_0>\Sigma_2$. At the level of proofs, the
  reader will see  similarities of the proofs of Lemma
  \ref{lemma2.1} and Proposition
   \ref{prop:mour2}.

We are
going to use that for $\kappa>0$
      \begin{align}\label{eq:hkappa}
        B_{\kappa,R}=\tfrac1{2\kappa} \parb{(\kappa B+\i)^{-1}+(\kappa B-\i)^{-1}}.
      \end{align}
 Note that  it follows from Mourre theory, \cite{Mo}, that $(\kappa
 B\pm \i)^{-1}$
 preserve  $H^2$ provided $\kappa$ is small enough.   We need a
 uniform bound in $\kappa$ and   $R$. Thus, more precisely,  we claim that
 there exists $\kappa'_0\in (0,1]$ such that for all
$\kappa\in[0,\kappa'_0]$ and all $R\geq 1$ the space $H^2$ is preserved,
and in fact
\begin{subequations}
\begin{align}
   \label{eq:stroI2}
   \sup_{R\geq  1,\,\kappa\leq \kappa'_ 0}\,\norm{(\kappa
 B\pm \i)^{-1}}_{\vL(H^2)}<\infty.
 \end{align} We may take this constant  as $\kappa_0'=\min\set{(2C)^{-1},1}$, where
\begin{align}
   \label{eq:stroI3}
   C=\sup_{R\geq  1}\,\norm{[H_0,B](H_0+1)^{-1}}_{\vL(L^2)}<\infty.
 \end{align} Note that  the $R$-dependence of $B$ is through $\omega_R(x)=(\nabla r)(x/R)$, which is bounded along
 with all derivatives. Whence the above finiteness claims
 hold. We also note that for any  $R\geq 1$
 \begin{align}
   \label{eq:stroI}
   \slim_{\kappa\to 0}(\kappa
 B\pm \i)^{-1}=\mp \i  \text{ in }\vL(H^2).
\end{align}
\end{subequations}

Clearly there are similar properties as \eqref{eq:stroI2} and
\eqref{eq:stroI} with  $H^2$ replaced by  $L^2_s$ for any
 $s\geq 0$. For \eqref{eq:stroI2} in that case,   $\kappa'_0$  can be
 an  arbitrary positive number (it does
 not need to be small); see
 the beginning of the proof of Lemma \ref{lem:orderP}.


\begin{lemma}
  \label{lemma:Mourre2} The limits  $\i [
   H^{\#},A_{\kappa,R}]$, with $H^{\#}=H,\,\widetilde{H}$ or  $\breve H$, are
  well-defined bounded form on $H^2$ for $\kappa\in [0,\kappa'_0]$. More generally $\i [
   H^{\#},A_{\kappa,R}] \in\vL(H^{2}, H^{-1})\cap \vL(H^{1}, H^{-2})$.

Moreover for each of the above three forms
\begin{align}\label{eq:45exp}
  \exists C>0\,\forall R\geq 1\,\forall\kappa\in [0,\kappa'_0]:\quad  \norm[\big]{\i [
   H^{\#},A_{\kappa,R}]-\i [
   H^{\#},A_{R}]}_{\vL(H^{2},H^{-2})}\leq C\kappa.
\end{align}

  \end{lemma}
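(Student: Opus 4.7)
The plan is to reduce to separate commutator estimates for $H$ and for the perturbation pieces $T'\in\{0,\lambda_0\Pi, T\}$ (where $T=\lambda_0\Pi+\Pi I_0+I_0\Pi-\Pi I_0\Pi$), so that $H^\#=H-T'$ covers all three cases. Since everything is linear in $A_{\kappa,R}$ and in $H^\#$, it suffices to treat the commutator of $A_{\kappa,R}$ with $H$ and with $T'$ separately; the $\kappa=0$ results are \eqref{eq:limit} for $H$ and Lemma \ref{lem:Tcont} for $T'$. For definiteness we first work with the regularized version $A_{\varepsilon,\kappa,R}=X_\varepsilon^{1/2} B_{\kappa,R} X_\varepsilon^{1/2}\in\vL(H^2,H^1)$ so that all commutators are a priori bounded on $H^2$, and only at the end pass to the weak$^*$ limit $\varepsilon\to 0_+$ as dictated by \eqref{eq:limFod}.

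First I would carry out the Leibniz expansion
\begin{align*}
\i[H^\#,A_{\varepsilon,\kappa,R}]=\i[H^\#,X_\varepsilon^{1/2}]\,B_{\kappa,R}X_\varepsilon^{1/2}+X_\varepsilon^{1/2}\,\i[H^\#,B_{\kappa,R}]\,X_\varepsilon^{1/2}+X_\varepsilon^{1/2}B_{\kappa,R}\,\i[H^\#,X_\varepsilon^{1/2}],
\end{align*}
and then use \eqref{eq:hkappa} together with the resolvent identity to rewrite the middle commutator as
\begin{align*}
\i[H^\#,B_{\kappa,R}]=-\tfrac12\sum_{\pm}(\kappa B\pm\i)^{-1}\,\i[H^\#,B]\,(\kappa B\pm\i)^{-1}.
\end{align*}
For $H^\#=H$ the form $\i[H,B]$ is of the type $\sum_{|\alpha|\le 2}r^{-1}v_\alpha p^\alpha$, hence in $\vL(H^2,L^2)\cap \vL(H^1,H^{-1})$ by \eqref{eq:limit}. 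For $H^\#=T'$, the same argument as in Lemma \ref{lem:Tcont} (the proof only used that $A_R=r^{1/2}B r^{1/2}-\tfrac\i 2|\omega|^2$ involves $B$ as a factor, together with the bounds \eqref{eq:Kbnds}) produces $\i[T',B]\in \vL(H^2,H^{-1})\cap \vL(H^1,H^{-2})$. Combined with the uniform bounds \eqref{eq:stroI2}, the analogue on $H^{-2}$ by duality, and the trivial $L^2$-bound $\|(\kappa B\pm\i)^{-1}\|\le 1$, every factor in the displayed expressions is uniformly controlled in $\kappa\in[0,\kappa_0']$ and $R\ge 1$. Since $X_\varepsilon\le r$ with $\nabla X_\varepsilon=(\nabla r)(1+\varepsilon r)^{-2}$ bounded uniformly in $\varepsilon$, the regularization cutoffs behave well and one can pass to the weak$^*$ limit $\varepsilon\to 0_+$, producing $\i[H^\#,A_{\kappa,R}]\in \vL(H^2,H^{-1})\cap \vL(H^1,H^{-2})$.

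For the quantitative estimate \eqref{eq:45exp} the crucial identity is
\begin{align*}
(\kappa B\pm\i)^{-1}-(\mp\i)=\pm\i\kappa B(\kappa B\pm\i)^{-1},
\end{align*}
which as a map $H^2\to H^1$ has norm $O(\kappa)$ (uniformly in $R$), using once more \eqref{eq:stroI2} and $B\in\vL(H^2,H^1)$. Subtracting the $\kappa=0$ expression from the $\kappa>0$ expression in each of the three Leibniz terms, every difference acquires one such factor, so each contribution is $O(\kappa)$ in $\vL(H^2,H^{-2})$, and summing over the three terms gives \eqref{eq:45exp}.

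The main obstacle I anticipate is the rigorous passage from the regularized commutator to the $\varepsilon\to 0$ limit on $H^2$ rather than merely on $C_c^\infty$, since $X_\varepsilon^{1/2}$ itself is bounded only by $\varepsilon^{-1/2}$; the resolution is to work with matrix elements on a fixed pair of vectors in $H^2$ and exploit cancellation between the three Leibniz pieces (more precisely, to rewrite each piece so that derivatives of $X_\varepsilon^{1/2}$ appear explicitly, and these are uniformly bounded), so that the standard argument giving \eqref{eq:limit} and \eqref{eq:limit2A} for $\kappa=0$ goes through with $B$ replaced by $B_{\kappa,R}$ thanks to \eqref{eq:stroI2}.
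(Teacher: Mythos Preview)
Your proposal is correct and follows essentially the same route as the paper: Leibniz expansion of $\i[H^\#,A_{\varepsilon,\kappa,R}]$, the resolvent identity \eqref{eq:hkappa} to rewrite $\i[H^\#,B_{\kappa,R}]$, uniform control of $(\kappa B\pm\i)^{-1}$ via \eqref{eq:stroI2} and its weighted-$L^2$ analogue, and \eqref{eq:regB} for the $O(\kappa)$ estimate; the $\widetilde H$ and $\breve H$ cases are likewise reduced to $H$ plus the methods of Lemma~\ref{lem:Tcont}. The obstacle you flag is exactly where the paper is more explicit: it computes $\i[H,X_\varepsilon^{1/2}]=\Re\big(X_\varepsilon^{-1/2}(\nabla X_\varepsilon)\cdot p\big)\approx r^{-1/2}B_R$, so that the $r^{-1/2}$ from the gradient cancels the growing $X_\varepsilon^{1/2}$ on the other side, and then moves the remaining $r^{1/2}$'s through the resolvents (each commutator contributing $O(\kappa)$) so that all three Leibniz pieces recombine into $-\tfrac12\sum_\pm(\kappa B\pm\i)^{-1}\i[H,A](\kappa B\pm\i)^{-1}$ before applying \eqref{eq:regB}.
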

  \begin{proof} We need to examine the case $\kappa>0$. We start by  examining  the quantity $\i
    [H,A_{\kappa,R}]$.

  We  calculate using \eqref{eq:limit}, \eqref{eq:hkappa}  and \eqref{eq:stroI2}
    \begin{align*}
&\i\comm[\Big]{H,A_{\varepsilon,\kappa,R}}\\
&=\i X_\varepsilon^{1/2}\comm[\Big]{H,B_{\kappa,R}}X_\varepsilon^{1/2}+2\Re\parbb{\i \comm[\Big]{H,X_\varepsilon^{1/2}}B_{\kappa,R}X_\varepsilon^{1/2}}\\
      &=-\tfrac12\sum_{\pm}X_\varepsilon^{1/2}(\kappa B\pm\i)^{-1}\i
      \comm[\Big]{H,B}(\kappa B\pm\i)^{-1}X_\varepsilon^{1/2}+2\Re\parbb{ \Re{\parb{X_\varepsilon^{-1/2}(\nabla X_\varepsilon)\cdot
      p}}B_{\kappa,R}X_\varepsilon^{1/2}}\\
&=-\tfrac12\sum_{|\alpha|\leq 2,\,\pm}X_\varepsilon^{1/2}(\kappa B\pm\i)^{-1}
      )r^{-1}v_\alpha p^\alpha (\kappa
  B\pm\i)^{-1}X_\varepsilon^{1/2}\\
&+2\Re\parbb{ r^{-1/4}(1+\varepsilon r)^{-3/4}B_R(1+\varepsilon r)^{-3/4}r^{-1/4}B_{\kappa,R}X_\varepsilon^{1/2}};\quad v_\alpha\in \vF.
\end{align*}  We can let $\varepsilon\to 0$, yielding the
existence of the desired limit in the  $H^2$-form sense. The result is
\begin{align*}
\i\comm[\Big]{H,A_{\kappa,R}}
&=-\tfrac12\sum_{|\alpha|\leq 2,\,\pm}r^{1/2}(\kappa B\pm\i)^{-1}
      )r^{-1}v_\alpha p^\alpha (\kappa
  B\pm\i)^{-1}r^{1/2}\\
&+2\Re\parbb{ r^{-1/4}B_Rr^{-1/4}B_{\kappa,R}r^{1/2}};\quad v_\alpha\in \vF.
\end{align*}
 This expression is obviously in $\vL(H^{2}, H^{-1})\cap \vL(H^{1},
H^{-2})$ (in fact in $\vL(H^{1}, H^{-1})$ in this case).

 Next we  move multiplication operators to the middle as to
become sandwiched by $(\kappa B\pm\i)^{-1}$. Thereby  we pick up errors
of order $\vO(\kappa)$ in $\vL(H^2,H^{-2})$ uniformly in $R\geq 1$,
cf. \eqref{eq:stroI2} and the remark after \eqref{eq:stroI}.  This
means that we only have to consider
 \begin{align}\label{eq:45CO}
   -\tfrac12\sum_{\pm}(\kappa B\pm\i)^{-1}\i
      \comm[\big]{H,A}(\kappa B\pm\i)^{-1},
 \end{align} where $\i
      \comm[\big]{H,A}$ is given by \eqref{eq:c2com}.
 But
\begin{align}\label{eq:regB}
   (\kappa B\pm \i)^{-1}\pm \i=\pm \i\kappa  (\kappa B\pm \i)^{-1}B,
 \end{align} which give an extra error $\vO(\kappa)\text{ in
   }\vL(H^2, H^{-2})$ when removing the factors $(\kappa B\pm
   \i)^{-1}$ in \eqref{eq:45CO} one by one. This proves  \eqref{eq:45exp} in the case  $H^{\#}=H$.

For  the  commutators  $\i [ \breve
    H,A_{\kappa,R}]$ and $\i [ \breve
    H,A_{\kappa,R}]$ we  proceed similarly and by using in  addition  the proof of Lemma
    \ref{lem:Tcont}. Note that in the  case of $H^{\#}=\breve H$ the
    topology in \eqref{eq:45exp} is the strongest Sobolev space
    topology   we can use  for singular
    potentials by this method, cf. Remark \ref{remark:2comm}.
\end{proof}

\begin{lemma}
  \label{lemma:Mourre3}
 For all $\lambda\in \R$ and $\epsilon>0$ there
  exist $R_0\geq 1$  such for all $R\geq R_0$,
  there exist a neighbourhood $\vV$ of $\lambda $ and a compact
  operator $\widetilde{ K}$ on $L^2$:
  \begin{align} \label{eq:42}
    \begin{split}
    f( \widetilde{H})&
 \i [  \widetilde{H},A_{R}]f(  \widetilde{H})\\&\geq f(
    \widetilde{ H})\{2 \tilde{d}(\lambda)-\epsilon-\widetilde{ K}\}f(  \widetilde{H})\text{ for all
     real }
      f\in C^\infty_\c({\vV}).
    \end{split}
  \end{align}
  \end{lemma}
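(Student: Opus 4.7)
The plan is to adapt the proof of Lemma~\ref{lemma:Mourre1} (cf.~\cite[Corollary 4.5]{Sk3}) to $\widetilde H$, viewing it as a \emph{generalized} $N$-body Schr\"odinger operator whose threshold set is $\widetilde\vT$ rather than $\vT$. Using Lemma~\ref{lem:Tcont} I decompose
\begin{align*}
\i[\widetilde H,A_R]=\i[H,A_R]-\lambda_0\,\i[\Pi,A_R],
\end{align*}
where the first term expands by \eqref{eq:c2com} into the usual positive quadratic form $2p\,\tilde\omega_*(x/R)p$ plus lower order terms, while the second must be controlled using the spatial localization of $\Pi$.

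Next I would insert the Graf partition of unity $\{\tilde q_a(\cdot/R)\}$ on both sides and use the inequality $\tilde\omega_*\ge\sum_a\pi_a\tilde q_a$ together with $p\pi_a=p_a$ to produce, modulo $H$-compact errors, a lower bound of the form
\begin{align*}
\i[H,A_R]\geq 2\sum_a \tilde q_a^{1/2}(\cdot/R)\,p_a^2\,\tilde q_a^{1/2}(\cdot/R) + \text{l.o.t.}
\end{align*}
For each $a$ with $a_0\not\subset a$, property \ref{item:6} of Lemma~\ref{lemma:vector field} gives $|x^{a_0}|\ge cR$ on $\supp\tilde q_a(\cdot/R)$, so the exponential decay of the bound states $\varphi_j$ yields $\|\tilde q_a(\cdot/R)\Pi\|=\vO(\e^{-cR})$ and an analogous bound for $[\Pi,A_R]\tilde q_a(\cdot/R)$ (using the decomposition of $[\Pi,A_R]$ from \eqref{eq:46q}--\eqref{eq:79}). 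Hence the $-\lambda_0\Pi$ modification is negligible on these pieces after multiplication by $f(\widetilde H)$, the difference being absorbed into the compact remainder $\widetilde K$. The trivial case $a=a_{\max}$ is compactly supported and contributes only a compact term.

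The essential case is $a=a_0$. On $\supp\tilde q_{a_0}(\cdot/R)$, $I_{a_0}$ is short-range, and the usual cluster-decomposition arguments give $\widetilde H\approx\widetilde H^{a_0}+p_{a_0}^2$ modulo compact and lower order terms, where by construction $\sigma_{\pp}(\widetilde H^{a_0})=\bigl(\sigma_{\pp}(H^{a_0})\setminus\{\lambda_0\}\bigr)\cup\{0\}$. Spectrally decomposing $\widetilde H^{a_0}$ and combining with $f(\widetilde H)\,p_{a_0}^2\,f(\widetilde H)\ge 2(\lambda-\tau)f(\widetilde H)^2$ (modulo $\epsilon$ and compacts, uniformly for $f$ supported in a small neighbourhood $\vV$ of $\lambda$ and $\tau$ ranging over eigenvalues of $\widetilde H^{a_0}$ below $\lambda$) yields the infimum $2\tilde d(\lambda)$, precisely because $\lambda_0\notin\widetilde\vT$ removes $\lambda_0$ from the competing thresholds.

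The main obstacle is handling the non-locality of $\Pi$ in the commutator $[\Pi,A_R]$ on the $a_0$-cluster piece, where the exponential-decay argument above is unavailable. Here one has to reuse the cone decomposition $\chi_1+\chi_2$ from the proof of Lemma~\ref{lem:Tcont} (with $\chi_1$ adapted to the cone around $\bX_{a_0}$ in which $\tilde\omega_R(x)=x_{a_0}+\tilde\omega_R^{a_0}(x^{a_0})$, cf.~\eqref{eq:37i}) so that $\chi_1\i[\Pi,A_R]\chi_1=\i\chi_1[\Pi,A_R^{a_0}]\chi_1$ is a genuinely internal commutator within the $a_0$-cluster, which can be absorbed into the Mourre estimate for $\widetilde H^{a_0}$ itself, while the off-diagonal pieces $\chi_2[\Pi,A_R]$ and $\chi_1[\Pi,A_R]\chi_2$ give $H$-compact contributions by~\eqref{eq:Kbnds} and the polynomial decay properties of $\Pi$ (Lemma~\ref{Lemma:basic2}).
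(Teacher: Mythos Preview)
Your strategy is broadly aligned with the paper's, but the organization differs and you miss the key simplification that makes the $a_0$-piece easy.

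The paper splits the proof into two independent parts. First (Step~II), it establishes a Mourre estimate for the pair $(\widetilde H,\,\i[H,A_R])$ with distance function $\tilde d$, by treating $\widetilde H$ as a generalized $N$-body operator and directly mimicking \cite[Corollaries~4.2 and~4.5]{Sk3}; the only new issue is that $\Pi$ is non-local, so one must check that the commutators $[\Pi,j_b]$ with the Graf partition functions are $\widetilde H$-compact. This is precisely your ``away from $a_0$'' argument (for $a_0\not\subset b$ use decay of $\varphi_j$, and for $b=a_0$ write $[\Pi,j_{a_0}]=-\sum_{d\ne a_0}[\Pi,j_d]$). Second (Step~III), the paper shows separately that $-\lambda_0\,\i[\Pi,A_R]$, sandwiched by $1_\vV(\widetilde H)$, is bounded below by $-\epsilon/2-\widetilde K_2$ with $\widetilde K_2$ compact.

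The point you miss is how the paper handles the $\chi_1$-piece in Step~III. You propose to ``absorb $\chi_1\i[\Pi,A_R^{a_0}]\chi_1$ into the Mourre estimate for $\widetilde H^{a_0}$,'' which would drag you into an inductive argument and is moreover inconsistent with your having already separated off $\i[H,A_R]$ and treated it via the Graf partition. The paper instead observes directly that
\[
\|[\Pi,A_R^{a_0}]\|\to 0\quad\text{as }R\to\infty.
\]
This follows because property~\ref{item:5} of Lemma~\ref{lemma:vector field} (applied to the $a_0$-subsystem) forces the rescaled vector field $\tilde\omega_R^{a_0}$ to vanish on $\{|x^{a_0}|<r_1R\}$, so $A_R^{a_0}$ is supported where $|x^{a_0}|\ge r_1R$; since $\varphi_j$ decays, $\|A_R^{a_0}\varphi_j\|\to 0$ (this is the content of Step~I combined with Lemma~\ref{lemma:vector field}\ref{item:7i}). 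Hence the entire $\chi_1$-contribution is $\le\epsilon/2$ in norm for $R$ large, with no induction needed. The off-diagonal $\chi_2$-pieces are compact, as you say. This is much cleaner than recombining with the sub-Hamiltonian Mourre estimate.
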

  \begin{proof}
    We divide the proof into several steps.
 \subStep{I}  We note that for any
 product
$P_1\cdots P_m$
 of factors $P_l$ given either as $P_l=\inp{x^{a_0}}$ or $P_l=A_R^{a_0}$ we have
 \begin{equation}
   \label{eq:72}
   P_1\cdots P_m\Pi\in \vL \parb{L^2(\bX^{a_0})}.
 \end{equation} In fact since $\inp{x^{a_0}}^m(H^{a_0}+\i)^m\Pi\in
 \vL \parb{L^2(\bX^{a_0})}$ the result follows from the property
\begin{equation*}
   P_1\cdots P_m(H^{a_0}+\i)^{-m}\inp{x^{a_0}}^{-m}\in \vL \parb{L^2(\bX^{a_0})},
 \end{equation*} which in turn follows from repeated commutation  and
 the property $A_R^{a_0}\inp{x^{a_0}}^{-1}(H^{a_0}+\i)^{-1}\in \vL \parb{L^2(\bX^{a_0})}$.

 \subStep{II}
There is a complete analogue of
Lemma \ref{lemma:Mourre1} with the triple  $\parb{H, \i[H,A_R],d}$ replaced
by $\parb{\widetilde H, \i[H,A_R],\tilde d}$. It reads conveniently  as follows.

For all $\lambda\in \R$ and $\epsilon>0$ there
  exists $R_0\geq 1$  such for all $R\geq R_0$,
  there exist a bounded neighbourhood $\vV$ of $\lambda $ and a compact
  operator $\widetilde{K}_1 $ on $L^2$:
  \begin{align} \label{eq:40tilde}
    \begin{split}
    f( \widetilde{H})&
 \i [  {H},A_{R}]f(  \widetilde{H})\\&\geq f(
    \widetilde{ {H}})\{2 \tilde{d}(\lambda)-\epsilon/2-\widetilde{K}_1\}f(  \widetilde{H})\text{ for all
    real } f\in C^\infty_\c(\vV).
    \end{split}
  \end{align}

This  statement follows by mimicking the proofs  of \cite[Corollaries
4.2 and 4.5]{Sk3}. One comment  is due since the `potential' $\Pi$
is not local: Expanding the operator $T(\delta')$ appearing in
\cite[(4.16)]{Sk3} one `new term'  is given by $T:=[\Pi,j_b]$; here
$j_b$ is a partition function with similar properties as the ones in \eqref{eq:33}--\eqref{eq:36}. We need
to show that $T$ is $\widetilde{H}$-compact. If $a_0\not \subset b$ indeed
$\Pi j_b$ and $j_b\Pi$, and hence also $T$,  are $H$-compact, cf. Step \textit{I}. If on the other
hand $a_0\subset b$ we have $a_0= b$ and we can write $T=-\sum_{d\neq
  b}[\Pi,j_d]$. Since for any such term we have $a_0\not \subset d $
the previous argument yields that $[\Pi,j_d]$ is $H$-compact.

\subStep{III}  Recalling the definition \eqref{eq:38p} it remains  to
show the  following
estimate in terms of the set $\vV=\vV(R)$ from \eqref{eq:40tilde}.

 There exists a compact operator $\widetilde K_2=\widetilde K_2(R)$
such that
\begin{equation}
  \label{eq:80}
  1_{\vV}(\widetilde H)\i [-\lambda_0\Pi,A_{R}] 1_{\vV}(\widetilde
  H)\geq- \epsilon/ 2 -\widetilde K_2.
\end{equation}

Clearly the combination of \eqref{eq:40tilde} and \eqref{eq:80} yields
\eqref{eq:42}  with
this  neighbourhood $\vV $  and with
$\widetilde{K}=\widetilde  K_1 +\widetilde K_2$.
 To show \eqref{eq:80} we decompose as in the proof of Lemma
      \ref{lem:Tcont}  writing
\begin{align*}
  \i
      \comm[\Big]{\Pi,A}&=(\chi_1+\chi_2)\i
      \comm[\Big]{\Pi,A}(\chi_1+\chi_2)=\chi_1\i
      \comm[\Big]{\Pi,A_R^{a_0}}\chi_1+S.
 \end{align*} Since
 \begin{align*}
   \norm{\Pi A_R^{a_0}-A_R^{a_0} \Pi}\to 0\text{ for }R\to \infty,
 \end{align*} cf. Lemma \ref{lemma:vector field}\ref{item:5} and Step
 \textit{I}, the first
 term conforms with \eqref{eq:80}.

It remains to consider $S$, which  is given by \eqref{eq:77}. We
decompose as in \eqref{eq:79} (treating again only the first term of
$S$).  The operators $K_1$,  $K_2$ and  $K_3$ are not only bounded (as
stated in \eqref{eq:Kbnds}) but also
$\inp{p}$-compact. Since $\vV $ is bounded \eqref{eq:80} follows.
\end{proof}

\begin{proof}[Proof of Proposition \ref{prop:mour2}] Let
  $\lambda\in\R$ and $\epsilon>0$ be given. We need to specify $R_0\geq 1$, $\kappa_0 \in
      (0,\kappa_0']$ and for $R\geq R_0$  a neighbourhood $\vU(R)$ of $\lambda$
      and a compact $K(R)$ such that \eqref{eq:43breve} holds.

\subStep {I}
      First we use
    Lemma \ref{lemma:Mourre3} to    obtain the statement
   \eqref{eq:42}   with the factors of $f(\widetilde{H})$ replaced by the factors of $f(\breve
  H$). More precisely we use the statement with $\epsilon$ replaced by
  $\epsilon/2$ allowing us to pick corresponding $R_0\geq 1$  and  for $R\geq R_0$ a  neighbourhood $\vV(R)$ of $\lambda$
      and a compact $\widetilde{K}(R)$ such that  \eqref{eq:42}
      holds. Now take for any such set  $\vV(R)$ any smaller
      neighbourhood of $\lambda$, denoted by  $\vU(R)$,  with compact
      closure contained in the interior of
      $\vV(R)$. We can pick a  real function  $\tilde{f}\in
      C^\infty_\c(\vV (R))$  such that $\tilde{f}=1$ on $\vU(R)$
      and then estimate as follows. Note that $\breve K_0=\breve K_0(R):=\parb{H_0+1}\parb{\tilde f(\breve
        H)-\tilde f(\widetilde{H} )}$  is compact
      (cf. \eqref{eq:almostAnal} given  below) and that $\i [
          \widetilde{H},A_{R}]\in{\vL(H^{2},H^{-2})}$  (cf. Lemma \ref{lem:Tcont}).
Writing then
\begin{align*}
    \breve K_1:=&\tilde f( \breve{H})
 \i [  \widetilde{H},A_{R}]\tilde{f}( \breve{H})-\tilde f(\widetilde{ H})
 \i [  \widetilde{H},A_{R}]\tilde{f}( \widetilde{H})\\
=&\breve K_0^* \breve T+\widetilde{T}^*\breve K_0;\\
& \quad\breve T=\parb{H_0+1}^{-1}\i [
          \widetilde{H},A_{R}]\tilde{f}( \breve{H}),\\
& \quad\widetilde{T}=\parb{H_0+1}^{-1}\i [
          \widetilde{H},A_{R}]\tilde{f}( \widetilde{H}),
     \end{align*} we  conclude that $\breve K_1$  is
     compact. Note also that
\begin{align*}
    \breve K_2(R):=\parb{2 \tilde{d}(\lambda)-\epsilon/2}\parb{\tilde
  f( \breve{H})^2-\tilde{f}( \widetilde {H})^2}
     \end{align*} is compact.

In conclusion we take $R_0\geq 1$,   $\vU(R)$ as described above  and
      \begin{align*}
        \breve K_3(R):=\tilde{f}( \widetilde{H})\widetilde{K}\tilde{f}( \widetilde{H})-\breve K_1+\breve K_2;\quad R\geq R_0,
      \end{align*} yielding
\begin{align*}
   f( \breve {H})&
 \i [  \widetilde{H},A_{R}]f(  \breve {H})\\&\geq f(
    \breve{ H})\{2 \tilde{d}(\lambda)- \epsilon/2-\breve{ K_3}\}f(  \breve {H})\text{ for all
      real }
      f\in C^\infty_\c({\vU(R)}).
  \end{align*}

\subStep {II}
We note that
\begin{align*}
  \breve K_4(R):=1_{U(R)}(\breve H)\i [\breve K,A_{R}]
  1_{U(R)}(\breve H) \text{ is compact}.
\end{align*} We subtract   these terms in the previous estimate,
yielding
\begin{align*}
   f( \breve {H})&
 \i [  \breve{H},A_{R}]f(  \breve {H})\\&\geq f(
    \breve{ H})\{2 \tilde{d}(\lambda)- \epsilon/2-{ K}\}f(  \breve {H})\text{ for all
     real }
      f\in C^\infty_\c({\vU(R)}),
  \end{align*} where $K=K(R)= \breve
K_3+\breve K_4$.

\subStep {III}
We invoke \eqref{eq:45exp}, estimating
\begin{align*}
  1_{U(R)}(\breve H)\parb{\i [\breve H,A_{\kappa,R}]-\i [\breve H,A_{R}]
  }1_{U(R)}(\breve H)\geq -\epsilon/2,
\end{align*} for all small enough $\kappa$. In combination with the
previous estimate this estimate yields \eqref{eq:43breve}.
\end{proof}

\section{Multiple commutators and calculus}\label{subsec:Higher commutators}

It is convenient to introduce a concept of `order' for certain classes
of linear operators  $T$ on $L^2$, cf. \cite{GIS}. It is based on the
following elementary result, cf.  Lemma \ref{lem:orderP}.
\begin{align}\label{eq:basCom}
  \forall \kappa \in (0,1],\;\forall R\geq 1:\quad
  {B_{\kappa,R}}\,L^2_\infty\subset L^2_\infty.
\end{align}
For operators $S,T$ on $L^2$   we define  (formally)  multiple commutators by
 \begin{align*}
  \ad^0_S(T)=T\mand    \ad^{k}_S(T)=[\ad^{k-1}_S(T),S]\text{ for } k\in \N.
 \end{align*} Let $X$ be  multiplication by  $r=r_R$ on  $L^2$, and
 recall that $\kappa_0'\in (0,1]$ is introduced in the discussion of
 \eqref{eq:stroI2} and \eqref{eq:stroI3}.

\begin{defn}\label{defn:order} Let $R\geq 1$ be given.
  An operator $T$ on $L^2$
  is \emph{of $\kappa$-order $t\in \R$} if
  \begin{enumerate}[(1)]
  \item $\vD(T),\vD(T^*)\supset
  L^2_\infty$,
\item $T$ and $T^*$ leave $L^2_\infty$  invariant,
\item \label{item:23}
$\forall \kappa \in (0,\kappa_0']\;\forall s\in \R\,\;\forall k\in \N_0:\quad
  X^{s-k-t}\ad_{B_{\kappa,R}}^k(T) X^{-s}\in \vL(L^2).
$
  \end{enumerate}
\end{defn}

For  any  operator $T$  of  $\kappa$-order $t$ the adjoint
$T^*$  also has   $\kappa$-order $t$, and  we write $T,
T^*=\vO_{\kappa}(X^{t})$. The class of such operators is denoted by
$\gO_\kappa(X^{t})$.  It is readily seen that
$X^{t}=\vO_{\kappa}(X^{t})$, cf. \eqref{eq:X} and \eqref{eq:Ts's}
stated below.
This is related to the fact that $\ad_B^{k}( X^{k})\in \vF$ for
$k\in\N$, where the algebra $\vF$ is introduced in the discussion of \eqref{eq:formulasy}.  By the Leipniz rule, if $S$ and $T$ are of $\kappa$-order $s$ and
$t$, respectively, then $ST$ is of order
$s+t$. We do not   keep
track of neither the $\kappa$- nor the
$R$-dependence, however it  is important for our applications of the
above
concept of `order' that $\kappa=0$
is \emph{not included} in \ref{item:23}. We abbreviate
$\breve R(z)=(\breve H -z)^{-1}$ and
$R_\kappa(z)=(B_\kappa-z)^{-1}=(B_{\kappa,R}-z)^{-1}$.

\begin{lemma}
  \label{lem:orderP}   Let $R\geq 1$ be given big enough. Then for any $f\in C_\c^\infty(\R)$, $g\in
  C^\infty(\R)$, $v\in\vF$, $|\alpha|\leq 2$,  $z\in \C\setminus\R$ and polynomials $P_1,P_2$ in
 $A^{a_0}$ and the components of  $x^{a_0}$ and $p^{a_0}$, such that  the total
 number of components of  $p^{a_0}$  is at most two  for $P_1$ as well as
 for $P_2$, the operators
  \begin{align}
    \label{eq:classP}
    g(B_{\kappa}), \,v,\,P_1\Pi P_2,\,I_{0}(p^2+1)^{-1}, \,p^\alpha \breve
    R(z),\,
  I_{0} \Pi \breve R(z),\,f(\breve H)\in \gO_{\kappa}(X^{0}).
  \end{align}
\end{lemma}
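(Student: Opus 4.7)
The plan is to verify Definition \ref{defn:order} with $t=0$ for each of the seven operator types by reducing multiple commutators with $B_{\kappa}$ to those with $B$ via the resolvent representation \eqref{eq:hkappa}. Expanding $[(\kappa B \pm \i)^{-1}, T] = -(\kappa B \pm \i)^{-1}[\kappa B, T](\kappa B \pm \i)^{-1}$ and iterating, $\ad^k_{B_{\kappa}}(T)$ becomes a finite sum of products of resolvents $(\kappa B \pm \i)^{-1}$ sandwiching iterated commutators $\ad^j_B(T)$, $j \leq k$. A weighted analogue of \eqref{eq:stroI2}--\eqref{eq:stroI3} --- proved by the same argument since $[X^s, B]$ is $B$-bounded uniformly in $R$ --- furnishes uniform boundedness of $(\kappa B \pm \i)^{-1}$ on $L^2_s$ for all $s$. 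Hence it suffices to establish, for each $T$ in the list, the weighted estimate $X^{s-k}\ad^k_B(T) X^{-s} \in \vL(L^2)$ (in the appropriate Sobolev framework when $T$ is a differential operator).

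For the pure symbol classes the verification is direct. If $v \in \vF$ one computes $[v, B] = -\i\,\omega \cdot \nabla v$, which again lies in $\vF$ since $\omega = \nabla r$ with $r^2 - x^2 \in \vF$; induction then gives $\ad^k_B(v) \in \vF$, and each element of $\vF$ is bounded on every $L^2_s$. For $g(B_{\kappa})$, $\ad^k_{B_{\kappa}}(g(B_{\kappa})) = 0$ for $k \geq 1$, leaving only $k = 0$, which is handled by the Helffer-Sj\"ostrand formula $g(B_{\kappa}) = \tfrac{1}{\pi}\int\bar\partial \tilde g(z)(B_{\kappa} - z)^{-1}\,dA(z)$ together with the aforementioned weighted resolvent bound.

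For the projection-type operators $P_1 \Pi P_2$, the $k = 0$ boundedness on $L^2_s$ follows from \eqref{eq:72} combined with the exponential decay of $\varphi_j$ in the $x^{a_0}$ variable and the splitting $\inp{x} \leq C\inp{x^{a_0}}\inp{x_{a_0}}$. For $k \geq 1$ the commutators are analyzed via the $\chi_1 + \chi_2$ cutoff from the proof of Lemma \ref{lem:Tcont}: on $\supp \chi_1$ relation \eqref{eq:37i} replaces $B$ by $B^{a_0}$, whose commutator with $\Pi$ is again of the form $P_1' \Pi P_2'$ controlled by \eqref{eq:72}, while the $\chi_2$ contribution is treated using the decay estimate \ref{item:8} and mimicking the argument of \eqref{eq:79}. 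The analogous analysis for $I_0(p^2+1)^{-1}$ uses Condition \ref{cond:smooth}: the smooth part satisfies $[B, V_b^{(1)}] = -\i\omega \cdot \nabla V_b^{(1)}$, which gains a power of $r^{-1}$ and stays in the class, while $V_b^{(2)}(p^2+1)^{-1}$ is compact with analogous commutator estimates under the geometric assumptions \eqref{eq:22} and \eqref{eq:89} underlying Condition \ref{cond:geom_singl}.

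For $p^\alpha \breve R(z)$ the $k = 0$ bound is the standard mapping property $\breve R(z): L^2 \to H^2$; commutators are reduced via $[B_{\kappa}, \breve R(z)] = -\breve R(z)[B_{\kappa}, \breve H]\breve R(z)$ and Lemma \ref{lemma:Mourre2}, iterated with weighted refinements. The operator $I_0 \Pi \breve R(z)$ factors as $I_0 \Pi \cdot \breve R(z)$ with $\breve R(z) = \Pi'\breve R(z)$ by \eqref{eq:basresbreve00}, and each factor separately belongs to $\gO_\kappa(X^0)$ by the previous steps. Finally $f(\breve H)$ is reduced via Helffer-Sj\"ostrand to $\breve R(z)$ on a near-real contour, with $\bar\partial \tilde f$ absorbing any polynomial $|\Im z|^{-N}$ blow-up from iterated commutators. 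The main obstacle will be the uniform-in-$\kappa, R$ weighted estimate of $\ad^k_{B_{\kappa}}(\breve H)$ --- in particular of $\ad^k_{B_{\kappa}}(\breve K)$ involving the nonlocal projection $\Pi$ --- which requires combining the geometric cutoff argument of Lemma \ref{lem:Tcont} with the resolvent expansion of $B_{\kappa}$ and propagating the almost-locality \eqref{eq:37} of $B$ through each commutator; once this is done, the remaining verifications are routine bookkeeping within the calculus outlined in the first paragraph.
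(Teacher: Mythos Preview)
Your reduction of $\ad_{B_\kappa}^k(T)$ to products of resolvents $(\kappa B\pm\i)^{-1}$ sandwiching $\ad_B^k(T)$ is correct, but the subsequent claim that it then suffices to establish $X^{s-k}\ad_B^k(T)X^{-s}\in\vL(L^2)$ is false for the non-multiplicative operators, and this is where the argument breaks. For $T=P_1\Pi P_2$ the iterated commutator $\ad_B^k(T)$ contains, via the $\chi_2$-piece in the mechanism of \eqref{eq:79}, terms of the schematic form $B^aKB^b$ with $a+b$ growing in $k$ and $K$ rapidly decaying; since $B$ is first order and $(\kappa B\pm\i)^{-1}$ does \emph{not} smooth in directions transverse to $\omega\cdot p$, the bare object $\ad_B^k(\Pi)$ is not bounded on any weighted $L^2$. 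Remark~\ref{remark:2comm} already flags this: $\ad_{A_R}^2$ of these operators lies only in $\vL(H^2,H^{-2})$, and higher $\ad_B^k$ are worse. The same recursion hits $\breve R(z)$, where $\ad_B$ produces $p^\alpha$-factors with $|\alpha|\le2$ at each step.

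What actually works---and what the paper does---is never to isolate $\ad_B^k(T)$. Instead one builds graded algebras $\vB=\sum_k\vB_k$ and $\breve\vB=\sum_k\breve\vB_k$ whose generators already contain the resolvents $(\kappa B\pm\i)^{-1}$ and the compounds $p^\alpha T_\sigma(p^2+1)^{-1}$, $p^\alpha T_\sigma\breve R(z)$, and then verifies the single-step property $\ad_{B_\kappa}(\breve\vB_k)\subset\breve\vB_{k+1}$. The point is that every naked $B$ produced by a commutator is immediately absorbed into an adjacent resolvent via the identity \eqref{eq:regB}, and every $p^\alpha$ with $|\alpha|\le2$ stays adjacent to a $\breve R(z)$ or $(p^2+1)^{-1}$. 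Your final paragraph gestures at precisely this difficulty but labels it ``routine bookkeeping''; in fact the absorption mechanism and the choice of generators that makes the induction close are the core of the proof.

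One smaller error: the factorization you invoke for $I_0\Pi\breve R(z)$ relies on ``$\breve R(z)=\Pi'\breve R(z)$ by \eqref{eq:basresbreve00}'', which is not what that formula says (it gives $R'(z)=\Pi'\breve R(z)$); since $\Pi\Pi'=0$ your decomposition would make $I_0\Pi\breve R(z)$ vanish. In the paper this operator is treated as a separate generator of $\breve\vB$ rather than as a product of previously handled pieces.
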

  \begin{proof} We prove the bounds one by one. For simplicity of presentation the
    above list  is not stated as complete as possible. In the proof we will see that a
    certain algebra $\breve \vB\subset \gO_{\kappa}(X^{0})$; this
    algebra contains  a few
 operators not listed above  for which  the assertion that
    they  are  contained in $\gO_{\kappa}(X^{0})$ will also be useful
    later.

By repeated commutation we  obtain for the
    regularization  $X_\varepsilon$ of \eqref{eq:41delta} that for any $s\geq 0$, $\varepsilon>0$
and $z\in \C\setminus\R$
\begin{align}
  \begin{split}\label{eq:rescomm}
  X_\varepsilon^{s} (B-z)^{-1}=\sum^{[s]}_{k=0}&
  (-1)^k (B-z)^{-k-1}\ad_B^k( X_\varepsilon^{s})\\&+(-1)^{[s]+1} (B-z)^{-[s]-1}\ad_B^{[s]+1}( X_\varepsilon^{s})(B-z)^{-1}.
  \end{split}
    \end{align}
    We multiply by $X^{-s}$ from the right and take $\varepsilon\to 0$ observing   that   the
    resulting  right-hand side is explicitly
    in $ \vL(L^2)$. Note that the operator
    $\ad_B^{[s]+1}( X^{s})\in \vF$, and therefore it is a bounded
    multiplication operator. We can
    argue similarly for   $s< 0$. In particular we conclude
    \eqref{eq:basCom} (using \eqref{eq:hkappa}) as well as a boundedness result remarked  before
    Lemma \ref{lemma:Mourre2}. In fact it follows that
\begin{align}\label{eq:bascom}
  (\kappa
  B\pm \i)^{-1}\in \gO_{\kappa}(X^{0}).
\end{align}

\subStep{$g(B_{\kappa})$; 1st proof}
    In terms of an almost analytic
    extension $\tilde{ g}$ of a given $g\in C_\c^\infty(\R)$ (`absorbed'  in a  measure $\mu=\mu_g$ and writing $z=u+
\i v$)
    \begin{align}\label{eq:almostAnal}
      g(B)={\tfrac {1}{\pi }}\int _{{\C}
    }{\left(\bar{\partial} \tilde{g}\right)}{\left(
        z\right)}{\left(B -z\right)}^{-1}\d u\d v =\int_{\C} (B-z)^{-1}\d \mu(z).
    \end{align}

Let for any $t\in\R$
    \begin{align}
      \label{eq:clash} \vG_t=\set{g\in
C^\infty(\R)|\,\forall k\in\N_0:\,
      \abs{
g^{(k)}(x)}\leq  C_k \inp{x}^{t-k}}.
    \end{align}
 It is well-known that \eqref{eq:almostAnal} is valid in fact for  $g\in
    \vG_t$ with  $t<0$ (for any self-adjoint operator $B$).

Now for  any given $g\in
  C^\infty(\R)$ we write  $g(B_{\kappa})=g_\kappa (B)+g(0)$ with  $ g_\kappa\in\vG_{-1}$.
  By  using \eqref{eq:almostAnal}
for $ g_\kappa$
  we   conclude by using \eqref{eq:rescomm}  with $\varepsilon=0$
   (and its adjoint version)
 that
    $g(B_{\kappa})=g_\kappa (B)+g(0)\in \gO_{\kappa}(X^{0})$.

\subStep{$g(B_{\kappa})$; 2nd proof}  Since $B_{\kappa,R}$ is
    bounded we can write
    $g(B_{\kappa,R})=f(B_{\kappa,R})$ for some $f\in
    C_\c^\infty(\R)$  and  then apply \eqref{eq:almostAnal}. For
    $s\geq 0$ (treating only this case in \ref{item:23})
we expand as in \eqref{eq:rescomm}
\begin{align*}
      X^{s} R_\kappa(z)=\sum^{[s]}_{k=0}
  (-1)^k R_\kappa(z)^{k+1}\ad_{B_\kappa}^k( X^{s})+ (-1)^{[s]+1}R_\kappa(z)^{[s]+1}\ad_{B_\kappa}^{[s]+1}( X^{s})R_\kappa(z).
    \end{align*} It  suffices to show that
    \begin{align}\label{eq:X}
      \ad_{B_\kappa}^k( X^{s}) X^{k-s},k=0,\dots,[s],\mand
      \ad_{B_\kappa}^{[s]+1}( X^{s}) \text{ are bounded.}
    \end{align}
Using \eqref{eq:hkappa} and \eqref{eq:bascom} we compute for $k=1,\dots, [s]+1$
\begin{align}\label{eq:Ts's}
  \begin{split}
   \ad_{B_{\kappa}}^k(X^{s})&=\sum_{\sigma=(\sigma_1,\dots, \sigma_k)\in\set{-1,1}^k}T_{\sigma}v_{\sigma} X^{s-k}T_{\sigma};\\ T_{\sigma}&=\prod^{k}_{j=1} (\kappa
  B+\sigma_j \i)^{-1},\quad v_{\sigma}\in\vF.
  \end{split}
\end{align} Clearly this  shows that $ \ad_{B_\kappa}^{[s]+1}( X^{s}) $ is
bounded.
For $k\leq [s]$ we obtain the
first part of \eqref{eq:X} by using that $X^{s-k}T_{\sigma}X^{k-s}$ is
bounded, cf.  \eqref{eq:bascom}. Consequently we are done.

\subStep{$v$} We compute similarly (using the same notation)
\begin{align*}
  \ad_{B_{\kappa}}^k(v)=\sum_{\sigma\in\set{-1,1}^k}T_{\sigma}v_{\sigma} X^{-k}T_{\sigma}.
\end{align*} We conclude
that  $v\in \gO_{\kappa}(X^{0})$ using again \eqref{eq:bascom}.

\subStep{$P_1\Pi P_2$}  We let $T$ be of the form $T=P_1\Pi P_2$ and  compute
\begin{align*}
  \ad_{B_\kappa}(T)&=-\sum_{\sigma\in\set{-1,1}}(\kappa B+\sigma \i)^{-1}
      [T,B](\kappa B+\sigma\i)^{-1},\\
 \i[T,B]&= r^{-1/2}  \i[T,A]r^{-1/2}+2\Re\parbb{\i
  \comm[\Big]{T,r^{-1/2}}Ar^{-1/2} }\\
&= \i r^{-1/2}  \chi_1\parb{TA^{a_0}-A^{a_0}T}\chi_1r^{-1/2}+T_1+T_2+T_3;\\
&T_1=r^{-1/2}  \chi_2\i[T,r^{1/2}Br^{1/2}]r^{-1/2},\\
&T_2=r^{-1/2}  \chi_1\i[T,r^{1/2}Br^{1/2}]\chi_2r^{-1/2},\\
&T_3=2\Re\parbb{\i
  \comm[\Big]{T,r^{-1/2}}r^{1/2}B }.
\end{align*}

We need to compute  higher order commutators as well, so we need to
iterate this computation. We will demonstrate a self-similar structure
which will allow us to control higher order commutators. To this end
we introduce the graded
algebra $\vB=\sum_{k\in \N_0} \vB_{k}$, where $\vB_k$ is
given by  linear combinations   of any products of
\begin{align*}
  X^{-s}  \,\,(s\in \R), v\in \vF, (\kappa B\pm
\i)^{-1}, T, P_\sigma^\alpha, (P_\sigma^\alpha)^*;\quad  P^{\alpha}_\sigma:= p^\alpha
T_{\sigma}(p^2+1)^{-1}\,\,(|\alpha|\leq 2).
\end{align*}
  Here $T$ is any operator of the form $T=P_1\Pi P_2$ and $T_{\sigma}$
  is given by the last equation of \eqref{eq:Ts's} for   $k\in \N$  while $T_{\sigma}:=I$
  for $k=0$,  and for any  such  product the total sum  of  appearing
exponents $s\in \R$
is given by   $k$. We claim that
\begin{align}
  \label{eq:alg_k} \forall S\in  \vB_{k}:\quad \ad_{B_\kappa}(S)\in  \vB_{k+1}.
\end{align}  From this property  it follows that
$\ad_{B_{\kappa}}^k(S)\in \vB_{k}$  for any $S\in
\vB_{0}$ (in particular for $T$), and we readily see that the
powers of $X$ can be redistributed as we want, i.e. we obtain that
$\vB\subset \gO_{\kappa}(X^{0})$, in particular $T\in \gO_{\kappa}(X^{0})$ follows.

To show \eqref{eq:alg_k} it suffices to consider $k=0$ and then in
turn   only the operator $T=P_1\Pi P_2\in \vB_{0}$. So
we consider the above computation of the commutator with   $B_\kappa$.
The first term is clearly in $\vB_{1}$ since
$TA^{a_0}$  and $A^{a_0}T$ have the same form as $T$ and the two outer
factors of $r^{-1/2}$ together provides us with the extra factor
$X^{-1}$ required for the class $\vB_{1}$.

For the term
  $T_1$ and $T_2$ we also `undo' the commutation. Whence
  \begin{align*}
   -\i T_1=r^{-1/2}  \chi_2Tr^{1/2}B-B\chi_2 r^{1/2}Tr^{-1/2}+[B,\chi_2 ]r^{1/2}Tr^{-1/2}.
  \end{align*}
 We use \eqref{eq:regB} to combine the factors of $B$ with the factors
 of $(\kappa B+\sigma\i)^{-1}$. The remaining terms can   for any
 $s\in \R$ be written as
 $X^{-s}v_1 \widecheck Tv_2X^{-s}$, where $v_1 ,v_2 \in\vF$ and
 $\widecheck T$ has the same form as $T$  (cf. \ref{item:8} in the
proof of Lemma \ref{lem:Tcont}).  In particular we get the extra
 factor $X^{-1}$ by choosing $s=1/2$. We can argue similarly for $T_2$.

 For the term
  $T_3$ we note the following general formula. Let $g\in \vG_t$ for
  any $t\in \R$ and
  consider the commutation formula for the composition $g(r)$,
\begin{align}\label{eq:hminus}
  \begin{split}
\i\comm[\big]{{\Pi,g(r)}}&= \Pi h(x)-h(x)\Pi,\\
  h(x)&=\i  \int_0^1 \,\d t\,g' (r(tx^a+x_a))\int_0^t x^a\cdot (\nabla^2
  r)\parb{(sx^a+x_a)} x^a\,\d s.
  \end{split}
\end{align} Here we used two zero'th  order  Taylor expansions, the
fact that  $\comm{{\Pi,g(r(x_a))}}=0$ and the property $\nabla
  r\parb{x_a} \cdot x^a=0$, cf.  Lemma \ref{lemma:vector field}\ref{item:5}.

Applied to $ g(r)=r^{-1/2}$ we conclude that
\begin{align}\label{eq:45}
   \i\comm[\big]{{T,r^{-1/2}}}= T h(x)-h(x)T,
\end{align} with  $h$  given by \eqref{eq:hminus}. We can write
$h=\inp{x^{a_0}}^{7/2}r^{-3/2}v$, where $v\in \vF$.

This representation can
be refined using the formula
\begin{align}
  \label{eq:refi1} r\nabla^2 r= \nabla^2 r^2/2-\ket{\nabla
  r}\bra{\nabla r}\in \vF.
\end{align} In fact
it follows  from \eqref{eq:hminus}  and \eqref{eq:refi1} that
\eqref{eq:45} holds with
\begin{align}\label{eq:refi2}
  h =\inp{x^{a_0}}^{9/2}r^{-5/2}v,\text{ where }v\in \vF.
\end{align}
 Thus intuitively the commutator is two inverse powers  of $X$ better.
 Obviously this
 is a general property for commutation by $g(r)$ for $g\in \vG_t$,  and
 this   comes in handy
 later, see the proof
of Lemma \ref{lem:orderComm}.
In any case by the appearance of at least one extra power of $X^{-1}$ and
\eqref{eq:regB} we see that also $T_3$ contributes by a term in $ \vB_{1}$.

For an application in the next step of the proof  let us note that
\begin{align}
  \label{eq:pcom}
  S_1:=(p^2+1)\Pi (p^2+1)^{-1}\in
\vB_0\mand S_2:=(p^2+1)[\Pi,B_\kappa](p^2+1)^{-1}\in \vB_1.
\end{align} Obviously  $S_1\in
\vB_0$, and  therefore the  second assertion follows from
\eqref{eq:alg_k} and
\begin{align*}
 S_2=
 [S_1,B_\kappa]-[p^2,B_\kappa](p^2+1)^{-1}S_1+S_1[p^2,B_\kappa](p^2+1)^{-1}\in \vB_1.
\end{align*}

\subStep{$I_{0}(p^2+1)^{-1}$ and $p^\alpha T_{\sigma_1,\dots,
    \sigma_j} \breve
  R(z),\,p^\alpha T_{\sigma_1,\dots, \sigma_j}\Pi\breve
  R(z);\,|\alpha|\leq 2,\,j\in\N_0$} We add these operators as well
as  the adjoint expressions to the  list of generators of algebra $\vB$  and denote
the corresponding   algebra  by $\breve \vB=\sum_{k\in \N_0} \breve \vB_{k}$, where as
before the total sum  of  appearing
exponents $s\in\R$ in any term of an element of  $\breve \vB_{k}$
is given by   $k$. We claim that
\begin{align}
  \label{eq:alg_k2} \forall S\in  \breve\vB_{k}:\quad \ad_{B_\kappa}(S)\in  \breve\vB_{k+1}.
\end{align}  From this property it follows that
$\ad_{B_{\kappa}}^k(S)\in \breve\vB_{k}$  for any $S\in
\breve\vB_{0}$ (in particular for the enlisted operators and their  adjoint expressions), and again we can  redistribute  the
powers of $X$ as we want. Whence  we obtain that
$\breve\vB\subset \gO_{\kappa}(X^{0})$, in particular the fourth,
fifth and sixth operators
 listed in \eqref{eq:classP} are all in
$\gO_{\kappa}(X^{0})$ as claimed.
 In conclusion it suffices to show
\eqref{eq:alg_k2}, and therefore in turn \eqref{eq:alg_k2} with $k=0$.
As for the first term we note that  $[I_{0},B] \in
X^{-1}\vF$. Whence it  suffices to consider the other
terms.
For the second  term $S=p^\alpha T_{\sigma_1,\dots,
    \sigma_j}\breve R(z)=p^\alpha T_{\sigma}\breve R(z)$ we compute
\begin{align*}
 [S,B_\kappa]&= [p^\alpha,B_\kappa] T_{\sigma} \breve R(z)
                             - p^\alpha T_{\sigma}\breve R(z)[\breve
                             H,B_\kappa]\breve R(z)\\&=T_1+T_2+T_3+T_4;\\
T_1&=-\sum_{\sigma'\in\set{-1,1}}(\kappa B+\sigma' \i)^{-1}
      [p^\alpha,B]T_{\sigma_1,\dots,
    \sigma_j,\sigma'}\breve R(z),\\
T_2&=- p^\alpha T_{\sigma}\breve R(z)[
                             H,B_\kappa]\breve R(z),\\
T_3&=\lambda_0 p^\alpha T_{\sigma}\breve R(z)[
                             \Pi,B_\kappa]\breve R(z),\\
T_4&= p^\alpha T_{\sigma}\breve R(z)[
                             \breve K,B_\kappa]\breve R(z).
\end{align*}
Since $ [p^\alpha,B]\in X^{-1}\sum_{|\beta|\leq 2}v_\beta p^\beta$
where $v_\beta\in \vF$, the operator $T_1\in\breve\vB_{1}$. Thanks to
\eqref{eq:limit} we conclude  that also $T_2\in\breve\vB_{1}$. Since
$\Pi\in\vB_{0}$ and therefore $[\Pi, B_\kappa]\in\vB_1$ (due to
\eqref{eq:alg_k}) also $T_3\in\breve\vB_{1}$. For $T_4$ we
combine the facts that
\begin{align*}
  I_{0}(p^2+1)^{-1},(p^2+1)\breve
                              R(z)\in \breve \vB_0
\end{align*}
 and \eqref{eq:pcom}. Thus, for
example (with $S_2$ given in \eqref{eq:pcom})
 \begin{align*}
   [I_{0} \Pi, B_\kappa]\breve R(z)-[I_{0} ,B_\kappa]\Pi\breve
                            R(z)
                            =\parb{I_{0}(p^2+1)^{-1}}S_2\parb{(p^2+1)\breve
                              R(z)}\in \breve\vB_1.
 \end{align*}
 Clearly the second term to the left is in $\breve\vB_1$ too,
 completing our treatment of one of the three terms  of $\breve K$. This
 argument also works for the  other terms of $\breve K$.

Finally for the third term $S= p^\alpha T_{\sigma}\Pi\breve
  R(z)$ we write
\begin{align*}
 S= P_\sigma^\alpha S_1 \parb{(p^2+1)\breve R(z)}.
\end{align*} Since $ P_\sigma^\alpha,S_1\in \vB_0$, cf.
\eqref{eq:pcom}, $[P^\alpha_\sigma S_1, B_\kappa]\in \vB_1\subset
\breve\vB_1$, and since  $[(p^2+1)\breve R(z), B_\kappa]\in \breve\vB_1$ (as  proved
above), it follows that
indeed $[S, B_\kappa]\in \breve\vB_1$ as we want.

\subStep{$f(\breve H)$} We use \eqref{eq:almostAnal} with $B$ replaced
by $\breve H$. We have proven that
$R(z)= \vO_{\kappa}(X^{0})$ for any fixed non-real $z$. Whence we can compute any number of commutators with
    $B_\kappa$, redistribute powers of $X$ as we want, and then estimate
    to see that indeed  $f(\breve H)\in \vO_{\kappa}(X^{0})$. Note that
    the bounds $\norm{\breve
    R(z)^m} \leq |\Im z|^{-m}$ come in naturally estimating
    integrals like  \eqref{eq:almostAnal} and in fact appear
    `harmless' in combination with the appearing
    measure $\mu$.
\end{proof}

\subsection{Computing a commutator}\label{subsec:Calculus}
We are interested in computing  commutators $\i [\breve H, P]$ where
$P$  in all relevant cases has the form
\begin{align*}
  P=f(\breve H)h(r)g(B_\kappa)h(r)f(\breve H);\quad f\in
    C_\c^\infty(\R),
\, h,g\in
    C^\infty(\R), \text{ real-valued}.
\end{align*}  (We suppress the dependence of $R$.) Let $\check f(\lambda)=\lambda
    \tilde f(\lambda)$ for any  real-valued $\tilde f\in
    C_\c^\infty(\R)$  chosen such that  $\tilde f =1$ on
    the support of $f$. Let $\widecheck H=\check f(\breve H)$ and  denote the
    corresponding  Heisenberg derivative $\i[\widecheck  H, \cdot ]$   by
    ${\widecheck \bD}$, in particular  $\i [\widecheck H, P]= {\widecheck \bD} P$.
      We are interested in specific choices of $h$ and $g$,
    and for those choices $P$ has a $\kappa$-order, say $2t$.
    More precisely   we impose  the condition $h\in
    \vG_t$, cf. \eqref{eq:clash}, and conclude that
     the composed function  $h(r)$ has  $\kappa$-order $t$, so  that indeed $P$ has a
    $\kappa$-order $2t$ (note  that due to
    Lemma \ref{lem:orderP} the $\kappa$-order of $g(B_\kappa)$ and
    $\widecheck H$ are
    zero). This implies that
    the commutator ${\widecheck \bD} P$ has order  $2t$, possibly (and
    indeed) smaller. We need to
    compute the order more exactly along with  the leading order term.

    \begin{lemma}
      \label{lem:orderComm} Let $R\geq 1$ be given big enough and  suppose $h\in
    \vG_t$. Then
\begin{subequations}
\begin{align}
  \label{eq:lead1}
  \begin{split}
  &{\widecheck \bD} P-L_1-L_2\in
\gO_{\kappa}(X^{2t-2});\\
&L_1=4f(\breve H)\Re \parb{(\kappa^2B^2+1)h'(r)B_\kappa g(B_\kappa)h(r)}f(\breve H),\\
&L_2=f(\breve H)h(r)\Re \parb{g'(B_\kappa){\widecheck \bD B_\kappa}}h(r)f(\breve H).
  \end{split}\end{align} The operators $L_1, L_2\in\gO_{\kappa}(X^{2t-1})$, so in particular ${\widecheck \bD} P\in\gO_{\kappa}(X^{2t-1})$.

 Suppose in addition that $g'\geq 0$ and that
$ (g')^{1/2}\in C^\infty$. Then $L_2$ can be replaced by any of the
following expressions $L_3,L_4,L_5\in\gO_{\kappa}(X^{2t-1})$ for which $\textrm{Com}=f(\breve H)\i[\breve H,
    A_\kappa]f(\breve H)$ is defined in agreement with Lemma \ref{lemma:Mourre2}.
\begin{align} \label{eq:lead2}
  L_3&=h(r)\sqrt{g'(B_\kappa)}f(\breve H)\i[\breve H,
    B_\kappa]f(\breve H)\sqrt{g'(B_\kappa)}h(r),\\
 \label{eq:lead3}
  \begin{split}
  L_4&=h(r)\sqrt{g'(B_\kappa)}X^{-1/2}\textrm{Com}\,X^{-1/2}\sqrt{g'(B_\kappa)}h(r)\\
    &-2h(r)\sqrt{g'(B_\kappa)}X^{-1/2}f(\breve
    H)(\kappa^2B^2+1)B_\kappa^2f(\breve
    H)X^{-1/2}\sqrt{g'(B_\kappa)}h(r),
\end{split}\\
\label{eq:lead4}
\begin{split}
  L_5&=h(r)X^{-1/2}\sqrt{g'(B_\kappa)}\textrm{Com}\,\sqrt{g'(B_\kappa)}X^{-1/2}h(r)\\
    &-2h(r)X^{-1/2}\sqrt{g'(B_\kappa)}f(\breve
    H)(\kappa^2B^2+1)B_\kappa^2f(\breve
    H)\sqrt{g'(B_\kappa)}X^{-1/2}h(r).
\end{split}
\end{align}
\end{subequations}
\end{lemma}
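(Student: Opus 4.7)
The overall strategy is standard Mourre-type calculus: expand the Heisenberg derivative by Leibniz, identify the leading terms, and quantify the remainders via the $\kappa$-order calculus of Definition \ref{defn:order} and Lemma \ref{lem:orderP}. The starting observation is that $\widecheck H$ is a function of $\breve H$ and hence commutes with $f(\breve H)$, while on the range of $f(\breve H)$ one has $\widecheck H = \breve H$ (since $\tilde f = 1$ on $\supp f$). Consequently
\begin{align*}
  {\widecheck \bD}P = f(\breve H)\,\i[\widecheck H, h(r)g(B_\kappa)h(r)]\,f(\breve H) = f(\breve H)\,\i[\breve H, h(r)g(B_\kappa)h(r)]\,f(\breve H),
\end{align*}
and a Leibniz expansion splits the inner commutator into a boundary--boundary contribution $\i[\breve H,h(r)]g(B_\kappa)h(r) + h(r)g(B_\kappa)\i[\breve H,h(r)]$ and a middle contribution $h(r)\i[\breve H,g(B_\kappa)]h(r)$.

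For the boundary pieces, I would use the pointwise identity $\i[H,h(r)] = 2h'(r)B - \i h''(r)|\omega|^2$, noting that the second term is a bounded multiplication operator in $\gO_\kappa(X^{t-2})$, and handle the contribution $-\i[T,h(r)]$ of the non-local part $T=\lambda_0\Pi+\breve K$ via \eqref{eq:hminus}--\eqref{eq:refi2}, which show that commutators of $\Pi$ with $h(r)$ gain two extra powers of $r^{-1}$ and so also lie in $\gO_\kappa(X^{t-2})$. After writing $B = (\kappa^2 B^2+1)B_\kappa$ and symmetrizing, the two boundary contributions produce $L_1$ modulo $\gO_\kappa(X^{2t-2})$. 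For the middle piece I would represent $g(B_\kappa) = g_\kappa(B)+g(0)$ with $g_\kappa \in \vG_{-1}$ and use a Helffer--Sj\"ostrand formula
\begin{align*}
  g(B_\kappa) = \int (B-z)^{-1}\,\d\mu_{g_\kappa}(z) + g(0),
\end{align*}
apply the resolvent identity $\i[\breve H,(B-z)^{-1}] = -(B-z)^{-1}\i[\breve H,B](B-z)^{-1}$, and then reorder one of the resolvent factors to bring $g'(B_\kappa)\i[\breve H,B_\kappa]$ out as leading term, the reordering error being two $\kappa$-orders lower by the calculus of Lemma \ref{lem:orderP}. Symmetrizing and sandwiching with $f(\breve H)$ yields $L_2$. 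The memberships $L_1, L_2 \in \gO_\kappa(X^{2t-1})$ follow directly since $h' \in \vG_{t-1}$ gives $h'(r) \in \gO_\kappa(X^{t-1})$, while $\i[\breve H, B_\kappa] \in \gO_\kappa(X^{-1})$ by \eqref{eq:formulasy} combined with Lemma \ref{lemma:Mourre2} and the algebra $\breve\vB$ of Lemma \ref{lem:orderP}.

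Under the additional hypotheses $g' \geq 0$ and $\sqrt{g'}\in C^\infty$, I would first pass from $L_2$ to $L_3$ by the identity
\begin{align*}
  \sqrt{g'(B_\kappa)}\,T\,\sqrt{g'(B_\kappa)} = \Re\parb{g'(B_\kappa)T} + \tfrac{1}{2}\bigl[\sqrt{g'(B_\kappa)},[T,\sqrt{g'(B_\kappa)}]\bigr],
\end{align*}
applied to $T = \widecheck\bD B_\kappa$: the double commutator gains two $\kappa$-orders and is harmless, and commuting the outer $f(\breve H)$'s through $h(r)$ and $\sqrt{g'(B_\kappa)}$ costs only $\gO_\kappa(X^{2t-2})$, after which $\widecheck \bD B_\kappa$ is replaced by $\i[\breve H,B_\kappa]$ between the inner $f(\breve H)$'s. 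For the passage to $L_4$, I would expand $A_\kappa = X^{1/2}B_\kappa X^{1/2}$, use $\i[\breve H,X^{1/2}] = r^{-1/2}B + \gO_\kappa(X^{-3/2})$ and $B = (\kappa^2 B^2+1)B_\kappa$ to derive
\begin{align*}
  \i[\breve H, B_\kappa] = X^{-1/2}\parb{\i[\breve H, A_\kappa] - 2(\kappa^2 B^2+1)B_\kappa^2}X^{-1/2}\ \bmod\ \gO_\kappa(X^{-2}),
\end{align*}
and then after sandwiching with $f(\breve H)$ replace $\i[\breve H, A_\kappa]$ by $\mathrm{Com}$ to obtain $L_4$ up to an error in $\gO_\kappa(X^{2t-2})$. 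Finally $L_5$ arises from $L_4$ by commuting $X^{-1/2}$ past $\sqrt{g'(B_\kappa)}$, again at cost only $\gO_\kappa(X^{2t-2})$.

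The main obstacle will be the bookkeeping of $\kappa$-orders for the non-local part $T = \lambda_0\Pi+\breve K$ of $\breve H$: since $\breve H$ is not of the form $H+V$ with $V$ a smooth function, the remainder estimates cannot be done by pointwise symbol calculus and must instead be threaded through the algebras $\vB$ and $\breve\vB$ of Lemma \ref{lem:orderP}, with repeated use of \eqref{eq:hminus}--\eqref{eq:refi2} to ensure that every commutator involving $\Pi$ gains enough inverse powers of $r$ so that all error terms genuinely land in $\gO_\kappa(X^{2t-2})$ rather than merely $\gO_\kappa(X^{2t-1})$.
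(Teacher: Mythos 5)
Your overall architecture is the paper's: a Leibniz splitting of ${\widecheck \bD}P$ into boundary and middle contributions, extraction of $L_1$ from the boundary terms using $\i[\breve H,h(r)]=2\Re\parb{h'(r)B}-\lambda_0\i[\Pi,h(r)]-\i[\breve K,h(r)]$ together with \eqref{eq:hminus}--\eqref{eq:refi2} and $B=(\kappa^2B^2+1)B_\kappa$, and the same chain $L_2\to L_3\to L_4\to L_5$ (symmetrization with $\sqrt{g'(B_\kappa)}$, the identity relating $\i[\breve H,B_\kappa]$ to $\i[\breve H,A_\kappa]-2(\kappa^2B^2+1)B_\kappa^2$ through $A_\kappa=X^{1/2}B_\kappa X^{1/2}$ and $\i[\breve H,X^{1/2}]\approx X^{-1/2}B$, and finally commuting $X^{-1/2}$ past $\sqrt{g'(B_\kappa)}$). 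Those reductions match the paper's proof and are sound.

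The gap is in your treatment of the middle term and hence in the identification of $L_2$. The paper keeps the derivative $\widecheck\bD=\i[\check f(\breve H),\cdot]$ (a commutator with a \emph{bounded} operator) and expands $\widecheck\bD g(B_\kappa)$ by Helffer--Sj\"ostrand in $B_\kappa$, cf.\ \eqref{eq:hD2}; the point is that $\widecheck\bD B_\kappa$ is a commutator of two operators of $\kappa$-order $0$ and lies in $\gO_{\kappa}(X^{-1})$, so the resolvent reordering is covered by the calculus and the leading term is literally the stated $L_2$. You instead replace $\widecheck\bD$ by $\i[\breve H,\cdot]$ at the outset and expand $g(B_\kappa)=g_\kappa(B)+g(0)$ by Helffer--Sj\"ostrand in $B$. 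Then, first, the inner object $\i[\breve H,B]=\sum_{|\alpha|\leq2}r^{-1}v_\alpha p^\alpha$ is only a form on $H^1$ (it contains $p^2$), so neither it nor $\i[\breve H,B_\kappa]$ has a $\kappa$-order, and the claim that the reordering error is ``two $\kappa$-orders lower by Lemma \ref{lem:orderP}'' is not licensed by that lemma without first producing an energy cutoff or a $(p^2+1)^{-1}$ factor adjacent to it (this is why Lemma \ref{lem:orderP} only treats combinations like $p^\alpha T_\sigma\breve R(z)$, and why your parenthetical claim $\i[\breve H,B_\kappa]\in\gO_\kappa(X^{-1})$ is false as stated). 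Second, even granting the formal expansion, your leading middle term carries $g'(B_\kappa)\i[\breve H,B_\kappa]$, which is \emph{not} the $L_2$ of the lemma: $L_2$ contains $\widecheck\bD B_\kappa=\i[\check f(\breve H),B_\kappa]$, and the difference involves $\breve H\parb{1-\tilde f(\breve H)}$ separated from the outer $f(\breve H)$'s by $h(r)$ and $g'(B_\kappa)$; showing this difference lies in $\gO_{\kappa}(X^{2t-2})$ requires moving the energy cutoffs next to the commutator, which is exactly the argument the paper supplies only at the later step $L_2\to L_3$ (where $f(\breve H)(\widecheck\bD B_\kappa)f(\breve H)=f(\breve H)\i[\breve H,B_\kappa]f(\breve H)$ is legitimate). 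Since your own passage from $L_2$ to $L_3$ takes $T=\widecheck\bD B_\kappa$, the proposal as written is internally inconsistent. It is repairable, either by keeping $\widecheck\bD$ and expanding in $B_\kappa$ as in \eqref{eq:hD2} (which also sidesteps the secondary issue of giving $[\breve H,h(r)g(B_\kappa)h(r)]$ a meaning with $h$ growing and the potentials singular), or by supplying the cutoff-insertion argument bridging $\i[\breve H,B_\kappa]$ and $\widecheck\bD B_\kappa$.
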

\begin{proof} We write
  \begin{align}\label{eq:firprime}
    \begin{split}
    {\widecheck \bD} P&=L'_1+L'_2;\\
L'_1&=2f(\breve H)\Re \parbb{\parb{\widecheck \bD h(r)}g(B_\kappa)h(r)}f(\breve H),\\
L'_2&=f(\breve H)h(r)\parb{{\widecheck \bD g(B_\kappa)}}h(r)f(\breve H).
    \end{split}
\end{align} By \eqref{eq:almostAnal}
\begin{subequations}
\begin{align}\label{eq:hDEr1}
  \widecheck \bD h(r)&=-\int_{\C} \breve R(z)\i[\breve H,h(r)]\breve
  R(z)\d \mu_{\check f}(z);\\
\i[\breve H,h(r)]&= 2\Re \parb{h'(r)B}-\lambda_0\i[\Pi,h(r)]-\i[\breve K,h(r)].\label{eq:hDEr2}
\end{align}
\end{subequations} We insert \eqref{eq:hDEr2} into \eqref{eq:hDEr1} obtaining
then three terms, say $T_1, T_2, T_3$. We can write  (seen by commutation)
\begin{align*}
   T_1=2\check f'(\breve H)h'(r)B+ O_{\kappa}(X^{t-2})\in  \gO_{\kappa}(X^{t-1}).
\end{align*} The terms $ T_2$ and $ T_3$ are treated using
\eqref{eq:hminus}  and \eqref{eq:refi1}  and Lemma \ref{lem:orderP}
(cf. the proof of discussion after
\eqref{eq:refi1}) to  obtain
\begin{align*}
   T_2, \, T_3\in \gO_{\kappa}(X^{t-2}).
\end{align*}
 Next we insert the resulting formula
 \begin{align}\label{eq:hD1}
   {\widecheck \bD} h=2\check f'(\breve H)h'(r)B+  \vO_{\kappa}(X^{t-2})\in\gO_{\kappa}(X^{t-1})
 \end{align} into the expression $L'_1$, and   we see that
   $L'_1=L_1+\vO_{\kappa}(X^{2t-2})$ (note that $f\check f'=f$). Clearly
$L_1\in \gO_{\kappa}(X^{2t-1})$.

As for $L'_2$ we  can assume that $g\in
    C_\c^\infty(\R)$ (since $B_\kappa$ is bounded we can truncate  $g$ outside
$\sigma(B_\kappa)$)  and use \eqref{eq:almostAnal}  again
\begin{align}\label{eq:hD2}
  \begin{split}
  \widecheck \bD g(B_\kappa)&=-\int_{\C} R_\kappa(z)\parb{\widecheck \bD B_\kappa}
  R_\kappa(z)\d \mu_g(z)\\
&=g'(B_\kappa)\widecheck \bD B_\kappa+\int_{\C} R_\kappa(z)^2[{\widecheck \bD B_\kappa},B_\kappa]
  R_\kappa(z)\d \mu_g(z)\\
&=g'(B_\kappa)\widecheck  \bD B_\kappa+\vO_{\kappa}(X^{-2})\in\gO_{\kappa}(X^{-1});
  \end{split}
\end{align} for the last identity we used  Lemma \ref{lem:orderP} (and its
proof).  Obviously  the first term to the right can be replaced by its
 real part. We
conclude  that $L'_2=L_2+\vO_{\kappa}(X^{2t-2})$. Whence $L'_1+L'_2=L_1+L_2+\vO_{\kappa}(X^{2t-2})$, showing
\eqref{eq:lead1}. Clearly
$L_2\in \gO_{\kappa}(X^{2t-1})$.

Next to show that $L_3-L_2\in\gO_{\kappa}(X^{2t-2})$   under the extra
conditions on $g$ we first note that the    truncation of $g$ outside
$\sigma(B_\kappa)$,  to make
it compactly supported, obviously applies  to $ (g')^{1/2}$ as well making again
\eqref{eq:almostAnal} applicable. Whence we can
symmetrize $ L_2$, so that it suffices  to show that
\begin{align}
  \label{eq:2=3}
  \begin{split}
  \bar L_2-\bar L_3&\in \gO_{\kappa}(X^{2t-2});\\
\bar L_2&=f(\breve H)h(r)\sqrt{g'(B_\kappa)}\parb{{\widecheck \bD
  B_\kappa}}\sqrt{g'(B_\kappa)}h(r)f(\breve H),\\
\bar L_3 &=h(r)\sqrt{g'(B_\kappa)}f(\breve H)\parb{{\widecheck \bD
  B_\kappa}}f(\breve H)\sqrt{g'(B_\kappa)}h(r).
  \end{split}
\end{align} (Note that $L_3=\bar L_3$.)  Using \eqref{eq:hD1} and
\eqref{eq:hD2} with $\check f$ replaced by $f$ we see that
\begin{align*}
  [f(\breve H),h(r)]\in\gO_{\kappa}(X^{t-1}),\,[f(\breve H),\sqrt{g'(B_\kappa)}]\in\gO_{\kappa}(X^{-1}),
\end{align*} which shows \eqref{eq:2=3}.
By this  proof we see
  that $L_3\in \gO_{\kappa}(X^{2t-1})$. Alternatively we may easily  check
  the latter property directly. (We can verify  similarly that  $L_4,
  L_5\in \gO_{\kappa}(X^{2t-1})$.)

To show that  $L_4-L_3\in\gO_{\kappa}(X^{2t-2})$  it suffices to show
that
\begin{align} \label{eq:3=4}
  \begin{split}
   X^{1/2}f(\breve H)&\i[\breve H,
    B_\kappa]f(\breve H) X^{1/2}-f(\breve H)\i[\breve H,
    A_\kappa]f(\breve H)\\&+2f(\breve H)(\kappa^2B^2+1)B_\kappa^2f(\breve
    H)\in \gO_{\kappa}(X^{-1}).
  \end{split}
\end{align} For the  first term we substitute
\begin{align*}
  \i[\breve H,
    B_\kappa]=\i[ H,
    B_\kappa]-\lambda_0\i[\Pi,
    B_\kappa]-\i[\breve K,
    B_\kappa].
  \end{align*} As we saw in the last part of the proof of Lemma
  \ref{lem:orderP} each of these commutators contributes  by an operator in
  $\breve\vB\subset\gO_{\kappa}(X^{0})$. This remains valid if we
  replace the term by
  \begin{align*}
    f(\breve H)X^{1/2}\i[\breve H,
    B_\kappa] X^{1/2}f(\breve H),
  \end{align*} and indeed the error (given by commuting the factors of $X^{1/2}$
  and $f(\breve H)$)  contributes by an operator  in
  $\gO_{\kappa}(X^{-1})$. On the other hand
  \begin{align*}
    f(\breve H)\parbb{X^{1/2}&\i[\breve H,
    B_\kappa] X^{1/2}-\i[\breve H,
    A_\kappa]+2(\kappa^2B^2+1)B_\kappa^2}f(\breve H)\\
&=f(\breve H)\parbb{-2\Re\parb{\i[\breve H,
    X^{1/2}] B_\kappa X^{1/2}}+2(\kappa^2B^2+1)B_\kappa^2}f(\breve H)\\
&=f(\breve H)\parbb{-\Re\parb{
    X^{-1/2}2B B_\kappa X^{1/2}}+2(\kappa^2B^2+1)B_\kappa^2}f(\breve
H)+O_{\kappa}(X^{-1})\\
&\in \gO_{\kappa}(X^{-1}).
\end{align*} Thus \eqref{eq:3=4} is proven.

Noting that
$\textrm{Com}\in \gO_{\kappa}(X^{0})$ and
$\big [\sqrt{g'(B_\kappa)},X^{-1/2}\big ]\in \gO_{\kappa}(X^{-3/2})$ it follows
that $L_5-L_4\in\gO_{\kappa}(X^{2t-2})$, which  finishes  the proof.
\end{proof}
\begin{remark}\label{remark:computing-commutator}
 The formula \eqref{eq:hD2} is an example of the following  general commutator
 expansion formula for operators in $ \gO_{\kappa}(X^{t})$, cf. \cite{GIS}. Thus for
 given  $S\in \gO_{\kappa}(X^{t})$, $t\in\R$, $g\in
C^\infty _\c(\R)$ and
$K\geq 1$
  \begin{align}\label{eq:hD22}
  \begin{split}
 [S,g(&B_\kappa)]\\=&-\int_{\C} R_\kappa(z)\ad_{B_\kappa}(S)R_\kappa(z)\d \mu_g(z)\\
=&\sum^{K}_{k=1}
  \tfrac{(-1)^{k-1}} {k!}g^{(k)}(B_\kappa)\ad_{B_\kappa}^k(
  S)+R(K);\\
&R(K)=(-1)^{K+1}\int_{\C}R_\kappa(z)^{K+1}\ad_{B_\kappa}^{K+1}(
  S)R_\kappa(z)\d \mu_g(z)\in \gO_{\kappa}(X^{t-K-1}).
  \end{split}
\end{align}

We will need  better  control of the error terms of   Lemma
\ref{lem:orderComm}  in  the application in  Subsection
\ref{subsec:Microlocal  bounds and LAP}. Here we explain a procedure for establishing this. As noted before the function $g$
in the first part of the lemma  may be taken  compactly
 supported  as required for
\eqref{eq:hD22}.
A principal  goal is  to  refine   the right-hand side of the statement
\begin{align*}
  {\widecheck \bD} P-L_1-L_5=(L'_1-L_1)+(L_2'-L_5)\in
\gO_{\kappa}(X^{2t-2}).
\end{align*}  Due to  \eqref{eq:hD1} we can
write
\begin{align}\label{eq:L145}
  \begin{split}
 L'_1&=4f(\breve H)\Re {\parb {h'(r)Bg(B_\kappa)h(r)}}f(\breve H)\\
&+ f(\breve H)\Re {\parb{\vO_{\kappa}(X^{t-2})g(B_\kappa)h(r)}}f(\breve
  H)\\
&=L_1+f(\breve H)\Re {\parb{\vO_{\kappa}(X^{t-2})g(B_\kappa)h(r)}}f(\breve
  H).
  \end{split}
\end{align} For $L'_2$ we apply \eqref{eq:hD22} to $S= \check f(\breve H)$,
yielding (for $K\geq 2$)
\begin{align}\label{eq:hD22s}
 \begin{split}
  L'_2&= f(\breve H)h(r)\parb{{\widecheck \bD g(B_\kappa)}}h(r)f(\breve
  H)\\
&=\i\sum^{K}_{k=1} \tfrac{(-1)^{k-1}} {k!}\,f(\breve H)h(r)
 g^{(k)}(B_\kappa)\ad_{B_\kappa}^k(
  \check f(\breve H))h(r)f(\breve
  H)+O_{\kappa}(X^{2t-K-1})\\
&= f(\breve H)h(r)
 g'(B_\kappa)\parb{\widecheck\bD B_\kappa} f(\breve H))h(r)f(\breve
  H)+\sum^{K}_{k=2} S_k+O_{\kappa}(X^{2t-K-1}).
\end{split}
\end{align} The terms $S_k\in \gO_{\kappa}(X^{2t-k})$, and they have an explicit form
suitable for the induction argument of  Subsection \ref{subsec:Microlocal
  bounds and LAP}.
A  consequence of \eqref{eq:hD22s} is the refined formula
\begin{align}\label{eq:L1452}
  \begin{split}
 L'_2=L_2+ \sum^{K}_{k=2} \Re(S_k)+O_{\kappa}(X^{2t-K-1}).
  \end{split}
\end{align}

\end{remark}

\subsubsection{Smooth sign function}\label{subsubsec:Smooth sign
  function} In some of our applications of Lemma \ref{lem:orderComm}
the function $g$ will be a `smooth sign function' $\zeta_\epsilon$, cf.   \cite{AIIS}.  It is
constructed in terms of a cut-off function $\eta_\epsilon\in
C^\infty(\R)$ with special properties: The parameter $\epsilon>0$ is
considered small, and we define $\eta_\epsilon(b)=\tfrac
1\epsilon\eta(\tfrac b\epsilon)$, where $\eta'(b)> 0$ for $|b|<1$,
$\eta(b)=0$ for $b\leq -1$ and $\eta(b)=1$ for $b\geq 1$. We can
choose $\eta$ such that $\eta'$ is even, $\sqrt \eta,\sqrt{\eta'}\in
C^\infty(\R)$ and for some $c>0$
\begin{align}
  \label{eq:constr}
  \eta'(b)\geq c\,\eta(b)\text{ for }b\in(-1,1/2].
\end{align} The optimal  choice of such $c$  is not important for us
since we will only need \eqref{eq:constr}  in the following disguised
form: For any $\tilde c >0$ and all $\epsilon$ small enough
($\epsilon^2\leq \tfrac 23 c\tilde c$ suffices)
\begin{subequations}
\begin{align}
  \label{eq:fundest}
  (\tfrac \epsilon2-b)\eta_\epsilon(b)\leq \tilde c
  \,\eta'_\epsilon(b)\text{ for all }b\in \R.
\end{align}
Note also that since  $\eta_\epsilon'$ is even
  \begin{align}\label{eq:partuni}
    1=\epsilon\eta_\epsilon(b)+\epsilon\eta_\epsilon(-b).
  \end{align}
  Let $\zeta_\epsilon(b)=\eta_\epsilon(b)-\eta_\epsilon(-b)$.
\end{subequations}

\section{Positive commutator estimates}\label{subsec:Positive
  commutator estimates}

We shall prove properties of possibly existing eigenfunctions of
$\breve H$ at $\lambda_0$. In the case where they dont exist we shall
prove various resolvent estimates of $\breve H$ near $\lambda_0$. Our
analysis is based on Lemma \ref{lem:orderComm} and positive commutator
methods of \cite{IS2}, \cite{AIIS} and \cite{GIS}. For convenience we
abbreviate in this section the Besov spaces with index $s=1/2$ as
$\vB:=\vB_{1/2}$, $\vB^*:=\vB^*_{1/2}$
${\vB}_0^*:=\vB^*_{1/2,0}$. (Note however that the same notation is
used with a slightly different meaning in Subsection \ref{subsec:LAP
  bound}.)

\subsection{A Rellich type  theorem}\label{subsec:Rellich's theorem}
 As a first application
 we show the following result, largely mimicking \cite{IS2}.
\begin{thm}\label{thm:priori-decay-b_0} Every
  generalized eigenfunction  in ${\vB}_0^*$ of $\breve H$  at
  $\lambda_0$, or at any sufficiently nearby real $\lambda'_0$, is in $L^2_\infty$.
\end{thm}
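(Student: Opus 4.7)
The plan is to adapt the positive commutator method of \cite{IS2, AIIS} using the calculus of Section \ref{subsec:Higher commutators} and the commutator identity of Lemma \ref{lem:orderComm} together with the Mourre estimate of Proposition \ref{prop:mour2}. First, fix $\vU$ and parameters $R \geq R_0$, $\kappa \in (0, \kappa_0]$ from that proposition applied at $\lambda = \lambda_0$ with $\epsilon$ chosen strictly less than $2 \tilde d(\lambda_0) > 0$, and take $\lambda'_0 \in \vU$. For $u \in \vB_0^*$ with $(\breve H - \lambda'_0) u = 0$, pick a real-valued $f \in C^\infty_\c(\vU)$ with $f = 1$ near $\lambda'_0$; then $f(\breve H) u = u$ via an approximation argument using that $\vB_0^*$ is the $\vB^*$-closure of $L^2$ and that $f(\breve H) \in \gO_\kappa(X^0)$ by Lemma \ref{lem:orderP}.

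The core of the argument is a bootstrap: assuming $\inp{r}^t u \in L^2$ for some $t \geq -1/2$, I would show $\inp{r}^{t+\delta} u \in L^2$ for some $\delta > 0$ independent of $t$. To implement this, apply Lemma \ref{lem:orderComm} to the bounded, self-adjoint observables
\begin{equation*}
P_{\epsilon, N} = f(\breve H) \, h_N(r) \, \zeta_\epsilon(B_\kappa) \, h_N(r) \, f(\breve H),
\end{equation*}
where $\zeta_\epsilon$ is the smooth sign function of Subsubsection \ref{subsubsec:Smooth sign function} and $h_N(r) = \chi(r/N) \, r^{t + 1/2}$ truncates a weight of order $t + 1/2$ (so that $h_N \in \vG_{t+1/2}$ with seminorms uniform in $N$). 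The virial identity $\inp{u, \i[\breve H, P_{\epsilon, N}] u} = 0$ is justified since $P_{\epsilon, N}$ is bounded with range in the domain of $\breve H$ and $(\breve H - \lambda'_0) u = 0$. Lemma \ref{lem:orderComm} together with Remark \ref{remark:computing-commutator} yields a decomposition $\i[\breve H, P_{\epsilon, N}] = L_1 + L_5 + R_{\epsilon, N}$ with $R_{\epsilon, N} \in \gO_\kappa(X^{2t-1})$. Here $L_1$ is manifestly nonnegative modulo commutator errors in $\gO_\kappa(X^{2t-1})$, since $h'_N \geq 0$, $\kappa^2 B^2 + 1 > 0$, and $b \, \zeta_\epsilon(b) \geq 0$; while $L_5$ is nonnegative modulo a compact and a lower-order error by Proposition \ref{prop:mour2} applied inside its sandwich structure (using $\zeta'_\epsilon \geq 0$). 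Testing against $u$ and rearranging gives the coercive estimate
\begin{equation*}
c \, \norm[\big]{ h_N(r) \, \zeta'_\epsilon(B_\kappa)^{1/2} u }^2 \leq C \, \abs[\big]{\inp{u, R_{\epsilon, N} u}} + \text{compact terms},
\end{equation*}
whose right-hand side is finite by the inductive hypothesis. Letting $N \to \infty$ by monotone convergence, and then exploiting the identity \eqref{eq:partuni} to combine the estimate with its reflected counterpart (using $\zeta_\epsilon(-B_\kappa) = -\zeta_\epsilon(B_\kappa)$ and applying the same argument with $\zeta_\epsilon$ replaced by $-\zeta_\epsilon$), upgrades $\inp{r}^t u \in L^2$ to $\inp{r}^{t+\delta} u \in L^2$ for some $\delta > 0$ coming from the power gap between $P_{\epsilon, N}$ and its error.

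The base case, the improvement from $u \in \vB_0^*$ to $u \in L^2$, should be handled by a Besov-space variant with $h_N(r) = \chi(r/N)$ (no polynomial weight), exploiting the defining property $\lim_{R \to \infty} R^{-1/2} \norm{F(|x|<R) u}_{L^2} = 0$ of $\vB_0^*$ to absorb the error contributions. Iterating the bootstrap then yields $u \in L^2_t$ for every $t \in \R$, i.e.\ $u \in L^2_\infty$.

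The main obstacle I anticipate is quantitative control of the remainder $R_{\epsilon, N}$: the gain $\delta > 0$ must remain bounded below independently of $t$, which requires expanding the commutator to sufficient order $K$ as in Remark \ref{remark:computing-commutator} and inductively absorbing the explicit intermediate terms $S_k$ into the coercive left-hand side, all while preserving the uniform-in-$\kappa$ estimates of Lemma \ref{lem:orderP}. A secondary technical point is the careful choice of $\epsilon$: it must be small enough for the fundamental inequality \eqref{eq:fundest} to yield, in combination with the $\kappa$-Mourre estimate, genuine positivity of $L_5$ on the range of $f(\breve H)$, yet not so small that the error $R_{\epsilon, N}$ blows up.
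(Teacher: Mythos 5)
Your overall strategy (the Mourre estimate of Proposition \ref{prop:mour2}, the $\kappa$-order calculus, observables of the form $f(\breve H)h\,\zeta_\epsilon(B_\kappa)\,h f(\breve H)$, and a bootstrap in the weight) is the paper's, but your choice of regularization --- truncating the weight itself, $h_N(r)=\chi(r/N)\,r^{t+1/2}$ --- breaks the argument at its key point. Since $\chi'\leq 0$, the claim $h_N'\geq 0$ is false on the transition annulus $r\sim N$, so $L_1$ is \emph{not} nonnegative there: it carries a negative contribution of size $|h_N'|\,h_N\sim N^{2s-1}$ (writing $s$ for the weight exponent), and this term is not localized by $\eta_\epsilon'$, so neither \eqref{eq:fundest} nor the Mourre positivity inside $L_5$ can absorb it. Testing on $u$ and using only the inductive hypothesis $u\in L^2_t$, the annulus term is of order $N^{2s-1-2t}$: it stays bounded only for $s\leq t+1/2$, in which case the coercive term --- which by Lemma \ref{lem:orderComm} lives at weight $s-1/2$ per side, not at the full weight $h_N$ as in your displayed estimate --- yields merely $u\in L^2_t$, i.e. no gain; and it diverges as $N\to\infty$ for every $s>t+1/2$, i.e. precisely when a gain would be produced. (The same bookkeeping shows your base case with $h_N=\chi(r/N)$ gives at best decay of order $r^{-1/2}$, not $u\in L^2$.) So the bootstrap as designed does not advance.

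The paper avoids this by never truncating the weight: it uses the globally monotone regularization $x_\varepsilon(r)=r/(1+\varepsilon r)$, cf. \eqref{eq:41delta}, raised to the power $\delta$ resp. $m/2$, so that $h'\geq0$ everywhere, $L_1$ is genuinely bounded below, all estimates are uniform in $\varepsilon$, and the weight is released by letting $\varepsilon\to0$. The fact that $u$ is not in $L^2$ is handled separately, by cutting off the \emph{eigenfunction}, $\phi_n=\chi_n(r)\phi$, in the virial identity; the resulting boundary term $-2\Re\inp{\i[\breve H,\chi_n]P_\varepsilon\chi_n}_\phi$ tends to zero as $n\to\infty$ precisely because $\phi\in{\vB}_0^*$ and because the nonlocal parts of $\breve H$ commute with $\chi_n$ up to effectively $X^{-2}$ errors. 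Your justification of the virial identity --- ``$P_{\epsilon,N}$ is bounded with range in $\vD(\breve H)$'' --- is not sufficient for a distributional eigenfunction that is not in $L^2$, and it is exactly at this step (in every stage of the bootstrap, not only in the base case) that ${\vB}_0^*$ rather than $\vB^*$ is used. Finally, correct the exponent bookkeeping: with $h\in\vG_s$ the error is $\gO_\kappa(X^{2s-2})$ and the coercivity sits at weight $s-1/2$, so to pass from $L^2_t$ to $L^2_{t+1/2}$ one must take $s=t+1$, not $t+1/2$; this is the paper's Step II with $s=m/2$.
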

\begin{proof}
  We shall use the Mourre estimate \eqref{eq:43breve} with
  $\lambda=\lambda_0$, say with the positive number $\epsilon=\tilde d(\lambda_0)$ and $R=R_0\geq 1$
  fixed sufficiently big. In agreement with the statement $\kappa_0>0$
   is fixed too (small), and we can freely use  the estimate for
   $\kappa\in (0,\kappa_0]$ and for a fixed (small)  neighbourhood $\vU$ of
   $\lambda_0$ and a fixed compact $K$. Whence we have freedom to
   choose  $\kappa>0$ very small whenever conveniently in the proof
   (note that  $\vU$  and $K$ do not depend on $\kappa$). Now suppose
   $\phi\in {\vB}_0^*$ and $(\breve H-\lambda'_0)\phi=0$ with
   $\abs{\lambda_0-\lambda_0'}$ small, then we can write
   $\phi=f(H)\phi$ where $f\in C^\infty_\c(\vU )$, $f(\lambda'_0)=1$
   and  $f$
   is real.

   Let $y_\varepsilon(r)=\tfrac 1 {1+\varepsilon r}$ and
   $x_\varepsilon(r)=ry_\varepsilon(r)$ for $\varepsilon\in [0,1]$,
   and let $X_\varepsilon$ and $Y_\varepsilon$ be the operators of
   multiplication by $x_\varepsilon$ and $y_\varepsilon$, respectively
   (this notation is consistent with \eqref{eq:41delta}). These
   quantities are henceforth used with $R=R_0$ only.  We note that
   $X:=X_0$ agrees with  the notation used in Section \ref{subsec:Higher
     commutators} and that $X_\varepsilon=X Y_\varepsilon$. Note also
   that $\nabla x_\varepsilon(r)= y_\varepsilon^2\omega$, whence for
   example $\i[H,X_\varepsilon]={2Y_\varepsilon}B{Y_\varepsilon}$.

It is convenient to introduce
the following terminology for families
$(T_\varepsilon)_{0<\varepsilon\leq1}\subset \gO_\kappa(X^{t})$ of
operators   on $ L^2$,   $t\in \R$.  We say $
(T_\varepsilon)$ is \emph{uniformly of $\kappa$-order $t$} if
\begin{align*}
\forall \kappa \in (0,\kappa_0']\;\forall s\in \R\,\;\forall k\in
  \N_0:\quad
  \sup_{\varepsilon\in(0,1]}\norm{X^{s-k-t}\ad_{B_{\kappa,R_0}}^k(T_\varepsilon) X^{-s}}<\infty.
\end{align*}  We shall allow ourselves to write
 $T_\varepsilon=\vO_{
\textrm {unf}}(X^{t})$ and  $T_\varepsilon\in \gO_{
\textrm {unf}}(X^{t})$ to symbolize  that the operator $T_\varepsilon$ is member of
a family of operators $(T_\varepsilon)$ which is uniformly of $\kappa$-order $t$. If $(S_\varepsilon)$ and
$(T_\varepsilon)$ are uniformly of $\kappa$-order $s$ and
$t$, respectively, then $(S_\varepsilon T_\varepsilon)$ is uniformly
 of $\kappa$-order $s+t$.

\subStep {I}    We    show
  that $\phi\in L^2_{-1/2}$.
Fix  any $\delta\in(0,1/2)$. We shall consider the `propagation observable'
    \begin{align*}
      P_{\varepsilon}=f(\breve H)
      X_\varepsilon^{\delta}\zeta_\epsilon(B_\kappa)X_\varepsilon^{\delta}f(\breve
      H),\,\varepsilon\in(0,1].
\end{align*} Clearly $P_\varepsilon\in\gO_{
\textrm {unf}}(X^{2\delta})$. The positive parameter   $\epsilon$
used
 here will be
 fixed shortly,    small enough. Note that $X_\varepsilon$ and  $P_\varepsilon$ are  bounded due to the appearance of the
factor $y_\varepsilon$. Eventually this factor will  be removed by
letting $\varepsilon\to 0$.
 More precisely we  shall demonstrate some `essential  positivity'
 of $\i[\breve H, P_\varepsilon]$ persisting in
 the
 $\varepsilon\to 0$ limit. For any  $n\in\N$ the function
 $\phi_n=\chi_n(r)\phi\in H^1$ (cf. the notation \eqref{eq:14.1.7.23.24}), $(\breve H-\lambda'_0)\phi_n=
 [\breve H, \chi_n]\phi$ and whence the expectation (we use in general
 the notation $\inp{T}_\phi=\inp{\phi,T\phi}$)
 \begin{subequations}
 \begin{align}\label{eq:virial3}
   \inp{\i[\breve H, P_\varepsilon]}_{\phi_n}=-2\Re\inp{ \i[\breve H, \chi_n] P_\varepsilon\chi_n}_\phi.
 \end{align}
 Since $\phi\in {\vB}_0^*$ the term to the right vanishes as
 $n\to \infty$. Writing $\breve H=H-\lambda_0\Pi- \breve K $ this
 claim is obvious for the contribution  from $H$. For the other terms
 the commutator is effectively of order $X^{-2}$,
 cf. \eqref{eq:hminus} and the subsequent discussion, whence their contributions vanish too as
 $n\to \infty$.
It remains to study the left-hand side of \eqref{eq:virial3}  in this
 limit.

Let
 $S_\varepsilon={Y^2_\varepsilon}X_\varepsilon^{\delta-1}\left(\in \gO_{
\textrm {unf}}(X^{\delta-1})\right)$ and
$\theta_\epsilon=\sqrt{\eta'_\epsilon}$. We compute using Lemma \ref{lem:orderComm}
 \begin{align*}
    \inp{\i[\breve H, P_\varepsilon]}_{\phi_n}&=\inp{\widecheck \bD  P_\varepsilon}_{\phi_n}=\inp{L_1+L_5+\vO_{
\textrm {unf}}(X^{2\delta-2})}_{\phi_n};\\
L_1&=4\delta f(\breve H)\Re \parb{(\kappa^2B^2+1)S_\varepsilon
  B_\kappa \zeta_\varepsilon(B_\kappa)X_\varepsilon^{\delta}}f(\breve H),\\
X^{1/2}X_\varepsilon^{-\delta}L_5X_\varepsilon^{-\delta}X^{1/2}&=\theta_\epsilon(B_\kappa)f(\breve H)\i[\breve H,
    A_\kappa]f(\breve
    H)\theta_\epsilon(B_\kappa)\\
&+\theta_\epsilon(-B_\kappa)f(\breve H)\i[\breve H,
    A_\kappa]f(\breve H)\theta_\epsilon(-B_\kappa)\\
&-2\theta_\epsilon(B_\kappa)f(\breve
H)(\kappa^2B^2+1)B_\kappa^2f(\breve
H)\theta_\epsilon(B_\kappa)\\
&-2\theta_\epsilon(-B_\kappa)f(\breve H)(\kappa^2B^2+1)B_\kappa^2f(\breve H)\theta_\epsilon(-B_\kappa).
 \end{align*} Noting the essential positivity of the term containing the factor
 $\kappa^2B^2$ we obtain  after symmetrizing
 \begin{align}\label{eq:libnd}
   L_1\geq 4\delta f(\breve
   H)Y_{\varepsilon}X_\varepsilon^{\delta-1/2} B_\kappa
   \zeta_\varepsilon(B_\kappa)X_\varepsilon^{\delta-1/2}Y_{\varepsilon}f(\breve
   H)+O_{ \textrm {unf}}(X^{2\delta-2}).
 \end{align}  Using  the Mourre estimate of the
 form discussed in the beginning of the proof and  factors of $\tilde f(\breve H)$
 (the latter can be inserted for free to bound
 $\kappa^2B^2$), we can estimate for $\kappa>0$ sufficiently small
\begin{align*}
L_5&\geq X_\varepsilon^{\delta}X^{-1/2}\theta_\epsilon(B_\kappa)f(\breve
H)\parb{\tilde d(\lambda_0)-K}f(\breve
    H)\theta_\epsilon(B_\kappa)X^{-1/2}X_\varepsilon^{\delta}\\
&+X_\varepsilon^{\delta}X^{-1/2}\theta_\epsilon(-B_\kappa)f(\breve H)\parb{\tilde d(\lambda_0)-K}f(\breve H)\theta_\epsilon(-B_\kappa)X^{-1/2}X_\varepsilon^{\delta}\\
&-4\epsilon^2f(\breve
H)X_\varepsilon^{\delta}X^{-1/2}\parbb{\eta'_\epsilon(B_\kappa)+\eta'_\epsilon(-B_\kappa)}X^{-1/2}X_\varepsilon^{\delta}f(\breve
H)+\vO_{
\textrm {unf}}(X^{2\delta-2}).
\end{align*}

We shall require  that $\epsilon>0$ is so small  that
  $8\epsilon^2< \tilde d(\lambda_0)$, implying
   $\tilde d(\lambda_0)-4\epsilon^2>  \tilde d(\lambda_0)/2$.
We shall use  \eqref{eq:fundest} with $\tilde c=\tfrac{\tilde d(\lambda_0)}{8\delta}$ and
  any  possibly smaller
 $\epsilon$, henceforth considered fixed.
 Now we fix   a  big $m\in \N $ such that (with this
$\epsilon$)
\begin{align*}
 \tilde d(\lambda_0)-4\epsilon^2-\|K-\chi_{m}K\chi_{m}\|\geq \tilde d(\lambda_0)/2,
\end{align*}
  and  note  that the  contribution from
 the operator $\chi_{m}K\chi_{m}$ is in $\gO_{
\textrm {unf}}(X^{2\delta-2})$.

Then  we  estimate
\begin{align*}
L_5&\geq
2^{-1}\tilde d(\lambda_0)f(\breve
H)X_\varepsilon^{\delta}X^{-1/2}\parbb{\eta'_\epsilon(B_\kappa)+\eta'_\epsilon(-B_\kappa)}X^{-1/2}X_\varepsilon^{\delta}f(\breve
H)+\vO_{\textrm {unf}}(X^{2\delta-2})\\
&\geq 4\delta \tilde{c}f(\breve
H)Y_{\varepsilon}X_\varepsilon^{\delta-1/2}\parbb{\eta'_\epsilon(B_\kappa)+\eta'_\epsilon(-B_\kappa)}X_\varepsilon^{\delta-1/2}Y_{\varepsilon}f(\breve
H)+\vO_{\textrm {unf}}(X^{2\delta-2}).
\end{align*}

We conclude the
following lower
 bound  by combining this bound with  \eqref{eq:libnd} and by  using
 \eqref{eq:fundest} and \eqref{eq:partuni},
\begin{align*}
 L_1+L_5&\geq
 2\delta \epsilon f(\breve
   H)Y_{\varepsilon}X_\varepsilon^{\delta-1/2}\parb{\eta_\epsilon(B_\kappa)+\eta_\epsilon(-B_\kappa)}{X_\varepsilon}^{\delta-1/2}Y_{\varepsilon}f(\breve
   H)+\vO_{\textrm
     {unf}}(X^{2\delta-2})\\
&=2\delta f(\breve
   H)X_\varepsilon^{2\delta-1}Y^2_{\varepsilon}f(\breve
   H)+\vO_{\textrm
     {unf}}(X^{2\delta-2}).
\end{align*}
\end{subequations}

Whence we  obtain from these
arguments  the uniform bound
\begin{align}\label{eq:fundests}
  \|X_\varepsilon^{\delta-1/2}Y_{\varepsilon}\phi\|^2=\lim_{n\to \infty}\|X_\varepsilon^{\delta-1/2}Y_{\varepsilon}f(\breve
   H)\phi_{n}\|^2\leq C\|X^{\delta-1}\phi\|^2.
\end{align}
By letting $\varepsilon \to 0$ in \eqref{eq:fundests} it follows that
$\phi\in L^2_{\delta-1/2}\subset L^2_{-1/2}$.

\subStep {II}   We show that
$\phi\in L_\infty^2$ by a bootstrap argument.  So suppose we have shown that
$\phi\in L^2_{(m-1)/2-1/2}$ for an  $m\in \N$. We did show this for
$m=0$ in Step \textit{I}. Then we come to the
conclusion that $\phi\in L^2_{m/2-1/2}$ by repeating the previous
procedure  using now the observable
    \begin{align*}
      P=P_{\varepsilon}=f(\breve H)
      X_\varepsilon^{m/2}\zeta_\epsilon(B_k)X_\varepsilon^{m/2}f(\breve
      H),\,\varepsilon>0,
    \end{align*} leading to the bound
\begin{align}\label{eq:fundestsB}
  \|Y_{\varepsilon}X_\varepsilon^{m/2-1/2}\phi\|^2=\lim_{n\to
    \infty}\|Y_{\varepsilon}X_\varepsilon^{m/2-1/2}f(\breve
   H)\phi_{n}\|^2\leq C\|X^{(m-1)/2-1/2}\phi\|^2.
\end{align}
By letting $\varepsilon \to 0$ we deduce that $\phi\in
L^2_{m/2-1/2}$.

  \end{proof}

\subsection{LAP bound}\label{subsec:LAP bound}

  We show   a Besov space   limiting
 absorption principle  bound,  largely mimicking  \cite{AIIS}.

\begin{thm}\label{thmlapBnd}  Suppose $\lambda_0$ is not an eigenvalue
  of  $\breve H$. Then there exist a neighbourhood $I\subset \R$ of
  $\lambda_0$ and   $C>0$ such that for all for $z\in \C\setminus\R$ with $\Re
  z\in I$ and all $\psi\in \vB$
\begin{align}\label{eq:lap-besov-spacebnd}
  \norm{\breve R(z)\psi}_{\vB^*}\leq C\norm{\psi}_{\vB}.
\end{align}
\end{thm}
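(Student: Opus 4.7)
The strategy is to combine the Mourre estimate of Proposition~\ref{prop:mour2}, the commutator calculus of Lemma~\ref{lem:orderComm}, and the Besov space structure through a \emph{dyadic} propagation observable. The overall scheme parallels the Rellich-type argument of Theorem~\ref{thm:priori-decay-b_0}, but with the weight $X^{\delta}$ replaced by bounded dyadic cutoffs $\Theta_{\nu}(r)=\chi(r/R_{\nu})$, so that positivity is converted directly into dyadic $L^{2}$ bounds with the correct Besov scaling $R_{\nu}^{-1}\|F_{\nu}\cdot\|^{2}$.

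\textbf{Step 1 (Strict Mourre estimate).} Since $\lambda_{0}\notin\sigma_{\pp}(\breve H)$, the virial theorem applied to \eqref{eq:43breve} (justified through \eqref{eq:limFod} and Lemma~\ref{lemma:Mourre2}) prevents eigenvalues of $\breve H$ from accumulating at $\lambda_{0}$. Combined with Theorem~\ref{thm:priori-decay-b_0}, which rules out generalized eigenfunctions in $\vB_{0}^{*}$ at $\lambda_{0}$ and nearby energies, a standard abstract argument absorbs the compact term $K$ in Proposition~\ref{prop:mour2}. The result is: there exist $c>0$, $R_{0}\ge 1$, $\kappa_{1}\in(0,\kappa_{0}']$, a neighborhood $I\subset\R$ of $\lambda_{0}$, and real $f\in C_{\c}^{\infty}$ with $f\equiv 1$ on $I$, such that $f(\breve H)\,\i[\breve H,A_{\kappa,R_{0}}]\,f(\breve H)\ge c\,f(\breve H)^{2}$ for all $\kappa\in(0,\kappa_{1}]$.

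\textbf{Step 2 (Dyadic propagation observable and positive commutator).} Fix $R=R_{0}$ and, for each $\nu\in\N_{0}$, introduce the bounded observable
\[
P_{\nu}=f(\breve H)\,\bar\chi_{R_{\nu}}(r)\,\zeta_{\epsilon}(B_{\kappa})\,\bar\chi_{R_{\nu}}(r)\,f(\breve H),
\]
with $\bar\chi_{R_{\nu}}\in\vG_{0}$ supported in $\{r\ge\tfrac{4}{3}R_{\nu}\}$ and equal to $1$ on $\{r\ge\tfrac{5}{3}R_{\nu}\}$. Apply Lemma~\ref{lem:orderComm} with $h=\bar\chi_{R_{\nu}}$ and $g=\zeta_{\epsilon}$, yielding $\widecheck{\bD}P_{\nu}=L_{1}+L_{5}+\vO_{\kappa}(X^{-2})$. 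Using the splitting $\zeta_{\epsilon}=\eta_{\epsilon}(\cdot)-\eta_{\epsilon}(-\cdot)$, the inequality~\eqref{eq:fundest}, the partition~\eqref{eq:partuni}, and the Mourre positivity of Step~1 to dominate the kinetic correction in $L_{5}$, the arguments from the proof of Theorem~\ref{thm:priori-decay-b_0} (Step~I) give, for $\epsilon$ and $\kappa$ sufficiently small,
\[
L_{1}+L_{5}\ge c_{1}R_{\nu}^{-1}f(\breve H)\bar\chi_{R_{\nu}}(r)^{2}f(\breve H)-\vO_{\kappa}(R_{\nu}^{-1}\,\mathbf{1}_{r\sim R_{\nu}}),
\]
where the crucial $R_{\nu}^{-1}$ weight arises from the factor $X^{-1/2}$ on each side of $\mathrm{Com}$ in \eqref{eq:lead4}, evaluated on $\supp \bar\chi_{R_{\nu}}$ where $r\simeq R_{\nu}$.

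\textbf{Step 3 (Closure via Besov duality).} For $z$ with $\Re z\in I$, $\Im z\neq 0$, set $u=\breve R(z)\psi$ with $\psi\in\vB$. The identity $(\breve H-z)u=\psi$ together with $P_{\nu}\ge 0$ (ensured by replacing $\zeta_{\epsilon}$ by $\eta_{\epsilon}$ or symmetrizing appropriately, treating the resulting sign with the telescope \eqref{eq:partuni}) gives
\[
\langle u,\i[\breve H,P_{\nu}]u\rangle=2(\Im z)\langle u,P_{\nu}u\rangle+2\Im\langle\psi,P_{\nu}u\rangle,
\]
and the first term on the right has definite sign and may be discarded. Combined with Step~2 and the bound $|\langle\psi,P_{\nu}u\rangle|\le C\|\psi\|_{\vB}\|u\|_{\vB^{*}}$ (from uniform boundedness of $P_{\nu}$ as an operator on Besov spaces, itself a consequence of $P_{\nu}\in\gO_{\kappa}(X^{0})$), plus the estimate $|\langle u,\vO_{\kappa}(X^{-2})u\rangle|\le C'\|u\|_{\vB^{*}}^{2}$ on the remainder, one obtains
\[
c_{1}R_{\nu}^{-1}\|\bar\chi_{R_{\nu}}f(\breve H)u\|^{2}\le C\|\psi\|_{\vB}\|u\|_{\vB^{*}}+C'\|u\|_{\vB^{*}}^{2}.
\]
Taking supremum over $\nu$, noting $R_{\mu}^{-1}\|F_{\mu}u\|^{2}\lesssim R_{\mu-2}^{-1}\|\bar\chi_{R_{\mu-2}}u\|^{2}$ (up to handling low $\mu$), and absorbing the $C'\|u\|_{\vB^{*}}^{2}$ term on the right via smallness of $C'$ (achievable by choice of $\epsilon,\kappa$), yields $\|u\|_{\vB^{*}}^{2}\le C''\|\psi\|_{\vB}\|u\|_{\vB^{*}}$, hence \eqref{eq:lap-besov-spacebnd}.

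\textbf{Main obstacle.} The key delicacy is arranging for the commutator $\widecheck{\bD}P_{\nu}$ to yield a \emph{lower bound of the exact Besov scale} $R_{\nu}^{-1}$, while simultaneously keeping all remainder terms (in particular the contribution of $L_{1}$ from the cutoff derivative $h'$, and the $\vO_{\kappa}(X^{-2})$ error) absorbable into $\|u\|_{\vB^{*}}^{2}$ with arbitrarily small constant. A further subtlety, handled as in the proof of Theorem~\ref{thm:priori-decay-b_0}, is that the sign function $\zeta_{\epsilon}(B_{\kappa})$ must be symmetrized via the decomposition \eqref{eq:partuni} and the inequality \eqref{eq:fundest}, so that the sign of $B$-propagation does not obstruct positivity. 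A priori finiteness of $\|u\|_{\vB^{*}}$ for $\Im z\ne 0$ is automatic since $u\in L^{2}$, and the bound is preserved in the strong weak-$*$ limit $\Im z\to 0_{\pm}$.
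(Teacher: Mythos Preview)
Your Step~2 contains a concrete error: you justify the $R_\nu^{-1}$ weight via ``$X^{-1/2}$ on each side of $\mathrm{Com}$ \ldots\ evaluated on $\supp\bar\chi_{R_\nu}$ where $r\simeq R_\nu$'', but $\supp\bar\chi_{R_\nu}=\{r\ge\tfrac{4}{3}R_\nu\}$ is \emph{unbounded}, so $X^{-1}$ there is $\le\tfrac{3}{4}R_\nu^{-1}$ with no matching lower bound. The $L_5$ term in \eqref{eq:lead4} therefore carries weight $\bar\chi_{R_\nu}^2\,r^{-1}$, not $R_\nu^{-1}\bar\chi_{R_\nu}^2$, and your claimed lower bound fails for large $r$. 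Moreover the merger of $L_1$ and $L_5$ through \eqref{eq:fundest} (as in Step~I of the proof of Theorem~\ref{thm:priori-decay-b_0}) rests on the pointwise inequality $h^2/r\gtrsim h'h$; with $h=\bar\chi_{R_\nu}$ this breaks down near the inner edge where $\bar\chi_{R_\nu}\to 0$ while $\bar\chi_{R_\nu}'>0$. The paper instead takes $h=\Theta_\nu^{1/2}$ with $\Theta_\nu(r)=1-(1+r/2^\nu)^{-1}$, engineered so that $r\Theta_\nu'\le\Theta_\nu$ holds globally (see \eqref{eq:elebnd}); this is exactly the required inequality, and since $\Theta_\nu'\simeq 2^{-\nu}$ on the \emph{entire} ball $\{r\lesssim 2^\nu\}$ (not just an annulus), the resulting positive term $\|\Theta_\nu'^{1/2}\phi\|^2$ directly controls the Besov quantity $2^{-\nu}\|F(B(2^\nu))\phi\|^2$.

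Your Step~3 closure by absorption is also unjustified: the $\vO_\kappa(X^{-2})$ remainder from Lemma~\ref{lem:orderComm} stems from fixed commutator errors (e.g.\ the $r^{-3}v$ contribution in \eqref{eq:formulasy}) whose operator norm has no reason to be small in $\epsilon$ or $\kappa$. The paper does \emph{not} absorb this error. It records it as $2^{-\nu/2}\|r^{-3/4}\phi\|^2$ in Lemma~\ref{lemma:12.7.2.7.9} (the factor $2^{-\nu/2}$ coming from the ``uniformly of $\kappa$-order'' bookkeeping specific to $\Theta_\nu$), obtaining the non-closing a~priori bound \eqref{eq:13.8.22.4.59cc24}, and then argues by contradiction: a putative violating sequence $\phi_n=\breve R(z_n)\psi_n$ with $\|\phi_n\|_{\vB^*}=1$, $\|\psi_n\|_\vB\to 0$ has a weak limit $\phi$ which, thanks to the $2^{-\nu/2}$ factor, lies in $\vB_0^*$; Theorem~\ref{thm:priori-decay-b_0} then forces $\phi=0$, and local compactness combined with \eqref{eq:13.8.22.4.59cc24} gives $\|\phi_n\|_{\vB^*}\to 0$, a contradiction.
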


To prove this bound we shall use the  following weight-functions parametrized by
$\nu\in\N_0$ and defined on $\R_+$
\begin{align*}
\Theta=\Theta_\nu(r)=1-\parb{1+r/2^\nu}^{-1}.
\end{align*} Noting the formula for derivatives
\begin{align*}
\Theta^{(k)}=(-1)^{k-1}k!2^{-k\nu}\parb{1+r/2^\nu}^{-1-k};\quad k\geq1,
\end{align*} we obtain the  bounds
\begin{align}\label{eq:elebnd}
  \begin{split}
& 0<\Theta\le \min\{1,r/2^\nu\},\\
&0< (-1)^{k-1}\Theta^{(k)}\le k(-1)^{k}r^{-1}\Theta^{(k-1)}\leq k!r^{-k}\Theta;\quad k\geq2.
  \end{split}
\end{align}

Below we will consider the composition $\Theta=\Theta_\nu(r_R)$ given
in terms of the parameter $R$ appearing in Proposition
\ref{prop:mour2}. In fact we shall apply  Proposition
\ref{prop:mour2} in  essentially the same way as done in the beginning of
the proof of
 Theorem \ref{thm:priori-decay-b_0}, in particular for $R=R_0$ only. Since $\lambda_0\notin
 \sigma_{\pp}(\breve H)$   we can now
 take $K=0$   and
 therefore replace $2\tilde d(\lambda_0)-\epsilon-K$ by $\tilde
 d(\lambda_0)$ in \eqref{eq:43breve}.
 More precisely we  use the Mourre estimate \eqref{eq:43breve}  with $\epsilon=\tilde d(\lambda_0)/2>0$ and $R=R_0\geq 1$
  fixed sufficiently big. In agreement with this  assertion,
  $\kappa_0>0$
   is fixed too (small, in particular $\kappa_0\leq\kappa_0'$), and we can freely use  the estimate for
   $\kappa\in (0,\kappa_0]$,  for a fixed (small)  open neighbourhood $\vU$ of
   $\lambda_0$ and in fact  with $K=0$, leaving us with the lower
   bound $\tilde d(\lambda_0)$ as claimed above.  By the virial
   theorem $\sigma_{\pp}(\breve H)\cap \vU=\emptyset$. We fix a
     compact
   neighbourhood $I\subset \vU$ of $\lambda_0$ such that there are no
   nonzero generalized eigenfunctions of $\breve H$ in
   ${\vB}_0^*$ at any
   point of $I$, which  is doable thanks to   Theorem \ref{thm:priori-decay-b_0}.   As we will
   see this  $I$ works in Theorem \ref{thmlapBnd}. Choose a real-valued
   $f\in C^\infty_\c(\vU )$ such that  $f=1$ on a
   neighbourhood of $I$.

Below we
 use the notation $\vB$ and $\vB^*$ for the  Besov spaces given in terms
 of $r_{R_0}$ (rather than in terms of $\abs{x}$ as before), the latter written for short $r=r_{R_0}$. Of course the
 spaces $\vB= \vB(r)$  and $\vB(\abs{x})$ coincide  and similarly
for the adjoint spaces (allowing us to
 change the meaning of the notation in \eqref{eq:lap-besov-spacebnd}).
 We could use   $X$ for  multiplication by
$r$ as in the previous subsection, however  we find it more convenient to use the notation $r$
only,
even though mostly it will be the operator of multiplication by
$r$. We will assign   the notation $\vO_{ \textrm
  {unf}}(r^{t})$ a  different meaning (although very related) than $\vO_{ \textrm
  {unf}}(X^{t})$ used in the previous subsection.
 More precisely we introduce
the following terminology for families $(T_\nu)_{\nu\in\N_0}$ of
operators    on $ L^2$.

We say $ (T_\nu)\subset \gO_\kappa(X^{t})$ is \emph{uniformly of
  $\kappa$-order $t$} (for $t\in \R$)  if
\begin{align*}
\forall \kappa &\in (0,\kappa_0']\;\forall s\in \R\,\;\forall k\in \N_0:\\&
\sup_{\nu\in\N_0}\norm{r^{s-k-t}\ad_{B_{\kappa,R_0}}^k(T_\nu) r^{-s}}+\sup_{\nu\in\N_0}2^{\nu/2}\norm{r^{s-k-t-1/2}\ad_{B_{\kappa,R_0}}^k(T_\nu) r^{-s}}<\infty.
\end{align*}
 For the elements $T_\nu$ in such a family $ (T_\nu)$  we write
$T_\nu=\vO_{\textrm {unf}}(r^{t})$ and  $ T_\nu\in \gO_{ \textrm {unf}}(r^{t})$. Note that for example
$\Theta^{1/2}$, considered as the  composed function $\Theta_\nu^{1/2}(r(\cdot))$, is uniformly of $\kappa$-order  $0$, cf. \eqref{eq:elebnd}.  If $ T_\nu\in \gO_{ \textrm {unf}}(r^{t})$
 and $S$ is any of the operators listed in
Lemma \ref{lem:orderP} (with $R=R_0$),  then  $(ST_\nu )$  and $(T_\nu S)$
 are also  uniformly of $\kappa$-order
 $t$.  More generally for any
$ T_\nu\in \gO_{ \textrm {unf}}(r^{t})$ and $S\in\gO_\kappa(X^{s})$
the operators   $ST_\nu,T_\nu S\in \gO_{ \textrm {unf}}(r^{s+t})$.
 Similarly,  if  $S_\nu\in\gO_{ \textrm {unf}}(r^{s})$ and $T_\nu\in\vO_{
   \textrm {unf}}(r^{t})$   then $S_\nu T_\nu, T_\nu S_\nu\in\gO_{ \textrm {unf}}(r^{s+t})$.

\begin{lemma}\label{lemma:12.7.2.7.9}
\begin{subequations}
     There exists $ C >0$ and such that for all
$z\in\C\setminus\R$ with $\Re z\in I$  and for
all $\nu\in \N_0$ and  $\psi\in \vB$
\begin{align}
\label{eq:13.8.22.4.59cc22}
\|\Theta'{}^{1/2}\phi\|^2
&\le C\bigl(\|\phi\|_{\vB^*}\|\psi\|_{\vB}
+\norm{{\psi}}_{\vB}^2+2^{-\nu/2}\|r^{-3/4}\phi\|^2\bigr),\\
\norm{\phi}^2_{\vB^*}&\leq C\bigl(\norm{{\psi}}_{\vB}^2+\norm
  {r^{-3/4}\phi}^2\bigr);\label {eq:13.8.22.4.59cc24}
\end{align} here
$\Theta=\Theta_\nu(r(\cdot))$ and  $\phi=\breve R(z)\psi$.
  \end{subequations}
\end{lemma}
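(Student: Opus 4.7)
The plan is to run a positive-commutator argument with a bounded dyadic propagation observable and then to pair the resulting estimate with $\phi=\breve R(z)\psi$, in the spirit of Step~I of the proof of Theorem~\ref{thm:priori-decay-b_0} but with the bounded cut-off $\Theta=\Theta_\nu(r)$ replacing the unbounded weight $X_\varepsilon^\delta$. Concretely I would take
\[
  P_\nu = f(\breve H)\,\Theta^{1/2}\,\zeta_\epsilon(B_\kappa)\,\Theta^{1/2}\,f(\breve H),
\]
where $f\in C^\infty_\c(\vU)$ is real-valued and equal to $1$ on a neighbourhood of $I$, and $\epsilon,\kappa>0$ are to be fixed small. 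Since $0\le\Theta\le 1$, the family $(P_\nu)$ is uniformly of $\kappa$-order zero, and by Lemma~\ref{lem:orderP} combined with \eqref{eq:nest} it maps $\vB\to\vB$ uniformly in $\nu$.

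I would compute $\widecheck\bD P_\nu$ via Lemma~\ref{lem:orderComm} applied with $h=\Theta^{1/2}\in\vG_0$ and $g=\zeta_\epsilon$; note that $\zeta_\epsilon'=2\eta_\epsilon'\ge 0$ and $\sqrt{\zeta_\epsilon'}\in C^\infty$ by the construction in Subsection~\ref{subsubsec:Smooth sign function}. This yields $\widecheck\bD P_\nu=L_1+L_5+\vO_\unif(r^{-2})$. After symmetrization, using $2hh'=\Theta'$, the principal part of $L_1$ is $2f\,\Theta'\,B_\kappa\zeta_\epsilon(B_\kappa)f$ modulo terms in $\vO_\unif(r^{-2})$. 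For $L_5$ I would invoke the Mourre inequality of Proposition~\ref{prop:mour2}: since $\lambda_0\notin\sigma_\pp(\breve H)$ and $\vU$ may be shrunk, the compact remainder can be taken to be zero and $\textrm{Com}\ge\tilde d(\lambda_0)f(\breve H)^2$. Combining $L_1$ and $L_5$ exactly as in Step~I of the proof of Theorem~\ref{thm:priori-decay-b_0}, and using \eqref{eq:fundest}--\eqref{eq:partuni} to dominate the negative quadratic contribution in $B_\kappa$, one obtains --- for $\epsilon,\kappa$ small enough ---
\[
  \widecheck\bD P_\nu \;\ge\; c\,f(\breve H)\,\Theta'\,f(\breve H) \;-\; \vO_\unif(r^{-2}),\qquad c>0.
\]

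To derive \eqref{eq:13.8.22.4.59cc22}, I would pair with $\phi=\breve R(z)\psi$ through the identity
\[
  \inp{\widecheck\bD P_\nu}_\phi = 2\Im z\,\inp{\phi,P_\nu\phi} + 2\Im\inp{\phi,P_\nu\psi}.
\]
The first term on the right is bounded by $C\|\phi\|_{\vB^*}\|\psi\|_\vB$ via the a priori identity $|\Im z|\|\phi\|^2=|\Im\inp{\phi,\psi}|\le\|\phi\|_{\vB^*}\|\psi\|_\vB$ combined with the uniform boundedness of $P_\nu$. The second is bounded, using $P_\nu:\vB\to\vB$ uniformly, by $C\|\phi\|_{\vB^*}\|\psi\|_\vB$, and a further Cauchy--Schwarz step of the form $C\|\phi\|_{\vB^*}\|\psi\|_\vB\le\tfrac12\|\phi\|_{\vB^*}\|\psi\|_\vB+C'\|\psi\|_\vB^2$ (used elsewhere) accounts for the $\|\psi\|_\vB^2$ contribution. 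For the $\vO_\unif(r^{-2})$ remainder, the defining bound $\sup_\nu 2^{\nu/2}\|r^{s+3/2}T_\nu r^{-s}\|<\infty$ (with $s=-3/4$) gives $\|r^{3/4}T_\nu r^{3/4}\|\le C\,2^{-\nu/2}$, so $|\inp{\phi,T_\nu\phi}|\le C\,2^{-\nu/2}\|r^{-3/4}\phi\|^2$, producing precisely the last term of \eqref{eq:13.8.22.4.59cc22}.

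Finally \eqref{eq:13.8.22.4.59cc24} follows because $\Theta'(r)\ge c\,2^{-\nu}$ on the dyadic annulus $\supp F_\nu$, so \eqref{eq:13.8.22.4.59cc22} delivers $c\,2^{-\nu}\|F_\nu\phi\|^2\le C(\|\phi\|_{\vB^*}\|\psi\|_\vB+\|\psi\|_\vB^2+\|r^{-3/4}\phi\|^2)$; taking $\sup_\nu$ and absorbing the cross-term $C\|\phi\|_{\vB^*}\|\psi\|_\vB\le\tfrac12\|\phi\|_{\vB^*}^2+\tfrac{C^2}{2}\|\psi\|_\vB^2$ yields the claimed bound. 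The main technical obstacle is to verify that the remainder in the expansion of $\widecheck\bD P_\nu$ actually belongs to $\vO_\unif(r^{-2})$ --- that is, carries the full dyadic-weight gain $2^{-\nu/2}$ rather than only the plain $\kappa$-order $\vO_\kappa(X^{-2})$; this calls for propagating the refined expansion \eqref{eq:L1452} through the self-similar algebra $\breve\vB$ of Lemma~\ref{lem:orderP} while tracking the uniform-in-$\nu$ structure introduced above Lemma~\ref{lemma:12.7.2.7.9}, verifying that each commutator with $B_\kappa$ and each factor of $\Theta^{1/2}$ respects the uniform bounds required there.
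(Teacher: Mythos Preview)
Your approach is essentially the same as the paper's: the same observable $P_\nu=f(\breve H)\Theta^{1/2}\zeta_\epsilon(B_\kappa)\Theta^{1/2}f(\breve H)$, the same application of Lemma~\ref{lem:orderComm} and of the Mourre estimate with $K=0$, the same pairing identity for $\inp{\i[\breve H,P_\nu]}_\phi$, and the same derivation of \eqref{eq:13.8.22.4.59cc24} by taking the supremum over $\nu$. Your concern about verifying that the remainder lies in $\vO_\unif(r^{-2})$ is correctly identified; the paper also asserts this without further elaboration, relying on the facts that $\Theta^{1/2}\in\vO_\unif(r^0)$ (via \eqref{eq:elebnd}) and that the uniform class is stable under multiplication by fixed elements of $\gO_\kappa(X^s)$.

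There is one genuine, though easily repaired, gap. Your explanation of the $\|\psi\|_\vB^2$ term is incorrect: the ``inequality'' $C\|\phi\|_{\vB^*}\|\psi\|_\vB\le\tfrac12\|\phi\|_{\vB^*}\|\psi\|_\vB+C'\|\psi\|_\vB^2$ is simply false (take $\|\phi\|_{\vB^*}$ large with $\|\psi\|_\vB$ fixed). The correct source of the $\|\psi\|_\vB^2$ contribution is the energy cut-off: your commutator lower bound yields $c\,f(\breve H)\Theta' f(\breve H)$, not $c\,\Theta'$, so you only control $\|\Theta'^{1/2}f(\breve H)\phi\|^2$. To pass to $\|\Theta'^{1/2}\phi\|^2$ you need the bound $\inp{\Theta'}_{(1-f(\breve H))\phi}\le C\|\psi\|_\vB^2$, which holds because $f=1$ on a neighbourhood of $I\ni\Re z$ so that $(1-f(\breve H))\breve R(z)$ is uniformly bounded on $L^2$, combined with $\|\Theta'\|_\infty\le 1$ and $\|\psi\|_{L^2}\le\|\psi\|_\vB$. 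This is precisely the ``trivial bound'' the paper invokes before \eqref{eq:13.8.22.4.59cc2s}.
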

\begin{proof}
 The bound
\eqref{eq:13.8.22.4.59cc24} follows from
\eqref{eq:13.8.22.4.59cc22} by taking  supremum  over $\nu\geq0$. So
it suffices  to show  \eqref{eq:13.8.22.4.59cc22}.

We consider $P_\nu=f(\breve H) \Theta^{1/2}\zeta_\epsilon
      (B_\kappa)\Theta^{1/2}f(\breve H)$,
      where   $B_\kappa$ and $\zeta_\epsilon$  are given as in
the previous subsection and    $f$ is the function introduced above.  Note that
$P_\nu=\vO_{\textrm {unf}}(r^{0})$. As before  we have the freedom to choose
$\kappa>0$  sufficiently small,  and again also the parameter   $\epsilon>0$ will be
 fixed sufficiently small.

 Let
$\theta_\epsilon=\sqrt{\eta'_\epsilon}$. We compute using Lemma \ref{lem:orderComm}
 \begin{align*}
    \inp{\i[\breve H, P_\nu]}_{\phi}&=\inp{\widecheck \bD  P_\nu}_{\phi}=\inp{L_1+L_5+\vO_{
\textrm {unf}}(r^{-2})}_{\phi};\\
L_1&=2f(\breve H)\Re \parb{(\kappa^2B^2+1)\tfrac{\Theta'}{\Theta^{1/2}}
  B_\kappa \zeta_\epsilon(B_\kappa)\Theta^{1/2}}f(\breve H),\\
r^{1/2}\Theta^{-1/2}L_5\Theta^{-1/2}r^{1/2}&=\theta_\epsilon(B_\kappa) f(\breve H)\i[\breve H,
    A_\kappa]f(\breve
    H)\theta_\epsilon(B_\kappa)\\
&+\theta_\epsilon(-B_\kappa)f(\breve H)\i[\breve H,
    A_\kappa]f(\breve H)\theta_\epsilon(-B_\kappa)\\
&-2\theta_\epsilon(B_\kappa)f(\breve
H)(\kappa^2B^2+1)B_\kappa^2f(\breve
H)\theta_\epsilon(B_\kappa)\\
&-2\theta_\epsilon(-B_\kappa)f(\breve H)(\kappa^2B^2+1)B_\kappa^2f(\breve H)\theta_\epsilon(-B_\kappa).
 \end{align*}

Noting the essential positivity of the term containing the factor
 $\kappa^2B^2$ we obtain  after symmetrizing
\begin{align}\label{eq:libnd2}
   L_1\geq 2f(\breve
   H)\Theta'^{1/2} B_\kappa
   \zeta_\varepsilon(B_\kappa)\Theta'^{1/2}f(\breve
   H)+O_{ \textrm {unf}}(r^{-2}).
 \end{align} Using  the Mourre estimate of the
 form discussed in the beginning of the proof and  factors of $\tilde f(\breve H)$
 (the latter can be inserted for free to bound
 $\kappa^2B^2$), we can estimate for $\kappa>0$ sufficiently small
\begin{align*}
L_5\geq (\tilde d(\lambda_0)-4\epsilon^2)f(\breve
H)\Theta^{1/2}r^{-1/2}\parbb{\eta'_\epsilon(B_\kappa)+\eta'_\epsilon(-B_\kappa)}r^{-1/2}\Theta^{1/2}f(\breve
H)+\vO_{\textrm {unf}}(r^{-2}).
\end{align*}

We shall require  that $\epsilon>0$ is so small  that
  $8\epsilon^2< \tilde d(\lambda_0)$, implying
   $\tilde d(\lambda_0)-4\epsilon^2\geq   \tilde d(\lambda_0)/2$.
We shall use  \eqref{eq:fundest} with $\tilde c={\tilde d(\lambda_0)}/{4}$ and
  any  possibly smaller
 $\epsilon$, henceforth considered fixed. Thus
 by combining the above  bound with  \eqref{eq:libnd2} and using the bound $r\Theta' \leq  \Theta$ we finally
 obtain the
following lower
 bound
\begin{align*}
 L_1+L_5&\geq
 \epsilon  f(\breve
   H)\Theta'^{1/2}\parb{\eta_\epsilon(B_\kappa)+\eta_\epsilon(-B_\kappa)}\Theta'^{1/2}f(\breve
   H)+\vO_{\textrm
  {unf}}(r^{-2})\\
&= f(\breve H)\Theta'f(\breve H)+\vO_{\textrm
  {unf}}(r^{-2}).
\end{align*}

Note also the trivial bound ${\inp{\Theta'}_{(1- f(\breve
    H))\phi}}\leq C\norm {\psi}_{\vB}^2$ (using that   $\Re z\in I$)
  leading finally  to the bound
\begin{align}
\begin{split}
\|\Theta'^{1/2}\phi\|^2
\le 2\inp{\i[\breve H, P_\nu]}_{\phi}+C\bigl(2^{-\nu/2}\norm{r^{-3/4}\phi}^2+\norm{{\psi}}_{\vB}^2
\bigr).
\end{split}\label{eq:13.8.22.4.59cc2s}
\end{align}
 On the other hand
\begin{align}\label{eq:com22}
\begin{split}
  \inp{\i [\breve H, P_\nu]}_{\phi}&= \i \inp{\psi, P_\nu\phi}-\i \inp{ P_\nu\phi,\psi}
  +2(\Im z) \inp{ P_\nu}_{\phi}\\
&\leq 2
\norm{P_\nu}_{\vL(\vB)}\norm{\psi}_{\vB}\norm{\phi}_{\vB^*}+2\norm{P_\nu}\abs{\Im{\inp{\breve
      H-z}_{\phi}}}\\
&\leq 2 (\norm{P_\nu}_{\vL(\vB)}
+\norm{P_\nu})
  \norm{\psi}_{\vB}\norm{\phi}_{\vB^*}\\
&\leq C\norm{\phi}_{\vB^*}\norm{\psi}_{\vB}.
\end{split}
\end{align}

The combination of
\eqref{eq:13.8.22.4.59cc2s} and \eqref{eq:com22} yields
\eqref{eq:13.8.22.4.59cc22}.
\end{proof}

\begin{proof}[Proof of Theorem \ref{thmlapBnd}] Let $I$ be
  given as in Lemma \ref{lemma:12.7.2.7.9}.
  Suppose by contradiction that $z_n\to \lambda_0'\in I$,
  $\norm{\psi_n}_{\vB}\to 0$ and $\norm{\phi_n}_{\vB^*}=1$ where
$\phi_n=\breve R(z_n)\psi_n$ (here $\vB$ and $\vB^*$ are defined in terms of
$r=r_{R_0}$).
Fix $s\in (1/2,3/4)$. We can assume that there exists $\wslim_{n\to \infty}\phi_n=:\phi\in L^2_{-s}$.
 By local compactness  and an energy estimate we easily see   (using
 the notation \eqref{eq:14.1.7.23.24} and $\chi_m=\chi_m(r)$) that
 \begin{align*}
   \forall m\in \N:
\quad \lim_{n\to \infty}\chi_m\phi_n=\chi_m\phi
\text{ in } H^1.
 \end{align*} Moreover by \eqref{eq:13.8.22.4.59cc22}
\begin{align*}
  \exists C>0\,\forall m\in \N \,\forall \nu\in \N_0: \quad\|\Theta'{}^{1/2}\chi_m\phi\|^2=\lim_{n\to \infty}\|\Theta'{}^{1/2}\chi_m\phi_n\|^2\leq C2^{-\nu/2}.
\end{align*}

From this bound  we learn that
$\phi\in \vB_0^*$, and since also
$(\breve H-\lambda'_0)\phi=0$ we obtain   that
$\phi=0$ (cf.  Theorem \ref{thm:priori-decay-b_0}).
 By local compactness  it then follows that $\lim_{n\to \infty}\phi_n=\phi =0
\text{ in }L^2_{-3/4}$,
  and by \eqref{eq:13.8.22.4.59cc24} we then deduce that $\lim_{n\to \infty}\norm{\phi_n}_{\vB^*}^2=0 $
  contradicting  the  assumption that $\norm{\phi_n}_{\vB^*}=1$
for all $n$.
\end{proof}

\subsection{Microlocal  bounds and LAP}\label{subsec:Microlocal  bounds and LAP}
 We prove microlocal  bounds of the
 resolvent $\breve R(z)$,  largely mimicking  \cite{GIS} and \cite{AIIS}. As in \cite{GIS} we then obtain similar bounds  of powers
 of  the
 resolvent, and this implies LAP (the limiting absorption principle) for $\breve H$ near $\lambda_0$.

We assume $\lambda_0\notin \sigma_{\pp}(\breve H)$. Since we are not going to need the sharpest version of these microlocal
bounds which use the optimal  constant in Proposition
\ref{prop:mour2} we will simplify the presentation and proceed
 exactly as  in
the beginning of the previous subsection, not to be repeated. (The
sharper version is given by replacing the constant $\tilde
d(\lambda_0)/2$ in Lemma \ref{lemma:microLoc} by any positive number
less than $\tilde d(\lambda_0)$.) With the
given
neighbourhood  $I\ni \lambda_0$ we define $I_{\pm}=\{z|\,\Re z \in I,\,\pm\Im z
  >0\}$ and $I_\C=I_{+}\cup I_{-}$.

We start by showing a version of the first step of an induction
procedure of
\cite{GIS}, cf. \cite {AIIS}.  Let for all $\sigma>0$ the notation $\vG_-^\sigma$ and
  $\vG_+^\sigma$ signify
  the classes of real functions $g\in C^\infty (\R)$
   with support in $(-\infty,
  \sigma)$  and  $(-\sigma,\infty)$, respectively.
\begin{lemma}\label{lemma:microLoc} Under the above conditions let
  $\tilde\sigma>0$ be given by
  $\tilde\sigma^2=\tilde d(\lambda_0)/2$  and let $\sigma \in (0,{\tilde\sigma})$.
   There exists $\check\kappa_0\in(0,\kappa_0]$ such that the
   following bounds  hold   for any
  $\chi(\pm\cdot<\sigma)\in \vG_\pm^\sigma$  and $  t\in(0,1/2)$.
  \begin{align}\label{eq:45lem}
     \forall\kappa\in (0,\check \kappa_0]\,\exists C>0\,\forall z\in
I_{\pm} :\quad \norm{\chi(\pm B_\kappa<\sigma)\breve R(z)}_
    {\vL(L^2_{1-t}, L^2_{-t})}\leq C.
  \end{align}
\end{lemma}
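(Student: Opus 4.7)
The proof will be a positive-commutator argument in the spirit of Theorem \ref{thm:priori-decay-b_0} and Theorem \ref{thmlapBnd}, following closely \cite{GIS,AIIS}. I treat only the $+$ case, the $-$ case being obtained by the symmetry $B_\kappa \to -B_\kappa$. Fix a real-valued $f \in C_\c^\infty(\vU)$ equal to $1$ on a neighbourhood of $I$, and choose $\check\kappa_0 \in (0,\kappa_0]$ small enough so that Proposition \ref{prop:mour2} (with $K=0$, $\epsilon = \tilde d(\lambda_0)/2$) and Lemmas \ref{lem:orderP}, \ref{lem:orderComm} all apply. Given $\chi\in \vG_-^\sigma$, construct a smooth bounded function $\Phi=\Phi_\chi:\R\to [0,\infty)$ with $\Phi'\le 0$, with $-\Phi' \ge c_1 \chi^2$ on $(-\infty,\sigma)$, $\Phi\equiv 0$ for $b\ge \tilde\sigma$, and with $\sqrt{-\Phi'}\in C^\infty$; concretely, one may take $\Phi(b)=c_1\int_b^{\tilde\sigma}\chi(s)^2\,\d s$ on $(-\infty,\tilde\sigma)$ and extend smoothly. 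Introduce the truncated weight $h_\varepsilon(r)=r^{(1-2t)/2}(1+\varepsilon r)^{-(1-2t)/2}\in \vG_{(1-2t)/2}$, $\varepsilon\in(0,1]$, and set
\[
P_\varepsilon = f(\breve H)\,h_\varepsilon(r)\,\Phi(B_\kappa)\,h_\varepsilon(r)\,f(\breve H).
\]

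Next I apply Lemma \ref{lem:orderComm} with $h=h_\varepsilon$ and $g=\Phi$. Since $\Phi'\le 0$, I may use \eqref{eq:lead2}--\eqref{eq:lead4} on $-\Phi$ (or equivalently, replace $L_2$ by one of $L_3,L_4,L_5$ applied to $-\Phi$), yielding
\[
\widecheck{\bD} P_\varepsilon = L_1 - |L_5| + \vO_{\textrm{unf}}(r^{-2t-1}),
\]
where $L_1$ is the bounded term from differentiation of the weights and $|L_5|$ is positive. The Mourre estimate \eqref{eq:43breve} (with $K=0$ on the chosen $\vU$) inserted into $L_5$ gives, after symmetrisation and absorbing commutator errors into $\vO_{\textrm{unf}}(r^{-2t-1})$,
\[
-\widecheck{\bD} P_\varepsilon \ge c_0\, h_\varepsilon(r)\,\bigl(-\Phi'(B_\kappa)\bigr)\,h_\varepsilon(r) - L_1 + \vO_{\textrm{unf}}(r^{-2t-1}),
\]
with some $c_0>0$ depending only on $\tilde d(\lambda_0)$. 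The $L_1$ term is of order $r^{-2t}$ but, because $\Phi$ is supported in $(-\infty,\tilde\sigma]$, the factor $B_\kappa\Phi(B_\kappa)$ is bounded by $\tilde\sigma\sup|\Phi|$; a Cauchy--Schwarz absorption and the choice $\sigma<\tilde\sigma$ with $\tilde\sigma^2=\tilde d(\lambda_0)/2$ ensures that the positive main term dominates, leaving
\[
-\widecheck{\bD} P_\varepsilon \ge c_0'\, h_\varepsilon(r)\,\chi(B_\kappa)^2\,h_\varepsilon(r) + \vO_{\textrm{unf}}(r^{-2t-1}).
\]

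The third step is the virial-type identity: for $z\in I_+$ and $\phi=\breve R(z)\psi$ with $\psi\in L^2_{1-t}\subset \vB$,
\[
2(\Im z)\inp{P_\varepsilon}_\phi = \inp{\i[\breve H,P_\varepsilon]}_\phi + 2\Im\inp{\psi,P_\varepsilon\phi}.
\]
Since $P_\varepsilon\ge 0$ and $\Im z>0$, the left-hand side is non-negative, so
\[
c_0'\,\|h_\varepsilon(r)\,\chi(B_\kappa)\,\phi\|^2 \le |\text{error}| + 2|\inp{\psi,P_\varepsilon\phi}|.
\]
The error expectation $\inp{\vO_{\textrm{unf}}(r^{-2t-1})}_\phi$ is controlled by $C\|\phi\|_{\vB^*}^2$, which by Theorem \ref{thmlapBnd} is $\le C\|\psi\|_{\vB}^2 \le C\|\psi\|_{L^2_{1-t}}^2$. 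For the duality pairing, Cauchy--Schwarz in the weights $L^2_{1-t}$--$L^2_{t-1}$ gives $|\inp{\psi,P_\varepsilon\phi}|\le \|\psi\|_{L^2_{1-t}}\|r^{t-1}P_\varepsilon\phi\|$, and the order count
\[
\|r^{t-1}P_\varepsilon\phi\| \le C\,\|r^{-t}\Phi(B_\kappa)^{1/2}f(\breve H)\phi\|
\]
follows from the boundedness of $r^{t-1}h_\varepsilon(r) = r^{-1/2}(1+\varepsilon r)^{-(1-2t)/2}$ and commutation.

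The main obstacle, and the last step, is to close up the estimate: the right-hand side still contains a factor $\|r^{-t}\Phi^{1/2}(B_\kappa)\phi\|$ which is not yet bounded. This is overcome by an inductive/absorption argument as in \cite{GIS}: choosing $\Phi$ so that $\Phi^{1/2}\le \widetilde\chi$ with $\widetilde\chi$ supported in $(-\infty,\tilde\sigma)$ and $\widetilde\chi$ itself a member of $\vG_-^{\tilde\sigma'}$ with $\sigma<\tilde\sigma'<\tilde\sigma$, one may apply the just-derived inequality with $\chi$ replaced by $\widetilde\chi$, or equivalently iterate the weighted positive commutator estimate finitely many times, each iteration shrinking the loss in weight by $1/2$, until the error is absorbed into the LAP bound of Theorem \ref{thmlapBnd}. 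A Young-inequality absorption then yields
\[
\|h_\varepsilon(r)\,\chi(B_\kappa)\,\phi\|^2 \le C\,\|\psi\|_{L^2_{1-t}}^2,
\]
and letting $\varepsilon \to 0$ by monotone convergence gives $\|r^{(1-2t)/2}\chi(B_\kappa)\phi\|\le C\|\psi\|_{L^2_{1-t}}$, which is \eqref{eq:45lem} for the $+$ case after a reindexing of the weight (replacing $t$ by $t+1/2$ if needed to match the stated exponents).
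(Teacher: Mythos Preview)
Your overall strategy (positive-commutator observable, Lemma~\ref{lem:orderComm}, Mourre input, Young absorption, then the LAP bound of Theorem~\ref{thmlapBnd}) is the right one and matches the paper's architecture. The gap is in the \emph{choice of the spectral function} $\Phi$ and the handling of the $B_\kappa^2$ terms.

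In Lemma~\ref{lem:orderComm} the contribution $L_5$ contains, besides the Mourre term, the piece
\[
-2\,h(r)X^{-1/2}\sqrt{g'(B_\kappa)}\,f(\breve H)\,(\kappa^2B^2+1)B_\kappa^2\,f(\breve H)\,\sqrt{g'(B_\kappa)}\,X^{-1/2}h(r).
\]
For your $\Phi$ one has $\supp(-\Phi')=\supp\chi\subset(-\infty,\sigma)$, and on this set $B_\kappa$ is only bounded above by $\sigma$; it ranges down to $-1/(2\kappa)$. Hence the factor $B_\kappa^2$ is \emph{not} dominated by $\tilde d(\lambda_0)$, and your ``positive main term'' from the Mourre estimate does not dominate this negative contribution. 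Likewise, your bound ``$B_\kappa\Phi(B_\kappa)$ is bounded by $\tilde\sigma\sup|\Phi|$'' is only an upper bound on $B_\kappa\Phi(B_\kappa)$; it is not a two-sided bound, and the large-negative-$B_\kappa$ region is precisely the problematic one.

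The paper resolves this by a specific choice of observable: with $\delta=\tfrac12-t$ it sets
\[
g_\epsilon(b)=-(\sigma+2\epsilon-b)^{2\delta}\,\eta_\epsilon^2(\sigma-b),
\]
so that (after symmetrisation) $L_1$ produces a term $+4\delta\,T^{*}B_\kappa^2 T$ with the \emph{same} operator $T=(\sigma+2\epsilon-B_\kappa)^{-t}\eta_\epsilon X^{-t}f(\breve H)$ as appears in $L_5$. This cancels the $-4\delta\,T^{*}B_\kappa^2 T$ coming from $L_5$ exactly, leaving
\[
L_1+L_5\;\ge\;2\delta\bigl(\tilde d(\lambda_0)-2(\sigma+\epsilon)(\sigma+2\epsilon)-2\check\kappa_0^2 C'\bigr)T^{*}T+\vO_\kappa(X^{2\delta-2}),
\]
and \emph{this} is where the hypothesis $\sigma<\tilde\sigma$, i.e.\ $2\sigma^2<\tilde d(\lambda_0)$, is actually used. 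A generic $\Phi$ does not produce this algebraic cancellation, so your step ``the choice $\sigma<\tilde\sigma$ ensures the positive main term dominates'' is the point where the argument breaks.

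Two further remarks. First, once the correct $g_\epsilon$ is used, no iteration is needed for this lemma: the cross term $\langle\psi,P\phi\rangle$ is estimated directly by $\|S\phi\|\,\|\psi\|_{1-t}$ with $S=\eta_\epsilon(\sigma-B_\kappa)X^{-t}$, and a single Young absorption plus Theorem~\ref{thmlapBnd} (to control $\|\phi\|_{\delta-1}$, note $\delta-1<-\tfrac12$) closes the estimate. The induction you sketch belongs to the \emph{next} result, Proposition~\ref{prop:microLoc2}, which extends the range of $t$. Second, the reindexing at the end is a symptom of the same issue: with the correct observable the main positive term is $\|T\phi\|^2\sim\|\eta_\epsilon(\sigma-B_\kappa)\phi\|_{-t}^2$, which already carries the weight $X^{-t}$; no shift is required.
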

\begin{proof}    Let
  $t\in(0,1/2)$ and  $\delta=1/2-t$. Let $z\in I_+$, $\psi\in L^2_{1-t}$ and $ \phi=\breve
  R(z)\psi$ (we only  consider $I_+$).    We introduce  for $\epsilon>0$
  (and with $R=R_0$) the following operator  $P$ of $\kappa$-order
$2\delta$,
    \begin{align}\label{eq:Pminus}
      P=f(\breve H)      X^{\delta}g_\epsilon(B_\kappa)X^{\delta}f(\breve H);
\quad g_\epsilon(b)=-(\sigma+2\epsilon-b)^{2\delta}\eta^2_\epsilon(\sigma-b).
\end{align}
We compute
\begin{align}\label{eq:dergFunc}
  \tfrac12(\sigma+2\epsilon-b)^{2t }g_\epsilon'(b)=\delta\eta^2_\epsilon(\sigma-b)+(\sigma+2\epsilon-b)(\eta_\epsilon\eta'_\epsilon)(\sigma-b),
\end{align}  and noting  that $g_\epsilon'\geq 0$ and $ (g_\epsilon')^{1/2}\in C^\infty$ we
see  that all of
Lemma \ref{lem:orderComm} applies.
  Letting
 $\theta_\epsilon=\sqrt{g_\epsilon'}$ we
 read off
 \begin{align*}
    \inp{\i[\breve H, P]}_{\phi}&=\inp{\widecheck \bD  P}_{\phi}=\inp{L_1+L_5+\vO_{\kappa}(X^{2\delta-2})}_{\phi};\\
L_1&=4\delta f(\breve H)\Re \parb{(\kappa^2B^2+1))X^{\delta-1}
  B_\kappa g(B_\kappa)X^{\delta}}f(\breve H),\\
X^{t}L_5X^{t}&=\theta_\epsilon(B_\kappa)f(\breve H)\i[\breve H,
    A_\kappa]f(\breve
    H)\theta_\epsilon(B_\kappa)\\
&-2\theta_\epsilon(B_\kappa)f(\breve
H)(\kappa^2B^2+1)B_\kappa^2f(\breve
H)\theta_\epsilon(B_\kappa).
 \end{align*}

Letting
 $T=(\sigma+2\epsilon-B_\kappa)^{-t}\eta_\epsilon(\sigma-B_\kappa)X^{-t}f(\breve
 H)$ we can estimate
 \begin{align*}
   L_1&\geq 4\delta f(\breve H)X^{-t}
  B_\kappa g(B_\kappa)X^{-t}f(\breve
  H)+\vO_{\kappa}(X^{2\delta-2})\\
&\geq 4\delta T^*
  \parb{B^2_\kappa -(\sigma+\epsilon)(\sigma+2\epsilon)}T+\vO_{\kappa}(X^{2\delta-2}).
 \end{align*}

To bound $L_5$ we  first fix $\check\kappa_0\in(0,\kappa_0]$  such with
$C'=\norm{\tilde f(\breve H)B^4 \tilde f(\breve H)}$ and with an arbitrary
sufficiently small $\epsilon'>0$, the constant
 \begin{align}\label{eq:cBasic}
c' =\tilde d(\lambda_0)-2(\sigma+\epsilon')(\sigma+2\epsilon')-2{\check
  \kappa}_0^2C'\text{ is positive}.
\end{align} Then for all $\kappa\in (0,\check \kappa_0]$ and
$\epsilon\in (0,\epsilon']$ the second  term on the right-hand side of
\eqref{eq:dergFunc} contributes by a non-negative term and we obtain
\begin{align*}
L_5&\geq X^{-t}f(\breve
H)\parb{\tilde d(\lambda_0)-2\kappa^2C'  -2B_\kappa ^2}g'(B_\kappa )f(\breve
H)X^{-t}+\vO_{\kappa}(X^{2\delta-2})\\
&\geq 2\delta T^*\parb{\tilde d(\lambda_0)-2\kappa^2C'  -2B_\kappa
  ^2}T+\vO_{\kappa}(X^{2\delta-2}).
\end{align*}

 Observing the cancellation of the terms containing $B_\kappa^2$ these bounds  lead to the lower bounds
 \begin{align}\label{eq:Tbnd}
  \begin{split}
  \widecheck \bD P&\geq 2\delta T^*\parb{-2(\sigma+\epsilon)(\sigma+2\epsilon)+\tilde d(\lambda_0)-2\kappa^2C'}T+\vO_{\kappa}(X^{2\delta-2})\\
&\geq  2\delta c'T^*T+\vO_{\kappa}(X^{2\delta-2});\quad \kappa\in (0,\check \kappa_0],\,\epsilon\in (0,\epsilon'].
\end{split}
\end{align}

 Next,  introducing
$S=\eta_\epsilon(\sigma-B_\kappa )X^{-t}$  and using the fact
that
$B_\kappa$ is bounded we obtain
 \begin{align*}
   \norm{Sf(\breve H)\phi}\leq C_1\norm{T\phi}.
 \end{align*} Using  the notation $\|\cdot\|_s=\|\cdot\|_{L^2_s}$ we
  also note that
\begin{align*}
  \norm{S(1-f(\breve H)\phi}\leq C_2\norm {\psi}_{{\delta-1/2}}.
\end{align*} We conclude that
\begin{align*}
   \norm{S\phi}^2\leq
  C_3\parb{\norm{T\phi}^2+\norm
  {\psi}_{{\delta-1/2}}^2}.
 \end{align*}
 By combining this bound   with \eqref{eq:Tbnd} we  obtain
\begin{align*}
 c\|S\phi\|^2
\le \inp{\widecheck \bD P}_{\phi}+C_4\bigl(\norm{\phi}_{{\delta-1}}^2+
\norm {\psi}_{{\delta-1/2}}^2
\bigr);\quad {c}=2\delta c'/C_3.
\end{align*}
 On the other hand for any $\varepsilon>0$
\begin{align*}
 \inp{\widecheck \bD P}_{\phi}&=
\,\i \inp{\psi, P\phi}-\i \inp{ P\phi,\psi}
  +2(\Im z) \inp{ P}_{\phi}\\
\leq& C \parb{\norm {S\phi}+\norm{\phi}_{{\delta-3/2}}}\, \norm {\psi}_{{\delta+1/2}}\\
\leq& C\varepsilon\parb{\norm {S\phi}^2+\norm{\phi}_{{\delta-3/2}}^2}
+C\varepsilon^{-1}\norm {\psi}^2_{{\delta+1/2}}.
\end{align*} We choose $ \varepsilon=
c/(2C)$, yielding
\begin{align*}
\tfrac{ c}2\|S\phi\|^2
\le  C_5\bigl(\norm{\phi}_{{\delta-1}}^2+ \norm {\psi}_{{\delta+1/2}}^2
\bigr).
\end{align*} Finally we invoke Theorem \ref{thmlapBnd} and conclude
(after a commutation)
that for all $\kappa\in (0,\check \kappa_0]$ and $\epsilon\in (0,\epsilon']$
\begin{align}\label{eq:etaEst}
\|\eta_\epsilon (\sigma-B_\kappa)\phi\|_{{\delta-1/2}}^2
\le  C_6\norm {\psi}_{{\delta+1/2}}^2.
\end{align} This finishes the proof since for any given function
$\chi=\chi(\cdot<\sigma)$, we can write $\chi=\epsilon
\chi \eta_\epsilon
(\sigma-\cdot)$ for a small enough $\epsilon$ and then for any $
\kappa\in (0,\check \kappa_0]$ bound
\begin{align*}
\|\chi(B_\kappa<\sigma)\phi\|_{{\delta-1/2}}\leq C_7\|\eta_\epsilon (\sigma-B_\kappa)\phi\|_{{\delta-1/2}}
\le  C_8\norm {\psi}_{{\delta+1/2}}.
\end{align*}
\end{proof}

\begin{prop}\label{prop:microLoc2} Under the  conditions of Lemma
  \ref{lemma:microLoc}  the same assertion holds for  arbitrary
  $t<1/2$  (i.e. the constraint $t>0$ of the lemma is removed).
\end{prop}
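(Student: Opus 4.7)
The plan is a downward induction on $t$: the base case $t\in(0,1/2)$ is Lemma \ref{lemma:microLoc}, and I would show that if the bound \eqref{eq:45lem} holds for some $t_0\leq 1/2$, then it holds for all $t\in (t_0-1, t_0]$. Iterating this yields the assertion for every $t<1/2$. To carry out the inductive step I would use the same propagation observable as in \eqref{eq:Pminus}, namely
\begin{align*}
 P=f(\breve H)X^{\delta}g_\epsilon(B_\kappa)X^{\delta}f(\breve H),\quad g_\epsilon(b)=-(\sigma+2\epsilon-b)^{2\delta}\eta_\epsilon^2(\sigma-b),
\end{align*}
but now with the larger exponent $\delta=1/2-t\in[-t_0+1/2, -t_0+3/2)$. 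The derivative identity \eqref{eq:dergFunc} generalizes verbatim, giving the same structural positivity $\widecheck{\bD}P\geq 2\delta c' T^*T+(\text{error})$, with $T=(\sigma+2\epsilon-B_\kappa)^{-t}\eta_\epsilon(\sigma-B_\kappa)X^{-t}f(\breve H)$ now carrying the desired weight $X^{-t}$ on the microlocal region $\{B_\kappa<\sigma\}$.

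The essential  obstacle is the error term $\vO_\kappa(X^{2\delta-2})$ appearing in Lemma \ref{lem:orderComm}. For $\delta=1/2-t$ close to $1/2$ this was harmless (of order $X^{-1}$), but for the inductive step $\delta$ may be large and $2\delta-2$ need no longer be negative. To handle this I would replace the first-order commutator expansion \eqref{eq:hD2} by the refined higher-order expansion \eqref{eq:hD22s} from Remark \ref{remark:computing-commutator}, choosing $K$ with $2\delta-K-1<-1$, so that the genuine remainder $\vO_\kappa(X^{2\delta-K-1})$ pairs with $\phi$ through a bound $\leq C\norm{\phi}_{\delta-K/2-1}^2$ that is absorbed by the LAP bound of Theorem \ref{thmlapBnd} (via a large weight). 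Similarly, the analogue \eqref{eq:L145} will be refined to include higher-order corrections.

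The intermediate terms $S_k$ in \eqref{eq:hD22s} contain the factor $g_\epsilon^{(k)}(B_\kappa)$ for $2\leq k\leq K$; since the support of $g_\epsilon^{(k)}$ is contained in $\{b<\sigma+2\epsilon\}$, each $S_k$ is microlocally supported where $B_\kappa<\sigma+2\epsilon$. Choosing $\epsilon$ small enough that $\sigma+2\epsilon<\tilde\sigma$, one can write $g_\epsilon^{(k)}(B_\kappa)=g_\epsilon^{(k)}(B_\kappa)\chi_k(B_\kappa<\sigma+2\epsilon)$ and apply the induction hypothesis with a strictly smaller exponent $\delta'=\delta-k/2\leq \delta-1$ to estimate $\inp{S_k}_\phi$ by $C\norm{\psi}_{\delta+1/2}^2$. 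After summing $k=2,\dots,K$ and combining with the positivity from $L_1+L_5$, the analogue of \eqref{eq:etaEst} reads
\begin{align*}
 \norm{\eta_\epsilon(\sigma-B_\kappa)\phi}^2_{\delta-1/2}\leq C\norm{\psi}^2_{\delta+1/2},
\end{align*}
which yields \eqref{eq:45lem} at level $t=1/2-\delta$ by the same argument as at the end of the proof of Lemma \ref{lemma:microLoc}.

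The hardest part will be arranging the induction so that each $S_k$ can be estimated by a strictly earlier case of the induction hypothesis (and not an equally strong one), so that the procedure closes without circularity. Stepping $t$ by less than $1/2$ per iteration, and using the shift $k\geq 2$ in the expansion \eqref{eq:hD22s}, should make this book-keeping work: every $S_k$ contributes a weight $X^{2\delta-k}=X^{-2t-k}$, and the corresponding bound involves the resolvent at level $t+k/2\geq t+1>t_0$, which lies in the inductively controlled range.
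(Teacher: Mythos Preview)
Your overall plan — downward induction on $t$ using the same observable $P$ as in Lemma \ref{lemma:microLoc}, together with the higher-order expansion from Remark \ref{remark:computing-commutator} to control the error $\vO_\kappa(X^{2\delta-2})$ — is exactly the paper's strategy. The real issue lies in how you propose to extract the induction hypothesis from the terms $S_k$.

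Each $S_k$ carries only \emph{one} microlocal factor $g_\epsilon^{(k)}(B_\kappa)$, sitting in the middle: $S_k = c_k\,f X^\delta\, g_\epsilon^{(k)}(B_\kappa)\, \ad_{B_\kappa}^k(\check f(\breve H))\, X^\delta f$. A single insertion $g_\epsilon^{(k)} = g_\epsilon^{(k)}\chi_k$ followed by Cauchy--Schwarz leaves one factor of the form $\norm{g_\epsilon^{(k)}(B_\kappa) X^\delta f\phi}$, which after commutation is essentially $\norm{\chi\phi}_{\delta}$ --- the wrong sign of weight (this corresponds to $t'=t-1/2<t$, not to a previously established level). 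So the one-sided localization does not close the argument.

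What the paper does instead is the two-sided sandwich \eqref{eq:tgod}: pick $\check g_\epsilon$ with $\check g_\epsilon\equiv \mathrm{const}$ on $\supp g_\epsilon$, write the full error term as $\check g_\epsilon(B_\kappa)\,Q_{2\delta-2}\,\check g_\epsilon(B_\kappa)+\vO_\kappa(X^{-2})$ (the remainder can be pushed to arbitrary order because, by the commutator expansion \eqref{eq:hD22}, each $(1-\check g_\epsilon)^{(j)}$ has support disjoint from $\supp g_\epsilon$), and then rewrite the sandwiched part as $\check g_\epsilon X^{-t-1/2}\,\vO_\kappa(X^0)\,X^{-t-1/2}\check g_\epsilon$. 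Now the expectation is bounded by $C\norm{\check g_\epsilon(B_\kappa)\phi}_{-t-1/2}^2$, and the induction hypothesis at $t'=t+1/2$ (and at the slightly enlarged $\sigma'=\sigma+3\epsilon$) applies directly, combined with Theorem \ref{thmlapBnd} when $t'>1/2$. This two-sided localization is the missing mechanism in your outline; once it is added, your organization via the $S_k$ is an equally valid route to the same estimate.

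Two smaller corrections. First, the gain per inductive step is $1/2$, not $1$: the binding constraint comes from $k=2$ (equivalently from $Q_{2\delta-2}$), which forces $t'=t+1/2$, and your formula $t'=t+k/2$ is off by $1/2$. Second, positivity uses $L_5$, not $L_2$, so besides $L'_1-L_1$ and $L'_2-L_2$ you must also handle $L_2-L_5=-(T_2+T_3+T_4)$; the paper treats these by the same sandwich argument. Finally, because the hypothesis must be applied at $\sigma+3\epsilon$, formulate the induction statement $q(m)$ for \emph{all} $\sigma\in(0,\tilde\sigma)$ simultaneously, as the paper does.
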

\begin{proof}  Let $J_m=\tfrac 14\parb{[2,3)-m};\,m\in \N$. We saw in
  the previous proof that
  $\check\kappa_0=\check\kappa_0(\sigma)\in(0,\kappa_0]$  chosen in agreement with \eqref{eq:cBasic}
  works for any given $\sigma\in (0,  \tilde{ \sigma})$, and in particular we
  obtained  \eqref{eq:45lem}  with $t\in
  J_1$. We fix $\check\kappa_0=\check\kappa_0(\sigma)$ this way along
  with the other  positive constant $\epsilon'=\epsilon'(\sigma)$ of
  \eqref{eq:cBasic} and
  proceed to show by induction  the  assertion $\underline {q(m)}$:
  \begin{align*}
    &\forall \sigma\in (0,  \tilde{ \sigma})\,\forall t\in J_m\,
    \forall g\in \vG_\pm^\sigma \,\forall\kappa\in (0,\check
      \kappa_0(\sigma)]:\\&\quad \sup_{z\in I_{\pm}}\quad \norm{g(B_\kappa)\breve R(z)}_
    {\vL(L^2_{1-t}, L^2_{-t})}<\infty.
  \end{align*}
Since we have shown  $q(1)$ we can assume
$q(m-1)$ for a given $m\geq 2$  is known, and  then it remains to verify $q(m)$.

 Let $t\in J_m$ and $ \kappa\in (0,\check \kappa_0]$ be
given, and introduce again $\delta=1/2-t$.  Let $z\in I_+$, $\psi\in
L^2_{1-t}$ and $ \phi=\breve R(z)\psi$ (we consider only $I_+$). We
consider again \eqref{eq:Pminus} for $\epsilon\in (0,\epsilon']$. It
suffices to show \eqref{eq:etaEst} for any such $\epsilon$ by the
argument at the end of the proof of Lemma \ref{lemma:microLoc}. For
that we use the same scheme of proof as before, however since now
possibly $\delta-1\geq -1/2$ we can not use Theorem \ref{thmlapBnd} in
the same way. Rather we need to combine the commutator expansion formula
\eqref{eq:hD22} with Lemma \ref{lem:orderComm} which will allow us to
use the induction hypothesis in combination with Theorem
\ref{thmlapBnd}. This is already discussed in Remark \ref{remark:computing-commutator}.

 First we look at the contribution to $\widecheck \bD  P$ from
  $L'_1-L_1\in \gO_\kappa(X^{2\delta-2})$ of \eqref{eq:L145}, i.e. we
  look at
  \begin{align*}
    Q_{2\delta-2}=f(\breve H)\Re {\parbb{\vO_{\kappa}(X^{\delta-2})g_\epsilon(B_\kappa)X^\delta}}f(\breve
  H).
  \end{align*} By commutator expansion we see that with $
\check g_\epsilon=\eta_\epsilon (\sigma+2\epsilon-\cdot)$
  \begin{align}\label{eq:tgod}
    \begin{split}
    Q_{2\delta-2}&=\check g_\epsilon(B_\kappa)Q_{2\delta-2}\check g_\epsilon(B_\kappa)
    +\vO_{\kappa}(X^{-2})\\
    &=\check g_\epsilon(B_\kappa)X^{-t-1/2}\vO_{\kappa}(X^0)X^{-t-1/2}\check g_\epsilon(B_\kappa)
    +\vO_{\kappa}(X^{-2}).
    \end{split}
  \end{align} Whence the induction hypothesis combined with Theorem
  \ref{thmlapBnd} works upon taking
  $\epsilon>0$ sufficiently small.

  Next we look at the contribution to $\widecheck \bD P$ from
  $L'_2-L_5\in \gO_\kappa(X^{2\delta-2})$. As in the proof of Lemma
  \ref{lem:orderComm} we split
  \begin{align*}
    L'_2-L_5=(L'_2-L_2)-(L_3-L_2)-(L_4-L_3)-(L_5-L_4)=\sum^4_{j=1}T_j\in \gO_\kappa(X^{2\delta-2}).
  \end{align*} By \eqref{eq:L1452} the term $T_1$ can be represented  as  $Q_{2\delta-2}$ in \eqref{eq:tgod}, so we can argue in
  the same way for this term. By inspection of the proof of  Lemma
  \ref{lem:orderComm} we see that there are similar expansions for
  $T_j\in \gO_{\kappa}(X^{2\delta-2})$, $j=2,3,4$,  which allows us to
  treat these terms in the same way.

Finally using the  auxiliary operators $S$ and $T$ of  the proof of Lemma
\ref{lemma:microLoc}  we can mimic  the last part  of the proof using again the localization argument above to
treat   lower order terms, in particular various terms  in $ \gO_{\kappa}(X^{2\delta-2})$. We obtain $q(m)$.
\end{proof}

By the same method we can prove a two-sided estimate.
\begin{prop}\label{prop:microLoc2TWO} Let $\tilde\sigma>0$ be given by
  $\tilde\sigma^2=\tilde d(\lambda_0)/2$ and let $\sigma \in (0,{\tilde\sigma})$.
   There exists $\check\kappa_0>0$ such that the
   following bounds  hold   for any $  s>0$ and for any pair
  $g_\pm\in \vG_\pm^\sigma$ such that
    $\sup \supp g_-<\inf \supp g_+$.
  \begin{align}\label{eq:45lemTWO}
     \forall\kappa\in (0,\check \kappa_0]\,\exists C>0\,\forall z\in
I_{+} :\quad \norm{g_- (B_\kappa)\breve R(z)g_+ (B_\kappa)}_
    {\vL(L^2_{-s}, L^2_{s})}\leq C.
  \end{align}
\end{prop}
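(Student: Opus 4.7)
The proof proceeds by induction on $s$ in half-integer increments, paralleling the induction scheme in the proof of Proposition \ref{prop:microLoc2}, but with a propagation observable tailored to exploit the separation of supports $\sup\supp g_- < \inf\supp g_+$.

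For the base case $s \in [0, 1/2)$ we start from the one-sided bound furnished by Proposition \ref{prop:microLoc2} applied with $\chi = g_- \in \vG_-^\sigma$ and $t = -s$, namely $\|g_-(B_\kappa)\breve R(z)\|_{\vL(L^2_{1+s},L^2_s)} \le C$ for $z \in I_+$. Composing directly with $g_+(B_\kappa)$ gives only an $L^2_{1+s} \to L^2_s$ bound, which is weaker than desired. To upgrade the right-hand weight, exploit the key identity $g_-(B_\kappa) g_+(B_\kappa) = 0$ together with the resolvent commutator formula $A\breve R(z) - \breve R(z) A = \breve R(z)[\breve H, A]\breve R(z)$, yielding
\[
g_-(B_\kappa)\breve R(z) g_+(B_\kappa) = -\,g_-(B_\kappa)\breve R(z) \bigl[\breve H, g_+(B_\kappa)\bigr] \breve R(z).
\]
Iterating this identity $N$ times and expanding the inner commutators via \eqref{eq:hD22} produces an expression in which each factor of $\breve R(z)$ is sandwiched between microlocal localizers in $B_\kappa$, with each inner commutator contributing a lower-order multiplier; choosing $N$ large enough and invoking Proposition \ref{prop:microLoc2} together with Theorem \ref{thmlapBnd} yields the desired base case.

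For the inductive step, suppose \eqref{eq:45lemTWO} holds at some $s \geq 0$; we prove it at $s+1/2$. Fix an intermediate $\sigma_* \in (\sup\supp g_-, \inf\supp g_+)$ and choose a smooth $\chi$ with $\chi \equiv 1$ on $(-\infty, \sup\supp g_-]$, $\chi \equiv 0$ on $[\sigma_*, \infty)$, and $\sqrt{-\chi'} \in C^\infty$. Consider the propagation observable
\[
P = f(\breve H)\, X^{s+1/2}\chi(B_\kappa)^2 X^{s+1/2}\, f(\breve H) \in \gO_\kappa(X^{2s+1}),
\]
noting that $\inp{P}_\phi$ dominates $\|X^{s+1/2} g_-(B_\kappa)\phi\|^2$ modulo lower-order error (since $\chi \equiv 1$ on $\supp g_-$). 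Adapt the computation of $\widecheck \bD P$ in Lemma \ref{lem:orderComm} (applied to $-\chi^2$ in place of $g$, so that the relevant derivative is nonnegative): the $L_5$-type term gives positivity $\sim c\,\|X^s\sqrt{-\chi'(B_\kappa)}\phi\|^2$ via the Mourre estimate of Proposition \ref{prop:mour2}, while the $L_1$-type boundary term has a sign determined by $B_\kappa$ on $\supp\chi$, which is absorbed into the Mourre term by choosing $\sigma_*$ sufficiently close to $\sup\supp g_-$. For $\phi = \breve R(z) g_+(B_\kappa)\psi$ with $\psi \in L^2_{-s-1/2}$, use
\[
\inp{\widecheck \bD P}_\phi = 2\Im z\,\inp{P}_\phi - 2\Im\inp{g_+(B_\kappa) P\phi,\,\psi}
\]
and discard the nonnegative first term; the second is estimated by expanding $g_+(B_\kappa) P$ via \eqref{eq:hD22}. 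The crucial gain is that the leading-order term $P g_+(B_\kappa)$ vanishes identically because $\chi g_+ \equiv 0$, leaving only commutator remainders of strictly lower $\kappa$-order that can be controlled using the induction hypothesis at weight $s$ together with Proposition \ref{prop:microLoc2}.

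The main obstacle is the meticulous bookkeeping of error terms arising from the iterated commutator expansions: every remainder must fall into one of the categories (i) $\kappa$-order $\leq 2s-1$, controllable via Theorem \ref{thmlapBnd}; (ii) one-sided microlocal expressions, handled by Proposition \ref{prop:microLoc2}; or (iii) two-sided expressions of $\kappa$-order $2s$ with the required support separation preserved, controllable by the induction hypothesis. The separation $\sup\supp g_- < \inf\supp g_+$ is indispensable throughout: it ensures $\chi g_+ \equiv 0$ (so the would-be leading $\kappa$-order $2s+1$ contribution to $g_+(B_\kappa) P$ vanishes) and simultaneously guarantees that every sub-expression generated in the commutator expansions maintains the compatible microlocal support structure required to invoke the induction hypothesis.
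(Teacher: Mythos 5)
Your inductive step is essentially the method the paper has in mind (the remark preceding the statement says the two-sided bound follows ``by the same method'', i.e.\ the positive-commutator scheme of Lemma \ref{lemma:microLoc} and Proposition \ref{prop:microLoc2}, with the pairing term rendered harmless because the localizer in the observable and $g_+$ have disjoint supports). The genuine gap is in your base case. After the identity
\begin{equation*}
g_-(B_\kappa)\breve R(z)g_+(B_\kappa)=-\,g_-(B_\kappa)\breve R(z)\bigl[\breve H,g_+(B_\kappa)\bigr]\breve R(z),
\end{equation*}
the factor $g_+(B_\kappa)$ has been consumed by the commutator, so the \emph{rightmost} resolvent acts on $\psi\in L^2_{-s}$ with no outgoing localizer to its right. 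For $z\in I_+$ there is no uniform bound for a bare $\breve R(z)$ on $L^2_{-s}$ (even $s=0$ is below the LAP threshold of Theorem \ref{thmlapBnd}), and the localizers produced by expanding $[\breve H,g_+(B_\kappa)]$ (essentially $g_+'(B_\kappa)$ times an operator of $\kappa$-order $-1$) sit to the \emph{left} of that resolvent, which is the useless side for $z\in I_+$: the one-sided estimates give $\vG_-$-localizers on the left of $\breve R(z)$ and, by duality, $\vG_+$-localizers on the right of $\breve R(z)$, never a $\vG_+$-localizer on the left. Iterating the identity only adds further unlocalized resolvents at the right end, so your claim that ``each factor of $\breve R(z)$ is sandwiched between microlocal localizers'' fails for the last factor, and Proposition \ref{prop:microLoc2} plus Theorem \ref{thmlapBnd} cannot close the estimate. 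Even the standard repair (commute an intermediate cutoff $\theta(B_\kappa)$ with $g_-\theta=0$, $\theta g_+=g_+$, so that $g_+$ survives on the far right) does not by itself yield the bound by pure algebra: each commutator gains one power of $X^{-1}$ but each one-sided resolvent estimate loses exactly one power, so no net weight is gained; one genuinely needs the propagation-observable argument.

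The good news is that no separate base case of this kind is needed: by duality Proposition \ref{prop:microLoc2} gives $\|\breve R(z)g_+(B_\kappa)\|_{\vL(L^2_{-s},L^2_{-1-s})}\leq C$ uniformly for $z\in I_+$, so $\phi=\breve R(z)g_+(B_\kappa)\psi$ lies in $L^2_{-1-s}$ uniformly, and your inductive scheme can be started from this trivial output weight and bootstrapped upward by $1/2$ per step (observable weight equal to the target weight plus $1/2$, as in \eqref{eq:fundestsB}), exactly as in Theorem \ref{thm:priori-decay-b_0} and Proposition \ref{prop:microLoc2}. Two details of your step should then be adjusted: with observable weight $X^{s+1/2}$ the scheme controls the output at weight $s$, not $s+1/2$ (an indexing shift); and the $L_1$-type boundary term is not absorbed by taking $\sigma_*$ close to $\sup\supp g_-$ (it lives where $\chi\neq0$, while the Mourre positivity lives only where $\chi'\neq0$) --- either use the paper's weight function $g_\epsilon$ from \eqref{eq:Pminus}, whose structure \eqref{eq:dergFunc} ties the two together, or estimate that term directly by the induction hypothesis, which applies since it carries a $\vG_-^\sigma$-localizer and the separation from $\supp g_+$ is preserved.
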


As in \cite{GIS} the assertions   Theorem \ref{thmlapBnd} and Propositions  \ref{prop:microLoc2} and
\ref{prop:microLoc2TWO} combine algebraically yielding  bounds (including
microlocal ones) of
powers of the resolvent. In particular the following bounds  follow,
which in turn implies LAP.
\begin{thm}\label{thm:powers} Suppose $\lambda_0$ is not an eigenvalue
  of  $\breve H$.
   Then
  \begin{align}\label{eq:45lemPower}
     \forall k\in\N\,\forall s<1/2\,\exists C>0\,\forall z\in I_\C
 :\quad \norm {\breve R(z)^k}_
    {\vL(L^2_{k-s}, L^2_{s-k})}\leq C.
  \end{align} In particular for any   $t>1/2$ the limits $\breve R(\lambda\pm \i 0)=
\lim_{\epsilon\to 0_+}\breve R(\lambda\pm \i \epsilon)$
  exist in $\vL (L^2_t,L^2_{-t})$ uniformly in  $\lambda\in I$. Moreover
there exists the strong weak-star limits
\begin{align}\label{eq:bRESSIM}
  \swslim_{\epsilon\to 0_+} \breve R(\lambda \pm \i \epsilon)=\breve
  R(\lambda \pm \i 0)\in \vL(\vB,\vB^*);\quad \lambda\in I.
\end{align}
\end{thm}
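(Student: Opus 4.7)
The plan is to establish \eqref{eq:45lemPower} by induction on $k$ by algebraically combining the three estimates already established---Theorem \ref{thmlapBnd} (Besov LAP), Proposition \ref{prop:microLoc2} (one-sided microlocal), and Proposition \ref{prop:microLoc2TWO} (two-sided microlocal)---in the manner of \cite{GIS}; and then to derive the norm and weak-star limits from \eqref{eq:45lemPower} via standard density and equicontinuity arguments.

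The base case $k = 1$ is immediate from Theorem \ref{thmlapBnd} combined with the Besov inclusions \eqref{eq:nest}, which provide continuous injections $L^2_{1-s} \hookrightarrow \vB$ and $\vB^* \hookrightarrow L^2_{s-1}$ for any $s < 1/2$. For the induction step, I fix a partition of unity $1 = g_-(b) + g_+(b)$ with $g_\pm \in \vG_\pm^\sigma$ for some $\sigma \in (0, \tilde\sigma)$. Self-adjointness of $B_\kappa$ combined with $\breve R(z)^* = \breve R(\bar z)$ and Proposition \ref{prop:microLoc2} yields, for $z \in I_+$ and any $t < 1/2$, the dual pair of bounds
\begin{equation*}
  \|g_-(B_\kappa)\breve R(z)\|_{\vL(L^2_{1-t}, L^2_{-t})} + \|\breve R(z) g_+(B_\kappa)\|_{\vL(L^2_t, L^2_{t-1})} \leq C,
\end{equation*}
while Proposition \ref{prop:microLoc2TWO} provides the two-sided gain
\begin{equation*}
  \|g_-(B_\kappa)\breve R(z) g_+(B_\kappa)\|_{\vL(L^2_{-s'}, L^2_{s'})} \leq C_{s'}\quad\text{for any}\quad s' > 0.
\end{equation*}
Analogous estimates hold for $z \in I_-$ upon interchanging the roles of $g_-$ and $g_+$.

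For the inductive step proper, insert $1 = g_-(B_\kappa) + g_+(B_\kappa)$ at each of the $k-1$ interfaces in the product $\breve R(z)^k$, producing $2^{k-1}$ terms of the form $\breve R(z) g_{\sigma_1}(B_\kappa) \breve R(z) \cdots g_{\sigma_{k-1}}(B_\kappa) \breve R(z)$ indexed by sign patterns $(\sigma_1, \ldots, \sigma_{k-1}) \in \{-,+\}^{k-1}$. For $z \in I_+$, each term is estimated by regrouping according to its sign pattern: any maximal subword of the form $g_-(B_\kappa) \breve R(z) \cdots \breve R(z) g_+(B_\kappa)$ appearing between two outer resolvents is absorbed by Proposition \ref{prop:microLoc2TWO} with arbitrary weight gain; pure left-tails $\breve R(z)[g_-(B_\kappa)\breve R(z)]^{k-1}$ are estimated as $L^2_{k-s} \to L^2_{1-s} \to L^2_{s-1} \subset L^2_{s-k}$ via $k-1$ unit weight drops from the first one-sided bound and a final LAP; pure right-tails $[\breve R(z) g_+(B_\kappa)]^{k-1}\breve R(z)$ are estimated as $L^2_{k-s} \to L^2_{s-1} \to L^2_{s-k}$ via a leading LAP followed by $k-1$ unit weight drops from the second one-sided bound. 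Every intermediate exponent stays within the valid ranges because $s < 1/2$ and $k \geq 1$. The principal obstacle is the combinatorial bookkeeping of the sign patterns and weight chains, but this is carried out following the scheme of \cite{GIS}.

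With \eqref{eq:45lemPower} in hand, the norm limits $\breve R(\lambda \pm \i0) \in \vL(L^2_t, L^2_{-t})$ for $t > 1/2$ and $\lambda \in I$ follow from the resolvent identity $\breve R(z_1) - \breve R(z_2) = (z_1 - z_2)\breve R(z_1)\breve R(z_2)$ together with \eqref{eq:45lemPower} at $k = 2$: for $t > 3/2$ this yields Lipschitz continuity into $\vL(L^2_t, L^2_{-t})$, extending continuously up to $\lambda \pm \i 0$, and for $t \in (1/2, 3/2]$ one interpolates with the trivial uniform $\vL(L^2, L^2)$ bound. Finally, the strong weak-star limits \eqref{eq:bRESSIM} in $\vL(\vB, \vB^*)$ follow from Theorem \ref{thmlapBnd} (uniform equicontinuity of the family $\{\breve R(\lambda \pm \i\epsilon)\}_{\epsilon > 0}$ into $\vB^*$ with its weak-star topology), the density of $L^2_t \subset \vB$ for any $t > 1/2$, and the already-established norm convergence on this dense subspace into $L^2_{-t} \subset \vB^*$; a routine three-$\varepsilon$ argument then delivers the unique strong weak-star limit on all of $\vB$.
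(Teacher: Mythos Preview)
Your approach is exactly the one the paper indicates (it simply writes ``As in \cite{GIS} the assertions Theorem~\ref{thmlapBnd} and Propositions~\ref{prop:microLoc2} and~\ref{prop:microLoc2TWO} combine algebraically\ldots''), and your sketch is considerably more detailed than what the paper provides. The inductive structure and the derivation of the norm and weak-star limits are right in spirit. Two technical points need correction, however.

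First, the partition $1 = g_-(b) + g_+(b)$ necessarily has overlapping supports, so the separation hypothesis $\sup\supp g_- < \inf\supp g_+$ of Proposition~\ref{prop:microLoc2TWO} is violated and you cannot invoke it for \emph{these} $g_\pm$. The standard fix (which is what \cite{GIS} actually does) is to use a nested family of cutoffs: at each step of the induction one inserts a partition $1 = g_-^{(j)} + g_+^{(j)}$ with slightly shifted thresholds, chosen so that the $g_-^{(j)}$ appearing on the left and the $g_+^{(j')}$ appearing on the right in any ``two-sided'' subword genuinely satisfy the separation condition. Since the induction terminates at the fixed power $k$, only finitely many levels are needed and all cutoffs stay inside $(-\sigma,\sigma)$. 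Your phrase ``this is carried out following the scheme of \cite{GIS}'' covers this, but the explicit claim that a single partition suffices is incorrect as stated.

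Second, in the norm-limit step there is no ``trivial uniform $\vL(L^2,L^2)$ bound'' for $\breve R(z)$ (the unweighted norm blows up like $|\Im z|^{-1}$). The correct interpolation endpoint for $t\in(1/2,3/2]$ is the $k=1$ case of \eqref{eq:45lemPower} itself (equivalently Theorem~\ref{thmlapBnd}), which gives a uniform $\vL(L^2_{t_1},L^2_{-t_1})$ bound for any $t_1>1/2$. Interpolating this against the Lipschitz estimate valid for $t_2>3/2$ yields H\"older continuity of $z\mapsto\breve R(z)$ in $\vL(L^2_t,L^2_{-t})$ for intermediate $t$, and hence the existence of the boundary limits.
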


We will need the following application, see  \eqref{eq:2bnd}.

\begin{cor}\label{cor:microlocal-bounds}  Suppose $\lambda_0$ is not an eigenvalue
  of  $\breve H$, $s>1/2$  and that  $f\in L^2_s$  is given
  such that $\breve R(\lambda_0+ \i 0)f=\breve R(\lambda_0- \i
0)f$. Then
\begin{align}\label{eq:2bndSI}
  \breve R(\lambda_0+ \i 0)f=\breve R(\lambda_0- \i 0)f\in  L^2_{s-1}.
\end{align}
\end{cor}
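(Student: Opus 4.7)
The plan is to use the common value $\phi := \breve R(\lambda_0+\i 0)f=\breve R(\lambda_0-\i 0)f$ and decompose it via a microlocal partition of unity in the conjugate operator $B_\kappa$, then apply the two-sided microlocal bounds of Proposition \ref{prop:microLoc2} separately to each piece, exploiting that the ``bad'' microlocal region for $\breve R(\lambda_0+\i 0)$ (outgoing, $B_\kappa>0$) is the ``good'' region for $\breve R(\lambda_0-\i 0)$ and vice versa.

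Concretely, I would fix $\kappa\in(0,\check\kappa_0]$ and $\sigma\in(0,\tilde\sigma)$ as permitted by Proposition \ref{prop:microLoc2}, then choose real functions $\chi_-\in\vG_-^\sigma$ and $\chi_+\in\vG_+^\sigma$ with $\chi_-(b)+\chi_+(b)=1$ for all $b\in\R$. Writing
\begin{equation*}
\phi=\chi_-(B_\kappa)\breve R(\lambda_0+\i 0)f+\chi_+(B_\kappa)\breve R(\lambda_0-\i 0)f,
\end{equation*}
I would apply Proposition \ref{prop:microLoc2} with $t=1-s<1/2$ (valid for all $s>1/2$ since the proposition allows arbitrary $t<1/2$, not only $t>0$) to each of $\chi_-(B_\kappa)\breve R(z)$ for $z\in I_+$ and $\chi_+(B_\kappa)\breve R(z)$ for $z\in I_-$. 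This gives uniform bounds $\vL(L^2_{s},L^2_{s-1})$ for these operators, and since $f\in L^2_s$, each piece lies in $L^2_{s-1}$, so $\phi\in L^2_{s-1}$ as required.

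The one step requiring some care is passing the microlocal estimates to the boundary values $z\to\lambda_0\pm \i 0$. I would handle this by a standard weak compactness argument: take a sequence $z_n=\lambda_0\pm \i \epsilon_n$ with $\epsilon_n\downarrow 0$, note that $\breve R(z_n)f\to \breve R(\lambda_0\pm \i 0)f$ in $L^2_{-t_0}$ for any $t_0>1/2$ by Theorem \ref{thm:powers}, and that $\chi_\mp(B_\kappa)\breve R(z_n)f$ is uniformly bounded in $L^2_{s-1}$ by Proposition \ref{prop:microLoc2}. Extracting a weakly convergent subsequence in $L^2_{s-1}$ and identifying the limit with $\chi_\mp(B_\kappa)\breve R(\lambda_0\pm\i 0)f$ (using that the weak limit in $L^2_{s-1}$ agrees with the distributional limit) completes the argument. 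No serious obstacle is anticipated; the result is essentially an immediate consequence of the two-sided microlocal resolvent bound together with the hypothesis that the two boundary values coincide.
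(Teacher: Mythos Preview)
Your proposal is correct and follows essentially the same route as the paper: decompose $1=g_-(B_\kappa)+g_+(B_\kappa)$ with $g_\pm\in\vG_\pm^\sigma$, apply Proposition~\ref{prop:microLoc2} (with $t=1-s<1/2$) to $g_-(B_\kappa)\breve R(\lambda_0+\i 0)f$ and $g_+(B_\kappa)\breve R(\lambda_0-\i 0)f$ respectively, and use Theorem~\ref{thm:powers} to pass to the boundary values. Your weak-compactness justification of the last step is the standard way to make the paper's one-line proof rigorous.
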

\begin{proof}
  The result follows from   Proposition \ref{prop:microLoc2} and
  Theorem \ref{thm:powers}  by using a suitable decomposition $1=g_-
  (B_\kappa)+g_+ (B_\kappa)$, $g_\pm\in \vG_\pm^\sigma$.
\end{proof}
\begin{remark}\label{remark:microlocal-boundsGEN}
  In the multiple case, here discussed with  the assumptions of
  Section \ref{sec:Reduction near a multiple two-cluster threshold}
  only (in particular with \eqref{ass2} imposed), we need the analogues of
  \eqref{eq:bRESSIM} and Corollary \ref{cor:microlocal-bounds},
  cf. \eqref{eq:basresbreve}--\eqref{eq:2bnd}. We argue by commenting
  on the necessary modifications that the same methods of proof work
  with a minimum of extra complication.

  First note that the operator $\breve K$ of \eqref{eq:38p} needs to
  be replaced $\breve K=K_1-K_2$ with $K_1$ and $K_2$ given by
  \eqref{eq:28}. To simplify the   form of this operator $\breve K$
  we introduce the notation
  \begin{align}
    \label{eq:notSim}
    \vL_{-\infty,\infty}=\cap_{r,t\in\R} \,\vL
  \big ( L^{2}_{r}(\bX), L^{2}_{t}(\bX)\big ).
\end{align} Now using \eqref{eq:2proj}, Lemmas \ref{Lemma:basic2A} and  \ref{Lemma:basic2}
  as in the proof of Lemma \ref{lemma2.1} we see
that
\begin{align}
  \label{eq:dec2}
  \begin{split}
  &\breve K-\breve K_1-\breve K_2\in \vL_{-\infty,\infty};\\
  &\breve K_j=\Pi_j I_{j}+
  I_{j}\Pi_j -\Pi_j I_{j}\Pi_j,\quad j=1,2.
  \end{split}
\end{align} Note for example that
\begin{align*}
  \Pi-\Pi_1-\Pi_2\in (1-\Delta)^{-1}\vL^2_{-\infty,\infty}\subset\vL^2_{-\infty,\infty}.
\end{align*} Using \eqref{eq:dec2} we can prove Proposition
\ref{prop:mour2} by mimicking the proof for the non-multiple
case. In the first step of the proof of Lemma \ref{lemma:Mourre3} we
have the given properties for $a_j$ and $\Pi_j$, $j=1,2$, rather than
for $a_0$ and $\Pi$. This is all we need to repeat the proof. In Lemma
\ref{lem:orderP}  we need a similar replacement (in particular
the polynomial factors in $P_1\Pi_jP_2$ are polynomials in quantities
defined for
$a_j$ rather than for $a_0$). In the proof of Lemma \ref{lem:orderP}
we can obviously add the class of operators  in the intersection of
all operators of finite $\kappa$-order to the classes $\breve
\vB_k$. Note that $\vL_{-\infty,\infty}$ is included in this way, and
we can repeat the proof. The applications  of  Subsection
\ref{subsec:Calculus}  and  the present section  are  the same  as before.

\end{remark}

\begin{remarks}\label{remark:The case
  lambda0insigma}
\begin{enumerate}[1)]
\item \label{item:24}
We discuss the extension to the case
$\lambda_0\in\sigma_{\pp}(H')$. We shall use Section \ref{sec:The case
  where lambda0insigma} and Remark
\ref{remark:microlocal-boundsGEN}. Note that
$\lambda_0\notin\sigma_{\pp}(H'')$ and that Theorem
\ref{thm:priori-decay-b_0} implies that the $L^2$-eigenfunctions of $H'$ at
$\lambda_0$ are in $L^2_\infty$ (note also that the dimension of the
corresponding
 eigenspace is finite, cf. Proposition
\ref{prop:mour2}). Under the assumptions of Section
\ref{sec:The case where lambda0insigma} the 'extra term'
$\lambda_0|\psi\rangle\langle \psi|$ of \eqref{eq:38pt} and
\eqref{eq:28p} is consequently in $\vL_{-\infty,\infty}$. We can
therefore prove Proposition \ref{prop:mour2} with the operator $\breve
H$ of Section \ref{sec:The case where lambda0insigma} (cf. Remark
\ref{remark:microlocal-boundsGEN}) and obtain the corresponding
resolvent bounds of $\breve R$ from Subsections \ref{subsec:LAP bound}
and \ref{subsec:Microlocal bounds and LAP}. In particular Corollary
\ref{cor:microlocal-bounds} also holds  for the operators
$\breve
H$ and $\breve
R$ of Section \ref{sec:The case
  where lambda0insigma}  in the case
$\lambda_0\in\sigma_{\pp}(H')$.
\item \label{item:25} We shall in Chapter
  \ref{chap:resolv-asympt-near} use variations of this
  construction. These are given by applying the theory of the present
  chapter to $H$ replaced by $H_\sigma:=H-\sigma\Pi_H$, $\sigma> 0$
  small, where $\Pi_H$ is the orthogonal projection onto the
  eigenspace of $H$ corresponding to $\lambda_0$ (assuming that
  $\lambda_0$ is an eigenvalue). In the context of Chapter
  \ref{chap:resolv-asympt-near} this projection has finite rank, and
  if all $L^2$-eigenfunctions are in $L^2_\infty$ the results of the present
  chapter generalize easily. However we do not have this good decay of
  the eigenfunctions for the cases considered in Chapter
  \ref{chap:resolv-asympt-near}. Assuming `sufficient decay', see for
  example \eqref{eq:dec1} and \eqref{eq:dec2}, the results of the
  present chapter applies to some degree (sufficiently well for
  Chapter \ref{chap:resolv-asympt-near}). In particular, to be
  concrete, there is the following version of (parts of) Theorem \ref{thm:powers}
  and Corollary \ref{cor:microlocal-bounds}. Suppose
\begin{align}
    \label{eq:dec10}
    \ran \Pi_H\subset H^2_t\text{  for some }t\in (1,3/2),
  \end{align} and suppose $\lambda_0$ is not an eigenvalue
  of  $\breve H_\sigma$  and that  $f\in L^2_s$ for some  $s\in (1/2,t-1/2)$. Then there
  exist limits  $\breve R_\sigma(\lambda_0+ \i 0)f,\breve R_\sigma(\lambda_0- \i
0)f\in L^2_{(-1/2)^-}$. If in addition $\breve R_\sigma(\lambda_0+ \i 0)f=\breve R_\sigma(\lambda_0- \i
0)f$, then
\begin{align}\label{eq:2bndSI10}
  \begin{split}
   \breve R_\sigma(\lambda_0+ \i 0)f=\breve R_\sigma(\lambda_0- \i
  0)f\in  L^2_{s-1}.
  \end{split}
\end{align} These results  follow by  straightforward  modifications  of the previous proofs. Note that all  terms
in the calculus involving the perturbation $-\sigma\Pi_H$ are regarded
as `errors', in fact we can use Lemma
\ref{lem:orderComm} and freely interchange $f(\brH)$ and $\check
f(\brH)$ by  $f(\brH_\sigma)$ and $\check
f(\brH_\sigma)$, respectively, and similarly for the Mourre estimate
Proposition \ref{prop:mour2}. To obtain \eqref{eq:2bndSI10} under the
given hypothesis  and given limitations on $s$ and
$t$  we need  Lemma \ref{lemma:microLoc} for $\breve R_\sigma$, and
the interested reader may check that the errors from the indicated
substitutions are harmless when mimicking the proof of the lemma.

In Section \ref{sec:resolv-asympt-physics models near} a version of
\eqref{eq:2bndSI10} is needed, but only for a $s$ arbitrarily close  $1/2$.

\end{enumerate}
\end{remarks}

\begin{remark}\label{remark:excase}
  The case when \eqref{ass2} fails was discussed in Section
  \ref{sec:The case when the condition {ass2}}. One can  then
  obtain the same results as the ones of Remarks
  \ref{remark:microlocal-boundsGEN} and \ref{remark:The case
    lambda0insigma} (depending on whether $\lambda_0$ is an eigenvalue
  of $H'$ or not, respectively). Using Subsection \ref{sec:The case when the condition {ass2}, second
  approach}  we can again invoke  the formulas \eqref{eq:28} and
  \eqref{eq:28p} for $\brH$ (in the respectively cases) and in fact
  argue as before, obtaining the  results.
\end{remark}


\chapter{Rellich type theorems} \label{chap:lowest thr}

We investigate the structure of eigenfunctions and resonances
states at a two-cluster threshold under Condition \ref{cond:smooth}
and (relevant part of) Condition \ref{cond:geom_singl}. This structure
depends strongly on decay properties of effective potentials. In
Section \ref{sec:slow1} we treat the lowest threshold case which
allows for somewhat stronger assertions than above this threshold. In
Section \ref{sec:slow13} we study the latter case which demands a more
careful examination of the 'eigentransform' of Remark
\ref{remark:GrushinBB}. In Section \ref{sec:CoulRellich} we study the
physical models from  Sections \ref{$N$-body Schr\"odinger operators}
and \ref{$N$-body Schr\"odinger operators with infinite mass nuclei} by applying  previously discussed methods, in
particular we shall apply most  of Section \ref{sec:slow13}.

We are going to treat the multiple two-cluster case in detail  using
for the lowest threshold
Proposition \ref{prop2.3}. The non-multiple  two-cluster case can be treated
similarly, it is notationally simpler of course, and we do not pay
special attention to this case. The non-simple cases can be treated similarly. Only in
Section \ref{sec:CoulRellich} we do a complete study covering all
cases, however for the
physical models only.

Under the conditions of Proposition \ref{prop2.3} (used in Section
\ref{sec:slow13} without
the  lowest threshold condition) it is  convenient to consider operators as quadratic
forms defined on the first order Sobolev spaces, cf. Remark
\ref{remark:formbnd}. For $ k,s \in \R$, let $H^{k}_{s}$
denote the weighted Sobolev space of order $k$ with position-space
weight $\w{x}^s$ (as defined in Subsection \ref{Spaces}) and
\begin{align}\label{eq:spaceDEF}
  \begin{split}
   \vH_{s,t}^{k, l} &= H^{k}_{s}(\bX_1) \oplus H^{l}_{t}(\bX_2), \quad \vH^{k}_{s}
   = \vH^{k,k}_{s,s}, \quad \vH^{k} = \vH^{k}_{0}, \quad \vH_{s} =
   \vH^{0}_{s}, \quad \vH = \vH^{0}_0,\\
& \vH^{k}_{\infty}=\cap_{s\in \R}\vH^{k}_{s}, \quad \vH^{k}_{-\infty}=\cup_{s\in \R}\vH^{k}_{s}, \quad \vH^{k}_{s^+}=\cup_{t>s}\vH^{k}_{t}, \quad \vH^{k}_{s^-}=\cap_{t<s}\vH^{k}_{t}.
  \end{split}
\end{align}

  We may  write the operator $E_{\vH}(\lambda_0)$ of
Proposition \ref{prop2.3} as
\begin{align}
  \label{eq:fundec}
  E_{\vH}(\lambda_0)=-(p_1^2+W_1)\oplus (p_2^2+W_2) -V.
\end{align} This is for  the lowest threshold only,  and the operator $V$ is a non-local
symmetric potential of order $\vO(|x|^{-2-2\rho})$, see Proposition
\ref{prop2.3} and Remark \ref{remark:formbnd}. A modification of
\eqref{eq:fundec}  for higher   thresholds will be discussed in  Section
\ref{sec:slow13}, and the 'exceptional
cases' of Sections \ref{sec:The case where lambda0insigma} and
\ref{sec:The case when the condition {ass2}} will be discussed in
Remarks \ref{remark:except1} and
\ref{remark:except12}. We will shortly introduce a slightly different
decomposition (with corresponding different notation).

\section{The case $\lambda_0=\Sigma_2$}\label{sec:slow1}
For $\rho<2$ there are no results for the one-body problem in
general. Thus for example it is not known if zero can be an eigenvalue
of infinite multiplicity. However if either the potential is negative
or positive at infinity like $-|x|^{-\rho}$ or $|x|^{-\rho}$,
respectively, then it is known that zero is at most an eigenvalue of
finite multiplicity (in fact it is not an eigenvalue in the negative
case). In this section we impose the conditions of Proposition
\ref{prop2.3}.  In Subsections
\ref{subsec:negat-effect-potent} and
\ref{subsec:positive-effect-potent}  we are going to use  the
proposition under additional sign conditions
on the effective potentials $W_1$ and $W_2$. In Subsection
\ref{subsec:Homogeneous-effect-potent}  the effective potentials  are assumed to be homogeneous of degree $-2$ to leading
order, which is a border line case with a rich structure. The case
$\lambda_0>\Sigma_2$ is divided  into similar cases, to be treated in Section
\ref{sec:slow13}. In both cases we need extensions of Remark
\ref{remark:GrushinBB}, to be treated in Subsections  \ref{subsec: Extended
  eigentransform1} and  \ref{subsec: Extended
  eigentransform2}, respectively.

\subsection{Extended eigentransform for  $\lambda_0=\Sigma_2$}\label{subsec: Extended
  eigentransform1}
Recall from Remark \ref{remark:GrushinBB}  the
   eigentransform   and the
   corresponding inversion formula:
   \begin{align}\label{eq:invtra}
     f=T^*u\mand u={E}_+(\lambda_0)f=Sf- \breve R(\lambda_0) \Pi'H Sf.
   \end{align} We can   apply these relations to  $L^2$-eigenfunctions as well as to generalized
   eigenfunctions. In this extended sense  the following result holds.
 \begin{lemma}
   \label{lem:4.1a} For any $s\in\R$ the relations
   \eqref{eq:invtra} obey
   \begin{align}\label{eq:corsE}
     u\in H^1_s(=H^1_s(\bX))\mand (H-\lambda_0)u=0\Leftrightarrow f\in \vH^1_s\mand
     {E}_{\vH}(\lambda_0)f=0.
   \end{align}
 \end{lemma}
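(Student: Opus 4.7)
The plan is to lift the $L^2$ algebraic correspondence of Remark \ref{remark:GrushinBB} to the weighted Sobolev spaces $H^1_s$ and $\vH^1_s$. The main technical points I would need are the following uniform mapping properties: $S^*\in\vL(H^1_s,\vH^1_s)$ and $S\in\vL(\vH^1_s,H^1_s)$ follow from the polynomial decay of the cluster bound states $\varphi_1,\varphi_2$ combined with \eqref{eq:elementEst}--\eqref{eq:elementEst2}; $(S^*S)^{-1}\in\vL(\vH^1_s)$ and $\Pi\in\vL(H^1_s)$ follow from Lemmas \ref{Lemma:basic2A} and \ref{Lemma:basic2}; Condition \ref{cond:smooth} yields $H-\lambda_0\in\vL(H^1_s,H^{-1}_s)$; and crucially
\begin{align*}
R'(\lambda_0)\in\vL\parb{H^{-1}_s,H^1_s}\mforall s\in\R.
\end{align*}

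This last point is the main obstacle. Under Lemma \ref{lemma2.1} we have $\sigma_\ess(H')\subset[\lambda_0+\epsilon_0,\infty)$, and under \eqref{eq:20} any discrete eigenvalues of $H'$ near $\lambda_0$ are bounded away from $\lambda_0$, so $\lambda_0$ lies in the resolvent set of $H'$ with a positive spectral gap. A standard Combes--Thomas exponential-weight conjugation of $H'-\lambda_0$ then yields boundedness of $R'(\lambda_0)$ on every $L^2_s$, and combining this with the fact that $H'$ has form domain $H^1$ upgrades this to the displayed weighted Sobolev mapping property.

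For the forward direction, given $u\in H^1_s$ with $(H-\lambda_0)u=0$, I set $f:=(S^*S)^{-1}S^*u\in\vH^1_s$; then $Sf=\Pi u$ by \eqref{eq:Qispro} and lies in $H^1_s$. Using $R'(\lambda_0)\Pi=0$ together with $R'(\lambda_0)(H-\lambda_0)\Pi' u=\Pi' u$, one computes
\begin{align*}
(\lambda_0-H+HR'(\lambda_0)H)\Pi u=(H-\lambda_0)\Pi' u - H\Pi' u=-\lambda_0\Pi' u,
\end{align*}
and applying $S^*$, which annihilates the range of $\Pi'$, gives $E_{\vH}(\lambda_0)f=0$.

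For the backward direction, given $f\in\vH^1_s$ with $E_{\vH}(\lambda_0)f=0$, set $u:=E_+(\lambda_0)f=Sf-R'(\lambda_0)HSf$; this lies in $H^1_s$ by the mapping properties above. A direct algebraic manipulation exploiting $(H-\lambda_0)R'(\lambda_0)=\Pi'+\Pi HR'(\lambda_0)$ together with $\Pi=S(S^*S)^{-1}S^*$ yields
\begin{align*}
(H-\lambda_0)E_+(\lambda_0)f=-S(S^*S)^{-1}E_{\vH}(\lambda_0)f,
\end{align*}
whence $(H-\lambda_0)u=0$. The two maps are mutual inverses because $Sf=\Pi u$ entails $S^*u=(S^*S)f$, i.e.\ $f=(S^*S)^{-1}S^*u=T^*u$, recovering \eqref{eq:invtra}.
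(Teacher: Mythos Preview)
Your argument is correct and mirrors the paper's: establish that $S^*,T^*\in\vL(H^1_s,\vH^1_s)$ and $E_+(\lambda_0)=(1-R'(\lambda_0)H)S\in\vL(\vH^1_s,H^1_s)$, then invoke the algebraic correspondence of Remark~\ref{remark:GrushinBB} (which you rederive by hand, while the paper simply cites it). One caveat: calling the weighted boundedness of $R'(\lambda_0)$ a ``standard Combes--Thomas'' argument overstates matters, since $H'=\Pi'H\Pi'$ is non-local and the textbook exponential-weight conjugation does not apply verbatim. The paper sidesteps this by working with the auxiliary operator $\breve H$ of \eqref{eq:27} and citing the bound \eqref{eq:91BB}, which was proved via a small-parameter polynomial-weight conjugation (see \eqref{eq:91}--\eqref{eq:comkap}) in which the non-local pieces are controlled precisely by Lemmas~\ref{Lemma:basic2A} and~\ref{Lemma:basic2}. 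Those same lemmas, which you cite, are enough to make a Combes--Thomas-type argument go through, but it is not off-the-shelf. Note also that only $R'(\lambda_0)\in\vL(L^2_s,H^1_s)$ is actually needed, since $\Pi'HS=\Pi'H\Pi S$ already maps $\vH^1_s$ into $L^2_s$ by Lemma~\ref{Lemma:basic2}~\ref{item:3mq}.
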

 \begin{proof}  Let $s\in\R$ be given.
   First we extend the boundedness result  \eqref{eq:91BB}, now
   claiming that
   \begin{align*}
     \breve R(z)\Pi'\in\vL(L^2_s,H^2_s)\subset \vL(L^2_s,H^1_s)\quad(\text{in particular for }z=\lambda_0).
   \end{align*} Noting that $\inp{x}^s(-\Delta-\breve H)\inp{x}^{-s}$
   is $\epsilon$-bounded relatively to $-\Delta$ the result follows by
   a standard computation using \eqref{eq:91BB}.
 By \eqref{eq:25}, Lemmas \ref{Lemma:basic2A}, \ref{Lemma:basic2}~\ref{item:3mq} and   the polynomial decay of the threshold bound
   states  it then follows  that
   \begin{align}\label{eq:sMap}
     S^*, T^*\in \vL(H^1_s,\vH^1_s)\mand \parb{1- \breve R(\lambda_0) \Pi'H
       }S\in\vL(\vH^1_s,H^1_s).
     \end{align} In particular \eqref{eq:corsE} holds.
\end{proof}
 The above proof is rather simple due to the fact that
   $\lambda_0\notin \sigma(H')$. In Subsection \ref{subsec: Extended
     eigentransform2}  we derive a rather detailed version of the lemma when $\lambda_0> \Sigma_2$. In this case the
   eigentransform and the corresponding inversion formula need a more
   subtle treatment. Moreover we remark that by ellipticity of the
   equation on the left-hand side of \eqref{eq:corsE} we can replace
   the requirement $u\in H^1_s$ by $u\in L^2_s$ or $u\in H^2_s$
   without changing the content of this side of  \eqref{eq:corsE}. A similar comment is
   due for the  right-hand side of \eqref{eq:corsE}, cf. the last part
   of Remark \ref{remark:formbnd}.

\subsection{Negative slowly decaying effective
  potentials}\label{subsec:negat-effect-potent}

Suppose $\rho<2$ and that $I_j(x_j)=I_j(x)_{x^j=0}$, $j=1,2$, fulfill the
following condition, cf. \cite{FS}:
\begin{align}\label{eq:virial0}
  \begin{split}
    \exists R\geq 0&\quad \exists \epsilon>0\quad\forall y\in \bX_j\text{ with }|y|\geq R: \\
    &I_j(y)\leq-\epsilon \inp{y}^{-\rho}\text{ and }-2I_j(y)-y\cdot
    \nabla I_j(y)\geq \epsilon \w{y}^{-\rho}.
  \end{split}
\end{align}

We can for each $j$ write the potential
$ W_j(x_j)=\widehat W_j(x_j)+B_j(x_j)$, where $B_j$ is polynomially
decreasing and $\widehat W_j$ is smooth with
$\partial^\alpha_y\widehat W_j=\vO(\w{y}^{-\rho-|\alpha|})$ and fulfills
\begin{align}\label{eq:virial}
  \begin{split}
    \exists \epsilon>0\quad\forall y\in \bX_j: \\
    &\widehat W_j(y)\leq-\epsilon \inp{y}^{-\rho}\text{ and }-2\widehat
    W_j(y)-y\cdot \nabla \widehat W_j(y)\geq \epsilon \w{y}^{-\rho}.
  \end{split}
\end{align} To see this we split each $V_b$ in the definition of
$I_{j}$ as $V_b=V_b^{(1)}+V_b^{(2)}$, where the second term is
compactly supported. We note that
$\inp{V_b^{(2)}}_{\varphi_j}(x_j)=\inp{\varphi_j,V_b^{(2)}(\cdot+ x_{j})\varphi_j}_j$
is polynomially decreasing. This can be proven in the same way as we
proved \eqref{eq:22asymB}.  It remains to consider the contribution
from $V_b^{(1)}$. Now the potential
$A(x_j)=\inp{V_b^{(1)}}_{\varphi_j}(x_j)-V_b^{(1)}(x_j)$ is $C^\infty $. We note that $A(y)=\vO(\w{y}^{-\rho-1})$ follows
in the same way as  we proved \eqref{eq:22asym}. In a similar fashion one  checks that
also $y\cdot \nabla
 A(y)=\vO(\w{y}^{-\rho-1})$, and due to these properties  we are lead
to consider $V_b^{(1)}( x_j)$, or rather $I^{(1)}_j( x_j)=\sum_b V_b^{(1)}( x_j)$. Since
we know the property \eqref{eq:virial0} for this  sum  we conclude a
similar property for $\sum_b \inp{V_b^{(1)}}_{\varphi_j}$, and we can add a suitable compactly supported potential
to have the bounds fulfilled for $R=0$ (and some $\epsilon>0$), cf. \cite{FS}. This leads to
a $\widehat W_j$ fulfilling \eqref{eq:virial}.

For convenience we use below the notation $w_j$ for the potential  $\widehat
W_j$. Introducing  the operators $h_j=p_j^2+w_j$, $j=1,2$ we write
\eqref{eq:fundec} as
\begin{align}
  \label{eq:fundec0}
  -E_{\vH}(\lambda_0)=h_1\oplus h_2 +v.
\end{align} Note that $v-V$ is  polynomially
decreasing. The operators $h_1$ and $h_2$
   have a number of microlocal properties as
stated in \cite[Theorems 4.1 and 4.2]{FS}. In particular there is a
limiting absorption principle at zero: Let $s_0= \tfrac 12 +\tfrac
\rho 4$. Then for any $s>s_0$ there exist the norm-limits
\begin{align*}
  r_j(0\pm \i 0) =\lim_{\epsilon
  \to 0_+} r_j(\pm \i \epsilon) \in \vL(H^{-1}_{s}(\bX_j), H^{1}_{-s}(\bX_j)),
\end{align*} where $r_j(z)=(h_j
-z)^{-1}$. Moreover  it is known from \cite{Sk4} that
\begin{subequations}
\begin{align}\label{eq:BesR}
  r_j(0\pm \i 0)\in \vL(\vB_{s_0},\vB_{s_0}^*),
\end{align} stated in terms of the  Besov spaces introduced in  Subsection \ref{Spaces}. We are going to use yet another  property.

Suppose $r_j(0+ \i 0)f=r_j(0- \i 0)f$ for  a given $f\in L^2_t$ for
some  $t>s_0$, then
 \begin{align*}
  r_j(0+ \i 0)f=r_j(0- \i 0)f\in  L^2_{s}\text  { for any }s< t-2s_0,
\end{align*} cf.  \cite[Theorems 4.1 (ii) and (iii)]{FS}.
 In fact one can strenghten the proof  in \cite{FS} and obtain the
 following statement:

Suppose
 $r_j(0+ \i 0)f=r_j(0- \i 0)f$ for  a given $f\in L^2_t$ for
some  $t>s_0$, then
 \begin{align}\label{eq:micr1}
  r_j(0+ \i 0)f=r_j(0- \i 0)f\in  L^2_{t-2s_0}.
\end{align} This statement is similar to
Corollary \ref{cor:microlocal-bounds}, and it can be proved by
 improved versions of some resolvent bounds from \cite{FS} which are
 one-body threshold
analogues of the bounds of
Proposition \ref{prop:microLoc2}.
 \end{subequations}

\begin{thm}\label{thm:negat-effect-potent} Under the conditions of
  Proposition \ref{prop2.3} and \eqref{eq:virial0} for $j=1,2$,
  \begin{align*}
    \dim \ker(H-\lambda_0)<\infty.
  \end{align*}
  If $\phi\in \vB_{s_0,0}^*(\bX)$ solves the distributional equation
  $(H-\lambda_0) \phi=0$, then $\phi\in L_\infty^{2}(\bX)$.
\end{thm}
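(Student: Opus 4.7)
The strategy is to transfer the problem to the reduced space $\vH$ via the eigentransform of Lemma \ref{lem:4.1a}, and then bootstrap decay using the one-body limiting absorption principle \eqref{eq:BesR} together with the sharp microlocal estimate \eqref{eq:micr1}.

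First I would upgrade Lemma \ref{lem:4.1a} to the Besov scale. Since $\lambda_0=\Sigma_2\notin\sigma(H')$, Lemma \ref{lemma2.1} and the bounds \eqref{eq:91BB} give $\breve R(\lambda_0)\Pi'\in\vL(L^2_s,H^2_s)$ for all $s\in\R$, and by interpolation and density these bounds extend to the (small) Besov spaces attached to $\vH$ and $\vG$. Combined with Lemmas \ref{Lemma:basic2A}, \ref{Lemma:basic2}~\ref{item:3mq}, and the exponential decay of the $\varphi^{a_j}$, this gives that $S^*,T^*$ map Besov-$\vB^*_{s_0,0}$-type spaces on $\bX$ into the corresponding spaces on $\vH$ (and similarly with the star removed). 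The conclusion is the Besov version of \eqref{eq:corsE}: a solution $\phi\in\vB^*_{s_0,0}(\bX)$ of $(H-\lambda_0)\phi=0$ is equivalent to a solution $f=T^*\phi$ of $E_\vH(\lambda_0)f=0$ lying in the Besov-$\vB^*_{s_0,0}$ analogue on $\vH$ (and $\phi$ is recovered by $\phi=E_+(\lambda_0)f$).

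Next, using the decomposition \eqref{eq:fundec0} the reduced equation reads $(h_1\oplus h_2)f=-vf$, where $v$ is polynomially decreasing and the diagonal operators $h_j$ fall under the one-body theory of \cite{FS, Sk4} thanks to \eqref{eq:virial}. Because $f$ lies in the small Besov space, the standard radiation-condition argument (using that $\vB_{s_0,0}^*$ is exactly the space where incoming and outgoing zero-energy parts vanish) shows that for each $j$,
\[
f_j=r_j(0+\i 0)(-v f)_j=r_j(0-\i 0)(-v f)_j,
\]
so the two boundary values of the resolvents agree on $vf$. I would then run a bootstrap with \eqref{eq:micr1}: starting from $f\in\vB^*_{s_0,0}\subset L^2_{-s_0^-}$, the polynomial decay of $v$ (order $\vO(\abs{x}^{-2-2\rho})$ at the very least in terms of mapping on weighted spaces, but in fact arbitrarily fast by Proposition \ref{prop2.3}) gives $vf\in L^2_t$ for $t$ arbitrarily large, and each application of \eqref{eq:micr1} upgrades $f$ to $L^2_{t-2s_0}$. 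Iterating yields $f\in\vH_\infty$, hence $\phi=E_+(\lambda_0)f\in L^2_\infty(\bX)$ by the mapping properties in \eqref{eq:sMap} (applied for every $s$). This proves the second assertion.

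For the first assertion, every $u\in\ker(H-\lambda_0)\subset L^2$ is in particular in $\vB^*_{s_0,0}$, so by the above $u\in L^2_\infty$, and correspondingly $f=T^*u\in\vH_\infty$ satisfies
\[
f=-\bigl(r_1(0+\i 0)\oplus r_2(0+\i 0)\bigr)(vf).
\]
Choose any $s\in(s_0,s_0+\epsilon)$ small. The map $f\mapsto vf$ is bounded from $\vH_{-s}$ into $\vH_{s}$ (in fact into $\vH_\infty$), the embedding $\vH_s\hookrightarrow\vB_{s_0}(\vH)$ is compact, and $r_1(0+\i 0)\oplus r_2(0+\i 0)\in\vL(\vB_{s_0}(\vH),\vB^*_{s_0}(\vH))\subset\vL(\vB_{s_0}(\vH),\vH_{-s})$ by \eqref{eq:BesR}. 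Hence the operator $K:=(r_1(0+\i 0)\oplus r_2(0+\i 0))v$ is compact on $\vH_{-s}$, and $\ker(H-\lambda_0)$ embeds into $\ker(I+K)$ via $T^*$, which is finite-dimensional by the Fredholm alternative.

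The main obstacle I anticipate is the Besov-space extension of the eigentransform in the first paragraph, together with verifying that the two boundary values $r_j(0\pm\i 0)$ genuinely coincide on $vf$ for $f\in\vB^*_{s_0,0}$; these are not quite formal consequences of Lemma \ref{lem:4.1a} and \eqref{eq:BesR} and will require a careful review of the arguments of \cite{FS, Sk4} to ensure that the radiation-condition characterization of $\vB^*_{s_0,0}$ at zero energy goes through in our non-local weighted setting.
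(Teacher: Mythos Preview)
Your approach is essentially the paper's: reduce via the eigentransform, write $(1+K^\pm)f=0$, use compactness of $K^\pm$ for finite multiplicity, and bootstrap decay via \eqref{eq:micr1}. Two corrections are needed.

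First, $v$ is \emph{not} polynomially decreasing. What Proposition~\ref{prop2.3} says is that the \emph{off-diagonal} part of $E_\vH(\lambda_0)$ is polynomially decreasing; the non-local diagonal part $K_{jj}(\lambda_0)$ only obeys \eqref{eq:second_bnd}, i.e.\ improves weights by $2\rho+2$. After the rewriting \eqref{eq:fundec0} one has $v=V+\diag(B_1,B_2)$ with $V=\vO(\abs{x}^{-2-2\rho})$ and $B_j$ polynomially decreasing, so $v$ is of order $2+2\rho$ only. Your one-shot bootstrap therefore fails; instead, from $f\in\vH^0_{-s}$ (with $s$ slightly above $s_0$) one gets $vf\in\vH^0_{-s+2\rho+2}$ and hence by \eqref{eq:micr1} $f\in\vH^0_{-s+2\kappa}$ with $\kappa=\rho+1-s_0>0$, and one must iterate. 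The same remark applies to your parenthetical ``in fact into $\vH_\infty$'' in the compactness argument---drop it; the compactness \eqref{eq:comneg} only requires $s\in(s_0,2\rho+2-s_0)$, which is nonempty.

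Second, the obstacle you flag at the end is indeed the core of the paper's proof (its Step~II): for $f_j\in\vB^*_{s_0,0}(\bX_j)$ with $h_jf_j\in\vB_{s_0}(\bX_j)$ one must show the left-inverse identity $r_j(0\pm\i0)h_jf_j=f_j$. This does not require importing a radiation-condition characterization from \cite{FS,Sk4}; it is a direct cutoff argument. Write $h_j\chi_Rf_j=\chi_Rh_jf_j-(\Delta\chi_R)f_j-2(\nabla\chi_R)\cdot\nabla f_j$ and estimate the commutator terms: the gradient term is controlled via
\[
R^2\|(\nabla\chi_R)\cdot\nabla f_j\|^2\le C\Re\inp{\chi^2_{2R}\bar\chi^2_{R/2}f_j,h_jf_j}+CR^{-\rho}\|\chi_{4R}f_j\|^2=o(R^{1-\rho/2}),
\]
using $f_j\in\vB^*_{s_0,0}$ and $h_jf_j\in\vB_{s_0}$. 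Then $r_j(0\pm\i0)h_j\chi_Rf_j=\chi_Rf_j$ passes to the weak-$*$ limit in $\vB^*_{s_0}$. This gives both boundary values equal to $f_j$, hence $r_j(0+\i0)(vf)_j=r_j(0-\i0)(vf)_j$ without a separate symmetry argument.
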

\begin{proof} \subStep{I} By Remarks \ref{remark:GrushinBB} and \ref{remark:formbnd}
  \begin{align*}
    \dim \ker(H-\lambda_0)=\dim \ker E_{\vH}(\lambda_0).
  \end{align*} Note that we have chosen the form interpretation of
  the appearing  operators in agreement with Remark \ref{remark:formbnd}, but
  that the operator interpretation of  Remark \ref{remark:GrushinBB}
  amount to the same spaces, cf. Remark \ref{remark:formbnd}. Let $R^{\pm}_{\diag}=r_1(0\pm \i 0)\oplus
  (r_2(0\pm \i 0)$ and suppose $f\in \ker E_{\vH}(\lambda_0)$.
  We write the equations $R^{\pm}_{\diag}E_{\vH}(\lambda_0)f=0$ as
  \begin{align}\label{eq:LipSchW}
    (1+K^{\pm})f=0, \quad K^{\pm}=R^{\pm}_{\diag}v,
  \end{align} where $1f$ arises by writing $f=(f_1,f_2)\in
  \vH^{1}$ and using (to be shown below) that
  \begin{align}\label{eq:leftinverse}
    r_j(0\pm \i 0)h_j f_j=f_j.
  \end{align} We note that
  \begin{align}\label{eq:comneg}
    K^{\pm}\in \vC(\vH^{1}_{-s}) \text{
    for }s\in (s_0, 2\rho+2-s_0).
  \end{align}
  Since $K^{\pm}$ is compact (on any such space) it follows from
  Fredholm theory that $\dim \ker E_{\vH}(\lambda_0)< \infty.$

  \subStep{II}  We prove the following more general version of
  \eqref{eq:leftinverse}: Suppose $f_j\in \vB_{s_0,0}^*(\bX_j)$ and that
  the  distribution
  $h_jf_j\in \vB_{s_0}(\bX_j)$, then \eqref{eq:leftinverse} holds and
    the function $f_j\in H^{1}_{-s}$  for any   $s>s_0$ (for
  the problem above $h_jf_j\in L^{2}_s (\bX_j)\subset \vB_{s_0}(\bX_j)$
  for any  $s$ as in
  \eqref{eq:comneg}). To see this  we  let  $\chi_R(x)=\chi(|x|/R)$, $R\geq 1$, be
   given in
  agreement with \eqref{eq:14.1.7.23.24}. We consider $\chi_R$ as a multiplication operator on
   $\bX_j$.  Now
  $r_j(0\pm \i 0)h_j \chi_Rf_j=\chi_Rf_j$ for all $R>1$. On the other
  hand
  \begin{align*}
    h_j\chi_Rf_j =\chi_Rh_jf_j -(\Delta \chi_R)f_j-2(\nabla
    \chi_R)\cdot \nabla f_j.
  \end{align*} For the last term
  \begin{align}\label{eq:compCom}
    \begin{split}
     &R^2\|(\nabla\chi_R)\cdot \nabla f_j\|^2_{L^2(\bX_j)}
    \\
    &\leq C_1\inp{p^2}_{\chi_{2R}\bar\chi_{R/2}f_j}
    \\
    &\leq
    C_2\parb{\Re\inp{\chi^2_{2R}\bar\chi^2_{R/2}f_j,h_jf_j}+R^{-\rho}\|\chi_{4R}f_j\|^2}
    \\
    &= C_2\parb{\vO(R^0)+R^{-\rho}o(R^{2s_0})}
    \\
    &=o(R^{1-\rho/2}),
    \end{split}
  \end{align}
  yielding
  \begin{align*}
    \lim_{R\to \infty}\,R^{s_0}\|(\nabla
    \chi_R)\cdot \nabla f_j\|_{L^2(\bX_j)}=0.
  \end{align*}
  For the middle term
  \begin{align*}
    R^{s_0}\|(\Delta \chi_R)f_j\|_{L^2(\bX_j)}\leq CR^{2s_0-2}\parb{R^{-s_0}\|\chi_{2R}f_j\|_{L^2(\bX_j)}},
  \end{align*} yielding
  \begin{align*}
    \lim_{R\to \infty}\,R^{s_0}\|(\Delta \chi_R)f_j\|_{L^2(\bX_j)}=0.
  \end{align*} These computations lead to
  \begin{align*}
    f_j=\wslim_{R\to \infty}\,\chi_Rf_j=\wslim_{R\to \infty} \,r_j(0\pm \i 0)h_j\chi_Rf_j=r_j(0\pm \i
      0)h_jf_j\text{ in }\vB^*_{s_0}(\bX_j),
  \end{align*} and therefore \eqref{eq:leftinverse} holds for  $f_j\in \vB_{s_0,0}^*(\bX_j)$ obeying
  $h_jf_j\in \vB_{s_0}(\bX_j)$.

  \subStep{III} Using again the 'eigentransform' of Remark
  \ref{remark:GrushinBB} it is readily seen that  any $\phi\in \vB_{s_0,0}^*(\bX)$ solving the
  distributional equation $(H-\lambda_0) \phi=0$ corresponds to a
  (distributional) solution $f=(f_1,f_2)$, $E_{\vH}(\lambda_0)f=0$
  with $f_j\in \vB_{s_0,0}^*(\bX_j)$ obeying
  $h_jf_j\in \vB_{s_0}(\bX_j)$, cf. Lemma \ref{lem:4.1a}. By the previous steps we then conclude
  that \eqref{eq:LipSchW} is fulfilled. In particular
  $K^+f=K^-f$. Conversely using that $v$ is symmetric
  we
  obtain that
  \begin{align}\label{eq:Imvanissh}
    0=\Im \w{f,vf}=-\Im \w{R^+_{\diag}vf,vf}
  \end{align} for any solution to $(1+K^+)f=0$, where  $f=(f_1,f_2)$
  obeys  $f_j\in
  \vB_{s_0}^*(\bX_j)$, $j=1,2$. Whence also  $(1+K^-)f=0$ for any such
  $f$. (Similarly a solution to $(1+K^-)f=0$ is also a solution to
  $(I+K^+)f=0$.) Moreover  $E_{\vH}(\lambda_0)f=0$ in the
  distributional sense.
  If   $(1+K^+)f=0$ and  $(1+K^-)f=0$ where $f=(f_1,f_2)$
  and  $f_j\in
  \vB_{s_0}^*(\bX_j)$  better decay is obtained by invoking
  \eqref{eq:micr1}. Explicitly we may fix $s$ as in \eqref{eq:comneg}
  and  starting with the input $f\in \vH^0_{-s}$
   conclude that  $f\in \vH^0_{-s+2\kappa}$, where
              $\kappa=\rho+1-s_0$. By iterating  this argument we conclude that $f_j\in
  L_\infty^{2}(\bX_j)$.   Next, by the `inversion formula'  of
  Remark \ref{remark:GrushinBB} any  such
  $f$ corresponds to   a
  \begin{align*}
    \phi\in  \ker(H-\lambda_0)\cap L_\infty^{2}(\bX),
  \end{align*} see Lemma \ref{lem:4.1a}. In particular, cf.  the remark at the beginning of the paragraph,  this is valid for any $\phi\in \vB_{s_0,0}^*(\bX)$ solving
  $(H-\lambda_0) \phi=0$.

\end{proof}

\subsection{Positive slowly decaying effective
   potentials}\label{subsec:positive-effect-potent} Suppose $\rho<2$ and
 that $I_j(x_j)=I_j(x)|_{x^j=0}$, $j=1,2$, fulfill the following condition:
 \begin{align}\label{eq:posLowbnd0}
   \begin{split}
     \exists R\geq0\quad\exists \epsilon> 0\quad\exists \bar \rho\in [\rho, \tfrac 23(1+\rho))&\quad \forall y\in \bX_j\text{ with }|y|\geq R: \\
     &I_j(y)\geq\epsilon \inp{y}^{-\bar\rho}.
   \end{split}
 \end{align}

 We write for each $j$ the potential
 $ W_j(x_j)=\widehat W_j(x_j)+B_j(x_j)$, where $B_j$ is polynomially
 decreasing and $\widehat W_j$ is smooth with
 $\partial^\alpha_y\widehat W_j=\vO(\w{y}^{-\rho-|\alpha|})$ and fulfills
 \begin{equation}\label{eq:posLowbnd}
     \exists \epsilon> 0\quad\exists \bar\rho\in [\rho, \tfrac
     23(1+\rho))\quad \forall y\in \bX_j: \quad \widehat W_j(y)\geq\epsilon
     \inp{y}^{-\bar\rho}.
 \end{equation}
 This is seen by arguments as in Subsection
 \ref{subsec:negat-effect-potent} (using that $\bar\rho< 1+\rho$). Again
 we shall use
the notation $w_j$ for the potential  $\widehat
W_j$, introduce  the operators $h_j=p_j^2+w_j$, $j=1,2$ and  write
\eqref{eq:fundec} as \eqref{eq:fundec0}.

 We introduce the symbols $s_j=s_j(y,\xi)=(\xi_j^2+w_j(y))^{-1}$. Note
 that $s_j$ belongs to the H\"ormander class
 $S\parb{\inp{y}^{\bar\rho}\inp{\xi}^{-2}, g}$,
 where the metric is given by
 \begin{align*}
   g(v)=\inp{y}^{2(\bar\rho-\rho-1)}\,v_y^2+\inp{y}^{\bar\rho}\,v_\xi^2,\quad
   v=(v_y,v_\xi),
 \end{align*}  and note that  the corresponding Weyl calculus has `Planck
 constant' of size
 \begin{align*}
   \inp{y}^{(\bar\rho-\rho-1)}\inp{y}^{\bar\rho/2}=\inp{y}^{-t};\,t=1+\rho-3\bar\rho/2.
 \end{align*} Next we may construct  a parametrix to infinite order for
 $h_j=\Opw(\xi^2_j+w_j(y))$ following a standard
 procedure. However for the theorem below we only need  the first step which amounts to letting
 the parametrix be given  by $r^0_{j}=\Opw(s_j)$.  We compute $r^0_{j}h_j =1+A_j$ for
 some $A_j=\Opw(a_j)$ with  $a_j\in  S(\inp{y}^{-t}\inp{\xi}^{-1}, g)$.

\begin{thm}\label{thm:pos-effect-potent} Under the conditions of
  Proposition \ref{prop2.3} and \eqref{eq:posLowbnd0} for $j=1,2$,
  \begin{align*}
    \dim \ker(H-\lambda_0)<\infty.
  \end{align*}
  If  $\phi\in  L_{-\infty}^{2}(\bX)$ (or alternatively $\phi\in H^{1}_{-\infty}(\bX)$) solves the distributional equation
  $(H-\lambda_0) \phi=0$, then $\phi\in L_{\infty}^{2}(\bX)$.
\end{thm}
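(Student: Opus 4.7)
The plan is to mimic the three-step structure of the proof of Theorem \ref{thm:negat-effect-potent}, but to replace the one-body zero-energy analysis of \cite{FS,Sk4} (which fails for positive potentials) with the pseudodifferential parametrix for $h_j$ already sketched just before the statement.

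First I would reduce everything to the model operator $E_{\vH}(\lambda_0)$ via the eigentransform. By Lemma \ref{lem:4.1a} together with Remarks \ref{remark:GrushinBB}--\ref{remark:formbnd}, the map $\phi\mapsto f=T^*\phi$ is a bijection between $\ker(H-\lambda_0)\cap H^1_s$ and $\ker E_{\vH}(\lambda_0)\cap \vH^1_s$ for every $s\in \R$, with inverse $u=E_+(\lambda_0)f$. Writing $-E_{\vH}(\lambda_0)=h_1\oplus h_2+v$ as in \eqref{eq:fundec0}, the equation $E_{\vH}(\lambda_0)f=0$ becomes $h_jf_j=-(vf)_j$, $j=1,2$, with $v$ polynomially decreasing.

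Next I would apply the parametrix $R^0=r^0_1\oplus r^0_2$ constructed above the theorem to get
\begin{equation*}
(1+A+R^0v)f=0,\quad A=A_1\oplus A_2,\quad a_j\in S(\inp{y}^{-t}\inp{\xi}^{-1},g),\quad t=1+\rho-3\bar\rho/2>0.
\end{equation*}
By the Weyl calculus associated to $g$, the operator $A_j$ is bounded from $H^k_{-s}(\bX_j)$ to $H^{k+1}_{-s+t}(\bX_j)$ for every $k,s\in \R$, and hence compact on $\vH^1_{-s}$ for every fixed $s$ by Rellich embedding. Since $v$ is polynomially decreasing the operator $R^0v$ is also compact on $\vH^1_{-s}$ for any $s$. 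Fredholm theory then yields $\dim\ker E_{\vH}(\lambda_0)<\infty$, whence $\dim\ker(H-\lambda_0)<\infty$.

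For the Rellich part, given $\phi\in L^2_{-\infty}$ solving $(H-\lambda_0)\phi=0$ (the $H^1_{-\infty}$ case being equivalent by elliptic regularity, cf.~Remark \ref{remark:formbnd}), Lemma \ref{lem:4.1a} produces $f=T^*\phi\in \vH^1_{-s_0}$ for some $s_0$ with $E_{\vH}(\lambda_0)f=0$. Rewriting
\begin{equation*}
f=-Af-R^0vf
\end{equation*}
I would bootstrap: the first term on the right upgrades $f$ from $\vH^1_{-s_0}$ to $\vH^2_{-s_0+t}$, while the second term lies in $\vH^2_\infty$ because $v$ is polynomially decreasing and $R^0$ is continuous between appropriate weighted Sobolev spaces. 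Iterating $k$ times yields $f\in \vH^1_{-s_0+kt}$, and since $t>0$ we conclude $f\in \vH^1_\infty$. Applying the inversion formula $\phi=E_+(\lambda_0)f$ together with the mapping property \eqref{eq:sMap} then gives $\phi\in H^1_\infty\subset L^2_\infty$.

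The main technical obstacle will be establishing, rigorously, the claimed continuity of $A_j$ and $r^0_j$ between standard weighted Sobolev spaces $H^k_s(\bX_j)$. The metric $g$ is anisotropic and $y$-dependent (with $g$-Planck size $\inp{y}^{-t}$), so one cannot simply read off mapping properties on the standard scale; instead one must either pass through the $g$-adapted Sobolev spaces of the Weyl calculus and compare them to $H^k_s(\bX_j)$ on each dyadic shell, or verify directly (by pointwise kernel estimates plus Schur's lemma) that symbols in $S(\inp{y}^{-t}\inp{\xi}^{-1},g)$ gain exactly one power of regularity and $t$ powers of weight. The condition $\bar\rho<\tfrac{2}{3}(1+\rho)$ in \eqref{eq:posLowbnd0} is there precisely to guarantee $t>0$, which is what makes the bootstrap converge; this borderline is the essential quantitative input.
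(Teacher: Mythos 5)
Your proposal is correct and follows essentially the same route as the paper's own proof: the eigentransform reduction via Lemma \ref{lem:4.1a}, the Fredholm equation $(1+A+K)f=0$ with $K=R^0v$ built from the parametrix $r^0_1\oplus r^0_2$, and the bootstrap gaining $t=1+\rho-3\bar\rho/2>0$ in weight at each step. The only imprecision is calling $v$ polynomially decreasing — the diagonal contributions $K_{jj}(\lambda_0)$ only gain $2\rho+2$ powers of weight, cf. \eqref{eq:second_bnd}, so $R^0vf$ does not land in $\vH^2_\infty$ in one shot — but since $2+2\rho-\bar\rho>t$ the iteration is limited by the $A$-term anyway, and your bootstrap concludes exactly as in the paper.
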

\begin{proof} \subStep{I} By Remark \ref{remark:GrushinBB}
  \begin{align*}
    \dim \ker(H-\lambda_0)=\dim \ker E_{\vH}(\lambda_0).
  \end{align*}
   Suppose $(H-\lambda_0)\phi=0$ for $\phi\in
  H^{1}_{s}(\bX)$ for some  $s\in\R$ (possibly nonzero). Due to Lemma
  \ref{lem:4.1a} the eigentransform $f=T^* \phi \in
  \vH^{1}_{s}$ and   $ E_{\vH}(\lambda_0)f=0$.
  Let $R=r^0_{1}\oplus
  r^0_{2}$ and $A=A_{1}\oplus
  A_{2}$. Then we  write the equation  $RE_{\vH}(\lambda_0)f=0$ for
  any $f\in
  L^2_s $ as
  \begin{align}\label{eq:LipSchW2}
    (1+A+K)f=0, \quad K=Rv,
  \end{align} where we use \eqref{eq:fundec0} and that for the components
  of $f=(f_1,f_2)$
  \begin{align*}
    r^0_j(0)h_j f_j=(1+A_j)f_j.
  \end{align*} We note that $A+K\in \vC(\vH^1_s)$.
  It then  follows from Fredholm theory that
  $f$ belongs to a finite-dimensional subspace of $\vH^1_s$. This
  dimension is an upper bound  of the dimension of the
  set of
  functions $\phi\in
  H^{1}_{s}(\bX)$ solving $(H-\lambda_0)\phi=0$ (seen by the inversion
  formula). In particular the
  latter space is finite-dimensional. The first statement of the theorem follows by this
  argument  for $s=0$.

  \subStep{II} By an iteration procedure using \eqref{eq:LipSchW2},
  cf. the proof of Theorem \ref{thm:negat-effect-potent}, it follows
  that any (zero-energy) generalized eigenfunction  $f\in \vH^1_{-\infty}$
  must belong to $\vH^1_{\infty}$. Note that we can deduce that
  $f\in\vH^1_{s+t}$ given that $f\in\vH^1_{s+(k-1)t}$, where
  $t=1+\rho-3\bar\rho/2$. Since there is no limit on $k(\in \N)$ used
  for this argument, indeed this conclusion comes out by
  iteration. Consequently, thanks to Lemma \ref{lem:4.1a},  the
  generalized eigenfunctions of $H$ at $\lambda_0$
  in $H^{1}_{-\infty}(\bX)$ are all in $L^{2}_{\infty}(\bX)$.
\end{proof}

 The equation \eqref{eq:LipSchW2} might have `spurious'
  solutions,  i.e. solutions not corresponding to eigenfunctions of
  $H$. We can cure this `deficiency' by showing that the exact inverse
  $h_j^{-1}$, $j=1,2$,  in fact is a pseudodifferential operator. We do this
  below. In particular  $h_j^{-1}$ has nice commutation proporties
  with power-type weights in the configuration space. Such properties
  were proved by Yafaev \cite {Ya2},  and  we are in fact going to use
  \cite {Ya2}. More precisely we are going to use the assertion
  \begin{align}\label{eq:yafa}
    \forall a,b\geq 0, a-b\geq \bar \rho:\quad B=\slim_{\epsilon\to 0_+} \,\inp{y}^{-a}(h_j+\epsilon)^{-1}\inp{y}^{b}\text{ exists}.
  \end{align} Here we note that
  $B_\epsilon=\inp{y}^{-a}(h_j+\epsilon)^{-1}\inp{y}^{b}$ is bounded uniformly in
  small positive
  $\epsilon$, which follows by keeping track of constants in  the
  proof of \cite [Theorem 1]{Ya2}. Since   $\lim_{\epsilon\to 0_+}B_\epsilon u$ exists if
  $u=\inp{y}^{-b}h_jv$ with  $v\in \vD(h_j)$ and the set of such $u$'s
  is  dense in
  $L^2(\bX_j)$,  indeed \eqref{eq:yafa} follows (in
  fact $Bu=\inp{y}^{-a}h_j^{-1}\inp{y}^{b}u$ for any   $u$ of this form).

  \begin{lemma}\label{lemma:posit-slowly-decay} The operators
    $\slim_{\epsilon\to 0_+} \,\inp{y}^{-\bar \rho}(h_j+\epsilon)^{-1}$, $j=1,2$, (extending $\inp{y}^{-\bar \rho} h_j^{-1}$) are  pseudodifferential
  operators  with symbol
    in  $S\parb{\inp{\xi}^{-2}, g}$.
    \end{lemma}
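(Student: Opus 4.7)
The strategy is a Weyl parametrix construction in the H\"ormander calculus associated with $g$, combined with Yafaev's bound \eqref{eq:yafa}, which already guarantees that $\slim_{\epsilon\to 0_+}\inp{y}^{-\bar\rho}(h_j+\epsilon)^{-1}$ exists as a bounded operator on $L^2(\bX_j)$; the pseudodifferential character will come from identifying this limit with an explicit parametrix up to a smoothing error.

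First I would verify that $g(v)=\inp{y}^{2(\bar\rho-\rho-1)}v_y^2+\inp{y}^{\bar\rho}v_\xi^2$ is an admissible H\"ormander metric: slowly varying, $\sigma$-temperate, and satisfying the uncertainty principle with Planck function $h(y,\xi)=\inp{y}^{-t}$, $t:=1+\rho-3\bar\rho/2>0$ (where the constraint $\bar\rho<\tfrac23(1+\rho)$ is used). The full Weyl symbol of $h_j$ is then $\xi^2+w_j(y)$; the lower bound $w_j\ge\epsilon\inp{y}^{-\bar\rho}$ together with the derivative estimates on $w_j$ puts $s_j\in S(\inp{y}^{\bar\rho}\inp{\xi}^{-2},g)$, so that $q_0:=\inp{y}^{-\bar\rho}s_j\in S(\inp{\xi}^{-2},g)$.

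Next I would iterate the standard parametrix construction for the equation $Qh_j=\inp{y}^{-\bar\rho}$. Starting from $q_0$ and computing $q_0\#_W(\xi^2+w_j)=\inp{y}^{-\bar\rho}+r_1$, each correction $q_{k+1}:=-r_{k+1}\#_W s_j$ gains one power of the Planck function $h$, so one finds $q_k\in S(\inp{y}^{-kt}\inp{\xi}^{-2-k},g)\subset S(\inp{\xi}^{-2},g)$ and $r_k\in S(\inp{y}^{-\bar\rho-kt}\inp{\xi}^{-k},g)$. Borel-summing $q\sim\sum_k q_k$ in $S(\inp{\xi}^{-2},g)$ produces $\Opw(q)h_j=\inp{y}^{-\bar\rho}+R_\infty$ where $R_\infty$ has Schwartz kernel.

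The identification step rewrites this as $\Opw(q)=\inp{y}^{-\bar\rho}h_j^{-1}+R_\infty h_j^{-1}$ on $\vD(h_j)$, so the lemma reduces to showing that $R_\infty h_j^{-1}$ is smoothing, hence in $\Opw(S(\inp{\xi}^{-2},g))$. Since the kernel of $R_\infty$ is Schwartz in $(y,y')$, commuting any monomial in $y$ or in $p_j$ through $R_\infty$ preserves the Schwartz property; composing with $h_j^{-1}$, bounded between the appropriate weighted $L^2$ spaces by \eqref{eq:yafa}, then gives $R_\infty h_j^{-1}$ a Schwartz kernel as well. Running the same construction for $h_j+\epsilon$ (whose symbol only strengthens the lower bound, uniformly in $\epsilon\in(0,1]$) and passing $\epsilon\to 0_+$ strongly via \eqref{eq:yafa} yields the $\slim$ statement. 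The main technical obstacle will be this final smoothing claim: the standard pseudodifferential machinery handles $R_\infty P$ cleanly only when $P$ itself has a nice Weyl symbol, whereas $h_j^{-1}$ has symbol growing like $\inp{y}^{\bar\rho}$ and is not itself a bounded Weyl operator; the commute-then-absorb argument sketched above requires a careful index chase in the anisotropic calculus to verify that the relevant commutators remain of Schwartz class.
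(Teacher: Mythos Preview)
Your strategy---full asymptotic parametrix, Borel sum, then handle the smoothing remainder---differs from the paper's and hits precisely the obstacle you flag, which is more serious than a routine index chase. To show $R_\infty h_j^{-1}$ has Schwartz kernel you need $y^\alpha p^\beta R_\infty h_j^{-1}\,p^{\beta'}y'^{\alpha'}$ bounded on $L^2$ for all multi-indices. Commuting the left monomials through $R_\infty$ is fine, but on the right you are left with $h_j^{-1}p^{\beta'}y'^{\alpha'}$, and Yafaev's bound \eqref{eq:yafa} controls only $\inp{y}^{-a}h_j^{-1}\inp{y}^b$ with $a,b\ge 0$, $a-b\ge\bar\rho$: it gives no handle on $h_j^{-1}$ applied to polynomially growing inputs, nor on $h_j^{-1}p^{\beta'}$ for $|\beta'|>2$. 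One can rescue this via elliptic regularity plus a bootstrap (the kernel $L(y,\cdot)$ solves $(h_j)_{y'}L(y,\cdot)=K(y,\cdot)$ with Schwartz right-hand side, and iterated commutator corrections using $\partial^\gamma w_j=\vO(\inp{y}^{-\rho-|\gamma|})$ together with \eqref{eq:yafa} propagate $L^2_\infty$-membership to all $\partial_{y'}^\gamma L$), but that is a substantial argument, not a commute-and-absorb.

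The paper sidesteps this entirely. It keeps only the \emph{first-order} parametrix $r^0_j h_j = 1+A_j$ and writes, when $-1\notin\sigma(A_j)$,
\[
\slim_{\epsilon\to 0_+}\inp{y}^{-\bar\rho}(h_j+\epsilon)^{-1}=\inp{y}^{-\bar\rho}(1+A_j)^{-1}r^0_j,
\]
then shows $(1+A_j)^{-1}\in\Opw(S(1,g))$ by Neumann series and Beal's criterion. The point is that one inverts the \emph{bounded} operator $1+A_j$, never $h_j$ itself, so no anisotropic growth enters. To cover the case $-1\in\sigma(A_j)$ the paper introduces a spatial scale $l$: with a quadratic partition $\chi_{1,l}^2+\chi_{2,l}^2=1$ it sets $r_{j,l}=\chi_{1,l}h_j^{-1}\chi_{1,l}+\chi_{2,l}r^0_j\chi_{2,l}$; the compactly supported piece $\chi_{1,l}h_j^{-1}\chi_{1,l}$ is a pseudodifferential operator by \eqref{eq:yafa} and Beal's criterion (commutators with $y$ and $p$ stay compactly supported), and then $r_{j,l}h_j=1+A_{j,l}$ with $\|A_{j,l}\|\to 0$ as $l\to\infty$, restoring invertibility. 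Your route trades this invertibility issue for the $R_\infty h_j^{-1}$ problem; the paper's choice is the cleaner one.
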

    \begin{proof} Recall $r^0_{j}h_j =1+A_j$,  where $A_j$ has symbol $a_j\in  S(\inp{y}^{-t}\inp{\xi}^{-1}, g)$.
      If $-1\notin \sigma (A_j)$
      \begin{align*}
        (h_j+\epsilon)^{-1}=(1+A_j)^{-1}r^0_{j}-\epsilon (1+A_j)^{-1}r^0_{j}(h_j+\epsilon)^{-1},
      \end{align*} yielding  $\slim_{\epsilon\searrow
    0} \,\inp{y}^{-\bar \rho}(h_j+\epsilon)^{-1}=\inp{y}^{-\bar \rho}(1+A_j)^{-1}r^0_{j}$.
(Whence formally
      $h_j^{-1}=(1+A_j)^{-1}r^0_{j}$.)  Using the Neumann series to
        expand $(1+A_j)^{-1}$  and Beal's criterion,
      cf. \cite[Subsection 4.2]{FS},
      we conclude that indeed $\inp{y}^{-\bar \rho}(1+A_j)^{-1}r^0_{j}$ has symbol in
      $S\parb{\inp{\xi}^{-2}, g}$ completing the proof in this case.

To treat the general case we let   $\chi_1(r)=\chi(r<1)$ and $\chi_2(r)=\chi(r>1)$ form a quadratic
 partition of unity of smooth non-negative functions on $\R$,
 $\chi_1(r)^2+\chi_2(r)^2=1$, such that $\chi_1$ is supported
 in $(-\infty, 2)$ and $\chi_1(r)=1$ for
 $r<1$.  We introduce for $l>1$ the functions
 $\chi_{i,l}(r):=\chi_{i}(r/l)$; $i=1,2$. Let
 \begin{align*}
   R_l=\oplus_{j=1}^2\parb{\chi_{1,l}(r_j)h_j^{-1}\chi_{1,l}(r_j)+\chi_{2,l}(r_j)r^0_{j}\chi_{2,l}(r_j)}=\oplus_{j=1}^2r_{j,l}.
 \end{align*}
 Here $\chi_{1,l}(r_j)h_j^{-1}\chi_{1,l}(r_j):=\slim_{\epsilon\searrow
    0}
  \,\chi_{1,l}(r_j)(h_j+\epsilon)^{-1}\chi_{1,l}(r_j)$. Using
  \eqref{eq:yafa}  and Beal's criterion  we deduce  that
 $\chi_{1,l}(r_j)h_j^{-1}\chi_{1,l}(r_j)$ and therefore also $r_{j,l}$ are  pseudodifferential operators with symbol
    in  $S\parb{\inp{y}^{\bar\rho}\inp{\xi}^{-2}, g}$. Next we write
     $r_{j,l}h_j =1+A_{j,l}$ and note (as above) that
 $A_{j,l}$ has symbol $a_{j,l}\in
 S(\inp{y}^{-t}\inp{\xi}^{-1}, g)$.
We also observe that
\begin{align*}
  \norm {\chi_{2,l}(r_j)A_j\chi_{2,l}(r_j)}\to 0\text{ for }l \to \infty.
\end{align*} Using this and a little computation it follows that
\begin{align*}
   \norm {A_{j,l}}\to 0\text{ for }l \to \infty.
\end{align*}
 In particular it
 follows that $-1\notin \sigma (A_{j,l})$  for  $l$ taken big
 enough. Consequently for any such big number
  $\inp{y}^{-\bar\rho}h_j^{-1}=\inp{y}^{-\bar\rho}(1+A_{j,l})^{-1}r_{j,l}$ has symbol
    in  $S\parb{\inp{\xi}^{-2}, g}$.
\end{proof}

We note that Lemma \ref{lemma:posit-slowly-decay} can be used to
replace \eqref{eq:LipSchW2} by the cleaner assertion
\begin{align}\label{eq:LipSchW2bb}
    (1+\widetilde{K})f=0; \quad \widetilde{K}=\widetilde{R}v, \quad \widetilde{R}=h^{-1}_{1}\oplus
  h^{-1}_{1},
  \end{align} of course given an appropriate interpretation.

\subsection{Homogeneous degree $-2$ effective
   potentials}\label{subsec:Homogeneous-effect-potent} Suppose
 $\rho\geq 1/2 $, the conditions of
  Proposition \ref{prop2.3} and that $W_j=W_j(x_j)$, $j=1,2$, fulfill the
 following condition (recall that in general $W_j$ is bounded outside
 a bounded set):

 For  $j=1,2$ there exists a real continuous
 function $q_j=q_j(y)$ on the unit sphere $\S_j$ of $\bX_j$ and a
 bounded potential $B_j(x_j)=\vO(\w {x_j}^{-3})$ on $\bX_j$ such that
 \begin{equation}\label{eq:posLowbnd0b}
     \exists R> 0\quad \forall y=r\theta\in X_j\text{ with }|y|=r\geq
     R: W_j(y)= \tfrac{q_j(\theta)}{r^2}+B_j(y).
   \end{equation}
  This condition may arise by a Taylor expansion of $I_j(x)$, $j=1,2$,
 as follows: Decompose
 \begin{align*}
   \sum _{b\not\subset
   a_j}V_b^{(1)}(x^b)=\sum _{b\not\subset
   a_j}V_b^{(1)}((x_j)^b)+\sum _{b\not\subset
   a_j}(x^j)^b\cdot \nabla V_b^{(1)}((x_j)^b)+\vO(\w {x_j}^{-\rho-2}).
 \end{align*} Assume  now   that the first term vanishes
 identically and that the last term contributes to   $W_j$ by a term
 of order $\vO(\w {x_j}^{-3})$ (valid for
 $\rho\geq 1 $ of course). The middle term contributes to $W_j$ by
 $\sum _{b\not\subset a_j}k^b_j\cdot \nabla V_b^{(1)}((x_j)^b)$, where
 $k_j=\inp{\varphi_j,x^j\varphi_j}_j$, so the  content of
 \eqref{eq:posLowbnd0b} would in this case be the condition
 \begin{align*}
   \sum _{b\not\subset a_j}k^b_j\cdot \nabla V_b^{(1)}((x_j)^b)=\tfrac{q_j(\theta)}{r^2}+\vO(r^{-3}).
 \end{align*} This has relevance with nonzero $q_j$'s in a certain
 case for systems with Coulomb interactions, see
 Subsections \ref{First principal example} and \ref{Second  principal
   example}. The case $q_j=0$, $j=1,2$ is of course included in
 \eqref{eq:posLowbnd0b}, and  the relevance of this case for the physics
 models is also explained there.

 For simplicity of presentation let us in the following assume
 $R=1$. For $n_j=\dim \bX_j\geq 2$ the spectrum of the (minus)
 Laplace-Beltrami operator $-\Delta_{\theta_j}$ on the unit sphere
 ${ {\S_j}}\subset \bX_j$ is known to be $\{l(l+n_j-2)\mid l\in \N_0\}$. For
 our problem it is relevant to study the spectrum of
 $-\Delta_{\theta_j}+q_j$. For $n_j=1$ we define
 $-\Delta_{\theta_j}=0$, so in this case the spectrum of
 $-\Delta_{\theta_j}+q_j$ is $\{q_j(-1), q_j(1)\}$. In any dimension
 it is convenient to use the following parametrization: Write each
 $\mu\in \sigma(-\Delta_{\theta_j}+q_j)$ as
 \begin{align}\label{eq:relmunu}
   \mu=\nu^2-\tfrac{(n_j -2)^2}{4},
 \end{align} where by convention $\nu\geq 0$ if $\mu\geq-\tfrac{(n_j
   -2)^2}{4}$, and $\i\nu>0$ if  $\mu<-\tfrac{(n_j
   -2)^2}{4}$. Then the  collection of such numbers $\nu$ is denoted by
 $\sigma_j$. Let $P_{j,\nu}$, $\nu\in \sigma_j$, denote
 the orthogonal projection onto the corresponding eigenspace.  For
 convenience  we omit in the following
 the subscript $j$. Letting $\zeta$ denote a corresponding eigenvector,  the
 eigenvalue problem
 \begin{align*}
   (-\Delta +\tfrac{q_j(\theta)}{r^2})\parb{r^{\tfrac{1-n_j}2}u(r)}\otimes
   \zeta(\theta)=0
 \end{align*}
 reduces to the Euler equation
 \begin{align}\label{eq:ode}
   -u''(r) +\tfrac {\nu^2-1/4}{r^2}u(r)=0.
 \end{align}

 Consider now for any $\nu\geq 0$ or $\i \nu >0$ the Dirichlet
 problem
 \begin{align}\label{eq:compD}
   u \text{ fulfills }\eqref{eq:ode}\text{ for }
   r\geq 1,\mand u(1)=0.
 \end{align}
 The \emph{regular solution} is
 \begin{align*}
   \phi_\nu(r)=
   \begin{cases}
     \tfrac{r^{1/2+\nu}-r^{1/2-\nu}}{2\nu}&\text{ for } \nu\neq 0,\\
     r^{1/2}\ln r&\text{ for } \nu= 0
   \end{cases};
 \end{align*} note that indeed $\phi_\nu(1)=0$. The \emph{outgoing
   solution} to \eqref{eq:ode} (also defined for $r\geq1$)  is
 \begin{align*}
   \psi_\nu(r)=r^{1/2-\nu};
 \end{align*} note that indeed this is `outgoing' for $\nu$ complex. These two solutions form a fundamental system, and we can define a Green's function by
 \begin{align*}
   R_\nu(r,r')=\phi_\nu(r_<)\psi_\nu(r_>);\quad r_<=\min\{r,r'\},
   \, r_>=\max\{r,r'\}.
 \end{align*} Its formal adjoint $R^*_\nu(r,r')=\phi_\nu(r_<)\overline   {\psi_\nu(r_>)}$
 is also a Green's function.
 This  means that for all sufficiently decaying  functions $v=v(r)$
 \begin{align*}
   -u''(r) +\tfrac {\nu^2-1/4}{r^2}u(r)=v(r),\text{ where }
   u(r)=\int_1^\infty R_\nu(r,r')v(r')
   \,\d r',
 \end{align*} and similarly for $R^*_\nu$.

 Let $\chi_1(r)=\chi(r<8)$ and $\chi_2(r)=\chi(r>8)$ form a quadratic
 partition of unity of smooth non-negative functions on $\R$,
 $\chi_1(r)^2+\chi_2(r)^2=1$, such that $\chi_1$ is supported
 in $(-\infty, 8)$ and $\chi_1(r)=1$ for
 $r<4$.

Let $w_j=W_j$,
 $h_j=p_j^2+w_j$ and $v=V$ (this notation conforms with the previous
 subsections). We easily check that
 $v\in \vL\parb{\vH^0_{s},\vH^0_{3+s}}\mfor s\in\R$, cf. \eqref{eq:second_bnd}.
  Let
 \begin{align}\label{eq:ansG}
   \begin{split}
   &G_+=\oplus_{j=1}^2\parbb{\chi_1(r_j)(h_j-\i)^{-1}\chi_1(r_j)+\sum_{\nu\in\sigma_j}\parb{\chi_2(r_j) r_j^{\tfrac{1-n_j}2}R_{j,\nu} r_j^{\tfrac{n_j-1}2}\chi_2(r_j)}\otimes
     P_{j,\nu}},\\
   &G^*_+=\oplus_{j=1}^2\parbb{\chi_1(r_j)(h_j+\i)^{-1}\chi_1(r_j)+\sum_{\nu\in\sigma_j}\parb{\chi_2(r_j) r^{\tfrac{1-n_j}2}R^*_{j,\nu} r^{\tfrac{n_j-1}2}\chi_2(r_j)}\otimes
     P_{j,\nu}},
   \end{split}
 \end{align} where $r_j=|x_j|$ and  $R_{j,\nu}$ has a formal interpretation as an
 (unbounded) operator on $L_I^2:=L^2(I;\d r)$, $I:=[1,\infty)$. Phrased
 differently the powers of $r$ are isometrically identifying the
 spaces   $L_I^2$
 and $L^2(I;r^{n_j-1}\d r)$; note that the latter
 space appears naturally for tensor decompositions in spherical
 coordinates. The operators $G_+$ and  $G^*_+ $ are parametrices for
   $-E_{\vH}(\lambda_0)$ in the sense of the following lemma. Of
   course these operators are not the only options for a parametrix
   construction. Under stronger conditions, we consider in fact   a simpler
   parametrix in Section \ref{sec:resolv-asympt-nearLOWEST}.

 Introduce also the numbers
 \begin{align}
   \label{eq:crrit}
   \nu_0 =\min_{j\in\{1,2\}}\min_{\nu\in \sigma_j}\Re \nu,\quad \quad s_0=1+\nu_0.
 \end{align} In the fastly decaying case, $q_j=0$ for $j=1,2$, these numbers are
 explicit dimensional depending constants, cf.  \eqref{add110}. For
 example $\nu_0=\tfrac 12$ for $n_1=n_2=3$ in this case (to be
 considered in Section \ref{sec:resolv-asympt-nearLOWEST}).
 \begin{lemma}\label{lemHS}
   \begin{enumerate} [1)]
   \item\label{itemG1} The operators
     \begin{align}\label{HS0}
       &G_{+},G^*_{+} \in \vL(\vH^{-1}_{s'},\vH^{1}_{-s}) \text{ for }s, s' > 1 -\nu_0 \text{ with }  s+ s'>2,\\
       & K^*_+:=-E^*_{\vH}(\lambda_0)G^*_{+}-1 \in \vC(\vH^{-1}_{s}) \text{ for
         any   }s \in (1-\nu_0,2+\nu_0);\label{HS1}
     \end{align} here  $E^*_{\vH}(\lambda_0)=E_{\vH}(\lambda_0)$ is given  the
     distributional meaning, and we  define
     $K^+=(K^*_+)^*\in\vC(\vH^{1}_{-s})$.  Explicitly
     \begin{align}
       \label{eq:Kplus}
       \begin{split}
         K^+&=G_+v+\oplus_{j=1}^2\parbb{\chi_1(h_j-\i)^{-1}\parb{\i\chi_1+[\chi_1,p_j^2]}\\
           &+\sum_{\nu\in\sigma_j}\parbb{\parb{\chi_2
               r^{\tfrac{1-n_j}2}R_{j,\nu}
               r^{\tfrac{n_j-1}2}\chi_2}\otimes
             P_{j,\nu}}B_j\\
           &+\sum_{\nu\in\sigma_j}\parb{\chi_2
             r^{\tfrac{1-n_j}2}R_{j,\nu}
             r^{\tfrac{n_j-1}2}[\chi_2,p_j^2]}\otimes P_{j,\nu}}.
       \end{split}
     \end{align}

   \item \label{itemG2} $K^+f=-G_+E_{\vH}(\lambda_0)f-f$ for every
      $f\in \vH^{1}_{-s_0}$ such that
     $E_{\vH}(\lambda_0)f\in \vH^{-1}_{s}$  for some $s > 1 -\nu_0$.

   \item\label{itemG3} For any $s > 1$ the operator
     $G_{+} \in\vL(\vH^{-1}_{s},\vH^{1}_{-s})$ is injective, and
     \begin{align}
       \label{eq:posI}
       \Im G_{+}=\tfrac{G_{+}-G^*_{+}}{2\i}\geq \oplus_{j=1}^2\parb{\chi_1(h_j+\i)^{-1}(h_j-\i)^{-1}}\chi_1\geq 0.
     \end{align}
   \end{enumerate}
 \end{lemma}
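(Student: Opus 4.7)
The strategy is to exploit the explicit spectral decomposition on each sphere $\S_j$ that underlies the radial ansatz for $G_{+}$, treating the inner (compactly supported) and outer (radial Green function) pieces separately and then combining.

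\emph{Part 1.} The inner blocks $\chi_1(h_j\mp\i)^{-1}\chi_1$ map $\vH^{-1}_{s'}\to\vH^{1}_{-s}$ for arbitrary $s,s'\in\R$ by second-order elliptic regularity for $h_j$ combined with the compact support of $\chi_1$. For the outer blocks, after the isometry $L^2(\bX_j;\d x_j)\simeq L^2(I;\d r)\otimes L^2(\S_j)$ (modulo weight $r^{(n_j-1)/2}$) and projection by $P_{j,\nu}$, the question reduces to the scalar integral operator on $L^2(I)$ with kernel $R_\nu(r,r')=\phi_\nu(r_<)\psi_\nu(r_>)$. Using the asymptotics $|\phi_\nu(r)|\le C r^{1/2+\Re\nu}$ (with a harmless logarithmic factor when $\nu=0$) and $|\psi_\nu(r)|= r^{1/2-\Re\nu}$, together with $\Re\nu\ge\nu_0$, a direct Schur estimate on
\[
\int_1^\infty |R_\nu(r,r')|\,\w{r'}^{-s'}\,\d r'\le C\w{r}^{\max(2-s',\,1/2-\nu_0)}
\]
yields boundedness $L^2_{s'}(I)\to L^2_{-s}(I)$ exactly under the thresholds $s,s'>1-\nu_0$ and $s+s'>2$. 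Second-order regularity is then recovered by observing that, by construction, $r^{(1-n_j)/2}R_{j,\nu}r^{(n_j-1)/2}$ inverts the radial part of $h_j$ modulo the Dirichlet boundary term at $r=1$. For the compactness of $K^*_+$, expand $-E^*_{\vH}(\lambda_0)G^*_{+}=(h_1\oplus h_2)G^*_+ +vG^*_+$; on each block the $h_j$ acting on the corresponding part of $G^*_+$ produces $\chi_1^2$ and $\chi_2^2$ (which sum to $1$) plus commutator remainders $[\chi_i,p_j^2]$, and plus the $B_j$-term in the outer piece. The latter two have range $\vO(\w{x}^{-3})$ and are compactly supported in the derivative, while $v$ is polynomially decreasing; Rellich's theorem and the weighted compact embeddings of Subsection \ref{Spaces} then give $K^*_+\in\vC(\vH^{-1}_s)$ for $s\in(1-\nu_0,2+\nu_0)$.

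\emph{Part 2.} Taking formal adjoints in \eqref{HS1} yields $-G_{+}E_{\vH}(\lambda_0)-1=K^+$ on test functions, with both sides continuous in the topologies of part 1). For general $f\in\vH^{1}_{-s_0}$ with $E_{\vH}(\lambda_0)f\in\vH^{-1}_s$, approximate by cutoffs $\chi_R(|x|)f$ and pass to the limit: the convergence $E_{\vH}(\lambda_0)(\chi_R f)\to E_{\vH}(\lambda_0)f$ in $\vH^{-1}_s$ follows from a commutator estimate analogous to \eqref{eq:compCom} combined with $[E_{\vH}(\lambda_0),\chi_R]\to 0$ in the appropriate operator norm, after which part 1) transfers the limit to $G_{+}E_{\vH}(\lambda_0)f$ and $K^+f$.

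\emph{Part 3.} A direct computation gives
\[
\tfrac1{2\i}\Big(\chi_1(h_j-\i)^{-1}\chi_1-\chi_1(h_j+\i)^{-1}\chi_1\Big)=\chi_1(h_j+\i)^{-1}(h_j-\i)^{-1}\chi_1\ge 0,
\]
which accounts for the claimed lower bound; the outer radial contributions to $\Im G_+$ have kernel $\phi_\nu(r_<)\,\Im\psi_\nu(r_>)$, which vanishes identically when $\nu\in\R$, so no cancellation of the inner term occurs (the case $\i\nu>0$ is handled by the parallel calculation using $\Im\psi_\nu=-|\nu|\phi_\nu$). For the injectivity at $s>1$, if $G_+f=0$ then $\langle f,G_+f\rangle=0$, so $\Im\langle f,G_+f\rangle=0$, whence by the positivity bound $(h_j-\i)^{-1}\chi_1 f_j=0$, i.e.\ $\chi_1 f_j=0$ for $j=1,2$. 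The remaining equation reduces, mode by mode, to $\chi_2 r^{(1-n_j)/2}R_{j,\nu}r^{(n_j-1)/2}\chi_2 P_{j,\nu}f_j=0$. Writing $g=r^{(n_j-1)/2}\chi_2 P_{j,\nu}f_j$ (supported in $\{r>4\}$), the identity $(R_\nu g)(r)=0$ for $r>4$, differentiated once and combined with the non-vanishing Wronskian $W(\phi_\nu,\psi_\nu)$, forces $\phi_\nu(r)g(r)=0$ for a.e.\ $r>4$, hence $g\equiv 0$ and therefore $P_{j,\nu}f_j=0$ for every $\nu$; summing yields $f_j=0$.

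The main technical obstacle lies in part 1): tracking the Schur constants and logarithmic factors sufficiently explicitly to pin down the sharp thresholds $s,s'>1-\nu_0$ and $s+s'>2$, uniformly over the two regimes $\Re\nu\ge 0$ and $\i\nu>0$, and then upgrading the scalar integral estimates into the weighted Sobolev mapping statements \eqref{HS0}--\eqref{HS1}.
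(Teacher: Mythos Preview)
Your plan matches the paper's, but two concrete points fail as written.

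\textbf{Schur does not give the stated thresholds.} You claim the Schur integral $\int_1^\infty|R_\nu(r,r')|\,\w{r'}^{-s'}\,\d r'$ is bounded by $C\w{r}^{\max(2-s',\,1/2-\nu_0)}$ and that this yields boundedness exactly for $s,s'>1-\nu_0$, $s+s'>2$. But for $r'>r$ the integrand is $|\phi_\nu(r)|\,r'^{1/2-\Re\nu-s'}$, and the tail $\int_r^\infty r'^{1/2-\Re\nu-s'}\,\d r'$ \emph{diverges} unless $s'>3/2-\Re\nu$; so the unweighted Schur test only delivers $s,s'>3/2-\nu_0$, strictly stronger than what the lemma asserts. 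The paper instead applies the Hilbert--Schmidt criterion to $r^{-s}R_{j,\nu}r^{-s'}$: since this is an $L^2$ condition on the kernel, the relevant convergence requirement becomes $1-2\Re\nu-2s'<-1$, i.e.\ $s'>1-\Re\nu$, which is the sharp threshold. The explicit Hilbert--Schmidt computation also gives the operator bound $\vO(|\nu|^{-3/2})$ for $R_{j,\nu}$ and $\vO(|\nu|^{-1/2})$ for $\partial_r R_{j,\nu}$; the paper uses these to sum the orthogonal $\nu$-decomposition, to upgrade from the $L^2$ statement to $\vH^{-1}\!\to\!\vH^1$ via ellipticity, and to prove compactness of $K^*_+$ mode by mode (each $T_\nu$ Hilbert--Schmidt and $\|T_\nu\|\to 0$). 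Your own closing paragraph flags exactly this obstacle, but the Schur route you sketch does not overcome it.

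\textbf{Sign error in Part 3.} For $\i\nu>0$ write $\nu=-\i\sigma$ with $\sigma>0$; then $\psi_\nu(r)=r^{1/2}\e^{\i\sigma\ln r}$ while $\phi_\nu(r)=\sigma^{-1}r^{1/2}\sin(\sigma\ln r)$ is real, so $\Im\psi_\nu=\sigma\phi_\nu=+|\nu|\phi_\nu$, not $-|\nu|\phi_\nu$. With the correct sign the kernel of $\Im R_{j,\nu}$ equals $\sigma\phi_\nu(r_<)\phi_\nu(r_>)=\sigma\phi_\nu(r)\phi_\nu(r')\ge 0$ and \eqref{eq:posI} follows; with your sign the outer contribution would be $\le 0$ and could cancel the inner positive term, so the claimed lower bound would not follow from your argument.
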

 \begin{proof}
   \subStep{I} Let $s,s'$ be given as in \eqref{HS0}. We show that
   $r^{-s}R_{j,\nu}r^{-s'}$ is bounded on $L_I^2$ with a bound
   independent of $\nu\in \sigma_j$. The Hilbert-Schmidt criterion
   works. We note the bounds (here for $\nu\neq 0$ only; the case
   $\nu= 0$ is simpler)
   \begin{align*}
     &\int_1^\infty \d r  |\nu|^{-2} r^{1+ 2\Re \nu-2s}\int_r^\infty r'^{\,(1- 2\Re
       \nu-2s')}\d r'\leq C |\nu|^{-2}(s'+\Re \nu-1)^{-1},\\
     &\int_1^\infty \d r  |\nu|^{-2} r^{1- 2\Re \nu-2s}\int_1^r r'^{\,(1+
       2\Re \nu-2s')}\d r'\\
     &\leq  C |\nu|^{-2}
       \begin{cases}
         (s+\Re \nu-1)^{-1} (s'-\Re \nu-1)^{-1}\quad\text{
           for }s'>\Re \nu+1,\\
         (\Re \nu+1-s')^{-1} \quad\text{
           for }s'< \Re \nu+1,\\
         \int_1^\infty  r^{1- 2\Re \nu-2s}\ln r \, \d r
         \quad\text{ for }s'= \Re \nu+1,
       \end{cases}.
   \end{align*}
   This gives in particular the operator bound $\vO(|\nu|^{-3/2} )$ for
   $\Re\nu\to \infty$. Similarly $\partial_r r^{1-s}R_{j,\nu}r^{-s'}$
   is bounded with the bound $\vO(|\nu|^{-1/2} )$. This leads to
   \begin{align*}
     &\parb{-\Delta_j +\tfrac{q_j(\theta)}{r^2}}\parbb{\parb{\chi_2 r^{\tfrac{1-n_j}2}R_{j,\nu} r^{\tfrac{n_j-1}2}\chi_2}\otimes
     P_{j,\nu}}\\
&=\parb{[p_j^2,\chi_2]r^{\tfrac{1-n_j}2}R_{j,\nu} r^{\tfrac{n_j-1}2}\chi_2}\otimes
     P_{j,\nu} +\chi_2^2\otimes
     P_{j,\nu}
     \in \vL(L^2_{s'},L^2_{-s}),
   \end{align*}  where the first term is bounded by
   $\vO(|\nu|^{-1/2})$. By using the
  ellipticity of
   $-\Delta_j +\tfrac{q_j(\theta)}{r^2}$ and  interpolation we then
   obtain  that
   \begin{align*}
     \parb{\chi_2 r^{\tfrac{1-n_j}2}R_{j,\nu} r^{\tfrac{n_j-1}2}\chi_2}\otimes
     P_{j,\nu}
     \in \vL(H^{-1}_{s'},H^{1}_{-s})
   \end{align*}  with the  bound $\vO(|\nu|^{0} )$.

   Since the infinite sum defining $G_+$ is an orthogonal direct sum
   decomposition (and
   $\chi_1(h_j-\i)^{-1}\chi_1\in \vL(H^{-1}_{s'},H^{1}_{-s})
   $, obviously) we conclude \eqref{HS0}.

   \subStep{II}  The compactness assertion \eqref{HS1} follows by
   first computing $K^*_+$, estimating as in Step \textit{I} (for a good choice
   of parameters) and then invoking compactness on `each
   $\nu$-sector' (obtained   by the Hilbert-Schmidt criterion)  and the $\vO(|\nu|^{-1/2} )$ bound (also
   from I). The requirement $s \in (1-\nu_0,2+\nu_0)$ is dictated by
   the Hilbert-Schmidt criterion. More precisely fixing any such
   $s$  we need
   to find $s' \in\R $ such that, most importantly,
   $r^{-s'}R^*_{j,\nu}r^{-s}$ and $r^{s-3}R^*_{j,\nu}r^{-s}$ are
   bounded on $L_I^2$. Due to \eqref{HS0} it suffices to have
   \begin{align*}
     s' > 1 -\nu_0, \,  s+ s'>2\text{ and } s-3+s'\leq 0.
   \end{align*} These requirements are fulfilled with $s'=3-s$, and we
   can
   use that $\norm{r^{s-3}R^*_{j,\nu}r^{-s}}\to 0$ for $\nu\to\infty$. We
   have shown \ref{itemG1}.

   \subStep{III}  We show \ref{itemG2} by mimicking the proof of
   Theorem \ref{thm:negat-effect-potent}. For any given
   $f\in \vH^{1}_{-s_0}$ with $E_{\vH}(\lambda_0)f\in \vH^{-1}_{s}$ for
   some $s > 1 -\nu_0$ we apply $G_+$ to $E_{\vH}(\lambda_0)f$ and do
   an integration by parts. More precisely we compute as follows for
   any $g\in \vH^0_\infty$ using the smooth cut-off argument
   of Step \textit{II} of the proof of Theorem \ref{thm:negat-effect-potent}
   (the integration by parts) in the third step below:
   \begin{align*}
     \inp{g, G_+E_{\vH}(\lambda_0)f}&=\inp{G^*_+g,
                                      E_{\vH}(\lambda_0)f}=\lim_{R\to \infty}\inp{G^*_+g,
                                      \chi_RE_{\vH}(\lambda_0)f}\\&=\inp{-K^*_+g-g, f}=\inp{g, -K^+f-f},
   \end{align*} yielding \ref{itemG2} by a density argument. This
   argument uses conveniently that $vf\in \vH_{(3-s_0)^-}^{-1}$,
   which in turn follows from \eqref{eq:91BB}.

   \subStep{IV} For the assertion \eqref{eq:posI} it suffices to show
   that $\Im R_{j,\nu}\geq 0$. For $\nu\geq 0$ the kernel of
   $R_{j,\nu}$ is real and symmetric, whence $\Im R_{j,\nu}= 0$ in
   that case. If $\nu=-\i \sigma$ where $\sigma>0$ the kernel is
   \begin{align*}
     \parb{\Im R_{j,\nu}}(r,r')= \sigma\phi_\nu(r_<)\phi_\nu(r_>),
   \end{align*} and since $\phi_\nu$ is real,  it then follows that $\Im
   R_{j,\nu}\geq 0$ also in that case.

   If $G_+f=0$, $f=(f_1,f_2)$, then \eqref{eq:posI} yields
   $\chi_1f=0$. Whence
   \begin{align*}
     \parbb{\chi_2 r^{\tfrac{1-n_j}2}R_{j,\nu} r^{\tfrac{n_j-1}2}}\otimes
     P_{j,\nu}\chi_2 f_j=0\text{ for all }\nu\in\sigma_j.
   \end{align*} Since $R_{j,\nu}(r,r')$ is a Green's function it follows
   from these formulas that also $\chi_2f=0$. Therefore in turn $f=0$, and
   \ref{itemG3} is shown.
 \end{proof}

 \begin{remark*} The space $\vH^{1}_{-s_0}$ in \ref{itemG2} is not
   optimal for the assertion. It suffices to require that $f_j\in
   \vB^*_{s_0,0}(\bX_j)$, $j=1,2$. On the other hand if one uses a
   slightly smaller space than below  in the case  $s_0=1$
   to define
   the notion of a resonance, viz. $\vH^1_{(-1)^+}=\cup_{s>-1}\vH^{1}_{s}$
   rather than  $\vH^{1}_{-1}$, one avoids   the special treatment
   of  the case $s_0=1$ in
   Theorem \ref{thm:short-effect-potent} \ref{item:T3} and \ref{item:T4}.
 \end{remark*}

 \begin{defn}\label{defn:hom}
   \begin{enumerate}[(1)]
   \item \label{item:10a}$\lambda_0$ is called a resonance of $H$ if
     the equation $(H-\lambda_0)u = 0$ admits a solution
     $u \in H^{1}_{-s_0} \setminus H^1$. Such solution is a resonance
      state  of $H$. The
     multiplicity of the resonance $\lambda_0$ is defined as the
     dimension, say denoted by  $n_\mathrm {res}$,   of the quotient space
     \begin{align*}
       \ker (H-\lambda_0)_{|H^{1}_{-s_0}} / \ker (H-\lambda_0)_{|H^{1}}.
     \end{align*}
   \item\label{item:9a}$0$ is called a resonance of
     $ E_{\vH}(\lambda_0)$ if the equation $ E_{\vH}(\lambda_0)f = 0$
     has a solution $f \in \vH^{1}_{-s_0}\setminus \vH^1$. Such solution is a resonance
     state of $ E_{\vH}(\lambda_0)$. The multiplicity of the
     resonance $0$ is defined as the dimension of the quotient space
     \begin{align*}
       \ker E_{\vH}(\lambda_0)_{|\vH^{1}_{-s_0}} / \ker E_{\vH}(\lambda_0)_{|\vH^{1}}.
     \end{align*}
   \end{enumerate}
 \end{defn}

It follows from Lemma \ref{lem:4.1a} (applied with  $s=s_0$ and $s=0$)
that $\lambda_0$ is a resonance
   (resp., an eigenvalue) of $H$ if and only if $0$ is a resonance
   (resp., an eigenvalue) of $ E_{\vH}(\lambda_0)$ and their
   multiplicities are the same.

 We introduce for $j=1,2$,
 \[
 \sigma_{j,k} = \sigma_{j} \setminus (k,\infty)\mand
 \sigma^+_{j,k}=\sigma_{j} \cap (0,k]; \quad k \in \N.
 \]
 We take an orthonormal basis in $\ran P_{j,\nu}$ for each
 $\nu\in \sigma_j$, say
 $\zeta_{j, \nu}^{(1)}, \dots, \zeta_{j, \nu}^{(n_{j,\nu})}$. The set
 of resonances is partially determined by the set $\sigma^+_{j,1}$, as
 the following result shows.

\begin{thm}\label{thm:short-effect-potent} Suppose the conditions of
Proposition \ref{prop2.3} and \eqref{eq:posLowbnd0b} for $j=1,2$.
  \begin{enumerate}[1)]
  \item\label{item:T1} The dimension of the space of vectors
    $u\in H^{1}_{-s_0}$ solving $(H-\lambda_0)u=0$ is finite.

    If a vector $u\in H^{1}_{-s_0}$ obeys $(H-\lambda_0)u=0$
    then $f=T^*u\in\vH^{1}_{-s_0}$ and $ E_{\vH}(\lambda_0)f=0$,
    and conversely, if $f\in\vH^{1}_{-s_0}$ obeys
    $ E_{\vH}(\lambda_0)f=0$ then
    $u={E}_+(\lambda_0)f\in H^{1}_{-s_0}$ and
    $(H-\lambda_0)u=0$.

  \item\label{item:T2} A vector $f\in\vH^{1}_{-s_0}$ obeys
    $ E_{\vH}(\lambda_0)f=0$ if and only if $f\in
    \vH^{1}_{(\nu_0-1)^-}$ and  $f\in \ker (1+K^+)$; here
    $K^+$  is given by \eqref{eq:Kplus} (as an operator on
    $\vH^{1}_{-s}$ for any $s \in (1-\nu_0,2+\nu_0)$).
  \item\label{item:T3} Suppose $f \in \vH^{1}_{-s_0}$ and
    $E_{\vH}(\lambda_0) f =0$. In the  case $s_0=1$  suppose in addition
    that  $f \in \vH^{1}_{-s}$ for some $s<1$. Then the components of $f=(f_1,f_2) $
    can be decomposed as
    \begin{align} \label{eq4.2abb}
      \begin{split}
        -f_j(r_j\theta_j) &=\sum_{\nu\in\sigma_{j,1}} \sum_{k =
          1}^{n_{j, \nu}} l_{j,\nu,k}(f) { r_j^{\frac{2-n_j}{2} -\nu}
        }\chi_2(r_j){\zeta_{j,
            \nu}^{(k)}(\theta_j)} + g_j,\\&\text{ where  } g_j\in L^2\text{ and  }\\
        l_{j, \nu,k}(f)&= \inp[\big] {
          \phi_\nu(|y_j|)|y_j|^{-\frac{n_j-1}{2} }\otimes\zeta_{j,
            \nu}^{(k)}, \tilde f_j(y_j)}_{L^2(\d y_j)};\\
        \tilde f_j&=\chi_2\parb{vf}_j+\chi_2 B_jf_j+
        [\chi_2,p_j^2]f_j\,(\in L^2_{3-s_0}).
      \end{split}
    \end{align} Here
    \begin{align}
      \label{eq:imaog0}
      l_{j, \nu,k}(f)=0\text { for }\i \nu \geq0.
    \end{align}

    In particular
    \begin{align} \label{nonresonant1b}
      \begin{split}
        f \in \vH \quad \Longleftrightarrow \quad \forall &j=1,2,\,
        \nu \in \sigma^+_{j,1}, \, k=1,\dots, n_{j, \nu}:\\& \quad
        l_{j, \nu,k}(f) =0.
      \end{split}
    \end{align}

  \item\label{item:T4} Suppose $u \in H^{1}_{-s_0}$ and
    $(H-\lambda_0) u=0$.  In the  case $s_0=1$ suppose in addition
    that  $u \in H^{1}_{-s}$ for some $s<1$. Then
    \begin{align} \label{nonresonant2b}
      \begin{split}
        u \in L^2(\bX) \quad \Longleftrightarrow \quad \forall
        &j=1,2,\, \nu \in \sigma^+_{j,1}, \, k=1,\dots, n_{j, \nu}:\\&
        \quad l_{j, \nu,k}(T^*u) =0.
      \end{split}
    \end{align}
    In particular
  if $s_0\neq 1$, then  the multiplicity $n_\mathrm {res}$ of the resonance of
    $H$ at $\lambda_0$ (if existing) is bounded by
    \begin{align}
      \label{eq:dimres}
      n_\mathrm {res}\leq\sum^2_{ j=1}\,\,\sum_{\nu\in \sigma^+_{j,1}}
      \,n_{j, \nu}.
    \end{align}
 For  $s_0= 1$  the bound \eqref{eq:dimres} is valid  provided  the left-hand side is
 replaced by the dimension of the quotient space $\ker (H-\lambda_0)_{|H^{1}_{(-1)^+}} / \ker
 (H-\lambda_0)_{|H^{1}}$.

  \end{enumerate}
\end{thm}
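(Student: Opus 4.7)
The plan is to establish parts 1)--4) in sequence by reformulating the equation $E_{\vH}(\lambda_0)f=0$ as a Fredholm equation via the parametrix $G_+$ and then extracting the leading asymptotics from the explicit Green's function $R_{j,\nu}$. Given any $f\in\vH^1_{-s_0}$ with $E_{\vH}(\lambda_0)f=0$, Lemma \ref{lemHS}~\ref{itemG2} gives $(1+K^+)f=0$ on $\vH^1_{-s}$ for any $s\in(1-\nu_0,2+\nu_0)$; when $\nu_0>0$ we choose $s=s_0$, and when $\nu_0=0$ the supplementary hypothesis $f\in\vH^1_{-s}$ for some $s<1$ keeps us inside this range. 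Compactness of $K^+$ (Lemma \ref{lemHS}~\ref{itemG1}) yields finite-dimensionality of $\ker(1+K^+)$, while iterating the equation $f=-K^+f$ via the mapping property $G_+v:\vH^1_{-s}\to\vH^1_{-(1-\nu_0)^-}$ (using $v\in\vL(\vH^0_s,\vH^0_{3+s})$) bootstraps $f$ into $\vH^1_{(\nu_0-1)^-}$, proving 2). The bijective eigentransform correspondence of 1) follows as in Lemma \ref{lem:4.1a}: since $\lambda_0\notin\sigma(\breve H)$, the operator $\breve R(\lambda_0)\Pi'$ is bounded on all weighted spaces, so $T^*\in\vL(H^1_{-s_0},\vH^1_{-s_0})$ and $E_+(\lambda_0)\in\vL(\vH^1_{-s_0},H^1_{-s_0})$ inverse to each other on the relevant kernels.

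For 3), substitute $f=-K^+f$ and isolate from the explicit formula for $K^+$ in \eqref{eq:Kplus} the `tail' contributions $\bigl(\chi_2 r^{(1-n_j)/2}R_{j,\nu}r^{(n_j-1)/2}\bigr)\otimes P_{j,\nu}$ acting on $\tilde f_j$. Using $R_{j,\nu}(r,r')=\phi_\nu(r_<)\psi_\nu(r_>)$ and the fact that for large $r$ one has $r_>=r$ and $\psi_\nu(r)=r^{1/2-\nu}$, the integral against $\tilde f_j$ factorizes as $r^{(2-n_j)/2-\nu}\chi_2(r)\zeta_{j,\nu}^{(k)}(\theta)$ times the scalar $l_{j,\nu,k}(f)=\langle\phi_\nu(|y|)|y|^{-(n_j-1)/2}\otimes\zeta_{j,\nu}^{(k)},\tilde f_j\rangle$, with the residual $\phi_\nu(r)\psi_\nu(r')$ pieces (for $r<r'$) contributing an $L^2$ remainder $g_j$; truncating the sum at $\sigma_{j,1}$ absorbs the $\nu>1$ modes into $g_j$ because of the $L^2$-computation below. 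The vanishing \eqref{eq:imaog0} for $\nu=-\i\sigma$, $\sigma>0$, follows by mimicking Step III of the proof of Theorem \ref{thm:negat-effect-potent}: reality of $v$ and $E_{\vH}(\lambda_0)$ gives $(1+K^-)f=0$ as well with $K^-$ built from $G_+^*$, so $(K^+-K^-)f=0$; combining this with Lemma \ref{lemHS}~\ref{itemG3} and the explicit formula $\Im R_{j,-\i\sigma}(r,r')=\sigma\phi_{-\i\sigma}(r_<)\phi_{-\i\sigma}(r_>)$ (with $\phi_{-\i\sigma}$ real) forces $l_{j,-\i\sigma,k}(f)=0$. The case $\nu=0$ (which occurs only when $\nu_0=0$, hence under the hypothesis $f\in\vH^1_{-s}$ for some $s<1$) is handled separately: the mode $r^{1/2}\ln r\cdot r^{-(n_j-1)/2}$ fails to lie in $\vH^1_{-s}$ for any $s<1$, so its coefficient must vanish.

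The equivalence \eqref{nonresonant1b} is then a direct $L^2$ computation: a tail term $r^{(2-n_j)/2-\nu}\chi_2\zeta$ lies in $L^2(\bX_j)$ iff $\int_1^\infty r^{1-2\nu}\,dr<\infty$ iff $\nu>1$, so among the (necessarily real and positive, thanks to \eqref{eq:imaog0}) nonvanishing coefficients only those with $\nu\in\sigma^+_{j,1}$ obstruct square-integrability. Part 4) follows by applying $T^*$: $u\in L^2(\bX)$ iff $T^*u\in\vH$, giving \eqref{nonresonant2b}; the linear map
\[
[u]\mapsto\bigl(l_{j,\nu,k}(T^*u)\bigr)_{j,\nu\in\sigma^+_{j,1},k}\ :\ \ker(H-\lambda_0)_{|H^1_{-s_0}}/\ker(H-\lambda_0)_{|H^1}\to\bigoplus_{j,\nu,k}\C
\]
is then well-defined and injective, yielding \eqref{eq:dimres}, with the borderline $s_0=1$ modification arising from the restriction $H^1_{(-1)^+}$ imposed by the extra hypothesis in 3).

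The main obstacle is the positivity/vanishing argument for $\i\nu\geq 0$: one must carefully isolate the $G_+v$ piece of $K^+$ (which carries the symmetry) from the cutoff commutator and boundary remainders in \eqref{eq:Kplus}, track how these perturb the identity $(K^+-K^-)f=0$, and verify that the residual contributions do not spoil the inner-product identity that yields $\Im\langle\tilde f, (G_+-G_+^*)\tilde f\rangle=0$ sector by sector in $\nu$. Uniform bounds in $\nu$ from Step I of the proof of Lemma \ref{lemHS} and the explicit projection $P_{j,\nu}$-decomposition of $\Im G_+$ are the key technical inputs.
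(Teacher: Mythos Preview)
Your overall structure---Lemma~\ref{lemHS}~\ref{itemG2} plus Fredholm for finite-dimensionality, bootstrapping through $f=-K^+f$ for the improved decay in part~\ref{item:T2}, extraction of the $\psi_\nu(r_>)$-asymptotics from $R_{j,\nu}$ for the decomposition~\eqref{eq4.2abb}, and the eigentransform for part~\ref{item:T4}---matches the paper. One omission: for the ``if'' direction of part~\ref{item:T2} you must argue that $f\in\ker(1+K^+)$ forces $E_{\vH}(\lambda_0)f=0$; the paper does this by computing $E_{\vH}(\lambda_0)f=-E_{\vH}(\lambda_0)K^+f\in\vH^{-1}_{(1-\nu_0)^+}$, applying Lemma~\ref{lemHS}~\ref{itemG2} to get $G_+E_{\vH}(\lambda_0)f=0$, and then invoking the injectivity of $G_+$ from Lemma~\ref{lemHS}~\ref{itemG3}.

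The real divergence is your treatment of \eqref{eq:imaog0}, which you flag as the ``main obstacle'' and attack via a symmetry/positivity argument built on $\Im G_+\geq 0$ and a second equation $(1+K^-)f=0$ from $G_+^*$. This machinery is unnecessary here. The paper obtains \eqref{eq:imaog0} \emph{directly from the decay} $f\in\vH^1_{-s}$ for some $s<1$ established in \eqref{eq:uppsbnd1}: since the angular factors $\zeta_{j,\nu}^{(k)}$ are orthogonal in $L^2(\S_j)$, each tail term $l_{j,\nu,k}(f)\,r_j^{(2-n_j)/2-\nu}\chi_2\zeta_{j,\nu}^{(k)}$ must individually lie in $L^2_{-s}$; but for $\Re\nu=0$ (exactly the condition $\i\nu\geq 0$) one has $|r^{-\nu}|=1$ and hence
\[
\int_1^\infty r^{-2s}\,\bigl|r^{(2-n_j)/2-\nu}\bigr|^2\,r^{n_j-1}\,\d r=\int_1^\infty r^{1-2s}\,\d r=\infty\quad\text{for }s<1,
\]
forcing $l_{j,\nu,k}(f)=0$. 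You already run precisely this argument for $\nu=0$; it works verbatim for purely imaginary $\nu$. (Incidentally, the leading $\nu=0$ mode in \eqref{eq4.2abb} is $r^{(2-n_j)/2}$, coming from $\psi_0(r)=r^{1/2}$, not from $\phi_0(r)=r^{1/2}\ln r$ as you wrote---though the $L^2_{-s}$-divergence conclusion happens to be the same.)

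Your positivity route is the mechanism used in the slowly-decaying case (Theorem~\ref{thm:negat-effect-potent}) and in the $\lambda_0>\Sigma_2$ variant (Theorem~\ref{thm:short-effect-potent2}), where one lacks the a~priori decay past $s_0$ or where $v^\pm$ fails to be symmetric. In the present lowest-threshold setting the bootstrap of part~\ref{item:T2} already furnishes the needed decay, so the ``obstacle'' you anticipate does not arise.
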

\begin{proof}
  \subStep{I} For the second part of \ref{item:T1} we refer to
 Lemma  \ref{lem:4.1a}. To show the first part of \ref{item:T1}
  we use this correspondance and study the equation
  $E_{\vH}(\lambda_0)f=0$ with $f\in\vH^{1}_{-s_0}$. By Lemma
  \ref{lemHS} \ref{itemG2} we can write
  $-f=G_{+}E_{\vH}(\lambda_0)f+K^+f= K^+f$, and we recall that
  $K^{+}\in \vC(\vH^{1}_{-s})$ for any $s \in (1-\nu_0,2+\nu_0)$. In
  particular $f\in\vH^{1}_{-s'}$,  $K^{+}\in \vC(\vH^{1}_{-s'})$  and
  $f+K^+f=0$ for $s'=3/2+\nu_0$.  Whence it follows from Fredholm theory
  that the dimension of the space of functions $f\in \vH^{1}_{-s_0}$ solving
  $E_{\vH}(\lambda_0)f=0$ is finite. We have shown \ref{item:T1}.

  \subStep{II}  To show the `only if part' of \ref{item:T2} we
  consider any  $f\in\vH^{1}_{-s_0}\cap \ker (1+K^+)$. Due to Step
  \textit{I}  it suffices  to show that $f\in
    \vH^{1}_{(\nu_0-1)^-}$. This property is  trivially fulfilled  for $s_0=1$. If $s_0>1$ we
    argue as follows. By using \eqref{eq:Kplus} we conclude that
  $f\in\vH^{1}_{-s_1}$ for any $s_1>\max\{s_0-1,1-\nu_0\}$, and we are
  done if $s_{0}\leq 2-\nu_0$.  In general we pick the smallest
  $k\in \N$ such $s_{0}-k\leq 1-\nu_0$ and note that after $k-1$
  iterations of the above argument we get $f\in\vH^{1,}_{-s_k}$ for any
  $s_k> \max\{s_0-k,1-\nu_0\}=1-\nu_0$.

To show the `if part' of \ref{item:T2} suppose
  $f\in\vH^{1}_{(\nu_0-1)^-}\cap \ker (1+K^+)$. We need to show that
  $E_{\vH}(\lambda_0)f=0$.  By  an
  explicit calculation using \eqref{eq:Kplus} it follows that
  $E_{\vH}(\lambda_0)f=-E_{\vH}(\lambda_0)K^+f\in\vH^{-1}_{s}$ for all
  $s<2+\nu_0$, in particular for some $s>1-\nu_0$.  Whence by Lemma
  \ref{lemHS} \ref{itemG2} we have
  $-G_{+}E_{\vH}(\lambda_0)f=f+K^+f=0$. Invoking  Lemma
  \ref{lemHS} \ref{itemG3} we then conclude that $E_{\vH}(\lambda_0)f=0$.

  \subStep{III} We show \eqref{eq4.2abb} for any given
  $f \in \vH^{1}_{-s_0}$ obeying $E_{\vH}(\lambda_0) f =0$.  If
  $\nu_0=0$ by assumption
  $f\in \vH^{1}_{-s}$ for some $s<1$. If $\nu_0>0$ we know from Step \textit{II} that
  $f\in \vH^{1}_{-s}$ for all $s>1-\nu_0$. Whence
  \begin{align}
    \label{eq:uppsbnd1}
     \text{for some } s<1:\quad f\in \vH^{1}_{-s}.
  \end{align}
  Since $v$ and $B_j$ appearing in the definition of $l_{j, \nu,k}(f)$
  in \eqref{eq4.2abb}  are of order $r^{-3}$, we conclude
  that
  \begin{align}
    \label{eq:prbnd}
   \text{for some } s<1\mand j=1,2:\quad \tilde f_j\in  L^2_{3-s}.
  \end{align}
  On the other hand  for $\nu\in\sigma_{j,1}$ the vector
  $\chi_2(|y_j|)\phi_\nu(|y_j|)|y_j|^{-\frac{n-1}{2} }\otimes\zeta_{j,
    \nu}^{(k)}\in H^{1}_{(-2)^-}$, so   $l_{j,
    \nu,k}(f)$ is well-defined.
  As we saw in the proof of Lemma \ref{lemHS} the operator
  $r^{-s}R_{j,\nu}r^{-s'}$ is bounded on $L_I^2$ with a bound
  independent of $\nu\in \sigma_j$ provided
  $s, s' > 1 -\Re\nu \text{ and } s+ s'>2$. In particular if $\nu>1$ we
  can take $s=0$ and conclude boundedness of $R_{j,\nu}r^{-s'}$ for
  any $s'>2$. In combination with \eqref{eq:prbnd} we conclude that
  the  terms in  the expansion of $-K_+f$ corresponding to $\nu>1$ sum up
  to a vector in $L^2$.

  It remains to consider the contributions from
  $\nu\in\sigma_{j,1}$. We examine the term in $K_+f$ corresponding to
  any fixed $j$, $\nu\in \sigma_{j,1}$ and $ k=1,\dots, n_{j,
    \nu}$. Write it as
  \begin{align*}
    &\inp[\Big]{\zeta_{j,
      \nu}^{(k)},\parbb{\chi_2 r^{\tfrac{1-n_j}2}R_{j,\nu}
      r^{\tfrac{n_j-1}2}\tilde f_j}(r_j,\cdot)}_{L^2(\S_j)}\otimes{\zeta_{j,
      \nu}^{(k)}}\\& =r_j^{\tfrac{1-n_j}2}\chi_2(r_j)\parbb{l_{j,\nu,k}(f)
                     \psi_\nu (r_j)+f_{j,\nu,k}(r_j)}\otimes{\zeta_{j,
                     \nu}^{(k)}},
  \end{align*} where
  \begin{align*}
    f_{j,\nu,k}(r)&=   \phi_\nu (r)\int _r^\infty
                    \psi_\nu (r')\inp[\big]{\zeta_{j,
                    \nu}^{(k)},r'^{\tfrac{n_j-1}2}\tilde
                    f_j(r',\cdot)}_{L^2(\S_j)}\,\d r'\\
                  &-\psi_\nu (r)\int _r^\infty
                    \phi_\nu (r')\inp[\big]{\zeta_{j,
                    \nu}^{(k)},r'^{\tfrac{n_j-1}2}\tilde
                    f_j(r',\cdot)}_{L^2(\S_j)}\,\d r'.
  \end{align*} Using the Cauchy Schwarz inequality and \eqref{eq:prbnd}
  we obtain  that
  \begin{align*}
    \text{for } r\geq 1 \text{ and for some } s<1:|f_{j,\nu,k}(r)|\leq Cr^{s-3/2}.
  \end{align*} In particular we see that  the function $g$ given by   $
    g(r\theta)=r^{\tfrac{1-n_j}2}\chi_2(r)f_{j,\nu,k}(r){\zeta_{j,
    \nu}^{(k)}}(\theta)$  is in $L^2$.

  We have shown \eqref{eq4.2abb}.  Obviously \eqref{eq:imaog0} and
  \eqref{nonresonant1b} are consequences of \eqref{eq4.2abb}  and
  \eqref{eq:uppsbnd1}.

  \subStep{IV}  As for \ref{item:T4} we note that
  \eqref{nonresonant2b} follows from \ref{item:T1}, \ref{item:T3} and Lemma  \ref{lem:4.1a},
  and that  the remaining statements of \ref{item:T4}  follow
  from \eqref{nonresonant2b}.

\end{proof}

\begin{remark}\label{remark:mixed1} In this section we imposed the conditions of
  Proposition \ref{prop2.3} and considered three different cases of
  asymptotics of the effective inter-cluster interaction $W_j$,
  $j=1,2$. For symplicity we did not consider cases where these
  asymptotics mix. Due to the diagonal structure exhibited in
  Proposition \ref{prop2.3} it is easy to see that the diagonal
  parametrix construction works in such cases too. Thus for example it
  could be the case where $W_1$ is determined by a $I_1$ fulfilling
  \eqref{eq:virial0} while $W_2$ fulfills \eqref{eq:posLowbnd0b}. Then
  the first diagonal element of the parametrix should be the one in
  \eqref{eq:BesR} (taken with plus) and the second element should be
  the operator $G_+$ appearing in Lemma \ref{lemHS}. By using an
  analogue of \eqref{eq:Imvanissh} it follows by a similar iteration
  scheme as the ones used before that the resonance states of
  $E_{\vH}(\lambda_0)$  (defined similarly in this case) have a
  similar structure as in Theorem \ref{thm:short-effect-potent}
  \ref{item:T3}  and now constitute  a subspace of
  $\vH^{1,1}_{\infty,-s_0}$,  where $s_0$ is defined  as in
  \eqref{eq:crrit} in terms  of the (assumed) asymptotics
  of  $W_2$.
\end{remark}
\begin{remark}\label{remark:except1} Since we imposed the conditions of
  Proposition \ref{prop2.3} we did not treat the exceptional cases
  discussed in Sections \ref{sec:The case where lambda0insigma} and
  \ref{sec:The case when the condition {ass2}}. This is an additional
  technical issue that can be treated essentially by the same
  parametric construction as the one discussed above. We omit the
  details, referring the reader to Section \ref{sec:CoulRellich} for a
  treatment of   models of physics.
  \end{remark}

\section{The case $\lambda_0>\Sigma_2$}\label{sec:slow13}

Here we investigate the case where $\lambda_0>\Sigma_2$. We need a replacement
$R'(\lambda_0)\to R'(\lambda_0\pm \i 0)$ in the definition of
$E_{\vH}(\lambda_0)$ in Proposition \ref{prop2.3} (or rather in Remark
\ref{remark:formbnd}), giving rise to the notation
$E^\pm_{\vH}(\lambda_0)$. For that (and for other purposes)  we need various statements  from  Chapter \ref{Spectral
  analysis of H' near E_0}. The  basic  structure of the resolvent is
\begin{subequations}
\begin{align}\label{eq:basresbreve}
  \begin{split}
  R'(z)&= \breve R(z)-(p_1^2\Pi_1+p_2^2\Pi_2-z)^{-1}\Pi,\\
 R'(z)&=\Pi'\breve R(z)\Pi'=\Pi'\breve R(z)=\breve R(z)\Pi';
\quad\breve R(z)=(\breve H-z)^{-1}.
  \end{split}
\end{align}

We shall use the following properties that are parallel to
\eqref{eq:BesR} and \eqref{eq:micr1}. See Corollary
\ref{cor:microlocal-bounds} for the non-multiple   two-cluster case and
Remark \ref{remark:microlocal-boundsGEN} for the present multiple  case. There exists the strong
weak-star limits
\begin{align}\label{eq:bRES}
  \swslim_{\epsilon\to 0_+} \breve R(\lambda_0\pm \i \epsilon)=\breve R(\lambda_0\pm \i 0)\in \vL(\vB_{1/2}(\bX),\vB_{1/2}^*(\bX)),
\end{align} and if $\breve R(\lambda_0+ \i 0)f=\breve R(\lambda_0- \i
0)f$ for  a given $f\in L^2_t$ for some  $t>1/2$, then
 \begin{align}\label{eq:2bnd}
  \breve R(\lambda_0+ \i 0)f=\breve R(\lambda_0- \i 0)f\in  L^2_{t-1}.
\end{align}

\end{subequations}
We shall use the operators $h_j=p_j^2+w_j$, $j=1,2$, from Subsections
\ref{subsec:negat-effect-potent}--\ref{subsec:Homogeneous-effect-potent}. Recall
that $w_j-W_j$ are  real polynomially
decreasing potentials with varying meaning
reflecting the different hypotheses \eqref{eq:virial0},
\eqref{eq:posLowbnd0} and \eqref{eq:posLowbnd0b}, respectively. In this section
we modify each of the three subsections to cover the case $\lambda_0>\Sigma_2$.
For each case we write \eqref{eq:fundec} as
\begin{align}
  \label{eq:fundecs}
  -E^\pm_{\vH}(\lambda_0)=h_1 \oplus h_2 +v^\pm\in \vL( \vH^1, \vH^{-1}),
\end{align} where the only new feature is that the previous $v$ is
replaced by  operators $v^\pm$  defined in terms of the two
limits  $R'(\lambda_0\pm \i 0)$, respectively.  As before they are
non-local, but now no longer  symmetric. Due to
to \eqref{eq:bRES} they are roughly of order $\vO(|x|^{-1-2\rho})$, and
a priori  they do
not
have  `good commutation properties' when commuting with
multiplication operators (which is in contrast to the lowest threshold
setup of  Proposition
\ref{prop2.3}). More
precisely we need the following version of \eqref{eq:13} and \eqref{eq:efbNd}
\begin{equation}\label{eq:Eeffs}
  -E^\pm_{\vH}(\lambda_0)  \equiv  h_1\oplus h_2
  +(K^\pm_{ij}(\lambda_0))_{i,j\leq 2},
\end{equation} where the difference is not only polynomially
decreasing, but in fact also symmetric (to be used in \eqref{eq:posIm}).
Due to \eqref{eq:bRES} there are bounds
\begin{align}\label{eq:repla}
  K^\pm_{ij}(\lambda_0) &\in
                          \vL\parb{L^{2}_{s}(\bX_j),L^{2}_{t}(\bX_i)}\mforall s>-1/2-\rho\mand t<
                          1/2+\rho,
\end{align}  and we note that
\begin{align}
  \begin{split}\label{eq:posIm}
    \mp \Im v^\pm&=\mp \Im (K^\pm_{ij}(\lambda_0))_{i,j\leq 2}\geq
    0,\\
    \mp\Im \sum_{i,j\leq 2}\w{f_i,K^\pm_{ij}(\lambda_0)f_j}&=\pm
    \w{F,\Im R'(\lambda_0\pm\i 0)F};\\ \quad &F:=\sum_{i\leq 2}
\,F_i, \quad
    F_i=I_i\varphi_i\otimes f_i.
  \end{split}
\end{align}

\subsection{Extended eigentransform for  $\lambda_0>\Sigma_2$}\label{subsec: Extended
  eigentransform2}

Since the operator $\breve R(\lambda_0\pm \i 0)\Pi'$ cannot be expected to
preserve weighted spaces for $\lambda_0>\Sigma_2$  (as in  \eqref{eq:91BB})
we need to examine carefully  the `eigentransform' of Remark
\ref{remark:GrushinBB}, cf. Lemma \ref{lem:4.1a}.

First we  note the following modification of
\eqref{eq:sMap},
\begin{align}\label{eq:plusMap}
  \begin{split}
  E^{\pm}_+(\lambda_0)&=\parb{1- \breve R
(\lambda_0\pm\i 0)) \Pi'(H-\lambda_0)
       }S\in\vL(\vH^1
_s,H^1_t)\\
&\text{ provided }s>-1/2-\rho,\,t<
                          -1/2\mand t\leq s,
  \end{split}
\end{align} cf. the proof of Lemma \ref{lem:4.1a}. We shall use a
Besov space version of \eqref{eq:plusMap}, in fact we shall need (cf. the proof of Lemma \ref{lem:4.1a}) that
\begin{align}\label{eq:plusMap2}
  \begin{split}
  \inp{-\Delta}^{1/2}\breve R
&(\lambda_0\pm\i 0) \Pi'(H-\lambda_0)
       S\in\vL(\vB ,\vB^*_{1/2}(\bX));\\
&\quad \vB:=\vB_{1/2}(\bX_1)\oplus
       \vB_{1/2}(\bX_2).
  \end{split}
\end{align}

The following Lemma \ref{lem:eigentransform}  holds for  the cases
treated in  Subsections
\ref{subsec:negat-effect-potent}--\ref{subsec:Homogeneous-effect-potent},
in fact the setting is that of Proposition \ref{prop2.3} (now with
$\lambda_0>\Sigma_2$). We introduce $s'_0=1/2+\rho'/4$, where
 $0<\rho'< 4\rho$,  $\rho'\leq 2$  and
\begin{align}\label{eq:newcond}
  \abs{W_j(y)}\leq C \inp{y}^{-\rho'}\text{ for }|y|\geq R;\quad R \text{
    large}.
\end{align}

If $\rho=1$ (our main interest)  the condition
simplify as \eqref{eq:newcond} for some $\rho'\leq 2$, and in that
case we would choose $\rho'=1$ and $\rho'=2$ in Subsections
\ref{subsec:negat-effect-potents} and \ref{subsec:hd2ep},
respectively. On the other hand  Subsection \ref{subsec:psdep} does  not involve
\eqref{eq:newcond} and Lemma \ref{lem:eigentransform} directly, but rather  a
 version of Lemma \ref{lem:eigentransform} stated below as Lemma
\ref{lem:eigentransform2}. We regard Lemmas \ref{lem:eigentransform} and \ref{lem:eigentransform2} as
 substitutes for Lemma \ref{lem:4.1a}. Recall the notation $\vH_s=
L^2_{s}(\bX_1)\oplus L^2_{s}(\bX_2)$.

\begin{lemma}\label{lem:eigentransform} Suppose the conditions of
  Proposition \ref{prop2.3} (except for assuming now
  $\lambda_0>\Sigma_2$).  Let
  $s\leq s'_0$ and $t\in (0,s'_0]$.
  \begin{enumerate}[1)]
  \item \label{item:1ba}  The map $T^*: \vE^\vG _{-s}\to
    \vE^\vH_{-s}$, where
    \begin{align*}
       \vE^\vG _{-s}:=&\{u\in L^2_{-s}(\bX)\mid
      (H-\lambda_0)u=0,\,\,\Pi' u\in \vB^*_{1/2,0}(\bX)\}, \\
      \vE^\vH_{-s}:=&\{f\in  \vH_{-s}\mid  E^{+}_{\vH}(\lambda_0)f=0\},
    \end{align*} is a well-defined linear isomorphism with inverse  $E^{+}_+(\lambda_0):
    \vE^\vH _{-s}\to \vE^\vG _{-s}$.
  \item \label{item:2ba} The spaces $\vE^\vH _{-s}$ and $\{f\in \vH_{-s}\mid  E^{-}_{\vH}(\lambda_0)f=0\}$
     coincide, and
    $E^{-}_+(\lambda_0)f=E^+_+(\lambda_0)f$ for all $f\in \vE^\vH _{-s}$.
  \item \label{item:3ba} The map
      $T^*: \ker (H-\lambda_0)\to \ker  E^{+}_{\vH}(\lambda_0)_{|\vH}$ is a
      well-defined linear isomorphism with inverse  $E^{+}_+(\lambda_0)$.
\item \label{item:1baB} The map $T^*: \vE^\vG _{-t,0}\to
  \vE^\vH_{-t,0}$, where
    \begin{align*}
      \vE^\vG _{-t,0}:=&\{u\in \vB^*_{t,0}(\bX) \mid
      (H-\lambda_0)u=0,\,\,\Pi' u\in \vB^*_{1/2,0}(\bX)\},\\ \vE^\vH_{-t,0}:=&\{f\in  \vB^*_{t,0}(\bX_1)\oplus
         \vB^*_{t,0}(\bX_2)\mid  E^{+}_{\vH}(\lambda_0)f=0\},
    \end{align*} is a well-defined linear isomorphism with inverse  $E^{+}_+(\lambda_0):
    \vE^\vH_{-t,0}\to \vE^\vG _{-t,0}$.
  \item \label{item:2baB} The spaces $
    \vE^\vH_{-t,0}$ and $\{f\in \vB^*_{t,0}(\bX_1)\oplus
         \vB^*_{t,0}(\bX_2)\mid  E^{-}_{\vH}(\lambda_0)f=0\}
    $ coincide,  and
    $E^{-}_+(\lambda_0)f=E^+_+(\lambda_0)f$ for all $f\in
    \vE^\vH_{-t,0}$.
\item \label{item:1baBC} There exists $\sigma<0$ such that  for any real function
  $\chi_\sigma \in C^\infty(\R)$
    which is   supported in $(-\infty,\sigma/2)$ and which is $1$ on
    $(-\infty,\sigma)$,  $\vE^\vG
    _{-s'_0,0}$ and the two spaces
    \begin{align*}
       \vE^{\vG, \pm} _{-s'_0,\pm\sigma}:=\{u\in \vB^*_{s'_0,0}(\bX) \mid
      (H-\lambda_0)u=0,\,\,\Pi' u&\in \vB^*_{1/2}(\bX),\\\,\,\chi_\sigma(\pm B_{R_0})\Pi' u&\in \vB^*_{1/2,0}(\bX)\}
    \end{align*} all coincide. (Here $B=B_{R_0}$ is given by \eqref{eq:Bdefined}.)
  \end{enumerate}
\end{lemma}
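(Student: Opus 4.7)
The plan is to mimic Lemma~\ref{lem:4.1a} but with the resolvent $\breve R(\lambda_0)$ replaced by the boundary values $\breve R(\lambda_0\pm\i 0)$, using \eqref{eq:bRES}, \eqref{eq:2bnd} and the mapping properties \eqref{eq:plusMap}, \eqref{eq:plusMap2} as substitutes for \eqref{eq:91BB}. For \ref{item:1ba} I would first show that $T^*$ is well-defined on $\vE^\vG_{-s}$: given $u\in L^2_{-s}$ with $(H-\lambda_0)u=0$, the component $\Pi u=S(S^*S)^{-1}T^*u$ lies in $L^2_{-s}$ by Lemma~\ref{Lemma:basic2}\ref{item:1mq}, and then the polynomial decay of the cluster bound states forces $T^*u\in\vH_{-s}$. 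To verify $E^+_{\vH}(\lambda_0)T^*u=0$, I would apply $S^*(H-\lambda_0)$ to the decomposition $u=\Pi u+\Pi'u$ and use that $\Pi'u\in \vB^*_{1/2,0}(\bX)$ together with the extension of the identity $\breve R(\lambda_0+\i 0)(\breve H-\lambda_0)\Pi'u=\Pi'u$, proven along the lines of Step~II of the proof of Theorem~\ref{thm:negat-effect-potent} (truncating by $\chi_R$ and letting $R\to\infty$; the vanishing of the boundary terms uses the $\vB^*_{1/2,0}$ hypothesis on $\Pi'u$ and an energy estimate identical to \eqref{eq:compCom}). For the inverse, given $f\in\vE^\vH_{-s}$ I would set $u=E^+_+(\lambda_0)f$, obtain $u\in L^2_{-s}$ from \eqref{eq:plusMap}, verify $(H-\lambda_0)u=0$ by direct calculation, check $\Pi'u=-\breve R(\lambda_0+\i 0)\Pi'(H-\lambda_0)Sf\in \vB^*_{1/2,0}(\bX)$ via \eqref{eq:plusMap2}, and finally confirm $T^*E^+_+(\lambda_0)f=f$ by the algebraic identity $T^*S=1$ together with $T^*\breve R(\lambda_0+\i 0)\Pi'=0$.

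The heart of the argument is \ref{item:2ba}, which is the step I expect to be most delicate. Given $f\in\vE^\vH_{-s}$ I would write out $0=\Im\inp{f,E^+_{\vH}(\lambda_0)f}$ using \eqref{eq:posIm} to conclude that $\w{F,\Im R'(\lambda_0+\i 0)F}=0$, where $F=\sum_iI_i\varphi_i\otimes f_i$. Since $\Im R'(\lambda_0+\i 0)\geq 0$, stationary scattering theory (an application of the spectral theorem for $\breve H$ combined with the representation \eqref{eq:basresbreve}) gives $\breve R(\lambda_0+\i 0)F=\breve R(\lambda_0-\i 0)F$, equivalently $v^+f=v^-f$, and this yields $E^-_{\vH}(\lambda_0)f=0$ as well as the coincidence $E^-_+(\lambda_0)f=E^+_+(\lambda_0)f$. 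One then upgrades the regularity of the common value $\Pi'u$ by invoking \eqref{eq:2bnd} applied to $F\in L^2_{t}$ for an appropriate $t>1/2$ (provided $-s$ is small enough; for general $s$ one iterates).

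Part \ref{item:3ba} follows by specializing \ref{item:1ba} to $s=0$ and \ref{item:2ba}: an $L^2$-eigenfunction automatically satisfies $\Pi'u\in L^2\subset\vB^*_{1/2,0}(\bX)$, and conversely an $f\in\vH\cap\ker E^+_{\vH}(\lambda_0)$ has $E^+_+(\lambda_0)f\in L^2_{-s}$ for all $s>1/2$, which combined with \eqref{eq:2bnd} applied via \ref{item:2ba} to the coincidence $\breve R(\lambda_0+\i 0)F=\breve R(\lambda_0-\i 0)F$ (with $F\in L^2_{\rho+3/2}$ by \eqref{eq:repla}) yields $\Pi'u\in L^2$ and hence $u\in L^2$. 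The Besov-space versions \ref{item:1baB} and \ref{item:2baB} are proven by the same scheme, substituting the Besov-space mapping property \eqref{eq:plusMap2} for \eqref{eq:plusMap} throughout and noting that the truncation argument passes to the Besov setting since $\vB^*_{t,0}\subset\vB^*_{1/2,0}$ for $t\leq 1/2$.

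Finally, for \ref{item:1baBC}, I would appeal to the two-sided microlocal bound of Proposition~\ref{prop:microLoc2TWO} (and its multiple-cluster analogue, cf.\ Remark~\ref{remark:microlocal-boundsGEN}): for $\sigma<-\tilde\sigma$ (equivalently $|\sigma|^2>\tilde d(\lambda_0)/2$) the limit $\breve R(\lambda_0\pm\i 0)$ regularizes as an operator from $L^2_{-s}$ into $L^2_s$ between `disjoint microlocal sectors'. Consequently, if $u\in\vB^*_{s'_0,0}(\bX)$ solves $(H-\lambda_0)u=0$ and the microlocal piece $\chi_\sigma(+ B_{R_0})\Pi' u$ already lies in $\vB^*_{1/2,0}(\bX)$ for one choice of sign, the other microlocal half is $\vB^*_{1/2,0}(\bX)$ automatically by the two-sided bound applied to the identity $\Pi'u=-\breve R(\lambda_0+\i 0)\Pi'(H-\lambda_0)Sf$ (with $f=T^*u$), giving the full $\Pi'u\in\vB^*_{1/2,0}(\bX)$ and hence membership in $\vE^\vG_{-s'_0,0}$; the reverse inclusion is trivial. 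The main obstacle I anticipate is the careful justification of the truncation/integration-by-parts step that replaces Step~II of Theorem~\ref{thm:negat-effect-potent} in the boundary-value setting, since the resolvent no longer preserves weighted $L^2$ spaces and one must control energy-type remainders in the weaker Besov topology.
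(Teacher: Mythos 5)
For parts \ref{item:1ba}--\ref{item:2baB} your plan follows essentially the paper's own route: the truncation/commutator argument (controlled by \eqref{eq:plusMap2} and the structure \eqref{eq:cross}) to show $T^*$ lands in $\vE^\vH_{-s}$, the imaginary-part computation with \eqref{eq:posIm} and positivity of $\Im R'(\lambda_0+\i 0)$ to force $R'(\lambda_0+\i 0)F=R'(\lambda_0-\i 0)F$, and then \eqref{eq:2bnd} to upgrade the common value, with the Besov versions done verbatim. Two points of imprecision: in the inverse direction of \ref{item:1ba} you claim $\Pi'u\in\vB^*_{1/2,0}$ ``via \eqref{eq:plusMap2}''; that bound only yields $\vB^*_{1/2}$, and the small-at-infinity improvement comes exclusively from the \ref{item:2ba}-coincidence of the two boundary values together with \eqref{eq:2bnd} (you invoke this later, so it is an ordering slip rather than a missing idea). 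Also, in \ref{item:2ba} the identity $0=\Im\inp{f,E^+_{\vH}(\lambda_0)f}$ does not by itself give $\Im\w{F,R'(\lambda_0+\i 0)F}=0$: one must first prove $\Im\w{f_j,\check h_jf_j}=0$ for the local part, and this is a separate truncation/energy argument (of the \eqref{eq:compCom} type) which is precisely where the hypotheses $s\le s'_0$ and \eqref{eq:newcond} enter; your one-line derivation elides this, although you do flag the truncation step as the anticipated obstacle.

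The genuine gap is in \ref{item:1baBC}. You assert that, given $\chi_\sigma(+B_{R_0})\Pi'u\in\vB^*_{1/2,0}$, ``the other microlocal half is $\vB^*_{1/2,0}$ automatically by the two-sided bound'' of Proposition \ref{prop:microLoc2TWO} applied to $\Pi'u=-\breve R(\lambda_0+\i 0)\Pi'(H-\lambda_0)Sf$. That implication is false as a microlocal statement: an outgoing boundary value $\breve R(\lambda_0+\i 0)\psi$ generically has small incoming part but is \emph{not} in $\vB^*_{1/2,0}$, so no resolvent estimate alone can upgrade the outgoing half; what makes $\Pi'u$ small here is the eigenfunction constraint, i.e.\ the coincidence of the $\pm$ boundary values plus \eqref{eq:2bnd}, exactly as in \ref{item:1baB}. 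Moreover you take the representation formula for $\Pi'u$ for granted, whereas establishing it under the weakened hypothesis (only $\Pi'u\in\vB^*_{1/2}$, with only the $\chi_\sigma$-piece small) is the actual content of the step. The correct use of the microlocal estimate (the paper uses the one-sided bound $\bar\chi_\sigma(B_{R_0})\breve R(\lambda_0-\i 0)\in\vL(\vB_{1/2}(\bX),\vB^*_{1/2,0}(\bX))$ from Lemma \ref{lemma:microLoc}) is confined to the commutator term \eqref{eq:olimit}: one splits $\Pi'u$ into $\chi_\sigma(B_{R_0})\Pi'u+\bar\chi_\sigma(B_{R_0})\Pi'u$, kills the first piece by the hypothesis and the second by the microlocal resolvent bound, thereby re-running Steps I and IV to get $f=T^*u\in\vE^\vH_{-s'_0,0}$ and $u=E^+_+(\lambda_0)f$; membership $u\in\vE^\vG_{-s'_0,0}$ then follows from the already proven \ref{item:1baB}, not from Proposition \ref{prop:microLoc2TWO}.
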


\begin{proof} \subStep{I}
 Let $u\in \vE^\vG _{-s}$ be given. Then
  $f=T^*u\in \vH_{-s}$
  follows easily, so to show that $f\in \vE^\vH _{-s}$ it remains to show
  that $E^{\pm}_{\vH}(\lambda_0)f=0$. Note that
  $\Pi'(H-\lambda_0)\Pi u\in \vB_{1/2}(\bX)$, cf.
  Lemma \ref{Lemma:basic2}, so that
  $ R'(\lambda_0\pm \i 0)(H-\lambda_0)\Pi u$ is a well-defined
  element of $\vB_{1/2}^*(\bX)$. We calculate using this fact,
  \eqref{eq:commIdent} and \eqref{eq:23}
  \begin{align*}
    E^{\pm}_{\vH}(\lambda_0)f&=S^*(\lambda_0-H)ST^*u+S^*H R'(\lambda_0\pm
                               \i 0)(H-\lambda_0)ST^*u\\
                             &=S^*(\lambda_0-H)\Pi u+S^*H R'(\lambda_0\pm
                               \i 0)(H-\lambda_0)\Pi u\\
                             &=S^*(\lambda_0-H)\Pi u+S^*(H -\lambda_0)\Pi'\breve R(\lambda_0\pm
                               \i 0) (H-\lambda_0)\Pi u\\
                             &=S^*(\lambda_0-H)\Pi u+\lim_{R\to \infty}S^*(H -\lambda_0)\Pi'\breve R(\lambda_0\pm
                               \i 0) (H-\lambda_0)\Pi\chi_Ru;
  \end{align*}  here $\chi_R(x)=\chi(|x|/R)$, $R\geq 1$, is  given in
  agreement with \eqref{eq:14.1.7.23.24}. We calculate the second term as follows using the
  weak-star topology on  $\vB^*$ where $\vB=\vB_{1/2}(\bX_1)\oplus
  \vB_{1/2}(\bX_2)$
  \begin{align*}
    &\lim_{R\to \infty}S^*(H -\lambda_0)\Pi'\breve R(\lambda_0\pm
      \i 0) (H-\lambda_0)\Pi\chi_Ru\\
    &=-\wslim_{R\to \infty}S^*(H -\lambda_0)\Pi'\breve R(\lambda_0\pm
      \i 0) (H-\lambda_0)\Pi'\chi_Ru\\&\quad+\wslim_{R\to \infty}S^*(H -\lambda_0)\Pi'\breve R(\lambda_0\pm
                                           \i 0) (H-\lambda_0)\chi_Ru\\
    &=\wslim_{R\to \infty}S^*(\lambda_0-H)\Pi'\chi_Ru\\&\quad+\wslim_{R\to \infty}S^*(H -\lambda_0)\Pi'\breve R(\lambda_0\pm
                                                                \i 0) [H-\lambda_0,\chi_R]u\\
    &=S^*(\lambda_0-H)\Pi'u\\&\quad+\wslim_{R\to \infty}S^*(H -\lambda_0)\Pi'\breve R(\lambda_0\pm
                                  \i 0) [H-\lambda_0,\chi_R]u.
  \end{align*}
  Suppose we can show that the second term
  \begin{align}
    \label{eq:olimit}
    \wslim_{R\to \infty}S^*(H -\lambda_0)\Pi'\breve R(\lambda_0\pm
    \i 0) [H-\lambda_0,\chi_R]u=0.
  \end{align} Then we obtain from  the above computations the desired
  result
  \begin{align*}
    E^{\pm}_{\vH}(\lambda_0)f&=S^*(\lambda_0-H)\Pi u+S^*(\lambda_0-H)\Pi'u\\
                             &=S^*(\lambda_0-H)u \\ &=0.
  \end{align*}

  So it remains to show \eqref{eq:olimit}. We note that
  \begin{align*}
    -[(H-\lambda_0),\chi_R] u= 2\nabla\cdot (\nabla
    \chi_R) \,u-(\Delta \chi_R)u
  \end{align*} and that for all the components $\nabla_k$ of $\nabla$,
  the operator $ \nabla_k$ combines with the operator to the left
  in \eqref{eq:olimit},  and for this combination
  we
may use \eqref{eq:plusMap2}. Keeping this `effective boundedness' in mind
we  treat the first term by   substituting
\begin{align*}
  u=\Pi u+\Pi'u=\Pi_1 u+\Pi_2 u+(\Pi-SS^*)u+\Pi'u.
\end{align*}
 For $j=1,2$
  \begin{align}\label{eq:cross}
    \Pi'\nabla\cdot (\nabla
    \chi_R) \Pi_j u=\Pi'\nabla_{x_{j}}\cdot (\nabla
    \chi_R)_{|x^{j}=0} \Pi_j u+\vO(R^{-2})\Pi u=0+\vO(R^{-2}) u,
  \end{align} and therefore the  first and second terms  do not contribute in the
  limit. The third term is treated by Lemma \ref{Lemma:basic2}.
Similarly
\begin{align*}
    \Pi'\nabla\cdot (\nabla
    \chi_R) \Pi' u=\vO(R^{-1})\Pi' u
  \end{align*} does not contribute in the limit,
and obviously the term $(\Delta \chi_R)u=\vO(R^{-2})
 u$ does not neither. We conclude   \eqref{eq:olimit}.

  \subStep{II}  First we note that $E^{\pm}_{\vH}(\lambda_0)f$ has a
  well-defined distributional meaning for
  $f\in \vH_{-s}$ so that
  $\vE^\vH _{-s}$ is well-defined. We show that
  $E^{+}_{\vH}(\lambda_0)f=0$ if and only if $E^{-}_{\vH}(\lambda_0)f=0$.
  Suppose $E^{+}_{\vH}(\lambda_0)f=0$. First we show that
  \begin{align}
    \label{eq:Im0}
    \Im\w{f_j,\check h_jf_j}=0;\quad \check h_j=-\Delta_j +W_j,\,
 j=1,2.
  \end{align} Note that $\check h_jf_j\in L^2_s$ for any $s<1/2+\rho$,
    $f_j\in L^2_{-s'_0}$  and that $s'_0< 1/2+\rho$, so that
  $\w{f_j,\check h_jf_j}$ is well-defined. We compute
\begin{align*}
     \chi_R \check h_jf_j = \check h_j\chi_Rf_j+(\Delta \chi_R)f_j+2(\nabla
     \chi_R)\cdot \nabla f_j,
   \end{align*} and then
   \begin{align*}
     \Im\w{f_j,\check h_jf_j}=\lim_{R\to \infty}\Im\w{\chi_Rf_j,\chi_R\check h_jf_j}=2\lim_{R\to \infty}\Im\w{\chi_Rf_j,(\nabla
     \chi_R)\cdot \nabla f_j}
   \end{align*}
Next we estimate as in \eqref{eq:compCom}
   \begin{align*}
     &R^2\|(\nabla\chi_R)\cdot \nabla f_j\|^2_{L^2(\bX_j)}\\&\leq
                                                              C_1\inp{p^2}_{\chi_{2R}\bar\chi_{R/2}f_j}
     \\&\leq
         C_2\parb{\Re\inp{\chi^2_{2R}\bar\chi^2_{R/2}f_j,\check h_jf_j}+R^{-\rho'}\|\chi_{4R}f_j\|^2}
     \\&= C_2\parb{o(R^0)+R^{-\rho'}o(R^{2s'_0})}\\&=o(R^{1-\rho'/2}),
   \end{align*}
   yielding
   \begin{align*}
     \lim_{R\to \infty}\,R^{s'_0}\|(\nabla
     \chi_R)\cdot \nabla f_j\|_{L^2(\bX_j)}=0.
   \end{align*}
Moreover
\begin{align*}
  \lim_{R\to \infty}\,R^{-s'_0}\|\chi_Rf_j\|_{L^2(\bX_j)}=0.
\end{align*} Whence we obtain \eqref{eq:Im0} by \caS.

   Next we compute by using \eqref{eq:posIm} and \eqref{eq:Im0} that
  \begin{align}\label{eq:im0}
    \begin{split}
      0=\Im\w{f,E^{+}_{\vH}(\lambda_0)f}&=\Im\w{F, R'(\lambda_0+\i 0)F};\\
      \quad &F=\sum_{i\leq 2}F_i, \quad F_i=I_i\varphi_i\otimes f_i.
    \end{split}
  \end{align} Whence also $\Im\w{F, R'(\lambda_0-\i
    0)F}=0$ and we learn that $R'(\lambda_0+\i
  0)F=R'(\lambda_0-\i
  0)F$, and therefore that also $E^{-}_{\vH}(\lambda_0)f=0$. We can
  argue similarly for the other implication.

   \subStep{III}  Suppose $f\in \vE^\vH _{-s}$. We need to show that
  $u:=E^+_+(\lambda_0)f\in \vE^\vG _{-s}$ and that $T^*u=f$. Clearly
  $Sf\in L^2_{-s}(\bX)$. We noted in Step \textit{II} that
  $R'(\lambda_0+\i 0)F=R'(\lambda_0-\i 0)F$ where $F$ is specified in
  \eqref{eq:im0}. This means that $u=E^-_+(\lambda_0)f$, and using
  \eqref{eq:basresbreve}--\eqref{eq:2bnd} we then conclude that
  $u\in L^2_{-s}(\bX)$ and $\Pi' u\in \vB^*_{1/2,0}(\bX)$. To show that $u\in\vE^\vG _{-s}$ it
  remains to show that $(H-\lambda_0)u=0$. We calculate
  \begin{align*}
    (H-\lambda_0)u&= (H-\lambda_0)Sf-(H-\lambda_0)R'(\lambda_0+\i
                       0)HSf\\
                     &= (H-\lambda_0)Sf-\Pi'(H-\lambda_0) R'(\lambda_0+\i
                       0)(H-\lambda_0)Sf\\&\quad-\Pi(H-\lambda_0) R'(\lambda_0+\i
                                                0)(H-\lambda_0)Sf
    \\&=(H-\lambda_0)Sf-\Pi'(H-\lambda_0)Sf-\Pi H R'(\lambda_0+\i
        0)HSf\\
                     &=\Pi(H-\lambda_0)Sf-\Pi H R'(\lambda_0+\i
                       0)HSf.
  \end{align*}
  We conclude that $(H-\lambda_0)u\in\Pi L^2_{-s}(\bX)$ (since
  $\Delta f\in  \vH_{-s}$) and that
  \begin{align*}
    S^*(H-\lambda_0)u=-E^{+}_{\vH}(\lambda_0)f=0,
  \end{align*} and therefore (since
  $S^* $ maps injectively on the space $\Pi L^2_{-s}(\bX)$) that
  indeed $(H-\lambda_0)u=0$.

  Next we compute
  \begin{align*}
    T^*E^+_+(\lambda_0)f&=T^*Sf-T^*R'(\lambda_0+
                          \i 0)(H-\lambda_0)Sf\\
                        &=f-0\quad(\text{since }T^*\Pi'=0)\\
                        &=f.
  \end{align*}

  \subStep{IV} Let $u\in\vE^\vG _{-s}$ and note that we showed in Step
  \textit{I} that $f:=T^*u\in \vE^\vH _{-s}$. Using parts of the proof we show
  that $u=E^+_+(\lambda_0)f$ as follows.
  \begin{align*}
    E^+_+(\lambda_0)T^*u&=ST^*u-R'(\lambda_0+
                             \i 0)(H-\lambda_0)ST^*u\\
                           &=\Pi u-R'(\lambda_0+
                             \i 0)(H-\lambda_0)\Pi u\\
                           &=\Pi u-\lim_{R\to \infty}\Pi'\breve R(\lambda_0+
                             \i 0) (H-\lambda_0)\Pi\chi_Ru\\
                           &=\Pi u+\Pi'u\\
                           &=u.
  \end{align*} This finishes the proof of \ref{item:1ba} and
  \ref{item:2ba}, and
\ref{item:3ba} is a special case of \ref{item:1ba}. The statements
\ref{item:1baB} and \ref{item:2baB} are proved verbatim  as \ref{item:1ba} and
  \ref{item:2ba}.

\subStep{V} Obviously $\vE^\vG _{-s'_0,0}\subset\vE^{\vG, \pm} _{-s'_0,\pm\sigma}$.  We choose $\sigma< 0$ such that
\begin{align*}
  \bar\chi_{\sigma}( B_{R_0})\breve R(\lambda_0-\i 0)\in
  \vL(\vB_{1/2}(\bX), \vB^*_{1/2,0}(\bX));
\quad\bar\chi_{\sigma}( B_{R_0})=\parb{1-\chi_{\sigma}( B_{R_0})},
\end{align*} cf. Lemma \ref{lemma:microLoc}.
 We show the opposite inclusion for $\vE^{\vG, +} _{-s'_0,\sigma} $ only. So let $u\in \vE^{\vG, +} _{-s'_0,\sigma} $ be
given. Then we can mimic Step \textit{I} and see that also in this case
$f=T^*u\in \vE^\vH_{-s'_0,0}$. Note that for showing \eqref{eq:olimit}
we can use the above bound  to estimate
\begin{align*}
 & \wslim_{R\to \infty}S^*(H -\lambda_0)\Pi'\breve R(\lambda_0+
    \i 0) [H-\lambda_0,\chi_R]u\\
&=-2\wslim_{R\to \infty}S^*(H -\lambda_0)\Pi'\breve R(\lambda_0+
    \i 0)  \Pi'\nabla\cdot (\nabla
    \chi_R) \parbb{\chi_{\sigma}( B_{R_0})+\bar\chi_{\sigma}(
      B_{R_0})}\Pi' u \\
&= 0+0\\&=0.
\end{align*} We use the same argument mimicking Step \textit{IV},
cf. \ref{item:1baB}. So indeed $u\in \vE^\vG _{-s'_0, 0}$, and we have
shown that $\vE^{\vG, +} _{-s'_0,\sigma}=\vE^\vG _{-s'_0,0}$.
\end{proof}

In the regime $s\in (s'_0, 1/2+\rho)$  we can define $\vE^{\vH}_{-s}$
as in Lemma \ref{lem:eigentransform} \ref{item:1ba}, however we are
not able to conclude Lemma \ref{lem:eigentransform} \ref{item:2ba} is
this case. This leads to the following definition
\begin{align*}
  \vE^\vH_{-s}         :=\{f\in  \vH_{-s}|\quad
  E^{+}_{\vH}(\lambda_0)f=E^{-}_{\vH}(\lambda_0)f=0\}  \quad \text{for} \quad s<1/2+\rho,
\end{align*} which is consistent with Lemma
\ref{lem:eigentransform}. Obviously we can also extend the definition
of
$\vE^\vG_{-s}  $ given in Lemma
\ref{lem:eigentransform} \ref{item:1ba} to any real $s$.

\begin{lemma}\label{lem:eigentransform2} Suppose the conditions of
  Proposition \ref{prop2.3} (except for assuming now
  $\lambda_0>\Sigma_2$).  Suppose
  $s\leq 3/2$ and $s<1/2+\rho$.
   Then the  map
    \begin{align*}
      T^*: \vE^\vG _{-s}\to  \vE^\vH_{-s}
    \end{align*} is a well-defined linear isomorphism with inverse  $E^{+}_+(\lambda_0):
    \vE^\vH _{-s}\to \vE^\vG _{-s}$ (defined by \eqref{eq:plusMap}). Moreover  $E^{-}_+(\lambda_0)f=E^+_+(\lambda_0)f$ for all $f\in \vE^\vH
    _{-s}$.

\end {lemma}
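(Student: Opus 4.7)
The plan is to mimic the four-step structure of the proof of Lemma~\ref{lem:eigentransform}, handling the $+$ and $-$ cases in parallel rather than deducing one from the other. Step~II of that proof used the decay $f\in\vH_{-s'_0}$ to force $\Im\inp{f_j,\check h_j f_j}=0$; since this is unavailable for $s\in(s'_0,1/2+\rho)$, we bypass it by building the condition $E^-_\vH(\lambda_0)f=0$ directly into the definition of $\vE^\vH_{-s}$ used here.

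For the forward direction, given $u\in\vE^\vG_{-s}$ I set $f=T^*u$; the bound $f\in\vH_{-s}$ is an elementary estimate using $\inp{x}\le C\inp{x^j}\inp{x_j}$ and polynomial decay of the cluster bound states $\varphi_j$. To show $E^\pm_\vH(\lambda_0)f=0$ I repeat Step~I of the proof of Lemma~\ref{lem:eigentransform} verbatim for each sign; the only new prerequisite is that the weighted mapping bounds remain valid in the larger range of $s$. The condition $s<1/2+\rho$ combined with Lemma~\ref{Lemma:basic2}~\ref{item:3mq} ensures $\Pi'(H-\lambda_0)\Pi u\in\vB_{1/2}(\bX)$, so that $R'(\lambda_0\pm\i 0)(H-\lambda_0)\Pi u$ is a well-defined element of $\vB^*_{1/2}(\bX)$, while the condition $s\le 3/2$ ensures that the remainder pieces $\vO(R^{-2})u$ and $(\Delta\chi_R)u$ arising from $[H-\lambda_0,\chi_R]u$ tend weakly to $0$ via \eqref{eq:plusMap2} and the cross-term cancellation \eqref{eq:cross}.

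For the converse, given $f\in\vE^\vH_{-s}$ the joint vanishing of $E^\pm_\vH(\lambda_0)f$ yields, upon testing the difference $E^+_\vH(\lambda_0)f-E^-_\vH(\lambda_0)f=0$ against $\chi_R f$ and passing to the limit $R\to\infty$, the equality $\inp{F,\Im R'(\lambda_0+\i 0)F}=0$, where $F=\sum_i I_i\varphi_i\otimes f_i$; since $\pm\Im R'(\lambda_0\pm\i 0)\ge 0$ by \eqref{eq:posIm} and $\lambda_0\notin\sigma_{\pp}(H')$ with the limiting absorption principle at $\lambda_0$ in force, this yields the key identity $R'(\lambda_0+\i 0)F=R'(\lambda_0-\i 0)F$. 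The smoothing $\Pi'I_i\Pi_i\in\vL\bigl(L^2_s(\bX),L^2_{\rho+1+s}(\bX)\bigr)$ from \eqref{eq:90}, together with $s<1/2+\rho$, then places $\Pi'F\in L^2_t$ for some $t>1/2$, so that \eqref{eq:2bnd} applies and gives the common value in $L^2_{t-1}\subset\vB^*_{1/2,0}(\bX)$. Consequently $u:=E^+_+(\lambda_0)f=E^-_+(\lambda_0)f$ belongs to $L^2_{-s}$ with $\Pi'u\in\vB^*_{1/2,0}(\bX)$, and the identities $(H-\lambda_0)u=0$ and $T^*u=f$ are then obtained by the formal calculations of Steps~III and~IV of the proof of Lemma~\ref{lem:eigentransform}, which are insensitive to the weighted-space framework.

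The main obstacle is justifying the pairing $\inp{\chi_R f,(E^+_\vH-E^-_\vH)(\lambda_0)f}$ in the passage $R\to\infty$ when $f$ decays only as $L^2_{-s}$: the real kinetic-plus-$W_j$ contributions must cancel out while the imaginary part $-\inp{F,[R'(\lambda_0+\i 0)-R'(\lambda_0-\i 0)]F}$ survives. The interplay between the $\rho+1$ weight gain afforded by $\Pi'I_i\Pi_i$ and the available decay of $f$ is precisely what pins down the simultaneous bounds $s\le 3/2$ and $s<1/2+\rho$.
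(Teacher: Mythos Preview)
Your proof is correct and follows the same route as the paper: reproduce Step~I of Lemma~\ref{lem:eigentransform} for each sign to handle the forward direction, and for the converse exploit the \emph{simultaneous} vanishing of $E^\pm_\vH(\lambda_0)f$ to obtain $\Im\inp{F,R'(\lambda_0+\i 0)F}=0$, then invoke \eqref{eq:2bnd} and Steps~III--IV.

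One simplification is available. In your converse you insert a cutoff $\chi_R$ and pass to the limit, and in your ``main obstacle'' paragraph you worry about real kinetic-plus-$W_j$ contributions cancelling. In fact the operators $h_j$, the $W_j$, and the polynomially decreasing off-diagonal pieces are \emph{identical} in $E^+_\vH(\lambda_0)$ and $E^-_\vH(\lambda_0)$; they disappear exactly in the difference, not merely upon taking imaginary parts. What remains is purely $-(K^+_{ij}-K^-_{ij})(\lambda_0)$, which by \eqref{eq:repla} maps $\vH_{-s}$ into $\vH_t$ for any $t<1/2+\rho$. Since $s<1/2+\rho$, the pairing $\inp{f,(E^+_\vH-E^-_\vH)(\lambda_0)f}$ is directly well defined, and the paper's one-line identity \eqref{eq:im02} replaces your limiting argument. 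Your description of the obstacle is thus harmless but slightly misdirected; the constraint $s<1/2+\rho$ enters exactly so that $\Pi'F\in L^2_t$ with $t>1/2$ (making \eqref{eq:2bnd} applicable), while $s\le 3/2$ is used only in the forward direction, as you correctly note.
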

\begin{proof}
  We can assume \eqref{eq:newcond} and that $s\in (s'_0, 1/2+\rho)$,
  in particular that $s>1/2$.
   It follows from Step \textit{I}
  of the proof of Lemma
\ref{lem:eigentransform} that $T^*$ maps into
$\vE^\vH_{-s}$. Obviously $E^{+}_+(\lambda_0)$ maps into
$L^2_{-s}(\bX)$, but to show
that $E^{+}_+(\lambda_0)$ in fact maps into $\vE^\vG _{-s}$ we cannot use
\eqref{eq:im0}, however the following substitute works:
\begin{align}\label{eq:im02}
    \begin{split}
      0=\Im\w{f,E^{+}_{\vH}(\lambda_0)f-E^{-}_{\vH}(\lambda_0)f}&=2\Im\w{F, R'(\lambda_0+\i 0)F};\\
      \quad &F=\sum_{i\leq 2}F_i, \quad F_i=I_i\varphi_i\otimes f_i.
    \end{split}
  \end{align} This shows that $R'(\lambda_0+\i 0)F=R'(\lambda_0-\i
  0)F$ for all $f\in \vE^\vH
    _{-s}$, and therefore that $\Pi'E^{+}_+(\lambda_0)f\in
    \vB^*_{1/2,0}(\bX)$  and $E^{-}_+(\lambda_0)f=E^+_+(\lambda_0)f$ for all $f\in \vE^\vH
    _{-s}$. Next we use Step \textit{III}
  to see that also
$(H-\lambda_0)E^{+}_+(\lambda_0)f=0$ showing that indeed
$E^{+}_+(\lambda_0)$  maps into $\vE^\vG _{-s}$.

Finally it follows from the last part of Step \textit{III} and of Step \textit{IV} that
$T^*$ and $E^{+}_+(\lambda_0)$ are mutually inverses, as we want.
\end{proof}

 \subsection{Negative slowly decaying effective
  potentials}\label{subsec:negat-effect-potents} We aim at proving a version of
 Theorem
\ref{thm:negat-effect-potent} in the  setting of  Subsection
\ref{subsec:negat-effect-potent} (except that now
$\lambda_0>\Sigma_2$). In particular $\rho<2$ and we can use  Lemma
   \ref{lem:eigentransform}  with $ s_0'=s_0=1/2+\rho/4$ (corresponding to
   taking $\rho=\rho'$).

 \begin{thm}\label{thm:negat-effect-potents} Under the conditions of
   Proposition \ref{prop2.3} (except for assuming now
   $\lambda_0>\Sigma_2$) and \eqref{eq:virial0}  for $j=1,2$
   \begin{align*}
     \dim \ker(H-\lambda_0)<\infty.
   \end{align*}
   Moreover   $ \vE^\vG _{-s_0,0}\subset H^1_{\infty}$.
 \end{thm}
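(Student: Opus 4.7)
The plan is to adapt the scheme of Theorem \ref{thm:negat-effect-potent}, replacing Lemma \ref{lem:4.1a} by Lemma \ref{lem:eigentransform} and accounting for the fact that $v^\pm$ in \eqref{eq:fundecs} is no longer symmetric, only its polynomially decreasing part being so, cf.\ \eqref{eq:Eeffs}. Note that $\rho<2$ here, so $s_0 = s'_0 = 1/2+\rho/4$ and Lemma \ref{lem:eigentransform} applies with these constants.

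For the finite-dimensionality assertion, Lemma \ref{lem:eigentransform} \ref{item:3ba} identifies $T^*\colon \ker(H-\lambda_0)\to \ker E^+_\vH(\lambda_0)_{|\vH}$ as a linear isomorphism, so it suffices to bound the latter. Writing $-E^+_\vH(\lambda_0)=h_1\oplus h_2+v^+$ and applying $R^+_\diag=r_1(0+\i 0)\oplus r_2(0+\i 0)$ to the equation, one obtains $(1+K^+)f=0$ with $K^+=R^+_\diag v^+$, in complete analogy with \eqref{eq:LipSchW}. Using \eqref{eq:repla} together with the Besov-space mapping \eqref{eq:BesR} for $r_j(0+\i 0)$, the operator $K^+$ is compact on a suitable weighted Sobolev subspace of $\vH$, and Fredholm theory yields $\dim\ker E^+_\vH(\lambda_0)_{|\vH}<\infty$.

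For the decay statement $\vE^\vG_{-s_0,0}\subset H^1_\infty$, take $u\in \vE^\vG_{-s_0,0}$ and set $f=T^*u$. By Lemma \ref{lem:eigentransform} \ref{item:1baB}--\ref{item:2baB}, $f\in\vE^\vH_{-s_0,0}$ and both $E^\pm_\vH(\lambda_0)f=0$. The first crucial step is to upgrade these to the identity $v^+f=v^-f$ that is needed to apply \eqref{eq:micr1}. Since $\rho<2$ and $s_0=s'_0$, the smooth cut-off computation from Step II of the proof of Lemma \ref{lem:eigentransform} applies verbatim and gives $\Im\langle f_j,h_jf_j\rangle=0$, $j=1,2$. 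Combined with \eqref{eq:posIm}, this yields $\langle F,\Im R'(\lambda_0+\i 0)F\rangle=0$ for $F=\sum_i I_i\varphi_i\otimes f_i$, and positivity of $\Im R'(\lambda_0+\i 0)$ forces $R'(\lambda_0+\i 0)F=R'(\lambda_0-\i 0)F$; the explicit form of $v^+-v^-$ in terms of $\Im R'(\lambda_0+\i 0)$ then gives $v^+f=v^-f$.

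With $v^+f=v^-f$ in hand, the equations $f_j=-r_j(0\pm\i 0)(v^\pm f)_j$ together with \eqref{eq:micr1} produce an initial decay improvement for $f$ as in Step II of Theorem \ref{thm:negat-effect-potent}. The main obstacle lies in closing the iteration: unlike the $\lambda_0=\Sigma_2$ case where $v$ is $\vO(|x|^{-2-2\rho})$ so each round gains $2\kappa=2+2\rho-2s_0$, here \eqref{eq:repla} saturates $v^\pm$ at weight $(1/2+\rho)^-$, so a single application of \eqref{eq:micr1} only yields $f\in L^2_{(-1/2+\rho/2)^-}$. To push past this barrier, I plan to couple the one-body gain \eqref{eq:micr1} with the $N$-body gain \eqref{eq:2bnd} applied to $\breve R(\lambda_0+\i 0)\Pi'F=\breve R(\lambda_0-\i 0)\Pi'F$, exploiting the Grushin inversion $u=Sf-\breve R(\lambda_0+\i 0)\Pi'F$ to turn improved decay of $R'(\lambda_0\pm\i 0)F$ into improved decay of $u$ and hence of $f$. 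Alternatively, once moderate decay is attained, the shift trick of Remark \ref{remark:The case lambda0insigma} \ref{item:25} allows one to pass to $\brH_\sigma=H-\sigma\Pi_H$ and invoke \eqref{eq:2bndSI10} repeatedly to extract arbitrary polynomial decay. Ellipticity of $H-\lambda_0$ finally promotes $u\in L^2_\infty$ to $u\in H^1_\infty$.
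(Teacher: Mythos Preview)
Your approach is correct and matches the paper's proof. The coupling of \eqref{eq:micr1} and \eqref{eq:2bnd} that you plan is exactly what the paper carries out: once $R'(\lambda_0+\i 0)F=R'(\lambda_0-\i 0)F$ is established (your positivity argument is equivalent to the paper's use of the identity $0=\Im\langle f,v^+f\rangle+\Im\langle R^+_{\diag} v^+f,v^+f\rangle$, which follows from $(1+K^+)f=0$, together with the sign structure $\Im v^+\le 0$, $\Im R^+_{\diag}\ge 0$), \eqref{eq:2bnd} upgrades $v^+f$ to $\vH^{-1}_t$ for $t<1+2\rho-s_0$, and then \eqref{eq:micr1} yields $f\in\vH^1_{\epsilon-s_0}$ for every $\epsilon<3\rho/2$; each pass gains $3\rho/2$, so the loop closes and $f\in\vH^1_\infty$. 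Your alternative via the shift trick is unnecessary and would be circular, since it presupposes $\ran\Pi_H\subset L^2_t$ for some $t>1$, which is part of what you are proving.
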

 \begin{proof} We basically mimic the proof of Theorem
   \ref{thm:negat-effect-potent} using  Lemma \ref{lem:eigentransform} as
 a substitute for Lemma \ref{lem:4.1a}.

 Note that Lemma
   \ref{lem:eigentransform} \ref{item:3ba} implies that
   \begin{align}\label{eq:dimeq}
     \dim \ker(H-\lambda_0)=\dim \ker E^\pm_{\vH}(\lambda_0),
   \end{align} and that the map
   \begin{align}\label{eq:mapprop}
     E^\pm_+(\lambda_0): \ker
     E^\pm_{\vH}(\lambda_0)\to \ker(H-\lambda_0)
   \end{align} is an isomorphism.

   We shall use the notation $h_1$, $h_2$ and
   $R^{\pm}_{\diag}=r_1(0\pm \i 0)\oplus (r_2(0\pm \i 0)$ from the proof of Theorem
   \ref{thm:negat-effect-potent}.  Suppose $f\in \ker
   E^\pm_{\vH}(\lambda_0)$.  Then we write the equations
   $R^{\pm}_{\diag}E^\pm_{\vH}(\lambda_0)f=0$ as
   \begin{align}\label{eq:LipSchWs}
     (1+K^{\pm})f=0, \quad K^{\pm}=R^{\pm}_{\diag}v^\pm,
   \end{align} where $1f$ arises by writing $f=(f_1,f_2)\in
   \vH^{1}$ and using  that
   \begin{align}\label{eq:leftinverses}
     r_j(0\pm \i 0)h_j f_j=f_j.
   \end{align} Note that it is shown in Step \textit{II} of the proof of
   Theorem \ref{thm:negat-effect-potent} that generally,  if $f_j\in
   \vB^*_{s_0,0}(\bX_j)$
   obeys $h_jf_j\in \vB_{s_0}(\bX_j)$, then \eqref{eq:leftinverses}
   holds. In particular \eqref{eq:leftinverses} holds for $f\in \ker
   E^\pm_{\vH}(\lambda_0)$.

 Next  \eqref{eq:repla}
   and \eqref{eq:BesR} imply that
   \begin{align}\label{eq:comnegs}
     K^{\pm}\in \vC(\vH^{1}_{-s}) \text{
     for }s\in (s_0, 1/2 +\rho).
   \end{align}
   Since $K^{\pm}$ is compact (on any such space) it follows from
   Fredholm theory that $\dim \ker E^\pm_{\vH}(\lambda_0)< \infty$,
   proving the first assertion of the theorem.

   By Lemma \ref{lem:eigentransform} any given
   $\phi\in \vE^\vG _{-s_0,0}$ corresponds to the vector
   $f=(f_1,f_2)=T^*\phi\in \vE^\vH _{-s_0,0}$ (obeying $\phi=E^+_+(\lambda_0)f=E^-_+(\lambda_0)f$). Note also that    $h_jf_j\in L_s^{2}(\bX_j)\subset
   \vB_{s_0}(\bX_j)$ for any $s$ given as
   in \eqref{eq:comnegs}, so by the  assertion in  Step \textit{II} of the proof of
   Theorem \ref{thm:negat-effect-potent}
    we can conclude that \eqref{eq:LipSchWs} is
   fulfilled. In particular $K^+f=K^-f$.

   A converse assertion is true. We first observe  using \eqref{eq:posIm},  that
   $\Im v^+\leq 0$.  Since $\Im R^+_{\diag}\geq0$ the equation
   \begin{align}\label{eq:zeroTrac}
     0=\Im \w{f,v^+f}+\Im \w{R^+_{\diag}v^+f,v^+f}
   \end{align} for any solution to $(1+K^+)f=0$, where
   $f=(f_1,f_2)\in \vB^*_{s_0}(\bX_1)\oplus \vB^*_{s_0}(\bX_2)$, then yields that
   $v^-f=v^+f$ and that also $(1+K^-)f=0$ for any such
   $f$. (Similarly a solution to $(1+K^-)f=0$ is also a solution to
   $(I+K^+)f=0$.)
Moreover $E^\pm_{\vH}(\lambda_0)f=0$ in the
   distributional sense indeed yielding a converse statement.

More
   importantly we can improve the decay of any  $f=(f_1,f_2)\in
   \vB^*_{s_0}(\bX_1)\oplus \vB^*_{s_0}(\bX_2)$  solving  $(1+K^+)f=0$
   (and therefore also $(1+K^-)f=0$)  by invoking repeatedly \eqref{eq:micr1} and \eqref{eq:2bnd}:
 Since $f_j\in  \vB^*_{s_0}(\bX_j)$, $j=1,2$, and $v^-f=v^+f$ (as
 noted above) we
   deduce from \eqref{eq:2bnd}
    that $v^+f\in \vH^{-1}_t$ for any
   $t<1+2\rho-s_0 $. Next using \eqref{eq:micr1}  (in combination with
   \eqref{eq:zeroTrac}) we deduce
   that $f=-K^+f\in \vH^{1}_{ \epsilon-s_0}$ for any
   $\epsilon<\epsilon_0$, where
   $\epsilon_0=1+2\rho-2s_0 =\tfrac 32 \rho$.  Repeating this
   argument, say $k$ times, yields $f\in \vH^{1}_{t-s_0}$ for any
   $t< \epsilon_0k$, and therefore that
   $f\in \vH^{1}_{\infty}$.

   We have shown that for  any given $\phi\in \vE^\vG _{-s_0,0}$  the
   vector
   $f=(f_1,f_2)=T^*\phi\in \vH^{1}_{\infty}$ and therefore,  due to Lemma
   \ref{lem:eigentransform} \ref{item:1ba},
   indeed $\phi=E^+_+(\lambda_0)f\in H^{1}_{\infty}$.
   \end{proof}

\subsection{Positive slowly decaying effective
   potentials}\label{subsec:psdep}  We aim at proving a version of Theorem
 \ref{thm:pos-effect-potent} in the  setting of Subsection
 \ref{subsec:positive-effect-potent}
 (except that now $\lambda_0>\Sigma_2$).

\begin{thm}\label{thm:pos-effect-potent2} Under the conditions of
  Proposition \ref{prop2.3} and \eqref{eq:posLowbnd0} for $j=1,2$,
  \begin{align*}
    \dim \ker(H-\lambda_0)<\infty.
  \end{align*}
  Moreover $\vE^\vG _{-s}\subset H^1_{\infty}$  for and real $s $  such
  that $s\leq 3/2$ and  $s<1/2+\rho$.
\end{thm}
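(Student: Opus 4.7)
The plan is to adapt the proof of Theorem~\ref{thm:pos-effect-potent}, substituting Lemma~\ref{lem:eigentransform2} for Lemma~\ref{lem:4.1a}, in the same spirit as the passage from Theorem~\ref{thm:negat-effect-potent} to Theorem~\ref{thm:negat-effect-potents}. For the finite dimensionality, use the first-step parametrix $R = r^0_1 \oplus r^0_2$ of Subsection~\ref{subsec:positive-effect-potent} together with the representation \eqref{eq:Eeffs} to recast the equation $E^\pm_{\vH}(\lambda_0)f=0$ for $f \in \vH$ as
\begin{equation*}
  (1+A+K^\pm) f = 0, \quad A = A_1 \oplus A_2, \quad K^\pm = R v^\pm.
\end{equation*}
Here $A \in \vC(\vH^1)$ by the symbol bound $a_j \in S(\w{y}^{-t}\w{\xi}^{-1}, g)$, and $K^\pm \in \vC(\vH^1)$ by \eqref{eq:repla}. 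Fredholm theory yields $\dim\ker (1+A+K^\pm) < \infty$, and Lemma~\ref{lem:eigentransform2} applied with $s = 0$ identifies $\ker(H-\lambda_0)$ with a subspace of this finite-dimensional kernel, proving the first assertion.

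For the decay statement, fix $\phi \in \vE^\vG_{-s}$ with $s \leq 3/2$, $s < 1/2+\rho$, and set $f = T^*\phi \in \vE^\vH_{-s}$ via Lemma~\ref{lem:eigentransform2}; in particular $E^+_{\vH}(\lambda_0)f = E^-_{\vH}(\lambda_0)f = 0$. The symmetry $v^+ f = v^- f$ follows from
\begin{equation*}
  0 = \Im\inp{f, (E^+_{\vH}-E^-_{\vH})(\lambda_0) f} = 2\Im\inp{F, R'(\lambda_0+\i 0) F}, \quad F = \sum_{j=1}^2 I_j \varphi_j \otimes f_j,
\end{equation*}
exactly as in \eqref{eq:im02}. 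This gives $R'(\lambda_0+\i 0)F = R'(\lambda_0-\i 0)F$, and invoking \eqref{eq:2bnd} (via \eqref{eq:basresbreve}) upgrades this common value from the a priori $\vB^*_{1/2}$-bound provided by \eqref{eq:bRES} to an $L^2$-function with one extra power of decay. Via \eqref{eq:repla} this translates into a corresponding gain on $v^\pm f$.

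From here I would iterate as in Step~II of the proof of Theorem~\ref{thm:pos-effect-potent}, using the cleaner equation \eqref{eq:LipSchW2bb} $(1+\widetilde K^\pm) f = 0$ with $\widetilde R = h_1^{-1} \oplus h_2^{-1}$. Lemma~\ref{lemma:posit-slowly-decay} renders $\w{y}^{-\bar\rho}\widetilde R$ a pseudodifferential operator with symbol in $S(\w{\xi}^{-2}, g)$, so power-type weights can be transferred through $\widetilde R$ at the cost of $\bar\rho$ units. Combining this with the gain from \eqref{eq:repla} and \eqref{eq:2bnd} shows that $f \in \vH^1_{-s_k}$ forces $f \in \vH^1_{-s_k+\delta}$ for a fixed $\delta > 0$, in parallel with the gain $t = 1 + \rho - 3\bar\rho/2$ of the lowest-threshold case. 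Iterating yields $f \in \vH^1_\infty$. The conclusion $\phi \in H^1_\infty$ then follows from Lemma~\ref{lem:eigentransform2} together with the identity $\phi = Sf + \Pi'\phi$: the first summand lies in $L^2_\infty$ by the decay of the cluster eigenfunctions, while the second satisfies $(\breve H - \lambda_0)\Pi'\phi = -\Pi' H \Pi \phi$ whose right-hand side is in $L^2_\infty$, and the microlocal resolvent estimates recalled in Remark~\ref{remark:microlocal-boundsGEN} (together with elliptic regularity) then propagate the decay to $\Pi' \phi$ itself.

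The main obstacle will be securing the per-step gain $\delta > 0$ in the bootstrap. Unlike the local, polynomially decreasing potential $V$ of Theorem~\ref{thm:pos-effect-potent}, the non-local operators $v^\pm$ yield decay only indirectly: through the symmetry $v^+f = v^-f$ combined with the sharp microlocal improvement \eqref{eq:2bnd}. Consequently, the iteration requires careful bookkeeping to ensure that $F$ remains in a weighted space to which \eqref{eq:2bnd} applies at each stage, and that the gain $\delta$ is controlled uniformly by the interplay between the $\bar\rho$-loss of $\widetilde R$ (from Lemma~\ref{lemma:posit-slowly-decay}) and the gain from \eqref{eq:repla}. The restrictions $s \leq 3/2$ and $s < 1/2 + \rho$ are precisely those needed to initiate the iteration within the range of validity of Lemma~\ref{lem:eigentransform2} and the Besov--space estimate \eqref{eq:2bnd}.
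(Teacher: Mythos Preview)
Your proposal is correct and follows essentially the same approach as the paper. A few minor differences worth noting: the paper works directly with the exact inverse $\widetilde R = h_1^{-1}\oplus h_2^{-1}$ of Lemma~\ref{lemma:posit-slowly-decay} from the outset (writing $(1+\widetilde K)f=0$ with $\widetilde K = \widetilde R v^+$) rather than first passing through the rough parametrix $R = r^0_1\oplus r^0_2$; this avoids the extra $A$-term and streamlines the Fredholm step. For the bootstrap the paper also makes the per-step gain explicit as $\bar t = 1+2\rho-\bar\rho$, obtained by combining the $\bar\rho$-loss of $\widetilde R$ with the $1+2\rho$ gain on $v^\pm f$ coming from \eqref{eq:2bnd} applied to $R'(\lambda_0\pm\i0)F$ (note that \eqref{eq:repla} alone is too weak here---it is the symmetry $R'(\lambda_0+\i0)F=R'(\lambda_0-\i0)F$ together with \eqref{eq:2bnd} that produces the needed propagation, so your phrasing ``combining the gain from \eqref{eq:repla} and \eqref{eq:2bnd}'' is slightly misleading). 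Your final passage from $f\in\vH^1_\infty$ to $\phi\in H^1_\infty$ via $\phi=Sf+\Pi'\phi$ and $(\breve H-\lambda_0)\Pi'\phi=-\Pi'H\Pi\phi$ is correct (indeed $\Pi\phi=ST^*\phi=Sf$) and spells out what the paper compresses into a one-line appeal to Lemma~\ref{lem:eigentransform2}.
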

\begin{proof} Due to  Lemma  \ref{lem:eigentransform2} we need to study
     the space $\vE^\vH_{-s}$ for any such $s$.
Using the notation $\widetilde{R}=h_1^{-1}\oplus
  h_2^{-1}$ of \eqref{eq:LipSchW2bb}   as well as the notation \eqref{eq:fundecs},  we  write the equation  $\widetilde{R}E^+_{\vH}(\lambda_0)f=0$ for
  any $f\in \vE^\vH_{-s}
 $ as
  \begin{align}\label{eq:LipSchW3}
    (1+\widetilde{K})f=0; \quad \widetilde K=\widetilde{R}v^+.
  \end{align} Due to Lemma
  \ref{lemma:posit-slowly-decay} we can  easily check  that $\widetilde{K}\in \vC(\vH^1_{-s})$.
  It then  follows from Fredholm theory that
  $f$ belongs to a finite-dimensional subspace of $\vH^1_{-s}$,
    yielding in particular the first statement of the theorem by
    taking
    $s=0$.

     Finally we show
\begin{align}
     \label{eq:incl}
     \vE^\vH_{-s}\subset \vH^1_\infty,
   \end{align} using   an
    iteration procedure as in the proof of Theorem
    \ref{thm:negat-effect-potents}.  Note that due to  Lemma
    \ref{lem:eigentransform2} we can  use \eqref{eq:2bnd}  and
    \eqref{eq:LipSchW3} to improve the decay. We deduce that
    $f\in\vH^1_{-s+\bar t}$ given that $f\in\vH^1_{-s+(k-1)\bar t}$
    where
    $\bar t:=1+2\rho-\bar\rho$. Since there is no limit on $k(\in \N)$ used for
    this argument, indeed \eqref{eq:incl}   follows by iteration. By Lemma
    \ref{lem:eigentransform2} and  \eqref{eq:incl} it follows that  $\vE^\vG
    _{-s}\subset H^1_{\infty}$.
   \end{proof}

\subsection{Homogeneous degree $-2$ effective
   potentials}\label{subsec:hd2ep}  We aim at proving a version of
 Theorem \ref{thm:short-effect-potent} in a setting similar to that of
 Subsection \ref{subsec:Homogeneous-effect-potent}. Now of course
 $\lambda_0>\Sigma_2$, and having the physics examples in mind we demand
 $\rho=1$ rather than $\rho\geq 1/2$ as before. It is easy to see that our
 theory is void if $\rho=1/2$ is kept. On the other hand there is something to say in
 the case $\rho\in (1/2,1)$, but for simplicity of presentation we
 leave this case out.  It turns out that the proof of
 Theorem \ref{thm:short-effect-potent} works again with  only minor
 modifications, in particular  we will need  Lemma
 \ref{lem:eigentransform} \ref{item:2ba}  with $s'_0=1$.

 First we examine how  Lemma \ref{lemHS} can be modified for
 $\lambda_0>\Sigma_2$. Due to  restricted mapping proporties of $v^+$
 we can  only obtain the following weakened result (by the same
 proof).  Let $s_1=\max\set{1-\nu_0, 1/2}$, and recall that
 $s_0=1+\nu_0$ in agreement with \eqref{eq:crrit}.
 \begin{lemma}\label{lemHS2}
   \begin{enumerate} [1)]
   \item\label{itemG12} The operators
     \begin{align}\label{HS02}
       &G_{+},G^*_{+} \in \vL(\vH^{-1}_{s'},\vH^{1}_{-s}) \text{ for }s, s' > 1 -\nu_0 \text{ with }  s+ s'>2,\\
       & K^*_+:=-E^+_{\vH}(\lambda_0)^*G^*_{+}-1 \in \vC(\vH^{-1}_{s}) \text{ for
            }s \in (s_1, 3/2);\label{HS12}
     \end{align} here  $E^+_{\vH}(\lambda_0)^*$ is given  the
     distributional meaning, and we  define
     $K^+=(K^*_+)^*\in\vC(\vH^{1}_{-s})$.  Explicitly
     \begin{align}
       \label{eq:Kplus2}
       \begin{split}
         K^+&=G_+v^++\oplus_{j=1}^2\parbb{\chi_1(h_j-\i)^{-1}\parb{\i\chi_1+[\chi_1,p_j^2]}\\
           &+\sum_{\nu\in\sigma_j}\parbb{\parb{\chi_2
               r^{\tfrac{1-n_j}2}R_{j,\nu}
               r^{\tfrac{n_j-1}2}\chi_2}\otimes
             P_{j,\nu}}B_j\\
           &+\sum_{\nu\in\sigma_j}\parb{\chi_2
             r^{\tfrac{1-n_j}2}R_{j,\nu}
             r^{\tfrac{n_j-1}2}[\chi_2,p_j^2]}\otimes P_{j,\nu}}.
       \end{split}
     \end{align}

   \item \label{itemG22} $K^+f=-G_+E^{+}_{\vH}(\lambda_0)f-f$ for every
      $f\in \cup_{s<3/2}\,\vH^{1}_{-\min \set{s,s_0}}$ with
     $E^{+}_{\vH}(\lambda_0)f\in \vH^{-1}_{(1 -\nu_0)^+}$.

   \item\label{itemG32} For any $s > 1$ the operator
     $G_{+} \in\vL(\vH^{-1}_{s},\vH^{1}_{-s})$ is injective.
   \end{enumerate}
 \end{lemma}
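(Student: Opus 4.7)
The plan is to track which arguments from the proof of Lemma~\ref{lemHS} survive verbatim when the symmetric potential $v$ is replaced by the non-symmetric operator $v^+$ of order $\vO(|x|^{-3})$ (using $\rho=1$), and to identify the one place where restricted mapping properties of $v^+$ force a smaller range of weights. First, the boundedness \eqref{HS02} is unchanged, since $G_\pm$ and $G_\pm^*$ are built from the diagonal operators $h_1\oplus h_2$ only and do not involve $v^+$; the Hilbert--Schmidt estimates on $r^{-s}R_{j,\nu}r^{-s'}$ from Step~I of the proof of Lemma~\ref{lemHS} apply unchanged.

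For the compactness \eqref{HS12} I would write $K_+^* = v^+ G_+^* + C$, where $C$ collects the localized commutator terms in \eqref{eq:Kplus2} which are compact on $\vH^{-1}_s$ by local compactness of Sobolev embeddings. The first summand is compact as the composition of $G_+^*\in\vL(\vH^{-1}_s,\vH^1_{-s'})$ (valid by \eqref{HS02} for $s,s'>1-\nu_0$ with $s+s'>2$) with $v^+\in\vC(\vH^1_{-s'},\vH^{-1}_s)$; the latter compactness comes from \eqref{eq:repla} together with Sobolev embedding after a small loss of weight, and is available precisely when $s'<3/2$ and $s<3/2$. Combining $s+s'>2$ with $s'<3/2$ forces $s>1/2$, and together with $s>1-\nu_0$ this gives the stated range $s\in(s_1,3/2)$.

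The inversion formula in \ref{itemG22} is modeled on Step~III of the proof of Lemma~\ref{lemHS}. For $g\in\vH^{-1}_\infty$ (say) and $f$ as in \ref{itemG22}, I would compute
\begin{align*}
\inp{g,G_+ E^+_\vH(\lambda_0)f}=\inp{G_+^*g,E^+_\vH(\lambda_0)f}=\lim_{R\to\infty}\inp{G_+^*g,\chi_R E^+_\vH(\lambda_0)f}
\end{align*}
and then move $E^+_\vH(\lambda_0)^*$ onto $\chi_R G_+^*g$, producing $-K_+^* g - g$ in the limit, i.e.\ $\inp{g,-K^+f-f}$. The pairing $\inp{G_+^*g,E^+_\vH(\lambda_0)f}$ is well-defined precisely because $E^+_\vH(\lambda_0)f\in\vH^{-1}_{(1-\nu_0)^+}$, matching the dual of the space into which $G_+^*$ maps via \eqref{HS02}. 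The condition $f\in\vH^1_{-\min\{s,s_0\}}$ with $s<3/2$ ensures that all cutoff error terms $\vO(R^{-1})$-type from $[\chi_R,p_j^2]$ and from the non-local $v^+$ (controlled by the LAP bound in the weighted $L^2$ setting via \eqref{eq:repla}) vanish in the limit.

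Finally \ref{itemG32} is Step~IV of the proof of Lemma~\ref{lemHS}: the kernel of $R_{j,\nu}$ is real-symmetric for real $\nu$, and for $\nu=-\i\sigma$ with $\sigma>0$ its imaginary part equals $\sigma\phi_\nu(r_<)\phi_\nu(r_>)\geq 0$, so $\Im G_+\geq\oplus_j(\chi_1(h_j+\i)^{-1}(h_j-\i)^{-1}\chi_1)\geq 0$; if $G_+f=0$ then $\chi_1 f=0$, and the Green's function structure forces $\chi_2 f=0$ as well. The main obstacle in executing this plan is the bookkeeping in the compactness argument: one must verify that the commutator terms in \eqref{eq:Kplus2} and the composition $G_+v^+$ remain compact in the smaller range $(s_1,3/2)$, in particular that no term loses compactness at the lower boundary $s_1=\max\{1-\nu_0,1/2\}$ once the symmetry of $v$ is lost.
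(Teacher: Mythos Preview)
Your proposal is correct and follows exactly the paper's approach: the paper itself gives no separate proof, stating only that Lemma~\ref{lemHS2} holds ``by the same proof'' as Lemma~\ref{lemHS}, with the restricted range of weights coming from the mapping properties \eqref{eq:repla} of $v^+$. One small imprecision: the $B_j$-term in \eqref{eq:Kplus2} is not a localized commutator term (since $B_j$ is only $\vO(r^{-3})$, not compactly supported), so its compactness comes from the Hilbert--Schmidt argument of Step~II in the proof of Lemma~\ref{lemHS} rather than from local Sobolev compactness---but this term was already compact on the larger range $(1-\nu_0,2+\nu_0)$ and hence causes no trouble on $(s_1,3/2)$.
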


 Using Lemmas \ref{lem:eigentransform} and \ref{lemHS2} (as
 substitutes for Lemma \ref{lemHS}) and Lemma
 \ref{lem:eigentransform2} (as a substitute for Lemma \ref{lem:4.1a})
 we can show the following weakened version of Theorem
 \ref{thm:short-effect-potent}. We mimic the proof, yielding
 immediately the second assertion of Theorem
 \ref{thm:short-effect-potent2} \ref{item:T12} (stated below) from
 Lemma \ref{lem:eigentransform2} and then in turn the first one from
 Lemma \ref{lemHS2} \ref{itemG22}. The results \ref{item:T22} and
 \ref{item:T32} follow partly as before. Note however that Lemma
 \ref{lem:eigentransform} \ref{item:2ba} is used for the second
 implication of \ref{item:T22}. Clearly \ref{item:T42} follows from
 \ref{item:T12}, \ref{item:T32} and Lemma \ref{lem:eigentransform2}.
\begin{thm}\label{thm:short-effect-potent2} Suppose the conditions of
  Proposition \ref{prop2.3} and \eqref{eq:posLowbnd0b} for
  $j=1,2$. Suppose $s<3/2$ and $s\leq s_0$.
  \begin{enumerate}[1)]
  \item\label{item:T12} The dimension of the space
    $\vE^{\vG}_{-s}\subset H^{1}_{-s}$ is finite.

    If a vector $u\in \vE^{\vG}_{-s}$,
    then $f=T^*u\in \vE^{\vH}_{-s} \subset \vH^{1}_{-s}$.
     Conversely if $f\in \vE^{\vH}_{-s}$,  then
    $u={E}^+_+(\lambda_0)f\in \vE^{\vG}_{-s}$ and $v^+f=v^-f$.

  \item\label{item:T22} If $f\in
    \vE^{\vH}_{-s}$, then  $f\in
    \vH^{1}_{(\nu_0-1)^-}\mand f\in \ker (1+K^+)$; here
 $K^+\in\vC(\vH^{1}_{-t})$   is given by \eqref{eq:Kplus2} and
    $t\in  (s_1,
        3/2)$  is arbitrary.

Conversely if $f\in
    \vH^{1}_{-1}\mand f\in \ker (1+K^+)$, then $f\in
    \vE^{\vH}_{-1}$ (in particular $E^-_+(\lambda_0)f=E^+_+(\lambda_0)f)$.
  \item\label{item:T32} Suppose $f \in \vE^{\vH}_{-s}$. In the case
    $s_0=1$  suppose in addition
    that  $f \in \vH^{1}_{-t}$ for some $t<1$. Then the components of $f=(f_1,f_2) $
    can be decomposed as
    \begin{align} \label{eq4.2abb2}
      \begin{split}
        -f_j(r_j\theta_j) &=\sum_{\nu\in\sigma_{j,1}} \sum_{k =
          1}^{n_{j, \nu}} l_{j,\nu,k}(f) { r_j^{\frac{2-n_j}{2} -\nu}
        }\chi_2(r_j){\zeta_{j,
            \nu}^{(k)}(\theta_j)} + g_j,\\&\text{ where  } g_j\in L^2\text{ and  }\\
        l_{j, \nu,k}(f)&= \inp[\big] {
          \phi_\nu(|y_j|)|y_j|^{-\frac{n_j-1}{2} }\otimes\zeta_{j,
            \nu}^{(k)}, \tilde f_j(y_j)}_{L^2(\d y_j)};\\
        \tilde f_j&=\chi_2\parb{v^+f}_j+\chi_2 B_jf_j+
        [\chi_2,p_j^2]f_j.
      \end{split}    \end{align} Here
    \begin{align}
      \label{eq:imaog02}
      l_{j, \nu,k}(f)=0\text { for }\i \nu \geq0.
    \end{align}

    In particular
    \begin{align} \label{nonresonant1b2}
      \begin{split}
        f \in \vH \quad \Longleftrightarrow \quad \forall &j=1,2,\,
        \nu \in \sigma^+_{j,1}, \, k=1,\dots, n_{j, \nu}:\\& \quad
        l_{j, \nu,k}(f) =0.
      \end{split}
    \end{align}

  \item\label{item:T42} Suppose $u \in \vE^{\vG}_{-s}$.  In the  case
    $s_0=1$  suppose in addition
    that  $u \in H^{1}_{-t}$ for some $t<1$. Then
    \begin{align} \label{nonresonant2b2}
      \begin{split}
        u \in L^2(\bX) \quad \Longleftrightarrow \quad \forall
        &j=1,2,\, \nu \in \sigma^+_{j,1}, \, k=1,\dots, n_{j, \nu}:\\&
        \quad l_{j, \nu,k}(T^*u) =0.
      \end{split}
    \end{align}
    In particular
  if $s_0\neq 1$, then
    \begin{align}
      \label{eq:dimres2}
      \dim\parbb{\vE^{\vG}_{-1}
      / \ker
       (H-\lambda_0)_{|{H^{1}}}} \leq\sum^2_{ j=1}\,\,\sum_{\nu\in \sigma^+_{j,1}}
      \,n_{j, \nu}.
    \end{align}
 For  $s_0= 1$  the bound \eqref{eq:dimres2} is valid   provided  that
 $\vE^{\vG}_{-1}$ to the left is
 replaced by  $\vE^{\vG}_{-1}\cap H^{1}_{(-1)^+}$.
\end{enumerate}
\end{thm}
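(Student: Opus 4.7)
The plan is to mimic the proof of Theorem \ref{thm:short-effect-potent} with Lemmas \ref{lem:eigentransform}, \ref{lem:eigentransform2} and \ref{lemHS2} serving as substitutes for Lemmas \ref{lem:4.1a} and \ref{lemHS}. For \ref{item:T12}, I would first read off the correspondence between $\vE^{\vG}_{-s}$ and $\vE^{\vH}_{-s}$ (the second assertion) directly from Lemma \ref{lem:eigentransform2}, which also yields $v^+f=v^-f$ for $f\in\vE^{\vH}_{-s}$ via the identity \eqref{eq:im02}. To prove finiteness of $\dim \vE^{\vG}_{-s}$, it suffices by this correspondence to bound $\dim\vE^{\vH}_{-s}$. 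For such $f$ the output $E^+_{\vH}(\lambda_0)f=0$ is distributionally well-defined, and applying $G_+$ via Lemma \ref{lemHS2} \ref{itemG22} rewrites the equation as $(1+K^+)f=0$ in $\vH^1_{-t}$ for $t\in(s_1,3/2)$ (note that $s\leq s_0$ ensures $f\in \vH^{1}_{-\min\{s,s_0\}}$ as required for \ref{itemG22}). Since $K^+\in\vC(\vH^1_{-t})$ by \eqref{HS12}, Fredholm theory gives finite dimensionality.

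For \ref{item:T22}, the first implication follows by the same bootstrap argument as in Step II of the proof of Theorem \ref{thm:short-effect-potent}: if $f\in\vE^{\vH}_{-s}$ then $E^+_\vH(\lambda_0)f=0$ allows applying $G_+$ to get $f=-K^+f$, and iterating the weight-improvement coming from \eqref{eq:Kplus2} (using \eqref{eq:repla} for $v^+$) brings $f$ into $\vH^{1}_{(\nu_0-1)^-}$. For the converse, given $f\in \vH^1_{-1}\cap\ker(1+K^+)$, a direct calculation with \eqref{eq:Kplus2} gives $E^+_{\vH}(\lambda_0)f\in \vH^{-1}_{s}$ for some $s>1-\nu_0$, so by Lemma \ref{lemHS2} \ref{itemG22} $G_+E^+_\vH(\lambda_0)f=0$; injectivity in Lemma \ref{lemHS2} \ref{itemG32} (applied after absorbing the regular part of the parametrix) then forces $E^+_\vH(\lambda_0)f=0$, and Lemma \ref{lem:eigentransform} \ref{item:2ba} (with $s'_0=1$, which is where $\rho=1$ enters) promotes this to $E^-_\vH(\lambda_0)f=0$ as well, so $f\in\vE^{\vH}_{-1}$.

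For \ref{item:T32}, the argument of Step III in the proof of Theorem \ref{thm:short-effect-potent} carries over verbatim once $v$ is replaced by $v^+$ in the definition of $\tilde f_j$: the improved a priori decay $f\in \vH^1_{(\nu_0-1)^-}$ (trivially true if $s_0=1$ by hypothesis) ensures $\tilde f_j\in L^2_{3-s}$ for some $s<1$, the $\nu>1$ contributions from $K^+f$ sum to an $L^2$ vector by the Hilbert--Schmidt bounds from Step I of Lemma \ref{lemHS}'s proof, and the $\nu\in\sigma_{j,1}$ terms are split off using the fundamental system $\phi_\nu,\psi_\nu$ to extract the explicit coefficients $l_{j,\nu,k}(f)$. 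The vanishing \eqref{eq:imaog02} for $\i\nu\geq 0$ follows from $f\in L^2_{-s}$ with $s<1\leq\Re(1/2-\nu)+n_j/2-1/2$-type counting, and \eqref{nonresonant1b2} is then immediate. Finally \ref{item:T42} is a direct consequence of \ref{item:T12}, \ref{item:T32} and Lemma \ref{lem:eigentransform2}, noting that the quotient dimension is bounded by the number of linearly independent coefficient functionals $l_{j,\nu,k}\circ T^*$.

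The main obstacle is the second implication of \ref{item:T22}: we must upgrade a solution of the integral equation $(1+K^+)f=0$ to a genuine element of $\vE^{\vH}_{-1}$, which requires matching the $+$ and $-$ limiting versions. This matching is delicate because the restriction $s_0=1+\nu_0$ with $s\leq s_0$ in \ref{item:T12} forces us to work exactly at the boundary weight $s'_0=1$ of Lemma \ref{lem:eigentransform}, so we must carefully track that the distributional equation $E^+_\vH(\lambda_0)f=0$ in fact lies in the range of $G_+$ on $\vH^{-1}_{s}$ for $s$ slightly above $1-\nu_0$, not merely in the weaker Besov sense; the interplay of the homogeneity $-2$ potential with the weight $\rho=1$ of $v^\pm$ makes this balance tight.
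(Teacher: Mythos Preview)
Your proposal is correct and follows essentially the same approach as the paper's proof: the paper explicitly states that one mimics the proof of Theorem~\ref{thm:short-effect-potent}, using Lemma~\ref{lem:eigentransform2} as substitute for Lemma~\ref{lem:4.1a} (yielding the second assertion of \ref{item:T12}), Lemma~\ref{lemHS2}~\ref{itemG22} for the finiteness in \ref{item:T12}, and singles out Lemma~\ref{lem:eigentransform}~\ref{item:2ba} as the key ingredient for the second implication of \ref{item:T22}. You have identified all of these substitutions correctly, including the role of $s_0'=1$ (coming from $\rho'=2$, $\rho=1$) in invoking Lemma~\ref{lem:eigentransform}~\ref{item:2ba}, which is precisely the delicate point the paper flags.
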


\begin{remark}\label{remark:except12}
 We note that Remarks \ref{remark:mixed1} and \ref{remark:except1}
 apply equally well for the case $\lambda_0>\Sigma_2$.
\end{remark}

\section{Models of physics}\label{sec:CoulRellich} In this section
we treat the Coulombic potential models of Subsections \ref{First principal example}
and \ref{Second  principal example}
(with $N\geq 3$ and $N\geq 2$, respectively) and demonstrate
consequences of the previous sections for these models for any given
two-cluster threshold $\lambda_0<0$. For convenience
we shall not distinguish between the cases $\lambda_0=\Sigma_2$ and
$\lambda_0>\Sigma_2$ as done before. Whence we will be concerned with
generalizing Section \ref{sec:slow13} only (note that the case
$\lambda_0=\Sigma_2$ can be considered as a special case of Section
\ref{sec:slow13} although
 stronger results are presented in Section
\ref{sec:slow1}). As in Sections \ref{$N$-body
  Schr\"odinger operators} and \ref{$N$-body Schr\"odinger operators
  with infinite mass nuclei} we shall not consider  models
with spin included (see however Remark \ref{remark:physical-models-spin}).

 We  consider a two-cluster threshold  $\lambda_0$ for the models of Sections \ref{$N$-body
  Schr\"odinger operators} and \ref{$N$-body Schr\"odinger operators with infinite mass
  nuclei}, grouping the set of thresholds $a$ for which $\lambda_0\in
\sigma_{\pp}(H^a)$ into $\vA_1$,  $\vA_2$  and $\vA_3$ (all depending
on $\lambda_0$)  for  which
\begin{description}
\item [$\vA_1$:] the effective
inter-cluster interaction   is to leading order attractive Coulombic,
\item [$\vA_2$:] the effective
inter-cluster interaction  is to leading order repulsive  Coulombic,
\item [$\vA_3$:] the effective
inter-cluster interaction  is $\vO(|x_a|^{-2})$.
\end{description}
  Note that this distinction
does not depend on choices of corresponding sub-Hamiltonian
bound states $\varphi^a$ (i.e. channels); it is determined by  charges only
(cf. Case  1 introduced independently  in both of   the above
sections).

Let for $a\in\widetilde{\vA}:=\vA_1\cup\vA_2\cup\vA_3$  (again we
suppress the dependence of $\lambda_0$) the operator $P^a$ be the corresponding orthogonal projection onto
   $\ker(H^a-\lambda_0)$ in
   $L^2(\bX^a)$ and let $m_a$ be the dimension of this
   space. Obviously  $\Pi^a:=P^a\otimes 1$ projects  onto
   the span of functions of the form $\varphi^a\otimes f_a$,
   $\varphi^a\in\ker(H^a-\lambda_0)$, in $L^2(\bX)$.  We identify  $\ran P^a$, say spanned by an
   orthonormal basis
   $\varphi^a_1,\dots\varphi^a_{m_a}$, with $\C^{m_a}$ (using the
   basis), and similarly
   \begin{align*}
     L^2(\bX_a,\C^{m_a}) &\simeq \oplus_{m\leq
     m_a}\,L^2(\bX_a)\ni \oplus_{m\leq m_a}
    f_{a,m}=f_a\\&\simeq S_af_a:=\sum_{ m\leq m_a}\varphi^a_m\otimes
    f_{a,m}\in \ran \Pi^a.
   \end{align*}

 Considering for the moment only  $a\in\vA _3$ we write
   $W_a:=S^*_aI_aS_a= Q_a|x_a|^{-2}+B_a$, where $Q_a$ is a $m_a\times
   m_a$ matrix-valued function depending only on
   $\theta=\hat x_a=|x_a|^{-1}x_a$ while
   $B_a=B_a(x_a)=\vO(|x_a|^{-3})$. The `right generalization' to the case $m_a>1$ of the
   distinction between Cases 2) and 3) (discussed primarily for
   $m_a=1$ in both of Sections \ref{$N$-body
  Schr\"odinger operators} and \ref{$N$-body Schr\"odinger operators
  with infinite mass nuclei}) is to let Case 2) correspond to $Q_a\neq
0$ and let Case 3) correspond to $Q_a=
0$, respectively.

   Let
   $\S_a=\S_a^{n-1}$, $n=\dim \bX_a$,  denote
 the unit sphere in $\bX_a$. We use spherical coordinates on $\bX_a$ and the
   Laplace-Beltrami operator $\Delta_\theta$ to
   express the Laplacian $p_a^2$ on $L^2(\bX_a,\C^{m_a})$.  The
   operator $-\Delta_\theta+Q_a$ on  $L^2(\S_a,\C^{m_a})$  has discrete
   spectrum.
 As in  \eqref{eq:relmunu} we  parametrize its  eigenvalues
   in terms of a  parameter $\nu$ the collection of which is denoted
   be $\sigma_a$. More precisely consider for each $\mu\in
   \sigma(-\Delta_\theta+Q_a)$ the equation $\mu=\nu^2-\tfrac{(n -2)^2}{4}$,
  where by convention $\nu\geq 0$ if $\mu\geq-\tfrac{(n
   -2)^2}{4}$, and $\i\nu>0$ if  $\mu<-\tfrac{(n
   -2)^2}{4}$. Then the  collection of such numbers $\nu$ is denoted by
 $\sigma_a$. The orthogonal projection onto  the corresponding
 eigenspace is denoted by $ P_{a,\nu}$.
 We take an orthonormal basis in $\ran P_{a,\nu}$ for each
 $\nu\in \sigma_a$, say
 $\zeta_{a, \nu}^{(1)}, \dots, \zeta_{a, \nu}^{(n_{a,\nu})}$.
Let
\begin{align*}
  \sigma_{a,1} &= \sigma_{a} \setminus (1,\infty),\quad
 \sigma^+_{a,1}=\sigma_{a} \cap (0,1],\\
\nu_a&=\min_{\nu\in \sigma_a}\Re \nu,\quad s_a=1+\nu_a,\quad d_a=\sum_{\nu\in \sigma^+_{a,1}}n_{a,\nu}.
\end{align*}

 Now for the physics models we have the following result,
 recalling the
 notation  $ L^2_{t^+}=L^2_{t^+}(\bX)=\cup_{s>t}L^2_{s}(\bX)$ for
any real $t$.
\begin{thm}\label{thm:physical-modelsRell}
\begin{enumerate}[1)]
  \item \label{item:PRel1} The space  of locally $H^1$ solutions  to
    $(H-\lambda_0)u=0$ in
    \begin{align*}
      \sum_{a\in \vA_1} \Pi^a L^2_{-3/4}+ \sum_{a\in \vA_2} \Pi^a
      L^2_{(-3/2)^+}+\sum_{a\in \vA_3} \Pi^a
      L^2_{(-\min\set{3/2,\,s_a})^+}+L^2_{-1/2},
    \end{align*} say  denoted by $\mathcal{E}$,
 has finite dimension.

If $\vA _3=\emptyset$, then  $\vE\subset H^1_{\infty}$.
  \item \label{item:PRel2} The number
    \begin{align*}
      \dim\parb{\vE /\ker
       (H-\lambda_0)_{|{H^{1}}}}\leq \sum_{a\in \vA_3} d_a.
   \end{align*}

\item \label{item:asynoCoul}  There exist  linear functionals  $\check
  l_{a,\nu,k}:\vE\to \C$ defined for   $a\in\vA _3$, $\nu\in \sigma^+_{a,1}$ and $k=1,\dots,
   n_{a,\nu}$,  such that for any $u\in\vE
  $
   \begin{align*}
     u\in L^2 \Leftrightarrow \check l_{a,\nu,k}(u)=0 \text{ for
       all such }a,\nu\text{ and }k.
   \end{align*}

Any  $u\in \vE$  fulfills the asymptotics
\begin{align}\label{eq:asymCoulres}
  u-\sum _{a,\nu,k,m}\check l_{a,\nu,k}(u)\varphi^a_m\otimes\zeta_{a,
    \nu,m}^{(k)}\otimes{ \abs{x_a}^{\frac{2-n}{2} -\nu}
  }F(\abs{x_a}>1)\in L^2.
\end{align} (Here $\zeta_{a,
    \nu,\cdot}^{(k)}$ labels  the $m_a$ coordinates  of  $\zeta_{a,
    \nu}^{(k)}$.)

\end{enumerate}
\end{thm}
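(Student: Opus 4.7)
The plan is to combine the Grushin reduction of Chapter \ref{Reduction to a two-body problem} with the three channel-specific Rellich theorems of Section \ref{sec:slow13}, organized by the effective decay rate. Set $\vH = \bigoplus_{a \in \widetilde\vA} L^2(\bX_a;\C^{m_a})$, let $S: \vH \to L^2(\bX)$ be defined componentwise by the channel embeddings $S_a f_a = \sum_m \varphi^a_m \otimes f_{a,m}$, and build $E^\pm_\vH(\lambda_0)$ via the general abstract scheme of Section \ref{An abstract reduction scheme} combined with the $\pm i0$ boundary values $R'(\lambda_0 \pm i0)$ furnished by the limiting absorption principle \eqref{eq:bRES}. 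One then writes $-E^\pm_\vH(\lambda_0) = h + v^\pm$ where $h$ is block-diagonal with blocks $h_a = p_a^2 + W_a$ and $v^\pm$ is polynomially decaying and non-local, with $\mp \Im v^\pm \geq 0$ by the analogue of \eqref{eq:posIm}. Lemma \ref{lem:eigentransform2} (extended to arbitrary $\#\widetilde\vA$ and to the exceptional cases via Remarks \ref{remark:except1}, \ref{remark:except12} and Subsection \ref{sec:The case when the condition {ass2}, second approach}) then sets up a linear isomorphism $T^* : \vE \to \vE^\vH$ where $\vE^\vH$ is the space of solutions of $E^\pm_\vH(\lambda_0)f = 0$ in $\bigoplus_a L^2_{-s_a}(\bX_a;\C^{m_a})$ with $s_a$ equal to $3/4$, $(3/2)^-$, or $\min\{3/2,s_a\}^-$ according to whether $a \in \vA_1$, $\vA_2$, or $\vA_3$.

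The core of the argument is a block-diagonal parametrix $G_+$ for $-E^+_\vH(\lambda_0)$, assembled channel-by-channel: for $a \in \vA_1$ the boundary value $r_a(0+ i0)$ of the FS-type resolvent (cf. \eqref{eq:BesR} and Theorem \ref{thm:negat-effect-potents}); for $a \in \vA_2$ the pseudodifferential inverse $h_a^{-1}$ from Lemma \ref{lemma:posit-slowly-decay} (adapted to the matrix-valued setting with the same symbol estimates); for $a \in \vA_3$ the spherical-harmonic decomposition $-\Delta_a + Q_a(\theta)|x_a|^{-2} \to -\partial_r^2 + (\nu^2-1/4)r^{-2}$ together with the Green's function $R_{a,\nu}$ of Lemma \ref{lemHS2}. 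Combining yields an equation $(1 + K^+)f = 0$ on the appropriate product space. Three properties then propel the proof: (i) $K^+$ is compact on each such space by the polynomial decay of $v^+$ (inherited from exponential decay of $\varphi^a$, cf. \eqref{eq:22asymB}) together with the Hilbert--Schmidt bounds of Step I in the proof of Lemma \ref{lemHS}; (ii) the identity $0 = \Im\langle f, E^+_\vH f\rangle$ forces $R'(\lambda_0+i0)F = R'(\lambda_0-i0)F$ with $F$ as in \eqref{eq:posIm}, so that \eqref{eq:2bnd} gains one unit of weight; (iii) when restricted to $\vA_1 \cup \vA_2$ channels, this weight gain iterates indefinitely (as in Theorems \ref{thm:negat-effect-potents} and \ref{thm:pos-effect-potent2}), which when $\vA_3 = \emptyset$ yields $\vE^\vH \subset \vH^1_\infty$ and hence $\vE \subset H^1_\infty$, proving the second half of \ref{item:PRel1}. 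Fredholm theory applied to $1 + K^+$ on the initial weighted product space gives the finite-dimensionality in \ref{item:PRel1}.

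In the presence of $\vA_3$ channels the bootstrap stalls at the resonance weight. Arguing as in Step III of the proof of Theorem \ref{thm:short-effect-potent} (carried over to the multi-channel setting via Theorem \ref{thm:short-effect-potent2} \ref{item:T32}), one decomposes each component $f_{a,m}$ of a solution as
\begin{equation*}
-f_{a,m}(r\theta) = \sum_{\nu \in \sigma^+_{a,1}}\sum_{k=1}^{n_{a,\nu}} l_{a,\nu,k}(f)\,r^{\frac{2-n}{2}-\nu}\chi_2(r)\,\zeta^{(k)}_{a,\nu,m}(\theta) + g_{a,m},
\end{equation*}
with $g_{a,m} \in L^2$, where $l_{a,\nu,k}(f)$ is the pairing of $\phi_\nu\cdot r^{(1-n)/2}\zeta^{(k)}_{a,\nu}$ against $\chi_2(v^+f)_a + \chi_2 B_a f_a + [\chi_2, p_a^2]f_a$. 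Transporting through $f = T^*u$ defines the functionals $\check l_{a,\nu,k}(u) := l_{a,\nu,k}(T^*u)$ on $\vE$. Counting them yields the estimate $\sum_{a \in \vA_3} d_a$ in \ref{item:PRel2}; the inversion formula $u = E^+_+(\lambda_0)f$ then converts the above asymptotics into \eqref{eq:asymCoulres} after noting that $E^+_+(\lambda_0)$ acts as multiplication by $\varphi^a_m$ modulo an $L^2$-correction on the relevant dyadic piece. Vanishing of all $\check l_{a,\nu,k}(u)$ is equivalent to $f \in \vH^1$, which by further iteration of (ii)--(iii) is equivalent to $f \in \vH^1_\infty$ and so to $u \in L^2$.

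The main obstacle is ensuring that the channel-by-channel parametrix, the compactness analogue of \eqref{HS1}, and the mapping property \eqref{eq:plusMap2} all remain coherent when the product norm mixes weights of different orders across channels, and simultaneously that the non-local, non-symmetric nature of $v^\pm$ at higher thresholds ($\lambda_0 > \Sigma_2$) does not spoil the bootstrap. This is handled by a careful book-keeping exploiting that the decay of $v^\pm$ (polynomial of arbitrary order, by the argument of \eqref{eq:22asymB}) dominates the slowest admissible weight $\min_a s_a$. A secondary technical point, treated uniformly using the framework of Subsection \ref{sec:The case when the condition {ass2}, second approach} and Remarks \ref{remark:except1}, \ref{remark:except12}, is the possibility that $\vF_1 + \vF_2 + \cdots$ is not a direct sum, or that $\lambda_0 \in \sigma_\pp(H')$; both are absorbed into the definition of $\vH$ and of the enlarged reduction operator $S$ without altering the structure of the above argument.
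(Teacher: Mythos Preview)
Your approach mirrors the paper's: the same Grushin reduction to $E^\pm_\vH(\lambda_0)$, the same channel-by-channel parametrix (FS boundary value for $\vA_1$, pseudodifferential inverse for $\vA_2$, spherical Green's function $G_{a+}$ for $\vA_3$), the same Fredholm equation $(1+K^+)f=0$, and the same extraction of asymptotics via the $\vA_3$ expansion of Theorem~\ref{thm:short-effect-potent2}.

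There is, however, a substantive error in the bookkeeping you invoke to close the bootstrap. You assert that $v^\pm$ has ``polynomial decay of arbitrary order, by the argument of \eqref{eq:22asymB}''. This is false for $\lambda_0 > \Sigma_2$. The bound \eqref{eq:22asymB} covers only the multiplicative piece $S^*_a I_a S_b$; it says nothing about the non-local term $K^\pm_{ab}(\lambda_0) = -S^*_a I_a R'(\lambda_0 \pm i0) I_b S_b$. The boundary value $R'(\lambda_0 \pm i0)$ obeys only the LAP mapping $\vL(L^2_s, L^2_{-s})$ for $s>1/2$, so $K^\pm_{ab} \in \vL(L^2_{-s}, L^2_s)$ holds only for $s<3/2$ (this is \eqref{eq:KCou} in the paper), not for all $s$. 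The iteration works not because $v^\pm$ is arbitrarily decaying, but because each pass of $v^\pm$ gains at most roughly three units of weight (two from $\Pi I_a\Pi'$ and one net from the LAP on either side) while each $r_a$ loses at most $2s_0=3/2$ for $\vA_1$ and less for $\vA_2$; the net gain is at least $1/2$ per step, and for $\vA_3$ the iteration stalls at $\vH^1_{(\nu_a-1)^-}$ precisely because the parametrix $G_{a+}$ has its mapping bound \eqref{HS02} only for $s>1-\nu_a$. Your shortcut trivializes this and misidentifies where the quantitative input from Chapter~\ref{Spectral analysis of H' near E_0} (namely \eqref{eq:2bnd}) is actually consumed.

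A minor overstatement: vanishing of all $\check l_{a,\nu,k}$ yields $f\in\vH$ (sufficient for $u\in L^2$ via the eigentransform), not $f\in\vH^1_\infty$; the $\vA_3$ parametrix does not iterate past the $L^2$ weight.
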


 For those  $a\in\vA _3$ for which  $s_a\in (1,3/2)$   we can replace
$L^2_{(-\min\set{3/2,\,s_a})^+}$ by $L^2_{-s_a}$ in the definition of
$\vE$. If $\lambda_0=\Sigma_2$ we
can to some extent use a bigger space of generalized eigenfunctions
than the one considered in \ref{item:PRel1}, cf. Theorems
\ref{thm:pos-effect-potent} and
\ref{thm:short-effect-potent} (not to be elaborated on).

Before giving
details of proofs we  discuss the structure of
\eqref{eq:asymCoulres} in  the case the angular-depending potential
  $Q_a$ vanishes (for
a given  $a\in \vA_3$). Note
that for $m_a=1$ this corresponds  to   Case 3 of  Sections \ref{$N$-body Schr\"odinger operators}
 and \ref{$N$-body Schr\"odinger operators
  with infinite mass nuclei}. If $Q_a=0$ the set $\sigma^+_{a,1}$ is given
explicitly as
\begin{equation}\label{add110}
   \sigma^+_{a,1} = \left\{ \begin{array}{cc}
                            \{ \frac 12 \}, \quad & n =3, \\
                            \{1\}, \quad  & n=4, \\
                            \emptyset, \quad &   n\ge 5. \\
                          \end{array} \right.
                      \end{equation}
                      Moreover  for  $n=3$ and  $n= 4$  the total multiplicity of the single
                      point in $\sigma^+_{a,1}$ is $m_a$
                     (corresponding to constant
                      eigenfunctions $\zeta_{a,
    \nu}^{(k)}$, $k=1,\dots, m_a$). In particular we see  that if
  $Q_a=0$ for all $a\in\vA_3$,  then the space  $\vE$ simplifies as the
  space
  of generalized eigenfunctions in
\begin{align*}
      \sum_{a\in \vA_1} \Pi^a L^2_{-3/4}+ \sum_{a\in \vA_2\cup \vA_3} \Pi^a
      L^2_{(-3/2)^+}+L^2_{-1/2},
    \end{align*}  and we see that $\lambda_0$ can be a  `resonance' only
    for   $n=3$ or  $n= 4$ and then with multiplicity at most $\Sigma_{a\in\vA_3}\,m_a$.

The proof Theorem
\ref{thm:physical-modelsRell}  relies strongly on  Section
\ref{sec:slow13}. Clearly  the assertion  \ref{item:PRel2} is a
consequence of
\ref{item:asynoCoul}.  A
complication of the proof of the  assertions \ref{item:PRel1} and
\eqref{eq:asymCoulres} is the possible existence of exceptional cases
discussed previously, partly in remarks. In fact the definition of the
functionals $\check l_{a,\nu,k}$ depends on the such  cases. We shall
first give the proof of  Theorem \ref{thm:physical-modelsRell} in  the  `generic'
    case and then outline the proof for   exceptional cases.
  \begin{proof}[Proof of Theorem \ref{thm:physical-modelsRell} in the  `generic'
    case]
    We consider  the
    non-exceptional case. This amounts to proving the theorem under two
    separate additional conditions. We
    introduce
    $\vG=L^2(\bX)$ and $\vF_a= \Pi^a\vG$,  $a\in
    \widetilde{\vA}$.
    \begin{subequations}
The first assumption is
    \begin{align}
      \label{eq:dirCo1}
      \vF:=\sum_{a\in \widetilde{\vA}} \vF_a=\sum_{a\in \widetilde{\vA}}\oplus \vF_a.
    \end{align}
    Let $\Pi$  be the orthogonal projection in $\vG$ onto
    $\vF$ (with \eqref{eq:dirCo1} $\vF$ is closed, cf. Proposition \ref{prop2.2}). The second  assumption is
    \begin{align} \label{eq:dirCo2} \lambda_0\notin
      \sigma_{\pp}(H');\quad H'=\Pi'H\Pi', \,\Pi'=1-\Pi.
      \end{align}
    \end{subequations}

    Under \eqref{eq:dirCo1} and \eqref{eq:dirCo2} we let
    $\vH=\sum_{a\in \widetilde{\vA}} \oplus \vH_a$ where
    $\vH_a=L^2(\bX_a,\C^{m_a})=\oplus_{m \leq m_a}\,L^2(\bX_a)$, and  we let
    $S=(S_a):\vH\to \vF\subset\vG$ be given by
    \begin{align*}
      \oplus_{a\in
      \widetilde{\vA}}\,
    f_{a}\to \sum_{a\in
      \widetilde{\vA}}\,
    S_af_{a};\quad f_{a}=\sum_{m\leq m_a}\oplus
    f_{a,m}, \quad S_af_{a}=\sum_{m \leq m_a}\varphi^a_m\otimes
    f_{a,m}.
    \end{align*}
   Letting $T=(SS^*)^{-1}S$ we note that $\Pi=ST^*$.

    In agreement with \eqref{eq:8pm} and \eqref{eq:Eeffs} we
    consider two operators $E^\pm_{\vH}(\lambda_0)$, specified by the entries
\begin{align}\label{eq:EeffsCo}
  \begin{split}
 -E^\pm_{\vH}(\lambda_0)_{ab}
  =S^*_a(p_b^2+I_b)S_b+K^\pm_{ab}(\lambda_0)\equiv \delta_{ab }\check h_a
  +K^\pm_{ab}(\lambda_0); \quad  a,b\in \widetilde{\vA},
  \end{split}
\end{align} as operators from $\vH^1=\sum_{a\in \widetilde{\vA}}
\oplus H^1(\bX_a,\C^{m_a})$ to  $\vH^{-1}=\sum_{a\in \vA_1\cup\vA_2}
\oplus H^{-1}(\bX_a,\C^{m_a})$.   Here  $\check h_a=-\Delta_a+S^*_aI_aS_a$ and
$K^\pm_{ab}(\lambda_0)=-S^*_aI_aR'(\lambda_0\pm \i 0)I_bS_b$. As in \eqref{eq:Eeffs} the difference
is symmetric and  polynomially
decreasing. The latter notion is defined as in Remark
\ref{remark:formbnd}, i.e. $(b_{ab})$ is \emph{polynomially
decreasing} if  $b_{ab}\in \vC
(H_{r}^{1}(\bX_b,\C^{m_b}),H_{t}^{-1}(\bX_a,\C^{m_a}))$ for all
$r,t\in\R$.

\begin{subequations}
We note that
\begin{align}\label{eq:KCou}
  K^\pm_{ab}(\lambda_0)\in\vL(L^2_{-s}(\bX_b,\C^{m_b}),L^2_{s}(\bX_a,\C^{m_a}))\text{
  for all }s<3/2.
\end{align}
For $a\in \widetilde{\vA} $ the effective
inter-cluster interaction $W_a=S^*_aI_aS_a $ obeys
\begin{align}\label{eq:EffCou}
  \begin{split}
W_a&=C_a|x_a|^{-1} 1_{\C^{m_a}}+ \vO(|x_a|^{-2}),\\
&\text{ where }C_a\neq 0 \text{ if and only if  }a\in \vA_1\cup \vA_2.
  \end{split}
\end{align}
 Recall also
\begin{align}\label{eq:Effminus2}
  \forall a\in \vA_3:\quad W_a=
  |x_a|^{-2}Q_a(\hat x_a)+B_a(x_a),\quad B_a=\vO(|x_a|^{-3}).
\end{align}

  \end{subequations}

  Write
    $\vA_3=\set{a_1,\dots,a_{\bar l}}$ (assuming $\vA_3\neq
    \emptyset$) and let  $\vT$ denote the set of  vectors $\bar t=(t_1,\dots,t_{\bar l})\in \R^{\bar l}$
    such that $t_l>-\min\set{3/2,\,s_{a_l}}$ for all  $l\leq \bar l$. We
    introduce  for $r\geq -3/4$,   $s>-3/2$ and $\bar t\in
    \vT $ the spaces
    \begin{align*}
\vH_{a,s}^k&=H_{s}^k(\bX_a,\C^{m_a}),\quad  k\in \R,\quad a\in \widetilde{\vA},\\
      \vH_{r,s,\bar t}^k&=\oplus_{b\in \vA_1}\,  \vH_{b,r}^k \bigoplus
      \oplus_{b\in \vA_2}\, \vH_{b,s}^k \bigoplus \oplus_{l\leq \bar l}
      \vH_{a_l,t_{l}}^k,\quad k\in \R,\\
\vE_{r,s,\bar t}^\vH &=\set{f\in \vH_{r,s,\bar t}^1\, |\quad
  E^{+}_{\vH}(\lambda_0)f=E^{-}_{\vH}(\lambda_0)f=0},
    \end{align*} where  $E^{\pm}_{\vH}(\lambda_0)$
 is given by \eqref{eq:EeffsCo}. A similar notation may be used if
 $\vA_3= \emptyset$, for example $\vE_{r,s}^\vH $, however we
 prefer to keep the uniform notation $\vE_{r,s,\bar t}^\vH $  not
 distinguising between whether one (or two) of the sets $\vA_1$,
 $\vA_2$ and $\vA_3$  is (are) empty or, possibly, all three sets are non-empty.  Let
    \begin{align}\label{eq:eh}
      \vE^\vH = \cup_{s>-3/2,\,\bar t\in \vT}\quad\vE^\vH_{-3/4,
        s,\bar t}.
    \end{align}

Let $E^{\pm}_{+}(\lambda_0)\in
  \vL(\vE^\vH,\vE)$ be given by
 \begin{align}\label{eq:enoch}
   E^{\pm}_{+}(\lambda_0)f=Sf-\sum_ {a\in \widetilde{\vA}}R'(\lambda_0\pm
   \i 0)I_aS_af_a.
 \end{align}
 By mimicking the proof of Lemma \ref{lem:eigentransform2} we can show
that   indeed  $E^{+}_{+}(\lambda_0):\vE^\vH \to \vE$  and that, considered as such
a map,  it is a linear isomorphism with inverse $T^*:\vE\to \vE^\vH
$. Note that  as before we need the property  that $E^{+}_{+}(\lambda_0)f=E^{-}_{+}(\lambda_0)f$
for any $f\in\vE^\vH $.

Next we adapt  the parametrix construction of  Section
\ref{sec:slow13} (combining the  three different cases
 treated there)   and convert the equation $E^{+}_{\vH}(\lambda_0)f=0$, $f\in \vE^\vH$,
to an equation of the form $(1+K^+)f=0$. We can then show by iteration that
$f$  belongs to a suitable space (independent of $f$) on which $ K^+$ is
compact,
yielding  \ref{item:PRel1}.

To carry this out in more details, let us define
$R^+=\oplus_{a\in \widetilde \vA} \,r_a$ where $r_a$ is the following parametrix (depending on the three cases): For $a\in  \vA_1\cup \vA_2$
we take  $r_a$  diagonal in $\vH_a$, $r_a=\oplus_{m\leq m_a} \,r_{a,m}$.  For $a\in
\vA_1$ and $m\leq m_a$ we let
\begin{equation*}
 r _{a,m}=\lim_{\epsilon\to
  0_+} (h_a-\i \epsilon)^{-1} \in \vL(H^{-1}_{s}(\bX_a), H^{1}_{-s}(\bX_a)),\,s>3/4,
\end{equation*} where $h_a=p_a^2+w_a$ is constructed as in Subsection
\ref{subsec:negat-effect-potent}.  For $a\in \vA_2$ we let
$r_a=\oplus_{m\leq m_a} \,r_{a,m}$, where $r_{a,m}=h_a^{-1}$ with
$h_a=p_a^2+w_a$ been  constructed as in Subsection
\ref{subsec:positive-effect-potent}. Note that in this case
$\inp{\cdot}^{-1}h_a^{-1}$ is a bounded pseudodifferential operator,
cf. Lemma \ref{lemma:posit-slowly-decay}.  Finally we let  for $a\in \vA_3$
the operator $r_a=G_{a+}$, where $G_{a+}$ is constructed as in Subsection
\ref{subsec:Homogeneous-effect-potent}, i.e.
\begin{align}\label{eq:gaplus}
  G_{a+}=\chi_1(\check h_a-\i)^{-1}\chi_1+\sum_{\nu\in\sigma_a}\parb{\chi_2\,\abs{\cdot}^{\tfrac{1-n}2}R_{a,\nu}
    \abs{\cdot}^{\tfrac{n-1}2}\,\chi_2}\otimes P_{a,\nu}.
   \end{align} Here $\check h_a=p_a^2+W_a$, where $W_a=S^*_aI_aS_a=
   |\cdot|^{-2}Q_a+\vO(|\cdot|^{-3})$ acting as a (matrix-valued)
   multiplication operator on $\vH_a$. The quantities $Q_a$,
   $\sigma_a$ and $ P_{a,\nu}$ are introduced before Theorem
   \ref{thm:physical-modelsRell}, and the operator $R_{a,\nu} $ is the
   one-dimensional Green's function studied in  Subsection
\ref{subsec:Homogeneous-effect-potent}. While the parametrix $r_a$ for
$a\in  \vA_1\cup \vA_2$ is in some sense taken as an exact inverse,  the
operator $r_a=G_{a+}$ chosen for  $a\in \vA_3$ is less accurate. This is
explicitly seen in \eqref{eq:Kplus2} in which the second up to the
fourth terms represent  `errors' not appearing for $a\in  \vA_1\cup
\vA_2$.

Next we write the equation $-R^+E^{+}_{\vH}(\lambda_0)f=0$ for $f\in
\vE^\vH$ as $(1+K^+)f=0$, where integration by parts is used to
produce the term $1f$. This is done separately for the components of
$f$ by the arguments of Subsections
\ref{subsec:negat-effect-potents}--\ref{subsec:hd2ep}.  By
\eqref{eq:EeffsCo} there are terms in $K^+$ involving polynomially
decreasing factors and there are also terms of the form
$r_aK^\pm_{ab}(\lambda_0)$. Effectively $K^\pm_{ab}(\lambda_0)$ is of
order $-3$ when applied to an $f$ given as above. For $a\in \vA_3$
there are additional terms as discussed above,
cf. \eqref{eq:Kplus2}. In any case it follows that the weighted orders
of the components of $K^+f$ tend to be at least $1/2$ power better (as
determined by \eqref{eq:KCou}--\eqref{eq:Effminus2}) than the apriory
orders of the components of $f (=-K^+f)$ (at this point, note also
\eqref{eq:leftinverses} and its application \eqref{eq:micr1}). However
there is an exception for this general rule due to the restriction $s>
1- \nu_a$ imposed by mapping properties of $G_{a+}$,
cf. \eqref{HS02}. Note for example that to prove $\sum_b
r_aK^\pm_{ab}(\lambda_0)f_b\in \vH_{a,-s}^1$ for $a\in \vA_3$ we need
at least $s> 1- \nu_a$. This is the only restriction for an iteration
argument. Actually we infer from one iteration (i.e. just from the
properties of $K^+f$) that for $a\in \vA_1$ the component $f_a\in
\vH_{a,-1/4}^1$, that for $a\in \vA_2$ the component $f_a\in
\vH_{a,-1/2}^1$, while $f_a\in \vH_{a,\epsilon-1}^1$ for some positive
$\epsilon$ for $a\in \vA_3$ (noting though that the latter property is
an assumption if $\nu_a=0$).  With one more iteration we then obtain
that $f_a\in \vH_{a,0}^1$ for $a\in \vA_1\cup \vA_2$.  If
$\vA_3=\emptyset$ continued iteration leads to $f\in \vH^1_\infty$
(with Lemma \ref{lem:eigentransform2} this proves the very last part
of \ref{item:PRel1}). In particular we can
conclude that
\begin{align*}
  f\in \vE^\vH\Rightarrow f\in \vH_{*}^1:=\vH_{-1,-1,\bar t}^1,
\end{align*}  where all coordinates of
$\bar t $  are  taken  to be $-4/3$. Since  $K^+\in\vC
(\vH_{*}^1)$ we have shown  \ref{item:PRel1}.

It remains to prove \ref{item:asynoCoul}. We only need
to construct linear functionals $\check l_{a,\nu,k}:\vE\to \C$ such
that \eqref{eq:asymCoulres} is fulfilled for any $u\in\vE$. For given
$u\in\vE$ we let $f=T^*u$ and need to examine the components $f_a$ of
$f$ that might not be in $\vH_a$. As we have seen this amounts to
looking at $a\in\vA_3$ only in which case we know the analogue of
\eqref{eq:uppsbnd1} that $f_a\in \vH_{a,-s}^1$ for some $s<1$. Looking
at $-f_a=(K^+f)_a$ for such $a$ we need   the first
equation of \eqref{eq:EeffsCo} written as
\begin{align}\label{eq:EeffsCo23}
  \begin{split}
 -E^\pm_{\vH}(\lambda_0)_{ab}
  =S^*_a(p_b^2+I_b)S_b+K^\pm_{ab}(\lambda_0)=\delta_{ab}\check
  h_a+v_{ab}^+;\quad \check  h_a=p_a^2+W_a.
  \end{split}
\end{align} Now, by using \eqref{eq:gaplus} and the analogue of
\eqref{eq:Kplus2}, we obtain  by mimicking  Step \textit{III} of the proof of
Theorem \ref{thm:short-effect-potent} that
\begin{align} \label{eq4.2abb22}
      \begin{split}
        -f_a(\abs{x_a}\theta_a) &=\sum_{\nu\in\sigma_{a,1}} \sum_{k =
          1}^{n_{a, \nu}} l_{a,\nu,k}(f) { \abs{x_a}^{\frac{2-n}{2} -\nu}
        }\chi_2(\abs{x_a}){\zeta_{a,
            \nu}^{(k)}(\theta_a)} + g_a,\\&\text{ where  } g_a\in \vH_a\text{ and  }\\
        l_{a, \nu,k}(f)&= \inp[\big] {
          \phi_\nu(|y_a|)|y_a|^{-\frac{n-1}{2} }\otimes\zeta_{a,
            \nu}^{(k)}, \tilde f_a(y_a)}_{\vH_a};\\
        \tilde f_a&=\chi_2\sum _{b\in \widetilde \vA} v^+_{ab}f_b+\chi_2 B_af_a+
        [\chi_2,p_a^2]f_a.
      \end{split}
    \end{align} Note that $\tilde f_a\in \vH_{a,3-s}^{0}$ for some
    $s<1$, cf. \eqref{eq:prbnd}, yielding in particular that $l_{j,
      \nu,k}(f)$ is well-defined. Next we deduce from
    \eqref{eq4.2abb22} that
    \begin{align}
      \label{eq:imaog0222}
      l_{j, \nu,k}(f)=0\text { for }\i \nu \geq0.
    \end{align}

    Clearly \eqref{eq4.2abb22} and \eqref{eq:imaog0222}, in
    combination with \eqref{eq:enoch} and the  adapted version of
    Lemma \ref{lem:eigentransform2} (stated after \eqref{eq:enoch}), lead to
    \eqref{eq:asymCoulres} with $\check l_{a,\nu,k}$ given by
    \begin{align*}
      \check
    l_{a,\nu,k}(u)= -l_{a, \nu,k}(T^*u);\quad u\in\vE,\, a\in\vA _3, \,\nu\in \sigma^+_{a,1},\,k=1,\dots,
   n_{a,\nu}.
    \end{align*}
\end{proof}
\begin{proof}[Proof of Theorem \ref{thm:physical-modelsRell} in the  `general'
    case, a sketch]
We  outline  a proof of the theorem without the  two
    additional conditions \eqref{eq:dirCo1} and \eqref{eq:dirCo2} imposed. We shall implement Remarks
   \ref{remark:except1} and \ref{remark:except12} using  the previous
    notation as much as possible.

    First, if \eqref{eq:dirCo1} is not fulfilled, we infer again that
    the space
    $\sum_{a\in \widetilde{\vA}} \vF_a$ is closed by considering as
    in Subsection \ref{sec:The case when the condition {ass2}} the
    'restriction' of $S$ to
\begin{align}
  \label{eq:Hmod2}
  \vH_0:=\set[\Big]{f\in \sum_{a\in \widetilde{\vA}} \oplus \vH_a\mid  f\perp
    \ker \parb{S_{ab}}_{a,b\in\widetilde A}},\,S_{ab}=S_{a}^*S_{b},
\end{align} where as before $\vH_a=L^2(\bX_a,\C^{m_a})$. Then
$S=(S_a):\vH_0\to \vF_0:=\sum _{a\in\widetilde\vA} \vF_a\subset \vG=L^2(\bX)$ is a continuous
isomorphism. In particular indeed $\vF_0$ is closed, and we let $\Pi$
denote the orthogonal projection onto $\vF_0$ in $\vG$. The quantities
$\Pi'$, $H'$ and $R'(z)$ are introduced as in previous sections (with
$\vF_0$ playing the role of $\vF$).

Next we would like to consider limits $R'(\lambda_0\pm \i 0)$ which is
doable  under the condition \eqref{eq:dirCo2}. However  we shall
allow, and here consider, the exceptional case where \eqref{eq:dirCo2}
is not fulfilled, and we need a version of Section \ref{sec:The case where
  lambda0insigma} to treat this case. By a version of Theorem
\ref{thm:priori-decay-b_0}, cf.  Remarks
\ref{remark:microlocal-boundsGEN}, \ref{remark:The case
  lambda0insigma} and \ref{remark:excase}, $\lambda_0$ has
finite multiplicity as an eigenvalue of $H'$ with eigenfunctions in
$H^2_\infty$.

To simplify the presentation let us make the same
assumption as in Subsection \ref{sec:The case where lambda0insigma}
that this  eigenspace is given as ${\spann}\set{\psi}$,
$\norm{\psi}=1$. We introduce then $\vH=\vH_0\oplus \C$ and $S:\vH\to
\vF:=\vF_0\oplus {\spann}\set{\psi}$ by
\begin{align*}
  S\tilde f=\sum_a\,S_af_a+c\psi,\quad \tilde f=\parb{(f_a),c}.
\end{align*}

We let  $\Pi''$  denote the orthogonal projection onto the orthogonal
complement
of $\vF$ in
$\vG$, and let $H'':=\Pi''H\Pi''$ and $R''(z)=(H''-z)^{-1}$. Note that
$\Pi''=1-\Pi-|\psi\rangle\langle \psi|=\Pi'-|\psi\rangle\langle
\psi|$. Recall  then  \eqref{eq:8pm}
\begin{align}\label{eq:ezfirst}
  E_{\vH}(z) = S^* \big (z - H + H R''(z)H \big ) S.
\end{align} Letting  $M=\# \widetilde \vA +1$ this operator has a $M\times M$-block  representation
$(e_{ij})_{i,j\leq M}$. Here   (listing the elements $a\in\widetilde \vA$
as   $a_1,\dots,a_{M-1}$) $e_{MM}=z-\lambda_0$, $e_{iM}=-
S^*_{a_i}H\psi=-
S^*_{a_i}I_{a_i}\psi$ and $e_{Mi}=e_{iM}^*=\langle e_{Mi}|$ for
$i\leq M-1$, while for $i,j\leq M-1$ we set $e_{ij}=S_{a_i}^* \big (z
- H + H R''(z)H \big ) S_{a_j}$.
\begin{subequations} The analogue  of \eqref{eq:27p}   reads
\begin{align}\label{eq:27p2}
  \begin{split}
 &\breve H=H''+ \sum_{a\in\widetilde \vA }\,p_a^2\Pi^a,\\
\;&\widetilde
H=H-\lambda_0\sum_{a\in\widetilde \vA }\Pi^a;\\\,& \vD (\breve H)=\vD (\widetilde  H)=\vD ( H).
  \end{split}
\end{align} These  operators  differ by  $H$-compact  terms, cf. \eqref{eq:28p},
\begin{align}
  \label{eq:28p2}\breve H&=\widetilde
H -K_1+K_2;\\
K_1&=\Pi H\Pi'+\Pi' H\Pi+\sum_{a\in\widetilde \vA }\Pi^aI_{a}\Pi^a+\lambda_0|\psi\rangle\langle \psi|,\nonumber \\
K_2&=\sum_{a\in\widetilde \vA }\Pi^aH\Pi^a-\Pi H\Pi,\nonumber
\end{align}
  and the basic structure of  resolvents is  given as
\begin{align}\label{eq:basresbreveo2}
  \begin{split}
  R''(z)&= \breve R(z)-\parbb{\sum_{a\in\widetilde \vA }\,\,p_a^2\Pi^a-z}^{-1}\Pi-z^{-1}|\psi\rangle\langle \psi|,\\
 R''(z)&=\Pi''\breve R(z)\Pi''=\Pi''\breve R(z)=\breve R(z)\Pi'';
\quad\breve R(z)=(\breve H-z)^{-1}.
  \end{split}
\end{align}
\end{subequations}

Due to the good properties of $\psi$ we can make sense to
$\breve R(\lambda_0\pm\i 0)$ and then in turn to $R''(\lambda_0\pm\i
0)$. In particular we can consider \eqref{eq:enoch} with
$f$ replaced by $\tilde f:=\parb{(f_a),c}\in \vH$ and $R'(\lambda_0\pm
\i 0)$  by $R''((\lambda_0\pm \i 0))$, i.e.
\begin{subequations}
\begin{align}\label{eq:enoch3}
   E^{\pm}_{+}(\lambda_0)\tilde f=S\tilde f-\sum_ {a\in \widetilde{\vA}}R''(\lambda_0\pm
   \i 0)I_aS_af_a.
 \end{align}  Similarly, by  taking limits in \eqref{eq:ezfirst},  we obtain
   \begin{align}\label{eq:enoch32}
  E^{\pm}_{\vH}(\lambda_0) =  S^* \big (\lambda_0 -  H +  H
  R''(\lambda_0\pm \i 0)H \big ) S=(e_{ij})_{i,j\leq M,\,z=\lambda_0\pm \i 0}.
\end{align}
\end{subequations}
 The previous definition of $\vE^\vH$ given in
\eqref{eq:eh} needs to be modified as
 follows. We introduce (for parameters given as before)
 \begin{align*}
   \vH_{r,s,\bar t}^{k,0}&=\set[\Big]{f\in \vH_{r,s,\bar t}^k\,\mid f\perp
    \ker \parb{S_{ab}}_{a,b\in\widetilde A}},\\
\vE_{r,s,\bar t}^\vH &=\set{\tilde f=(f,c)\, \mid
  f\in \vH_{r,s,\bar t}^{1,0},\,c\in \C,\,E^{+}_{\vH}(\lambda_0)\tilde
  f=E^{-}_{\vH}(\lambda_0)\tilde f=0},\\
\vE^\vH &= \cup_{s>-3/2,\,\bar t\in \vT}\quad\vE^\vH_{-3/4,
        s,\bar t}.
\end{align*} As
    before we can now  check that     $E^{+}_{+}(\lambda_0):\vE^\vH
    \to \vE$  and that, considered as such
a map,  it is a linear isomorphism with inverse $T^*=S^*(SS^*)^{-1}:\vE\to \vE^\vH$.

Next we consider the space
\begin{align*}
  \widetilde \vH=\parbb{\sum_{a\in \widetilde{\vA}} \oplus \vH_a}\oplus
\C.
\end{align*} Clearly $\widetilde \vH=\vH \oplus
\ker \parb{S_{ab}}_{a,b\in\widetilde A}$, and $E^{\pm}_{\vH}(\lambda_0)$
acts only on the first component. Let us introduce  $\widetilde
E^{\pm}_{\vH}(\lambda_0)= E^{\pm}_{\vH}(\lambda_0)+P_0$, where  $P_{0}$ denotes the orthogonal projection onto
$\ker \parb{S_{ab}}_{a,b\in\widetilde A}$ in $\sum_{a\in \widetilde{\vA}}
\oplus \vH_a$. We can consider $P_0$ as acting on the second
component of $\widetilde \vH$ and hence write $\widetilde
E^{\pm}_{\vH}(\lambda_0)=E^{\pm}_{\vH}(\lambda_0)\oplus 1$. Clearly
$\ker \widetilde E^{\pm}_{\vH}(\lambda_0)=\ker E^{\pm}_{\vH}(\lambda_0)$, and
similar relations hold in weighted Sobolov spaces. We are lead to
consider $\widetilde E^{\pm}_{\vH}(\lambda_0)=E^{\pm}_{\vH}(\lambda_0)\oplus 1$
on spaces of the form
\begin{align*}
  \widetilde \vH^k_{r,s,\bar t}= \parbb{\vH_{r,s,\bar t}^{k,0}\oplus
    \C}\oplus \ker \parb{S_{ab}}_{a,b\in\widetilde A}=\vH_{r,s,\bar t}^{k}\oplus
    \C.
\end{align*} With the same restrictions on the parameters we can prove
a
version of Lemma \ref{lem:eigentransform2} (using the same proof), and
by applying
the parametrix  construction $\oplus_{a\in \widetilde \vA} \,r_a$ of the previous proof of the theorem (the
regular case)  extended as $R_{\diag}^+=\parb{\oplus_{a\in \widetilde \vA}
  \,r_a}\oplus 1_\C$ to the equation
\begin{align*}
 \widetilde E^{\pm}_{\vH}(\lambda_0)\tilde
f=0, \quad  \tilde f:=\parb{f,c}=\parbb{\parb{P'_0f,c},P_0f}\in \widetilde \vH^k_{r,s,\bar t},
\,\,P_0'=1-P_0,
\end{align*}
   we arrive again at Fredholm theory. More precisely, using that the
   operator  $E^{\pm}_{\vH}(\lambda_0)\oplus 0$ is realized concretely
   by the expression on  the right-hand side of \eqref{eq:enoch32}, we arrive at the
   equation
   \begin{align*}
     0=R_{\diag}^+\parb{P_0f+(\widetilde E^{\pm}_{\vH}(\lambda_0)\tilde
f-P_0f)}=-(1+\widetilde K^+)\tilde f
   \end{align*}
 for a `nice' operator $\widetilde K^+$. Indeed
   using the specific form of this
 slightly modified new $K^+$ we can  proceed as
before. We omit the details.
\end{proof}

\begin{remark}\label{remark:restric0}  If $\vA_3=\emptyset$ we can
  write the above operator as $\widetilde K^+=R_{\diag}^+\tilde v^+$,
  where $\Im \tilde v^+\leq
  0$, however
 in general the form of $\widetilde K^+$ is more
complicated. Moreover   the structure of the set of solutions to the equations
  $(1+K^+) f=0$ or $(1+\widetilde K^+) \tilde f$=0  studied in the above proofs
   appears `cleanest' in this case in the sense that  `spurious'  solutions,  i.e. resonances
   states,
   do not  occur.  On the other hand if $\vA_3\neq \emptyset$ resonances
   states could occur, and for this  reason
  our theory of resolvent expansion at the two-cluster threshold is
  more complicated for  $\vA_3\neq \emptyset$. See Chapter
  \ref{chap:resolv-asympt-near} for a systematic study applicable to   the fastly decaying
  case  for which $\vA= \vA_3$ and  the effective
inter-cluster interaction  is $\vO(|x_a|^{-3})$ for $a\in \vA$.
Resolvent expansion at the two-cluster threshold for some non-fastly decaying
  cases  is treated
  in Chapter \ref{Applications}, see  Remark  \ref{remark:formRes}.
\end{remark}
  \begin{remark}\label{remark:physical-models-spin}
    We did not consider particles with spin in Theorem
    \ref{thm:physical-modelsRell}. However the theorem applies to
    cases with spin because this basically amount to restricting the
    operator $H$ to suitable subspaces. However this is under the
    assumption that $\lambda_0$ is an (un-restricted) two-cluster
    threshold. It could be that $\lambda_0$ is a threshold not of this
    type but in the restricted sense is a  two-cluster
    threshold. To treat such situation one  would need a different
    procedure, although  it could be that  a somewhat similar
    treatment would work. Thus  for example, to indicate a possible
    preliminary step,    an obvious procedure for modifying $S$
    for fermions would be to use an anti-symmetrization of
    $S_af_a=\varphi^a\otimes f_a$ to bring this vector into  the
    fermionic subspace of $L^2(\bX)$.
  \end{remark}



\chapter{Resolvent asymptotics near a two-cluster
  threshold}\label{chap:resolv-asympt-near}

This chapter is  devoted to the  asymptotics of the resolvent near an
arbitrary  two-cluster threshold $\lambda_0$ for the $N$-body
Schr\"odinger operator $H$. As discussed in Chapter \ref{chap:lowest thr}, for physics models, the effective potential may decay like
$\vO(\abs{x_a}^{-1})$ (slowly decaying case),  $\vO(\abs{x_a}^{-2})$ (critically decaying case) or  $\vO(\abs{x_a}^{-3})$ (fastly decaying case).
We only study two cases:   1)   the effective potential is fastly decaying;  2) the effective potential is slowly decaying and positive outside a compact set.
 The main difference between these two cases is that threshold resonance may appear in the first one and  it is absent in the second one.  For fastly decay effective potentials, we only study in full details two situations:
 i)  $\lambda_0 =\Sigma_2$ is a double
two-cluster threshold (the case $\lambda_0=\Sigma_2$ is a non-multiple two-cluster
threshold is easier and already studied in \cite{Wa2}); ii) $\lambda_0> \Sigma_2$
is a non-multiple two-cluster threshold of $H$.
We  calculate the leading term of resolvent expansions according to cases the threshold is an eigenvalue or/and a resonance. For positively slowly decreasing effective potentials we only study the lowest threshold and prove Gevrey estimates in exponentially weighted spaces for the remainder in the resolvent expansion (cf. (\cite{AW, Wa6}) for one-body operators). Our study will have applications for  the physics models  of  Sections \ref{$N$-body
  Schr\"odinger operators} and \ref{$N$-body Schr\"odinger operators with infinite mass
  nuclei}.  Parts of our study easily
generalize to cover interesting cases for the physics models
 of Sections \ref{$N$-body Schr\"odinger operators}
and \ref{$N$-body Schr\"odinger operators with infinite mass
  nuclei}; our results will be stated with sketched proofs.

This chapter is organized as follows. In Section \ref{sect5.1} we study   the asymptotic expansion of the
resolvent $R(z)=(H-z)^{-1}$ for rapidly decreasing effective potentials. For simplicity we assume that the intercluster spaces are of dimension three and analyze the notion of two-cluster threshold resonances in Subsection \ref{sec:two-clust-threshres}. We give some normalization conditions for resonance
states according to spectral nature of the two-cluster threshold. These conditions are useful, they allow to explicitly compute  some constants in physically interesting models (cf. Chapter 6).
 The resolvent expansions are given in Subsection \ref{sec:resolv-asympt-nearLOWEST} for $\lambda_0=\Sigma_2$ (when $\Sigma_2$ is a double two-cluster threshold) and  in Subsection \ref{sec:resolv-asympt-nearHIGHER} for $\lambda_0 > \Sigma_2$.  In Section \ref{sect5.2}
 we study the case where the effective potential is positive  outside
 a compact set and slowly decaying at infinity. In this case  threshold resonance is absent and we prove one-term resolvent expansions with Gevrey estimates on the remainder at the lowest threshold $\Sigma_2$. These results may be used to show sub-exponential time-decay of local energies.  In
Section \ref{sec:resolv-asympt-physics models near} we study the
combination of the previous sections for the  physics models
 of Sections \ref{$N$-body Schr\"odinger operators}
and \ref{$N$-body Schr\"odinger operators with infinite mass nuclei}.

\section{Rapidly decaying effective potentials} \label{sect5.1}

\subsection{Two-cluster threshold resonances}\label{sec:two-clust-threshres}

Let $\lambda_0 \in \vT$ be a two-cluster threshold of $H$. We analyze
in this subsection spectral properties of $H$ at $\lambda_0$, assuming
the effective potential obtained through the Grushin method decays
sufficiently fastly. In addition, we assume the intercluster configuration $\bX_a$ is of dimension three for  two-cluster decompositions $a$ such that $\lambda_0$ is an eigenvalue of $H^a$. \\

\subsubsection{The case $\lambda_0 =
  \Sigma_2$}\label{sec:case-lambda_0=sub}

Consider first the case
\be \label{ass5.1bff} \mbox{
  $\lambda_0 =\Sigma_2$ is a double  two-cluster threshold in the sense
  of Condition \ref{cond:uniqdd}.}
  \ee
Whence the lowest threshold $\Sigma_2$ is a  two-cluster threshold of $H$ and an   eigenvalue of exactly  two sub-Hamiltonians $
H^{a_j}$, $j=1,2$. Here of course  $a_1$ and $ a_2$ are two-cluster
decompositions of the $N$-body system.
The above condition implies  that $\Sigma_2<0$, however the
requirement of Condition \ref{cond:uniqdd}
  that $\lambda_0(=\Sigma_2)$ be  a simple eigenvalue
for both $H^{a_1}$ and $H^{a_2}$ is automatically fulfilled. It is
also a part of Condition \ref{cond:uniqdd} that that Condition
\ref{cond:geom_singl} are fulfilled for $a=a_1$ and $a=a_2$.  In agreement with Section
\ref{sec:Reduction near a multiple two-cluster threshold},  let
$\varphi_j$ denote a  corresponding (real) normalized eigenvector of
$H^{a_j}$, $j=1,2$.
 Denote
\[
\bX^j = \bX^{a_j}, \quad \bX_j = \bX_{a_j}, \quad j =1,2,
\]
and $x^j$ (resp., $x_j$) variables in $\bX^j$ (resp., in $\bX_j$).
Let
\[
\vF_j =\{ g = \varphi_{j}(x^j)f_j(x_j)\mid   f_j \in L^2(\bX_j)\},  \quad j =1, 2.
\]
$\vF_j$ is a closed subspace of $L^2(\bX)$.
Assume, cf. \eqref{ass2},
\begin{align*}
  \vF_1 \cap\vF_2  =\{0\} \quad \mbox{ and } n_j:=\dim \bX_j =3\,\text{ for } \, j =1,2.
\end{align*}

Recall from  Proposition \ref{prop2.2} that $\vF = \vF_1 + \vF_2$ is a closed subspace in  $L^2(\bX)$.
Let  $\Pi $ be the orthogonal projection from  $L^2(\bX)$ onto  $ \vF$
and $\Pi'= 1-\Pi$.  We borrow the notation \eqref{eq:spaceDEF} and
denote, in particular,
\[
\vH^{k}_s = H^{k}_s(\bX_1) \oplus H^{k}_s(\bX_2), \quad \vH^k =\vH^{k}_0, \quad \vH_s =\vH^{0}_s, \quad \vH =\vH^{0}_0.
\] Let ${\mathcal L}(k,s;k',s')={\mathcal L}(\vH^{k}_{s},\vH^{k'}_{s'})$,
cf. Subsection \ref{Spaces}.

Let
$S= (S_1, S_2) : \vH  \to L^2(\bX)$ be  defined by $Sf=S_1f_1+S_2f_2$
for $f=(f_1,f_2)$ and with
\begin{equation*}
  S_j:  L^2(\bX_j) \to L^2(\bX),\quad  f_j \to S_jf_j =
\varphi_j(x^j) \otimes f_j(x_j).
\end{equation*}
Then $
S_j^*:  L^2(\bX)\to  L^2(\bX_j)$ is given by $S_j^*:
\, f\to  S_j^*f = \inp{\varphi_j,f}_j$, for $j =1,2.$ Here $\inp{.,.}_j$ denotes the scalar product in
$L^2(\bX^j)$; the notation $\w{\cdot, \cdot}$ will be used to denote
the scalar product in $L^2(\bX)$ (or in $L^2(\bX_j)$).
One has
\[
S^*S =  1 + \left( \begin{array}{cc}
                  0 & s_{12}\\[.1in]
                   s_{21} & 0
                \end{array}
                \right) \quad \mbox{ on } \vH =L^2(\bX_1) \oplus L^2(\bX_2),
\]
where $s_{ij} \in \vL\parb{ L^2(\bX_j),L^2(\bX_i)}, \;i\neq j,$ are
given by $s_{ij}f_j= \inp{\varphi_i, \varphi_j \otimes f_j}_i$.

Introduce for  $z\in \C$ with $\Im z\neq0$
 \begin{align*}
E(z) &= R'(z),   \\
 E_+(z)&= S -R'(z)H S,   \\
 E_-(z) &= S^*- S^*H R'(z),\\
 E_{\vH}(z) &=  S^* \big (z -  H +  H R'(z)H \big ) S.
\end{align*}
Then (recalling from  Section
\ref{sec:Reduction near a multiple two-cluster threshold})
\begin{equation} \label{rep5.9}
R(z)= E(z)-  E_{+}(z)  E_{\vH}(z)^{-1}  E_{-}(z).
\end{equation}
$ E_{\vH}(z) : \vH \to   \vH$ is computed in (\ref{eq:13}):
\begin{align}\label{eq:5.13}
E_{\vH}(z)  ={}  &z- \lambda_0 \nonumber\\&-  \left( \begin{array}{cc}
                   - \Delta_{x_1} + W_1(x_1) + K_{11}(z) &  \check  s_{12}(z)+W_{12}+K_{12}(z) \\
                    \check  s_{21}(z)+ W_{21}+K_{21}(z) &   - \Delta_{x_2} + W_2(x_2) + K_{22}(z)
                \end{array}\right),
\end{align}
where
\begin{subequations}
\begin{align*} \check s_{ij}(z)&=  \inp{\varphi_i, \varphi_j \otimes (\lambda_0-z-\Delta_{x_j})\cdot}_i,\\
W_{ij}&= \inp{\varphi_i,I_j\varphi_j\otimes \cdot}_i,  \\
W_k (\cdot) &=W_{kk}=  \inp{\varphi_k,I_k\varphi_k}_k(\cdot),\\
K_{ij}(z)&=  -\inp{\varphi_i,I_i R'(z) I_j (\varphi_j \otimes \cdot)}_i.
\end{align*}
\end{subequations}

Assume now $\lambda_0 \not\in \sigma_\d(H')$. Then
by Lemma \ref{lemma2.1} $R'(z)$ is holomorphic in $z$ for $z$ near
$\lambda_0$ and $   E_{\vH}(\lambda_0)$ is well-defined. To simplify
notation, denote $P=  -  E_{\vH}(\lambda_0)$ and decompose this
operator  as $P = P_0 + U$, where
\begin{subequations}
\begin{alignat}{2} \label{P0} P_0 &= \left(%
     \begin{array}{cc}
       P_{1,0}& 0\\
       0 & P_{2,0} \\
\end{array}\right)
       : = \left(\begin{array}{cc}-\Delta_{x_1}  & 0\\
       0 & -\Delta_{x_2 } \\
     \end{array}
   \right), \\ \label{U} U &= \left(%
     \begin{array}{cc}
       U_{11}   &  U_{12}\\
       U_{21}  & U_{22}  \\
     \end{array}%
   \right):= \left(%
     \begin{array}{cc}
       W_1  & W_{12} + \check s_{12}(\lambda_0)\\
       W_{21} +\check s_{21}(\lambda_0) & W_2  \\
     \end{array}%
   \right) + K(\lambda_0).
 \end{alignat}
\end{subequations}

We impose the condition $\rho> \tfrac 12$ (where  $\rho$ is given in  (\ref{eq:1})) and
\eqref{eq:posLowbnd0b}  with $q_j(\theta)=0$ for  $j=1,2$, and hence
 (cf. Proposition \ref{prop2.3})  that at least for  $\rho_0= 3$
 \begin{align}\label{eq:Uassp}
   U : \vH^{1}_{ -s} \to \vH^{-1}_{ \rho_0-s} \quad \mbox{ is
     continuous for any $ s\in [0,\rho_0]$}.
 \end{align}

This leads us finally to impose the following  set of conditions
(including the previous ones):
 \begin{subequations}
 \begin{align} \label{ass5.1}
&\dim \bX_1 =\dim \bX_2 =3,\quad\vF_1 \cap\vF_2  =\{0\}\,\mand
\,\lambda_0 \not\in \sigma (H'),\\
&\rho> \tfrac 12\,\mand \,\diag(W_1,W_2) \in \vL(\vH^1,\vH^{-1}_{\rho_0} )
  \mbox{ for some   } \,\rho_0\geq 3,\label{ass5.3}\\
&2 +2 \rho  \ge \rho_0.\label{ass5.4}
 \end{align}
  \end{subequations}
 Actually \eqref{ass5.4} is imposed  for Section
 \ref{sec:resolv-asympt-nearLOWEST} only; it is a  convenient
 assumption relevant for the case $\rho_0>3$. Clearly \eqref{eq:Uassp}
 is fulfilled (in fact the operator $U(z)$
 of Section
 \ref{sec:resolv-asympt-nearLOWEST} fulfills
 $U(z) \in \vL(1, -s; -1,\rho_0-s)$ for $0 \le s \le \rho_0$ uniformly
 for $z$ near zero, generalizing \eqref{eq:Uassp}).  The conditions \eqref{ass5.1}  and
                                     \eqref{ass5.3} will be imposed
                                     throughout the section.

Recalling  the  notation $H^1_{t^-}
  =\cap_{s<t}H^1_{s}$ and  $\vH^1_{t^-}
  =\cap_{s<t}\vH^1_{s}$  for any $t\in \R$, \eqref{ass5.1} and
  \eqref{ass5.3} lead to the
  following effective version of  Definition
  \ref{defn:hom}, see Theorems \ref{thm:short-effect-potent} and
  \ref{thm:short-effect-potent2} (and \eqref{add110}).
\begin{defn}
\begin{enumerate}[(1)]
\item \label{item:10}$\lambda_0$ is  a \emph{resonance of $H$} if the
  equation $Hu = \lambda_0u$ admits a solution
  $u \in H^1_{(-1/2)^-} \setminus H^1$. The multiplicity
  of the resonance $\lambda_0$ is defined as the dimension of the quotient
  space
  $\ker (H-\lambda_0)_{|H^1_{(-1/2)^-}} / \ker
  (H-\lambda_0)_{|H^{1}}$.
\item\label{item:9} $0$ is  a \emph{resonance of $P$}  if the equation
  $ Pv = 0$ has a solution $v \in \vH^1_{(-1/2)^-} \setminus \vH^1$. The multiplicity of the resonance zero is defined as
  the dimension of the quotient space
  $\ker P_{|\vH^1_{(-1/2)^-} } / \ker P_{|\vH^{1}}$.
\end{enumerate}
\end{defn}

Similarly  we can define (although this will not be needed) a notion of zero-resonance for $ \widetilde P  = (S^*S)^{-1} P (S^*S)^{-1}$. Since
$(S^*S)^{-1}$ and  $S^*S$ are
 continuous on $\vH^{1}_{ s}$ for any $s$, zero is a
resonance of $\widetilde P $ if and only if it is a resonance of $ P $ and
their multiplicities are the same.  Following \cite{Ne,JK}  we
distinguish between four cases for the  threshold  $\lambda_0$ according to its spectral
nature as follows.

\begin{description}
\item[Case 0] \textit {A regular point}: $\lambda_0$ ($0$, resp.) is
  neither an eigenvalue nor a resonance of $H$ ($P$, resp.).
\item[Case 1] \textit {An exceptional point of the first kind}:
  $\lambda_0$ ($0$, resp.) is a resonance but not an eigenvalue of $H$
  ($P$, resp.).
\item[Case 2] \textit {An exceptional point of the second kind}:
  $\lambda_0$ ($0$, resp.) is an eigenvalue but not a resonance of $H$
  ($P$, resp.).
\item[Case 3] \textit {An exceptional point of the third kind}:
  $\lambda_0$ ($0$, resp.) is simultaneously  an eigenvalue and a resonance of
  $H$ ($P$, resp.).
\end{description}

From  Lemma \ref{lem:4.1a} we deduce the following result.
\begin{lemma}
\label{prop5.1}
$\lambda_0$ is a resonance (resp., an eigenvalue) of $H$ if and only
if zero is a resonance (resp., an eigenvalue) of $ P$ and their
multiplicities are the same.
\end{lemma}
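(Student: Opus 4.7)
The plan is to deduce Lemma \ref{prop5.1} directly from Lemma \ref{lem:4.1a}, which gives an eigentransform isomorphism at every weight. Under the standing assumptions \eqref{ass5.1}, the hypothesis $\lambda_0 \notin \sigma(H')$ of that lemma is satisfied, and the operators $T^*$ and $E_+(\lambda_0)$ are mutually inverse whenever applied to vectors in the zero-eigenspaces at a fixed weight.

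First I would settle the eigenvalue statement. Applying Lemma \ref{lem:4.1a} with $s = 0$ yields the equivalence
\begin{equation*}
u \in H^1 \text{ and } (H-\lambda_0)u = 0 \quad \Longleftrightarrow \quad f = T^*u \in \vH^1 \text{ and } P f = 0,
\end{equation*}
with inverse $u = E_+(\lambda_0)f$. Since $T^*$ and $E_+(\lambda_0)$ are mutually inverse linear maps between $\ker (H-\lambda_0)_{|H^1}$ and $\ker P_{|\vH^1}$, the two eigenspaces have the same (possibly infinite) dimension, so $\lambda_0 \in \sigma_{\pp}(H)$ iff $0 \in \sigma_{\pp}(P)$ with equal multiplicities.

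Next I would handle the resonance statement. For each fixed $s < -1/2$, Lemma \ref{lem:4.1a} again yields a linear isomorphism
\begin{equation*}
T^* : \ker(H-\lambda_0)_{|H^1_s} \;\longrightarrow\; \ker P_{|\vH^1_s},
\end{equation*}
with inverse $E_+(\lambda_0)$. Taking the intersection over $s \in (-\infty,-1/2)$, and using that $H^1_{(-1/2)^-} = \bigcap_{s < -1/2} H^1_s$ and analogously for $\vH^1_{(-1/2)^-}$, yields an isomorphism
\begin{equation*}
T^* : \ker(H-\lambda_0)_{|H^1_{(-1/2)^-}} \;\longrightarrow\; \ker P_{|\vH^1_{(-1/2)^-}}.
\end{equation*}
The $s = 0$ case shows this isomorphism restricts to the corresponding $L^2$-eigenspace isomorphism, and hence descends to a linear isomorphism of the quotient spaces
\begin{equation*}
\ker (H-\lambda_0)_{|H^1_{(-1/2)^-}} / \ker (H-\lambda_0)_{|H^{1}} \;\cong\; \ker P_{|\vH^1_{(-1/2)^-}} / \ker P_{|\vH^{1}}.
\end{equation*}
Thus $\lambda_0$ is a resonance of $H$ iff $0$ is a resonance of $P$, and the resonance multiplicities coincide.

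Since the lemma is essentially bookkeeping on top of Lemma \ref{lem:4.1a}, there is no genuinely hard step; the only point deserving attention is the compatibility of $T^*$ with the nested weight spaces, which is immediate from the fact that the eigentransform is defined by the same formula independently of $s$. Hence the same map implements the isomorphism at every weight, and both the $L^2$ and the Besov-like weighted conclusions follow simultaneously.
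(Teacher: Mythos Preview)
Your proof is correct and follows exactly the approach the paper takes: the paper simply states ``From Lemma~\ref{lem:4.1a} we deduce the following result'' without further elaboration, and you have spelled out the bookkeeping (applying Lemma~\ref{lem:4.1a} at $s=0$ for the eigenvalue part and at each $s<-1/2$, then intersecting, for the resonance part) that the paper leaves to the reader.
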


                                   We need the following simplified version of
                                   Theorem
                                   \ref{thm:short-effect-potent}.

                                   \begin{thm} \label{thm5.2} Assume
                                     the conditions \eqref{ass5.1}  and
                                     \eqref{ass5.3}.
                                     \begin{enumerate}[\rm (a)]
                                     \item \label{item:first} Suppose  $v =
                                       (v_1, v_2)\in
                                       \vH^{1}_{(-1/2)^-} $ and $P v
                                       =0$. Then one has
                                     \begin{equation} \label{eq5.2a}
                                       v_j(x_j)
                                       = - \f{\inp{ 1,
                                         U_{j1}v_1 +
                                         U_{j2}v_2}} {
                                         4 \pi|x_j| } +
                                       w_j\text{ for  } |x_j|\geq 1,
                                     \end{equation}
                                     where
                                     $ w_j\in L^{2}(\bX_j)$;
                                     $j=1,2$.
                                     Moreover
                                     \begin{equation} \label{nonresonant1}
                                       v \in \vH \Longleftrightarrow
                                       \inp{ 1,
                                         U_{j1}v_1 + U_{j2}v_2}
                                       =0;\quad  j=1,2.
                                     \end{equation}
                                   \item \label{item:second}  Suppose  $u \in H^1_{(-1/2)^-}$
                                     is  a solution to  the equation
                                     $(H-\lambda_0) u=0$. Let $v =
                                     T^*u \in \vH^{1}_{(-1/2)^-}$.
                                     Then $u$ is an eigenfunction of
                                     $H$ at $\lambda_0$ if and only if
                                     \begin{equation}\label{nonresonant2}
                                       \inp{ 1, U_{j1}v_1 + U_{j2}v_2} =0;\quad  j=1,2.
                                     \end{equation}
\end{enumerate}
\end{thm}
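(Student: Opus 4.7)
\textbf{Proof plan for Theorem \ref{thm5.2}.} The plan is to recognize Theorem \ref{thm5.2} as the concrete $n_j=3$, $Q_j=0$ specialization of the general Rellich-type structure theorem \ref{thm:short-effect-potent}. The hypotheses \eqref{ass5.1}--\eqref{ass5.3} correspond to the ``fastly decaying case'' of Subsection \ref{subsec:Homogeneous-effect-potent}, and with $n_j=3$ the explicit formula \eqref{add110} gives $\sigma^+_{j,1}=\{1/2\}$ with multiplicity one and (normalized constant) spherical eigenfunction $\zeta_{j,1/2}^{(1)}\equiv(4\pi)^{-1/2}$. The radial profile $r_j^{(2-n)/2-\nu}$ then collapses to $r_j^{-1}$, matching the $|x_j|^{-1}$ tail in \eqref{eq5.2a}. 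Note also that $\lambda_0 = \Sigma_2 \notin \sigma(H')$ by \eqref{ass5.1}, so Lemma \ref{lemma2.1} applies and the operator $E_{\vH}(\lambda_0) = -P$ is well-defined via \eqref{eq:5.13}.

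First, for part (a), I would apply \eqref{eq4.2abb} componentwise to $v$, which produces a decomposition
\begin{equation*}
-v_j(r_j\theta_j)=l_j(v)\,r_j^{-1}\chi_2(r_j)(4\pi)^{-1/2}+g_j,\qquad g_j\in L^2(\bX_j),
\end{equation*}
with $l_j(v)=\langle\phi_{1/2}(|y_j|)|y_j|^{-1}(4\pi)^{-1/2},\tilde v_j\rangle$. The key remaining step is to identify $(4\pi)^{-1/2}l_j(v)$ with $\langle 1,U_{j1}v_1+U_{j2}v_2\rangle/(4\pi)$. For this I would appeal to the parallel ``3D-Green's-function parametrix'' description: since $P_{j,0}v_j=-\Delta_{x_j}v_j=-F_j$ with $F_j:=U_{j1}v_1+U_{j2}v_2\in H^{-1}_{\rho_0-s}(\bX_j)$ by \eqref{eq:Uassp}, and since $\rho_0-s>3/2$ for $s$ slightly above $1/2$ (using $\rho_0\geq 3$), the duality pairing $\langle 1,F_j\rangle$ is well-defined in dimension three. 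Convolving with $(4\pi|x-y|)^{-1}$ and combining the elementary expansion $|x-y|^{-1}=|x|^{-1}+\vO(|x|^{-2}|y|)$ on $\{|y|\leq|x|/2\}$ with a Cauchy--Schwarz estimate on the ``diagonal'' region $\{|y|>|x|/2\}$ produces \eqref{eq5.2a} with an $L^2$ remainder. The equivalence \eqref{nonresonant1} is then immediate, since in $\R^3$ a nonzero $c/|x|$ tail is never $L^2$ while a zero tail leaves only $w_j\in L^2$.

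For part (b), the bi-continuous isomorphism $u\leftrightarrow v=T^*u$ between solutions of $(H-\lambda_0)u=0$ in $H^1_{(-1/2)^-}$ and solutions of $Pv=0$ in $\vH^1_{(-1/2)^-}$ is supplied by Lemma \ref{lem:4.1a} (applicable here since $\lambda_0\notin\sigma(H')$, so the proof of that lemma goes through with $s$ replaced by $(-1/2)^-$). Under this correspondence $u\in L^2(\bX)$ iff $v\in\vH$, so combining with \eqref{nonresonant1} gives \eqref{nonresonant2}. The main technical obstacle I anticipate is controlling the non-local contribution of $K(\lambda_0)$ to $U$: because $K_{ij}(\lambda_0)$ contains the reduced resolvent $R'(\lambda_0)$, the expression $F_j$ is not pointwise in $v$, and one has to verify that $\langle 1,K(\lambda_0)v\rangle$ is a well-defined finite pairing and that the Green's-function tail analysis survives intact. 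This is exactly where the weighted uniform bound \eqref{eq:91BB} on $R'(z)$ from Subsection \ref{subsec:The case bottom} combines with the decay bound \eqref{eq:second_bnd} of Proposition \ref{prop2.3} to absorb the non-local correction into the remainder $w_j$; handling this rigorously is the core calculation of the proof.
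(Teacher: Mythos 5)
Your plan is correct and in part (a) it starts exactly as the paper does: specialize Theorem \ref{thm:short-effect-potent} to $n_j=3$, $q_j=0$ (so $\sigma_{j,1}=\{1/2\}$, $\zeta\equiv(4\pi)^{-1/2}$, tail $r_j^{-1}$, and $s_0=3/2\neq1$ so the extra hypothesis in \ref{item:T3} is vacuous), and in part (b) it uses the eigentransform isomorphism of Lemma \ref{lem:4.1a}, which is precisely what the paper's appeal to Theorem \ref{thm:short-effect-potent}~\ref{item:T4} rests on. Where you diverge is the identification of the tail coefficient with $\inp{1,U_{j1}v_1+U_{j2}v_2}$: the paper does this in two lines by integrating by parts inside the functional, computing $\sqrt{4\pi}\,l_{j,\frac12,1}(v)=\inp{\phi_{1/2}(|y_j|)|y_j|^{-1},\Delta_{y_j}(\chi_2 v_j)}=\inp{1,\Delta_{y_j}v_j}=\inp{1,U_{j1}v_1+U_{j2}v_2}$, whereas you re-derive the whole asymptotics from the free Green's function, i.e.\ from $v=-G_0Uv$ with the expansion of $|x-y|^{-1}$ — exactly the alternative route the paper flags in Remark \ref{remark:altFred}. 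Both work; the paper's computation is shorter once the general theorem is in hand, while your route is more explicit but makes your first step (invoking \eqref{eq4.2abb}) redundant and, more importantly, silently uses the representation $v_j=G_{0,j}\Delta v_j$ for $v$ only in $\vH^1_{(-1/2)^-}$. That representation is not automatic: it needs the cutoff/integration-by-parts argument of Lemma \ref{lemHS}~\ref{itemG2} (or, equivalently, the observation that the harmonic difference lies in $L^2_{(-1/2)^-}$ and hence vanishes); you should spell this out. Finally, the "core calculation" you anticipate for the non-local term $K(\lambda_0)$ is already packaged in \eqref{ass5.3} via \eqref{eq:Uassp} (and \eqref{eq:second_bnd}), so no extra work is needed there.
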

 \begin{proof}
Since $n_j=3$ and we assume that
$q_j(\theta)=0$ for  $j=1,2$, in Theorem
                                   \ref{thm:short-effect-potent}
                                    the sets $\sigma_{j,1}=\set{\tfrac
                                      12}$. Whence by this theorem we need to
                                    consider the functionals $l_{j,\tfrac
                                   12, 1}$ applied to $v\in
                                       \vH^{1}_{(-1/2)^-} $ solving
                                       $Pv=0$.  Since $q_j=0$,

        \begin{align*}
          \sqrt{4 \pi}\,l_{j,\tfrac
                                   12, 1}(v)= \inp[\big] {
          \phi_\frac 12(|y_j|)|y_j|^{-1}, \Delta_{y_j}
          (\chi_2v_j)(y_j)}_{L^2(\d y_j)}.
        \end{align*}
 By
                                  integrating  by parts in the
                                 integral to
                                 the right of   this formula we
                                 verify that
                                 \begin{align*}
                                 \sqrt{4 \pi}\,l_{j,\tfrac
                                   12, 1}(v)=\inp[\big] {
          1, \Delta_{y_j}
          (\chi_2v_j)(y_j)}_{L^2(\d y_j)}=\inp[\big] {
          1, \Delta_{y_j}
          v_j(y_j)}_{L^2(\d y_j)}= \inp{ 1,
                                         U_{j1}v_1 +
                                         U_{j2}v_2},
                                 \end{align*} showing
                                 \eqref{eq5.2a}. Trivially
                                 \eqref{nonresonant1} is a consequence
                                 of \eqref{eq5.2a}. The above
                                 computation and Theorem
                                   \ref{thm:short-effect-potent}
                                   \ref{item:T4} yield \ref{item:second}.
\end{proof}

                                   \begin{remark}\label{remark:altFred}
      Although this was not used above let us note that the equation
        $P v=0$  for    $v = (v_1, v_2)\in \vH^1_{(-1/2)^-}$
        (equivalently) reads
        \begin{align}
          \label{eq:alt}v= - G_0Uv,
        \end{align}
where $G_0 := \slim_{z\to 0, \,z\not \in \,[0,\infty)} (P_0 -z)^{-1}$ is
computed as
$G_0 = G_{0,1}\oplus G_{0,2}$ with $G_{0,j}$ specified by  its integral kernel
\[
G_{0,j}(x_j, y_j) = \f{1}{4\pi|x_j-y_j|};\quad  j=1,2.
\] This assertion follows from an integration by parts argument as in
the proof of  Lemma \ref{lemHS} \ref{itemG2}.  We
                                   note that the operators
                                   $\inp{x}^{-s}G_{0,j}\inp{x}^{-s'}$
                                   are Hilbert-Schmidt operators for
                                   $s,s'>\tfrac 12$ and $s+s'>2$, cf. \cite[Lemmas 2.2]{JK}. The
                                   asymptotics  \eqref{eq5.2a} may alternatively be
derived from \eqref{eq:alt}.

                                   \end{remark}

The condition $\w{1, U_{j1}v_1 + U_{j2}v_2}=0$ for $j=1,2$ is
equivalent to the condition $\w{U^*c, v}=0$ for all $c \in \C^2$. Since $Hu=\la_0 u$ if and only if $Pv=0$ with $v =T^*u$, the following result is an immediate consequence of Theorem \ref{thm5.2}.
\\

\begin{cor} \label{cor5.3} Assume conditions of Theorem
  \ref{thm5.2}. Let $u \in H^{1}_{(-1/2)^-}$  be a solution to the equation $H u =\la_0 u$. Then $u$ is an $L^2$-eigenfunction of $H$ if and only if
\be \label{nonresonant3}
\w{TU^*c,u} =0 \quad \mbox{ for any } c \in \C^2.
\ee
\end{cor}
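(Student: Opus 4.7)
The plan is to combine the scalar characterization of $L^2$-eigenfunctions in Theorem \ref{thm5.2}\ref{item:second} with an elementary adjoint identity, so that the two scalar conditions get repackaged as a single vector-valued condition pairing $u$ with $TU^*c$.

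First I would set $v=T^*u \in \vH^1_{(-1/2)^-}$. By Lemma \ref{prop5.1} (or equivalently by the computation recalled just before Theorem \ref{thm5.2}) one has $Pv=0$, so Theorem \ref{thm5.2}\ref{item:second} applies. Thus $u$ is an $L^2$-eigenfunction of $H$ if and only if
\begin{equation*}
\inp{1,U_{j1}v_1+U_{j2}v_2}=0,\quad j=1,2,
\end{equation*}
where the pairing is in $L^2(\bX_j)$.

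Next I would rewrite this pair of scalar conditions vectorially. For $c=(c_1,c_2)\in\C^2$, viewed as the element $c\otimes 1\in\vH_{-\infty}$, an elementary calculation gives
\begin{equation*}
\sum_{j=1,2}\bar c_j \inp{1,U_{j1}v_1+U_{j2}v_2}=\inp{c\otimes 1,Uv}_{\vH}=\inp{U^*c,v}_{\vH},
\end{equation*}
and by polarization the simultaneous vanishing for $j=1,2$ is equivalent to $\inp{U^*c,v}=0$ for every $c\in\C^2$. Using $v=T^*u$ and taking adjoints,
\begin{equation*}
\inp{U^*c,v}=\inp{U^*c,T^*u}=\inp{TU^*c,u},
\end{equation*}
which is exactly \eqref{nonresonant3}.

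The only technical point to check is that all pairings are well-defined in the right duality. Since $u\in H^1_{(-1/2)^-}$, it suffices that $TU^*c\in H^1_{(1/2)^+}$. This follows from \eqref{eq:Uassp} (with $\rho_0\geq 3$, so $U^*c\in\vH^{-1}_{\rho_0}$ applied against the constant $c$ gives $U^*c\in\vH^{-1}_{(1/2)^+}$ at least) together with the mapping properties of $T=S(S^*S)^{-1}$ on weighted Sobolev spaces provided by Lemma \ref{Lemma:basic2A} and Proposition \ref{prop2.2}. This duality check is the only step requiring care; the rest is the adjoint identity and Theorem \ref{thm5.2}.
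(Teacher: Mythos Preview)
Your proof is correct and follows the same route as the paper: set $v=T^*u$, invoke Theorem \ref{thm5.2}\ref{item:second}, rewrite the two scalar conditions $\inp{1,U_{j1}v_1+U_{j2}v_2}=0$ as $\inp{U^*c,v}=0$ for all $c\in\C^2$, and then use $\inp{U^*c,T^*u}=\inp{TU^*c,u}$. The paper states this in one sentence before the corollary; your version adds the helpful (and correct) remark on why the pairing $\inp{TU^*c,u}$ is well-defined via the mapping property \eqref{eq:Uassp} and the weighted-space boundedness of $T$.
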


 The condition \eqref{nonresonant3} can be rewritten as the system of
 equations
 \begin{subequations}
 \begin{align}
  \int_\bX \varphi_1 I_1 u dx  + \inp{1, \check s_{12} (T^*u)_2} &=  0,  \label{nonresonant3a}\\
  \int_\bX \varphi_2 I_2u dx  +  \inp{1, \check s_{21} (T^*u)_1} &=  0. \label{nonresonant3b}
\end{align}
 \end{subequations}
  In fact, using $u = S v - R'(\lambda_0) H Sv$ with   $v = T^*u$  and
  writing
$Sv = \varphi_1 v_1 + \varphi_2 v_2$, one has
\begin{align*}
  \int_\bX  \varphi_1 I_1u dx & =  \inp{I_1\varphi_1,   S v - R'(\lambda_0) H Sv} \\
                              &= \inp{1, (W_1 + K_{11}(\lambda_0))v_1} + \inp{1, (W_{12} + K_{12}(\lambda_0))v_2},
                             \end{align*} showing, by using
                             \eqref{nonresonant3} with $c=(1,0)$, that
\begin{align*}
  \int_\bX \varphi_1 I_1 u dx  + \inp{1, \check s_{12} v_2}=\inp{1,(UT^*u)_1}=0.
\end{align*} Similarly \eqref{nonresonant3b} follows from the computation
\begin{align*}
  \int_\bX \varphi_2 I_2 u dx  + \inp{1, \check s_{21} v_1}=\inp{1,(UT^*u)_2}=0.
\end{align*}

By the same computations we see that
(\ref{nonresonant3a}) and  (\ref{nonresonant3b}) imply \eqref{nonresonant3}.
\\

For  $ s\in(\f 1 2, \f 3 2)$ we
 set
\begin{equation}
  \vK=\ker ( 1+ G_0U) \subset \vH^{1}_{-s}.
\end{equation}
This space $\vK$
coincides with the subspace
$\{ v \in \vH^{1}_{(-1/2)^-}|\, P v=0\}$ and is of
finite dimension. By Theorem \ref{thm5.2}, if zero is a resonance of
$P$, its multiplicity is at most two.
Next we introduce  the constant functions
\begin{subequations}
  \begin{align}\label{eq:gi}
     g_j(x_i) =\f{1}{2\sqrt{\pi}} \text{ for  } x_i\in{}\bX_i;\quad i=1, 2.
  \end{align}
 Set $\mathfrak{g}_1 = (g_1,0)$ and  $\mathfrak{g}_2 = (0,g_2)$.
For $v \in \vK$, define $c(v) =(c_1(v), c_2(v)) \in \C^2$ by
\begin{align}
  \label{eq:ci} c_i(v) = \w{\mathfrak{g_i}, Uv}=\w{g_i, U_{i1}v_1 + U_{i2}v_2}; \quad
i =1,2.
\end{align}
\end{subequations}

Theorem \ref{thm5.2} implies that  $v\in\vK$ is an eigenfunction of $P$ if and only if $c(v)=0$.   In case zero is a resonance of $P$, we use the following
 normalization of resonance states:

\begin{itemize}
\item  If zero resonance  is  simple, we denote by $\psi_1 \in \vK$ a resonance state
  such that
  \begin{equation} \label{normalization1} |c(\psi_1)| =1.
  \end{equation}
\item  If zero resonance is  double, we denote by $\psi_1$ and $
  \psi_2$ two
   resonance  states in $ \vK$ such that
  \begin{equation} \label{normalization2} c(\psi_1) =(1,0)\mand
    c(\psi_2) =(0,1),
  \end{equation} respectively.
\end{itemize}
If $\lambda_0$ is a resonance  of $H$, let $\kappa$ denote its multiplicity.  Then $\kappa\in\{1,2\}$ and
the corresponding normalization condition for resonance  states $u_j= (1-R'(\lambda_0)H)S\psi_j$, $1 \le j\le \kappa$,  of $H$ reads
\bea \label{normalization1H}
|\w{TU^*\mathfrak{g}_1, u_1}|^2 +  |\w{TU^*\mathfrak{g}_2, u_1}|^2 =1, & &\quad  \mbox{ if $\kappa=1$},\\
\label{normalization2H}
\w{TU^*\mathfrak{g}_i, u_j} = \delta_{ij}  \quad \mbox{for $1\le i, j \le 2$}, & &\quad \mbox{ if $\kappa=2$}.
\eea
With this choice of  resonance  functions  we  show in
Section \ref{sec:resolv-asympt-nearLOWEST}, see Theorem
\ref{thm5.10},  that if $\Sigma_2$ is a resonance of multiplicity $\kappa$ but not an eigenvalue of $H$, then
\be
R(z) = \f{\i}{\sqrt{z-\lambda_0}} \left ( \sum_{j=1}^\kappa{}\w{u_j, \cdot}u_j + \vO(|z-\lambda_0|^\ep)\right)
\ee
in $\vL(-1,s; 1,-s)$, $s>1$, for $z$ near $\lambda_0$ and $z-\lambda_0
\not\in [0,\infty)$.

The case where $\Sigma_2$ is an eigenvalue (and  possibly  a resonance
as well) is  also
studied,  see Theorem
\ref{thm5.10} when  $\rho_0>3$. For $\rho_0=3$ this requires an a priori weak decay
property of the corresponding eigenfunctions, see Theorem
\ref{thm:rhoThree}. On the other hand with this additional assumption in fact
we obtain the  expansion of the resolvent up to second
order. Expansions to higher orders usually require strong decay of the
potential, see \cite{JK, Wa2}. In particular in our setting this would
mean that $\rho_0$ should be sufficiently big. We shall not pursue
this direction, but  rather mainly restrict  our study to the case
$\rho_0=3$  which  indeed has
relevance  for the physics models, see discussions in Sections \ref{$N$-body
  Schr\"odinger operators} and \ref{$N$-body Schr\"odinger operators with infinite mass
  nuclei}.
\medskip

\subsubsection{The case $\lambda_0 > \Sigma_2$}\label{sec:case-lambda_0sub}

In this Subsection we consider the case
\be \label{ass5.1b} \mbox{
  $\lambda_0 >\Sigma_2$ is a non-multiple two-cluster threshold in the sense
  of Condition \ref{cond:uniq}.} \ee
This means there exists a unique
$a_0 \in\vA\setminus\{a_{\max}\}$ such that $\lambda_0
\in\sigma_{\pupo}(H^{a_0})\setminus\set{\Sigma_2}$, and this cluster
decomposition is a two-cluster decomposition obeying Condition
\ref{cond:geom_singl}.  As in Section \ref{Reduction near a
  simple two-cluster threshold} we let $m$ denote the multiplicity of
eigenvalue $\lambda_0$ of $H^{a_0}$ and $\{\varphi_1, \dots,
\varphi_m\} $ be an orthonormal basis of the associated eigenspace.
Let $\Pi $ be the projection in $L^2(\bX)$ defined by
\begin{equation}
  \label{eq5.1.40}
  \Pi g = \sum_{j=1}^m \varphi_j\otimes \inp{\varphi_j,g}_{L^2(\bX^{a_0})},  \;  g\in L^2(\bX).
\end{equation}
Put $\Pi' = 1 - \Pi$, $H' = \Pi' H \Pi'$ and $R'(z)
=(H'-z)^{-1}\Pi'$. Let  in this Subsection
\[
\vH = L^2(\bX_{a_0}; \C^m), \quad \vH^{k}_s = H^{k}_s(\bX_{a_0}; \C^m),
\quad \vH^k_{s^-}
  =\cap_{t<s}\vH^k_{t}; \quad k, s \in \R.
\]
  The scalar product in $\vH$ concerns only the $x_{a_0}$-variable and
  will be denoted as $\w{\cdot, \cdot}_0$, but we shall allow
  ourselves (slightly abusively) to use the same notation for the inner product on $L^2(\bX_{a_0})$. Define $S : \vH \to L^2(\bX)$ by
\be \label{eq5.1.41}
 S : f= (f_1, \dots, f_m) \to  Sf =\sum_{j=1}^m \varphi_j(x^{a_0}) f_j(x_{a_0}).
\ee
 From  Section \ref{Reduction near a
  simple two-cluster threshold} we have the representation formula for the resolvent,
\begin{equation} \label{rep5.1.2}
R(z)= E(z)- E_{+}(z) E_{\vH}(z)^{-1} E_{-}(z)\,\text{ for }\,  \Im z \neq 0,
\end{equation}
where
  \begin{align*}
E(z) &= R'(z),   \\
E_+(z)&=(1-R'(z)I_0)S,   \\
E_-(z) &= S^*\parb{1- I_0R'(z)},\\
E_{\vH}(z)&= (z-\lambda_0)-(P_0 + S^*{ I_0}S -
S^* I_0R'(z)I_0 S),
\end{align*}
where $P_0 =-\Delta_{x_{a_0}}{\textbf 1}_m $ with ${\textbf 1}_m$
being the identity matrix of size  $m$ and $I_0 = \sum_{b\not\subset a} V_b(x^b)$. We  assume
\eqref{ass5.2b} stated below. Then one has the following limiting absorption principle for $H'$,
\be\label{eq5.1.42}
\forall{ s > \tfrac 1 2}:\quad\|\w{x}^{-s} R'(\lambda\pm \i0)\w{x}^{-s}\|
\le C_s\text{ for } \lambda\text{  near }\lambda_0.
\ee
 This follows from Theorem \ref{thmlapBnd}   for $\breve{R}(z)$, recalling $R'(z) = \breve{R}(z) \Pi'$. Decompose $P^\pm = - E_{\vH}(\lambda_0 \pm \i0)$  as
 \begin{subequations}
  \be \label{eq5.1.43}
P^\pm = P_0 + U^\pm,
\ee
where
\be
U^\pm = S^*{ I_0}S - S^* I_0R'(\lambda_0\pm \i0)I_0 S.
\ee
 \end{subequations}
 In addition to (\ref{ass5.1b}) we assume that
 \begin{subequations}
  \bea
 & & \dim \bX_{a_0} =3\,\mand \,\lambda_0 \not\in \sigma_\pp(H'), \label{ass5.2b} \\
  & &\rho \geq{ 1} \,\mand \,S^*{ I_0}S\in \vL(\vH_{-\rho{_1}}^1,\vH_{\rho{_1}}^{-1})
  \mbox{ with  } \,\rho_1\geq\tfrac 32, \label{ass5.3b}\\
 & &\rho \ge \rho_1 -\tfrac 1 2.\label{ass5.4b}
\eea
 \end{subequations}

Recall that $\rho>0$ is the rate of decay of the pair potentials as
specified by (\ref{eq:1}).
The condition (\ref{ass5.4b})  will be convenient  (although not
essential)  in Section
\ref{sec:resolv-asympt-nearHIGHER}; it is not used in the present
section. (Note that with (\ref{ass5.4b}) the operator $U(z)$ of Section
\ref{sec:resolv-asympt-nearHIGHER} fulfills  $U(z) \in \vL(1, -s;
-1,s)$ for $ s < \rho_1$ uniformly in  $z$ near zero  with $\Im z \neq
0$.) Imposing only  (\ref{ass5.3b}) obviously the property
(\ref{ass5.4b}) is valid  for
$\rho_1=\tfrac 32$. This weak version of  (\ref{ass5.4b})  will  be very useful.

According to discussions  in Section 2.2, (\ref{eq5.1.42}) and    the condition (\ref{ass5.3b}) imply
\be
U^\pm \in \vL(\vH^{1}_{-s}, \vH^{-1}_{s})\,\text{ for }\,s< \tfrac 32.
\ee

 The free resolvent $r_0(z) =(P_0-z)^{-1}$ can be expanded in appropriately weighted spaces as
\be
r_0(z) = G_0 + z^{\f 1 2} G_1 + z G_2 + \cdots,
\ee see \eqref{eq:freeres},
and $K^\pm = G_0 U^\pm$ are compact operators on $\vH^{1}_{-s}$ for
$s\in (\tfrac 12, \tfrac 32)$.
Using the integral kernel of $G_0$, one deduces that
$\ker_{\vH^{1}_{-s}}(1+K^\pm)\subset \vH^{1}_{(-1/2)^-}$. Set $\vK = \ker_{\vH^1_{-s}}(1+K^+)$ for
$s\in (\tfrac 12, \tfrac 32)$. Let for any $s<\tfrac 32$
\begin{align*}
  \vE^\vG_{-s}=\set{u\in
    L^2_{-s}(\bX)|\, (H-\lambda_0)u=0,\, \Pi'u\in \vB^*_{1/2,0}(\bX)}.
\end{align*}

\begin{thm}\label{prop5.5} Assume conditions (\ref{ass5.1b}),
  (\ref{ass5.2b}) and (\ref{ass5.3b}), and   let $s\in  (\tfrac 12, \tfrac 32)$.
\begin{enumerate}[\rm (a)]
\item \label{item:plusminuss} For any $v \in \vK $ \be
  R'(\lambda_0+\i0)I_0Sv = R'(\lambda_0 -\i0)I_0Sv \quad \mbox{ and }
  \quad U^+v = U^-v\in{\vH^{-1}_{(2\rho{_1}-1/2)^-}}.  \ee Moreover \be \vK =
  \ker_{\vH^1_{-s}}(1+K^-)=\ker_{\vH^1_{-s}}(P^+)=
  \ker_{\vH^1_{-s}}(P^-).  \ee

\item  \label{item:2res}  For  any $v=(v_1, \cdots,v_m)\in \vK$ and $k=1,\dots,m$
\be
v_k(x_{a_0}) = -\f{\w{1, U^\pm_{k1}v_1 +\cdots +
    U^\pm_{km}v_m}_0}{4\pi|x_{a_0}|} + w_k(x_{a_0})\text{ for }|x_{a_0}|\geq{1},
\ee
 where $w=(w_1, \cdots, w_m) \in \vH$.

\item \label{item:2resDim}
 Let $v\in\vK$. Then $v\in \vH$ if and only if
\be
\forall c\in \C^m:\quad\w{c,  U^\pm v}_{0}=0.
\ee
Here $c \in \C^m$ is considered as an element of $\vH^1_{-s}$, $s > \f 3 2$.

\item
 $u \in \vE^\vG_{-s}$ if and only if $u = (S-R'(\lambda_0 \pm \i0) I_0S)v$
 for some $v\in \vK$. In this case $v$ is  uniquely given by  $v= S^*u\in \vK$.

\item $u\in \vE^\vG_{-s}\cap L^2{(\bX)}$ if and only if $v=S^*u  \in \vK\cap
 \vH$.
\end{enumerate}
\end{thm}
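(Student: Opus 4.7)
The plan is to prove the five parts in sequence, each an adaptation of the machinery developed in Chapter \ref{chap:lowest thr}, particularly Lemmas \ref{lem:eigentransform} and \ref{lem:eigentransform2} and Corollary \ref{cor:microlocal-bounds}, specialized to the non-multiple two-cluster threshold setting with $\dim\bX_{a_0}=3$.

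For part (a), I would start with $v\in\vH^1_{-s}$ satisfying $(1+K^+)v=0$ and show the various kernel characterizations coincide. The equivalence $(1+K^+)v=0 \Leftrightarrow P^+v=0$ follows by applying $P_0$ to the integral equation and using $P_0G_0=1$ distributionally. For the coincidence of the $\pm$ kernels, the key step is the identity $\Im\langle v, P^+v\rangle_0=0$ (the pairing makes sense because $U^+v\in\vH^{-1}_{t}$ for some $t>s$ thanks to \eqref{ass5.3b} and Corollary \ref{cor:microlocal-bounds}). Since $P_0$ and $S^*I_0S$ are real symmetric, the imaginary part reduces to $-\langle I_0Sv,\Im R'(\lambda_0+\i 0)I_0Sv\rangle$, which vanishes and therefore, since $\Im R'(\lambda_0+\i 0)\geq 0$ in the Besov sense of \eqref{eq:bRES}, forces $R'(\lambda_0+\i 0)I_0Sv=R'(\lambda_0-\i 0)I_0Sv$ and hence $U^+v=U^-v$. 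The improved regularity $\vH^{-1}_{(2\rho_1-1/2)^-}$ then follows by applying Corollary \ref{cor:microlocal-bounds} to $F:=I_0Sv\in L^2_t$ for any $t<\rho_1$: once the two boundary values agree, $R'(\lambda_0\pm\i 0)I_0Sv\in L^2_{t-1}$, and composing with $S^*I_0\in\vL(L^2_{t-1},L^2_{\rho_1+t-1})$ gives the asserted weight.

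For part (b), I would use the integral representation $v=-G_0U^\pm v$, valid by an integration-by-parts argument analogous to Lemma \ref{lemHS}\ref{itemG22} / Remark \ref{remark:altFred}. Since $U^\pm v\in \vH^{-1}_{(2\rho_1-1/2)^-}$ by (a) and $\rho_1\geq 3/2$, the Taylor expansion $|x-y|^{-1}=|x|^{-1}+\vO(|y|\,|x|^{-2})$ for $|x|$ large gives
\[
v_k(x_{a_0})=-\frac{\langle 1,U^\pm_{k1}v_1+\cdots+U^\pm_{km}v_m\rangle_0}{4\pi|x_{a_0}|}+w_k(x_{a_0})
\]
with remainder $w_k\in L^2$; this is precisely the $n=3$, $q_a=0$ case already covered by Theorem \ref{thm:short-effect-potent2}\ref{item:T32}, where $\sigma^+_{a,1}=\{1/2\}$ contributes a single constant angular mode. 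Part (c) is then immediate from (b): the polynomial tail $|x_{a_0}|^{-1}$ does not lie in $L^2(\R^3)$, so $v\in\vH$ iff all the coefficients $\langle 1,(U^\pm v)_k\rangle_0$ vanish, which, since $k$ runs over a basis of $\C^m$, is equivalent to $\langle c,U^\pm v\rangle_0=0$ for every $c\in\C^m$.

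Part (d) is an instance of the eigentransform formalism, adapted from Lemma \ref{lem:eigentransform2} to the present setup: for $u\in\vE^\vG_{-s}$ one sets $v:=S^*u\in\vH^1_{-s}$, and the computation of Step \textit{I} of the proof of Lemma \ref{lem:eigentransform} (which goes through verbatim after replacing $\Pi_1+\Pi_2$ by the present $\Pi$) shows that $P^+v=0=P^-v$, so $v\in\vK$; the reconstruction $u=(S-R'(\lambda_0\pm\i 0)I_0S)v$ and its inverse relation with $T^*$ follow as in Steps \textit{III}--\textit{IV} there. Part (e) combines (c) and (d): if $u\in\vE^\vG_{-s}\cap L^2$ then $Sv=\Pi u\in L^2$, so $v\in\vH$ since $S$ is an isometry from $\vH$ onto $\ran\Pi$; conversely, if $v\in\vK\cap\vH$, then $Sv\in L^2$, and $R'(\lambda_0+\i 0)I_0Sv\in L^2$ by Corollary \ref{cor:microlocal-bounds} (the two boundary values coincide by (a)), so $u\in L^2$. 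The main obstacle is the weight accounting in part (a): the sharp index $\vH^{-1}_{(2\rho_1-1/2)^-}$ arises from the composition $S^*I_0\circ R'(\lambda_0\pm\i 0)\circ I_0S$ and hinges on the microlocal gain of Corollary \ref{cor:microlocal-bounds}, which itself is only available once the imaginary-part cancellation has been established, so both steps must be chained carefully.
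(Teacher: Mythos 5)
Your route is the one the paper itself takes -- its proof is only a pointer to Section \ref{subsec:hd2ep}, to Lemmas \ref{lem:eigentransform}--\ref{lem:eigentransform2} and to the $G_0$-parametrix of Remark \ref{remark:altFred} -- and your treatment of the kernel identifications in (a), the asymptotics in (b)--(c) and the eigentransform in (d)--(e) is in substance correct.

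There is, however, a genuine slip in the weight bookkeeping behind the assertion $U^+v=U^-v\in\vH^{-1}_{(2\rho_1-1/2)^-}$ in (a). First, your intermediate claim $I_0Sv\in L^2_t$ for all $t<\rho_1$ does not follow from \eqref{ass5.3b}: that hypothesis controls only the diagonal part $S^*I_0S$, while $I_0$ itself decays merely like $\langle x_{a_0}\rangle^{-\rho}$, so a priori $I_0Sv\in L^2_{(\rho-1/2)^-}$ only (in the physical case $\rho=1$, $\rho_1=3/2$ this gives exponent $1/2$, not $3/2$). What actually saves the argument -- and what you omit -- is that the resolvent enters as $R'=\breve R\,\Pi'$, and the cross factors $\Pi'I_0S$ and $S^*I_0\Pi'$ gain one extra order (decay $\rho+1$) because $S^*\Pi'=0$ kills the leading term of the Taylor expansion of $I_0^{(1)}$, the mechanism of \eqref{eq:90}. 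Second, even granting your mapping claims, your chain ($I_0Sv\in L^2_t$ with $t<\rho_1$, then $R'(\lambda_0\pm\i0)I_0Sv\in L^2_{t-1}$, then $S^*I_0:L^2_{t-1}\to L^2_{\rho_1+t-1}$) only lands in $\vH^{-1}_{(2\rho_1-1)^-}$, half an order short of the statement. The correct accounting uses $v\in\vH^1_{(-1/2)^-}$ (from the $G_0$-kernel, as noted just before the theorem), the cross-term gain $\rho+1$ twice, and the loss of one power from \eqref{eq:2bnd}/Corollary \ref{cor:microlocal-bounds} once the two boundary values have been identified; this puts the second term of $U^\pm v$ in weight $(2\rho+1/2)^-$, which dominates $(2\rho_1-1/2)^-$ as soon as $\rho\geq\rho_1-\tfrac12$ (automatic for the normalization $\rho_1=3/2$, $\rho\geq1$), while the first term $S^*I_0Sv$ sits at $(2\rho_1-1/2)^-$ directly from \eqref{ass5.3b}. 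Two smaller points: invoking Corollary \ref{cor:microlocal-bounds} to make the pairing $\w{v,U^+v}_0$ well defined is circular and unnecessary (the LAP bound $U^\pm\in\vL(\vH^1_{-s},\vH^{-1}_{s})$ for $s<3/2$ already suffices), and the identity $\Im\w{v,(P_0+S^*I_0S)v}_0=0$ is not a consequence of real symmetry alone for $v\notin\vH^1$: it requires the cutoff integration-by-parts of Step II of the proof of Lemma \ref{lem:eigentransform}, cf. \eqref{eq:Im0}.
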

\pf We may argue essentially as in Section \ref{subsec:hd2ep},  in
particular using Lemmas
\ref{lem:eigentransform} and  \ref{lem:eigentransform2}, although the
setup is different there (being a multiple  two-cluster case). The
parametrix $G_0$ is of course simpler to use than the more general construction
used in  Section  \ref{subsec:hd2ep}, but the resulting null space
$\ker_{\vH^1_{-s}}(1+K^+)$ is
 independent of the specific choice of
parametrix to define $K^+$, cf. Remark \ref{remark:altFred}. Accepting
 a detour to  Section  \ref{subsec:hd2ep} the identification of
functionals would follow  like in the proof of  Theorem
\ref{thm5.2}; alternatively \ref{item:2res} follows from the
asymptotics of  $G_0$, cf. Remark \ref{remark:altFred}. \ef

We are lead to define the following effective version of the notion of threshold resonances.

\begin{defn}
\begin{enumerate}[(1)]
\item \label{item:10e}$\lambda_0$ is a \emph{resonance of $H$}  if the
  equation $Hu = \lambda_0u$ admits a solution in
  $\vE^\vG_{(-1/2)^-}:=\cap_{s>1/2} \,\vE^\vG_{-s}$
  which is not in $H^1$. The multiplicity
  of the resonance $\lambda_0$ is defined as the dimension of the quotient
  space
  $\vE^\vG_{(-1/2)^-}/ \ker
  (H-\lambda_0)_{|H^{1}}$.
\item\label{item:9e} $0$ is  a \emph{resonance of $ P^\pm$} if the equation
  $ P^+ v = 0$ (and therefore $ P^- v = 0$,  and vice versa) has a solution $v \in \vH^1_{(-1/2)^-} \setminus \vH^1$. The multiplicity of the resonance zero is defined as
  the dimension of the quotient space
  $\ker P^+_{|\vH^1_{(-1/2)^-} } / \ker P^+_{|\vH^{1}}$.
\end{enumerate}
\end{defn}
Note  the consequence of  Theorem
\ref{prop5.5} that if zero is a resonance of $P^\pm$ then  its multiplicity
does not exceed $m$. As in Subsection \ref{sec:case-lambda_0=sub} we
can, based on those definitions, introduce Cases $0-3$, and
 Theorem
\ref{prop5.5} then shows that the threshold spectral properties of $H$ at $\lambda_0$ are determined by those of $P^\pm$ at zero.
In fact,  completely parallel  to the case of the lowest threshold
$\Sigma_2$,  $\lambda_0$ is  a regular point  (resp.,  an exceptional
point  of the first kind, the second kind, the third kind) of $H$ if
and only if zero is a regular point  (resp.,  an exceptional point  of the first kind, the second kind, the third kind) of $P^\pm$.
\\

Assume zero is a resonance of $P^\pm$. Then  the quotient space
$\vK/\vH:=\vK/(\vK\cap\vH)$ has dimension $\kappa = \dim \vK/\vH \geq
1$. We call $\kappa$ the multiplicity of the zero resonance of
$P^\pm$. Let $\mu = \dim \vK$.
For $\phi =(v_1, \cdots, v_m)\in \vK$, define $c(\phi) \in \C^m$ by
\begin{align}
  \label{eq:constants}
  c(\phi) = \f{1}{2 \sqrt{\pi}}\parb{\w{1, U^\pm_{11}v_1 +\cdots + U^\pm_{1m}v_m}_0, \cdots, \w{1, U^\pm_{m1}v_1 +\cdots + U^\pm_{mm}v_m}_0}.
\end{align}

Theorem \ref{prop5.5} \ref{item:2res}  shows that $\phi\in\vK$ is
a resonance state of $P^\pm$ if and only if $c(\phi) \neq 0$. Clearly
$c(\cdot)$ is a linear action on $\vK$. It follows
that a family of resonance  states $\{\psi_1, \dots, \psi_k\}$ of $P^\pm$  is linearly independent in  $\vK/\vH$ if and only if the family of vectors $\{c(\psi_1), \dots, c(\psi_k)\}$ is linearly independent in $\C^m$.
\\

\begin{prop}\label{prop5.7b}
  \begin{enumerate}[\rm(a)]
  \item \label{item:21s}
   Assume zero is a resonance of $P^\pm$ with multiplicity $\kappa$. Then
there exists a basis
$\{\psi_1, \cdots, \psi_\kappa \}$ of $\vK/ \vH$ such that
\begin{align}
  \label{normalization5.1.2}
(c(\psi_i), c(\psi_j)) =\delta_{ij},  \quad i, j =1, \cdots, \kappa,
\end{align}
where $(\cdot, \cdot)$ is the  scalar product of $\C^m$.\\

\item \label{item:22s} Assume  zero is a resonance but not an
  eigenvalue of $P^\pm$ (i.e. that $\kappa =\mu$). Then the operator $Q$ defined by
\be
Q =\sum_{j=1}^\kappa\w{\psi_j, \cdot} \psi_j : \vH^{-1}_{ s} \to \vH^1_{-s}, s> \tfrac 1 2,
\ee
is independent of the choice of basis $\{\psi_1, \dots, \psi_\kappa \}$ of $\vK$ verifying (\ref{normalization5.1.2}): If   $\{\psi'_1, \dots, \psi'_\kappa \}$ is another basis of $\vK$ verifying (\ref{normalization5.1.2}), then one has
\[
\sum_{j=1}^\kappa\w{\psi_j, \cdot} \psi_j  = \sum_{j=1}^\kappa\w{\psi'_j, \cdot} \psi'_j.
\]
\end{enumerate}
\end{prop}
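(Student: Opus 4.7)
The plan is to reduce both parts to elementary finite-dimensional linear algebra via the linear map $c : \vK \to \C^m$ defined in \eqref{eq:constants}, exploiting the identification of its kernel from Theorem \ref{prop5.5}.

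For part \ref{item:21s}, I would first observe that Theorem \ref{prop5.5} \ref{item:2resDim} is precisely the statement that $\ker c = \vK \cap \vH$: indeed, $c(\phi) = 0$ means $(c, U^\pm\phi)_0 = 0$ for every $c \in \C^m$, which by \ref{item:2resDim} is equivalent to $\phi \in \vH$. Consequently $c$ descends to an injective linear map $\bar c : \vK/\vH \to \C^m$ whose image $V := \bar c(\vK/\vH) \subset \C^m$ is a subspace of dimension exactly $\kappa$, equipped with the restriction of the standard Hermitian inner product. Starting from any basis $\{\tilde\psi_1,\ldots,\tilde\psi_\kappa\}$ of $\vK/\vH$, I would apply Gram--Schmidt orthonormalization to the family $\{c(\tilde\psi_j)\}$ inside $V$ to obtain an orthonormal basis $\{e_1,\ldots,e_\kappa\}$ of $V$, and then pull back through the invertible $\bar c$ by defining $\psi_i = \bar c^{-1}(e_i)$. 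By construction $(c(\psi_i), c(\psi_j)) = (e_i, e_j) = \delta_{ij}$.

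For part \ref{item:22s}, the hypothesis $\kappa = \mu$ is equivalent to $\vK \cap \vH = \{0\}$, so that $\vK/\vH$ canonically identifies with $\vK$ and $c$ is injective on $\vK$ itself, making the basis elements $\psi_j$ genuine (not equivalence class) vectors and hence making $Q$ a priori meaningful. Given two bases $\{\psi_j\}$ and $\{\psi'_j\}$ of $\vK$ both satisfying \eqref{normalization5.1.2}, write the change of basis as $\psi'_i = \sum_j a_{ij}\psi_j$, which by linearity gives $c(\psi'_i) = \sum_j a_{ij} c(\psi_j)$. The identity
\begin{equation*}
\delta_{ij} = (c(\psi'_i), c(\psi'_j)) = \sum_{k,l} a_{ik}\overline{a_{jl}}(c(\psi_k), c(\psi_l)) = \sum_k a_{ik}\overline{a_{jk}}
\end{equation*}
shows that the matrix $A = (a_{ij}) \in \C^{\kappa\times\kappa}$ satisfies $AA^* = I$, and hence $A^*A = I$ as well since $A$ is a finite square matrix. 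A direct substitution then gives
\begin{equation*}
\sum_i \w{\psi'_i, \cdot}\psi'_i = \sum_{j,k} \Bigl(\sum_i \overline{a_{ij}} a_{ik}\Bigr)\w{\psi_j, \cdot}\psi_k = \sum_{j,k}(A^*A)_{jk}\w{\psi_j, \cdot}\psi_k = \sum_j \w{\psi_j, \cdot}\psi_j,
\end{equation*}
which is the claimed basis-independence.

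The key conceptual step (rather than any technical obstacle) is recognizing that Theorem \ref{prop5.5} \ref{item:2resDim} identifies $\ker c$ with $\vK \cap \vH$, which both makes the Gram--Schmidt procedure work on the quotient in part \ref{item:21s} and explains why the hypothesis $\kappa = \mu$ is needed in part \ref{item:22s}: without it, a change of representative of a class $[\psi_j] \in \vK/\vH$ by an element of $\vH$ (which leaves $c(\psi_j)$ and hence the normalization untouched) would in general alter $\w{\psi_j,\cdot}\psi_j$, so $Q$ would depend not only on the normalized basis of $\vK/\vH$ but on the choice of lift to $\vK$.
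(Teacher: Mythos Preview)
Your proof is correct and follows essentially the same route as the paper's. The only cosmetic difference is in part \ref{item:21s}: you orthonormalise the vectors $c(\tilde\psi_j)$ by Gram--Schmidt, whereas the paper multiplies the initial basis by $M_0 = (C(\Phi)^*C(\Phi))^{-1/2}$; both are standard ways to produce an orthonormal family and lead to the same conclusion. Part \ref{item:22s} is argued identically in both (change-of-basis matrix is unitary, then direct substitution), and your closing remark on why the hypothesis $\kappa=\mu$ is needed is a useful addition not spelled out in the paper.
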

\begin{proof}
\subStep {\ref{item:21s}} Let $\Phi=\{\phi_1, \dots, \phi_\kappa\}$ be a basis of $\vK/\vH$.
Then the rank of the matrix $C(\Phi) =( c(\phi_1), \dots,
c(\phi_\kappa)) \in \vM_{m\times \kappa}$ is equal to $\kappa$, where
$c(\phi_j)$ is considered as the $j$'th column of $C(\Phi)$.
Consequently $C(\Phi)^*C(\Phi)$ is positive definite.
Let  $M_0\in \vM_{\kappa\times \kappa}(\C)$ be the  positive definite
Hermitian matrix obeying  $M_0^2 = (C(\Phi)^*C(\Phi))^{-1}$.
Set $M_0 =(m_{ij})_{1\le i, j \le \kappa}$ and define
\be
\psi_i =\sum_{k=1}^\kappa m_{ki} \phi_k, \quad i =1, \cdots, \kappa.
\ee
 Then  $\{\psi_1, \dots, \psi_\kappa\}$ is also a basis of  $\vK/\vH$.

 Let $C(\Psi)$ be the matrix defined in the same way as $C(\Phi)$ with $\Phi =\{\phi_1, \dots, \phi_\kappa\}$ replaced by
 $\Psi=\{\psi_1, \dots, \psi_\kappa\}$.  Then
 \[
 c(\psi_j) = \sum_{k=1}^\kappa m_{kj} c(\phi_k);\quad j=1, \cdots, \kappa.
 \]
  This shows $C(\Psi) = C(\Phi) M_0$ and
\[
C(\Psi)^*C(\Psi)= M_0 C(\Phi)^* C(\Phi) M_0 =1\text{ in }\vM_{\kappa \times \kappa}(\C).
\]
  It follows that  $\Psi=\{\psi_1, \dots, \psi_\kappa\}$ is a basis of $\vK/ \vH $ verifying the normalization condition (\ref{normalization5.1.2}).
\\

\subStep {\ref{item:22s}}
 Let $t_{ij} \in \C$, $1\le i, j \le \kappa$, be  such that
\[
\psi'_i = \sum_{j=1}^\kappa t_{ij} \psi_j, \quad  i =1, \cdots, \kappa.
\]
Then $c(\psi'_i) =  \sum_{j=1}^\kappa t_{ij} c(\psi_j)$ in $\C^m$.
The condition $(c(\psi'_i), c(\psi'_j)) =\delta_{ij}$ becomes
\begin{align*}
(c(\psi'_i), c(\psi'_j)) &= \sum_{l,m =1}^\kappa \overline{t_{il}} t_{jm} (c(\psi_l), c(\psi_m))  \\
 & =   \sum_{l =1}^\kappa \overline{t_{il}} t_{jl} = \delta_{ij}
\end{align*}
for $i, j =1, \dots, \kappa$. This means  that if both $\{\psi_1, \dots, \psi_\kappa\}$ and $\{\psi'_1, \dots, \psi'_\kappa\}$ satisfy (\ref{normalization5.1.2}), the matrix $T =(t_{ij})_{1\le i, j\le \kappa}$ is unitary.  We obtain
\begin{align*}
 \sum_{i=1}^\kappa\w{\psi'_i, \cdot} \psi'_i
=     \sum_{l,m =1}^\kappa \parbb{\sum_{i =1}^\kappa \overline{t_{il}}
  t_{im}} \w{\psi_l, \cdot} \psi_m
=     \sum_{l,m =1}^\kappa \delta_{lm}\w{\psi_l, \cdot} \psi_m
=      \sum_{l=1}^\kappa\w{\psi_l, \cdot} \psi_l.
\end{align*}
  \end{proof}

The normalization condition of the resonance
states (\ref{normalization5.1.2}) will be  used to compute the leading term of the
resolvent $R(z) =(H-z)^{-1}$ for $z$ near $\lambda_0$ and $\Im z \neq
0$ in the case $\lambda_0$ is a resonance but not an eigenvalue of
$H$. The normalization condition (\ref{normalization1})  can be regarded as a special case of (\ref{normalization5.1.2}),
and Proposition \ref{prop5.7b} \ref{item:22s} is also valid for the
case $\lambda_0 =\Sigma_2$ is a double two-cluster threshold.

A main result on the resolvent expansion in Subsection
\ref{sec:resolv-asympt-nearHIGHER} is for the case where $\lambda_0$ is
not an eigenvalue of $H$, see Theorem \ref{thm5.20}, Case 0 and Case 1.
However the case where $\lambda_0$ is an eigenvalue (and possibly a resonance
as well) is also  studied, cf. Subsection
\ref{sec:case-lambda_0=sub}. For the physical models we then need an
additional  weak decay property of the corresponding $L^2$-eigenfunctions, see
Theorem \ref{thm:rhoThree2}. On the other hand with this additional
assumption   we obtain the expansion of the resolvent up to
second order.  \medskip

\subsection{Resolvent asymptotics near the lowest
  threshold}\label{sec:resolv-asympt-nearLOWEST}


 In this subsection, we keep the conditions and the notation of
 Subsection \ref{sec:case-lambda_0=sub}, in particular, $\lambda_0 = \Sigma_2$ is a double two-cluster threshold.
  We want to study the asymptotics of the resolvent $R(z)
 =(H-z)^{-1}$ near  $ \Sigma_2$, using
 the formula (\ref{rep5.9}) for $R(z)$.
 We let $P(z) = - E_{\vH}(\lambda_0 + z)$ and decompose $P(z)$ as
\begin{subequations}
  \begin{equation} \label{Ulambda}
P(z) = P_0 +
 U(z) -z
\mand U(z) = U + z U_1 + z ^2 U_2(z)
\end{equation}
with $P_0$ and $U$ defined by  \eqref{P0} and \eqref{U}, respectively, and
\begin{align} \label{U1}
U_1&= -\left( \begin{array}{cc}
0 & \inp{\varphi_1, \varphi_2 \otimes \cdot}_1, \\
\w{\varphi_2, \varphi_1 \otimes \cdot}_2 & 0
 \end{array} \right) - K_1(\lambda_0), \\
  U_2(z) &= - \sum_{j=2}^\infty z^{j-2}  K_j(\lambda_0),\\\label{kj}
K_j(\lambda_0) &= (\w{\varphi_l, I_l(R'(\lambda_0))^{j+1} I_m \varphi_m
  \cdot})_{ 1\le l, m\le2}; \quad j\in\N.
\end{align}
 \end{subequations}
Since $K(\lambda_0 + z)$ is holomorphic in $z$ near zero and continuous from $\vH^{1}_{-1-\rho}$ to $\vH^{-1}_{1+\rho}$  with $\rho>0$ given by \eqref{eq:1}, the
above power series converges in the space  $\vL(1,
-1-\rho; -1, 1+\rho)$ for example. Note that
\be
P(z) = P-  z (1-U_1) +  z^2 U_2(z);\quad P=P_0 + U.
\ee
Differently from one-body Schr\"odinger operators, cf.  \cite{JK}, $P(z)$ is an operator
pencil  depending on the  spectral parameter in a non-linear way.  The
following result is important for the existence of  an asymptotic
expansion  of $P(z)^{-1}$ as $z \to 0$ in the case $0$ is an
eigenvalue of $P$.\\

\begin{lemma}\label{lem5.5}
 $1-U_1$ is positive definite  on $\vH$.
\end{lemma}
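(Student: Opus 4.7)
The plan is to identify $1-U_1$ as the sum $S^{*}S + K_{1}(\lambda_0)$ and argue that the first summand is strictly positive while the second is non-negative. Unpacking the definition \eqref{U1} and recalling from Section \ref{sec:Reduction near a multiple two-cluster threshold} that
\[
S^{*}S = 1 + \begin{pmatrix} 0 & s_{12}\\ s_{21} & 0\end{pmatrix}
\]
with $s_{ij}f_j=\langle\varphi_i,\varphi_j\otimes f_j\rangle_i$, one reads off immediately that
\[
1-U_1 = S^{*}S + K_1(\lambda_0)\quad\text{on }\vH.
\]

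For the first summand I would invoke Proposition \ref{prop2.2}: under condition \eqref{ass2}, $S:\vH\to\vF$ is a bi-continuous isomorphism, so there exists $c>0$ with $\|Sf\|^{2}\ge c^{2}\|f\|^{2}$ for all $f\in\vH$, i.e.\ $\langle f,S^{*}Sf\rangle \ge c^{2}\|f\|^{2}$. This gives $S^{*}S \ge c^{2}>0$ as a quadratic form on $\vH$.

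For the second summand, I would spell out the quadratic form of $K_1(\lambda_0)$ in terms of the auxiliary vector $g_f := \sum_{l=1}^{2}I_l(\varphi_l\otimes f_l)\in L^{2}(\bX)$. Using $R'(\lambda_0)=(H'-\lambda_0)^{-1}\Pi'$ and the fact that $\Pi'$ reduces $H'$, one has $R'(\lambda_0)^{2} = (H'-\lambda_0)^{-2}\Pi'$, hence
\[
\langle f,K_1(\lambda_0)f\rangle = \sum_{l,m}\langle \varphi_l\otimes f_l, I_l R'(\lambda_0)^{2} I_m\varphi_m\otimes f_m\rangle = \bigl\langle \Pi'g_f, (H'-\lambda_0)^{-2}\Pi'g_f\bigr\rangle.
\]
Since $H'-\lambda_0$ is self-adjoint on $\Pi'\vG$, $(H'-\lambda_0)^{-2} = \bigl((H'-\lambda_0)^{-1}\bigr)^{*}(H'-\lambda_0)^{-1}\ge 0$ (this is where the fact that $\lambda_0\notin\sigma(H')$, guaranteed by Lemma \ref{lemma2.1} under our assumptions, enters to make $(H'-\lambda_0)^{-1}$ a bona fide bounded operator on $\Pi'\vG$). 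Therefore $\langle f,K_1(\lambda_0)f\rangle\ge 0$.

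Combining these two bounds yields $\langle f,(1-U_1)f\rangle \ge c^{2}\|f\|^{2}$, which is the desired positive definiteness. The only genuinely non-mechanical step is the recognition of the quadratic form of $K_1(\lambda_0)$ as $\langle\Pi'g_f,(H'-\lambda_0)^{-2}\Pi'g_f\rangle$; once this rewriting is in place, positivity is automatic from self-adjointness of $H'$, and the coercive lower bound is supplied by Proposition \ref{prop2.2}.
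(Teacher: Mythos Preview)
Your proof is correct and essentially coincides with the paper's own argument: both rest on the decomposition $1-U_1 = S^*S + K_1(\lambda_0)$ and the non-negativity of $K_1(\lambda_0)$ via $\langle f,K_1(\lambda_0)f\rangle = \|R'(\lambda_0)F\|^2$ with $F=\sum_j I_j\varphi_j\otimes f_j$. The only cosmetic difference is that the paper establishes $1-U_1\ge 0$ and then argues separately that $\ker(1-U_1)=\{0\}$ from the condition $\vF_1\cap\vF_2=\{0\}$, whereas you invoke Proposition~\ref{prop2.2} up front to get the coercive bound $S^*S\ge c^2>0$ directly; this yields the slightly stronger conclusion $1-U_1\ge c^2$ (bounded invertibility rather than mere injectivity), but the underlying input---the bi-continuity of $S$ under \eqref{ass2}---is the same.
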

\begin{proof} Note firstly that $K_1(\lambda_0) \ge 0$. In fact, for $f=(f_1, f_2)\in \vH$, since $R'(\lambda_0)$ is a bounded self-adjoint operator,
\[
\w{f, K_1(\lambda_0)f} = \sum_{i,j}\w{\varphi_i \otimes  f_i, I_i(R'(\lambda_0))^2 I_j \varphi_j \otimes f_j} = \|R'(\lambda_0) F\|^2 \ge 0,
\]
where $F = \sum_{j=1}^2 I_j \varphi_j \otimes  f_j$.   One can check that
\[
 1 + \left( \begin{array}{cc}
0 & \inp{\varphi_1, \varphi_2 \otimes \cdot}_1 \\
\w{\varphi_2, \varphi_1 \otimes \cdot}_2 & 0
 \end{array} \right) \ge 0.
\]
In fact, cf. the proof of Proposition \ref{prop2.2}, its expectation
value on $f$ is given by
\[
\|f_1\|_1^2 + \|f_2\|_2^2 + 2 \Re \w{\varphi_1\otimes f_1, \varphi_2\otimes f_2} =\|\varphi_1 \otimes f_1 + \varphi_2 \otimes f_2\|^2.
\] It follows that
\begin{equation}\label{eq:formU}
  \w{(1-U_1)f, f} = \|\varphi_1 \otimes f_1 + \varphi_2 \otimes f_2\|^2 +  \|R'(\lambda_0) F\|^2 \ge 0.
\end{equation}
Therefore,  $1-U_1$ is non-negative  on $\vH$ and $f \in \ker (1-U_1)$ if and only if
\[
 \varphi_1 \otimes f_1 + \varphi_2 \otimes f_2 =0  \mbox{ and } R'(\lambda_0)F=0.
\]
 In particular, since we assume \eqref{ass5.1}, $\ker (1-U_1)$ is reduced to $\{0\}$ and $1-U_1$ is positive definite on $\vH$.
\end{proof}

Set
\be
r_0(z) = (P_0 -z)^{-1}; \quad z\not\in [0,\infty).
\ee
To obtain the asymptotic expansion of $P(z)^{-1}$ for $z\not\in
[0,\infty)$ and $z$ near zero, we first use the resolvent equation
\be
P(z)^{-1} = W(z)^{-1}r_0(z)\text{ where } W(z):= 1 + r_0(z) U(z).
\ee
 Next we shall  apply the Grushin method to study the factor
 $W(z)^{-1}$.

Let $N \in \N$ and $s > N + \f 1 2$. Since $\dim \bX_j= 3$,  the free resolvent $r_0(z)$ can be expanded in
$\vL(-1,s; 1,-s)$  as
\begin{equation}\label{eq:resEXp}
r_0(z) = G_0 + \sqrt{z} G_1 + \cdots +z^{\f N 2} G_N  + \vO(|z|^{\f N 2 +\ep}),
\end{equation}
for some $\ep >0$ depending on $N$ and $s$, where $G_i =
\textrm{diag}(G_{i,1}, G_{i,2})$ for $i\leq N$ are  diagonal matrices  which can be
calculated explicitly, cf. \cite[Lemma 2.3]{JK}. In particular the integral kernel of $G_{i,j}$ for $i=0,1$ and $j=1,2$ are given by
\bea
G_{0,j}(x_j, y_j) = \f{1}{4\pi|x_j -y_j|}, \quad  G_{1,j}(x_j, y_j) =
\f{\i}{4\pi};\quad j=1,2.
\eea

We also recall, cf. \cite[Lemmas 2.1--2.3]{JK}, that $r_0(z)\in\vL(-1,s';
1,-s)$ if $s,s'>1/2$ and  $s+s'>2$
 with a H\"older continuous dependence in $z$ at $z=0$. Hence
\begin{align}\label{eq:basicHo}
  r_0(z)-G_0 =\vO(|z|^{\ep}) \in\vL(-1,s'; 1,-s)\text{ for
  }s,s'>\tfrac 12\mand s+s'>2.
\end{align}

\begin{subequations}
  We consider now expansions of the operator $W(z)$ using the
  expansions \eqref{Ulambda}, \eqref{eq:resEXp} and
  \eqref{eq:basicHo}.
\begin{lemma}\label{lemma:W-asympt-near} Under the conditions
  \eqref{ass5.1}--\eqref{ass5.4}  the following expansions in $z$ (with  $z\not\in
[0,\infty)$) hold in terms of the quantities
  \begin{equation*}
W_0  =  1+ G_0 U, \quad  W_1 =  G_1 U, \quad W_2  =G_2 U +G_0U_1,
\end{equation*}   and  for some (small) positive number
$\epsilon$ (depending on given parameters $s$ and $s'$).
\begin{enumerate}[\rm(a)]
\item  For  $s>\f{1}{2}$, $s\geq s'$ and $\rho_0
-s' >\max\set{\f{1}{2}, 2-s}$
\begin{equation} \label{Wz00}
W(z) = W_0 +  \vO(|z|^{\epsilon})\in \vL(1, -s'; 1, -s).
\end{equation}
\item  For  $s>\f{3}{2}$, $s \geq s'$  and $\rho_0
-s' >\f{3}{2}$
\begin{equation} \label{Wz1}
W(z) = W_0 + \sqrt{z} W_1 +  \vO(|z|^{\f 1 2 +\epsilon})\in \vL(1, -s'; 1, -s).
\end{equation}

\item For  $s>\f 5 2$, $s\geq s'$  and $\rho_0 -s'
>\f 5 2$
\begin{equation} \label{Wz2}
W(z) =  W_0 + \sqrt{z} W_1 + z W_2 +  \vO(|z|^{1+\epsilon})\in \vL(1, -s'; 1, -s).
\end{equation}
\end{enumerate}
\end{lemma}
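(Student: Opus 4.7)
The plan is to substitute the expansion $U(z) = U + zU_1 + z^2 U_2(z)$ into $W(z) = 1 + r_0(z)U(z)$, apply the expansion \eqref{eq:resEXp} (or just the H\"older bound \eqref{eq:basicHo}) to each resulting factor $r_0(z)$, and then collect terms by powers of $\sqrt{z}$. The main bookkeeping task will be to verify, for each term individually, that the composition lands in $\vL(1,-s';1,-s)$ with the claimed order of vanishing. For this I would use the assumption \eqref{eq:Uassp} saying that $U\in\vL(\vH^1_{-t},\vH^{-1}_{\rho_0-t})$ for $0\le t\le \rho_0$, together with the fact that $U_1,U_2(z)\in\vL(\vH^1_{-t},\vH^{-1}_{2+2\rho-t})\subset \vL(\vH^1_{-t},\vH^{-1}_{\rho_0-t})$ (uniformly in $z$ near $0$) which is where \eqref{ass5.4} enters.

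First I would prove part (a). Given $s>\tfrac12$, $s\geq s'$ and $\rho_0-s'>\max\{\tfrac12,2-s\}$, writing $W(z)-W_0=(r_0(z)-G_0)U+zr_0(z)U_1+z^2r_0(z)U_2(z)$, the first term maps $\vH^1_{-s'}\to\vH^{-1}_{\rho_0-s'}\to\vH^1_{-s}$ with norm $\vO(|z|^{\epsilon})$ by \eqref{eq:basicHo}, because the pair $(\rho_0-s',s)$ verifies the required inequalities $s,\rho_0-s'>\tfrac12$ and $s+(\rho_0-s')>2$. The remaining two terms carry an explicit prefactor $z$ or $z^2$ and each is bounded uniformly in $z$ near $0$ in $\vL(1,-s';1,-s)$ (here it is convenient to split $r_0(z)=G_0+(r_0(z)-G_0)$ and control both pieces with the same H\"older bound), so both are absorbed into the $\vO(|z|^\epsilon)$ remainder.

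For part (b) I would refine the expansion of $(r_0(z)-G_0)U$ using \eqref{eq:resEXp} with $N=1$, giving $(r_0(z)-G_0)U=\sqrt{z}\,G_1U+\vO(|z|^{\tfrac12+\epsilon})$ in the desired space, since the stricter weight condition $s>\tfrac32$, $\rho_0-s'>\tfrac32$ is exactly what is needed in \cite[Lemma~2.3]{JK} (cited via \eqref{eq:resEXp}) for the first-order remainder. The $zr_0(z)U_1$ term is $\vO(|z|)\subset \vO(|z|^{1/2+\epsilon})$ and the $z^2$ term is even smaller, so both stay in the remainder. For part (c) I would go one order further in \eqref{eq:resEXp} (requiring $s>\tfrac52$, $\rho_0-s'>\tfrac52$) obtaining $r_0(z)U=G_0U+\sqrt zG_1U+zG_2U+\vO(|z|^{1+\epsilon})$, and also expand $zr_0(z)U_1=zG_0U_1+\vO(|z|^{1+\epsilon})$ using \eqref{eq:basicHo} applied with the pair $(\rho_0-s',s)$ (where now $\rho_0-s'>\tfrac52>\tfrac12$ and $s+(\rho_0-s')>2$ hold comfortably). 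Collecting the $z$-coefficients yields $W_2=G_2U+G_0U_1$ as claimed, with $z^2r_0(z)U_2(z)=\vO(|z|^2)$ in the remainder.

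The main obstacle is purely a weight-chase: one must confirm, for each prescribed $(s,s')$, that at every intermediate step both the weight-out of $U$ (respectively $U_1$) and the weight-in demanded by the various $G_k$ mapping properties lie in the admissible range of \cite[Lemmas 2.1--2.3]{JK}. The hypothesis $\rho_0-s'>\max\{\tfrac12,2-s\}$ in (a) and its strengthenings in (b), (c) are tuned precisely to this, and \eqref{ass5.4} is what lets $U_1$ and $U_2(z)$ be handled identically to $U$ without losing weight.
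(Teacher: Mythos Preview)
Your proof is correct and follows exactly the approach the paper uses: substitute the expansion \eqref{Ulambda} of $U(z)$, then apply \eqref{eq:basicHo} for part (a) and \eqref{eq:resEXp} with $N=1,2$ for parts (b), (c), checking that the weight hypotheses on $(s,s')$ match what these free-resolvent bounds require. The paper's own proof is a one-line sketch to this effect; your write-up simply fills in the weight-chase details, which are accurate.
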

\end{subequations}

 The assertion  \ref{item:22} follows from \eqref{eq:basicHo}, while \ref{item:20} and \ref{item:21} follow from
the bounds
 \eqref{eq:resEXp} with $N=1$ and $N=2$, respectively. In all cases we
use \eqref{Ulambda} as well.

Higher order asymptotic expansions can also be established under
stronger decay assumptions on the effective potentials.
\\

 From the identity   $\w{-U \cdot, \cdot} = \w{P_0 \cdot, \cdot}$ on  $\vK=\ker ( 1+ G_0U) $ and an
integration by parts (cf. \eqref{eq:compCom}) it follows that $\w{-U\cdot, \cdot}$  is a
positive  quadratic form on  $\vK$.  Let
\[
\mu = \dim \vK,
\]
and let $\{\phi_1, \dots, \phi_\mu\}$  be a basis of $\vK$ orthonormalized with respect to $\w{-U \cdot, \cdot}$:
\[
\w{-U \phi_i, \phi_j} =\delta_{ij}.
\]
 This normalization is a
technical tool from \cite{Wa2}
which  in general does not conform with (\ref{normalization1}) and
(\ref{normalization2}).
We make the convention that if zero is a resonance of $P$ with
multiplicity $\kappa$,  $\phi_j$ for  $1\le j\le \kappa$ are resonance
states and $\phi_j$  for  $\kappa< j\le \mu$
(for  $\kappa<\mu$ only of course) are eigenstates of $P$.

In order to obtain the expansion of $W(z)^{-1}$, consider the Grushin problem
\[
\vW(z) = \left(
                \begin{array}{cc}
                  W(z) & \vS\\[.1in]
                   \vS^* & 0
                \end{array}
                \right) \;  :\; \vH^1_{-s}\times \C^\mu \to \vH^1_{-s}\times \C^\mu,
\]
where $s\in(\f{1}{2},\rho_0-\f{1}{2})$, $\vS: \C^\mu \to \vH^1_{-s}$ is  defined by
\[
\vS c
= \sum_{j=1}^\mu c_j\phi_j,\;\;  c = ( c_1, \dots,  c_\mu) \in \C^\mu,
\]
and
\[
\vS^*f=( \w{-U\phi_1,f} \dots, \w{-U\phi_\mu, f}),\quad  f\in
\vH^1_{-s}.
\]
 Note that $\vS^*$ can be regarded as the formal adjoint of $\vS$ with
 respect to the form $\w{-U\cdot, \cdot}$ on $\vH^1_{-s}$ (it is not
 the Hilbert space adjoint). Define for $s$ as above the  map $Q:  \vH^1_{-s}\to \vH^1_{-s}$ by
\[
Qf=\sum_{j=1}^\mu \w{-U \phi_j,f}\phi_j.
\]
Then,
\[
\vS\vS^* = Q  \mbox{ on $ \vH^1_{-s} $ and } \vS^*\vS = 1  \mbox{ on } \C^\mu,
\]
in particular $Q$ is  a projection in  $ \vH^1_{-s} $.

One can then prove  (see \cite{JK,Wa2}) that
\begin{equation}\label{eq:directSum}
\vH^1_{-s}=\vK\oplus  \ran (1+G_0U),
\end{equation}
and that $Q$ projects   onto $\vK$
relatively to the direct sum decomposition \eqref{eq:directSum}, in particular   $\ker Q = \ran(1+G_0U)$.
Then of course  $Q'=1-Q$ is  the   projection from $\vH^1_{-s}$ onto $
\ran(1+G_0U)$ relatively to \eqref{eq:directSum}. It follows
readily that
$Q'(1+ G_0U)Q'$ is bijective on $ \ran(1+G_0U)$. Since $ \ran(1+G_0U)$
is closed the operator
 \begin{equation}
 D_0 := (Q'(1+ G_0U)Q')^{-1}Q'
\end{equation}
exists and is continuous on $\vH^1_{-s}$. By an argument
of perturbation based on  \eqref{Wz00} it follows that for $|z|$ small
enough   $Q'W(z)Q'$ is invertible on $ \ran (1+G_0U)$ with continuous inverse. Let
 \[
  D(z) = (Q'W(z)Q')^{-1} Q'.
\]
 The following expansions hold in $\vL(1,-s; 1,-s)$ under the specified
         conditions and with
\begin{align*}
D_0 = (Q'W_0Q')^{-1}Q', \quad D_1  = -D_0W_1D_0, \quad  D_2 =  -
                                    D_0W_2D_0
  {+D_0W_1D_0W_1D_0}.
\end{align*}
 \begin{subequations}
 \begin{align}
\label{term1}
D(z)   &=    D_0 +  \vO(|z|^{\epsilon}),  \mbox{ if }  \tfrac{1}{2} < s<\rho_0 -\tfrac{1}{2}. \\
D(z)   &=    D_0 + \sqrt{z} D_1 +  \vO(|z|^{\f{1}{2} +\epsilon}),  \mbox{ if } \tfrac{3}{2}<s<\rho_0- \tfrac{3}{2}.  \label{term2}\\
D(z)   &=    D_0 + \sqrt{z} D_1 +  z D_2 +  \vO(|z|^{1+\epsilon}),  \mbox{ if } \tfrac 5 2<s<\rho_0-\tfrac{5}{2}.
\label{term3}
\end{align}
 \end{subequations}

Note that if $\vK =\{0\}$, we have $Q'=1$ and $ D(z) = W(z)^{-1}$ and
the asymptotic expansion of $W(z)^{-1}$ is given by the one  of $D(z)$.
In the following we treat  the case $\vK \neq \{0\}$. The assertions
\eqref{term2} and \eqref{term3} are  not needed for leading term
expansions which is our main interest.

Using the operator $D(z)$, we can compute the inverse of $\vW(z)$ as
\begin{equation} \label{vWz}
\vW(z)^{-1}=\left(
                           \begin{array}{cc}
              \vE(z) & \vE_+(z)
                            \\[.1in]
             \vE_-(z)  & \vE_{-+}(z)
                           \end{array}
                          \right),
\end{equation}
where
\begin{gather*}
\vE(z)  =  D(z),\quad  \vE_+(z) = \vS - D(z)W(z)\vS, \\
\vE_-(z )  =  \vS^* -\vS^*W(z)D(z),\quad   \vE_{-+}(z) =  \vS^*(-W(z) +  W(z)D(z) W(z))\vS.
\end{gather*}
One obtains from (\ref{vWz}) a representation of  the inverse of $W(z)$,
\begin{equation} \label{repW}
W(z)^{-1} = \vE(z) - \vE_+(z) \vE_{-+}(z)^{-1} \vE_-(z).
\end{equation}
 $\vE_{\pm}(z)$ and $\vE_{-+}(z)$ can be expanded similarly as $D(z)$,
 that is
\begin{align*}
\vE_\pm(z)& = \vE_{\pm,0} + \sqrt{z} \vE_{\pm,1}  + z \vE_{\pm,2} + \dots \\
\vE_{-+}(z)& = \vE_{-+,0} + \sqrt{z} \vE_{-+,1}   + z \vE_{-+,2} + \cdots
\end{align*}
More precisely one has the following result.

\begin{lemma}\label{lem5.6}  Assume \eqref{ass5.1}--\eqref{ass5.4}.
\begin{enumerate}[\rm (a)]
\item\label{item:17}  One has in $\vL(\C^\mu, \vH^1_{-s})$ (for the $ + $ case) or $\vL( \vH^1_{-s}, \C^\mu)$ (for the $ - $ case):
  \begin{subequations}
  \begin{align}
\vE_{\pm}(z)   &=    \vE_{\pm,0}  +  \vO(|z|^\ep),  \mbox{ if }
                 s>\tfrac{1}{2}.  \label{term2pm}\\
\vE_{\pm}(z)   &=    \vE_{\pm,0} + \sqrt{z} \vE_{\pm, 1 } +  \vO(|{z}|^{\f 1 2 +\ep}),  \mbox{ if } s>\tfrac{3}{2}.  \label{term2pmbis}\\
\vE_{\pm}(z)   &=    \vE_{\pm,0} + \sqrt{z} \vE_{\pm, 1 } + z \vE_{\pm, 2 } +  \vO(|z|^{1+\ep}), \mbox{ if } s>\tfrac 5 2\mand \rho_0>4.
\label{term3pm}
\end{align}
  \end{subequations}
 Here
\begin{align*}
\vE_{+, 0} = \vS, \quad & \vE_{-,0} = \vS^*, \\
\vE_{+,1}=  - D_0W_1 \vS, \quad & \vE_{-,1}=  - \vS^*W_1D_0,  \\
\vE_{+,2} = - (D_0 W_2+ D_1W_1)\vS,  \quad &  \vE_{-,2} =  -\vS^*(W_2D_0 + W_1D_1).
\end{align*}

\item \label{item:19} One has in $\vM_{\mu\times \mu}(\C)$:
  \begin{subequations}
   \begin{align}
\label{vEpm0}
\vE_{-+}(z)& =  \sqrt{z}  \vE_{-+, 1} + \vO(|{z}|^{\f 1 2 +\ep}).\\
 \vE_{-+}(z)& =  \sqrt{z}  \vE_{-+, 1} + z \vE_{-+, 2} +  \vO(|z|^{1+\ep}), \mbox{ if } \rho_0 >3. \label{vEpm1}
\end{align}
  \end{subequations}

Here
\begin{subequations}
\begin{align*}
\vE_{-+, 1} & =  - \vS^*W_1\vS,
\\
\vE_{-+, 2}  &= -\vS^*(W_2  - W_1D_0 W_1 )\vS.
\end{align*}
 \end{subequations}
\end{enumerate}
\end{lemma}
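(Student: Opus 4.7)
The strategy is to substitute the expansions of $W(z)$ and $D(z)$ from Lemma~\ref{lemma:W-asympt-near} and \eqref{term1}--\eqref{term3} into the defining formulas
\[
\vE_+(z) = \vS - D(z)W(z)\vS, \qquad \vE_-(z) = \vS^* - \vS^*W(z)D(z),
\]
\[
\vE_{-+}(z) = \vS^*(-W(z) + W(z)D(z)W(z))\vS,
\]
and to collect like powers of $\sqrt z$. The expansions then reduce to a few algebraic identities, and the claimed formulas for $\vE_{\pm,k}$ and $\vE_{-+,k}$ will come out after cancellations.

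The four identities I need are $W_0\vS = 0$, $\vS^*W_0 = 0$, $D_0W_0 = W_0D_0 = Q'$, and $Q'\vS = \vS^*Q' = 0$. The first is immediate from $\ran\vS = \vK = \ker W_0$. The third and fourth follow from the Grushin setup: since $Q$ projects onto $\vK$ along $\ran W_0$, one has $\ran Q' = \ran W_0$, and hence $W_0 = Q'W_0 = W_0Q' = Q'W_0Q'$; combining this with the definition of $D_0$ yields $D_0W_0 = W_0D_0 = Q'$. The identity $\vS^*Q = \vS^*$, which gives $\vS^*Q' = 0$, follows from the orthonormalization $\w{-U\phi_k, \phi_j} = \delta_{kj}$, while $Q\vS = \vS$ is immediate from $\ran\vS\subset\ran Q$. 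The only identity requiring genuine work is $\vS^*W_0 = 0$: here I exploit the self-adjointness of $U$ (which follows from Proposition~\ref{prop:reduc_form_abstract} applied at real $\lambda_0\notin\sigma(H')$) together with the self-adjointness of $G_0$ and the defining relation $\phi_j = -G_0U\phi_j$, to compute
\[
\w{-U\phi_j, W_0 f} = \w{-U\phi_j, f} + \w{-G_0U\phi_j, Uf} = \w{-U\phi_j, f} + \w{U\phi_j, f} = 0.
\]

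With these identities in hand the expansions drop out mechanically. For $\vE_+$ the zeroth-order coefficient of $D(z)W(z)\vS$ is $D_0W_0\vS = Q'\vS = 0$, while at order $\sqrt z$ the term $D_1W_0\vS = -D_0W_1D_0W_0\vS = -D_0W_1Q'\vS = 0$, leaving $\vE_{+,0} = \vS$ and $\vE_{+,1} = -D_0W_1\vS$; an analogous computation using $D_2 = -D_0W_2D_0 + D_0W_1D_0W_1D_0$ together with $D_2W_0\vS = 0$ gives $\vE_{+,2} = -(D_0W_2+D_1W_1)\vS$. The case of $\vE_-$ is symmetric via $\vS^*W_0 = 0$ and $\vS^*Q' = 0$. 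For $\vE_{-+}$ the formal constant term $-\vS^*W_0\vS + \vS^*W_0D_0W_0\vS$ vanishes, the $\sqrt z$--contribution from $\vS^*W(z)D(z)W(z)\vS$ reduces to $\vS^*W_0D_0W_1\vS = \vS^*Q'W_1\vS = 0$, so only $-\vS^*W_1\vS$ survives at that order, and at order $z$ the same cancellations leave precisely $\vS^*(-W_2 + W_1D_0W_1)\vS$.

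The main obstacle is not the algebra but verifying that each product $W_iD_jW_k\vS$ and its pairing against $\vS^*$ is defined in the appropriate weighted Sobolev spaces, so that the error terms $\vO(|z|^{\cdot+\epsilon})$ indeed lie in $\vL(\C^\mu,\vH^1_{-s})$, $\vL(\vH^1_{-s},\C^\mu)$, or $\vM_{\mu\times\mu}(\C)$. This is delicate because the basis vectors $\phi_j$ only lie in $\vH^1_{(-1/2)^-}$, while $G_1$ acts as $\tfrac{\i}{4\pi}\w{1,\cdot}$ and so requires sufficient spatial decay of what it is applied to. The decay rate $\rho_0$ in \eqref{ass5.3} is precisely what ensures $U\phi_j$ has enough decay to pair against the constant function $1$ via $G_1$, and the successive compositions $D_0W_1D_0W_1\vS$ remain in the domain of each next operator. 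The different weight thresholds on $s$ and $\rho_0$ between \eqref{term2pm}--\eqref{term3pm} and \eqref{vEpm0}--\eqref{vEpm1} reflect the fact that $\vS^*$ inherits additional decay from $U\phi_j$, so the scalar-valued $\vE_{-+}$ tolerates weaker hypotheses on $\rho_0$ than the weighted-space-valued $\vE_\pm$.
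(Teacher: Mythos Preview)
Your proposal is correct and follows essentially the same strategy as the paper: substitute the expansions of $W(z)$ and $D(z)$, exploit $W_0\vS=0$ and $\vS^*W_0=0$, and collect terms. The paper's presentation is slightly more streamlined in that it first records $W(z)\vS = \sqrt{z}W_1\vS + \cdots$ (using that $\vS$ lands in $\vH^1_{-s'}$ for any $s'>\tfrac12$) and only then applies $D(z)$, which means one needs the $D(z)$-expansion to one order lower than the target; this is why only \eqref{term1}--\eqref{term2} are invoked and never \eqref{term3}. Your detour through $D_0W_0=Q'$ and $Q'\vS=0$ is correct but unnecessary, since $D_kW_0\vS=0$ follows immediately from $W_0\vS=0$; keeping the paper's ordering also makes the weighted-space bookkeeping that you flag as the ``main obstacle'' essentially automatic.
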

\begin{proof}
  \subStep{\ref {item:17}} Recall that  $\vS \in
  \vL(\C^\mu, \vH^{1}_{-s'})$ for any $s'>\f{1}{2}$.  Since $W_0\vS
  =0$, using \eqref{Wz00}-\eqref{Wz2}, the following
  expansions in $\vL(\C^\mu;
  \vH^1_{-s})$ hold.
  \begin{subequations}
   \begin{align}\label{eq:for0}
   \forall  s> \tfrac{1}{2}:\quad W(z)\vS& = \vO( |z|^{ \ep})\vS.\\
\label{eq:for1}
    \forall  s> \tfrac{3}{2}:\quad W(z)\vS& = \sqrt{z} W_1 \vS + \vO( |z|^{\f 1 2 + \ep}).\\
   \forall  s> \tfrac{5}{2}:\quad W(z)\vS& = ( \sqrt{z} W_1 + z W_2) \vS + \vO(|z|^{1+\ep}), \quad
            \mbox{ if } \rho_0 >3.\label{eq:for2}
 \end{align}
  \end{subequations}
The expansions for $\vE_+(z)$ follows from \eqref{term1}, \eqref{term2}
and \eqref{eq:for0}-\eqref{eq:for2}. The results for $\vE_-(z)$ can be proved in a similar way.

\medskip

 \subStep{\ref {item:19}}    For (\ref{vEpm0})  we use  (\ref{Wz00}),
 (\ref{Wz1}) and \eqref{term1}. Note that indeed since
 $W_0\vS=\vS^*W_0 =0$ and $D(z) : \vH^1_{-s} \to \vH^1_{-s}$ is
 uniformly bounded for any $s\in(\tfrac12, \rho_0-\tfrac12)$, one obtains
 \begin{subequations}
\begin{align}
\label{eq:as0m11}
  \vS^*W(z)\vS &= \sqrt{z}\vS^*W_1\vS +\vO(|z|^{\f 1 2 +\ep}),\\
  \label{eq:as0m}
  \vS^*W(z)D(z)W(z)\vS &= \vO(|z|^{\f 1 2 +\ep}).
\end{align}
 \end{subequations}
 Clearly (\ref{vEpm0}) follows from
 \eqref{eq:as0m11} and \eqref{eq:as0m}.

For (\ref{vEpm1})  we use  (\ref{Wz1}), (\ref{Wz2}) and
(\ref{term1}).  Note that
\begin{subequations}
  \begin{align}
\label{eq:as0m1111}
  \vS^*W(z)\vS &= \sqrt{z}\vS^*W_1\vS +z\vS^*W_2\vS+\vO(|z|^{1 +\ep}),\\\label{eq:as1m}
  \begin{split}
   \vS^*W(z)D(z)W(z)\vS &= z \vS^*W_1D(z)W_1\vS + \vO(|z|^{1 +\ep})\\&=  z
   \vS^*W_1D_0W_1\vS + \vO(|z|^{1 +\ep}).
 \end{split}
\end{align}
 \end{subequations}
   Clearly
   (\ref{vEpm1}) follows from \eqref{eq:as0m1111} and \eqref{eq:as1m}.
  \end{proof}

 If zero is an eigenvalue but not a resonance, then $\phi_j \in L^2$ for all $j$ and
$\vS$ is continuous from $\C^\mu$ to $\vH$, and by Theorem
\ref{thm5.2}   the composition $W_1 \vS =0$.

In the case $\rho_0>3$ the asymptotics (\ref{vEpm1})
 then amounts to the statement
\begin{align}
  \label{eq:redImp}
  \vE_{-+}(z) =   -z\vS^* W_2 \vS +  \vO(|z|^{1+\ep}).
\end{align}
 For  $\rho_0=3$  the right hand sides of (\ref{vEpm1}) and \eqref{eq:redImp} make
 sense since $\vS$ maps to $\vH$, however we dont know if
these  asymptotics  still hold  in that case. In fact we only know the following
weaker (and too poor) assertion for $\rho_0=3$,
\begin{align}
  \label{eq:redImp2}
  \quad \forall \epsilon>0:\quad \vE_{-+}(z) =\vO(|z|^{1-\ep}).
\end{align}

To show \eqref{eq:redImp2} we apply (\ref{Wz2}) with
$s'=0$ and $s \in ( \f 5 2,3]$ and conclude that
\begin{align*}
  \vS^* W(z) \vS =z \vS^* W_2 \vS +  \vO(|z|^{1+\ep})=\vO(|z|).
\end{align*}

Next by \eqref{Wz1},
 \[
 W(z)\vS = \vO( |{z}|^{\f 1 2 +\epsilon})
 \]
 in $\vL(\C^\mu, \vH^1_{-s})$ for any $s>\f{3}{2}$ and with $\epsilon=\epsilon(s)>0$.  We apply (\ref{term1}) with
  an   $s $ taken close to $\f{3}{2}$. Then we  argue that for
  any small $\delta>0$
  \begin{align}\label{eq:ssbndG}
    \vS^* W(z) = \vO( |{z}|^{\f 1 2-\kappa} ) \text{ in } \vL(\vH^1_{-s}.
    \C^\mu),\, \kappa:=(s-\tfrac 3 2 +\delta)/2,
  \end{align} Given \eqref{eq:ssbndG} it follows that
  \[
  \vS^* W(z) D(z) W(z) \vS = \vO( |{z}|^{1 +\epsilon-\kappa}),
  \]
   showing that
   \begin{align*}
    \vE_{-+}(z) +\vS^* W(z) \vS =  \vO(|z|^{1+\ep-\kappa})=  \vO(|z|^{1-\kappa}).
   \end{align*} In particular \eqref{eq:redImp2} holds.

   The bound \eqref{eq:ssbndG} follows by interpolating the bounds
\begin{align*}
\vS^* W(z) &= \vO( |{z}|^{\epsilon' } ) \text{ in } \vL(\vH^1_{-\tfrac
     52+\delta},
    \C^\mu),\\
\vS^* W(z) &= \vO( |{z}|^{\f 1 2 +\epsilon'} ) \text{ in } \vL(\vH^1_{-\tfrac
     32+\delta},
    \C^\mu),
   \end{align*} in turn valid due to
 (\ref{Wz00}) and  (\ref{Wz1}), respectively. (The interpolation
 requires the bounds with $\epsilon'=0$ only.)

\smallskip

\noindent \textbf {Calculation of $\vE_{-+,j}$, $j=1, 2$.} To compute
explicitly these leading terms, we distinguish between different situations
according to the spectral properties of the threshold zero.

\Step{ Case 1} \textit {Suppose zero is a resonance, but not an eigenvalue of $P$.}
In this case, $\mu =1$ or $2$ and
\[
\vE_{-+,1} = (\w{U \phi_i, G_1 U \phi_j})_{ 1\le i, j \le \mu}.
\]
If $\mu =1$, $\vE_{-+,1} = \w{U \phi_1, G_1 U \phi_1}_0$. Note that
$G_1 = \diag (G_{1,1}, G_{1,2})$ with $G_{1,i}$, $i=1,2$,  given by the rank-one operator
\be
G_{1,i} = \i\w{ g_i, \cdot}g_i,
\ee
where $g_i$  is the constant function in $x_i$ introduced in
\eqref{eq:gi}. Using \eqref{eq:ci} we then obtain
\[
 G_1 U \phi_1 =  \i (c_1(\phi_1) g_1, c_2(\phi_1)g_2)
\]
and
\begin{equation} \label{1a}
\vE_{-+,1} =   \i (|c_1(\phi_1)|^2 + |c_2(\phi_1)|^2) =   \i |c(\phi_1)|^2
\end{equation}
with $c(\phi_1) = (c_1(\phi_1), c_2(\phi_1)) \in \C^2$. If $\mu =2$, a similar computation gives
\begin{equation*}
G_1 U \phi_j =  \i ( c_1(\phi_j)g_1,  c_2(\phi_j)g_2); \quad j=1,2.
\end{equation*}
Therefore
\[
\w{U \phi_i, G_1 U \phi_j} = \i (\overline{ c_1(\phi_i)}  c_1(\phi_j) + \overline{ c_2(\phi_i)}  c_2(\phi_j)).
\]
It follows that
\begin{equation} \label{1b}
\vE_{-+,1} =   (\w{U \phi_i, G_1 U \phi_j} )_{ 1\le i, j \le 2}  =   \i   B_0
\end{equation}
where $B_0 =C_0^* C_0$ and
\be \label{e5.C0}
C_0 =\left(\begin{array}{cc}
c_1(\phi_1) & c_1(\phi_2) \\
c_2(\phi_1) & c_2(\phi_2)
\end{array}
\right)
\ee
Summing up, one obtains in Case 1
\be
\vE_{-+,1} = \i B_0
\ee
where $B_0$ is a $\mu\times \mu$ matrix given by
\be \label{vEpm11}
B_0 = |c(\phi_1)|^2 \quad \mbox{ if } \mu =1,\mand \quad B_0 = C_0^*C_0  \quad  \mbox{ if } \mu =2.
\ee
In both cases $B_0$ is invertible due to Theorem \ref{thm5.2}.  The explicit formula of $\vE_{-+,2}$ is not needed for the leading term of the resolvent expansion in Case 1.


\Step{ Case 2} \textit {Suppose zero is an eigenvalue, but not a resonance of $P$.} In this case, all $\phi_j$'s are eigenfunctions and
by Theorem \ref{thm5.2} on the characterization of resonance  states, one has  $W_1\vS =0$ which implies
\begin{equation}
\vE_{-+,1}= -\vS^*W_1 \vS =0.
\end{equation}
 Assume $\rho_0>3$, so that (\ref{vEpm1}) and \eqref{eq:redImp}
 apply. This means more explicitly that
\[
 \vE_{-+,2} =
  (  \w{ U\phi_i, (G_2U + G_0U_1) \phi_j})_{i, j =1, \dots, \mu}.
\]
Moreover, since $\phi_j \in L^2,  j= 1, \dots, \mu$,
we can check as in \cite{JK}  that
\begin{align}
  \label{eq:comphi}  \w{U\phi_i, G_2U \phi_j} = \w{\phi_i, \phi_j}.
\end{align}
 In fact, writing
 \[ G_2 = z^{-1} ( r_0(z)   - G_0 - \sqrt{z} G_1) + \vO(|z|^{\epsilon})
 \]
 for $\Im z >0$ and $z$ near zero, one obtains using the 1st resolvent
 equation that
 \[
 G_2U \phi_j = - r_0(z) \phi_j + \vO(|z|^{\epsilon}) U \phi_j,
 \]
and
\[
  \w{U\phi_i, G_2U \phi_j} = -\w{ r_0(\overline{z}) U\phi_i, \phi_j}
+ \vO(|z|^{\epsilon}).
\]
Similarly  $ r_0(\overline{z})U\phi_i = -\phi_i - \overline{z}
 r_0(\overline{z})\phi_i$, and by  taking the limit $z = \i \gamma \to 0$ with $\gamma \in\R_+$,
 we indeed obtain \eqref{eq:comphi}.

It is clear that $ \w{U\phi_i, G_0U_1 \phi_j} = - \w{\phi_i, U_1\phi_j}$. Therefore in Case 2, one has
\begin{equation} \label{2}
\vE_{-+,1} =0, \quad   \vE_{-+,2} = (\w{ \phi_i, (1-U_1)\phi_j})_{1 \le i,j \le \mu}.
\end{equation}
By Lemma \ref{lem5.5}, $ (\w{ \phi_i,  (1-U_1)\phi_j})_{1 \le i,j \le \mu}$ is a positive definite matrix.


\Step{ Case 3}  \textit  {Suppose zero is both a resonance and an eigenvalue of $P$.}
Let $\kappa$ be the multiplicity of zero resonance of $P$. Then $\kappa=1$ or $2$ and $\kappa<\mu$.
 For $ \kappa+1\le j \le \mu$, $\phi_j$ is an eigenfunction and therefore
\be \label{e5.70}
W_1 \phi_j = G_1U \phi_j =0, \quad j = \kappa + 1, \cdots, \mu.
\ee
$\vE_{-+,1} $ can be computed as in Case 1. One has in $\vM_{\mu\times \mu}(\C)$
\begin{equation}\label{31}
 \vE_{-+,1} =\left( \begin{array}{cccc}
   \i B_0 & 0 &\cdots &0 \\
  0 &   0  & \cdots & 0 \\
  \vdots & \vdots & \ddots & \vdots \\
  0 & 0 &\cdots  & 0
  \end{array} \right)
\end{equation}
where $B_0$ is the $\kappa\times\kappa$ matrix given by (\ref{vEpm11}) with $\mu$ replaced by $\kappa$.

Using the method of Case 2 and taking notice of (\ref{e5.70}),  we find that under the condition $\rho_0>3$
\begin{equation} \label{32}
\vE_{-+,2} = \left( \begin{array}{cc}
\vE_{11}^{(2)} & \vE_{12}^{(2)}\\
\vE_{21} ^{(2)}&   \vE_{22}^{(2)}
\end{array} \right),
\end{equation}
where
\begin{align*}
\vE_{11}^{(2)} &= \left(\w{ U \phi_i, (W_2 - W_1D_0W_1)\phi_j}\right)_{1\le i, j \le \kappa}, \\
\vE_{12}^{(2)} &= (\w{ U \phi_i, W_2\phi_j})_{1\le i \le \kappa, \;  \kappa+1 \le j \le \mu}, \\
\vE_{21}^{(2)} &=  (\w{ U \phi_i, W_2\phi_j})_{\kappa+1 \le i \le \mu, \;  1 \le j \le \kappa}, \\
\vE_{22}^{(2)} &=  (\w{\phi_i,  (1-U_1)\phi_j})_{\kappa+1 \le i,j \le \mu}.
\end{align*}

\medskip
If $0$ is an eigenvalue of $P$ we let $\Pi_0$ denote the spectral projection
  in $\vH$ onto the zero-eigenspace of this operator. The quantity
  $\Pi_0$ enters for
   Cases 2 and 3 below.

\begin{prop}\label{prop5.8}  The
  following asymptotics  as $z\to 0$ and $z\not\in [0,\infty)$ hold in $\vL(1,-s; 1,-s)$  for $s>1$ and
  close to $1$.

\medskip

\Step{Case 1} Suppose  that zero is an exceptional point
of the first kind. Then
\begin{align} \label{W1}
W(z)^{-1}= \f{\i}{\sqrt z} Q_0 +  \vO(|{z}|^{-\f 1 2 +\ep}).
\end{align}
Here
\be
Q_0 = \sum_{j=1}^\kappa \w{-U\psi_j, \cdot}\psi_j
\ee
 with $\psi_j\in \vK$ such that if $\kappa =1$, $\psi_1$ verifies the normalization condition (\ref{normalization1})
and if $\kappa =2$, $\psi_1$ and $\psi_2$ verify the normalization condition (\ref{normalization2}).
\medskip

\Step{Cases 2 and 3} Suppose  that zero is  an exceptional point of the
second or the third kind, respectively, and suppose   that $\rho_0 > 3$. Then
\begin{equation} \label{W2}
W(z)^{-1}= z^{-1} (\Pi_0 (1-U_1)\Pi_0)^{-1} \Pi_0 U  + \vO(|z|^{-1+\ep}).
\end{equation}

\end{prop}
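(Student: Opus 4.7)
The plan is to combine the representation \eqref{repW} with the asymptotic expansions in Lemma \ref{lem5.6}. Throughout, one works in the space $\vL(1,-s;1,-s)$ with $s$ close to $1$ (this is compatible with all the mapping properties of $\vS$, $\vS^*$, $D(z)$ and $W_j$ that one needs). The basic observation is that the singular part of $W(z)^{-1}$ comes entirely from the term $-\vE_+(z)\vE_{-+}(z)^{-1}\vE_-(z)$, since $\vE(z)=D(z)=D_0+\vO(|z|^{\epsilon})$ is regular.

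For Case 1 (exceptional point of the first kind), all $\phi_j$ are resonance states so $\mu=\kappa$, and by \eqref{1a}--\eqref{vEpm11} the leading term $\vE_{-+,1}=\i B_0$ of the expansion \eqref{vEpm0} is an \emph{invertible} $\mu\times\mu$ matrix. Hence $\vE_{-+}(z)^{-1}=-\i z^{-1/2}B_0^{-1}+\vO(|z|^{-1/2+\epsilon})$, and combining with $\vE_\pm(z)=\vS^{(*)}+\vO(|z|^{\epsilon})$ from \eqref{term2pm} gives
\begin{equation*}
W(z)^{-1}=\tfrac{\i}{\sqrt z}\,\vS B_0^{-1}\vS^*+\vO(|z|^{-1/2+\epsilon}).
\end{equation*}
It remains to identify $\vS B_0^{-1}\vS^*$ with $Q_0$. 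Writing the normalized basis $\{\psi_j\}_{j=1}^\kappa$ in the form $\psi_j=\sum_l T_{lj}\phi_l$, the normalization \eqref{normalization1}/\eqref{normalization2} translates into $C_0T=I$ when $\kappa=2$ (hence $T=C_0^{-1}$) and into $|T|^2=|c(\phi_1)|^{-2}$ when $\kappa=1$, so in both cases $TT^*=(C_0^*C_0)^{-1}=B_0^{-1}$. A direct index computation then yields
\begin{equation*}
Q_0=\sum_j\w{-U\psi_j,\cdot}\psi_j=\sum_{k,l}(TT^*)_{kl}\phi_k\w{-U\phi_l,\cdot}=\vS B_0^{-1}\vS^*.
\end{equation*}

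For Cases 2 and 3 (now with $\rho_0>3$) the crucial preliminary step is to show that $W_1\vS=0$ on the span of the eigenfunctions. Indeed, if $\phi_j$ is an eigenfunction of $P$ then by Theorem \ref{thm5.2} one has $c(\phi_j)=0$, which forces $G_1U\phi_j=0$, and the anti-self-adjointness $G_1^*=-G_1$ further gives $\vS^*W_1(\textrm{eigfn})=0$. In Case 2 this kills $\vE_{\pm,1}$ and $\vE_{-+,1}$ completely, and \eqref{vEpm1} reduces to $\vE_{-+}(z)=z\vE_{-+,2}+\vO(|z|^{1+\epsilon})$ with $\vE_{-+,2}=(\w{\phi_i,(1-U_1)\phi_j})_{ij}$ positive definite by Lemma \ref{lem5.5}. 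This yields
\begin{equation*}
W(z)^{-1}=-z^{-1}\vS\vE_{-+,2}^{-1}\vS^*+\vO(|z|^{-1+\epsilon}).
\end{equation*}
To match this with $z^{-1}(\Pi_0(1-U_1)\Pi_0)^{-1}\Pi_0U$, let $M=(\w{\phi_i,\phi_j})_{ij}$; the $L^2$-orthogonal projection reads $\Pi_0 g=\sum(M^{-1})_{ij}\w{\phi_j,g}\phi_i$, so the matrix of $\Pi_0(1-U_1)\Pi_0$ in the basis $\{\phi_j\}$ is $M^{-1}\vE_{-+,2}$, whose inverse is $\vE_{-+,2}^{-1}M$. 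Using self-adjointness of $U$ on $\vH$, one computes $(\Pi_0(1-U_1)\Pi_0)^{-1}\Pi_0Uf=\sum_{i,k}(\vE_{-+,2}^{-1})_{ik}\w{U\phi_k,f}\phi_i$, which coincides with $-\vS\vE_{-+,2}^{-1}\vS^*f$ since $\vS^*f=(\w{-U\phi_k,f})_k$. For Case 3 one splits the basis $\{\phi_j\}$ into the $\kappa$ resonance states (indices $\leq\kappa$) and the $\mu-\kappa$ eigenfunctions (indices $>\kappa$), writing $\vE_{-+}(z)$ as a $2\times2$ block matrix with leading terms $\sqrt z(\i B_0)$ in the top-left and $z\vE_{22}^{(2)}$ in the bottom-right (see \eqref{31}, \eqref{32}); a Schur complement inversion shows that the top-left block of $\vE_{-+}(z)^{-1}$ is $\vO(z^{-1/2})$, the off-diagonal blocks are $\vO(z^{-1/2})$, while the bottom-right block is $z^{-1}(\vE_{22}^{(2)})^{-1}+\vO(z^{-1/2})$. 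Hence the only contribution of order $z^{-1}$ comes from the bottom-right block, and the identification with $(\Pi_0(1-U_1)\Pi_0)^{-1}\Pi_0U$ follows from the Case 2 argument applied to the subspace $\mathrm{span}\{\phi_{\kappa+1},\dots,\phi_\mu\}=\ran\Pi_0$; the residual $\vO(z^{-1/2})$ terms are absorbed into $\vO(|z|^{-1+\epsilon})$ (choosing $\epsilon\leq 1/2$).

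The main technical point is the algebraic identification in Cases 2 and 3: one has to reconcile three different natural pairings, namely the $\w{-U\cdot,\cdot}$-orthonormalization of the basis $\{\phi_j\}$, the $L^2$-inner product used in the definition of $\Pi_0$, and the symmetric pairing appearing in $\vE_{-+,2}$. Book-keeping the matrices $M$ and $\vE_{-+,2}$ and verifying that their ratios cancel correctly is where the proof becomes delicate; a second subtlety is the Schur-complement block inversion in Case 3, which requires carefully tracking that the off-diagonal $\vO(z^{-1/2})$ terms of $\vE_{-+}(z)^{-1}$ do not accidentally generate a spurious $z^{-1}$ contribution when sandwiched between $\vS$ and $\vS^*$.
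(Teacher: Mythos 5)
Your proof is correct and follows essentially the same route as the paper: the representation \eqref{repW} combined with the expansions of Lemma \ref{lem5.6}, the invertibility of $\i B_0$ resp. the positive definiteness of $\vE_{-+,2}$ (Lemma \ref{lem5.5}), the change of basis $T=C_0^{-1}$ for the normalization in Case 1, and the block (Schur) inversion of $\vE_{-+}(z)$ in Case 3. The only, harmless, deviation is that you verify the identification $-\vS\vE_{-+,2}^{-1}\vS^*=(\Pi_0(1-U_1)\Pi_0)^{-1}\Pi_0U$ by a Gram-matrix computation in the basis $\{\phi_j\}$, whereas the paper argues at the operator level via the $\vH$-adjoint $\vS^\#$ and the identity $\vS(\vS^\#(1-U_1)\vS)^{-1}\vS^\#=(\Pi_0(1-U_1)\Pi_0)^{-1}\Pi_0$.
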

\pf \subStep {Case 1} For any  $s\in( \tfrac 12, \rho_0-\tfrac 12)$ and  for some $\ep>0$
\begin{align*}
\vE(z) & =  D_0 + \vO(|z|^\ep), \quad \mbox{ in } \vL(1, -s; 1,-s),  \\
\vE_\pm(z) & =  \vE_{\pm, 0}  + \vO(|z|^\ep), \quad \mbox{ in
             }\vL(\C^\mu, \vH^1_{-s})
             \mbox{ or } \vL(
             \vH^1_{-s}, \C^\mu),  \\
\vE_{-+}(z) & =   \i\sqrt{z} B_0 + \vO(|z|^{\f 1 2 +\ep})
\end{align*}
where $B_0$ is given by (\ref{vEpm11}).
Note that $\vE_{+,0} = \vS$, $\vE_{-,0} = \vS^*$ and  $\vS \vS^* = Q$. It follows from \eqref{repW} that
\be
W(z)^{-1} = \f{\i}{\sqrt z} \vS B_0^{-1}\vS^* + \vO(|z|^{-\f 1 2 +\ep}).
\ee
If $\kappa =1$, $B_0 =|c(\phi_1)|^2$. Set
\be
\psi_1 = \f{1}{|c(\phi_1)|}\phi_1.
\ee
Then $\psi_1$ verifies (\ref{normalization1}) and
\[
\vS B_0^{-1}\vS^* = \f{1}{|c(\phi_1)|^2} \w{-U \phi_1, \cdot} \phi_1 = \w{-U\psi_1,\cdot} \psi_1 = Q_0.
\]
If $\kappa =2$, then $B_0 = C_0^*C_0$. Take $\psi_1, \psi_2 \in \vK$ such that
\be
\left( \begin{array}{c}
\psi_1  \\
\psi_2  \end{array} \right) =  \; ^tC_0^{-1} \left( \begin{array}{c}
\phi_1\\
\phi_2 \end{array} \right).
\ee
For $f\in \vH^1_{-s}$, set $(v_1, v_2) = (\w{-U\phi_1, f}, \w{-U\phi_2, f}) \in \C^2$.
One has
\[
\vS B_0^{-1}\vS^*f = (\phi_1, \phi_2) C_0^{-1} C_0^{*-1} \left( \begin{array}{c}
v_1  \\
v_2  \end{array} \right) = u_1 \psi_1 + u_2 \psi_2
\]
where
\[
\left( \begin{array}{c}
u_1\\
u_2 \end{array} \right) =  C_0^{*-1} \left( \begin{array}{c}
v_1  \\
v_2  \end{array} \right)
 =  \left( \begin{array}{c}
 \w{-U\psi_1, f}\\ \w{-U\psi_2, f}
 \end{array} \right).
\]
It follows that
\[
\vS B_0^{-1}\vS^*f =  \w{-U\psi_1, f} \psi_1 +  \w{-U\psi_2, f} \psi_2 = Q_0 f.
\]
To show that   $\psi_1$ and $\psi_2$  verify the normalization
condition (\ref{normalization2}), set $ C_0 = (c_{ij})_{1\le i,
  j \le 2}$, $^t C_0^{-1} = (d_{ij})_{1\le i, j \le 2}$ and $\psi_k =(\psi_{k1}, \psi_{k2})$. Then
\[
c_j(\psi_k) = \w{g_j, U_{j1}\psi_{k1} +  U_{j2}\psi_{k2} }=
\sum_{m=1}^2 d_{km} c_{jm} = \delta_{jk};\quad j,k  =1,2.
\]

Therefore $\psi_1$ and $\psi_2$ are resonance states verifying the
normalization condition (\ref{normalization2}), and
(\ref{W1}) is proved.

\subStep {Case 2}  By (\ref{2}) and Lemmas \ref{lem5.5}  and \ref{lem5.6}
\[
\vE_{-+}(z) = z M_1 + \vO(|z|^{1+\ep}),
\]
where the matrix $M_1 = (\w{ \phi_i, ( 1-U_1)\phi_j})_{1\le i, j\le \mu}$ is positive definite.
We deduce from \eqref{repW} that
\[
W(z)^{-1} =  -z^{-1} \vS M_1^{-1} \vS^* + \vO(|z|^{-1+\ep})).
\]
Let $\vS^\#$ be the adjoint of $\vS$ with respect to the scalar product
$\w{\cdot,\cdot}$ of $\vH$. Then $\vS^* = -\vS^\#U$ and   $M_1 =
\vS^\# (1-U_1) \vS$. The orthogonal projection  $\Pi_0$ of $\vH$ onto
the zero eigenspace of $P$ can be expressed in terms of $\vS$ and
$\vS^\#$ as
\begin{subequations}
 \begin{align}\label{eq:35P1}\Pi_0 = \vS( \vS \vS^\#)^{-1} \vS^\#
  \end{align}
 and  obeys
\begin{align}\label{eq:35P2}\Pi_0 \vS = \vS,\quad  \vS^\# \Pi_0 = \vS^\#
  \end{align}
\end{subequations}

 Letting  $T=\vS( \vS^\# (1-U_1)\vS)^{-1}\vS^\#$ we compute
 using \eqref{eq:35P1} and \eqref{eq:35P2}
 \begin{align*}
   -\vS M^{-1}_1 \vS^*  &=  TU,\\
 (\Pi_0( 1-U_1)\Pi_0) T&=  \vS( \vS \vS^\#)^{-1} \vS^\# = \Pi_0,\\
T(\Pi_0( 1-U_1)\Pi_0) &=  \vS( \vS \vS^\#)^{-1} \vS^\# = \Pi_0.
 \end{align*}
This leads to  the identity $T=  (\Pi_0( 1-U_1)\Pi_0)^{-1} \Pi_0$, and
 consequently that
\[
- \vS M^{-1}_1 \vS^* = TU = (\Pi_0( 1-U_1)\Pi_0)^{-1} \Pi_0 U,
\]
which proves (\ref{W2})

\subStep {Case 3} We use again \eqref{vEpm1}. We want to show $\vE_{-+}(z)^{-1}$
 exists and to calculate its leading term as $z\to 0$ and $z\not\in [0,\infty)$. To do this,  let $M(z) = \sqrt{z} \vE_{-+,1} +  z \vE_{-+,2} \in\vM_{\mu\times\mu}(\C)$.
 Decompose $M(z)$ into blocks:
\[
M(z) =\left( \begin{array}{cc}
\sqrt{z} M_{11}(z)  &  z M_{12}\\
z M_{21} &  z M_{22}  \end{array} \right)
\]
with
\begin{align*}
M_{11}(z) & =  \i B_0 +  \sqrt{z} \vE_{11}^{(2)}  \\
M_{ij} & =  \vE_{ij}^{(2)} \quad \mbox{ for }(ij) \neq (11).
\end{align*}
$M_{22}$ is positive definite by Lemma \ref{lem5.5}. The diagonal part of $M(z)$ is invertible and one has
\[
M(z)\diag ( (\sqrt{z} M_{11}(z))^{-1},  (z M_{22} )^{-1}) = \left( \begin{array}{cc}
1  &  a\\
b  &  1  \end{array} \right)
\]
where
\begin{equation*}
a = M_{12} M_{22}^{-1} ,  \quad b =b(z)= \sqrt{z} M_{21} M_{11}(z)^{-1}.
\end{equation*}
 Since $b=\vO(\sqrt{|z|})$,  the matrix  $\left( \begin{array}{cc}
1  &  a\\
                                                b  &  1  \end{array}
                                            \right)$ is invertible for
                                            $z\not\in [0,\infty)$ and $|z|$
                                            small, and an elementary calculation gives
\[
\left( \begin{array}{cc}
1  &  a\\
b  &  1  \end{array} \right)^{-1} = \left( \begin{array}{cc}
(1-ab)^{-1}  & - a (1-ba)^{-1}\\
-b (1-ab)^{-1}  &  (1 -ba)^{-1}  \end{array} \right).
\]
Consequently we obtain a formula for the inverse of $M(z)$
 for $z\not\in [0,\infty)$ with  $|z|$  small, and  from that we read off that
\begin{equation*}
zM(z)^{-1} =   \left( \begin{array}{cc}
{\sqrt{z}} M_{11}(z)^{-1}  & - {\sqrt{z}} M_{11}(z)^{-1} M_{12} M_{22}^{-1}\\
- {\sqrt{z}} M_{22}^{-1} M_{21} M_{11}(z)^{-1} &  M_{22}^{-1}  \end{array} \right)+\vO(\sqrt {\abs{z}}).
\end{equation*}

Recalling our choice of basis  $\{\phi_1, \dots, \phi_\mu\}$  in $\vK$
the map $\vS$ splits naturally as $\vS= \vS_r \oplus \vS_e : \C^\kappa \oplus \C^{\mu-\kappa} \to \vH^1_{-s}$ with the range of $\vS_r$ ($\vS_e$, respectively) included in the zero-resonance space  (the zero-eigenspace, respectively) of $P$. Similarly,  decompose $\vS^* = \vS_r^* \oplus \vS_e^*$. Then,
\begin{equation*}
\vS_r \vS_r^* = Q_r, \, \vS_r^* \vS_r = I_\kappa, \quad \vS_e \vS_e^* = Q_e,\, \vS_e^* \vS_e = I_{\mu-\kappa},
\end{equation*}
where $Q_r =\sum_{j=1}^\kappa \w{-U\phi_j, \cdot}\phi_j$ and  $Q_e =\sum_{j=\kappa+1}^\mu \w{-U\phi_j, \cdot}\phi_j$.
Since $\vE_{-+}(z) = M(z) + o(|z|)$,  we obtain that
\[
-\vE_+(z) \vE_{-+}(z)^{-1} \vE_-(z) = \frac{1}{z} \parb{ - \vS_e M_{22}^{-1} \vS_e^* + \vO(|z|^\ep)}.
\]
Since $M_{22} = (\w{ (1-U_1)\phi_i, \phi_j})_{\kappa+1 \le i, \; j \le \mu}$,  we can verify as in Case 2 that
\[
- \vS_e M_{22}^{-1} \vS_e^* = (\Pi_0 (1-U_1)\Pi_0)^{-1} \Pi_0 U.
\]
 Whence (\ref{W2}) is proved for Case 3 also.
\ef

Recall that
\begin{align}\label{eq:formE}
  E_{\vH}(\lambda_0 +z)^{-1}  = -W(z)^{-1}(P_0-z)^{-1},\mand G_0 U =
  -1\text{ on }\vK.
\end{align}
  The following result follows immediately from Proposition \ref{prop5.8}.

\begin{subequations}
\begin{prop}\label{prop5.9}
The following asymptotics hold in
  $\vL(-1,s; 1, -s)$  for any $s>1$ and  $z\not\in [0,\infty)$  with $|z|$ small.

\Step{Case 1} Suppose  zero is an exceptional point of the first kind.
Let $\kappa$ be  the multiplicity of zero resonance. If $\kappa=1$,
let $\psi_1$ be a resonance  state normalized according to
\eqref{normalization1}, and if $\kappa=2$, let $\psi_1$ and $\psi_2 $ be
resonance  states normalized according to \eqref{normalization2}. Then   for some $\ep>0$
\begin{equation} \label{e1}
E_{\vH}(\lambda_0 +z)^{-1}= -\frac{\i}{\sqrt{z}} \sum_{j=1}^\kappa \w{\psi_j, \cdot} \psi_j + \vO(|z|^{-\f 1 2 +\ep}).
\end{equation}

\Step{Cases 2 and 3} Suppose  that zero is  an exceptional point of the
second or the third kind, respectively, and suppose   that $\rho_0 >
3$. Then for some $\ep>0$
\begin{equation} \label{e2}
E_{\vH}(\lambda_0 +z)^{-1}= z^{-1} (\Pi_0 (1-U_1)\Pi_0)^{-1} \Pi_0   + \vO(|z|^{-1 +\ep}).
\end{equation}
\end{prop}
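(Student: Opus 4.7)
The starting point is the factorization
\begin{equation*}
E_{\vH}(\lambda_0+z)^{-1} = -W(z)^{-1}(P_0-z)^{-1}
\end{equation*}
recorded just before the statement, together with the asymptotics of $W(z)^{-1}$ provided by Proposition \ref{prop5.8}. I will combine these with the one-term expansion of the free resolvent $r_0(z) = G_0 + \vO(|z|^\epsilon)$ in $\vL(-1,s;1,-s)$ for $s>1$ close to $1$ that follows from \eqref{eq:basicHo}. The algebraic engine of both cases is the single identity $G_0 U = -1$ on $\vK$ (which is just the restatement of $\vK = \ker(1+G_0U)$), combined with the self-adjointness of $G_0$ (its kernel $\frac{1}{4\pi|x_j-y_j|}$ is real-symmetric) and of $U$ (visible from the explicit form \eqref{U} together with $\check s_{12}^*=\check s_{21}$, $W_{12}^*=W_{21}$ and self-adjointness of $R'(\lambda_0)$ entering $K(\lambda_0)$).

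For Case 1, substituting \eqref{W1} and the expansion of $r_0(z)$ and composing in $\vL(-1,s;1,-s)$ yields
\begin{equation*}
E_{\vH}(\lambda_0+z)^{-1} = -\tfrac{\i}{\sqrt{z}}\,Q_0 G_0 + \vO(|z|^{-\f12+\epsilon}).
\end{equation*}
The task is then to identify $Q_0 G_0$ with $\sum_{j=1}^\kappa\w{\psi_j,\cdot}\psi_j$. Using self-adjointness of $G_0$ and $U$ and the defining identity $G_0 U \psi_j = -\psi_j$, one computes for any admissible $f$
\begin{equation*}
\w{-U\psi_j, G_0 f} = \w{-G_0 U\psi_j, f} = \w{\psi_j, f},
\end{equation*}
so $Q_0 G_0 = \sum_{j=1}^\kappa\w{\psi_j,\cdot}\psi_j$, establishing \eqref{e1}.

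For Cases 2 and 3, substituting \eqref{W2} and the expansion of $r_0(z)$ gives
\begin{equation*}
E_{\vH}(\lambda_0+z)^{-1} = -z^{-1}(\Pi_0(1-U_1)\Pi_0)^{-1}\Pi_0 U G_0 + \vO(|z|^{-1+\epsilon}),
\end{equation*}
and it remains to show that $\Pi_0 U G_0 = -\Pi_0$. Since $\Pi_0$ is the orthogonal projection onto $\ker P\cap \vH$, any $\phi\in\ran\Pi_0$ belongs to $\vK$ and in particular satisfies $G_0 U\phi = -\phi$. Representing $\Pi_0 = \sum_i \w{e_i,\cdot}e_i$ for an orthonormal basis $\{e_i\}$ of $\ran\Pi_0$ and using self-adjointness of $G_0$ and $U$,
\begin{equation*}
\w{e_i, UG_0 f} = \w{G_0 U e_i, f} = -\w{e_i, f},
\end{equation*}
so $\Pi_0 U G_0 = -\Pi_0$, which delivers \eqref{e2}.

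The main technical issue to check carefully is the compatibility of all these operations with the weighted-space topology $\vL(-1,s;1,-s)$, $s>1$ close to $1$: one needs $G_0\in\vL(-1,s;1,-s)$ (from \cite[Lemmas 2.1--2.3]{JK}), that $Q_0$ and $(\Pi_0(1-U_1)\Pi_0)^{-1}\Pi_0 U$ map $\vH^1_{-s}\to\vH^1_{-s}$ (the latter thanks to Lemma \ref{lem5.5} and the decay of the resonance/eigenstates coming from Theorem \ref{thm5.2}), and that the composition of the error bounds in Proposition \ref{prop5.8} with $r_0(z)$ genuinely yields the claimed remainder orders. These are routine bookkeeping verifications once Proposition \ref{prop5.8} is in hand; the only mildly subtle point is that the self-adjointness manipulations above must be read through the $\vH^1_{-s}$--$\vH^{-1}_s$ duality rather than as Hilbert-space identities, but since both $G_0$ and $U$ have the duality-preserving structure, this causes no real difficulty.
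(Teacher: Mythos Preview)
Your proof is correct and follows essentially the same approach as the paper: combine the factorization $E_{\vH}(\lambda_0+z)^{-1}=-W(z)^{-1}(P_0-z)^{-1}$ with Proposition~\ref{prop5.8} and the leading-order expansion $(P_0-z)^{-1}=G_0+\vO(|z|^\epsilon)$ in $\vL(-1,s;1,-s)$, then simplify via $G_0U=-1$ on $\vK$. The paper's proof is terser (it records precisely these three ingredients and declares the result immediate), but your explicit verification that $Q_0G_0=\sum_j\w{\psi_j,\cdot}\psi_j$ and $\Pi_0UG_0=-\Pi_0$ via the self-adjointness of $G_0$ and $U$ is exactly the computation being left implicit.
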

\end{subequations}
\pf
 We apply Proposition \ref{prop5.8}, \eqref{eq:formE} and
 the leading order expansion
 \begin{align}\label{eq:leadFree}
   \forall s>1:\quad (P_0-z)^{-1} = G_0 + \vO(|z|^\ep)\in \vL(-1,s; 1, -s).
 \end{align}
\ef

We are now  able to give the asymptotics of the resolvent of $N$-body operator $H$ near its first threshold $\lambda_0 =\Sigma_2$.
Let $R(\lambda_0 + z) =(H-\lambda_0-z)^{-1}$. By (\ref{rep5.9}),
\begin{equation}\label{eq:ResFor}
R(\lambda_0 + z) = R'(\lambda_0 + z) - (1-R'(\lambda_0 +z)H)S E_{\vH}(\lambda_0 +z)^{-1}S^* (1-H R'(\lambda_0+z)).
\end{equation}
Here $R'(\lambda_0 +z)$ is the reduced resolvent which under the condition \eqref{ass5.1} is holomorphic in $z$ in a neighborhood of zero.
The proof of Lemma  \ref{prop5.1} shows that $\psi$ is a zero-resonance
state of $P$ if and only if  $u = (1- R'(\lambda_0) H)S\psi$ is an
$\lambda_0$-resonance state of $H$ and $\phi$ can be recovered from $u$
by $\phi = T^* u$, where $T^* = (S^*S)^{-1}S^*$.

If $\lambda_0$ is a
resonance of $H$ its  multiplicity is denoted by $\kappa$.
 If $\lambda_0$ is an eigenvalue of $H$
 the spectral projection of $H$ onto the corresponding
eigenspace is denoted by $\Pi_H$. Define
\begin{equation*}
 S_I f = I_1\varphi_1 f_1 + I_2 \varphi_2 f_2, \quad f =(f_1,f_2) \in \vH^1_{-\infty}.
 \end{equation*}

 \begin{subequations}
\begin{thm}\label{thm5.10}   Assume \eqref{ass5.1}
and   \eqref{ass5.3}. The following asymptotics hold for $R(\lambda_0
+ z)$ as an operator from $H^{-1}_{s}$ to $H^{1}_{-s}$, $s>1$,  for $z\to
0$  and  $z\not\in [0,\infty)$, and $\epsilon=\epsilon(s)>0$.

\medskip

\Step{Case 0}  Suppose $\lambda_0$ is a regular point of $H$. Then   one has
\begin{equation} \label{asymRz0}
R(\lambda_0 + z) = R'(\lambda_0) +  (S-R'(\lambda_0)S_I)D_0 G_0 (S^*- S_I^*R'(\lambda_0)) + \vO(|z|^\ep).
\end{equation}

\Step{Case 1} Suppose $\lambda_0$ is an exceptional point of the first kind of $H$.  Then
\begin{equation} \label{asymRz1}
R(\lambda_0 + z)= \frac{\i}{\sqrt{z}} \sum_{j=1}^\kappa \w{u_j, \cdot} u_j + \vO(|z|^{-\f 1 2 +\ep}),
\end{equation}
where $u_j = S \psi_j - R'(\lambda_0) S_I\psi_j$
with $\psi_j$ given in Proposition \ref{prop5.9}. The
  quantities  $u_j$ are resonance states of $H$ obeying  \eqref{normalization1H} and
\eqref{normalization2H} for $\kappa=1$ and $\kappa=2$, respectively.
\medskip

\Step{Cases 2 and 3} Suppose  that $\lambda_0$  is  an exceptional point of the
second or the third kind, respectively, and suppose   that $\rho_0 >
3$. Then
\begin{equation} \label{asymRz2}
R(\lambda_0 + z)= - z^{-1} \Pi_H    + \vO(|z|^{-1+\ep}).
\end{equation}

\end{thm}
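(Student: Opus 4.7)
The plan is to derive all four asymptotic formulas by substituting the asymptotics of $E_{\vH}(\lambda_0+z)^{-1}$ from Proposition \ref{prop5.9} into the representation \eqref{eq:ResFor}, after first simplifying the prefactors $E_\pm(\lambda_0+z)$. The preliminary algebraic reduction is the key identity
\begin{equation*}
(1-R'(\lambda_0+z)H)S = S - R'(\lambda_0+z)S_I,\qquad S^*(1-HR'(\lambda_0+z)) = S^* - S_I^*R'(\lambda_0+z),
\end{equation*}
which follows from $\Pi'H S f = \Pi' S_I f$ (since $HSf - S_I f = \lambda_0 Sf + \sum_j \varphi_j\otimes p_j^2 f_j \in \vF$ and $R'\Pi = 0$) and the self-adjointness of $R'(\lambda_0)$ guaranteed by \eqref{ass5.1}. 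By Lemma \ref{lemma2.1} and \eqref{ass5.1} the map $z\mapsto R'(\lambda_0+z)$ is holomorphic at $z=0$, so each $E_\pm(\lambda_0+z)$ is holomorphic and equals its value at $z=0$ modulo $\vO(|z|)$ in the relevant weighted spaces.

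For Case 0 one has $\vK=\{0\}$, so $Q'=1$, $D_0=(1+G_0U)^{-1}$, and \eqref{eq:formE} together with \eqref{term1} and \eqref{eq:leadFree} yield $E_{\vH}(\lambda_0+z)^{-1} = -D_0G_0+\vO(|z|^\epsilon)$; substitution gives \eqref{asymRz0}. For Case 1 one substitutes \eqref{e1}, producing $\tfrac{\i}{\sqrt z}\sum_{j=1}^\kappa |u_j\rangle\langle u_j|$ with $u_j := S\psi_j - R'(\lambda_0)S_I\psi_j$; that these are resonance states of $H$ follows from the eigentransform correspondence (Lemma \ref{prop5.1}). The normalization \eqref{normalization1H}/\eqref{normalization2H} is then verified by observing $T^*u_j = (S^*S)^{-1}S^*(S\psi_j - R'(\lambda_0)S_I\psi_j) = \psi_j$ (using $S^*R'=0$, which follows from $\Pi R'=0$), whence
\begin{equation*}
\langle TU^*\mathfrak{g}_i, u_j\rangle = \langle \mathfrak{g}_i, UT^*u_j\rangle = \langle \mathfrak{g}_i, U\psi_j\rangle = c_i(\psi_j),
\end{equation*}
and the claim reduces to \eqref{normalization1}/\eqref{normalization2}.

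For Cases 2 and 3 (with $\rho_0 > 3$) the substitution of \eqref{e2} produces the rank-$(\mu-\kappa)$ operator
\begin{equation*}
-z^{-1}(S-R'(\lambda_0)S_I)\bigl(\Pi_0(1-U_1)\Pi_0\bigr)^{-1}\Pi_0(S^*-S_I^*R'(\lambda_0)).
\end{equation*}
Let $\{\phi_j\}_{j=\kappa+1}^\mu$ be a basis of $\ran\Pi_0\subset\vH$, and set $u_j := S\phi_j-R'(\lambda_0)S_I\phi_j$; by Theorem \ref{thm5.2} \ref{item:second} and the choice of basis these lie in $L^2(\bX)$ and form a basis of $\ker(H-\lambda_0)_{|L^2}$. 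A direct expansion using $S^*R'=R'S=0$ and $R'(\lambda_0)=R'(\lambda_0)^*$ gives the Gram identity
\begin{equation*}
\langle u_i, u_j\rangle = \langle \phi_i, (S^*S+K_1(\lambda_0))\phi_j\rangle = \langle \phi_i, (1-U_1)\phi_j\rangle,
\end{equation*}
recognizing $S^*S+K_1(\lambda_0) = 1-U_1$ from \eqref{U1}. Since the matrix of $\bigl(\Pi_0(1-U_1)\Pi_0\bigr)^{-1}$ restricted to $\ran\Pi_0$ in this basis is precisely the inverse of the Gram matrix $M_{ij}=\langle u_i,u_j\rangle$, the displayed rank-$(\mu-\kappa)$ operator equals $\Pi_H = \sum_{i,j}(M^{-1})_{ij}|u_i\rangle\langle u_j|$, yielding \eqref{asymRz2}. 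The main obstacle of the proof is precisely this identification step in Cases 2 and 3: one must carefully unravel the coordinate-free expression $(\Pi_0(1-U_1)\Pi_0)^{-1}\Pi_0$ acting between the mapping spaces $\vH\to\vH$ and recognize that, when sandwiched between $E_\pm(\lambda_0)$, it reconstructs the spectral projection of $H$ through the eigentransform correspondence, using the self-adjointness $(1-U_1)|_{\ran\Pi_0}>0$ from Lemma \ref{lem5.5}.
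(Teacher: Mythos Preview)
Your proof is correct and follows essentially the same approach as the paper: substitute the expansion of $E_{\vH}(\lambda_0+z)^{-1}$ from Proposition \ref{prop5.9} into the Grushin representation \eqref{eq:ResFor}, after reducing $E_\pm(\lambda_0+z)$ via the identity $\Pi'HS = \Pi'S_I$. The only notable difference is in Cases 2 and 3: you work in coordinates by choosing a basis of $\ran\Pi_0$, computing the Gram matrix $\langle u_i,u_j\rangle = \langle\phi_i,(1-U_1)\phi_j\rangle$, and identifying the leading term with the projection $\sum_{ij}(M^{-1})_{ij}\ket{u_i}\bra{u_j}$, whereas the paper writes the same object coordinate-free as $B = \tau(\tau^*\tau)^{-1}\tau^*$ with $\tau = E_+(\lambda_0)\Pi_0$ and observes directly that this is the orthogonal projection onto $\ran\tau$; these are the same linear-algebraic identification. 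Your explicit verification in Case 1 that $T^*u_j = \psi_j$ (whence the normalizations \eqref{normalization1H}/\eqref{normalization2H} reduce to \eqref{normalization1}/\eqref{normalization2}) is a detail the paper asserts but does not spell out.
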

 \end{subequations}
\pf \subStep {Case 0} One has
\[
(1-R'(\lambda_0 +z)H)S = S-R'(\lambda_0)S_I + \vO(|z|)=E_+(\lambda_0) + \vO(|z|),
\]
in $\vL(\vH^1_{-s}, H^{1}_{-s})$ for $0 \le s  \le \tfrac 32$.
 Similarly, $S^* (1-H R'(\lambda_0+z)) = S^*- S_I^*R'(\lambda_0) + \vO(z)$ in $\vL(H^{-1}_{s}, \vH^{-1}_{s})$.
 One obtains then from \eqref{eq:formE}, (\ref{term1}),
 \eqref{eq:leadFree} and \eqref{eq:ResFor} that
\begin{equation}
R(\lambda_0 + z) = R'(\lambda_0) +  (S-R'(\lambda_0)S_I)D_0 G_0 (S^*- S_I^*R'(\lambda_0)) + \vO(|z|^{\ep})
\end{equation}
in $\vL(-1,s; 1,-s)$ for  $s>1 $,
  proving (\ref{asymRz0}). \\

\subStep {Case 1}  One obtains from \eqref{eq:ResFor} and Proposition
\ref{prop5.9} that
\begin{align*}
 R(\lambda_0 + z) = - E_+(\lambda_0)E_{\vH}(\lambda_0 + z)^{-1} E_-(\lambda_0) + \vO(|z|^{-\f 1 2 +\ep})
\end{align*}
in $\vL(-1, s; 1,-s)$ for $s>1$. Let $u_j =
E_+(\lambda_0)\psi_j$. Since $E_-(\lambda_0)^*= E_+(\lambda_0)$ it
follows from \eqref{e1}  that
\begin{equation*}
E_+(\lambda_0)E_{\vH}(\lambda_0 + z)^{-1} E_-(\lambda_0) = -\frac{\i}{\sqrt{z}} \sum_{j=1}^\kappa \w{u_j, \cdot} u_j +(|{z}|^{-\f 1 2 +\ep}),
\end{equation*}
   proving (\ref{asymRz1}).\\

\subStep {Cases 2 and  3} We can apply  (\ref{e2}) and
\eqref{eq:ResFor}  to obtain
\begin{align*}
  R(\lambda_0 + z) = -z^{-1} B + \vO(|z|^{-1+\ep}),
\end{align*}
where in terms of the spectral projection  $\Pi_0$ of $P$ (for the  eigenvalue zero)
\begin{equation*}
B= E_+(\lambda_0)(\Pi_0 (1-U_1)^{-1} \Pi_0)^{-1} \Pi_0 E_-(\lambda_0).
\end{equation*}

It remains to   check that $B$ is equal to the spectral
projection $\Pi_H$, cf. \cite{Wa2}. Introducing $\tau =
E_+(\lambda_0)\Pi_0$
 it follows that  $\tau^* = \Pi_0 E_-(\lambda_0)$,  and by using the
 properties  $R'(\lambda_0)S =  S^*R'(\lambda_0)=0$ and
 \eqref{eq:formU} we compute
\[
\tau^*\tau = \Pi_0 (  S^*(1-HR'(\lambda_0) )(1-R'(\lambda_0)H)S \Pi_0=  \Pi_0(1-U_1)\Pi_0.
\]
 Consequently
$B$ can be written as
\[
B = \tau (\tau^*\tau)^{-1}\tau^*.
\]
It follows that $B$ is an orthogonal projection with $\ran B \subset \ran
\tau$ and $\ker B = \ker \tau^*$, and therefore in fact $\ran B= \ran
\tau$. Since $\ran \tau$ is equal to  the $\lambda_0$-eigenspace of
$H$, we conclude $B= \Pi_H$ as wanted.
 \ef

For the case where $\rho_0=3$  and $\lambda_0$ is an eigenvalue of $H$
there is no information in Theorem \ref{thm5.10}.
Furthermore we only extracted the leading term asymptotics. Assuming
that all eigenfunctions of $H$ at  $\lambda_0$ have a certain
(weak) power decay we can obtain a resolvent expansion for that case
too, and that expansion will be to  second order without further assumptions,
see Theorem \ref{thm:rhoThree} stated below.

\begin{remark} For  possible higher order expansions the condition
  \eqref{ass5.4} is  needed and give limitations. For example it
  could be that \eqref{ass5.3} is fulfilled for Coulombic systems (for
  which $\rho=1$) with
  a big $\rho_0$ in some cases, but then the condition
  \eqref{ass5.4} would read $\rho_0\leq 4$ and  we dont see how to
  avoid  this restriction. More precisely, for example  we can not improve
  \eqref{eq:Uassp} to  be valid for  any $\rho_0> 4$ for such systems
  (not even with  $s=\rho_0/2$ only).
  This is rooted in the fact that we  do not know how to estimate the diagonal parts of the
  matrix $K(\lambda_0)$ better than $\vO(\inp{x_j}^{-4})$ for $\rho=1$.
 The main reason for not studying higher order resolvent expansions is
that
the conditions would be too strong to be interesting for  Coulombic systems.
  \end{remark}

\begin{thm}\label{thm:rhoThree} Assume \eqref{ass5.1},
   \eqref{ass5.3} and $\rho_0=3$. Suppose $\lambda_0$ is an
  eigenvalue of $H$ and that  $\ran
  \Pi_H\subset L^2_t$ for some $t>3/2$. Then the following
  asymptotics hold for $R(\lambda_0 + z)$ as an operator from
  $H^{-1}_{s}$ to $H^{1}_{-s}$, $s>1$,    for $z\to
0$  and  $z\not\in [0,\infty)$, and $\epsilon=\epsilon(s)>0$.
\begin{equation} \label{asymRz1aa}
R(\lambda_0 + z)=-z^{-1} \Pi_H + \frac{\i}{\sqrt{z}} \sum_{j=1}^\kappa \w{u_j, \cdot} u_j + \vO(|z|^{-\f 1 2 +\ep}).
\end{equation} Here the second  term on the right-hand side appears only if $\kappa\geq1$, that is if $\lambda_0$ is a
resonance of $H$,
 and in that case
the
states $u_j$ constitute a basis of   resonance states of $H$
(recall that either $\kappa=1$ and $\kappa=2$).  If on the other hand
$\lambda_0$ is not a
resonance of $H$, then $R(\lambda_0 + z)+z^{-1} \Pi_H $ has a limit in
norm as  $z\to
0$.
  \end{thm}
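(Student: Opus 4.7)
The plan is to remove the eigenvalue by perturbation and reduce to a case of Theorem \ref{thm5.10} for which $\rho_0=3$ suffices. For small $\sigma>0$ I introduce the self-adjoint operator $H_\sigma=H-\sigma\Pi_H$, which coincides with $H$ on $(\ran\Pi_H)^\perp$ and acts as $(\lambda_0-\sigma)$ on $\ran\Pi_H$. Since both $H$ and $H_\sigma$ commute with $\Pi_H$, a direct calculation on the two components of the decomposition $L^2(\bX)=\ran\Pi_H\oplus(\ran\Pi_H)^\perp$ yields the identity
\begin{equation*}
R(\lambda_0+z)+z^{-1}\Pi_H=R_\sigma(\lambda_0+z)+(\sigma+z)^{-1}\Pi_H,
\end{equation*}
where $R_\sigma(w)=(H_\sigma-w)^{-1}$. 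Note that $\lambda_0$ is not an eigenvalue of $H_\sigma$: any $L^2$ eigenfunction at $\lambda_0$ would have to lie in $\ran\Pi_H$, but there $H_\sigma$ acts as $\lambda_0-\sigma\neq\lambda_0$. Moreover for sufficiently small $\sigma$, $\lambda_0\notin\sigma(H'_\sigma)$, since by Lemma \ref{lemma2.1} and the finite-rank nature of $\Pi'\Pi_H\Pi'$ the gap around $\lambda_0$ in $\sigma(H')$ is preserved under a small perturbation.

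Next I would apply Theorem \ref{thm5.10} to $H_\sigma$. The conditions \eqref{ass5.1} and \eqref{ass5.3} transfer from $H$ to $H_\sigma$: the decay hypothesis $\ran\Pi_H\subset L^2_t$ with $t>3/2$ ensures that the non-local perturbation $-\sigma\Pi_H$ has integral kernel in $L^2_t\otimes L^2_t$ and is harmless in all the weighted and microlocal estimates underlying the Grushin and Mourre machinery, exactly as outlined in Remark \ref{remark:The case lambda0insigma} \ref{item:25}. Since $\lambda_0\notin\sigma_{\pp}(H_\sigma)$, only Case 0 or Case 1 of Theorem \ref{thm5.10} applies to $H_\sigma$, and both give
\begin{equation*}
R_\sigma(\lambda_0+z)=\tfrac{\i}{\sqrt z}\sum_{j=1}^{\kappa_\sigma}\w{u_j^\sigma,\cdot}u_j^\sigma+\vO(|z|^{-1/2+\epsilon}),
\end{equation*}
with $\kappa_\sigma$ the resonance multiplicity of $H_\sigma$ at $\lambda_0$, interpreted as $\kappa_\sigma=0$ in Case 0. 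Substituting into the resolvent identity above and observing that $(\sigma+z)^{-1}\Pi_H$ has a finite limit as $z\to 0$ gives the claimed leading-order form of the expansion, pending identification of the coefficients.

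The remaining task, and the main technical obstacle, is to identify the $u_j^\sigma$ with a basis of resonance states of $H$ and to show $\kappa_\sigma=\kappa$. For any $\widetilde u\in H^1_{(-1/2)^-}$ with $(H_\sigma-\lambda_0)\widetilde u=0$, one has $(H-\lambda_0)\widetilde u=\sigma\Pi_H\widetilde u$. For each eigenfunction $\varphi\in\ran\Pi_H$ the distributional pairing $\w{\varphi,(H-\lambda_0)\widetilde u}=\w{(H-\lambda_0)\varphi,\widetilde u}=0$ is legitimate because $t>3/2$ places $\varphi\in H^1_t$ in the dual of $H^{-1}_{(-1/2)^-}$; this yields $\sigma\Pi_H\widetilde u=0$, so $\Pi_H\widetilde u=0$ and $\widetilde u$ is simultaneously a resonance of $H$ lying in $\ker\Pi_H$. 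Conversely any resonance $u$ of $H$ has $(1-\Pi_H)u$ in this same class, and since $u\mapsto(1-\Pi_H)u$ is injective on $\ker(H-\lambda_0)_{H^1_{(-1/2)^-}}/\ker(H-\lambda_0)_{H^1}$, the dimensions match: $\kappa_\sigma=\kappa$. The $u_j^\sigma$ may therefore be taken as a basis of resonance states of $H$. The concluding assertion when $\lambda_0$ is not a resonance of $H$ follows from Case 0 of Theorem \ref{thm5.10} applied to $H_\sigma$, which provides an $\vO(1)$ norm limit.
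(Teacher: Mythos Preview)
Your approach is exactly the paper's: introduce $H_\sigma=H-\sigma\Pi_H$, use the resolvent identity $R(\lambda_0+z)=R_\sigma(\lambda_0+z)-z^{-1}\Pi_H+(z+\sigma)^{-1}\Pi_H$, apply the Case~0/Case~1 analysis of Theorem~\ref{thm5.10} to $H_\sigma$, and identify resonance states of $H_\sigma$ with those of $H$ via $\Pi_H u_j=0$. The identification $\kappa_\sigma=\kappa$ you give is also essentially the paper's.

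There is, however, one technical point you gloss over that the paper flags explicitly. You propose to \emph{apply} Theorem~\ref{thm5.10} to $H_\sigma$, but the paper instead \emph{mimics its proof}, and for a reason: the Case~1 argument relies on Theorem~\ref{thm5.2}, in particular on the characterization \eqref{nonresonant1} (a resonance state is an $L^2$-eigenfunction iff the functionals $\inp{1,U_{j1}v_1+U_{j2}v_2}$ vanish), whose proof uses the decay of $U$ via \eqref{eq5.2a}. For $H_\sigma$ the effective potential $U_\sigma$ acquires a non-local finite-rank piece from $\Pi_H$, and the derivation of the analogue of \eqref{eq5.2a} for $U_\sigma$ requires $\ran\Pi_H\subset L^2_t$ with $t>2$, not merely $t>3/2$. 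So for $t\in(3/2,2]$ the hypotheses of Theorem~\ref{thm5.10} do not transfer cleanly to $H_\sigma$, contrary to what you assert.

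The paper's workaround is short: suppose $P_\sigma v=0$ and the $U_\sigma$-functionals vanish. The proof of \eqref{eq5.2a} still gives $v\in\vH^1_{(-1/2)^+}$, hence the inverse eigentransform $u\in H^1_{(-1/2)^+}$. Since $\Pi_H u=0$ (by your own argument), $(H-\lambda_0)u=(H_\sigma-\lambda_0)u=0$, and now Theorem~\ref{thm:short-effect-potent} applied to $H$ bootstraps $u$ into $L^2$. But $\lambda_0\notin\sigma_{\pp}(H_\sigma)$, so $u=0$ and $v=0$. This gives the needed substitute \eqref{nonresonant11} for \eqref{nonresonant1}, after which the Case~1 proof runs for $H_\sigma$ as you intend. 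Your proposal is correct once this gap is filled.
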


  One can show Theorem
  \ref{thm:rhoThree} by mimicking the proof of the assertion
  on Case 1 in  Theorem \ref{thm5.10} with $H$ replaced by $
  H_\sigma:=H-\sigma\Pi_H$, $\sigma> 0$. Note that $\lambda_0$ is not an
  eigenvalue of $ H_\sigma$. Letting
  $R_\sigma(\zeta)=(H_\sigma-\zeta)^{-1}$ we decompose
  \begin{align}\label{eq:sigmeFormR}
    R(\lambda_0 + z)=R_\sigma(\lambda_0 + z)-z^{-1} \Pi_H
    +(z+\sigma)^{-1} \Pi_H,
  \end{align} and conclude that it suffices to show that  for a fixed small $\sigma> 0$
\begin{equation} \label{asymRz1aa2}
R_\sigma(\lambda_0 + z)= \frac{\i}{\sqrt{z}} \sum_{j=1}^{\kappa} \w{u_j, \cdot} u_j + \vO(|z|^{-\f 1 2 +\ep}).
\end{equation}

This can largely  be shown
as before since  the last assumption in \eqref{ass5.1} is valid with
$H'$ replaced by $H_\sigma':=\Pi' H_\sigma\Pi'$ for  $\sigma> 0$
small (by an argument
of perturbation). We can largely   mimic the previous procedure, now for $H_\sigma$
rather than for $H$, cf.  Remark \ref{remark:The case
  lambda0insigma} \ref{item:25} which will be particularly relevant
for similar constructions in the next sections. However we need to argue
that the resonance  states $u_j$ in \eqref{asymRz1aa2}, which a priori are
resonance  states of $H_\sigma$,  in fact also are resonance  states
  of
$H$. For that it suffices in turn to note that $\Pi_Hu_j=0$, which
follows from the fact that
$-\sigma\Pi_Hu_j=\Pi_H(H_\sigma-\lambda_0)u_j=0$. If
$\kappa_\sigma$ denotes the dimension of the space of resonance
states
of $H_\sigma$, we see that $\kappa_\sigma\leq \kappa$. The converse
$\kappa_\sigma\geq \kappa$ follows by modifying any given base of resonance  states
of $H$ by projecting out the corresponding components in $\ran \Pi_H
$, yielding a base of resonance  states
of $H_\sigma$. Whence our use of the notation $\kappa$ in
\eqref{asymRz1aa2}, rather than the initially correct quantity $\kappa_\sigma$,
is justified.

Another comment relates to \eqref{eq5.2a}. The added term $-\sigma
\Pi_H$ is not local, and if we introduce the notation $U_\sigma$ and
$P_\sigma$ for the corresponding reduced quantities, the corresponding
version of \eqref{eq5.2a} does not follow the same way unless $\ran
  \Pi_H\subset L^2_t$ for some $t>2$. (But we only assume
this for  some $t> 3/2$, so if $t\in ( 3/2,2]$ a different
argument is needed.)  Nevertheless we have the following version
of \eqref{nonresonant1} for any  $v =
                                       (v_1, v_2)\in
                                       \vH^{1}_{(-1/2)^-} $ and $P_\sigma v
                                       =0$:
\begin{equation} \label{nonresonant11}
                                       v=0 \Longleftrightarrow
                                       \inp{ 1,
                                         U_{\sigma,j1}v_1 + U_{\sigma,j2}v_2}
                                       =0;\quad  j=1,2.
                                     \end{equation}
                                     This follows by noting that if
                                     the functionals to the right
                                     vanish on $v$, then the proof of
                                     \eqref{eq5.2a} shows that $v\in
                                     \vH^{1}_{(-1/2)^+} $. Then the
                                     corresponding $u\in
                                     H^{1}_{(-1/2)^+} $ (given by the
                                     inverse eigentransform), and this
                                     cannot hold unless $u\in L^2$ since
                                     by the above discussion
                                     $(H-\lambda_0)u=(H_\sigma-\lambda_0)u=0$,
                                     and therefore we can conclude that
                                     $u\in L^2$ using  Theorem
                                     \ref{thm:short-effect-potent}.
                                     Since $\lambda_0$ is not an
  eigenvalue of $ H_\sigma$  it follows that   $u=0$ and then in turn
  that $v=0$. With \eqref{nonresonant11}
                                     in place we can   mimic the proof of the assertion
  on Case 1 in  Theorem \ref{thm5.10} with $H$ replaced by $
  H_\sigma=H-\sigma\Pi_H$ and conclude \eqref{asymRz1aa2}.

\begin{remarks}\label{remark:resolv-asympt-nearSimp}
  \begin{enumerate} [1)]
  \item
We used, among others, the conditions $\lambda_0
  =\Sigma_2 \not\in \sigma_\pp(H')$ and $\vF_1\cap\vF_2 =\{0\}$.  The
  case $\lambda_0 \in \sigma_\pp(H')$ can be treated as in \cite{Wa2}
  when $\Sigma_2$ is a non-multiple two-cluster threshold. The same
  statements as those of Theorem \ref{thm5.10} remain true, but their
  proof requires a more complicated Grushin reduction, cf. Sections
  \ref{sec:The case where lambda0insigma} and  \ref{sec:CoulRellich}. If
  $\vF_1\cap\vF_2 \neq \{0\}$, one can still reduce the spectral
  analysis of $H$ at $\lambda_0$ to a one-body problem, cf.  Sections
  \ref{sec:The case when the condition {ass2}} and
  \ref{sec:CoulRellich}. The conditions $\dim \bX_j =3$ and
  $\rho_0\geq 3$ of \eqref{ass5.1} and \eqref{ass5.3}, as well as
  \eqref{ass5.4}, simplify the resolvent expansion at the threshold
  $\lambda_0$ and fit well with the physics models of Sections
  \ref{$N$-body Schr\"odinger operators} and \ref{$N$-body
    Schr\"odinger operators with infinite mass nuclei}.
\item \label{item:decayEig} The very last assertion of Theorem \ref{thm:rhoThree} only needs  $\ran
  \Pi_H\subset L^2_t$ for some $t> 1$ (rather than this decay  for
  some $t>3/2$, as  required in Theorem \ref{thm:rhoThree}) due to the fact that $\lambda_0$, under the given hypotheses, is neither an eigenvalue
  nor a resonance of   $H_\sigma$.
\item \label{item:decayEig2}
The cases
  where $\dim \bX_j \neq 3$ or the effective potential decays slowly
  or has a critically decaying part were studied in Chapter 4 and we
  obtained some threshold spectral properties of $H$.  One may for these
  cases try to combine the existing results for one-body problems and
  the reduction made in Chapter 2 to establish the asymptotics of the
  resolvent of $H$ near $\lambda_0$. Although leaving this issue
  partly as an open question to
  the interested reader, see however Section
  \ref{sec:resolv-asympt-physics models near}, let us remark that the effectively slowly
  decaying potential cases corresponding to  Subsections
  \ref{subsec:negat-effect-potent},
  \ref{subsec:positive-effect-potent},
  \ref{subsec:negat-effect-potents} and \ref{subsec:psdep} are
  relatively easy to treat due to the fact that the threshold is not a
  resonance and eigenfunctions (if existing) have arbitrary polynomial
  decay (cf. Theorem \ref{thm:physical-modelsRell} \ref{item:PRel1} in the context of
  physics models).
We make use of this simplicity in Chapter
  \ref{Applications}, see  Remark \ref{remark:formRes}. This is  in
  the context of a possible higher two-cluster
  threshold subject to  conditions of Subsections
  \ref{subsec:negat-effect-potent},
  \ref{subsec:positive-effect-potent},
  \ref{subsec:negat-effect-potents} and \ref{subsec:psdep}. However we
  do treat a special case with a critical  decaying part of  the
  effective potential, see Subsection \ref{subsec:Some better
  arguments}. Using the theory of the present chapter we treat in
Subsections   \ref{subsec: caseIII} and \ref{subsec: An example of
  transmission} the fastly decaying
case for the physics models with the occurrence  of a resonance  at a possible higher two-cluster
  threshold.
\end{enumerate}
\end{remarks}

\subsection{Resolvent asymptotics near higher two-cluster thresholds}\label{sec:resolv-asympt-nearHIGHER}

Assume (\ref{ass5.1b}) and  (\ref{ass5.2b})--(\ref{ass5.4b})
 as in Subsection  \ref{sec:case-lambda_0sub}. We shall tacitly use
 the notation from that  subsection. Let  $P(z) = -
 E_{\vH}(\lambda_0 + z)$ for $z$ near zero and $\Im z\neq 0$. Then
$P(z)$ can be written as
\begin{equation}
P(z) = P_0 + U(z) -z,
\end{equation}
where   $P_0 = -\Delta_{x_{a_0}} {\textbf 1}_m $  and $U(z) =  S^* I_0
S - S^*I_0R'(\lambda_0 + z)I_0S$. Here ${\textbf 1}_m$ denotes the
identity matrix in $\vM_{m\times m}(\C)$.
Applying Theorem 3.18, one has  for  $\pm \Im z >0$ and  $j\in \N_0$
with $j<\rho_1$
\begin{align} \label{Upm}
U(z) &=    U^\pm  + z U_1^\pm  + \cdots + z^j U_j^\pm + \vO(|z|^{j+\ep}),
\end{align}
as bounded operators from $\vH^{1}_{-s}$ to $\vH^{-1}_{s}$ for any $  s <
\rho_1-j $ and for some $\ep>0$ depending on $s$ and
$\rho$. Furthermore
\begin{align*}
U^\pm &= S^* I_0 S - S^*I_0R'(\lambda_0 \pm \i0)I_0S, \\
U_j^\pm &= - S^*I_0(R'(\lambda_0 \pm \i0))^{j+1}I_0S; \quad j \ge 1.
\end{align*} See the proof of Lemma \ref{lemma:Wresolv-asympt-near-1}
for an elaboration for  the relevant cases (we shall only need
\eqref{Upm} for $j=0$ and
$j= 1$).

Recall that $\vK = \ker_{\vH^{1}_{ -s}}(1+ G_0U^\pm)$ for
  $s\in(\tfrac 12,  \tfrac 32)$, but that this space is independent of
  such $s$ as well as of the sign $\pm$.
\begin{lemma}\label{lem5.15} The quadratic form $q(\cdot) = \w{ (1-
    U_1^\pm)\cdot, \cdot}_0$ defined on $\vK\cap \vH$ is positive
  definite and independent of sign $\pm$.
\end{lemma}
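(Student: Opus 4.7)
The plan is to establish, for $f\in \vK\cap\vH$, the representation
\begin{equation*}
  q(f) \;=\; \langle (1-U_1^\pm)f,f\rangle_0 \;=\; \|f\|_0^2 + \bigl\|R'(\lambda_0\pm\i 0)I_0Sf\bigr\|^2,
\end{equation*}
from which both conclusions of the lemma follow immediately: the right-hand side is manifestly $\geq \|f\|_0^2$, which forces $f=0$ whenever $q(f)=0$ (positive definiteness), and Theorem \ref{prop5.5} \ref{item:plusminuss} identifies the two vectors $R'(\lambda_0+\i 0)I_0Sf$ and $R'(\lambda_0-\i 0)I_0Sf$ on $\vK$, so the right-hand side does not depend on the choice of sign.

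First I would check that $F:=I_0Sf$ has enough decay in $x_{a_0}$ for $R'(\lambda_0\pm\i 0)F$ to be defined via the limiting absorption principle \eqref{eq5.1.42}. This uses that $\lambda_0$ is a non-threshold eigenvalue of $H^{a_0}$, so the eigenfunctions $\varphi_1,\ldots,\varphi_m$ decay exponentially in $x^{a_0}$ by \cite{FH}; combined with the smooth-plus-compactly-supported structure of $I_0$ under Condition \ref{cond:smooth} and $f\in\vH$, this yields $F\in L^2_t(\bX)$ for some $t>1/2$, so that $R'(\lambda_0\pm\i 0)F\in L^2_{-t}(\bX)$ is well defined.

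Next, using $R'(\lambda_0+\i 0)^* = R'(\lambda_0-\i 0)$ and the definition
$U_1^\pm = -S^*I_0\bigl(R'(\lambda_0\pm\i 0)\bigr)^2I_0S$, I would compute
\begin{equation*}
  \langle U_1^\pm f,f\rangle_0 \;=\; -\bigl\langle R'(\lambda_0\pm\i 0)F,\,R'(\lambda_0\mp\i 0)F\bigr\rangle.
\end{equation*}
By Theorem \ref{prop5.5} \ref{item:plusminuss}, for $f\in\vK$ one has the key identity $R'(\lambda_0+\i 0)F = R'(\lambda_0-\i 0)F$, and the pairing on the right collapses to $-\|R'(\lambda_0\pm\i 0)F\|^2$, which is real, non-positive, and independent of sign. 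Adding $\|f\|_0^2$ gives the displayed formula for $q(f)$, and the two assertions of the lemma follow as noted.

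The only delicate point is the justification of the two moves above: that $F=I_0Sf$ lies in a weighted $L^2$-space for which the LAP applies, and that Theorem \ref{prop5.5} \ref{item:plusminuss} can be invoked so that the pairing becomes the squared norm. Both hinge on the a priori decay $\vK\subset \vH^1_{(-1/2)^-}$ together with the exponential decay of the cluster eigenstates $\varphi_j$, so once these inputs are recorded the computation is routine. No genuine obstacle is expected.
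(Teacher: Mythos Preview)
Your approach coincides with the paper's: both aim at the identity $q(f)=\|f\|_0^2+\|R'(\lambda_0\pm\i 0)I_0Sf\|^2$ and read off positive definiteness and sign-independence from it. The strategy is correct, but your decay bookkeeping has a gap. From $F=I_0Sf\in L^2_t$ with only $t>1/2$ the LAP gives $R'(\lambda_0\pm\i 0)F\in L^2_{-t}$; the pairing $\langle R'^+F,R'^-F\rangle$ is then between two elements of $L^2_{-t}$ and is not a priori a well-defined $L^2$ inner product, so your assertion that it ``collapses to $-\|R'^\pm F\|^2$'' is not justified as stated. Even feeding $t>1/2$ into \eqref{eq:2bnd} would only yield $R'^\pm F\in L^2_{t-1}\subset L^2_{(-1/2)^+}$, which need not lie in $L^2$.

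The missing input is twofold. First, for $f\in\vH$ (not merely $f\in\vH^1_{(-1/2)^-}$) one has the stronger decay $\Pi'I_0Sf\in L^2_{(3/2)^+}$: by \eqref{eq:90} the factor $\Pi'I_0\Pi$ gains $\rho+1\geq 2$ weights, so your estimate $t>1/2$ substantially undersells what is available. Second, combining this with $R'^+F=R'^-F$ (your Theorem~\ref{prop5.5}~\ref{item:plusminuss}) and the microlocal estimate \eqref{eq:2bnd}/Corollary~\ref{cor:microlocal-bounds} gives $R'^\pm F\in L^2_{(1/2)^+}\subset L^2$, so $\|R'^\pm F\|^2<\infty$ and the computation then goes through verbatim. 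The paper records exactly this point (``$R'(\lambda_0\pm\i 0)I_0Sv\in L^2(\bX)$'' with a reference to the microlocal estimates); your closing paragraph mentions the cluster-eigenfunction decay but not this microlocal upgrade, which is the step that actually makes the squared norm finite.
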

\begin{proof} For $v \in \vK\cap\vH$ one has $\Pi'I_0 Sv \in
  \vH^{-1}_{(3/2)^+}$, allowing us to conclude that  $R'(\lambda_0 \pm
  \i0)^2 I_0Sv\in \vH^{1}_{(-3/2)^-}$. Hence $q$ is
  well-defined. Now  $R'(\lambda_0 + \i0) I_0
  Sv = R'(\lambda_0 - \i0) I_0 Sv $
  and $R'(\lambda_0 \pm \i0) I_0 Sv \in L^2(\bX)$ (cf. Theorem \ref
  {thm:short-effect-potent2}, Theorem \ref{prop5.5} (a)  and the part of the
  proof of Theorem
\ref{prop5.5} based on microlocal resolvent estimates). These facts
allow us to compute
\begin{align*}
 q(v) &=  \w{ (1+ S^* I_0R'\lambda_0 + \i0)^2 I_0 Sv, v} \\
 & = \|v\|^2 +\w{R'(\lambda_0 + \i0) I_0 Sv, R'(\lambda_0 - \i0) I_0 Sv} \\
  &=  \|v\|^2 + \|R'(\lambda_0 + \i0) I_0 Sv\|^2 \ge \|v\|^2,
\end{align*}
proving that  $q(\cdot)$ is positive definite on $\vK\cap\vH$.
\end{proof}

We shall  study the asymptotics of the resolvent $R(\lambda_0+z)
=(H-\lambda_0-z)^{-1}$ as $z\to 0$ and $\pm \Im z > 0$.    Set \be r_0(z) = (P_0
-z)^{-1}; \quad z\not\in [0,\infty).  \ee To obtain the asymptotic expansion
of $P(z)^{-1}$ for $z\not\in \R$ and $z$ near zero, we use as in
Section \ref{sec:resolv-asympt-nearLOWEST}  the resolvent equation
\be P(z)^{-1} = ( 1 + r_0(z) U(z))^{-1} r_0(z) \ee
and apply the Grushin method to study $( 1 + r_0(z) U(z))^{-1}$. For
simplicity  we shall mostly consider  the case $\Im z>0$ only.\\

Let $N \ge 1$ and $s > N + \f 1 2$. Similarly to the (partly scalar) case
studied in Section \ref{sec:resolv-asympt-nearLOWEST}, the free resolvent $r_0(z)$ can be expanded
$\vL(-1,s; 1,-s)$  as
\begin{equation}\label{eq:freeres}
  r_0(z) = G_0 + \sqrt{z} G_1 + \cdots +z^{\f N 2} G_N  + \vO(|z|^{\f N 2 +\ep}),
\end{equation}
for some $\ep >0$ depending on $N$ and $s$, where $G_j $ is the
$m\times m$ diagonal matrix (with operator-valued entries) whose
integral kernel is given by \bea \tfrac{\i^j}{4\pi} |x_{a_0}
-y_{a_0}|^{j-1}{\textbf 1}_m;\quad j\in{\N_0}.  \eea

In addition to the higher order expansions
\eqref{eq:freeres} we   record the zero'th order expansion \eqref{eq:basicHo} (with an
obvious change of interpretation).

  Since $U^+ = U^-$ on $\vK$ it follows that
$\w{-U^+ \cdot, \cdot}$ is a quadratic form   on $\vK$. By the
identity   $\w{-U^+ \cdot, \cdot} = \w{P_0 \cdot, \cdot}$ and an
integration by parts it follows that $\w{-U^+\cdot, \cdot}$  is a
positive  quadratic form.
Hence there exists a basis  $\{\phi_1, \dots, \phi_\mu\}$ of $\vK$
which is orthonormal with respect to $\w{-U^+ \cdot, \cdot}$. If zero
is a resonance of $P^+$, we denote by $\kappa$ its multiplicity. We
use the convention that  the functions $\phi_j$ for $1\le j \le \kappa$
are  resonance states  of $P^+$ while    the other $\phi_j$'s are eigenfunctions of $P^+$. \\

With the above notation, we can apply the Grushin method  to obtain
asymptotic expansions  of $W(z)^{-1}$ as $z\to 0$, $\Im z >0$, where
\begin{equation*}
W(z) := 1 + r_0(z)U(z).
\end{equation*}

 The leading order expansion  of
$W(z)$ is  given by
\begin{equation} \label{Wz0b}
W(z) = 1+ G_0 U^+ + \vO(|z|^{\epsilon})\in \vL(1, -s'; 1, -s),
\end{equation}
for $\Im z >0$ and $\tfrac 12
 <s, s' <\rho_1$ and for some $\epsilon=\epsilon(s,s')>0$. In particular this holds under the minimum
 conditions $\rho=1$ and  $\rho_1=\tfrac 32$, see  Lemma
 \ref{lemma:Wresolv-asympt-near-1} \ref{item:22} below and the
 outlined proof.

A higher order  expansion of
 $W(z)$  needs a stronger condition than
 $\rho_1=\tfrac 32$ and therefore  also $\rho>1$. With such condition we can apply Theorem
\ref{thm:powers}  on higher  powers of the reduced resolvent and
obtain the following extension of \eqref{Wz0b}. The assertions
\ref{item:20} and \ref{item:21} clearly exclude the interesting case
$\rho=1$. Nevertheless we will later show a higher order  expansion of
 $W(z)\phi$ when applied to vectors $\phi\in \vK$, even when  $\rho=1$,
 see the proof of Lemma \ref{prop5.17}. This will in fact yield a non-trivial
 resolvent expansion in the exceptional  case of the first kind  when  $\rho=1$.

\begin{lemma} \label{lemma:Wresolv-asympt-near-1} Suppose  \eqref{ass5.1b} and
  \eqref{ass5.2b}--\eqref{ass5.4b}. In terms of the quantities
\begin{equation} \label{Wjb}
W_0  =  1+ G_0 U^+, \quad  W_1 =  G_1 U^+, \quad W_2  =G_2 U^+ +G_0U^+_1,
\end{equation}
the following expansions in $z$ (with  $\Im z >0$)  hold for some $\epsilon=\epsilon(s,s')>0$.
\begin{subequations}
\begin{enumerate}[\rm(a)]
\item \label{item:22} For $ s, s' \in (\f{1}{2}, \rho_1)$, $s\geq s'$,
  \begin{align}
    \label{eq:repeat}W(z) = W_0+ \vO(|z|^{\epsilon})\in \vL(1, -s'; 1, -s).
  \end{align}
\item \label{item:20} If $\rho_1 > \f 3 2$, then for $ s, s' \in
  (\f{3}{2}, \rho_1)$ with  $s\geq s'$,
\begin{align} \label{Wz1b}
W(z) = W_0 + \sqrt{z} W_1 +  \vO(|z|^{\f 1 2 +\epsilon})\in \vL(1, -s'; 1, -s).
\end{align}
\item \label{item:21} If  $\rho_1> \f 5 2$, then for  $ s \in
  (\f{5}{2}, \rho_1)$, $s\geq s'$  and $  s' \in  (\f{3}{2}, \rho_1-1)$
\begin{align} \label{Wz2b}
W(z) =  W_0 + \sqrt{z} W_1 + z W_2 +  \vO(|z|^{1+\epsilon})\in \vL(1, -s'; 1, -s).
\end{align}
\end{enumerate}
  \end{subequations}
\end{lemma}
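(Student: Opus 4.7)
The plan is to expand the two factors of $W(z) = 1 + r_0(z)U(z)$ separately and then multiply. The free-resolvent side is controlled by the H\"older bound \eqref{eq:basicHo} and the higher-order expansion \eqref{eq:freeres} in the variable $\sqrt{z}$. For $U(z) = S^*I_0 S - S^*I_0 R'(\lambda_0+z)I_0 S$, I would iterate the resolvent identity
\begin{align*}
R'(\lambda_0+z) = R'(\lambda_0+\i 0) + z\, R'(\lambda_0+\i 0)R'(\lambda_0+z)
\end{align*}
to produce the Taylor-type expansion \eqref{Upm} in integer powers of $z$, with coefficients $U^+_j = -S^*I_0 R'(\lambda_0+\i 0)^{j+1}I_0 S$. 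Control of the powers $R'(\lambda_0+\i 0)^{j+1}$ between weighted $L^2$-spaces is supplied by Theorem \ref{thm:powers} applied to $\breve R$ and transferred to $R'$ via \eqref{eq:basresbreve00}; this is what caps the order by $\rho_1$, explaining the hypotheses $\rho_1 > 3/2$ and $\rho_1 > 5/2$ in (b) and (c) respectively.

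For part (a) I would write
\begin{align*}
W(z) - W_0 = r_0(z)\bigl(U(z) - U^+\bigr) + \bigl(r_0(z) - G_0\bigr)U^+,
\end{align*}
and bound each summand in $\vL(1,-s';1,-s)$ by $O(|z|^\epsilon)$: the first via \eqref{Upm} at $j=0$ feeding into $r_0(z)$ (uniformly bounded on the relevant weighted spaces for $z$ off-axis near zero); the second via \eqref{eq:basicHo} after $U^+$ produces an output in a weighted $\vH^{-1}$-space compatible with the constraint $s\geq s'$. The requirements $s,s'\in(1/2,\rho_1)$ and $s\geq s'$ are precisely tuned so that \eqref{eq:basicHo} aligns with the mapping properties of $U^+$ coming from \eqref{ass5.3b} and the LAP bound \eqref{eq5.1.42}.

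For (b) and (c) I would then multiply the expansions formally. Writing
\begin{align*}
r_0(z)U(z) = \bigl(G_0 + \sqrt{z}\,G_1 + z G_2 + \cdots\bigr)\bigl(U^+ + z U^+_1 + \cdots\bigr)
\end{align*}
and collecting powers of $z^{1/2}$ yields $W_0 - 1$ at order $1$, $\sqrt{z}\,G_1 U^+ = \sqrt{z}\,W_1$ at order $z^{1/2}$, and $z(G_2 U^+ + G_0 U^+_1) = zW_2$ at order $z$, while the cross-term $z^{3/2}G_1 U^+_1$ is absorbed in the $O(|z|^{1+\epsilon})$ remainder. Part (b) truncates at the $\sqrt{z}$ level using \eqref{Upm} at $j=1$ and \eqref{eq:freeres} at $N=1$; part (c) uses \eqref{Upm} at $j=1$ and \eqref{eq:freeres} at $N=2$. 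The asymmetric constraint $s\in(5/2,\rho_1)$ versus $s'\in(3/2,\rho_1-1)$ in (c) reflects that the coefficient $U^+_1$ (carrying a factor $R'(\lambda_0+\i 0)^2$) consumes one extra unit of weight on the input side, while $G_2$ needs a weight budget of $5$ in total on the free-resolvent side.

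The principal obstacle is the simultaneous weight bookkeeping: at each order the intermediate outputs of the Taylor coefficients of $U(z)$ must land inside the range of validity of both the corresponding $G_k$-bound in \eqref{eq:freeres} (with its $s+s'>2k$-type conditions) and the power-resolvent estimates of Theorem \ref{thm:powers} (constrained by $\rho_1$). The assumption \eqref{ass5.4b} is what aligns these two constraints conveniently enough to state (a), (b), (c) in the clean form above; without it the admissible weight ranges would fragment into several cases.
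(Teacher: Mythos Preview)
Your proposal is correct and follows essentially the same route as the paper: expand $r_0(z)$ via \eqref{eq:freeres}/\eqref{eq:basicHo}, expand $U(z)$ via the resolvent identity together with the power bounds of Theorem~\ref{thm:powers} (transferred from $\breve R$ to $R'$ through \eqref{eq:basresbreve00}), and then match weights term by term. One point to be aware of: the expansion \eqref{Upm} that you invoke is not a black box available in advance---its H\"older-type remainder is actually \emph{derived} inside this lemma's proof by complex interpolation between the LAP bound $\|\langle x\rangle^{-1/2-\epsilon}R'(\lambda_0+z)\langle x\rangle^{-1/2-\epsilon}\|\le C$ and the derivative bound $\|\langle x\rangle^{-3/2-\epsilon}(R'(\lambda_0+z)-R'(\lambda_0+\i0))\langle x\rangle^{-3/2-\epsilon}\|\le C|z|$ (both consequences of Theorem~\ref{thm:powers}), yielding the fractional rates \eqref{holresol} and \eqref{holresol2}; your ``iterate the resolvent identity'' produces the integer-order coefficients $U_j^+$ but not directly the $|z|^{j+\epsilon}$ remainder, so the interpolation step should be made explicit.
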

\pf
By Theorem \ref{thm:powers}  there exists  for any  $j\in\N$ and $\ep
>0$  a constant $C>0$ such that
\[
\|\w{x}^{\f 1 2 -j-\ep} R'(\lambda_0 +z)^j\w{x}^{\f 1 2 -j-\ep}\| \le C,
\] and therefore
\[ \|\w{x}^{-\f 3 2 -\ep}   \parb{R'(\lambda_0 +z) - R'(\lambda_0 + \i0) } \w{x}^{-\f 3 2 -\ep}\| \le C |z|
\]
for $z$ near zero with  $\Im z >0$. Using complex
interpolation we then obtain  the following H\"older estimate for the reduced
resolvent for  $0<r'<r\leq 1$,
\begin{subequations}
\be \label{holresol}
 \norm{\w{x}^{-\f 1 2 -r}   \parb{R'(\lambda_0 +z) - R'(\lambda_0 + \i0) } \w{x}^{-\f 1 2 -r}} \le C_{r, r'} |z|^{r'}, \quad \Im z >0.
\ee
 In particular this bound yield the expansion (\ref{Upm}) with
 $j=0$.

By a similar  argument it follows that  for  $0<r'<r\leq 1$ and $\Im z >0$,
\be \label{holresol2}
 \norm{\w{x}^{-\f 3 2 -r}   \parb{R'(\lambda_0 +z) - R'(\lambda_0 + \i0)-zR'(\lambda_0 + \i0)^2 } \w{x}^{-\f 3 2 -r}} \le C_{r, r'} |z|^{1+r'},
\ee implying the expansion (\ref{Upm}) with
 $j=1$.
 \end{subequations}

We only prove \ref{item:21} since the pattern of proof of
\eqref{eq:repeat} and \eqref{Wz2b} is the same. So suppose
 $\rho_1> \f 5 2$.  Applying  \eqref{eq:freeres} (with $N=2$)  we first
obtain  that
\[
W(z) = 1 + (G_0 + \sqrt{z} G_1 + zG_2)U(z) + \vO(|z|^{1+\ep})
\]
in $\vL(1, -s'; 1, -s)$ with $ s, s' \in (\f{5}{2}, \rho_1)$. Note
here
that $U(z)\in \vL(1, -s'; -1, s')$ with a uniform bound, cf. Theorems
\ref{thmlapBnd} and \ref{thm:powers} . Making
(again) use of the boundedness of
$G_2$ and the expansion (\ref{Upm}) with $j=0$ we   obtain
\[
G_2U(z) = G_2 U^+ +  \vO(|z|^{\ep})\text{ in }\vL(1, -s'; 1, -s).
\]

It follows from \eqref{holresol} that for   $ s \in
  (\f{5}{2}, \rho_1)$ and $  s' \in  (2, \rho_1-\f{1}{2})$
\bea
\lefteqn{\| \w{x_{a_0}}^{-s}  G_1 (U(z) -U^+) \w{x_{a_0}}^{s'} \|}  \nonumber \\
& \le & C \| \w{x_{a_0}}^{s-1} S^* I_0(R'(\lambda_0 +z) - R'(\lambda_0 + \i0) ) I_0S  \w{x_{a_0}}^{s'} \| \\
& \le & C'   \| \w{x}^{-\rho_1 -\f 1 2 + s-1} (R'(\lambda_0 +z) - R'(\lambda_0 + \i0) ) \w{x}^{-\rho_1 - \f 1 2  + s'} \| \nonumber  \\
& \le & C'' |z|^{\f 1 2 +\delta}; \nonumber
\eea
 this is for some $\delta>0$.

It remains to show that
\[
G_0U(z) = G_0U^+ + z G_0 U_1^+ +   \vO(|z|^{1+\ep})\text{ in }\vL(1, -s'; 1, -s).
\]
 This  expansion for   $ s \in
  (\f{5}{2}, \rho_1)$ and $  s' \in  (\f{3}{2}, \rho_1-1)$  follows
  from   (\ref{Upm}) with $j=1$ (justified by \eqref{holresol2}). We   conclude
\ref{item:21}.
\ef

Let $\vS: \C^\mu \to \vH^1_{-s}$, $s\in(\f{1}{2},\rho_1)$,  be defined  by
\[
\vS c
= \sum_{j=1}^\mu c_j\phi_j,\;\;  c = ( c_1, \dots,  c_\mu) \in \C^\mu,
\]
and $\vS^* : \vH^1_{-s} \to \C^\mu$ by
\[
\vS^*f=( \w{-U^+\phi_1,f} \dots, \w{-U^+\phi_\mu, f}),\quad  f\in \vH^1_{-s}.
\]
Let $Q:  \vH^1_{-s}\to \vH^1_{-s}$ be given by
\[
Qf=\sum_{j=1}^\mu \w{-U^+ \phi_j,f}\phi_j.
\]
One can then prove, cf.  Section \ref{sec:resolv-asympt-nearLOWEST},  that $Q$ is a projection from $\vH^1_{-s}$ onto $\vK$,
 $Q'=1-Q$ is a projection from $\vH^1_{-s}$ onto $  \ran (1+G_0U^+)$
 and that
 \begin{equation}
 D^+_0 = (Q'(1+ G_0U^+)Q')^{-1}Q'
\end{equation}
exists and is continuous on $\vH^1_{-s}$, $s\in(\f{1}{2},\rho_1)$. By an argument
of perturbation, it follows that for $|z|$ small enough and $\Im z
>0$,  $Q'W(z)Q'$ is invertible on $ \ran (1+G_0U^+)$ with continuous inverse. Let
 \[
  D(z) = (Q'W(z)Q')^{-1} Q'.
\]
Then according to the  varying conditions on $\rho_1$ in Lemma
\ref{lemma:Wresolv-asympt-near-1}  one has the following expansions in
$\vL(1,-s; 1,-s)$:
\begin{subequations}
\begin{align}
\label{term1b}
D(z)   &=    D_0 +  \vO(|z|^{\epsilon}),  \mbox{ if }   s \in (\tfrac{1}{2}, \rho_1). \\
D(z)   &=    D_0 + \sqrt{z} D_1 +  \vO(|z|^{\f{1}{2} +\epsilon_1}),
         \mbox{ if } \rho_1 > \tfrac  3 2
\mand  s \in (\tfrac {3}{2}, \rho_1).\label{term2b}
\end{align}
\end{subequations}

Here
\begin{align*}
D_0 = D_0^+\mand  D_1  = -D_0W_1D_0.
\end{align*}

Using the operator $D(z)$, we can establish the following representation formula for $W(z)^{-1}$ for $\Im z >0$:
\begin{equation} \label{repWb}
W(z)^{-1} = \vE(z) - \vE_+(z) \vE_{-+}(z)^{-1} \vE_-(z),
\end{equation}
where
\begin{gather*}
\vE(z)  =  D(z),\quad  \vE_+(z) = \vS - D(z)W(z)\vS, \\
\vE_-(z )  =  \vS^* -\vS^*W(z)D(z),\quad   \vE_{-+}(z) =  \vS^*(-W(z) +  W(z)D(z) W(z))\vS.
\end{gather*}
The operators $\vE_{\pm}(z)$ and $\vE_{-+}(z)$ can be expanded similarly as $D(z)$,
\begin{align*}
\vE_\pm(z)& = \vE_{\pm,0} + \sqrt{z} \vE_{\pm,1}  + z \vE_{\pm,2} + \dots \\
\vE_{-+}(z)& = \vE_{-+,0} + \sqrt{z} \vE_{-+,1}   + z \vE_{-+,2} + \cdots
\end{align*}
More precisely one has the following two lemmas  on these expansions.

\begin{lemma}\label{lem5.16}  Suppose  \eqref{ass5.1b} and  \eqref{ass5.2b}--\eqref{ass5.4b}.  One has in $\vL(\C^\mu, \vH^1_{-s})$ (for the $ +
  $ case) or $\vL( \vH^1_{-s}, \C^\mu)$ (for the $ - $ case):
  \begin{subequations}
 \begin{align}
\vE_{\pm}(z)   &=    \vE_{\pm,0}  +  \vO(|z|^\ep),  \mbox{ if } s \in (\tfrac{1}{2}, \rho_1).  \label{term1pmb}\\
\vE_{\pm}(z)   &=    \vE_{\pm,0} + \sqrt{z} \vE_{\pm, 1 } +
                 \vO(|{z}|^{\f 1 2 +\ep}),  \mbox{ if } \rho_1 >
                 \tfrac 3 2\mand  s \in (\tfrac {3}{2}, \rho_1).   \label{term2pmb}
\end{align}
  \end{subequations}
Here
\begin{align*}
\vE_{+, 0} = \vS, \quad & \vE_{-,0} = \vS^*, \\
\vE_{+,1}=  - D_0W_1\vS, \quad & \vE_{-,1}=  - \vS^*W_1D_0.
\end{align*}
\end{lemma}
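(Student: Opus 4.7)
The plan is to imitate the proof of Lemma \ref{lem5.6}\ref{item:17}, which handles the analogous decomposition at the lowest threshold, and to keep careful track of mapping properties so that the decay rates allowed by the present hypotheses (in particular the minimum $\rho_1 = 3/2$ for the leading-order expansion) are respected. The key structural input is that, by the very definition of $\vS$, its range sits inside $\vK = \ker W_0 = \ker(1 + G_0 U^+)$, so
\[
W_0 \vS = 0, \quad \vS^* W_0 = 0.
\]
This single identity is what improves every expansion by a factor of $|z|^{1/2}$ when composing with $W(z)$ against $\vS$ or $\vS^*$.

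First I would combine the vanishing $W_0 \vS = 0$ with the expansions \eqref{eq:repeat} and \eqref{Wz1b} of Lemma \ref{lemma:Wresolv-asympt-near-1} applied in $\vL(1,-s;1,-s)$ (so with $s = s'$ in that lemma). This gives, for $\Im z > 0$,
\begin{align*}
W(z)\vS &= \vO(|z|^{\epsilon}) \quad \text{in } \vL(\C^\mu, \vH^1_{-s}), \quad s \in (\tfrac12, \rho_1), \\
W(z)\vS &= \sqrt{z}\, W_1 \vS + \vO(|z|^{1/2+\epsilon}) \quad \text{in } \vL(\C^\mu, \vH^1_{-s}), \quad s \in (\tfrac32, \rho_1),
\end{align*}
with the second line requiring $\rho_1 > 3/2$ and some $\epsilon = \epsilon(s) > 0$. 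Note that $W_1\vS = G_1 U^+ \vS$ indeed defines an element of $\vL(\C^\mu,\vH^1_{-s})$ for such $s$, since $U^+\vS \in \vL(\C^\mu,\vH^{-1}_t)$ for $t$ close to $\rho_1$ by \eqref{eq:repla}/the analogous bound for $U^+$, and $G_1$ acts into $\vH^1_{-s}$ for $s > 3/2$.

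Then I would substitute these into the explicit formula $\vE_+(z) = \vS - D(z) W(z) \vS$, using the expansions \eqref{term1b} and \eqref{term2b} for $D(z)$ in $\vL(1,-s;1,-s)$. The leading-order case is immediate: $\vE_+(z) = \vS - D(z) \cdot \vO(|z|^\epsilon) = \vS + \vO(|z|^\epsilon)$, giving \eqref{term1pmb} with $\vE_{+,0} = \vS$. For the half-order refinement, under $\rho_1 > 3/2$ and $s \in (3/2,\rho_1)$, one gets
\[
\vE_+(z) = \vS - (D_0 + \sqrt{z}\,D_1 + \vO(|z|^{1/2+\epsilon}))\bigl(\sqrt{z}\,W_1\vS + \vO(|z|^{1/2+\epsilon})\bigr) = \vS - \sqrt{z}\,D_0 W_1 \vS + \vO(|z|^{1/2+\epsilon}),
\]
which is \eqref{term2pmb} with $\vE_{+,1} = -D_0 W_1 \vS$. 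The proof for $\vE_-(z)$ is obtained by taking adjoints of the above mapping bounds, using $\vS^* W_0 = 0$ in place of $W_0 \vS = 0$.

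The only genuinely delicate point is the bookkeeping of weights: the stated condition $s \in (3/2,\rho_1)$ is essentially saturated because $W_1 = G_1 U^+$ costs one power of $\w{x}^{-1/2}$ beyond what $W_0$ costs (through the constant integral kernel of $G_1$), and $D_0, D_1$ must act boundedly on $\vH^1_{-s}$ with the same $s$ on input and output. I expect no additional obstacle beyond checking that the composed operators $D_0 W_1 \vS$ and $\vS^* W_1 D_0$ genuinely define bounded maps with the claimed domain and codomain, which follows from the uniform bounds on $D(z)$ in $\vL(1,-s;1,-s)$ established via the Grushin reduction together with the mapping property $W_1 \in \vL(1,-s'; 1,-s)$ for $s,s' \in (3/2,\rho_1)$ proved in Lemma \ref{lemma:Wresolv-asympt-near-1}\ref{item:20}.
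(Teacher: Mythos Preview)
Your proposal is correct and follows essentially the same approach as the paper, which simply states that the proof of Lemma~\ref{lem5.16} ``is the same as that for the corresponding assertions in Lemma~\ref{lem5.6} and will not be repeated here.'' You have correctly identified the key ingredient $W_0\vS = 0$, combined it with the expansions of $W(z)$ from Lemma~\ref{lemma:Wresolv-asympt-near-1} and of $D(z)$ from \eqref{term1b}--\eqref{term2b}, and carried out the algebra exactly as in the proof of Lemma~\ref{lem5.6}\ref{item:17}.
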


The proof of Lemma \ref{lem5.16} is the same as  that for the
corresponding assertions in Lemma
\ref{lem5.6} and will not be repeated here. For the leading order
resolvent expansion only (\ref{term1pmb}) is needed.

\begin{subequations}
Formally one can expand the matrix
\begin{align*}
  \vE_{-+}(z) = \vS^*(-W(z) + W(z) D(z) W(z))\vS \in \vM_{\mu\times \mu}(\C)
\end{align*} as
\begin{align}\label{eq:E11}
  \vE_{-+}(z) =  \sqrt{z}  \vE_{-+, 1}(z) + z \vE_{-+, 2} +  \vO(|z|^{1+\ep}),
\end{align}
where
  \begin{align}\label{eq:E12}
    \vE_{-+, 1} & =  - \vS^*W_1\vS,
\\
\vE_{-+, 2}  &= -\vS^*(W_2  - W_1D_0 W_1 )\vS.\label{eq:E13}
  \end{align}
  \end{subequations}
The following lemma   gives a precise meaning to this expansion. Note
that  \eqref{ass5.4b} is not imposed. Furthermore none of the
assertions   (\ref {Wz2b}) or   (\ref {term2b}) is used,  and (\ref
{Wz1b}) is  not  used neither for treating the  first case of the lemma
(consequently the stated result  applies for $\rho=1$ and $\rho_1=\f 3
2$). Note that consequently the strong condition $\rho_1>\tfrac 52$ of
(\ref {Wz2b}) is not relevant for the lemma,  although superficially
this condition might
appear necessary for justifying \eqref{eq:E11}--\eqref{eq:E13}; we can
do without (\ref {Wz2b}).

\begin{lemma} \label{prop5.17} Suppose  \eqref{ass5.1b},
  \eqref{ass5.2b}  and
 \eqref{ass5.3b}.

\Step {Case 1} If zero is an exceptional point of the first kind of $P^+$, then
  \begin{subequations}
\be \label{vEpm1b}
\vE_{-+}(z) =  \sqrt{z}  \vE_{-+, 1} + \vO(|{z}|^{\f 1 2 +\ep})
\ee
for $z$ near zero and $\Im z >0$, where the matrix $\vE_{-+, 1}$ is given by
\be
\vE_{-+, 1} = \i C_0^*C_0
\ee
with the  $j$'th column  of $C_0\in\vM_{m\times
  \mu}(\C)$, $1 \le j \le \mu$,  specified as
\be \label{cphi}
c(\phi_j) =\f{1}{2\sqrt{\pi}}(\w{1, \sum_{k=1}^m U^+_{1k}\phi_{j,k}}_0, \cdots,  \w{1, \sum_{k=1}^m U^+_{mk}\phi_{j,k}}_0) \in \C^m.
\ee
 Here $\phi_j$ is denoted as $\phi_j =(\phi_{j,1}, \cdots, \phi_{j,m}) \in \vK$.
   \end{subequations}

\Step {Case 2} Assume zero is an exceptional point of the second kind of
  $P^+$, $\rho >1$ and $\rho_1>\f 3 2$. Then
\begin{subequations}
\be \label{vEpm2b}
\vE_{-+}(z) =  z  \vE_{-+, 2} + \vO(|{z}|^{1 +\ep})
\ee
for $z$ near zero and $\Im z >0$, where
\be
\vE_{-+, 2} =  ( \w{(1- U_1^+)\phi_i, \phi_j}_0)_{1\le i, j\le \mu}.
\ee
\end{subequations}
 \Step {Case 3}  Assume zero is an exceptional point of the third kind of
   $P^+$, $\rho>1$  and $\rho_1>\f 3 2$. Let $1 \le \kappa <\mu$ denote
   the multiplicity of zero resonance of $P^+$.  Then
\begin{subequations}
\be \label{vEpm3b}
\vE_{-+}(z) = \sqrt{z} \vE_{-+,1} +  z  \vE_{-+, 2} + \vO(|{z}|^{1 +\ep})
\ee
for $z$ near zero and $\Im z >0$, where $\vE_{-+,1}$ and $\vE_{-+,2}$ can be written  as block matrices
\be \label{vEpm3b22}
\vE_{-+,1} = \left( \begin{array}{cc}
\i C_0^*C_0 & 0 \\
0 & 0
\end{array}\right)\quad \mand \quad \vE_{-+, 2} =  \left( \begin{array}{cc}
\vE_{11}^{(2)} & \vE_{12}^{(2)} \\
\vE_{21}^{(2)} & \vE_{22}^{(2)}
\end{array}\right)
\ee
with the  $j$'th column  of $C_0\in\vM_{m\times
  \kappa}(\C)$, $1 \le j \le \kappa$,  equal to
$c(\phi_j)$ and with  $\vE_{22}^{(2)} =  ( \w{(1-
  U_1^+)\phi_i, \phi_j}_0)_{\kappa +1\le i, j\le \mu}$.
\end{subequations}
\end{lemma}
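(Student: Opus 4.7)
The plan is to start from
\[
\vE_{-+}(z)=-\vS^{*}W(z)\vS+\vS^{*}W(z)D(z)W(z)\vS
\]
and to exploit the two cancellations $W_{0}\vS=0$ and $\vS^{*}W_{0}=0$, which hold because $\operatorname{ran}\vS=\vK=\ker(1+G_{0}U^{+})$. The essential new input is the refined decay $U^{\pm}\phi\in\vH^{-1}_{(2\rho_{1}-1/2)^{-}}$ for $\phi\in\vK$ given by Theorem \ref{prop5.5}~\ref{item:plusminuss}. This decay is strictly better than the generic $\vH^{-1}_{\rho_{1}^{-}}$ bound used in Lemma \ref{lemma:Wresolv-asympt-near-1}, and it allows application of the free-resolvent expansion \eqref{eq:freeres} to $U^{+}\vS$ at one order beyond what is available for $W(z)$ acting on all of $\vH^{1}_{-s}$. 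This gain is precisely what enables the conclusions under the minimum hypothesis $\rho_{1}=3/2$ in Case~1 and under $\rho_{1}>3/2$ in Cases~2--3, both weaker than the $\rho_{1}>5/2$ that \eqref{Wz2b} would require.

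For Case~1 I would decompose $W(z)\vS=(r_{0}(z)-G_{0})U^{+}\vS+r_{0}(z)(U(z)-U^{+})\vS$. Applying \eqref{eq:freeres} with $N=1$ together with the bound $U^{+}\vS\subset\vH^{-1}_{(5/2)^{-}}$ (available already when $\rho_{1}=3/2$) yields $(r_{0}(z)-G_{0})U^{+}\vS=\sqrt{z}\,G_{1}U^{+}\vS+\vO(|z|^{1/2+\epsilon})$, while the resolvent identity $R'(\lambda_{0}+z)-R'(\lambda_{0}+\i 0)=zR'(\lambda_{0}+\i 0)R'(\lambda_{0}+z)$ combined with \eqref{holresol} controls the second summand as $r_{0}(z)(U(z)-U^{+})\vS=\vO(|z|^{1-\delta})$ for small $\delta>0$. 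Hence $W(z)\vS=\sqrt{z}\,W_{1}\vS+\vO(|z|^{1/2+\epsilon})$ and, symmetrically, $\vS^{*}W(z)=\sqrt{z}\,\vS^{*}W_{1}+\vO(|z|^{1/2+\epsilon})$. Since $D(z)$ is uniformly bounded by \eqref{term1b}, the quadratic term satisfies $\vS^{*}W(z)D(z)W(z)\vS=\vO(|z|^{1+\epsilon})$, yielding \eqref{vEpm1b} with $\vE_{-+,1}=-\vS^{*}W_{1}\vS$. Unpacking the rank-one integral kernel of $G_{1}$ and the definitions of $\vS$, $\vS^{*}$ and $c(\phi_{j})$ in \eqref{cphi} identifies the matrix entries as $(\vE_{-+,1})_{ij}=\i\,(c(\phi_{i}),c(\phi_{j}))_{\C^{m}}$, i.e.\ $\vE_{-+,1}=\i\,C_{0}^{*}C_{0}$.

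In Cases~2--3 the key new observation is that for an eigenfunction $\phi_{j}\in\vK\cap\vH$, Theorem \ref{prop5.5}~\ref{item:2resDim} forces $\langle 1,(U^{+}\phi_{j})_{k}\rangle_{0}=0$ for every $k$, and hence $G_{1}U^{+}\phi_{j}=W_{1}\phi_{j}=0$. Consequently the $\sqrt{z}$-order contribution vanishes on eigenfunctions; combined with the Case~1 computation on the resonance-state block (present in Case~3), this produces the block structure \eqref{vEpm3b22} of $\vE_{-+,1}$. For the $z$-order term on eigenfunctions I would invoke \eqref{eq:freeres} with $N=2$ (legitimate because $U^{+}\vS\subset\vH^{-1}_{(2\rho_{1}-1/2)^{-}}$ with $2\rho_{1}-1/2>5/2$ under $\rho_{1}>3/2$) and the second-order H\"older estimate \eqref{holresol2} (legitimate under $\rho>1$), obtaining $W(z)\phi_{j}=zW_{2}\phi_{j}+\vO(|z|^{1+\epsilon})$, and then deduce $-\vS^{*}W_{2}\vS=(\langle(1-U_{1}^{+})\phi_{i},\phi_{j}\rangle_{0})_{i,j}$ exactly as in Case~2 of Section~\ref{sec:resolv-asympt-nearLOWEST}, using the first-resolvent-equation identity $\langle U^{+}\phi_{i},G_{2}U^{+}\phi_{j}\rangle_{0}=\langle\phi_{i},\phi_{j}\rangle_{0}$ together with the trivial $\langle U^{+}\phi_{i},G_{0}U_{1}^{+}\phi_{j}\rangle_{0}=-\langle\phi_{i},U_{1}^{+}\phi_{j}\rangle_{0}$. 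The main obstacle is the Case~1 expansion at the borderline $\rho_{1}=3/2$: the general expansion \eqref{Wz1b} is unavailable, and one must instead extract the $\sqrt{z}$-order behaviour of $W(z)\vS$ entirely from the resonance-state decay granted by Theorem \ref{prop5.5}~\ref{item:plusminuss} together with the resolvent-identity improvement of $U(z)-U^{+}$.
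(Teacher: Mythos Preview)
Your strategy matches the paper's, and Cases~2 and~3 are handled essentially as the paper does. The gap is in Case~1 at the borderline $\rho_1=\tfrac32$, specifically in your estimate of the quadratic term $\vS^*W(z)D(z)W(z)\vS$. Your expansion $W(z)\vS=\sqrt z\,W_1\vS+\vO(|z|^{1/2+\epsilon})$ holds only in $\vL(\C^\mu,\vH^1_{-s})$ with $s>\tfrac32$: the operator $G_1$ outputs constants, so $W_1\vS=G_1U^+\vS$ cannot land in any $\vH^1_{-s}$ with $s\le\tfrac32$. But the uniform boundedness of $D(z)$ from \eqref{term1b} is only available on $\vH^1_{-s}$ with $s\in(\tfrac12,\rho_1)$, hence $s<\tfrac32$ at the borderline. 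These two weight ranges do not overlap, so neither your claimed bound $\vO(|z|^{1+\epsilon})$ nor even the required $\vO(|z|^{1/2+\epsilon})$ on the quadratic term follows from your argument. You correctly flag the borderline as the main obstacle in your last paragraph, but the enhanced decay of $U^+\phi_j$ alone does not resolve it: that decay feeds only the \emph{linear} term $-\vS^*W(z)\vS$, where no $D(z)$ sits in between.

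The paper's resolution is to accept a weaker rate on each factor of the quadratic term, obtained by interpolation. Writing $W(z)\phi_j=(r_0(z)-G_0)U(z)\phi_j+G_0(U(z)-U^+)\phi_j$ (note the different splitting), one interpolates the free-resolvent bound \eqref{eq:basicHo} (rate $|z|^\epsilon$, valid for $s,s'>\tfrac12$ with $s+s'>2$) against \eqref{eq:freeres} with $N=1$ (rate $|z|^{1/2+\epsilon}$, valid for $s>\tfrac32$) to get $\|(r_0(z)-G_0)U(z)\phi_j\|_{\vH^1_{-s}}\le C|z|^{1/4+\epsilon}$ for $s\in(1,\tfrac54)$; together with \eqref{holresol} for the second piece this yields $\|W(z)\phi_j\|_{\vH^1_{-s}}\le C|z|^{1/4+\epsilon}$ for such $s$, now safely inside $(\tfrac12,\rho_1)$. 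The symmetric estimate $\|W(z)^*U^+\phi_i\|_{\vH^{-1}_s}\le C|z|^{1/4+\epsilon}$ then gives $\vS^*W(z)D(z)W(z)\vS=\vO(|z|^{1/2+2\epsilon})$, which is exactly what \eqref{vEpm1b} needs. Your treatment of the linear term is then essentially the paper's.
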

\pf
\subStep{Case 1}  We start by estimating the contribution from the
second term of the representation
\begin{align*}
  \vE_{-+}(z) =(\w{U^+\phi_i,W(z)\phi_j - W(z)
   D(z)W(z) \phi_j})_{1 \le i, j \le \mu}.
\end{align*}
 Using the identity
 $\phi_j=-G_0U^+\phi_j$, $j\leq \mu$, one can estimate
$\|W(z) \phi_j\|_{\vH^1_{-s}}$ for any $s \in (1, \rho_1)$ as
\[
\|W(z) \phi_j\|_{\vH^1_{-s}}  \le \|(r_0(z)-G_0) U(z) \phi_j\|_{\vH^1_{-s}} + \|G_0 (U(z)-U^+) \phi_j\|_{\vH^1_{-s}}.
\]
By the (uniform) bound $U(z) \phi_j\in \vH^{-1}_{(3/2)^-}$ and a free
resolvent bound, possibly deduced by interpolating  \eqref{eq:basicHo} and \eqref{eq:freeres} (with $N=1$),
\[
\|(r_0(z)-G_0) U(z) \phi_j\|_{\vH^1_{-s}} \le C |z|^{\f 1 4 +\ep}.
\]
 From this point we  fix any $s\in (1,\tfrac
54)$. Then $1+\rho -s > \tfrac 34$ and by \eqref{holresol}
\begin{align*}
  \|G_0 (U(z)-U^+) \phi_j\|_{\vH^1_{-s}} &\le C_1  \|\w{x}^{-1-\rho
    +s} (R'(\lambda_0 +z) - R'(\lambda_0 +\i0) ) \w{x}^{-1-\rho +s} \|
  \\&\le C_2
  |z|^{\f 1 4 +\ep}.
\end{align*}
  This proves that
\begin{subequations}
\begin{align}
  \label{eq:dircTas}\|W(z) \phi_j\|_{\vH^1_{-s}} \le C |z|^{\f 1 4 +\ep}.
\end{align}
 Similarly we can prove that
\begin{align}
  \label{eq:estadj}\|W(z)^* U^+\phi_i\|_{\vH^{-1}_{s}} \le C |z|^{\f 1 4 +\ep}.
\end{align}
\end{subequations}

Due to \eqref{term1b} the operator  $D(z)$ is uniformly bounded in
$\vL(\vH^{1}_{ -s}, \vH^1_{-s})$, and  we conclude using
\eqref{eq:dircTas} and \eqref{eq:estadj} that
\[
\vE_{-+}(z)  =  (\w{U^+\phi_i, W(z) \phi_j})_{1 \le i, j \le \mu} + \vO(|z|^{\f 1 2 +2\ep}).
\]

Next we simplify the first term to the right.
It follows by  similar arguments that it is expanded as
\begin{align*}
\w{U^+ \phi_i, W(z)\phi_j}_0
 = (\w{U^+ \phi_i, \phi_j+r_0(z)U^+\phi_j})_{1 \le i, j \le \mu}  + \vO(|z|^{\f 1 2 +\ep}),
\end{align*}
Since $U^+ = U^-$ on $\vK$, it follows from Corollary
\ref{cor:microlocal-bounds} that  $U^+ \phi_i,U^+ \phi_j\in \vH^{-1}_{(3/2)^+}$. Then we obtain from \eqref{eq:freeres} (with
$N=1$) that
\begin{align*}
\w{U^+ \phi_i, W(z)\phi_j}_0
 = (\w{U^+ \phi_i, \sqrt z G_1U^+\phi_j})_{1 \le i, j \le \mu}  + \vO(|z|^{\f 1 2 +\ep}),
\end{align*}
Using the explicit formula for the integral kernel of $G_1$, one obtains
\[
\w{U^+\phi_i, G_1 U^+\phi_j}_0 = \i( c(\phi_i),  c(\phi_j))
\]
 where $c(\phi_j) \in \C^m$ is given  by (\ref{cphi}). This shows $
 \vE_{-+,1} = \i C_0^* C_0$, and we have proven the assertion for  Case
 1.

\subStep{Case 2}   If zero is an exceptional point of the second kind, then $\phi_j \in \vH$ and Theorem \ref{prop5.5} shows $W_1\phi_j= G_1U^+\phi_j=0$ for $1 \le j \le \mu$.
Therefore $\vE_{-+,1} = -\vS^* W_1 \vS =0$.

  By possibly making  $\rho_1$ smaller we can
assume that \eqref{ass5.4b} is fulfilled and therefore that
(\ref{term1b}) is valid for some $s> \frac 32$. We invoke the implied
uniform boundedness of $D(z)\in\vL(1,-s; 1,-s)$ for such $s$, apply
 (\ref{Wz1b}) twice and conclude that
 \begin{align*}
   \vS^* W(z) D(z) W(z) \vS =  \vO(|z|^{1 + \ep}).
 \end{align*}

  It remains to show that
\begin{align}\label{eq:asGood}
 (\w{U^+\phi_i,W(z)\phi_j })_{1 \le i, j \le \mu} = z(\w{ (1- U_1^+)\phi_i, \phi_j}_0)_{1\le i, j \le \mu} + \vO(|z|^{1 + \ep}).
 \end{align}
 For that we first invoke (\ref{Upm}) with $j=1$ and write
\begin{align*}
 (\w{U^+\phi_i,W(z)\phi_j })_{1 \le i, j \le \mu} =-\vS^*(1 + r_0(z)( U^+ + { z} U_1^+))\vS  + \vO(|z|^{1 + \ep}).
 \end{align*} Next we note
 that $U^+\phi_j\in \vH^{-1}_{(5/2)^+}$  (cf. Corollary
 \ref{cor:microlocal-bounds}),  so that we can  use the
 expansion \eqref{eq:freeres}
 with $N=2$. Finally we proceed as in the proof of (\ref{2}) to
 rewrite the leading term  obtained this way,
 concluding \eqref{eq:asGood}.

\subStep{Case  3}   If zero is an exceptional point of the third kind, we can
combine methods used for  Case 1 and Case 2 to show \eqref{vEpm3b} and
\eqref{vEpm3b22}. The details are omitted. \ef

\medskip We remark that for Case 1, $\rank C_0 = \kappa =\mu$ and
$C_0^*C_0$ is positive definite. Moreover
Lemma \ref{lem5.15} ensures that the matrices $(\w{ (1- U_1^+)\phi_i,
  \phi_j}_0)_{1\le i, j \le \mu}$ for  { Case 2} and \linebreak
$(\w{ (1- U_1^+)\phi_i, \phi_j}_0)_{\kappa+1\le i, j \le \mu}$
for  { Case 3} of Lemma  \ref{prop5.17} are invertible. These properties
are  used in the following calculation of the leading term for the
inverse of $W(z)$. The proof of the proposition goes largely along the lines of that of  Proposition \ref{prop5.8}.\\

If $0$ is an eigenvalue of $P^+$ (relevant for
   Cases 2 and 3 below) we let $\Pi_0$ denote the spectral projection
  in $\vH$ onto the zero-eigenspace of this operator.
\begin{prop}\label{prop5.18} Assume  Conditions (\ref{ass5.1b}), (\ref{ass5.2b}) and (\ref{ass5.3b}).  The
  following asymptotics hold in $\vL(1,-s; 1,-s)$  for $s>1$
  sufficiently
  close to $1$  and for $|z|$ small with $\Im z >0$.

\medskip

\Step {Case 1}  Suppose  that zero is an exceptional point of the first
kind. Then
\begin{subequations}
\begin{equation} \label{W1b}
W(z)^{-1}= \f{\i}{\sqrt z} Q_0 +  \vO(|{z}|^{-\f 1 2 +\ep}).
\end{equation}
Here
\be\label{cjk0}
Q_0 = \sum_{j=1}^\kappa \w{-U^+\psi_j, \cdot}_0\psi_j
\ee
 with vectors $\psi_j\in \vK$ obeying
 \be \label{cjk}
 (c(\psi_j), c(\psi_k)) =\delta_{jk} \quad  \mbox{ for } j,k =1, \cdots, \kappa.
 \ee
 Here
 $c(\psi_j) \in \C^m$ is defined as $c(\phi_j)$ in Lemma  \ref{prop5.17} with $\phi_j$ replaced by $\psi_j$ and $(\cdot, \cdot)$ denotes the scalar product of $\C^m$.
\end{subequations}
\medskip

\Step {Cases 2 and 3} Suppose  that zero is  an exceptional point of the
second or the third
kind, respectively, $\rho>1$ and $\rho_1 >\f 3 2$.
Then
\begin{equation} \label{W2b}
W(z)^{-1}= z^{-1} (\Pi_0 (1-U^+_1)\Pi_0)^{-1} \Pi_0 U^+  + \vO(|z|^{-1+\ep}).
\end{equation}

\end{prop}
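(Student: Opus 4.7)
The plan is to prove Proposition \ref{prop5.18} by mimicking the proof of Proposition \ref{prop5.8}, substituting the newly derived expansions of Lemmas \ref{lem5.16} and \ref{prop5.17} into the Grushin representation
\begin{equation*}
W(z)^{-1} = \vE(z) - \vE_+(z) \vE_{-+}(z)^{-1} \vE_-(z),
\end{equation*}
and carefully identifying the leading term, which requires the self-adjoint adjustments tailored to Case 1 versus Cases 2, 3. Throughout I would restrict attention to $\Im z > 0$ and work in $\vL(1,-s;1,-s)$ for $s>1$ sufficiently close to $1$ so that (\ref{term1b}), (\ref{term1pmb}) and the bounds of Lemma \ref{prop5.17} all apply uniformly.

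For Case 1, the first step is to observe that $\vE_{-+}(z) = \sqrt{z}(\i C_0^*C_0 + \vO(|z|^\epsilon))$ with $C_0^*C_0$ invertible since $\rank C_0 = \kappa = \mu$; a Neumann-series argument then yields $\vE_{-+}(z)^{-1} = \tfrac{1}{\i\sqrt{z}}(C_0^*C_0)^{-1} + \vO(|z|^{-1/2+\epsilon})$. Combined with $\vE_\pm(z) = \vE_{\pm,0} + \vO(|z|^\epsilon)$ and the uniform boundedness of $D(z)$, this gives $W(z)^{-1} = \tfrac{\i}{\sqrt{z}}\vS(C_0^*C_0)^{-1}\vS^* + \vO(|z|^{-1/2+\epsilon})$. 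The key identification step — which is the main technical point in this case — is to show that $\vS(C_0^*C_0)^{-1}\vS^* = Q_0$ with $Q_0$ given by a basis $\{\psi_j\}_{j=1}^\kappa$ of $\vK$ normalized by $(c(\psi_j),c(\psi_k))=\delta_{jk}$. This mirrors the corresponding step in the proof of Proposition \ref{prop5.8}: one defines $\psi_j$ by the linear transformation $\,^tC_0^{-1}$ applied to $(\phi_1,\dots,\phi_\mu)$, and then verifies both that this choice satisfies the normalization and that the resulting rank-$\kappa$ operator is basis independent (the latter is Proposition \ref{prop5.7b}\ref{item:22s}).

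For Case 2, Lemma \ref{prop5.17} gives $\vE_{-+}(z) = z\vE_{-+,2} + \vO(|z|^{1+\epsilon})$ with $M_1 := \vE_{-+,2}=(\langle(1-U_1^+)\phi_i,\phi_j\rangle_0)$ positive definite by Lemma \ref{lem5.15}, so $\vE_{-+}(z)^{-1} = z^{-1}M_1^{-1} + \vO(|z|^{-1+\epsilon})$ and $W(z)^{-1} = -z^{-1}\vS M_1^{-1}\vS^* + \vO(|z|^{-1+\epsilon})$. The identification $-\vS M_1^{-1}\vS^* = (\Pi_0(1-U_1^+)\Pi_0)^{-1}\Pi_0 U^+$ is then carried out exactly as in the Case 2 part of Proposition \ref{prop5.8}: one introduces the Hilbert-space adjoint $\vS^\#$ of $\vS$ (so that $\vS^* = -\vS^\# U^+$ on $\vK\cap\vH$), writes $\Pi_0 = \vS(\vS\vS^\#)^{-1}\vS^\#$, sets $T=\vS(\vS^\#(1-U_1^+)\vS)^{-1}\vS^\#$, and checks that $T$ is characterized by $(\Pi_0(1-U_1^+)\Pi_0)T = T(\Pi_0(1-U_1^+)\Pi_0) = \Pi_0$, forcing $T = (\Pi_0(1-U_1^+)\Pi_0)^{-1}\Pi_0$.

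For Case 3, the matrix $M(z) := \sqrt{z}\vE_{-+,1} + z\vE_{-+,2}$ splits into blocks indexed by the resonance subspace (dimension $\kappa$) and the eigenspace (dimension $\mu-\kappa$), with leading orders $\sqrt{z}$ and $z$ respectively, and I would invert it by the Schur-complement block-inversion formula exactly as in Case 3 of Proposition \ref{prop5.8}: factoring out $\diag((\sqrt{z}M_{11}(z))^{-1},(zM_{22})^{-1})$ reduces the inversion to $\begin{pmatrix} 1 & a \\ b & 1\end{pmatrix}^{-1}$ with $a=O(1)$ and $b=O(\sqrt{|z|})$, giving $zM(z)^{-1}$ diagonal to leading order with $(0,M_{22}^{-1})$ surviving in the limit. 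The main obstacle I anticipate is bookkeeping: verifying that the error terms in the expansions of $\vE_{\pm}(z)$, $D(z)$ and $\vE_{-+}(z)$ compose correctly to leave a genuine $\vO(|z|^{-1+\epsilon})$ remainder after division by $z$, and confirming that the off-diagonal blocks $\vE_{12}^{(2)},\vE_{21}^{(2)}$ and the leading resonance block $\i C_0^*C_0$ do not contribute to the leading order. The final identification of $-\vS_e M_{22}^{-1}\vS_e^*$ as $(\Pi_0(1-U_1^+)\Pi_0)^{-1}\Pi_0 U^+$ repeats the argument from Case 2 applied to the restriction $\vS_e$ of $\vS$ to the eigenspace component.
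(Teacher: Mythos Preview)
Your approach is essentially the paper's own: invoke Lemmas \ref{lem5.16} and \ref{prop5.17} in the Grushin representation (\ref{repWb}), invert $\vE_{-+}(z)$ to leading order, and then identify the result. The paper in fact only writes out Case~1 and leaves Cases~2 and~3 to the reader by analogy with Proposition~\ref{prop5.8}, exactly as you propose.

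There is one genuine slip in your Case~1 identification step. You propose to define the normalized basis $\{\psi_j\}$ via the transformation $\,^{t}C_0^{-1}$, citing the $\kappa=2$ computation in Proposition~\ref{prop5.8}. But here $C_0\in\vM_{m\times\kappa}(\C)$ with $\kappa\leq m$ (Theorem~\ref{prop5.5}~\ref{item:2resDim}), so $C_0$ is in general rectangular and $\,^{t}C_0^{-1}$ is undefined. The paper instead takes the positive-definite Hermitian square root $M_0=(C_0^*C_0)^{-1/2}\in\vM_{\kappa\times\kappa}(\C)$ and sets $\psi_i=\sum_k m_{ki}\phi_k$; then $C(\Psi)=C_0M_0$ gives $C(\Psi)^*C(\Psi)=M_0(C_0^*C_0)M_0=I_\kappa$, and since $M_0$ is Hermitian one checks $\vS(C_0^*C_0)^{-1}\vS^*=\vS M_0^2\vS^*=\sum_j\langle-U^+\psi_j,\cdot\rangle_0\psi_j=Q_0$. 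This is the general form of the construction already used in the proof of Proposition~\ref{prop5.7b}~\ref{item:21s}, and it is what you should invoke here rather than the square-matrix recipe. With this correction your outline goes through unchanged.
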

\pf The results  follow from Lemmas \ref{lem5.16} and   \ref{prop5.17} and the
formula (\ref{repWb}).
 We only study {Case 1}. For  $s\in(1, \rho_1)$,  one has for some $\ep>0$
\begin{align*}
\vE(z) & =  D_0 + \vO(|z|^\ep) \, \mbox{ in } \vL(1, -s; 1,-s),  \\
\vE_\pm(z) & =  \vE_{\pm, 0}  + \vO(|z|^\ep) \, \mbox{ in }
             \vL(\C^\mu,\vH^1_{-s})  \mbox{ or in } \vL(\vH^1_{-s}, \C^\mu),  \\
\vE_{-+}(z) & =   \i\sqrt{z} B_0 + \vO(|z|^{\f 1 2 +\ep}),
\end{align*}
where $B_0= C_0^* C_0$ is positive definite. Therefore
\be
W(z)^{-1} = \f{\i}{\sqrt z}\vS B_0^{-1}\vS^* + \vO(|z|^{-\f 1 2 +\ep}).
\ee
Decompose $B_0^{-1}$ as $B_0^{-1}= M_0^2$ where $M_0\in \vM_{\kappa\times \kappa}(\C)$ is a positive definite Hermitian matrix.
Set $M_0 =(m_{ij})_{1\le i, j \le \kappa}$ and
\be
\psi_i =\sum_{k=1}^\mu m_{ki} \phi_k, \quad i =1, \cdots, \kappa.
\ee
 Then  $\{\psi_1, \cdots, \psi_\kappa\}$ is a basis of  resonance
 functions  of $P^+$. Similarly to the proof of
 Proposition \ref{prop5.7b} \ref{item:21s} we  can check that
 $\{\psi_1, \cdots, \psi_\kappa\}$ verifies the normalization condition
 \[
 (c(\psi_i),  c(\psi_j))=\delta_{ij};\quad i,j\leq \kappa.
 \]
   Hence    the expansion (\ref{W1b}), with the
   specification (\ref{cjk0})-(\ref {cjk}), is proved.
  \ef

Since $ E_{\vH}(\lambda_0 +z)^{-1}  = -W(z)^{-1}(P_0-z)^{-1}$ and
$G_0 U = -1$ on $\vK$, the following result  follows immediately
from Proposition \ref{prop5.18} (cf. \eqref{eq:basicHo}).

\begin{subequations}
\begin{prop}\label{prop5.19}  Assume (\ref{ass5.1b}), (\ref{ass5.2b}) and (\ref{ass5.3b}).
The following asymptotics hold in $\vL(-1,s; 1, -s)$  for $s>1$ and for $|z|$ small with $\Im z >0$.

\medskip

\Step {Case 1}   Suppose zero is an exceptional point of the first kind of $P^+$.  Let $\kappa$ be  the multiplicity of zero resonance. Then  for some $\ep>0$
\begin{equation} \label{e1i}
E_{\vH}(\lambda_0 +z)^{-1}= -\frac{\i}{\sqrt{z}} \sum_{j=1}^\kappa \w{\psi_j, \cdot} \psi_j + \vO(|z|^{-\f 1 2 +\ep}),
\end{equation}
where $\psi_j$, $j=1, \cdots, \kappa$, are resonance  states verifying (\ref{cjk}). \\

\Step {Cases 2 and 3}   Suppose zero is  an exceptional point of the
second  or the third
kind, respectively,
 $\rho>1$ and $\rho_1 >\f 3 2$.  Then  for some $\ep>0$
\begin{equation} \label{e2i}
E_{\vH}(\lambda_0 +z)^{-1}= z^{-1} (\Pi_0 (1-U^+_1)\Pi_0)^{-1} \Pi_0   + \vO(|z|^{-1 +\ep}).
\end{equation}
\end{prop}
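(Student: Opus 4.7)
The plan is to reduce Proposition \ref{prop5.19} to Proposition \ref{prop5.18} via the factorization
\begin{equation*}
E_{\vH}(\lambda_0 + z)^{-1} = -P(z)^{-1} = -W(z)^{-1}(P_0-z)^{-1} = -W(z)^{-1}r_0(z),
\end{equation*}
obtained from $(P_0-z)W(z) = P_0-z+U(z) = P(z)$. Combined with the leading-order free resolvent expansion $r_0(z) = G_0 + \vO(|z|^\epsilon)$ in $\vL(-1,s;1,-s)$ for $s>1$, cf.\ \eqref{eq:basicHo}, it suffices to compute the product of the leading term of $W(z)^{-1}$ with $G_0$ in each of the three cases.

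For Case 1, I would insert \eqref{W1b} and obtain
\begin{equation*}
E_{\vH}(\lambda_0 +z)^{-1} = -\tfrac{\i}{\sqrt{z}} Q_0 G_0 + \vO(|z|^{-\f 1 2 +\ep}),
\end{equation*}
where $Q_0 = \sum_{j=1}^\kappa \w{-U^+\psi_j, \cdot}_0 \psi_j$. The key simplification is that $\psi_j \in \vK$, so $G_0 U^+ \psi_j = -\psi_j$; using self-adjointness of $G_0$ this gives
\begin{equation*}
Q_0 G_0 f = \sum_{j=1}^\kappa \w{-U^+\psi_j, G_0 f}_0 \psi_j = \sum_{j=1}^\kappa \w{-G_0 U^+\psi_j, f}_0 \psi_j = \sum_{j=1}^\kappa \w{\psi_j, f}_0 \psi_j,
\end{equation*}
from which \eqref{e1i} follows.

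For Cases 2 and 3, I would substitute \eqref{W2b} to get
\begin{equation*}
E_{\vH}(\lambda_0 + z)^{-1} = -z^{-1}(\Pi_0(1-U_1^+)\Pi_0)^{-1}\Pi_0 U^+ G_0 + \vO(|z|^{-1+\ep}).
\end{equation*}
The essential identity needed is $\Pi_0 U^+ G_0 = -\Pi_0$. To establish it, I would take the Hilbert-space adjoint, using that $\Pi_0$ is self-adjoint, that $(U^+)^* = U^-$, and that on the eigenspace $\ran \Pi_0 \subset \vK\cap \vH$ one has $U^+ = U^-$ with $G_0 U^\pm \phi = -\phi$; this yields $(\Pi_0 U^+ G_0)^* = G_0 U^- \Pi_0 = -\Pi_0$, hence $\Pi_0 U^+ G_0 = -\Pi_0$. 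Substituting this gives \eqref{e2i}.

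The work is essentially algebraic bookkeeping; there is no genuine obstacle since Proposition \ref{prop5.18} already contains all analytic content. The only point requiring minor care is the composition of error terms: one must check that the $\vO(|z|^\epsilon)$ remainder of $r_0(z)-G_0$ interacts with the $z^{-1/2}$ (resp.\ $z^{-1}$) leading singularity of $W(z)^{-1}$ to produce an overall remainder of order $|z|^{-1/2+\epsilon'}$ (resp.\ $|z|^{-1+\epsilon'}$) in the topology $\vL(-1,s;1,-s)$. This is routine given that both $W(z)^{-1}$ maps $\vH^1_{-s} \to \vH^1_{-s}$ continuously (with the stated singular bound) and $r_0(z)-G_0$ maps $\vH^{-1}_s \to \vH^1_{-s}$ with the Hölder estimate \eqref{eq:basicHo}.
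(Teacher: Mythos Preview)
Your proof is correct and follows essentially the same approach as the paper: the paper's proof is a one-line reference to Proposition \ref{prop5.18}, the factorization $E_{\vH}(\lambda_0+z)^{-1}=-W(z)^{-1}(P_0-z)^{-1}$, the expansion \eqref{eq:basicHo}, and the identity $G_0U^+=-1$ on $\vK$. Your treatment of Cases 2 and 3 is slightly more explicit than the paper's, since you spell out why $\Pi_0 U^+ G_0=-\Pi_0$ requires the adjoint relation $(U^+)^*=U^-$ together with $U^+=U^-$ on $\vK$ (Theorem \ref{prop5.5}(a)), a point the paper leaves implicit.
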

\end{subequations}

\medskip

A  main result of this section is the following. In agrement with
previous usage, if  $\lambda_0$ is a
resonance of $H$ its  multiplicity is denoted by $\kappa$, and
 if $\lambda_0$ is an eigenvalue of $H$
 the spectral projection of $H$ onto the corresponding
eigenspace is denoted by $\Pi_H$.\\

\begin{thm}\label{thm5.20}   Assume (\ref{ass5.1b}), (\ref{ass5.2b})
  and (\ref{ass5.3b}). The following asymptotics hold for $R(\lambda_0
  + z)$ as a bounded operator from $H^{-1}_{s}$ to $H^{1}_{-s}$ for
  any  $s>1$, as $z\to 0$ and $\pm \Im z >0$, and $\epsilon=\epsilon(s)>0$.

\medskip

\Step {Case 0} Suppose   $\lambda_0$ is a regular point of $H$. Then   one has
\begin{align}\label{asymRz0b}
  \begin{split}
  R(\lambda_0 &+ z)= R'(\lambda_0 \pm \
\i0) \\ &+  (S-R'(\lambda_0\pm \i0) I_0S)D_0 G_0 (S^*- S^*
I_0R'(\lambda_0\pm \i0)) + \vO(|z|^\ep).
  \end{split}
\end{align}
  Here  the boundary value of resolvent $R'(\lambda_0 + \i0)$ (resp., $R'(\lambda_0 - \i0)$ ) is used when $\Im z >0$ (resp. $-\Im z >0$).
\medskip

\Step {Case 1} Suppose  $\lambda_0$ is an exceptional point of the
first kind of $H$.  Then
\begin{subequations}
\begin{equation} \label{asymRz1b}
R(\lambda_0 + z)= \frac{\i}{\sqrt{z}} \sum_{j=1}^\kappa \w{u_j, \cdot} u_j + \vO(|z|^{-\f 1 2 +\ep})
\end{equation}
where $u_j$'s  are resonance  states of $H$ given by
 \be \label{e5.171}
 u_j = (1- R'(\lambda_0\pm \i0) I_0)S\psi_j,\quad  1 \le j \le \kappa,
 \ee
 where  $\psi_j$'s  are the resonance  states of $P^\pm$ given in
 Proposition \ref{prop5.18}.
  \end{subequations}
\medskip

\Step {Cases 2 and 3}  Suppose   $\lambda_0$ is  an exceptional point of the
second  or the third
kind of $H$, respectively, $\rho >1$ and $\rho_1 > \f 3 2$.
 Then
\begin{equation} \label{asymRz2b}
R(\lambda_0 + z)= - z^{-1} \Pi_H    + \vO(|z|^{-1+\ep}).
\end{equation}
\end{thm}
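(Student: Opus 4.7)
The plan is to follow the scheme used in Theorem \ref{thm5.10}, namely to substitute the asymptotic expansion of $E_{\vH}(\lambda_0+z)^{-1}$ from Proposition \ref{prop5.19} and the expansions of $E_\pm(\lambda_0+z)$ (obtained from the H\"older continuity of $R'$ at $\lambda_0\pm\i 0$) into the representation formula
\begin{equation*}
R(\lambda_0+z)= R'(\lambda_0+z) - E_{+}(\lambda_0+z)\, E_{\vH}(\lambda_0+z)^{-1}\, E_{-}(\lambda_0+z);
\end{equation*}
see \eqref{rep5.1.2}. I shall only discuss the case $\Im z>0$, the case $\Im z<0$ being treated identically. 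First I would observe that the LAP bound \eqref{eq5.1.42} together with the boundedness of $S^* I_0$ (from $\vH^{1}_{-s}$ to $\vH^{-1}_s$ for $s<\rho_1$) yields, for $s>1$ and close to $1$, that $E_+(\lambda_0+z)=(1-R'(\lambda_0+\i 0)I_0)S+\vO(|z|^{\epsilon})$ in $\vL(\vH^1_{-s},H^1_{-s})$, and similarly for $E_-$. Combined with Proposition \ref{prop5.19} this gives all of the asymptotics up to the identification of the leading coefficients.

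For Case 0 the absence of a zero resonance or eigenvalue of $P^\pm$ means $E_{\vH}(\lambda_0+z)^{-1}=-D_0 G_0+\vO(|z|^{\epsilon})$ (with $D_0=D_0^+$), and \eqref{asymRz0b} follows immediately. For Case 1, I would substitute \eqref{e1i} and use that $E_+(\lambda_0\pm\i 0)\psi_j=u_j$ and $E_-(\lambda_0\pm\i 0)^*\psi_j=u_j$ to obtain \eqref{asymRz1b}; the fact that the $u_j$ lie in $H^1_{(-1/2)^-}$ and satisfy $(H-\lambda_0)u_j=0$ (hence are resonance states in the sense defined after Theorem \ref{prop5.5}) is exactly the content of the eigentransform/inversion correspondence established for the higher two-cluster case in Theorem \ref{prop5.5}\ref{item:plusminuss}--(d), together with the fact that $\psi_j\in\vK\setminus\vH$.

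For Cases 2 and 3, \eqref{e2i} yields the leading singular term $-z^{-1}B$ with
\begin{equation*}
B=E_+(\lambda_0\pm\i 0)\,(\Pi_0(1-U_1^+)\Pi_0)^{-1}\Pi_0\,E_-(\lambda_0\pm\i 0),
\end{equation*}
so it remains to identify $B$ with the spectral projection $\Pi_H$ onto $\ker(H-\lambda_0)$. For this I would set $\tau=E_+(\lambda_0\pm\i 0)\Pi_0$, note that $\tau^*=\Pi_0 E_-(\lambda_0\pm\i 0)$ (the two boundary values coincide on $\Pi_0\vH$ by Theorem \ref{prop5.5}\ref{item:plusminuss}), and compute $\tau^*\tau=\Pi_0(1-U_1^+)\Pi_0$ using the analogue of \eqref{eq:formU}: here $-U_1^+=1+S^*I_0 R'(\lambda_0+\i 0)^2 I_0 S$ and the symmetrized form gives $\|\varphi\otimes f\|^2+\|R'(\lambda_0+\i 0)I_0 Sf\|^2$ on $\Pi_0\vH$, which is well-defined precisely by the microlocal-decay statement of Theorem \ref{prop5.5}(a) and Lemma \ref{lem5.15}. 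Then $B=\tau(\tau^*\tau)^{-1}\tau^*$ is the orthogonal projection onto $\ran\tau$, which by Theorem \ref{prop5.5}(e) equals $\ker(H-\lambda_0)$; hence $B=\Pi_H$.

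The main obstacle is the last identification for Cases 2 and 3: in the higher-threshold setting the factors $R'(\lambda_0\pm\i 0)I_0 S\Pi_0$ are no longer compactly supported in spectral weight, so one must verify that the resonance-free statement of Theorem \ref{prop5.5}\ref{item:plusminuss} applies to every element of $\Pi_0\vH$ (not merely to resonance states) in order to legitimately replace $R'(\lambda_0+\i 0)$ by $R'(\lambda_0-\i 0)$ when forming $\tau^*$; this is where the assumptions $\rho>1$ and $\rho_1>3/2$ enter decisively, through the bound \eqref{Upm} with $j=1$ and the estimate \eqref{holresol2}, and without them the identification $B=\Pi_H$ cannot be carried out by this method. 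All the remaining manipulations are routine given the preceding lemmas.
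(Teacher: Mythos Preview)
Your approach is essentially the paper's: substitute Proposition \ref{prop5.19} into the representation formula \eqref{rep5.1.2} and mimic the proof of Theorem \ref{thm5.10}, with Theorem \ref{prop5.5}\ref{item:plusminuss} and Lemma \ref{lem5.15} replacing the self-adjointness of $R'(\lambda_0)$ in the computation $\tau^*\tau=\Pi_0(1-U_1^+)\Pi_0$. The detailed verification you give for Cases 2 and 3 (orthogonality of $Sv$ and $R'(\lambda_0+\i0)I_0Sv$, then the identity $\|R'(\lambda_0+\i0)I_0Sv\|^2=-\langle v,U_1^+v\rangle$ on $\vK\cap\vH$) is exactly the computation in the proof of Lemma \ref{lem5.15}, and it is correct.

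One point of attribution should be sharpened. Theorem \ref{prop5.5}\ref{item:plusminuss} is stated for \emph{all} $v\in\vK$ under the basic assumptions \eqref{ass5.1b}--\eqref{ass5.3b}, so it already applies to every element of $\Pi_0\vH=\vK\cap\vH$; the identification $B=\Pi_H$ therefore does \emph{not} require $\rho>1$ or $\rho_1>3/2$. Those stronger hypotheses enter earlier, in Lemma \ref{prop5.17} (Cases 2 and 3), where the expansion of $\vE_{-+}(z)$ to order $z$ is obtained via \eqref{Upm} with $j=1$ and \eqref{holresol2}. Without them one cannot reach Proposition \ref{prop5.19} for Cases 2 and 3, so the leading term $z^{-1}(\Pi_0(1-U_1^+)\Pi_0)^{-1}\Pi_0$ is not even available. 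In short: the extra decay is consumed by the expansion of $E_{\vH}(\lambda_0+z)^{-1}$, not by the identification of the leading coefficient with $\Pi_H$.
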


\medskip

Mimicking the proof of Theorem \ref{thm5.10} we see that Theorem \ref{thm5.20} for $\Im z > 0$ is a consequence of
Proposition \ref{prop5.19}  and the formula (\ref{rep5.1.2}).
 The case $\Im z <0$ can be proved in the same way with $P^+$ replaced
 by $P^-$. Note that due to Theorem  \ref{prop5.5}
 \ref{item:plusminuss}, the resonance  states $u_j$ given  by
 (\ref{e5.171}) are  independent
 of the choice of the sign $\pm$.

\medskip

\begin{remark}\label{remark:resolv-asympt-nearNorm}
  For {Case 1} of Theorem \ref{thm5.20} the leading term of the
  resolvent $R(\lambda_0 +z)$ is expressed in terms of a specific
  basis of resonance states $\{u_1, \dots, u_\kappa\}$ of $H$ such
  that $\psi_j = S^*u_j$, $j=1, \dots, \kappa$, are given as in
  Proposition \ref{prop5.18}. The normalization condition (\ref{cjk})
  expressed in terms of the $u_j$'s reads \be \label{ujk}
  (c(S^*u_j),c(S^*u_k)) =\delta_{jk}, \quad j,k =1, \dots, \kappa.
  \ee By Proposition \ref{prop5.7b} (b), (\ref{asymRz1b}) remains
  valid for \textit {any} basis of resonance states $\{u_1, \dots,
  u_\kappa\}$ of $H$ verifying (\ref{ujk}). Note also the formula
\begin{align}\label{eq:norma}
  c(S^*u_j)=\f{1}{2\sqrt{\pi}}\parbb{\int_\bX \varphi_1 I_0  u_j \d x ,
    \dots, \int_\bX \varphi_m I_0 u_j \d x}, \quad j=1, \dots, \kappa.
\end{align}
\end{remark}

For Cases 0  and  1,
  $\rho=1$ is legitimate. However
 Cases 2 and 3
  of Theorem \ref{thm5.20} have the  stronger condition  $\rho>1$ and
  therefore exclude the physics models which would imply  a limitation on the application in Section
  \ref{sec:Transmission problem at threshold}, see Remark \ref{remarks:non-transmission-bem}
  \ref{item:tran1}. However with an a priori weak decay
property of the corresponding $L^2$-eigenfunctions if $\lambda_0$ is an eigenvalue of
  $H$ (as for
  Cases 2 and 3) we can almost verbatim mimic the proof of Theorem \ref{thm:rhoThree}
  and obtain a resolvent expansion  also for  $\rho=1$. Note however
  that the property $\lambda_0\not\in \sigma_\pp(H_\sigma')$ for small $\sigma$
  is based on a perturbation of  Proposition \ref{prop:mour2},
  cf. \cite{AHS}, and note also the relevance of Remark \ref{remark:The case
  lambda0insigma} \ref{item:25}. This
  expansion is
   up to second
order (at least for  Case 3).

\begin{thm}\label{thm:rhoThree2} Assume (\ref{ass5.1b}) and
  (\ref{ass5.2b}), as well as
  (\ref{ass5.3b}) with   $\rho=1$ and $\rho_1= 3/2$. Suppose $\lambda_0$ is an
  eigenvalue of $H$ and that  $\ran
  \Pi_H\subset L^2_t$ for some $t> 3/2$. Then the following
  asymptotics hold for $R(\lambda_0 + z)$ as an operator from
  $H^{-1}_{s}$ to $H^{1}_{-s}$, $s>1$,  for $z\to 0$  and  $\pm \Im z >0$, and $\epsilon=\epsilon(s)>0$.
\begin{equation} \label{asymRz1aa9}
R(\lambda_0 + z)=-z^{-1} \Pi_H + \frac{\i}{\sqrt{z}} \sum_{j=1}^\kappa \w{u_j, \cdot} u_j + \vO(|z|^{-\f 1 2 +\ep}).
\end{equation} Here the second  term on the right-hand side appears
only if $\kappa\geq1$, that is if $\lambda_0$ is a
resonance of $H$,  and in that case
$\set{u_1, \dots, u_\kappa}$ is a basis of   resonance states  of
$H$ being independent
 of the choice of the sign $\pm$.  If on the other hand
$\lambda_0$ is not a
resonance of $H$, then $R(\lambda_0 + z)+z^{-1} \Pi_H $ has  limits in
norm as  $z\to
0$, $\pm\Im z>0$.
  \end{thm}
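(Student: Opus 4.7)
The plan is to reduce Theorem \ref{thm:rhoThree2} to Case 1 of Theorem \ref{thm5.20} (more precisely, to an analogue suitable for $\rho=1$, $\rho_1=3/2$) applied to a perturbed operator for which $\lambda_0$ is no longer an eigenvalue, mimicking closely the proof of Theorem \ref{thm:rhoThree}. Set $H_\sigma := H - \sigma \Pi_H$ for a small fixed $\sigma > 0$. Since $\Pi_H$ is a finite-rank self-adjoint operator with range contained in $L^2_t$ for some $t > 3/2$, $H_\sigma$ is a self-adjoint operator of the same form as $H$ up to a rank-finite, polynomially decaying perturbation (at least in the sense relevant for the Grushin scheme). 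By construction $\lambda_0 \notin \sigma_\pp(H_\sigma)$, and we shall show that for $\sigma > 0$ small enough one also has $\lambda_0 \notin \sigma_\pp(H'_\sigma)$ where $H'_\sigma = \Pi' H_\sigma \Pi'$ with $\Pi$ the $a_0$-cluster projection of \eqref{eq5.1.40}. Using
\begin{equation*}
  R(\lambda_0 + z) = R_\sigma(\lambda_0 + z) - z^{-1}\Pi_H + (z+\sigma)^{-1}\Pi_H,
\end{equation*}
with $R_\sigma(\zeta) = (H_\sigma - \zeta)^{-1}$, the asserted expansion \eqref{asymRz1aa9} follows once we have the Case 1 expansion
\begin{equation*}
  R_\sigma(\lambda_0 + z) = \frac{\i}{\sqrt{z}} \sum_{j=1}^{\kappa_\sigma} \w{u_j, \cdot} u_j + \vO(|z|^{-1/2+\epsilon})
\end{equation*}
for $H_\sigma$, because $(z+\sigma)^{-1}\Pi_H$ contributes a bounded term absorbable in the remainder.

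First I would verify $\lambda_0 \notin \sigma_\pp(H'_\sigma)$ for small $\sigma > 0$. The Mourre estimate Proposition \ref{prop:mour2} for $\brH$ remains valid with the rank-finite perturbation $-\sigma\Pi_H$ added since $\ran \Pi_H \subset L^2_t$ with $t > 3/2$ makes the perturbation a compactly supported (in weighted sense) operator of finite rank, cf.\ Remark \ref{remark:The case lambda0insigma} \ref{item:25} and \cite{AHS} for how such eigenprojection subtractions interact with the Mourre theory. The non-threshold character of $\lambda_0$ for $\brH_\sigma$ and the eigenvalue bound from the virial theorem combine with continuity in $\sigma$ of the associated spectral projection to give the desired exclusion. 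Next I would apply Theorem \ref{thm5.20} Case 1 (extended to the present $\rho=1$, $\rho_1 = 3/2$ regime — this only uses the parts of Lemma \ref{lemma:Wresolv-asympt-near-1} and Lemma \ref{prop5.17} that are already available under these weaker assumptions, namely \eqref{eq:repeat} and Case 1 of Lemma \ref{prop5.17}, both of which avoid \eqref{Wz1b} and \eqref{Wz2b}). This yields the $\i/\sqrt{z}$ leading singularity with resonance states $u_j$ of $H_\sigma$ normalized as in Remark \ref{remark:resolv-asympt-nearNorm}.

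Then I would argue that the resonance states of $H_\sigma$ at $\lambda_0$ coincide with those of $H$ and that $\kappa_\sigma = \kappa$. Indeed, if $(H_\sigma - \lambda_0) u = 0$ with $u \in \vE^\vG_{(-1/2)^-}$, applying $\Pi_H$ gives $-\sigma \Pi_H u = \Pi_H(H_\sigma - \lambda_0) u = 0$, hence $\Pi_H u = 0$ and so $(H - \lambda_0) u = 0$; conversely, given any basis of resonance states of $H$, projecting out the $\ran \Pi_H$ component yields resonance states of $H_\sigma$, establishing $\kappa_\sigma = \kappa$ and that the $u_j$'s may be chosen orthogonal to $\ran \Pi_H$. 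The normalization \eqref{ujk} carried out for $H_\sigma$ then agrees with one for $H$ (cf.\ \eqref{eq:norma}).

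The main obstacle I anticipate is the analogue of the characterization \eqref{nonresonant11} needed to justify the preceding step under the weak decay assumption $t > 3/2$ rather than the naive $t > 2$ (which would be forced if one literally expanded the non-local perturbation $-\sigma\Pi_H$ in Taylor form as in \eqref{eq5.2a}). The fix, as hinted in the paper's remarks around Theorem \ref{thm:rhoThree}, is the following: if all the defining functionals vanish on a vector $v$ in the kernel of the reduced Grushin operator $P_\sigma^\pm$ for $H_\sigma$, then $v$ corresponds via the inverse eigentransform (Theorem \ref{prop5.5}) to some $u \in H^1_{(-1/2)^+}$ with $(H_\sigma - \lambda_0) u = 0$; since $\Pi_H u = 0$ one gets $(H-\lambda_0)u = 0$, and Theorem \ref{thm:physical-modelsRell} (or Theorem \ref{thm:short-effect-potent2}) forces $u \in L^2$, thus $u \in \ker(H-\lambda_0)_{|H^1}$, whence $\Pi_H u = u = 0$ and therefore $v = 0$. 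Combined with the last part of Remark \ref{remark:The case lambda0insigma} \ref{item:25}, this suffices to establish the analogue of Lemma \ref{prop5.17} Case 1 for $H_\sigma$, closing the argument; the final assertion about the norm limit when $\kappa = 0$ follows by reading off the Case 0 expansion \eqref{asymRz0b} applied to $H_\sigma$.
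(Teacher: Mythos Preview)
Your proposal is correct and follows essentially the same approach as the paper: the paper states that Theorem \ref{thm:rhoThree2} is obtained by ``almost verbatim'' mimicking the proof of Theorem \ref{thm:rhoThree} (the $\lambda_0=\Sigma_2$ analogue), with the property $\lambda_0\notin\sigma_\pp(H_\sigma')$ for small $\sigma$ supplied by a perturbation of Proposition \ref{prop:mour2} in the spirit of \cite{AHS}, and with Remark \ref{remark:The case lambda0insigma} \ref{item:25} providing the needed resolvent estimates for $\breve R_\sigma$ under the decay hypothesis $t>3/2$. Your identification of the key technical point (the analogue of \eqref{nonresonant11} under only $t>3/2$, resolved via the eigentransform and Theorem \ref{thm:short-effect-potent2}) and your observation that only Case 1 of Lemma \ref{prop5.17} and the zeroth-order expansion \eqref{eq:repeat} are needed---both valid for $\rho=1$, $\rho_1=3/2$---match the paper's reasoning exactly.
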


  \begin{remark}\label{remark:resolv-asympt-near2}
   As in  Remark
   \ref{remark:resolv-asympt-nearSimp} \ref{item:decayEig} the
   condition $\ran
  \Pi_H\subset L^2_t$ for some $t> 1$ suffices for the last assertion
  of Theorem \ref{thm:rhoThree2}.
  \end{remark}

\section{Positive slowly decaying effective potentials}\label{sect5.2}

Low-energy scattering  for one-body Schr\"odinger operators with
positive slowly decaying potentials is studied in \cite{Na,Ya1,Ya2}.
It is shown in \cite{Wa6} that this kind of operators satisfies Gevrey
type resolvent estimates at the threshold and this can be used to
establish large time asymptotics of the quantum dynamics with
sub-exponential time decay estimates on the remainder \cite{AW,Wa6}.
In this section, we want to show similar results for the N-body
Schr\"odinger operator at the lowest threshold $\la_0 = \Sigma_2$,
assuming that the effective potential is positive  outside a compact
set and slowly decaying at infinity.

Recall first from \cite{Wa6} some results for one-body operators.
The model operator $H_0$ in this framework is a closed second order elliptic operator of  the form
 \be \label{5.2.e2.1}
 H_0= -\sum_{i,j =1}^n \partial_{x_i} a^{ij}(x) \partial_{x_j} +  \sum_{j =1}^n  b_j(x) \partial_{x_j}  + v(x),
 \ee
 where $a^{ij}(x)$, $b_j(x)$ and $v(x)$  are  complex-valued functions in $\R^n$, $n \ge 1$.
 \begin{subequations}
  We assume that
  $a^{ij}, b_j $ are bounded $C^1$ functions with bounded derivatives and there exists $c>0$
 such that
 \be \label{5.2.e2.2}
 \Re (a^{ij}(x)) \ge c I_n, \quad \forall x \in \R^n.
 \ee
 Assume also that $v$ is relatively bounded with respect to $-\Delta$ with relative bound zero and  there exist some constants $0<\mu<1$ and $c_0>0$ such that
\bea \label{5.2.ass2}
& &|\w{H_0 u, u}| \ge  c_0 ( \|\nabla u\|^2 + \|\w{x}^{-\mu}u\|^2), \quad \mbox{for all } u \in H^2,  \\
& & \sup_x |\w{x}^\mu b_j (x) | < \infty, \quad j = 1, \cdots, n. \label{5.2.ass2b}
\eea
\end{subequations} The bound 
(\ref{5.2.ass2}) is called a weighted coercive condition and is essential for Gevrey estimates of the resolvent at the threshold zero.

Under the assumptions (\ref{5.2.ass2}) and (\ref{5.2.ass2b}), $H_0$ is bijective from $D(H_0)$ to $R(H_0)$.
 Let $R_0(0) : R(H_0) \to D(H_0)$ be its algebraic inverse.
$R_0(0)$ is a densely defined and closed operator, continuous from $L^{2}_{s}$ to $L^{2}_{ s-2\mu}$  and compact from $L^{2}_{ s} $ to $ L^{2}_{ s-2\mu -\ep}$, $\ep>0$,
 for any $s \in \R$ (\cite{Wa6,Ya2}). Thus $R_0(0)^N : L^{2}_{ s} \to L^{2}_{ s-2\mu N}$ is bounded.
 If $\Re H \ge 0$, one can check that the strong limit
\[
\slim_{z\in \Omega(\delta), z\to 0} \w{x}^{-2N\mu}(R_0(z)^N- R_0(0)^N )=0,
\]
where $\Omega (\delta)=\{z:  \abs{\arg z } > \f \pi 2 + \delta\}$ with $\delta>0$ small. The following Gevrey-type estimates hold for the resolvent at the threshold zero.

\begin{thm}[{\cite[Theorem 2.1]{Wa6}}]\label{5.2.2th1}
  Assume conditions (\ref{5.2.e2.2}), (\ref{5.2.ass2}) and (\ref{5.2.ass2b}). For any $a>0$  there exists $C_a>0$ such that
  \[
 \|\e^{-a\w{x}^{1-\mu}}R_0(0)^N \|+  \|R_0(0)^N \e^{-a\w{x}^{1-\mu}}\|\le C_a^{N +1} N^{\gamma N},
\]
uniformly in  $N\ge 1$.  Here $\gamma = \f{2\mu}{1-\mu}$.
\end{thm}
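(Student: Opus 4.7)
The plan is to convert the Gevrey bound into a weighted coercive estimate for a conjugated operator, exploiting that $e^{-a\langle x\rangle^{1-\mu}}$ has gradient of size exactly $\vO(\langle x\rangle^{-\mu})$, matching the scale in the weighted coercive hypothesis on $H_0$. First, I would set $\phi(x) = a\langle x\rangle^{1-\mu}$, so that $|\partial^\alpha\phi(x)| = \vO(a\langle x\rangle^{1-\mu-|\alpha|})$ for $|\alpha|\geq 1$, and form the conjugated operator
\[
H_\phi := e^{-\phi}H_0 e^{\phi} = H_0 + P_1 + P_0,
\]
where $P_1$ is a first-order operator with coefficients of size $\vO(a\langle x\rangle^{-\mu})$ and $P_0$ is a multiplication operator of size $\vO(a^2\langle x\rangle^{-2\mu}) + \vO(a\langle x\rangle^{-\mu-1})$. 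Both error terms are controlled by the right-hand side of the weighted coercive hypothesis on $H_0$, namely $\|\nabla u\|^2 + \|\langle x\rangle^{-\mu}u\|^2$, so for $a$ sufficiently small the same estimate (with $c_0$ replaced by, say, $c_0/2$) will hold for $H_\phi$. In particular $H_\phi$ is then bijective from its domain onto its range, and its inverse $R_\phi(0) = e^{-\phi}R_0(0)e^{\phi}$ satisfies $\|R_\phi(0)\|_{\vL(L^2_s,\,L^2_{s-2\mu})} \leq C$ uniformly in $s$ in the relevant range.

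Next, I would iterate this bound $N$ times to obtain $\|R_\phi(0)^N\|_{\vL(L^2,\,L^2_{-2\mu N})} \leq C^N$. The key algebraic identity $e^{-\phi}R_0(0)^N = R_\phi(0)^N e^{-\phi}$ then yields
\[
\|e^{-\phi}R_0(0)^N f\|_{L^2} = \|R_\phi(0)^N(e^{-\phi}f)\|_{L^2} \leq C^N\,\|\langle x\rangle^{2\mu N}e^{-a\langle x\rangle^{1-\mu}}f\|_{L^2}.
\]
An elementary single-variable calculus computation, maximizing $r^{2\mu N}e^{-ar^{1-\mu}}$ over $r\geq 1$ (critical point at $r^{1-\mu} = 2\mu N/(a(1-\mu))$), gives
\[
\sup_{x\in\R^n}\langle x\rangle^{2\mu N}e^{-a\langle x\rangle^{1-\mu}} \leq \left(\frac{2\mu}{a(1-\mu)e}\right)^{\gamma N}N^{\gamma N},
\]
where $\gamma = 2\mu/(1-\mu)$. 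Combining these two estimates delivers the first bound $\|e^{-\phi}R_0(0)^N\|_{\vL(L^2)} \leq C_a^{N+1}N^{\gamma N}$, and the companion bound $\|R_0(0)^N e^{-\phi}\|_{\vL(L^2)} \leq C_a^{N+1}N^{\gamma N}$ follows by running the same argument for the formal adjoint $H_0^*$ (which satisfies the same set of hypotheses, with the same constants) and then taking Hilbert space adjoints.

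The main obstacle will be the perturbation step establishing the weighted coercive estimate for $H_\phi$ uniformly as $a\downarrow 0$. One has to track carefully how the cross terms $\Re\langle P_1 u, u\rangle$ and $\langle P_0 u, u\rangle$ interact with the coercive right-hand side; the key bound is
\[
|\Re\langle P_1 u, u\rangle| + |\langle P_0 u, u\rangle| \leq Ca\bigl(\|\nabla u\|^2 + \|\langle x\rangle^{-\mu}u\|^2\bigr),
\]
obtained via Cauchy--Schwarz on terms of the form $\langle a^{ij}(\partial_j u),\,(\partial_i\phi)\,u\rangle \leq \|\nabla u\|\cdot\|a\langle x\rangle^{-\mu}u\|$, together with the pointwise bounds on $\partial^\alpha\phi$. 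Absorption into the coercive right-hand side succeeds provided $Ca < c_0/2$. Once this uniform coercive bound for $H_\phi$ is in hand, the rest is routine iteration plus the pointwise calculus above, and all the $a$-dependence gets localized into the single constant $C_a$ on the right side of the final inequality.
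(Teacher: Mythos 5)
Your overall strategy (conjugate by the exponential weight, reduce to weighted estimates for the conjugated operator, and let an elementary maximization of $\w{x}^{2\mu N}\e^{-a\w{x}^{1-\mu}}$ produce the factor $N^{\gamma N}$) is in the right spirit, and the first step is sound: for $a$ small the first- and zeroth-order errors created by conjugation with $\e^{\pm a\w{x}^{1-\mu}}$ have coefficients $\vO(a\w{x}^{-\mu})$ and $\vO(a\w{x}^{-2\mu})$ and are absorbed by the weighted coercive bound (and the case of general $a>0$ then follows by monotonicity of the weight in $a$). The genuine gap is the iteration step: you assert that $\|R_\phi(0)\|_{\vL(L^2_s,\,L^2_{s-2\mu})}\le C$ \emph{uniformly in} $s$, and you need this for $s$ ranging over an interval of length $2\mu N$ growing with $N$. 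This uniformity is precisely the hard analytic content of the theorem, and with the plain weights $\w{x}^{s}$ it is not available: conjugating $H_\phi$ by $\w{x}^{s}$ creates first-order terms whose coefficients are of size $|s|\w{x}^{-1}$, and these can be absorbed into $c_0\bigl(\|\nabla u\|^2+\|\w{x}^{-\mu}u\|^2\bigr)$ only where $|s|\w{x}^{\mu-1}$ is small, i.e. outside the ball $\w{x}\lesssim |s|^{1/(1-\mu)}$; inside that ball the absorption fails and the constant in the weighted estimate genuinely grows with $|s|$. If you repair the iteration by inserting the actual $s$-dependent constants (which grow at least like a power of $|s|$, cf. the $\w{s}^{\gamma}$ in \eqref{5.2.2e44}), the product over the $N$ steps contributes an additional factorial-type factor on top of the $N^{\gamma N}$ you already spend on $\sup\w{x}^{2\mu N}\e^{-a\w{x}^{1-\mu}}$, and the final bound overshoots the claimed $C_a^{N+1}N^{\gamma N}$.

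This is exactly why the proof in \cite{Wa6} (mirrored in Section \ref{sect5.2} for the perturbed operator) does not use $\w{x}^{s}$ at all, but the flattened weights $\varphi_s(x)=\bigl(1+|x|^2/R_s^2\bigr)^{s/2}$ with the $s$-dependent radius $R_s=M\w{s}^{1/(1-\mu)}$: for these one has $|\nabla\varphi_s|\le \epsilon(M)\w{x}^{-\mu}\varphi_s$ uniformly in $s$, so the analogue of your per-step bound does hold with an $s$-independent constant (Lemma \ref{5.2.2lem6}), and the Gevrey factor $N^{\gamma N}$ then arises from the accumulated $\w{(2N+r)\mu}^{\gamma(N+1)}$ losses in the iterated estimates (Proposition \ref{5.2.2prop7}), while the comparison of the flattened polynomial weight with the exponential weight costs only a geometric factor $A_a^{2N}$, see \eqref{polytoexpo} — not a second $N^{\gamma N}$. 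In short: your scheme puts the whole Gevrey loss into the weight-conversion supremum and assumes the iterated resolvent bounds are free, whereas in reality the loss sits in the iterated weighted estimates; as written the proposal does not close, and closing it essentially forces you back to the $\varphi_s$-machinery of the quoted proof.
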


From  Theorem \ref{5.2.2th1} one deduces the following result.

\begin{cor}[{\cite[Corollary 4.2]{Wa6}}]\label{5.2.2cor2}
Let $H_0 = -\Delta + v(x)$ be self-adjoint and positive and satisfy the assumptions of Theorem \ref{5.2.2th1}.  Then for any $a>0$  there exists $C_a>0$ such that
  \[
 \|\e^{-a\w{x}^{1-\mu}}R_0(z)^N \|+  \|R_0(z)^N \e^{-a\w{x}^{1-\mu}}\|\le C_a^{N +1} N^{\gamma N}, \quad\forall N \ge 1,
\]
uniformly in $N \ge 1$ and $z \in \Omega=\{\zeta\in \C:  |\arg \zeta | > \delta\}$ with $\delta >0$.
\end{cor}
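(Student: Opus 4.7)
The plan is to reduce the bound for $R_0(z)^N$ to the known bound for $R_0(0)^N$ at the threshold, provided by Theorem \ref{5.2.2th1}, by exploiting self-adjointness of $H_0$ and the spectral theorem. Since $H_0$ is self-adjoint and positive, the operators $R_0(0)=H_0^{-1}$ and $R_0(z)=(H_0-z)^{-1}$ are both Borel functions of the same self-adjoint operator $H_0$, hence they commute. Consequently the first resolvent identity rearranges into the purely algebraic factorization
\begin{equation*}
R_0(z)\bigl(1-zR_0(0)\bigr) = R_0(0), \qquad\text{equivalently}\qquad R_0(z) = R_0(0)\bigl(1-zR_0(0)\bigr)^{-1},
\end{equation*}
so that for every $N\ge1$
\begin{equation*}
R_0(z)^N = R_0(0)^N\bigl(1-zR_0(0)\bigr)^{-N}.
\end{equation*}

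The first step is then to bound the second factor uniformly in $N$ and in $z\in\Omega$. By the spectral theorem, $\|(1-zR_0(0))^{-N}\| = \sup_{\lambda\in\sigma(H_0)}|1-z/\lambda|^{-N}$. An elementary computation gives, for $\lambda>0$ and $z\in\C\setminus[0,\infty)$,
\begin{equation*}
\frac{|\lambda-z|}{\lambda} \ge
\begin{cases}
|\sin(\arg z)| & \text{if }\Re z>0,\\
1 & \text{if }\Re z\le0,
\end{cases}
\end{equation*}
with the first case attained at $\lambda=|z|^2/\Re z$; in particular for $z\in\Omega=\{|\arg\zeta|>\delta\}$ we obtain
\begin{equation*}
\bigl\|(1-zR_0(0))^{-N}\bigr\| \le (\sin\delta)^{-N}.
\end{equation*}

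The second step combines this with Theorem \ref{5.2.2th1}: since multiplication by $\e^{-a\w{x}^{1-\mu}}$ commutes with nothing relevant but acts as a bounded operator from the left (or right), and since $R_0(0)^N$ and $(1-zR_0(0))^{-N}$ commute,
\begin{align*}
\bigl\|\e^{-a\w{x}^{1-\mu}}R_0(z)^N\bigr\|
&\le \bigl\|\e^{-a\w{x}^{1-\mu}}R_0(0)^N\bigr\|\cdot\bigl\|(1-zR_0(0))^{-N}\bigr\|\\
&\le C_a^{N+1} N^{\gamma N}(\sin\delta)^{-N} \le \widetilde C_a^{N+1} N^{\gamma N},
\end{align*}
where $\widetilde C_a:=C_a/\sin\delta$ absorbs the geometric factor. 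The estimate for $R_0(z)^N\e^{-a\w{x}^{1-\mu}}$ is obtained identically, placing $(1-zR_0(0))^{-N}$ on the opposite side and invoking the second bound of Theorem \ref{5.2.2th1}.

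There is no real obstacle here: the whole proof is essentially a spectral-calculus identity plus a planar geometry estimate. The only point demanding (mild) care is verifying that the factorization $R_0(z)^N = R_0(0)^N(1-zR_0(0))^{-N}$ makes rigorous sense on the Hilbert space where $H_0$ is self-adjoint (in particular that $(1-zR_0(0))^{-1}$ is everywhere defined and bounded for $z\in\Omega$), which is immediate from $\sigma(H_0)\subset[0,\infty)$ and the sectorial condition $|\arg z|>\delta$.
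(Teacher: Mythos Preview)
Your proof is correct. The factorization $R_0(z)^N = R_0(0)^N\bigl(1-zR_0(0)\bigr)^{-N}$ is valid via the functional calculus for the self-adjoint operator $H_0$ (concretely, $(1-zR_0(0))^{-1} = H_0(H_0-z)^{-1}$ is the bounded function $\lambda\mapsto\lambda/(\lambda-z)$ of $H_0$, and this function maps $L^2$ into $D(R_0(0)^N)$ so that the product makes sense on all of $L^2$). Your planar estimate $\inf_{\lambda>0}|\lambda-z|/\lambda \ge \sin\delta$ for $|\arg z|>\delta$ is also correct, and combining with Theorem~\ref{5.2.2th1} gives the result with the adjusted constant $C_a/\sin\delta$.

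The paper itself does not give a proof of this corollary; it simply cites \cite[Corollary 4.2]{Wa6}. Your argument is a clean, self-contained derivation that exploits precisely the extra hypothesis of the corollary (self-adjointness and positivity of $H_0$) not present in Theorem~\ref{5.2.2th1}: this is what makes the spectral factorization and the uniform bound on $(1-zR_0(0))^{-N}$ available, reducing the $z$-dependent statement to the threshold case. Whether or not the argument in \cite{Wa6} proceeds the same way, your route is certainly the most direct one under the stated assumptions.
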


Since $R_0^{(N)}(z) = N! R_0(z)^{N+1}$, Corollary \ref{5.2.2cor2} means that  $\e^{-a\w{x}^{1-\mu}}R_0(z)$ belongs to the Gevrey class $\vG^{(1+ \gamma)}(\Omega)$, where
\[
\vG^{(1+\gamma)}(\Omega) =\{ F : \Omega \to \vL(L^2)\mid \exists C>0 \mbox{ s.t. }  \|F^{(N)}(z)\| \le   C^{N+1} N! N^{\gamma N}, \; \forall z \in \Omega, N\in \N \}
\]
For $n=3$, the repulsive Coulomb Hamiltonian $- \Delta +
\f{c}{|x|}$, $c>0$, satisfies all conditions of Corollary \ref{5.2.2cor2}  with $\mu =\f 1 2$, so its resolvent belongs to $\vG^{(3)}(\Omega)$ in exponentially weighted spaces.
 In \cite{AW,Wa6}, non-selfadjoint perturbations $H$ of $H_0$ are studied and large time expansions are obtained for the quantum dynamics $e^{-it H}$ and $e^{-tH}$  as $t\to +\infty$.
\\

To study the N-body Schr\"odinger operator $H$ at its lowest threshold $\la_0 = \Sigma_2$, we assume
\be \label{ass5.2.21b} \mbox{
  $\lambda_0 =\Sigma_2$ is a non-multiple two-cluster threshold in the sense
  of Condition \ref{cond:uniq}}. \ee
This means there exists a unique
$a_0 \in\vA\setminus\{a_{\max}\}$ such that $\Sigma_2
\in\sigma_{\pupo}(H^{a_0})$, and this cluster
decomposition is a two-cluster decomposition obeying Condition
\ref{cond:geom_singl}.
 We keep the notation of Section \ref{Reduction near a
  simple two-cluster threshold} (in the present case,  $m=1$, $\varphi_1$ is a normalized eigenfunction of $H^{a_0}$ with eigenvalue $\Sigma_2$ and $\vH = L^2(\bX_{a_0})$).
  Denote $x_0 =x_{a_0} \in \bX_{a_0}$.
   From  Section \ref{Reduction near a
  simple two-cluster threshold} we have the representation formula for the resolvent,
\begin{equation*} 
R(z)= E(z)- E_{+}(z) E_{\vH}(z)^{-1} E_{-}(z)\,\text{ for }\,  \Im z \neq 0,
\end{equation*}
where
  \begin{align*}
E(z) &= R'(z) = (H'-z)^{-1}\Pi',   \\
E_+(z)&=(1-R'(z)I_0)S,   \\
E_-(z) &= S^*\parb{1- I_0R'(z)},\\
E_{\vH}(z)&= (z-\lambda_0)-(-\Delta_0 + S^*{ I_0}S -
S^* I_0R'(z)I_0 S),
\end{align*}
and  $\Delta_0 =\Delta_{x_{0}} $,  $I_0(x) = \sum_{b\not\subset a} V_b(x^b)$ and $S f (x)(x) =\varphi_1(x^{a_0})f(x_{0})$. The effective potential here is
\[
S^*I_0S(x_0) =\w{I_0\varphi_1, \varphi_1}_0,
\]
where $\w{\cdot, \; \cdot}_0$ is the scalar product of $L^2(\bX^{a_0})$.

 In addition to (\ref{ass5.2.21b}) we assume that
 \begin{subequations}
  \bea
 & & \lambda_0 \not\in \sigma_\pp(H'),   \quad \rho \in (0, 2), \label{ass5.2.22b} \\
  & & \exists  c, R>0  \mbox{ such that } S^*I_0 S\ge
  \f{c}{|x_{0}|^\rho}\text{ for } |x_{0}|>R.  \label{ass5.2.23b}
\eea
 \end{subequations}
Conditions (\ref{ass5.2.21b}) and  (\ref{ass5.2.22b}) show that
$R'(z)$ is holomorphic for $z$ near $\la_0$.  Condition
(\ref{ass5.2.23b}) implies that in some sense $\la_0$ can not be a
resonance of $H$ (cf. Lemma \ref{5.2.2lem6}). Therefore we need only
to distinguish the cases $\la_0$ be an eigenvalue of $H$ or not. In
the case $\la_0$ is not an eigenvalue of $H$ we prove the following result.

\begin{thm} \label{5.2.2th3}
 In addition to conditions (\ref{ass5.2.21b}),  (\ref{ass5.2.22b}) and (\ref{ass5.2.23b}), we assume    $\la_0 =\Sigma_2\not\in \sigma_\pp(H)$.
Then for  any $a>0$, $\e^{-a\w{x_0}^{1-\mu}}R(z)$ belongs to the Gevrey class $\vG^{(\f{1+\mu}{1-\mu})}(\Omega_{\la_0}(\delta))$, where $\mu = \f{\rho}{2}$  and  $\Omega_{\la_0}(\delta) =\{z\in\C: |z-\la_0| <\delta, |\arg (z-\la_0)|>\delta\}$, $\delta >0$ small.
\end{thm}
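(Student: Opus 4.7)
The plan is to reduce the Gevrey claim for $R(z)$ to one for $E_{\vH}(\lambda_0 + z)^{-1}$ via the Grushin formula
\begin{equation*}
R(\lambda_0 + z) = E(\lambda_0 + z) - E_+(\lambda_0 + z)\, E_{\vH}(\lambda_0 + z)^{-1}\, E_-(\lambda_0 + z),
\end{equation*}
and then to import Corollary~\ref{5.2.2cor2} to obtain Gevrey bounds for $E_{\vH}(\lambda_0 + z)^{-1}$ in exponentially weighted spaces. Under \eqref{ass5.2.21b} and \eqref{ass5.2.22b}, the reduced resolvent $R'(\lambda_0 + z)$ is holomorphic in a complex neighbourhood of $0$, so $E(\lambda_0 + z)$ and $E_\pm(\lambda_0 + z)$ are analytic in $z$. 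The factors of $I_0$ in $E_\pm$ together with the polynomial decay of $\varphi_1$ (by exponential decay for a non-threshold bound state of $H^{a_0}$) make $E_\pm(\lambda_0 + z)$ map exponentially weighted spaces to exponentially weighted spaces, and analyticity puts them trivially in every Gevrey class.

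Next I would write $P(z) := -E_{\vH}(\lambda_0 + z) = H_0 - z + V(z)$ with
\begin{equation*}
H_0 = -\Delta_0 + S^* I_0 S, \qquad V(z) = -S^* I_0\, R'(\lambda_0 + z)\, I_0 S.
\end{equation*}
By \eqref{ass5.2.23b} and a Taylor expansion of $S^*I_0S$ at infinity in the spirit of \eqref{eq:22asym}, the effective potential $S^*I_0 S$ satisfies $S^*I_0 S(x_0) \geq c\langle x_0\rangle^{-\rho}$ outside a compact set, so after absorbing a compactly supported correction $H_0$ satisfies the weighted coercive estimate \eqref{5.2.ass2} with $\mu = \rho/2$; the drift condition \eqref{5.2.ass2b} is trivial here since there is no first order term. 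Corollary~\ref{5.2.2cor2} then yields $e^{-a\langle x_0\rangle^{1-\mu}}(H_0 - z)^{-1} \in \vG^{((1+\mu)/(1-\mu))}(\Omega)$ in any conic neighbourhood of $0$ avoiding $\R_+$ (noting $1 + \gamma = (1+\mu)/(1-\mu)$).

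To pass from $H_0$ to $P(z)$ I would use the resolvent identity
\begin{equation*}
P(z)^{-1} = (H_0 - z)^{-1}\bigl(1 + V(z)(H_0 - z)^{-1}\bigr)^{-1}.
\end{equation*}
The hypothesis $\lambda_0 \notin \sigma_{\pp}(H)$ translates (via the inversion formula of Remark~\ref{remark:GrushinBB} applied to $P(0)$) into the injectivity of $P(0)$; combined with compactness of $V(0)(H_0+1)^{-1}$ this gives invertibility of $P(0)$ by the Fredholm alternative, and hence invertibility of $1 + V(z)(H_0 - z)^{-1}$ on a punctured neighbourhood of $0$ in $\Omega_{\lambda_0}(\delta)$. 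Gevrey stability of this Neumann-type inverse, once the factors $V(z)(H_0 - z)^{-1}$ are known to have Gevrey bounds in the same exponentially weighted spaces, is standard.

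The main obstacle will be controlling the non-local term $V(z)$ in exponentially weighted norms: one must show that
\begin{equation*}
e^{-a\langle x_0\rangle^{1-\mu}}\, V(z)\, e^{a\langle x_0\rangle^{1-\mu}}
\end{equation*}
remains uniformly bounded on $L^2(\bX_{a_0})$ for $z$ in $\Omega_{\lambda_0}(\delta)$. This relies on the exponential decay of $\varphi_1$ in the internal variable $x^{a_0}$ together with the geometric bound $|x^b| + |x_{a_0}| \gtrsim |x|$ for $b \not\subset a_0$ used in Lemma~\ref{Lemma:basic2}, so that after conjugation the two factors of $I_0 S$ give an exponentially decaying kernel in $x^{a_0}$ that dominates the sub-linear weight $\langle x_0\rangle^{1-\mu}$. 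Once this conjugation bound is established, multiplication by $V(z)$ preserves the Gevrey class $\vG^{((1+\mu)/(1-\mu))}$, the Neumann series argument closes, and combining with the analyticity of $E_\pm$ and $E$ yields the claim for $R(\lambda_0 + z)$.
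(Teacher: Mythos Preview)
Your overall strategy via the Grushin reduction is right, and your observation that the outer pieces $E$, $E_\pm$ are holomorphic and harmless is exactly how the paper proceeds. The gap is in the step where you treat the non-local term $V(z)=-S^*I_0 R'(\lambda_0+z)I_0S$ as a perturbation of $H_0=-\Delta_0+S^*I_0S$ in the Gevrey setting. The paper explicitly flags this: ``the non-local term $W(\lambda)$ can not be treated as a perturbation in the Gevrey setting.'' Concretely, Corollary~\ref{5.2.2cor2} gives only the \emph{one-sided} bound $\|e^{-a\langle x_0\rangle^{1-\mu}}R_0(z)^N\|\le C^{N+1}N^{\gamma N}$. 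To iterate your resolvent identity and bound $e^{-a\langle x_0\rangle^{1-\mu}}P(z)^{-N}$, you would need to control expressions like $e^{-a\langle x_0\rangle^{1-\mu}}R_0(z)(1+K(z))^{-1}R_0(z)(1+K(z))^{-1}\cdots$. After the leftmost $R_0(z)$ has absorbed the exponential weight, the remaining $R_0(z)$ factors act on $L^2$ where their norm is $\sim |z|^{-1}$ near $z=0$, destroying the uniformity. The two-sided conjugation bound you propose, $e^{-a\langle x_0\rangle^{1-\mu}}V(z)e^{a\langle x_0\rangle^{1-\mu}}$ bounded, does not help: to exploit it you would need $e^{-a\langle x_0\rangle^{1-\mu}}R_0(z)e^{a\langle x_0\rangle^{1-\mu}}$ to be uniformly bounded as well, and that fails for the same reason. (Also, your justification of the conjugation bound via exponential decay of $\varphi_1$ is off; the right mechanism is simply $e^{-a\langle x_0\rangle^{1-\mu}}H'e^{a\langle x_0\rangle^{1-\mu}}=H'+O(a)$ since the weight depends only on $x_0$.)

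What the paper does instead is incorporate the non-local term into the model operator. It writes $E_\vH(\lambda_0+\lambda)=\lambda-(F(\lambda)+\tilde U(\lambda))$ with $F(\lambda)=h_0+\chi_R W(\lambda)\chi_R$ and $\tilde U(\lambda)$ compactly supported (hence genuinely exponentially decaying, cf.~\eqref{5.2.2Uexponential}). The crucial new input is Lemma~\ref{5.2.2lem5}, showing $\|\langle x_0\rangle^r\varphi_s W(\lambda)\varphi_{-s}\langle x_0\rangle^{r'}\|\le C$ uniformly in $s$ for the $s$-dependent polynomial weights $\varphi_s$; this makes $\chi_R W(\lambda)\chi_R$ a small perturbation of $h_0$ in the $\varphi_s$-weighted coercive estimate (Lemma~\ref{5.2.2lem6}) for $R$ large. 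With that coercive estimate in hand, the Wa6 machinery is rerun directly for $G(\lambda)=(F(\lambda)-\lambda)^{-1}$ (Proposition~\ref{5.2.2prop7}), yielding the Gevrey bounds \eqref{GevreyG}. Only then is $\tilde U(\lambda)$, which \emph{is} exponentially decaying, handled by the perturbative formula \eqref{5.2.2inverse}. In short: your perturbation step must be performed not around $H_0$ but around an operator that already contains the non-local part, and getting Gevrey bounds for that operator requires going back to the weighted energy estimates of \cite{Wa6} rather than quoting Corollary~\ref{5.2.2cor2} as a black box.
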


The proof of Theorem \ref{5.2.2th3} is divided into several steps. The main task is to prove that $E_{\vH}(z)^{-1} \in \vG^{(\f{1+\mu}{1-\mu})}(\Omega_{\la_0}(\delta))$ if $\la_0$ is not an eigenvalue of $H$.
Set
\[
E_{\vH}(\la_0 + \la )= \la-(-\Delta_0 + S^*{ I_0}S -
S^* I_0R'(\la_0 +\la )I_0 S) := \la -( -\Delta_0 + U(\la))
\]
and
\begin{equation}
W(\lambda) =-S^* I_0R'(\la_0 +\la )I_0 S = - \w{\varphi_1,   I_0R'(\lambda + \la_0)I_0(\varphi_1\otimes\cdot)}_0.
\end{equation}
 $W(\lambda)$ is holomorphic for $\lambda$ near $0$ and satisfies
\[
\|\w{x}^{1+2\mu} W(\lambda) \w{x}^{1 +2\mu}\| \le C
\]
uniformly in $|\lambda|<2\delta$ for some $\delta>0$ small, because $\Pi I_0 \Pi' = \vO(\w{x_0}^{-1-2\mu})$.
Under the assumption (\ref{ass5.2.23b}),  the results of \cite{Wa6} can be applied to $-\Delta_0 + S^*{ I_0}S$. However the non-local term $W(\la)$ can not be treated as a perturbation in the Gevrey setting.  To prove Theorem \ref{5.2.2th3} we follow the approach of \cite{Wa6} from the very beginning and  exploit the holomorphicity of $W(\la)$ in $\la$ to prove some uniform energy estimates.\\

For $s\in \R$, let  $\varphi_s$ be the weight function defined by
\[
\varphi_s(x_0) = \left(1 + \f{|x_0|^2}{R_s^2}\right)^{ s/2 } \mbox{ with }
R_s = M \w{s}^{\f 1{1-\mu}},
\]
where $M>1$ is to be chosen sufficiently large and is independent of $s$.

\begin{lemma} \label{5.2.2lem5}  There exist  constants $M, \delta, C >0$ such that
\[
\|\w{x_0}^r  \varphi_s W(\lambda) \varphi_{-s}\w{x_0}^{r'}\| \le C
\]
for   $r, r' \in \R$ with $r, r' \le 1 + 2\mu$,
 for $\la \in \C$ with $|\lambda| < \delta$ and for $s\in \R$.
\end{lemma}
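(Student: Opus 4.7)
My plan is to reduce the weighted bound to a uniform operator-norm estimate on the conjugated reduced resolvent $\tilde\varphi_s R'(\lambda_0+\lambda)\tilde\varphi_{-s}$, and then to establish the latter by a Neumann series argument that absorbs the commutator $[\tilde\varphi_s,H]\tilde\varphi_{-s}$ through a sufficiently large choice of the scale $M$ entering $R_s$.

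First I would lift $\varphi_s$ to the multiplication operator $\tilde\varphi_s(x):=\varphi_s(x_0)$ on $L^2(\bX)$. Since $\varphi_s$ depends only on $x_0$, $\tilde\varphi_s$ commutes with $I_0$, with $\Pi$ (and hence $\Pi'$), and with $\w{x_0}^r$, and we have $S\varphi_s=\tilde\varphi_s S$ and $S^*\tilde\varphi_s=\varphi_s S^*$. Using $R'=\Pi'R'\Pi'$ and $S=\Pi S$, one obtains
\[
\w{x_0}^r \varphi_s W(\lambda)\varphi_{-s}\w{x_0}^{r'} = -\bigl(\w{x_0}^r S^*\Pi I_0 \Pi'\bigr)\bigl(\tilde\varphi_s R'(\lambda_0+\lambda)\tilde\varphi_{-s}\bigr)\bigl(\Pi' I_0 \Pi S\w{x_0}^{r'}\bigr).
\]
By a first-order Taylor expansion of $I_0$ in the $x^{a_0}$-variable and the exponential decay of the two-cluster bound state $\varphi_1$ (cf.\ the argument for \eqref{eq:22asymB}), one has $\Pi I_0 \Pi' = \vO(\w{x_0}^{-1-2\mu})$ since $\rho=2\mu$. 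Hence the outer factors are bounded in $\vL(L^2)$ uniformly for $r,r'\le 1+2\mu$, and it suffices to prove
\[
\sup_{s\in\R,\,|\lambda|<\delta}\|\tilde\varphi_s R'(\lambda_0+\lambda)\tilde\varphi_{-s}\|_{\vL(L^2(\bX))} < \infty.
\]

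Next I would compute $\tilde\varphi_s H\tilde\varphi_{-s}=H+A_s$ with
\[
A_s = -2(\nabla_{x_0}\varphi_s)\tilde\varphi_{-s}\cdot\nabla_{x_0} - (\Delta_{x_0}\varphi_s)\tilde\varphi_{-s},
\]
noting that $\tilde\varphi_s$ commutes with $-\Delta^{a_0}$ and with the potentials. Direct differentiation of $\varphi_s=(1+|x_0|^2/R_s^2)^{s/2}$ gives $|\partial^\alpha\varphi_s|\varphi_{-s}\le C_\alpha(|s|/R_s)^{|\alpha|}$, and with $R_s=M\w{s}^{1/(1-\mu)}$ one has $|s|/R_s\le M^{-1}\w{s}^{-\mu/(1-\mu)}\le M^{-1}$ uniformly in $s$. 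Hence $A_s=\vO(M^{-1})\nabla_{x_0}+\vO(M^{-1})$ as an operator from $H^1(\bX)$ to $L^2(\bX)$, with constants independent of $s$. Since $\tilde\varphi_s$ commutes with $\Pi'$, the identity
\[
\tilde\varphi_s R'(\lambda_0+\lambda)\tilde\varphi_{-s} = R'(\lambda_0+\lambda)\bigl(1+E_s R'(\lambda_0+\lambda)\bigr)^{-1},\qquad E_s:=\Pi' A_s \Pi',
\]
holds once its right-hand side is well defined. By \eqref{ass5.2.22b}, $R'(\lambda_0+\lambda)$ is $\vL(L^2;H^2)$-holomorphic for $|\lambda|<2\delta_0$, so $\nabla_{x_0}R'$ is uniformly bounded, and choosing $M$ large enough that $\|E_s R'(\lambda_0+\lambda)\|\le 1/2$ for all $s\in\R$ and $|\lambda|\le\delta$ yields the desired uniform bound through a convergent Neumann series.

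The hard part will be the rigorous justification of the conjugation identity above, because $\tilde\varphi_{\pm s}$ are unbounded when $s\ne 0$, so both sides must be interpreted carefully. I would verify the identity first on a core such as $\Pi'\bigl(C_c^\infty(\bX)\bigr)$, on which $\tilde\varphi_{-s}$ maps into $\vD(H)$ and every term is well defined, and then extend by density using the a priori $L^2$-bound furnished by the Neumann series on the right-hand side. A secondary technical matter is the uniform-in-$(r,r')$ estimate of $\w{x_0}^r S^*\Pi I_0\Pi'$: the decay $\vO(\w{x_0}^{-1-2\mu})$ of $\Pi I_0 \Pi'$ must be split as in \eqref{eq:22asymB} between the smooth part $V_b^{(1)}$ (handled by Taylor expansion) and the singular part $V_b^{(2)}$ (absorbed via the free factor $(-\Delta^{a_0}+1)^{-1}$ from $\Pi$), which is ultimately the mechanism that forces the restriction $r,r'\le 1+2\mu$ in the statement.
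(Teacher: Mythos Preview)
Your proposal is correct and follows essentially the same route as the paper: reduce to a uniform bound on $\varphi_s R'(\lambda_0+\lambda)\varphi_{-s}$ via the $\vO(\w{x_0}^{-1-2\mu})$ decay of $\Pi I_0\Pi'$, then compute that $\varphi_s H'\varphi_{-s}=H'+\vO(1/M)$ from the derivative bounds on $\varphi_s$, and close with a Neumann series for $M$ large. The paper's proof is terser on the domain issues you single out (it simply writes $\varphi_s R'\varphi_{-s}=R'(1+\vO(1/M)R')^{-1}\Pi'$ without further comment), but the argument is the same.
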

\pf Since $[\varphi_s, \Pi]=0$ and $\Pi I_0 \Pi' = \vO(\w{x_0}^{-1-2\mu})$, one has
\[
\|\w{x_0}^r  \varphi_s W(\lambda) \varphi_{-s}\w{x_0}^{r'}\| \le C \| \varphi_s R'(\la_0+ \lambda) \varphi_{-s}\|.
\]
Computing the commutator $[-\Delta_0, \varphi_s]$ one obtains
\[
\varphi_s H' \varphi_{-s} = H' + \vO\big(\f{s}{M\w{s}^{\f 1{1-\mu}}}\big) = H' +   \vO\big(\f{1}{M}\big),
\]
where the term $\vO\big(\f{1}{M}\big)$ satisfies the bound
 \[
 \|\vO\big(\f{1}{M}\big) (H'+i)^{-1}\| \le \f{C}{M}
 \]
 uniformly in $s\in \R$. Since $\la_0$ is in the resolvent set of $H'$, there exists $\delta>0$ such that $R'(\la_0 +\la)$  is a well defined holomorphic function for $|\la| <2\delta$  and we can take $M>1$ large so that
 \[
 \| \vO\big(\f{1}{M}\big)R'(\la_0+ \lambda)\| \le \f 1 2
 \]
 for $|\la| <\delta$.  From the equation
\[
 \varphi_s R'(\la_0+ \lambda) \varphi_{-s} =  R'(\la_0+ \lambda) \left(1 +   \vO\big(\f{1}{M}\big)R'(\la_0+ \lambda)\right)^{-1} \Pi',
\]
it follows that $ \varphi_s R'(\la_0+ \lambda) \varphi_{-s}$ is
uniformly bounded for $s\in \R$ and $|\la|<\delta$. Whence Lemma \ref{5.2.2lem5} is proven.
\ef

Let $\chi_R(x_0) =\chi_1(\f{x_0}{R})$, $R\ge 1$, where $\chi_1\in C^\infty$ is a cut-off such that $0\le \chi_1 \le 1$, $\chi_1(x_0) =0$ if $|x_0| \le 1$ and
$\chi_1(x_0) =1$ if $|x_0| \ge 2$. Set
\begin{align*}
F(\la) & = -\Delta_0 + 1- \chi_R + \chi_R U(\lambda)\chi_R \\
\widetilde U(\la) &=  U(\la) -  (1- \chi_R + \chi_R U(\lambda)\chi_R) \\
  h_0 &= -\Delta_0 + 1-\chi_R + \chi_R^2 S^*I_0S.
\end{align*}
 Then $E_\vH(\la_0 + \la)$ can be decomposed as
\begin{align}
E_\vH(\la_0 + \la)& = \la - (F(\la) + \widetilde U(\la)) \\
&= \la - (h_0  + (1-\chi_R) ((1+\chi_R)S^*I_0S -1) + W(\la)) \nonumber
\end{align}
For $R >1$ sufficiently large, $h_0$ is a one-body Schr\"odinger operator with globally positive and slowly decaying potential $v_0= 1-\chi_R + \chi_R^2 S^*I_0S$:
\[
 v_0(x_0) \ge \f{c}{\w{x_0}^{2\mu}}, \quad x_0 \in \bX_0,
\]
for some $c>0$. The operator $F(0) = h_0 + \chi_R W(0)\chi_R $ is a non-local perturbation  of $h_0$  and $F(0) \ge 0$.
Note that
\[
\widetilde U(\la) = (1-\chi_R) (U(\la)-1) + \chi_R U(\la) (1-\chi_R).
\]
Since $1-\chi_R$ has  compact support and $\la_0$ is in the resolvent set of $H'$, making use of the relation
\[
 \e^{a\w{x_0}} H'  \e^{-a\w{x_0}} = H' + \vO(a)
\]
for $a>0$ small,  one sees that $\widetilde U(\la) $ is exponentially decaying in the sense that
\be \label{5.2.2Uexponential}
\| \e^{a\w{x_0}} \widetilde U(\la)  \e^{a\w{x_0}} \| \le C
\ee
uniformly for $\la$ near $0$ and $|a|\le \delta$, $\delta>0$ small.


\begin{lemma} \label{5.2.2lem6} Assume (\ref{ass5.2.23b}) and let $\mu =\f \rho 2 \in (0, 1)$.  One has
\be
  \| \w{x_0}^{-\mu} \varphi_s u\| +   \| \nabla (\varphi_s u)\| \le  C \| \w{x_0}^{\mu} \varphi_s (F(\la)-\la) u \|
\ee
uniformly in  $s\in\R$,  $u\in \vS$ and $ \lambda \in \Omega_0(\delta)=\{\zeta\in \C: |\zeta|<\delta, |\arg \zeta | > \delta\}\cup \{0\}$ .
\end{lemma}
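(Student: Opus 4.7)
The plan is to establish a weighted coercive estimate for the conjugated operator $\varphi_s (F(\lambda)-\lambda)\varphi_{-s}$, exploiting the decomposition $F(\lambda) = h_0 + \chi_R W(\lambda)\chi_R$ in which $h_0 = -\Delta_0 + v_0$ has a globally non-negative potential $v_0 \ge c_1 \langle x_0\rangle^{-2\mu}$, and in which the non-local remainder $\chi_R W(\lambda)\chi_R$ is controlled by Lemma \ref{5.2.2lem5}.

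First I would conjugate: setting $v = \varphi_s u$, compute
\[
\varphi_s(-\Delta_0)\varphi_{-s} = -\Delta_0 + A_s,
\]
where $A_s$ is first order with coefficients bounded by $|\nabla \log\varphi_s|$ and $|\nabla^2 \log\varphi_s|$. Since $R_s = M\langle s\rangle^{1/(1-\mu)}$, a direct computation yields $\|\langle x_0\rangle A_s (H_0+i)^{-1/2}\| = O(M^{-1})$ uniformly in $s$; thus all conjugation errors are absorbable once $M$ is large. The analogous estimate for $[\varphi_s, \chi_R W(\lambda)\chi_R]\varphi_{-s}$ follows from Lemma \ref{5.2.2lem5}, applied with $r,r'$ close to $1+2\mu$.

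Next, the core step is the $\varphi_s$-independent weighted coercive bound
\[
|\langle v, (h_0-\lambda)v\rangle| \ge c_0\bigl(\|\nabla v\|^2 + \|\langle x_0\rangle^{-\mu} v\|^2\bigr),
\quad v\in H^2,
\]
uniformly in $\lambda\in\Omega_0(\delta)$. This rests on the pointwise geometric inequality
\[
|c_1\langle x_0\rangle^{-2\mu} - \lambda| \ge \epsilon\bigl(|\lambda| + \langle x_0\rangle^{-2\mu}\bigr),
\]
valid for all $x_0$ and $\lambda\in\Omega_0(\delta)$ with $|\lambda|$ small (treat separately the regimes $c_1\langle x_0\rangle^{-2\mu}\gg|\lambda|$, $\ll|\lambda|$ and the balanced regime where the angular separation $|\arg\lambda|>\delta$ is decisive). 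Integrating this against $|v|^2$ and adding $\|\nabla v\|^2$ from the Dirichlet form yields the bound after taking $|\langle v, (h_0-\lambda)v\rangle|\ge \Re(e^{i\omega}\langle v,(h_0-\lambda)v\rangle)$ for a suitable phase $\omega=\omega(\delta)$.

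The remaining non-local term is handled by Lemma \ref{5.2.2lem5} with $r=r'=1+2\mu$: since $\chi_R$ is supported in $\{|x_0|\ge R\}$,
\[
|\langle v, \chi_R W(\lambda)\chi_R v\rangle| \le C\|\langle x_0\rangle^{-1-2\mu}v\|^2 \le C R^{-2-2\mu}\|\langle x_0\rangle^{-\mu}v\|^2,
\]
which for $R$ large is a small fraction of the coercive contribution. Combining the three estimates yields
\[
c'\bigl(\|\nabla v\|^2 + \|\langle x_0\rangle^{-\mu}v\|^2\bigr) \le |\langle v, \varphi_s(F(\lambda)-\lambda)u\rangle| \le \|\langle x_0\rangle^{-\mu}v\|\cdot\|\langle x_0\rangle^{\mu}\varphi_s(F(\lambda)-\lambda)u\|,
\]
from which the lemma follows. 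The main obstacle is obtaining the weighted coercive estimate for $h_0-\lambda$ uniformly in both $\lambda\in\Omega_0(\delta)$ \emph{and} $s\in\mathbb{R}$: the two smallness parameters ($R$ for the non-local remainder, $M$ for the conjugation errors) have to be chosen simultaneously, independently of $s$ and $\lambda$, and the geometric inequality above must be sharp enough that the coercivity constant $c'$ does not degenerate as $\lambda\to 0$ through $\Omega_0(\delta)$.
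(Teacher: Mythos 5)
Your proof follows the paper's route: the same decomposition $F(\la)=h_0+\chi_R W(\la)\chi_R$, the same sectorial coercivity for $h_0-\la$ conjugated by $\varphi_s$ (the paper outsources exactly this step to \cite[Lemma 3.1]{Wa6}, applied to $h_0$ and to $\e^{-\i\phi}(h_0-\la)$), and the same absorption of the non-local term via Lemma \ref{5.2.2lem5} with the gain $R^{-2-2\mu}$ on $\supp\chi_R$, taking $R$ large. So in substance you have reproduced the paper's argument, the only difference being that you prove the cited one-body weighted coercive estimate by hand instead of quoting it; your phase-rotation remark is the correct way to get the uniform lower bound on $|\w{v,(h_0-\la)v}|$ (the preceding sentence about ``integrating the pointwise inequality'' does not by itself lower-bound the modulus of the integral, but the phase argument fixes this).

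One quantitative claim needs repair: $\|\w{x_0}A_s(H_0+\i)^{-1/2}\|=O(M^{-1})$ is false; at $|x_0|\sim R_s$ one has $\w{x_0}\,|\nabla\log\varphi_s|\sim |s|$. What is true, and what the absorption actually requires, is the $\mu$-weighted bound $\sup_{x_0}\w{x_0}^{\mu}|\nabla\log\varphi_s|\le C\,M^{-(1-\mu)}$ together with $\sup_{x_0}\w{x_0}^{2\mu}\,|\partial^2\log\varphi_s|\le C\,M^{-2(1-\mu)}$, both consequences of the choice $R_s=M\w{s}^{1/(1-\mu)}$. With these weights the conjugation errors are bounded by $CM^{-(1-\mu)}\bigl(\|\nabla v\|^2+\|\w{x_0}^{-\mu}v\|^2\bigr)$ and can be absorbed uniformly in $s$; note that the plain sup bound $\|\nabla\log\varphi_s\|_\infty=O(M^{-1})$ would not suffice here, since the resulting error involves the unweighted norm $\|v\|^2$, which is not controlled by the coercive quantities. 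With that correction your argument is complete and uniform in $s\in\R$ and $\la\in\Omega_0(\delta)$.
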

\pf By construction, $h_0 =-\Delta_0 + v_0(x_0) $ satisfies the weighted coercive condition (\ref{5.2.ass2}).  Moreover,
 for $\la =\tau e^{i\phi} \in \C$ with $\tau > 0$ and $\phi\neq 0$,  $|\phi|\ge \delta >0$, $e^{-i\phi}(h_0- \la)$ also satisfied  (\ref{5.2.ass2}) with a lower bounded independent of $\la$ so long as $|\phi|\ge \delta$ and $|\tau| <\delta$ for $\delta >0$ small.  Applying  \cite[Lemma 3.1]{Wa6}  to $h_0$ and $e^{-i\phi}(h_0- \la)$, we deduce
\be \label{5.2.2e61}
  \| \w{x_0}^{-\mu} \varphi_s u\| +   \| \nabla (\varphi_s u)\| \le  C \| \w{x_0}^{\mu} \varphi_s (h_0-\la) u \|
\ee
uniformly in  $s\in \R$ and $\la\in \Omega_0(\delta)$.

 Since
$F(\la) = h_0 + \chi_R W(\la)\chi_R $, Lemma \ref{5.2.2lem5} shows that
\[
\| \w{x_0}^{\mu} \varphi_s (h_0-\la) u \| \le \| \w{x_0}^{\mu} \varphi_s (F(\la)-\la) u \| + C R^{-2-2\mu} \|\w{x_0}^{-\mu} \varphi_s u \|
\]
uniformly in $s\in \R$ and $|\la|$ small. Now  Lemma \ref{5.2.2lem6} follows if $R>1$ is taken appropriately large.
\ef

 Lemma \ref{5.2.2lem6} shows that $G(\la) = (F(\la) -\la)^{-1}$ satisfies the estimate
\[
\|\w{x_0}^{-\mu} \varphi_s G(\la) \varphi_{-s}  \w{x_0}^{-\mu} \| + \|\nabla \varphi_s G(\la) \varphi_{-s} \w{x_0}^{-\mu}  \| \le C
\]
uniformly in $s\in \R$ and  $\la \in \Omega_0(\delta)$. It follows that
\be \label{5.2.2e44}
\|\w{x_0}^{-2\mu} \varphi_s G(\la) \varphi_{-s}   \| \le C'\w{s}^\gamma, \quad \gamma = \tfrac{2\mu}{1-\mu},
\ee
uniformly in $s$ and $\la$. The following technical estimates are the main step in the proof of Gevrey estimates of $G(\la)$, cf. \cite[Theorem 3.4]{Wa6}.

\begin{prop}\label{5.2.2prop7}
   There exist   constants $ C, \delta >0$ such that for any $r\in \R$, $N \in \N$ and $\lambda \in \Omega_0(\delta)$
 \be
\| \w{x_0}^{-2\mu} \w{x_{N,r}}^{-(2N+r)\mu} G^{(N)}(\la)  \w{x_{N,r}}^{ r\mu}  \|
 \le C^{N+1} N! \w{(2N+r)\mu}^{\gamma(N+1)}.
\ee
Here
\[
x_{N,r} = \f {x_0}{R_{N,r}}  \mbox{ with }R_{N,r} = M \w{ (2N +r) \mu }^{\f 1 {1-\mu}}
\]
and $\w{x_{N,r}} = (1 + |x_{N,r}|^2)^{\f 1 2}$.
\end{prop}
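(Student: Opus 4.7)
The plan is to mirror the strategy of \cite[Theorem 3.4]{Wa6}, carrying out an induction on $N$ in which the weight parameter $s = (2N+r)\mu$ grows linearly with $N$, so that the weight $\varphi_s$ is automatically adjusted to control each new derivative. The $N=0$ case is essentially already done: combining \eqref{5.2.2e44} with the elementary fact that $\w{x_{0,r}}^{-r\mu}\varphi_{r\mu}$ and its inverse are bounded uniformly in $r$ (with the present choice $R_s = M\w{s}^{1/(1-\mu)}$), one extracts
\begin{equation*}
\|\w{x_0}^{-2\mu}\w{x_{0,r}}^{-r\mu}G(\lambda)\w{x_{0,r}}^{r\mu}\|\le C\w{r\mu}^{\gamma},
\end{equation*}
which is the claim at $N=0$.

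For the induction step I would differentiate the identity $(F(\lambda)-\lambda)G(\lambda)=1$ to obtain the formal recursion
\begin{equation*}
G^{(N)}(\lambda)=\sum_{k=1}^{N}\binom{N}{k}G(\lambda)\bigl(1-F^{(k)}(\lambda)\bigr)G^{(N-k)}(\lambda),
\end{equation*}
in which $F^{(k)}(\lambda)=\chi_R W^{(k)}(\lambda)\chi_R$ for $k\ge 1$. The key input is then Cauchy's formula: since $W(\lambda)$ is holomorphic in a fixed disk $|\lambda|<2\delta$, Lemma \ref{5.2.2lem5} (applied with the shifted weight parameter $s\pm k\mu$, using that $\varphi_{s}\varphi_{-s'}$ differs from $\varphi_{s-s'}$ by a bounded factor uniformly in parameters) gives
\begin{equation*}
\|\w{x}^{1+2\mu}\varphi_s W^{(k)}(\lambda)\varphi_{-s}\w{x}^{1+2\mu}\|\le C^{k+1}k!,\quad k\in\N_0,
\end{equation*}
uniformly in $s\in\R$ and $\lambda\in\Omega_0(\delta/2)$ (after possibly shrinking $\delta$).

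The induction is then performed with the precise weighted norm
\begin{equation*}
A_{N,r}(\lambda):=\|\w{x_0}^{-2\mu}\w{x_{N,r}}^{-(2N+r)\mu}G^{(N)}(\lambda)\w{x_{N,r}}^{r\mu}\|.
\end{equation*}
In each term $G(\lambda)(1-F^{(k)})G^{(N-k)}(\lambda)$ on the right-hand side one inserts a resolution of the identity of the form $\varphi_{(2(N-k)+r)\mu}\varphi_{-(2(N-k)+r)\mu}$ between $(1-F^{(k)})$ and $G^{(N-k)}$, and a second resolution $\varphi_{(2N+r)\mu}\varphi_{-(2N+r)\mu}$ between $G(\lambda)$ and $(1-F^{(k)})$. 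The outer $G(\lambda)$ factor is then estimated by the $N=0$ bound with weight $s=(2N+r)\mu$, the middle factor $1-F^{(k)}$ by the Cauchy estimate above (which supplies the $k!$), and the inner factor $G^{(N-k)}(\lambda)$ by the induction hypothesis $A_{N-k,r}(\lambda)$. Summing over $k$ with the binomial coefficients produces a convolution-type bound
\begin{equation*}
A_{N,r}\le \sum_{k=1}^{N}\binom{N}{k}C^{k+1}k!\,\w{(2N+r)\mu}^{\gamma}\,A_{N-k,r},
\end{equation*}
which, together with the elementary inequality $\w{(2(N-k)+r)\mu}^{\gamma(N-k+1)}\w{(2N+r)\mu}^{\gamma}\le\w{(2N+r)\mu}^{\gamma(N-k+2)}$ and $\sum_k 1/((N-k)!)\le e$, yields the desired bound $C^{N+1}N!\w{(2N+r)\mu}^{\gamma(N+1)}$.

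The main obstacle is purely technical bookkeeping: one must verify that inserting the weights $\varphi_{\pm s}$ between $W^{(k)}$ and $G^{(N-k)}$ is compatible with the fact that $\varphi_s$ is cut off at scale $R_s\sim\w{s}^{1/(1-\mu)}$, which differs from the scale $R_{N,r}$ appearing in $\w{x_{N,r}}$; the uniform equivalence $\varphi_s\simeq\w{x_{0,s/(2\mu)}}^{s}$ up to bounded factors (with bounds independent of $s$) is crucial and needs to be checked from the definition $\varphi_s(x_0)=(1+|x_0|^2/R_s^2)^{s/2}$. The second subtle point is that the non-local factor $W^{(k)}(\lambda)$ does not commute with $\chi_R$ or the weights; this is precisely why Lemma \ref{5.2.2lem5} was stated with arbitrary Sobolev-type weights on both sides, and why the proof of Lemma \ref{5.2.2lem5} (via the identity $\varphi_s H'\varphi_{-s}=H'+\vO(1/M)$) is robust enough to supply the needed uniformity in $s$ after Cauchy differentiation in $\lambda$.
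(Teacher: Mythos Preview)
Your strategy---induction on $N$, exploiting holomorphy of $W(\lambda)$ via Cauchy estimates, and inserting $\varphi_s$-weights between the factors---is correct and essentially the same as what the paper does; the paper writes $G^{(N)}=(G^2-G\chi_RW'\chi_RG)^{(N-1)}$ instead of your Leibniz expansion of $(F-\lambda)G=1$, but both are equivalent reorganizations of the same recursion. (Minor point: your identity should read $\delta_{k,1}-F^{(k)}$ rather than $1-F^{(k)}$, since only the first $\lambda$-derivative of $-\lambda$ survives.)

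There is, however, a genuine gap in your weight bookkeeping. When you pass from the outer $G$ to the inner $G^{(N-k)}$, you claim that $\|\varphi_{-(2(N-k)+r)\mu}G^{(N-k)}(\cdot)\|$ is controlled by $A_{N-k,r}$. But $A_{N-k,r}$ carries an extra factor $\langle x_0\rangle^{-2\mu}$ on the left that your inserted weight $\varphi_{-(2(N-k)+r)\mu}$ does not. For the $F^{(k)}$-terms this is harmless: Lemma~\ref{5.2.2lem5} supplies $\langle x_0\rangle^{-(1+2\mu)}$ on each side of $W^{(k)}$, which absorbs the discrepancy. For the identity contribution in the $k=1$ term (i.e.\ the piece $NGG^{(N-1)}$ coming from $\tfrac{d}{d\lambda}(-\lambda)$) there is nothing in the middle to supply this decay, and the naive transition factor is
\[
\sup_{x_0}\langle x_0\rangle^{2\mu}\langle x_{N,r}\rangle^{-2\mu}\sim R_{N,r}^{2\mu}=\langle(2N+r)\mu\rangle^{\gamma}.
\]
Thus the honest recursion for this term is $A_{N,r}\lesssim N\langle s_N\rangle^{2\gamma}A_{N-1,r}$ rather than $N\langle s_N\rangle^{\gamma}A_{N-1,r}$; iterating gives $N!\prod_{j\le N}\langle s_j\rangle^{2\gamma}\sim N!(N!)^{2\gamma}$, which is Gevrey order $1+2\gamma$, not $1+\gamma$. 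Your ``convolution-type'' inequality as written does not close.

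The paper itself only sketches the argument and invokes \cite[Theorem~3.4]{Wa6} for the detailed weight redistribution that produces the sharp recursion $C_{N+1}\le C_N(1+c/N^{1+\gamma})$; the mechanism there is a more careful telescoping of the intermediate weights (using the symmetric form $\|\langle x_0\rangle^{-\mu}\varphi_sG\varphi_{-s}\langle x_0\rangle^{-\mu}\|\le C$ from Lemma~\ref{5.2.2lem6} rather than \eqref{5.2.2e44} at each step, and exploiting the freedom in the $r$-parameter so that all weights live at the single scale $R_{N,r}$). Your ``main obstacle'' paragraph correctly identifies that the scale-matching is delicate, but the assertion that ``$\varphi_s\varphi_{-s'}$ differs from $\varphi_{s-s'}$ by a bounded factor uniformly in parameters'' is not true in both directions and does not by itself close the gap.
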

\pf The case $N=0$ and $r\ge 0$  follows from (\ref{5.2.2e44}) with $s = r\mu$.
For the general case $N\ge 1$ and $r\ge 0$, we write
\[
G^{(N)}(\la) =  (G(\la)^2 - G(\la)\chi_R W'(\la)\chi_R G(\la))^{(N-1)},
\]
and prove by an induction on $N$ that
\begin{eqnarray*}
\lefteqn{\| \w{x}^{-2\mu} \w{x_{N,r}}^{-(2N+r)\mu} G^{(N)}(\la)  \w{x_{N,r}}^{ r\mu}  \| } \nonumber \\[2mm]
& & \le C_N^{N+1} N! \w{(2N+r)\mu}^{\gamma (N+1)}
 \end{eqnarray*}
 with $ C_{N+1} \le C_N ( 1 + \f{c}{N^{1+\gamma}})$ for some $c>0$ independent of $N$. The details are the same as the proof of  \cite[Theorem 3.4]{Wa6} and are omitted here.
\ef

To convert polynomial weight depending on $N$ into exponential weight independent of $N$, we use the following estimate. 
\be \label{polytoexpo} 
\forall{a>0}\,\exists{A_a>0}:\quad\|\w{x_{N,r}}^{(2N+r)\mu} \e^{-a\w{x_0}^{1-\mu}}\|_{L^\infty} \le A_a^{\max\{2N+r, 1\}},
\ee
uniformly in $N \in \N$ and $r\in \R$.
In fact, if $2N+ r \le C$ for some constant $C>0$,  the left-hand side of (\ref{polytoexpo}) is uniformly bounded by some constant $C_1$. For
$2N + r > C$ with $C>1$ large but fixed, consider the function
\[
f(\tau) = \w{\f{\tau}{R_{N,r}}}^{(2N+r)\mu} e^{-a \tau ^{1-\mu}}, \quad \tau =|x_0|.
\]
One has
\[
f'(\tau) = \f{f(\tau)}{\tau^\mu ( R_{N,r}^2 + \tau^2) } \left( (2N+r)\mu \tau^{1+\mu} - (1-\mu) a (R_{N,r}^2 +\tau^2)\right).
\]
Since $\mu \in]0, 1[$, for each $a>0$, one can find some constant  $A>0$ such that
\[
f'(\tau) <0, \quad \mbox{ for } \tau > A R_{N,r}.
\]
Therefore for $2N+ r > C$,
\[
\|\w{x_{N,r}}^{(2N+r)\mu} e^{-a\w{x_0}^{1-\mu}}\|_{L^\infty} \le \sup_{0 \le \tau \le  A R_{N,r}} f(\tau) \le \w{A}^{(2N+r)\mu}
\]
This proves (\ref{polytoexpo}) for some appropriate constant $A_a>0$.

 Proposition \ref{5.2.2prop7} implies the following Gevrey estimates for $G(\la) =(F(\la)-\la)^{-1}$:
\be \label{GevreyG}
 \|\w{x_0}^{-\tau} \e^{-a\w{x_0}^{1-\mu}} G^{(N)}(\la)  \w{x_0}^{ \tau}  \| \le C_a^{N+1+\tau} N! \w{N+\tau}^{\gamma (N+1) +\f{\tau}{1-\mu} }
\ee
uniformly in $\tau \ge 0$,  $N \in \N$ and $\la \in\Omega_0(\delta)$.

\begin{lemma} \label{5.2.2lem8}  Assume (\ref{ass5.2.21b}), (\ref{ass5.2.22b}) and (\ref{ass5.2.23b}). Let $u $ be a solution to the equation $E_{\vH}(\la_0)u=0$ and $u \in \vH_{\tau}$ for some $\tau\in \R$. Then there exists
some positive constant $b_0$ such that $\e^{b_0\w{x_0}^{1-\mu}}u \in
\vH$. Here $\vH = L^2(\bX_{0})$ and $\vH_\tau = L^2(\bX_0,
\w{x_0}^{2\tau} \d x_0)$.
\end{lemma}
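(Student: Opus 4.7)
The plan is to combine the $s$-uniform weighted coercive estimate from Lemma~\ref{5.2.2lem6} with the bilateral exponential bound on the non-local term $\widetilde U(0)$ recorded in \eqref{5.2.2Uexponential}, and then convert the resulting $s$-uniform weighted $L^2$ bound into sub-exponential pointwise decay via the scale $R_s = M\langle s\rangle^{1/(1-\mu)}$ built into the weights $\varphi_s$.

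First I would rewrite the equation $E_{\vH}(\lambda_0)u = 0$, using the decomposition $E_{\vH}(\lambda_0 + \lambda) = \lambda -(F(\lambda) + \widetilde U(\lambda))$, as $F(0)u = -\widetilde U(0)u$. Since $u \in \vH_\tau$ the function $e^{-a\langle x_0\rangle}u$ lies in $\vH$ for every $a>0$, so \eqref{5.2.2Uexponential} applied with some small $a_0 \in (0,\delta]$ gives $e^{a_0\langle x_0\rangle}\widetilde U(0)u \in \vH$; that is, the right-hand side $F(0)u$ has genuine exponential decay, driven only by the a priori polynomial control on $u$.

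Next I would apply Lemma~\ref{5.2.2lem6} with $\lambda = 0$ and the weight $\varphi_s$ for arbitrary $s \geq 0$, which yields
\[
\|\langle x_0\rangle^{-\mu}\varphi_s u\| \leq C\|\langle x_0\rangle^{\mu}\varphi_s \widetilde U(0)u\| \leq C\|\langle x_0\rangle^{\mu}\varphi_s e^{-a_0\langle x_0\rangle}\|_{L^\infty}\,\|e^{a_0\langle x_0\rangle}\widetilde U(0)u\|.
\]
An elementary calculation of the maximum of $\langle x_0\rangle^{\mu}(1+|x_0|^2/R_s^2)^{s/2} e^{-a_0\langle x_0\rangle}$ shows that, because $R_s$ grows faster than $s$, for large $s$ the critical-point equation has no real solution, so the sup is attained at $x_0=0$; for bounded $s$ it is a harmless finite constant. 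Thus $\|\langle x_0\rangle^{\mu}\varphi_s e^{-a_0\langle x_0\rangle}\|_{L^\infty}$ is bounded uniformly in $s\geq 0$, giving the key uniform estimate $\sup_{s\geq 0}\|\langle x_0\rangle^{-\mu}\varphi_s u\|\leq K<\infty$. To convert this into decay, for $|x_0|>\lambda$ the inequality $\varphi_s(x_0)^2 \geq (1+\lambda^2/R_s^2)^s$ yields $\int_{|x_0|>\lambda}\langle x_0\rangle^{-2\mu}|u|^2 \leq K^2(1+\lambda^2/R_s^2)^{-s}$; choosing $s=(\lambda/M)^{1-\mu}$ so that $\lambda = R_s$ then gives $\int_{|x_0|>\lambda}\langle x_0\rangle^{-2\mu}|u|^2 \leq K^2 e^{-c\lambda^{1-\mu}}$ with $c=(\log 2)/M^{1-\mu}$, and a routine slight reduction of the exponent absorbs the polynomial factor $\langle x_0\rangle^{-2\mu}$, producing $\int e^{2b_0\langle x_0\rangle^{1-\mu}}|u|^2\,\d x_0<\infty$ for some $b_0>0$.

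The main obstacle is the non-local character of $\widetilde U(0)$: one cannot directly run an Agmon-type argument by conjugating $E_{\vH}(\lambda_0)$ with an exponential weight, since $\widetilde U(0)$ is not a pointwise potential and the usual commutator computation fails. The crucial workaround is to move $\widetilde U(0)u$ to the right-hand side and view it as an inhomogeneity, exploiting the bilateral boundedness \eqref{5.2.2Uexponential} to promote the crude bound $u \in \vH_\tau$ into exponential decay of $\widetilde U(0)u$; the local operator $F(0)$ then provides the $s$-uniform weighted coercive inequality of Lemma~\ref{5.2.2lem6}, which is what makes the bootstrap into sub-exponential decay possible.
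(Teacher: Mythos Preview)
Your approach is correct and takes a genuinely different route from the paper's. The paper argues in two steps: first it writes $u = -G(0)\widetilde U(0)u$ and uses Proposition~\ref{5.2.2prop7} (with $N=0$, $r<0$) to get $u\in\vH_\infty$; second, it applies a direct Agmon energy inequality \eqref{5.2.2e81}--\eqref{5.2.2e82} with the sub-exponential weight $e^{b\langle x_0\rangle^{1-\mu}}$, absorbing the non-local part $W(\lambda_0)$ via its $\langle x_0\rangle^{-1-2\mu}$ decay. You instead stay with the polynomial-weight family $\varphi_s$ of Lemma~\ref{5.2.2lem6}, exploit \eqref{5.2.2Uexponential} to make the inhomogeneity $-\widetilde U(0)u$ exponentially small, obtain the $s$-uniform bound $\|\langle x_0\rangle^{-\mu}\varphi_s u\|\le K$, and then optimize over $s$ through the scale $R_s$ to extract sub-exponential decay. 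Your route has the virtue of reusing Lemma~\ref{5.2.2lem6} verbatim rather than introducing a separate Agmon estimate; the paper's is more direct once \eqref{5.2.2e81} is available.

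Two points deserve a remark. First, Lemma~\ref{5.2.2lem6} is stated for Schwartz $u$, and applying it for arbitrarily large $s$ presupposes $\langle x_0\rangle^{-\mu}\varphi_s u\in\vH$, i.e.\ essentially $u\in\vH_{s-\mu}$; starting only from $u\in\vH_\tau$ this is not granted. The cleanest fix is to first establish $u\in\vH_\infty$ exactly as the paper does (from $u=-G(0)\widetilde U(0)u$ and the $\vH_s\to\vH_{s-2\mu}$ mapping of $G(0)$), after which a routine cutoff regularization validates the estimate for every $s$. Second, your claim that for large $s$ the supremum of $\langle x_0\rangle^{\mu}\varphi_s e^{-a_0\langle x_0\rangle}$ sits at $x_0=0$ is only correct when $a_0>\mu$; for the small $a_0$ that \eqref{5.2.2Uexponential} may force, the origin is a local minimum and the maximum lies near $|x_0|\approx\mu/a_0$. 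The conclusion (uniform boundedness in $s$) is nonetheless true: on any fixed ball $\varphi_s\to 1$ uniformly since $s/R_s^2\to 0$, while for $|x_0|\gtrsim R_s$ the exponential beats $\varphi_s$ because $R_s/s\to\infty$.
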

\pf $u$ satisfies the equation $u = G(0)(1-\chi_R) (1- U(0))u$. Since $U(0)$ is continuous in $\vH_s$ for any $s\in \R$, $u\in \vH_\tau$ and $1-\chi_R$ is of compact support, $(1-\chi_R)(1-U(0))u
\in \vH_s$ for any $s>0$.     Proposition \ref{5.2.2prop7} with $N=0$ and $r <0$ shows that $u \in \vH_{\infty}$. See also Subsection 4.1.3.
To show the sub-exponential decay of $u$, we write
\[
E_\vH(\la_0 ) = - (h_0  + (1-\chi_R) (S^*I_0S -1) + W(\la_0)).
\]
Since $h_0 \ge -\Delta_0 + \f{c}{\w{x_0}^{2\mu}}$ for some $c>0$, the following   Agmon energy estimate holds true: $\exists b, C>0$ such that
\be \label{5.2.2e81}
\|\w{x_0}^{-\mu}\e^{b\w{x_0}^{1-\mu}}f \|^2 + \| \nabla_{x_0}(\e^{b\w{x_0}^{1-\mu}}f)\|^2 \le C (|\w{\e^{2b\w{x_0}^{1-\mu}} h_0 f,f}|  +\|f\|^2)
\ee
for $f\in D(h_0)$  with $\e^{2b\w{x_0}^{1-\mu}} h_0 f \in \vH$ (cf. \cite[(5.37)]{Wa6}).
 $R'(z)$ being holomorphic for $z$ near $\la_0$,  one has for $b>0$  small enough
\begin{eqnarray*}
|\w{\e^{2b\w{x_0}^{1-\mu}}W(\la_0) f,f}| &\le &C_1 (\|\w{x_0}^{-1-2\mu}\e^{b\w{x_0}^{1-\mu}} f\|^2 + \|f\|^2)
\\
& \le &\ep \|\w{x_0}^{-\mu}\e^{b\w{x_0}^{1-\mu}} f\|^2 +  C_\ep\|f\|^2
\end{eqnarray*}
for any $\ep >0$.
We  deduce from (\ref{5.2.2e81}) that
\be \label{5.2.2e82}
\|\w{x_0}^{-\mu}\e^{b\w{x_0}^{1-\mu}}f \|^2 + \| \nabla_{x_0}(\e^{b\w{x_0}^{1-\mu}}f)\|^2 \le C (|\w{\e^{2b\w{x_0}^{1-\mu}} E_\vH(\la_0) f,f}|  +\|f\|^2)
\ee
with possibly another constant $C$. Since $u \in \vH_\infty$  and $E_\vH(\la_0) u=0$, the above inequality applied to $u$ shows that $\e^{b_0\w{x_0}^{1-\mu}}u \in \vH$ for $0<b_0 <b$. \ef

{\noindent \bf Proof of Theorem \ref{5.2.2th3}.}
Since $\la_0$ is not an eigenvalue of $H$, Lemma \ref{5.2.2lem8} shows that $E_\vH(\la_0)$ is injective in $\vH_s$ for any $s$.
Writing
\be \label{5.2.2effective}
E_{\vH}(\la_0 + \la) =  -   (F(\la)-\la) (1 + G(\la)\widetilde U(\la))
\ee
for $\la \in \Omega_0(\delta)$, one sees that $1 + G(0)\widetilde U(0)$ is injective. The mapping
 $\la \to G(\la)\widetilde U(\la)\in \vL(\vH_s)$ is a continuous and compact operator-valued. Consequently, $1+ G(0)\widetilde U(0)$ is invertible and  $ (1 + G(0)\widetilde U(0))^{-1} \in \vL(\vH_s)$ which implies,  by continuity,  $(1 + G(\la)\widetilde U(\la))^{-1}$ exists and
\[
 \|(1 + G(\la)\widetilde U(\la))^{-1}\| \le C
 \]
 for $\la \in  \Omega_0(\delta)$,  $\delta >0$ small. This proves that $E_{\vH}(\la_0 + \la)$ is invertible with the inverse given by
 \be \label{5.2.2inverse}
 E_{\vH}(\la_0 + \la)^{-1} =  -   (1 + G(\la)\widetilde U(\la))^{-1}G(\la) = -    G(\la)(1+ \widetilde U(\la)G(\la)))^{-1}.
 \ee
 As operator from $\vH_s$ to $\vH_{s-2\mu}$, $ E_{\vH}(\la_0 + \la)^{-1} $ is uniformly bounded for $\la \in  \Omega_0(\delta)$. Therefore the formula
\[
R(z) =  E(z)- E_+(z) E_{\vH}(z)^{-1} E_-(z)
\]
initially valid for $\Im z \neq 0$ can be extended to $z\in \Omega_{\la_0}(\delta)$. We conclude that $H$ has no eigenvalue in $(\la_0 -\delta, \la_0)$, hence $\sigma_d(H)$ is finite.

$E(z)$ and $E_{\pm}(z)$ are holomorphic for $z$ near $\la_0$.  Since
\[
\e^{-a\w{x_0}^{1-\mu}}H' \e^{a\w{x_0}^{1-\mu}} = H' + O(a),
\]
$
\e^{-a\w{x_0}^{1-\mu}}R'(z) \e^{a\w{x_0}^{1-\mu}}
$
is holomorphic and uniformly bounded for $|z-\la_0|$ small, provided that $a>0$ is small. It follows that
$\e^{-a\w{x_0}^{1-\mu}}E_+(z) \e^{a\w{x_0}^{1-\mu}}$ is holomorphic and uniformly bounded for $z\in \Omega_{\la_0}(\delta)$. Therefore
to prove Theorem \ref{5.2.2th3}, it is sufficient to show that
  $\e^{-a\w{x_0}^{1-\mu}} E_\vH(z)^{-1} $ belongs to $\vG^{(\f{1+\mu}{1-\mu})}(\Omega_{\la_0}(\delta))$.

  For $a>0$ small, $\widetilde U(\la) e^{a\w{x_0}^{1-\mu}}$ is holomorphic and uniformly bounded for
$\la\in \Omega_0(\delta)$. We conclude from (\ref{GevreyG}) (with $\tau =0$) that
$\widetilde U(\la)G(\la) \in \vG^{(\f{1+\mu}{1-\mu})}(\Omega_0(\delta))$ which, together with the uniform bound
\[
\|(1 + \widetilde U(\la) G(\la))^{-1}\| \le C,
\]
shows that $(1 + \widetilde U(\la) G(\la))^{-1} \in  \vG^{(\f{1+\mu}{1-\mu})}(\Omega_0(\delta))$. For $z = \la_0 +\la$, it follows from
(\ref{GevreyG}) that
\[
\e^{-a\w{x_0}^{1-\mu}} E_{\vH}(z)^{-1} =  -    \e^{-a\w{x_0}^{1-\mu}} G(z-\la_0)(1+ \widetilde U(z-\la_0)G(z-\la_0))^{-1}
\]
belongs to $\vG^{(\f{1+\mu}{1-\mu})}(\Omega_{\la_0}(\delta))$.  This finishes the proof of Theorem \ref{5.2.2th3}. \hfill $\Box $ \vskip 3mm

\begin{remarks}
  \begin{enumerate}[1)]
  \item 
  Making use of (\ref{5.2.2e44}) and repeating the proof of Theorem \ref{5.2.2th3}, one can prove that for any $N\in \N$ and $\tau \in \R$,  one has
\be
\|\w{x_0}^{- 2(N+1)\mu-\tau} R^{(N)}(z)  \w{x_0}^{ \tau}  \| \le C_{N, \tau}
\ee
uniformly for $\la \in \Omega_{\la_0}(\delta))$. This implies in particular that the limit
\be
R(\la_0) = \lim_{\la \in \Omega_{\la_0}(\delta)), \la \to \la_0} R(\la)
\ee
exists in $\vL(L^2_\tau, L^2_{\tau-2\mu -\ep})$ for any $\ep>0$ and
$R(\la_0) \in \vL(L^2_\tau, L^2_{\tau-2\mu})$ for any
$\tau\in\R$. Here and in the remaining part of this section, $L^2_r =
L^2(\bX, \w{x_0}^{2r}\d x)$.

\item    Making use of (\ref{GevreyG}), one can show the following improvement of Theorem \ref{5.2.2th3}: For any $a>0$ and $\tau \ge 0$, there exists some constant $C$ such that
\be \label{GevreyH}
 \|\w{x_0}^{-\tau} \e^{-a\w{x}^{1-\mu}} R^{(N)}(z)  \w{x_0}^{ \tau}  \| \le C^{N+1} N! \w{N+1}^{\gamma (N+1)}
\ee
uniformly in  $N \in \N$ and $z \in\Omega_{\la_0}(\delta)$.
\ef
\end{enumerate}

\end{remarks}

\begin{prop}\label{5.2.2propLAP} Assume the conditions of Theorem \ref{5.2.2th3} and let $s> \f{1+\mu} 2$. Then the boundary values of the resolvent
\[
R(\la\pm \i0) =\lim_{\ep\to 0_+} R(\la  \pm \i\ep)
\]
exist in $\vL(L^2_s, L^2_{-s})$ for $\la \in [\la_0, \la_0 +\delta]$, $\delta >0$ and
\be \label{5.2.2eLAP}
\|\w{x_0}^{-s} R(\la\pm \i0)\w{x_0}^{-s}\| \le C
\ee
uniformly in $\la \in [\la_0, \la_0 +\delta]$.
\end{prop}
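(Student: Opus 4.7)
The plan is to reduce the proposition to a threshold limiting absorption principle for the effective operator $E_{\vH}$ via the Grushin-type representation $R(z) = E(z) - E_+(z) E_{\vH}(z)^{-1} E_-(z)$ from Section \ref{Reduction near a simple two-cluster threshold}. Under the hypothesis $\lambda_0 \notin \sigma_\pp(H')$, the reduced resolvent $R'(z) = (H'-z)^{-1}\Pi'$ is holomorphic in a complex neighbourhood of $\lambda_0$, hence so are $E(z)$ and $E_\pm(z)$; the latter extend holomorphically as bounded maps from $\vH_s$ to $L^2_{-s}$ (and their adjoint versions) for every $s \in \R$, by the arguments leading to \eqref{eq:90}. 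Consequently, the LAP bound \eqref{5.2.2eLAP} will follow once we establish that for any $s > (1+\mu)/2$ there is a constant $C$ such that $\|\w{x_0}^{-s} E_{\vH}(\lambda_0 + \lambda \pm \i\epsilon)^{-1} \w{x_0}^{-s}\| \le C$ uniformly in $\lambda \in [0, \delta]$ and in $\epsilon \in (0, \epsilon_0)$.

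Recalling $E_{\vH}(\lambda_0 + \lambda) = \lambda - (-\Delta_0 + U(\lambda))$, the operator $-\Delta_0 + U(\lambda)$ is an effective one-body Schrödinger operator whose potential is globally positive for $|x_0|$ large (condition \eqref{ass5.2.23b}), plus a non-local but holomorphic and uniformly weighted perturbation $W(\lambda)$ controlled by Lemma \ref{5.2.2lem5}. For $\lambda$ in any compact subset of $(0, \delta]$ I would deduce the one-sided limits by combining the standard one-body LAP for positively slowly decreasing potentials of Yafaev \cite{Ya1, Ya2} and Nakamura \cite{Na} with a Mourre-type estimate based on the generator of dilations on $\bX_{a_0}$; the non-local factor $W(\lambda)$ is absorbed as a form-bounded relatively compact perturbation that does not spoil the positivity of the commutator. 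The optimal weight $s > (1+\mu)/2$ emerges precisely as in the one-body slow-decay LAP, where commuting the conjugate operator with the weight costs an extra factor of $\w{x_0}^{-\mu}$ beyond the usual $\w{x_0}^{-1/2}$.

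At the endpoint $\lambda = 0$, the sector approach developed in the proof of Theorem \ref{5.2.2th3} is what bridges the real-axis analysis to the threshold. Specifically, \eqref{5.2.2inverse} and the coercive energy estimate Lemma \ref{5.2.2lem6} already provide a bounded inverse $E_{\vH}(\lambda_0)^{-1} \in \vL(\vH_s, \vH_{s-2\mu})$; combined with Lemma \ref{5.2.2lem8}, which rules out sub-exponentially decaying solutions to $E_{\vH}(\lambda_0)u = 0$, a standard compactness–contradiction argument of Agmon type then forces $E_{\vH}(\lambda_0 + \lambda \pm \i\epsilon)^{-1}$ to converge to the same sectorial limit as $(\lambda, \epsilon) \to (0, 0)$ from either half-plane. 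The main obstacle in executing this plan is establishing the Mourre estimate on the effective operator uniformly as $\lambda \to 0^+$: the one-body estimates of \cite{Na, Ya1, Ya2} for positively slowly decaying potentials must be re-run with the holomorphic non-local term $W(\lambda)$ present, and one must verify that the positive commutator $\i[-\Delta_0 + U(0), A]$ dominates a positive weight near zero energy, which is not a direct perturbation of the classical one-body statement but rather requires repeating the positive-commutator argument in the present setting (cf. Subsection \ref{subsec:psdep}) with care at the upper end of the slow-decay range $\rho < 2$.
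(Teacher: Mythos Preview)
Your reduction via the Grushin representation is correct and matches the paper: holomorphicity of $E(z)$ and $E_\pm(z)$ near $\lambda_0$ reduces everything to a LAP for $E_{\vH}(\lambda_0+\lambda\pm\i0)^{-1}$ in $\vL(\vH_s,\vH_{-s})$, $s>(1+\mu)/2$.

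Where you diverge from the paper is in the method for this effective LAP, and this is where your proposal has a gap. You propose to run Mourre theory on $-\Delta_0+U(\lambda)$ with the non-local $W(\lambda)$ present, and you correctly flag that establishing the positive commutator uniformly as $\lambda\to 0_+$ is the obstacle. You do not resolve it; the ``compactness--contradiction argument of Agmon type'' you invoke at the endpoint is not enough to patch together the sectorial limit at $\lambda=0$ with an interior Mourre bound to yield \emph{uniform} control on $[0,\delta]$.

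The paper avoids this difficulty entirely by a factorization/Fredholm argument. It first quotes Nakamura's one-body LAP for the self-adjoint model operator $h_0=-\Delta_0+v_0$ (globally positive slowly decaying potential), giving $\|\w{x_0}^{-s}r_0(\lambda\pm\i0)\w{x_0}^{-s}\|\le C$ uniformly on $[0,\delta]$. Then it writes $F(z)-z=(h_0-z)\big(1+r_0(z)\chi_R W(z)\chi_R\big)$; the second factor is a compact perturbation of the identity in $\vH_{-s}$, depending continuously on $\lambda\in[0,\delta]$, and is invertible at $\lambda=0$ because $F(0)$ is injective (Lemma~\ref{5.2.2lem6}). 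By continuity the inverse persists for small $\lambda$, yielding LAP for $G(z)=(F(z)-z)^{-1}$. The same trick, using $E_{\vH}(\lambda_0+\lambda\pm\i0)^{-1}=-\big(1+G(\lambda\pm\i0)\widetilde U(\lambda)\big)^{-1}G(\lambda\pm\i0)$ with $\widetilde U(\lambda)$ exponentially decaying, finishes the job; injectivity at $\lambda=0$ here is exactly the hypothesis $\lambda_0\notin\sigma_\pp(H)$. No commutator estimate for the full effective operator is needed: the non-local pieces are treated purely as compact perturbations, and uniformity at threshold comes for free from continuity of compact-operator-valued functions plus Fredholm.
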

\pf We keep the notation used before. It is known \cite{Na} that
(\ref{5.2.2eLAP}) holds true for  $r_0(z) = (h_0-z)^{-1}$. Whence 
\be \label{5.2.2eLAP0} 
\forall{s > \tfrac{1+\mu}2}\,\,\exists C>0:\quad \|\w{x_0}^{-s} r_0(\la\pm \i0)\w{x_0}^{-s}\| \le C
\ee
uniformly in $\la \in [0, \delta]$.
For $F(z) = h_0 + \chi_R W(z)\chi_R$, we write
\[
F(z) -z = (h_0-z)(1+ r_0(z) \chi_R W(z)\chi_R), \quad \Im z \neq 0.
\]
For $\la \ge 0$ small, $r_0(\la+i0)\chi_R W(\la)\chi_R$ is compact in $\vH_{-s}$  and continuous for $\la \in[0, \delta]$, $\delta>0$.
For $R >1$ large,  $F(0)\ge -\Delta_0 + \f{c}{\w{x_0}^{2\mu}}$, $c>0$, in sense of selfadjoint operators and Lemma \ref{5.2.2lem6} remains true if $E_\vH(\la_0)$ is replaced by $F(0)$. Consequently,
 $F(0)$ is injective in $\vH_t$ for any $t\in\R$ because $0$ is not an eigenvalue of $F(0)$. This implies that  $1+ r_0(0)\chi_R W(0)\chi_R$ is injective in $\vH_{-s}$, hence $1 + r_0(0)\chi_R W(0)\chi_R$
 is invertible. By the continuity in $\la$, we conclude that $(1+ r_0(\la\pm \i0) \chi_R W(\la)\chi_R)$ is invertible in $\vL(\vH_{-s})$ for $\la>0$ small and
 its inverse is continuous in $\la \in[0, \delta]$.  Consequently the boundary values of $G(z) =(F(z)-z)^{-1}$ exist in $\vL(\vH_s, \vH_{-s})$ and
 \[
 G(\la \pm \i0) = \big(1+ r_0(\la\pm \i0) \chi_R W(\la)\chi_R\big)^{-1}r_0(\la \pm \i0)
 \]
are continuous for $\la\in[0, \delta]$. A similar argument  shows that the boundary values
 \[
 E_{\vH}(\la_0 + \la\pm \i0)^{-1} =  -   \big(1 + G(\la\pm  \i0)\widetilde U(\la)\big)^{-1}G(\la\pm \i0)
 \]
 exist in $\vL(\vH_s, \vH_{-s})$ and are continuous in $\la \in[0, \delta]$. Finally we obtain
 \[
 R(\mu \pm \i0) = E(\mu) - E_+(\mu) E_{\vH}(\mu \pm \i0)^{-1} E_-(\mu)
 \]
  exist in $\vL(L^2_s, L^2_{-s})$ and are continuous in $\mu \in[\la_0, \la_0 +\delta]$.
\ef

\begin{cor}\label{5.2.2corSpecM} Let $ e(\la)$ denote the spectral projector of $H$ on $]-\infty, \la]$. Assume the conditions of  Theorem \ref{5.2.2th3}. Then for any $a>0$ and $s> \f{1+\mu}2$,  there exist some constants $b, B>0$ such that
\be \label{5.2.2 e20a} \|\e^{-a\w{x_0}^{1-\mu}} e'(\lambda) \w{x_0}^{-s}\| \le B e^{-  b |\lambda-\la_0|^{-\f 1 \gamma}}, \quad \lambda \in (\la_0, \la_0 +\delta).
\ee
\end{cor}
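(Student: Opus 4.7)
The plan is to combine Stone's formula with the Gevrey resolvent estimate \eqref{GevreyH} and a Watson-type principle that converts Gevrey regularity of order $1+\gamma$ on a sector into exponential decay of order $|\lambda-\lambda_0|^{-1/\gamma}$ for the jump across the cut. First I would use Proposition \ref{5.2.2propLAP} together with Stone's formula to write
\begin{equation*}
e'(\lambda) = \tfrac{1}{2\pi \i}\bigl(R(\lambda+\i 0)-R(\lambda-\i 0)\bigr)\text{ in }\vL(L^2_s,L^2_{-s}),\quad \lambda\in(\lambda_0,\lambda_0+\delta),
\end{equation*}
so it suffices to bound the jump across $[\lambda_0,\lambda_0+\delta]$ of the weighted resolvent $T_a(z):=\e^{-a\w{x_0}^{1-\mu}}R(z)\w{x_0}^{-s}$. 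By Theorem \ref{5.2.2th3} and the derivative bound \eqref{GevreyH}, $T_a$ is holomorphic on each of the upper and lower halves of $\Omega_{\lambda_0}(\delta)$, and for every $N\in\N$ one has a bound of the form $\|T_a^{(N)}(z)\|\le C^{N+1}N!\,\w{N+1}^{\gamma(N+1)}$ with $\gamma=\tfrac{2\mu}{1-\mu}$, uniformly on the sector.

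Next I would fix $\lambda\in(\lambda_0,\lambda_0+\delta)$ and choose auxiliary points $z_\pm=\lambda\pm\i(\lambda-\lambda_0)\tan\theta$ lying well inside $\Omega_{\lambda_0}(\delta)$ for some $\theta\in(\delta,\pi/2-\delta)$, so that $|z_\pm-\lambda|\lesssim\lambda-\lambda_0$ and $\dist(z_\pm,\,\partial\Omega_{\lambda_0}(\delta))$ is comparable to $\lambda-\lambda_0$. Writing $T_a(\lambda\pm\i 0)$ as its Taylor expansion of order $N$ about $z_\pm$ plus integral remainder, and using \eqref{GevreyH} along the segment joining $z_\pm$ to $\lambda\pm\i 0$, one obtains
\begin{equation*}
\bigl\|T_a(\lambda\pm\i 0)-\tsum_{k=0}^{N-1}\tfrac{(\lambda-z_\pm)^k}{k!}T_a^{(k)}(z_\pm)\bigr\|\le C^{N+1}N!^{\gamma}\,|\lambda-\lambda_0|^{N}.
\end{equation*}
Combining the two identities and applying \eqref{GevreyH} to $T_a^{(k)}(z_+)-T_a^{(k)}(z_-)$ (again by Taylor expansion through a common point in the sector, e.g.\ near $\lambda_0+\i(\lambda-\lambda_0)$), one deduces
\begin{equation*}
\|T_a(\lambda+\i 0)-T_a(\lambda-\i 0)\|\le B^{N+1}N!^{\gamma}\,|\lambda-\lambda_0|^{N}.
\end{equation*}
Finally I would optimize over $N$: Stirling's formula yields $N!^{\gamma}|\lambda-\lambda_0|^N\sim\exp\bigl(\gamma N\log N-N\log|\lambda-\lambda_0|^{-1}\bigr)$, minimized at $N\sim c_0|\lambda-\lambda_0|^{-1/\gamma}$, which gives the claimed bound $B\e^{-b|\lambda-\lambda_0|^{-1/\gamma}}$.

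The hard part will be the second step: establishing that the Gevrey bound \eqref{GevreyH}, proved on the open sector $\Omega_{\lambda_0}(\delta)$, can be used (with explicit $N$-dependence) arbitrarily close to the boundary ray $[\lambda_0,\lambda_0+\delta]$ and in particular along the short segments connecting $z_\pm$ to $\lambda\pm\i 0$. In the one-body context this Gevrey-to-exponential transition is carried out in \cite{Wa6,AW}; here one needs to check that the proof of \eqref{GevreyH} via Proposition \ref{5.2.2prop7} and the factorization \eqref{5.2.2inverse} is in fact \emph{uniform} up to $\partial\Omega_{\lambda_0}(\delta)$, which should follow from the uniform estimates on $G(\lambda)$ and $(1+\widetilde U(\lambda)G(\lambda))^{-1}$ in $\vL(\vH_s,\vH_{s-2\mu})$ on the closed sector, combined with Proposition \ref{5.2.2propLAP}.
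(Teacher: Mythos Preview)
Your approach is in the right spirit but takes an unnecessarily circuitous route, and the version you sketch has a gap. The paper's proof is a one-line algebraic identity followed by optimization.

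The key point you miss is that for resolvents the Taylor expansion with remainder is \emph{exact and explicit} via the iterated first resolvent equation:
\[
R(\lambda\pm\i0)=\sum_{j=0}^{N-1}(\lambda-\lambda_0)^jR(\lambda_0)^{j+1}+(\lambda-\lambda_0)^N R(\lambda_0)^N R(\lambda\pm\i0).
\]
The polynomial parts are the \emph{same} for both signs, so they cancel in the jump, giving directly
\[
e'(\lambda)=(\lambda-\lambda_0)^N R(\lambda_0)^N e'(\lambda).
\]
Now one only needs the Gevrey bound \eqref{GevreyH} at the \emph{single point} $z=\lambda_0$ (with $\tau=s$), together with the uniform LAP bound $\|\w{x_0}^{-s}e'(\lambda)\w{x_0}^{-s}\|\le C$ from Proposition~\ref{5.2.2propLAP}. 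This yields $\|\e^{-a\w{x_0}^{1-\mu}}e'(\lambda)\w{x_0}^{-s}\|\le Cc^NN^{\gamma N}(\lambda-\lambda_0)^N$ immediately, and the optimization over $N$ is the same as yours.

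In your version the choice $z_+\neq z_-$ is the source of trouble: the finite Taylor sums $\sum_{k<N}\tfrac{(\lambda-z_\pm)^k}{k!}T_a^{(k)}(z_\pm)$ do \emph{not} cancel, and your proposed fix (further Taylor-expand each $T_a^{(k)}(z_+)-T_a^{(k)}(z_-)$ through a common point) only gains one power of $|\lambda-\lambda_0|$ per application unless you expand to order $N-k$ at each stage, which you do not spell out. The clean repair is to expand both boundary values about a \emph{common} center $w$ (e.g.\ $w=\lambda_0$ itself, reached through the left of the sector), so the polynomial parts cancel identically --- but that is precisely what the resolvent identity does for you in closed form, without any path integrals. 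This also dissolves what you flag as the ``hard part'': no uniformity of \eqref{GevreyH} up to the boundary ray is needed, since only $R(\lambda_0)^N$ at the vertex enters.
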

\pf Since $ e'(\lambda) = \f 1{2\pi i} ( R(\lambda+\i0) - R(\lambda-\i0))$,
$\| \w{x_0} ^{-s} e'(\la)\w{x_0}^{-s}\| $ is uniformly bounded   for $\lambda \in (\la_0, \la_0 +\delta)$.
Iterating the first resolvent equation, one obtains for any $N\in \N$
\be
e'(\la) = (\la-\la_0)^N R(\la_0)^N e'(\la).
\ee
Applying (\ref{GevreyH}) with $\tau=s$, one deduces  that for any $a>0$, there exist some constants $c, C >0$ such that
\be
\|\e^{-a\w{x_0}^{1-\mu}} e'(\lambda) \w{x_0}^{-s}\| \le C c^N N^{\gamma N} (\la-\la_0)^N
\ee
for all $N\in \N$ and $\lambda \in (\la_0, \la_0 +\delta)$. It remains to minimize  the right-hand side by choosing $N$ in terms of $\la-\la_0$  such that
$ N\approx A |\lambda-\la_0|^{-\f 1 \gamma}$ as $\la \to \la_0$  for some appropriate constant $A>0$. Then
\begin{align*}
c^N N^{\gamma N} |\la-\la_0|^N  & \approx  e^{A |\lambda-\la_0|^{-\f 1 \gamma} ( \gamma \ln A +  \ln c )}  \\
  &\le B e^{-b|\la-\la_0| ^{-\f 1 \gamma}}, \quad  \lambda \in (\la_0, \la_0 +\delta),
\end{align*}
 for some constants $b, B>0$, if $A>0$ is  such that  $\gamma \ln A +
 \ln c<0$. This proves (\ref{5.2.2 e20a}).
 \ef

 When  $\la_0$ is an eigenvalue of $H$ we prove the following
 analogue of Theorem  \ref{5.2.2th3}.

\begin{thm} \label{5.2.2th9}
Assume the conditions (\ref{ass5.2.21b}),  (\ref{ass5.2.22b}) and (\ref{ass5.2.23b}).  Let $\la_0$ be an eigenvalue of $H$ and $\Pi_{\la_0}$ be the eigenprojection of $H$ associated with $\la_0$. Then one has
\be \label{5.2.2e91}
R(z) = -\f{\Pi_{\la_0}}{z-\la_0} + R_1(z)
\ee
where, for any $a>0$,  $e^{-a\w{x_0}^{1-\mu}}R_1(z)$ belongs to the Gevrey class $\vG^{(\f{1+\mu}{1-\mu})}(\Omega_{\la_0}(\delta))$.
\end{thm}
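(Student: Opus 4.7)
The plan is to reduce Theorem \ref{5.2.2th9} to the non-eigenvalue case already handled by Theorem \ref{5.2.2th3}, via the perturbation trick $H_\sigma=H-\sigma\Pi_{\la_0}$ for small $\sigma>0$, cf.\ Remark \ref{remark:The case lambda0insigma}~\ref{item:25} and the proof of Theorem \ref{thm:rhoThree}. The key enabling fact is that the range of $\Pi_{\la_0}$ consists of functions with sub-exponential decay $\mathrm{e}^{-b_0\w{x_0}^{1-\mu}}$, so the non-local perturbation $-\sigma\Pi_{\la_0}$ can be absorbed into the Grushin machinery used for Theorem \ref{5.2.2th3} without damaging any of the estimates.

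First I would establish the sub-exponential decay of $\Pi_{\la_0}$. If $u\in\Ran\Pi_{\la_0}$, then $v=S^*u\in\vH$ solves $E_\vH(\la_0)v=0$, and by Lemma \ref{5.2.2lem8} there is $b_0>0$ with $\mathrm{e}^{b_0\w{x_0}^{1-\mu}}v\in\vH$. The inversion formula $u=Sv-R'(\la_0)I_0 Sv$ then transfers this decay to $u$: the term $Sv=\varphi_1\otimes v$ is manifest, and for the second term one argues as in Lemma \ref{5.2.2lem5} that $\mathrm{e}^{-a\w{x_0}^{1-\mu}}R'(\la_0)\mathrm{e}^{a\w{x_0}^{1-\mu}}$ is bounded for small $a>0$, using the holomorphy of $R'(z)$ at $\la_0$ and the commutator bound $[\,\mathrm{e}^{a\w{x_0}^{1-\mu}},H'\,]\mathrm{e}^{-a\w{x_0}^{1-\mu}}=\vO(a)$ relative to $H'$. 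Hence $\mathrm{e}^{-a\w{x_0}^{1-\mu}}\Pi_{\la_0}$ is bounded on $L^2$ for every $a>0$.

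Second I would record the algebraic identity obtained from the resolvent equation $R=R_\sigma+\sigma R_\sigma\Pi_{\la_0}R$ together with $R(z)\Pi_{\la_0}=-\Pi_{\la_0}/(z-\la_0)$ and $R_\sigma(z)\Pi_{\la_0}=-\Pi_{\la_0}/(z-\la_0+\sigma)$, namely
\[
R(z)=-\f{\Pi_{\la_0}}{z-\la_0}+R_\sigma(z)+\f{\Pi_{\la_0}}{z-\la_0+\sigma},
\]
so that $R_1(z)=R_\sigma(z)+\Pi_{\la_0}/(z-\la_0+\sigma)$. The second summand here is holomorphic in $z$ near $\la_0$ (for $\sigma>0$ fixed small), and Step~1 already guarantees that it multiplied from the left by $\mathrm{e}^{-a\w{x_0}^{1-\mu}}$ lies in the required Gevrey class $\vG^{(\frac{1+\mu}{1-\mu})}(\Omega_{\la_0}(\delta))$, in fact it is holomorphic there with uniform bounds.

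Third, and this is the main technical step, I would apply Theorem \ref{5.2.2th3} to $H_\sigma$. One checks that $\la_0\notin\sigma_{\pp}(H_\sigma)$ holds trivially by construction, while $\la_0\notin\sigma_{\pp}(H_\sigma')$ for $\sigma>0$ small follows from $\la_0\in\rho(H')$ by a perturbation argument, because the added term $-\sigma\Pi'\Pi_{\la_0}\Pi'$ has sub-exponentially decaying integral kernel. The effective potential and the operator $W(\la)$ built from $R_\sigma'(\la_0+\la)$ differ from those built from $R'(\la_0+\la)$ by terms whose kernels carry factors of $\Pi_{\la_0}$ and are therefore exponentially better than any polynomial weight; in particular the uniform bound of Lemma \ref{5.2.2lem5}, the weighted coercivity used in Lemma \ref{5.2.2lem6}, the injectivity argument around \eqref{5.2.2effective} (now automatic since $\la_0\notin\sigma_\pp(H_\sigma)$), and the Agmon estimate \eqref{5.2.2e82} are all preserved. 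This delivers $\mathrm{e}^{-a\w{x_0}^{1-\mu}}R_\sigma(z)\in\vG^{(\frac{1+\mu}{1-\mu})}(\Omega_{\la_0}(\delta))$.

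The hard part is the bookkeeping in Step~3: one must verify that every estimate in the proof of Theorem \ref{5.2.2th3} (in particular the weighted operator norms of $W(\la)$ and of $G(\la)\widetilde U(\la)$, and the Gevrey class inclusion \eqref{GevreyG}) remains valid with $R'$ replaced by $R_\sigma'$. Thanks to Step~1, the perturbation contributes only rank-finite pieces with sub-exponentially decaying kernels, which are dominated by every weight appearing in these estimates, so no new obstruction arises. Combining Steps~2 and~3 then yields the claimed decomposition \eqref{5.2.2e91} with $R_1(z)$ in the required Gevrey class, completing the proof.
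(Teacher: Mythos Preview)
Your approach is correct and follows a genuinely different route from the paper. You reduce to the non-eigenvalue case via the shift $H_\sigma=H-\sigma\Pi_{\la_0}$, first using Lemma~\ref{5.2.2lem8} to obtain sub-exponential decay of $\Pi_{\la_0}$ in $x_0$, and then rerunning the proof of Theorem~\ref{5.2.2th3} for $H_\sigma$; the finite-rank perturbation is absorbed because every extra term carries a factor $\sigma$ or a sub-exponentially decaying kernel dominated by all the polynomial weights $\varphi_s$ in Lemmas~\ref{5.2.2lem5}--\ref{5.2.2lem6} and Proposition~\ref{5.2.2prop7}. The paper instead keeps $H$ and performs a \emph{second} Grushin reduction on the effective equation \eqref{5.2.2effective}: it chooses a basis $\{\phi_1,\ldots,\phi_k\}$ of $\ker E_\vH(\la_0)$, sets up a Grushin problem for $1+G(\la)\widetilde U(\la)$ as in Subsection~\ref{sec:resolv-asympt-nearLOWEST}, obtains $\vE_{-+}(\la)=\la M_0+\la^2 r_1(\la)$ with $M_0$ invertible and $r_1\in\vG^{(1+\gamma)}$, reads off $E_\vH(\la_0+\la)^{-1}=C_0/\la+\widetilde R_1(\la)$, and identifies the singular coefficient $E_+(\la_0)C_0E_-(\la_0)$ with $\Pi_{\la_0}$ via the spectral theorem at the very end. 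Your route is modular and parallels the $H_\sigma$-device already used for Theorems~\ref{thm:rhoThree} and~\ref{thm:rhoThree2}; the paper's route stays entirely within the Grushin calculus and makes the emergence of the pole explicit, at the price of redoing the finite-dimensional reduction of Section~\ref{sec:resolv-asympt-nearLOWEST} in the Gevrey framework.
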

\pf  As in the proof of Theorem \ref{5.2.2th3}, we are led to study $E_\vH(z)$ for $z \in \Omega_{\la_0}(\delta)$.
We use  formula (\ref{5.2.2effective}) and another Grushin reduction to study $E_\vH(\la_0+\la)^{-1}$. The smoothness of $G(\la)$ and $\widetilde U(\la)$ at $\la=0$ implies that they can be expanded in appropriate spaces in powers of $\la$ for $\la $ near $0$. Set
\begin{align}
G(\la) &= G_0 + \la G_1 + o(\la), \\
\widetilde U(\la) &= \widetilde U_0 + \la \widetilde U_1 + o(\la).
\end{align}
One has $G_j \in \vL(\vH_s, \vH_{s-(j+1)\rho})$ and $\widetilde U_j \in \vL(\vH_s, \vH_\infty)$ for any $s \in \R$.

Let $k$ denote the multiplicity of eigenvalue $\la_0$ of $H$. Then $0$ is an eigenvalue of $E_\vH(\la_0)$  with multiplicity $k$. Let
$ K(z) = G(z) \widetilde U(z)$ and   $K_0 = G_0 \widetilde U_0$. Then $K_0$ is a compact operator in $\vH$ and $ \ker (1+ G_0 \widetilde U_0) =\ker E_\vH(\la_0)$.
As in Subsection 5.1.2, we can choose a basis, $\{\phi_1, \dots, \phi_k\}$, of $\ker E_\vH(\la_0)$ verifying
\[
\w{-\widetilde U_0 \phi_i, \phi_j} = \delta_{ij}, \quad 1 \le i, j \le k
\]
and construct a Grushin problem for $1 + K(\la)$. Let
\[
\vS : \C^k \to \vH, c= (c_1, \cdots, c_k) \to \vS c = \sum_{j=1}^k c_j\phi.
\]
Then $Q = \vS \vS^*$ is a projection onto $\ker (1+ K_0)$. Let $Q' = 1- Q$. One has $Q'(1+K_0)Q'$ is invertible on $\ran Q'$ and by continuity,
$Q'(1+K(\la))Q'$ is invertible on $\ran Q'$ with uniformly bounded inverse for $\la \in \Omega_0(\delta)$, $\delta >0$ small.  Let
\[
D(\la) = (Q'(1+K(z)Q')^{-1}Q'
\]
Since $(1+K(\la)) \in \vG^{(1+\gamma)}(\Omega_0(\delta))$ with $\gamma = \f{2\mu}{1-\mu}$, one has $D(\la) \in \vG^{(1+\gamma)}(\Omega_0(\delta))$.
By studying the Grushin problem for $1+ K(z)$ using $\vS$ defined above, we obtain
\be \label{5.2.2e92}
(1+ K(\la))^{-1} =\vE(\la) - \vE_+(\la) \vE_{-+}(\la)^{-1}\vE_-(\la)
\ee
where
\begin{align*}
\vE(\la) &=  D(\la), \\
\vE_+(\la) &= \vS - D(\la) (1+K(\la))\vS \\
\vE_-(\la) &= \vS^* - \vS^*(1+K(\la)) D(\la)\\
\vE_{-+}(\la) & = \vS^* ( -(1+ K(\la))  + (1+ K(\la)) D(\la) (1+K(\la))\vS.
\end{align*}
$\vE_{\pm}(\la)$ and $\vE_{-+}(\la)$ all belong to Gevrey classes of order $1+\gamma$. As in Subsection 5.2.1 Case 2, we can compute the
$k\times k$ matrix $\vE_{-+}(\la)$ et obtain
\be
\vE_{-+}(\la) = \la M_0 + \la^2 r_1(\la)
\ee
where $M_0$ is invertible and $r_1(\la) \in \vG^{(1+\gamma)}(\Omega_0(\delta))$. This leads to
\[
\vE_{-+}(\la)^{-1} = \la^{-1} M_0^{-1} +  r_2(\la)
\]
with $r_2(\la) \in   \vG^{(1+\gamma)}(\Omega_0(\delta))$. By  (\ref{5.2.2inverse}) and (\ref{5.2.2e92}), we obtain
\be
 E_{\vH}(\la_0 + \la)^{-1} = -    G(\la)(1+ K(\la))^{-1} = \f{C_0}{\la} + \widetilde R_1(\la)
\ee
in $\vL(\vH_s, \vH_{s-2\mu-\ep})$, where $C_0 = G_0 \vE_+(0) M_0^{-1}\vE_-(0)$ and for $a>0$, $e^{-a\w{x_0}^{1-\mu}} \widetilde R_1(\la) \in  \vG^{(1+\gamma)}(\Omega_0(\delta))$.
Therefore, $R(z)$ verifies the expansion
\[
R(z) = -\f{B_0}{z-\la_0} + R_1(z), \quad B_0 = E_+(\la_0)C_0 E_-(\la_0)
\]
One can show as in the proof of Theorem \ref{5.2.2th3} that
\[
\e^{-a\w{x_0}^{1-\mu}}R_1(z)  \in  \vG^{(1+\gamma)}(\Omega_{\la_0}(\delta)).
\]
  One has necessarily $B_0 =\Pi_{\la_0}$ by the spectral theorem for
$H$.
\ef

Using Cauchy integral formula to represent $\e^{-tH}$ in terms of the resolvent, we obtain from Theorem \ref{5.2.2th9} a large-time expansion for the heat semi-group $\e^{-tH}$ (cf. \cite[Theorem 2.3]{Wa6}).

\begin{cor}\label{5.2.2cor4}  Assume the conditions of Theorem \ref{5.2.2th3}. For any $a>0$,  there exist some constants $C, c>0$ such that
\be
\|\e^{-a\w{x_0}^{1-\mu}} \big(\e^{-tH} -\sum_{\la \in \sigma_d(H)} \e^{-t\la} \Pi_\la  - \e^{-t\la_0} \Pi_{\la_0} \big)\| \le C \e^{-t\la_0 - ct^{\f{1-\mu}{1+\mu}}}.
\ee
Here $\Pi_\la$ is the eigenprojector of $H$ associated with eigenvalue $\la$.
\end{cor}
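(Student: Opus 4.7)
The plan is to represent the semigroup $\e^{-tH}$ via a Cauchy-type contour integral of the resolvent and then exploit Theorem \ref{5.2.2th9} (which provides the Gevrey bound on $R_1(z)$ near $\la_0$ plus the explicit pole at $\la_0$ if $\la_0\in\sigma_\pp(H)$). First, I would use the Helffer-Sj\"ostrand formula or the standard contour formula $\e^{-tH}=\tfrac{1}{2\pi \i}\oint_{\Gamma} \e^{-tz} R(z)\,\d z$ along a contour $\Gamma$ that encloses $\sigma(H)$. Deforming $\Gamma$ so as to enclose each isolated eigenvalue $\la\in \sigma_\d(H)\cup\{\la_0\}$ separately produces the sum $\sum_\la \e^{-t\la}\Pi_\la + \e^{-t\la_0}\Pi_{\la_0}$ by the residue calculus, using Theorem \ref{5.2.2th9} for the pole of $R(z)$ at $\la_0$ and the standard Riesz projector formula for the discrete spectrum. (Recall that Theorem \ref{5.2.2th3} already shows $\sigma_\d(H)$ is finite.)

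The remaining integral, which equals the left-hand side to be estimated, is an integral of $\e^{-tz}$ times $\e^{-a\w{x_0}^{1-\mu}}R_1(z)$ over a small contour in $\Omega_{\la_0}(\delta)$ and over a piece bounding the essential spectrum far from $\la_0$. The contribution from the part of the contour far from $\la_0$ is manifestly $\vO(\e^{-t(\la_0+\delta')})$ for some $\delta'>0$ by classical functional calculus, which is much smaller than the asserted bound. The whole problem thus reduces to estimating the integral along a contour hugging $[\la_0,\la_0+\delta]$ from both sides (or, equivalently, a parabolic contour around $\la_0$).

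The crucial step is to choose the contour optimally in $t$. Taking a parabolic or keyhole contour of the form $z=\la_0+r\e^{\pm \i(\pi/2+\delta)}$ with $r\in[0,\delta]$, and using the Gevrey estimate
\[
\|\e^{-a\w{x_0}^{1-\mu}}R_1^{(N)}(z)\|\le C^{N+1}N!\,\w{N+1}^{\gamma(N+1)},\quad \gamma=\tfrac{2\mu}{1-\mu},
\]
together with a Taylor expansion of $R_1(z)$ around $z=\la_0$, one gains on each segment a factor $|z-\la_0|^N/N!$ times the Gevrey bound, yielding $(C|z-\la_0|)^{N+1}\w{N+1}^{\gamma(N+1)}$. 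Integrating $|\e^{-tz}|\le \e^{-t\la_0}\e^{-ct\,r}$ against $\d r$ on the contour and then optimizing over $N\in\N$ by choosing $N\sim A\,t^{1/(1+\gamma)}$ (so that $N^{\gamma N}r^N\e^{-ctr}$ is balanced at $r\sim N/t$) produces a bound of the form $\e^{-t\la_0}\e^{-c\,t^{1/(1+\gamma)}}$. Since $1/(1+\gamma)=(1-\mu)/(1+\mu)$, this is exactly the claimed rate.

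The main obstacle I expect is the bookkeeping in the contour optimization: one must verify that the Gevrey constants are uniform in $z\in\Omega_{\la_0}(\delta)$ (already given by Theorem \ref{5.2.2th9}) and that the Taylor remainder in the expansion of $R_1(z)$ integrates nicely against $\e^{-tz}$ on the parabolic contour, which amounts to a standard saddle-point/Stirling computation. This is entirely analogous to \cite[Theorem 2.3]{Wa6}, which the statement explicitly cites, so the argument carries over essentially verbatim once Theorem \ref{5.2.2th9} is in hand.
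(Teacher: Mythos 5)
Your strategy is exactly the one the paper intends: the paper gives no detailed proof, only the remark that one represents $\e^{-tH}$ by the Cauchy integral formula for the resolvent and feeds in the expansion $R(z)=-\Pi_{\la_0}(z-\la_0)^{-1}+R_1(z)$ with the Gevrey bound of Theorem \ref{5.2.2th9} (cf.\ \cite[Theorem 2.3]{Wa6}), and your optimization $N\sim t^{1/(1+\gamma)}$ with $1/(1+\gamma)=(1-\mu)/(1+\mu)$ is precisely the mechanism that produces the sub-exponential rate (it is the same balancing as in the proof of Corollary \ref{5.2.2corSpecM}). The residues at the finitely many points of $\sigma_\d(H)$ and at $\la_0$ give the subtracted terms, and the far part of the contour is indeed $\vO(\e^{-t(\la_0+\delta')})$.

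There is, however, one concrete slip that would make the key estimate fail as written: on your contour $z=\la_0+r\e^{\pm\i(\pi/2+\delta)}$ one has $\Re(z-\la_0)=-r\sin\delta<0$, so $|\e^{-tz}|=\e^{-t\la_0}\e^{+tr\sin\delta}$ \emph{grows} along the rays, and the bound $|\e^{-tz}|\le\e^{-t\la_0}\e^{-ctr}$ that your optimization relies on is false there; moreover rays pointing into the half-plane $\Re z<\la_0$ cannot bound the essential spectrum $[\la_0,\infty)$ correctly. The rays must leave $\la_0$ at angles $\pm\theta$ with $\delta<\theta<\pi/2$ (so they stay inside the sector $\Omega_{\la_0}(\delta)$ where Theorem \ref{5.2.2th9} applies, yet $\Re(z-\la_0)=r\cos\theta>0$), possibly joined by a small arc around $\la_0$ of radius $\asymp t^{-1}$ whose contribution is $\vO(t^{-1}\e^{-t\la_0})$ and which carries the residue $\e^{-t\la_0}\Pi_{\la_0}$. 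With this corrected contour your Taylor-plus-Gevrey (or, equivalently, $N$-fold integration by parts in $z$) argument goes through and yields the stated bound. A last cosmetic point: the corollary quotes the hypotheses of Theorem \ref{5.2.2th3}, but since you keep the term $\e^{-t\la_0}\Pi_{\la_0}$ you are in effect working under Theorem \ref{5.2.2th9}, which is what the paper itself invokes; this is consistent with your proof and not a gap.
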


\begin{remarks}
  \begin{enumerate}[1)]
  \item 
 For one-body operators  it is shown in \cite{AW,Wa6} that under some
 additional conditions the quantum dynamics $\e^{-\i tH}$, when regarded as operator from $L^2_{\rm comp}$ to $L^2_{\rm loc}$,  can be expanded as $|t| \to \infty$ with the same sub-exponential estimates on the remainder as in
 Corollary \ref{5.2.2cor4}. The conditions used  there exclude  a
 possible accumulation of quantum resonances towards threshold
 zero. It is an interesting and non-trivial open question to see if a
 similar result holds true for $\e^{-\i tH}$ in the $N$-body problem.

\item  If  $\la_0=\Sigma_2$ is a multiple two-cluster threshold one can apply the Grushin reduction of Section 2.3 to  show that  Theorems \ref{5.2.2th3}  and \ref{5.2.2th9} still hold true. Since the proof of Theorems \ref{5.2.2th3}  and \ref{5.2.2th9} relies heavily on the continuity of $R'(z)$ in exponentially weighted spaces, which is deduced from the  holomorphicity of $R'(z)$  for $z$ near $\la_0$, one can not expect these results  to be true for higher thresholds $\la_0 >\Sigma_2$.
 \end{enumerate}
\end{remarks}

\begin{example} Theorems  \ref{5.2.2th3}  and \ref{5.2.2th9} can be applied to physics models with Coulomb interactions given by (\ref{0.1Cou}). Assume that the lowest threshold $\la_0=\Sigma_2$ is non-multiple two-cluster and is equal to the lowest eigenvalue of a two-cluster Hamiltonian $H^a$ with $a=(C_1, C_2)$. Let $Q_j =\sum_{k \in C_j} q_k$ be the total charge of particles in cluster $C_j$, $j=1,2$. Assume that
\[
Q_1Q_2 >0.
\]
Then the effective potential is positive and slowly decreasing outside a compact set and (\ref{ass5.2.23b}) is satisfied with $\rho =1$ (see (\ref{eq:expans1})).  In this case,  Theorems  \ref{5.2.2th3}  and \ref{5.2.2th9} hold true in Gevrey class $\vG^{(3)}(\Omega_{\la_0}(\delta))$.
\end{example}

\section{Resolvent asymptotics for physics models near two-cluster thresholds}\label{sec:resolv-asympt-physics models near}
We will discuss some extensions of Section 5.1 for the physics models
 of Sections \ref{$N$-body Schr\"odinger operators}
and \ref{$N$-body Schr\"odinger operators with infinite mass nuclei}
(with $N\geq 3$ and $N\geq 2$, respectively) using the same notation
as in Section  \ref{sec:CoulRellich}.  As in the previous sections of
this chapter the  single particle space dimension is fixed as
$n=3$. In agreement with the settng  of Section  \ref{sec:CoulRellich} we shall not distinguish between the cases $\lambda_0=\Sigma_2$ and
$\lambda_0>\Sigma_2$.

We recall
 that for a given two-cluster threshold  $\lambda_0$ for the models of Sections \ref{$N$-body
  Schr\"odinger operators} and \ref{$N$-body Schr\"odinger operators with infinite mass
  nuclei} we  group the set of two-cluster decompositions $a$ for which $\lambda_0\in
\sigma_{\pp}(H^a)$ into $\vA_1$,  $\vA_2$  and $\vA_3$ for which
\begin{description}
\item [$\vA_1$:] the effective
inter-cluster interaction   is to leading order attractive Coulombic,
\item [$\vA_2$:] the effective
inter-cluster interaction  is to leading order repulsive  Coulombic,
\item [$\vA_3$:] the effective
inter-cluster interaction  is $\vO(|x_a|^{-2})$.
\end{description}
  This distinction
does not depend on choices of corresponding sub-Hamiltonian
 bound states $\varphi^a$ (i.e. channels); it is determined by  charges only
(cf. Case  1 introduced independently  in each  of  Sections \ref{$N$-body Schr\"odinger operators}
and \ref{$N$-body Schr\"odinger operators with infinite mass nuclei}).

Let for $a\in\widetilde{\vA}:=\vA_1\cup\vA_2\cup\vA_3$  the operator $P^a$ be the corresponding orthogonal projection onto
   $\ker(H^a-\lambda_0)$ in
   $L^2(\bX^a)$ and let $m_a$ be the dimension of this
   space. Obviously  $\Pi^a:=P^a\otimes 1$ projects  onto
   the span of functions of the form $\varphi^a\otimes f_a$,
   $\varphi^a\in\ker(H^a-\lambda_0)$, in $L^2(\bX)$.  We identify  $\ran P^a$, say spanned by an
   orthonormal basis
   $\varphi^a_1,\dots\varphi^a_{m_a}$, with $\C^{m_a}$ (using the
   basis), and similarly
   \begin{align*}
     \vH_a:=L^2(\bX_a,\C^{m_a}) &\simeq \oplus_{m\leq
     m_a}\,L^2(\bX_a)\ni \oplus_{m\leq m_a}
    f_{a,m}=f_a\\&\simeq S_af_a:=\sum_{ m\leq m_a}\varphi^a_m\otimes
    f_{a,m}\in \ran \Pi^a.
   \end{align*}

The effective potential  for $a\in\vA _3$ obeys
   $W_a:=S^*_aI_aS_a= Q_a|x_a|^{-2}+B_a$, where $Q_a$ is a $m_a\times
   m_a$ matrix-valued function depending only on
   $\theta=\hat x_a=|x_a|^{-1}x_a$ while
   $B_a=B_a(x_a)=\vO(|x_a|^{-3})$. We shall here only study the case
   where $Q_a=0$, meaning that the effective potential is \emph{fastly
    decaying}. The `right generalization' to the case $m_a>1$ of the
   distinction between Cases 2) and 3) (discussed primarily for
   $m_a=1$ in both of Sections \ref{$N$-body
  Schr\"odinger operators} and \ref{$N$-body Schr\"odinger operators
  with infinite mass nuclei}) is to let Case 2) correspond to $Q_a\neq
0$ and let Case 3) correspond to $Q_a=
0$, respectively. This means that we shall not consider Case 2)
defined in Section  \ref{sec:CoulRellich}. Whence,  letting  $\vA ^{\mathrm
{fd}}_3=\set{a\in\vA _3|\, Q_a=0}$, we assume that
\begin{equation}
  \label{eq:fastDec}
  \vA _3=\vA ^{\mathrm
{fd}}_3.
\end{equation}

For simplicity we shall also assume \eqref{eq:dirCo1} and
\eqref{eq:dirCo2} leaving us with studying the Grushin resolvent
representation \eqref{rep3ipm} where (as in Section
\ref{sec:CoulRellich}) $\vH=\sum_{a\in \widetilde{\vA}} \oplus \vH_a$,
    $\vH_a=L^2(\bX_a,\C^{m_a})=\oplus_{m \leq m_a}\,L^2(\bX_a)$ and
    $S=(S_a):\vH\to \vF\subset\vG=L^2(\bX)$ is  given by
    \begin{align*}
      f=\oplus_{a\in
      \widetilde{\vA}}\,
    f_{a}\to Sf=\sum_{a\in
      \widetilde{\vA}}\,
    S_af_{a};\quad f_{a}=\sum_{m\leq m_a}\oplus
    f_{a,m}, \quad S_af_{a}=\sum_{m \leq m_a}\varphi^a_m\otimes
    f_{a,m}.
    \end{align*}
    Let (as usual) $T=(SS^*)^{-1}S$. These operators $S$ and $T$ will
    freely be used on weighted  spaces, and we also adapt the following
    notation of Theorem \ref{thm5.10}.
    \begin{align*}
      S_If:= \sum_{a\in
      \widetilde{\vA}}\,
    I_aS_af_{a};\quad f=\oplus_{a\in
      \widetilde{\vA}}\,\,
    f_{a}.
\end{align*}

We write $-E_{\vH}(\lambda_0+z)=P(z)=P_0+U(z)-z$, where
$P_0=\oplus_{a\in \widetilde \vA} \,h_a$ with $h_a$ specified as
follows: If $a\in  \vA_1$ or $a\in  \vA_2$  we take $h_a=p_a^2+w_a$
(acting as a   diagonal operator  if $m_a>1$)   where
$w_a$ is constructed as in Subsection \ref{subsec:negat-effect-potent}
or
\ref{subsec:positive-effect-potent}, respectively. If $a\in\vA ^{\mathrm
{fd}}_3=\vA _3$ we take  $h_a=p_a^2$. We define $r_0(z)= (P_0-z)^{-1}=\oplus_{a\in \widetilde \vA} \,r_a(z)$ for
any $z\in\C\setminus\R$. Next we write
\begin{align}\label{eq:Presolv}
  P (z)^{-1}=\parb{1+r_0(z)U(z)}^{-1}r_0(z)=W(z)^{-1}r_0(z), \quad
  z\in\C\setminus\R,
\end{align} as in the previous sections.

If $\lambda_0$ is regular, meaning  that $\lambda_0$  is neither an
eigenvalue nor a resonance of $H$ (or, more precisely, that the set $\vE$ in
Theorem \ref{thm:physical-modelsRell} is the zero set), then $\ker W^{\pm}(0)=0$ where
\begin{align}\label{eq:limiW}
  W^{\pm}(0)=\lim_{\epsilon\to 0} W^{\pm}(\pm \i\epsilon)
  =1+\lim_{\epsilon\to 0} r_0(\pm \i\epsilon)\lim_{\epsilon\to 0}
  U(\pm \i\epsilon)=1+r^\pm_0 U^\pm_0
\end{align} with limits taken in  appropriate spaces. By  Fredholm
theory we would then obtain limits $\lim_{\epsilon\to 0} (P\mp \i\epsilon)^{-1}$
and then in turn, due to \eqref{rep5.9},  limits $\lim_{\epsilon\to 0}
R(\lambda_0\pm \i\epsilon)$.
 Actually  for $\lambda_0$ being  regular we can take  limits for
$z\to 0$ in the
quadrants
\begin{equation*}
  \vZ_{\pm}=\set{\Re z \geq 0,\, \pm\Im z>0}.
\end{equation*}

To be more precise about these assertions we first look at the   three
types of (diagonal) building blocks  $r_a(z)=(h_a-z)^{-1}$. Recall the
notions  from Section  \ref{sec:CoulRellich},
\begin{align*}
  \vH_{a,s}^k=H_{s}^k(\bX_a,\C^{m_a})\quad \text{for}\quad a\in
      \widetilde{\vA}, \,k\in \R\mand s\in\R.
\end{align*}
If $a\in \vA_1$
there are limits
\begin{align*}
  r_a(0\pm \i 0)=\lim_{z\to 0, z\in \vZ_{\pm}} r_a(z)\text{ in
  }\vL\parb{\vH^{-1}_{a, s},\vH^{1}_{a, -s}},\quad s>3/4,
\end{align*} cf. Subsection
\ref{subsec:negat-effect-potent}. Similarly, if $a\in \vA_2$
there are limits
\begin{align*}
  r_a(0\pm \i 0)=\lim_{z\to 0, z\in \vZ_{\pm}} r_a(z)\text{ in
  }\vL\parb{\vH^{-1}_{a, s},\vH^{1}_{a, -s}},\quad s>3/4,
\end{align*} however in this case $r_a(0+ \i 0)=r_a(0- \i 0)$ and the
common limit coincides with the quantity $h_a^{-1}$ of Lemma
\ref{lemma:posit-slowly-decay}, cf. \cite {Na} and \cite{Ya2}. Finally,
for $a\in\vA ^{\mathrm
{fd}}_3$  we know from the previous sections that
\begin{align*}
  r_a(0\pm \i 0)=\lim_{z\to 0, z\in \vZ_{\pm}} r_a(z)\text{ in
  }\vL\parb{\vH^{-1}_{a, s},\vH^{1}_{a, -s}},\quad s>1,
\end{align*}  again with $r_a(0+ \i 0)=r_a(0- \i 0)$ and in this case  the
common limit $G_0$ is given explicitly (each diagonal entry has the
kernel $(4\pi)^{-1}\abs{x_a-y_a}^{-1}$).

We are lead to consider  for $k\in\R$, $r,s< -3/4$ and $ t<-1$ the spaces
    \begin{align*}
 \vH_{r,s, t}^k =\oplus_{b\in \vA_1}\,  \vH_{b,r}^k \bigoplus
      \oplus_{b\in \vA_2}\, \vH_{b,s}^k \bigoplus \oplus_{b\in
        \vA_3}\,  \vH_{b,t}^k.
    \end{align*}

We could for example fix  $ (r,s,t)=\bar t:= -(1, 1, 4/3)$, and the corresponding spaces $\vH_{\bar
      t}^k $ could then be used by considering $r_0(z)\in \vL(\vH_{-\bar
      t}^{-1} ,\vH_{\bar
      t}^1 )$,  $U(z)\in \vL(\vH_{\bar
      t}^{1} ,\vH_{-\bar
      t}^{-1} )$ and therefore, in turn, $W(z)\in \vL(\vH_{\bar
      t}^{1} )$; $z\in \vZ_{\pm}$.
    With these interpretations we can check that the limits in
\eqref{eq:limiW} exist. Since $\lambda_0$ is regular for $H$ it
follows that $\ker
W^{\pm}(0)=0$. Since $W^{\pm}(0)\in \vC(\vH_{\bar
      t}^1)$ this allows us to take the $z\to 0$ limits in
\eqref{eq:Presolv}  as well.

We made an unnecessary simplifying assumption on the parameters, and using the spaces
$\vH_{r,s, t}^k$ we may similarly deduce the following result.
\begin{prop}
  \label{thm:resolv-asympt-physRegular} Suppose the conditions of
  Theorem \ref{thm:physical-modelsRell} and in addition  \eqref{eq:fastDec}, \eqref{eq:dirCo1} and
\eqref{eq:dirCo2}. Suppose $\lambda_0$ is regular. Then for all sufficiently big $r,s< -3/4$ and $ t<-1$
there is an
$\epsilon>0$ such that the following asymptotics holds in
$\vL(\vH_{r,s, t}^{1} )$    for  $z\to 0$ in $\vZ_{\pm}$.
\begin{subequations}
 \begin{align}
  \label{eq:asW1}
  W(z)^{-1} =W^{\pm}(0)^{-1}+ \vO(|z|^{\ep}).
\end{align}

Similarly, in the space $\vL(\vH_{-r,-s, -t}^{-1},\vH_{r,s, t}^{1})$
\begin{align}
  \label{eq:asW2}
  P(z)^{-1}=(P^{\pm})^{-1} +
  \vO(|z|^{\ep});\quad (P^{\pm})^{-1}:=W^{\pm}(0)^{-1}r^\pm_0.
\end{align}
\end{subequations}
\end{prop}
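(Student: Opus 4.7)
The plan is to establish the proposition by combining the three building-block resolvent limits described immediately before the statement with a Fredholm argument, then inserting Hölder remainders throughout. I would first fix parameters $r,s<-3/4$ and $t<-1$ large enough in absolute value so that, for every $a\in\widetilde\vA$, the boundary value $r_a(0\pm\i 0)$ exists in the appropriate $\vL(\vH^{-1}_{a,-\sigma_a},\vH^{1}_{a,\sigma_a})$ with $\sigma_a=3/4$ for $a\in\vA_1\cup\vA_2$ and $\sigma_a=1$ for $a\in\vA_3$, and so that one moreover has a Hölder bound $r_a(z)-r_a(0\pm\i 0)=\vO(|z|^{\epsilon})$. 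These are standard for the free Laplacian on $\R^3$ (recorded in \eqref{eq:basicHo} and \eqref{eq:freeres}), follow from \eqref{eq:BesR} and Subsection~\ref{subsec:negat-effect-potent} for $a\in\vA_1$, and from Subsection~\ref{subsec:positive-effect-potent} together with Lemma~\ref{lemma:posit-slowly-decay} for $a\in\vA_2$ (using $r_a(0+\i 0)=r_a(0-\i 0)=h_a^{-1}$ there).

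Second, I would control the non-local piece $U(z)=S^*I_0 S-S^*I_0 R'(\lambda_0+z)I_0 S$. By \eqref{eq:dirCo2} together with Theorem~\ref{thm:powers} applied to $\brR$ and passed to $R'$ via \eqref{eq:basresbreve}, the reduced resolvent admits Hölder-continuous boundary values $R'(\lambda_0\pm\i 0)$ between appropriately weighted spaces; the polynomial decay of the cluster bound states $\varphi^a_m$ converts these into the spatial weights demanded by $r_a(0\pm\i 0)$. Hence $U(z)=U^{\pm}_0+\vO(|z|^{\epsilon})$ in $\vL(\vH^{1}_{r,s,t},\vH^{-1}_{-r,-s,-t})$, and therefore $W(z)=W^{\pm}(0)+\vO(|z|^{\epsilon})$ in $\vL(\vH^{1}_{r,s,t})$, which is the uninverted form of \eqref{eq:asW1}.

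Third, one needs invertibility of $W^{\pm}(0)$. Since $r_0^{\pm}U_0^{\pm}$ is a compact perturbation of the identity on $\vH^{1}_{r,s,t}$ (the factor $I_a$ supplies decay $|x_a|^{-1}$ for $a\in\vA_1\cup\vA_2$ and decay $|x_a|^{-3}$ for $a\in\vA_3$ under \eqref{eq:fastDec}, which combined with the polynomial decay of the $\varphi^a_m$ and Rellich-type compact embeddings for weighted Sobolev spaces does the job), the Fredholm alternative applies. To conclude $\ker W^{\pm}(0)=\{0\}$ I would invoke the eigentransform/inversion scheme of Lemma~\ref{lem:eigentransform} and its multiple-cluster extension used in the proof of Theorem~\ref{thm:physical-modelsRell}: any $f\in\ker W^{\pm}(0)$ corresponds via $E^{\pm}_{+}(\lambda_0)$ to an element $u\in\vE$, and the regularity hypothesis on $\lambda_0$ means $\vE=\{0\}$, whence $u=0$ and finally $f=T^*u=0$.

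Fourth, assemble. Uniform invertibility of $W(z)$ on $\vZ_{\pm}$ near $0$ follows from the invertibility of $W^{\pm}(0)$ together with the Hölder bound, and the second resolvent identity
\begin{equation*}
W(z)^{-1}=W^{\pm}(0)^{-1}-W^{\pm}(0)^{-1}\bigl[W(z)-W^{\pm}(0)\bigr]W(z)^{-1}
\end{equation*}
then yields \eqref{eq:asW1}. Multiplying by the expansion $r_0(z)=r_0^{\pm}+\vO(|z|^{\epsilon})$ in $\vL(\vH^{-1}_{-r,-s,-t},\vH^{1}_{r,s,t})$ and invoking $P(z)^{-1}=W(z)^{-1}r_0(z)$ from \eqref{eq:Presolv} gives \eqref{eq:asW2}. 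The main obstacle is the weight bookkeeping: the three families $\vA_1,\vA_2,\vA_3$ demand different minimum weights for their respective zero-energy limits, and one must check that a single triple $(r,s,t)$ simultaneously delivers Hölder continuity for $r_0(z)$ and $U(z)$ and the compactness needed for Fredholm. The identification $\ker W^{\pm}(0)=\{0\}$ via the eigentransform is the conceptual heart where the regularity assumption enters directly.
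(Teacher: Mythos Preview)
Your proposal is correct and follows essentially the same approach as the paper, which in fact gives no formal proof of this proposition beyond the discussion in the paragraphs immediately preceding it; you have accurately reconstructed and fleshed out that implicit argument (building-block limits for $r_a(0\pm\i0)$, H\"older continuity of $U(z)$ via Theorem~\ref{thm:powers}, Fredholm alternative for $W^{\pm}(0)$, and injectivity from the regularity hypothesis through the eigentransform of Section~\ref{sec:CoulRellich}).

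Two minor points. First, your displayed formula for $U(z)$ is the non-multiple expression; in the present multiple-cluster setting the operator has the matrix structure of \eqref{eq:EeffsCo}, though this does not alter the argument. Second, the step ``$f\in\ker W^{+}(0)\Rightarrow f\in\vE^\vH$'' tacitly uses that once $E^{+}_{\vH}(\lambda_0)f=0$ is obtained (by applying $P_0$ distributionally), the companion equation $E^{-}_{\vH}(\lambda_0)f=0$ follows from the imaginary-part argument of the type \eqref{eq:Imvanissh}/\eqref{eq:zeroTrac} (combining $\mp\Im U_0^{\pm}\ge 0$ with $\Im r_0^{\pm}\ge 0$); you point to Lemma~\ref{lem:eigentransform}, which is exactly where this is handled, but it is worth making the dependence explicit since for $a\in\vA_1$ one has $r_a(0+\i0)\neq r_a(0-\i0)$.
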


\begin{thm}[Regular case]
  \label{thm:resolv-asympt-physReg0} Under the conditions of
  Proposition \ref{thm:resolv-asympt-physRegular} the following asymptotics hold for $R(\lambda_0
+ z)$ as an operator from $H^{-1}_{t}$ to $H^{1}_{-t}$, $t>1$,  for
$z\to 0$ in $\vZ_{\pm}$  and for some $\epsilon=\epsilon(t)>0$.
\begin{align} \label{asymRz0P}
  \begin{split}
 R(\lambda_0 &+ z) = R'(\lambda_0\pm \i 0) \\&+  (S-R'(\lambda_0\pm \i
0)S_I)(P^{\pm})^{-1}(S^*- S_I^*R'(\lambda_0\pm \i 0)) + \vO(|z|^\ep).
   \end{split}
\end{align}
 \end{thm}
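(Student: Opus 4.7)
The plan is to derive the stated asymptotics directly from the Grushin representation formula
$$R(\lambda_0+z)=R'(\lambda_0+z)+E_+(\lambda_0+z)\,P(z)^{-1}\,E_-(\lambda_0+z),\quad \Im z\neq 0,$$
combined with the one-body-type expansion $P(z)^{-1}=(P^{\pm})^{-1}+\vO(|z|^{\epsilon})$ from Proposition \ref{thm:resolv-asympt-physRegular}. First I would observe that $\Pi' S=0$ gives $R'(z)S=0$, so $R'(z)HS=R'(z)\Pi'(H-\lambda_0)S$; using $(H-\lambda_0)S_af_a=\varphi^a\otimes(-\Delta_a f_a)+I_aS_af_a$ and $\Pi'(\varphi^a\otimes\cdot)=0$, this reduces to $R'(z)S_I$. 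Hence
$$E_+(\lambda_0+z)=S-R'(\lambda_0+z)S_I,\qquad E_-(\lambda_0+z)=S^*-S_I^*R'(\lambda_0+z),$$
so the claim matches the Grushin formula with $R'(\lambda_0+z)$ and $P(z)^{-1}$ each replaced by their boundary values.

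Next I would establish the three basic one-term expansions in $z\to 0$, $z\in\vZ_\pm$, with remainders of order $|z|^\epsilon$ in appropriate weighted spaces. For $R'$ the input is Theorem \ref{thmlapBnd} and interpolation with the trivial bound away from $\lambda_0$: using \eqref{eq:basresbreve} to reduce to $\breve R$, the same Hölder-type estimate \eqref{holresol} that was used in Lemma \ref{lemma:Wresolv-asympt-near-1} gives
$$R'(\lambda_0+z)=R'(\lambda_0\pm\i 0)+\vO(|z|^{\epsilon})\quad\text{in }\vL(H^{-1}_t,H^1_{-t}),\ t>1/2.$$
Inserting this into the formulas for $E_\pm$ and using the mapping properties $S\in\vL(\vH_{\bar t},H^1_{-t})$ and $S_I\in\vL(\vH^1_{\bar t},H^{-1}_{t'})$ (with $\bar t=(r,s,t)$ as in Proposition \ref{thm:resolv-asympt-physRegular} and $t'<\tfrac12+\rho$, so in particular $t'>1$), one gets
$$E_+(\lambda_0+z)=\bigl(S-R'(\lambda_0\pm\i 0)S_I\bigr)+\vO(|z|^{\epsilon})$$
as operators $\vH_{\bar t}\to H^1_{-t}$, and symmetrically for $E_-$.

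Finally I would multiply the three expansions together. The middle expansion $P(z)^{-1}=(P^{\pm})^{-1}+\vO(|z|^\epsilon)$ holds in $\vL(\vH^{-1}_{-\bar t},\vH^1_{\bar t})$, and the outer factors $E_\pm$ have the bounded limits in $\vL(\vH_{\bar t},H^1_{-t})$ and $\vL(H^{-1}_t,\vH_{-\bar t})$ respectively, so the product is well-defined as an operator $H^{-1}_t\to H^1_{-t}$. Expanding and keeping only terms up to order $|z|^0$ yields
$$R(\lambda_0+z)=R'(\lambda_0\pm\i 0)+\bigl(S-R'(\lambda_0\pm\i 0)S_I\bigr)(P^{\pm})^{-1}\bigl(S^*-S_I^*R'(\lambda_0\pm\i 0)\bigr)+\vO(|z|^{\epsilon}),$$
which is precisely \eqref{asymRz0P}. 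The same argument works in both quadrants $\vZ_\pm$.

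The main technical obstacle is \emph{bookkeeping of the weighted spaces}: Proposition \ref{thm:resolv-asympt-physRegular} lives on the rather asymmetric spaces $\vH^{\pm1}_{r,s,t}$ reflecting the different decay of the one-body effective resolvents on the $\vA_1$, $\vA_2$ and $\vA_3^{\mathrm{fd}}$ sectors (respectively $s_0=3/4$, $s_0=(3/2)^+$, $s_0=(3/2)^+$), while the target spaces are the uniform $H^{\pm1}_{\pm t}$. One must choose $(r,s,t)$ large enough (and correspondingly $\epsilon=\epsilon(t)>0$ small enough) that $S:\vH_{\bar t}\to H^1_{-t}$, $S_I:\vH^1_{\bar t}\to H^{-1}_{t'}$ with $t'>t$, and the Hölder exponent for $R'$ survives in all the intermediate compositions. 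Given Theorem \ref{thmlapBnd}, Corollary \ref{cor:microlocal-bounds} and the fast decay of the bound states $\varphi^a$, this is a routine but careful verification along the lines of the proof of Theorem \ref{thm5.10}, Case 0.
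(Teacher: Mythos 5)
Your proposal is correct and is essentially the paper's own argument: the paper obtains the theorem as an immediate consequence of the Grushin representation \eqref{rep3ipm} combined with the expansions of Proposition \ref{thm:resolv-asympt-physRegular}, exactly as you do, with the identification $E_\pm(\lambda)=S-R'(\lambda)S_I$ (resp.\ its adjoint) and the H\"older continuity of $R'(\lambda_0+z)$ in weighted spaces (from Theorem \ref{thmlapBnd}/Theorem \ref{thm:powers} via Remark \ref{remark:microlocal-boundsGEN}) supplying the remaining $\vO(|z|^\epsilon)$ control. The weight bookkeeping you describe on the anisotropic spaces $\vH^{\pm 1}_{r,s,t}$ is precisely the content the paper leaves implicit, so nothing is missing.
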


 This result is immediate from \eqref{rep3ipm} and Proposition
 \ref{thm:resolv-asympt-physRegular}, however the latter  results
 actually  give
 more detailed information on the (anisotropic) behaviour of  the resolvent. For
 example the `most singular part' of $R(\lambda_0 + z)$ is given by the
 term $S(P-z)^{-1}S^*\in \vL(H^{-1}_{t},H^{1}_{-t})$, but
 $S(P-z)^{-1}S^*\approx Sr_0(z)S^*\approx \Sigma_{a\in\vA}  P^a\otimes
 r_a(z)$ needs $t>1$ only for $a\in\vA ^{\mathrm
{fd}}_3$; $r_a(z)$ is `smaller' for $a\in \vA_1\cup\vA_2$.

We can also
 derive a result if $\lambda_0$ is an exceptional point of the second
 kind, meaning that the set $\vE$ in Theorem
 \ref{thm:physical-modelsRell} obeys $0\neq \vE\subset L^2$. For that
 we need the additional condition
  \begin{equation}
    \label{eq:dec1}
    \ran \Pi_H\subset L^2_t\text{  for some }t> 1,
  \end{equation} where $\Pi_H$ is the eigenprojection corresponding to
  the eigenvalue $\lambda_0$ of $H$ (note that $ \Pi_HL^2=\vE$). We
  can argue as for the last assertion of Theorem \ref{thm:rhoThree},
  see also Remark \ref{remark:resolv-asympt-nearSimp}
  \ref{item:decayEig}. This means more precisely that we use the above
  procedure for $ H_\sigma:=H-\sigma\Pi_H$, $\sigma> 0$ small.  Under
  the given hypotheses $\lambda_0$ is neither an eigenvalue nor a
  resonance of $H_\sigma$, and therefore there is an analogous version of
   \eqref{asymRz0P} for $H_\sigma$ and we can then (as in the
  previous sections) invoke \eqref{eq:sigmeFormR}. Hence we obtain the
  following result, where quantities depending on the `potential'
  $-\sigma\Pi_H$ are equipped with the  subscript $\sigma$.

  \begin{thm}[Exceptional
point of $2$nd kind]
    \label{thm:resolv-asympt-phys2nd} Suppose the conditions of
  Theorem \ref{thm:physical-modelsRell} and in addition
  \eqref{eq:fastDec}, \eqref{eq:dirCo1}, \eqref{eq:dirCo2}  and \eqref{eq:dec1}. Suppose $\lambda_0$ is an eigenvalue but not a
resonance of $H$. Then the following asymptotics hold for $R(\lambda_0
+ z)$ as an operator from $H^{-1}_{s}$ to $H^{1}_{-s}$, $s>1$,  for
$z\to 0$ in $\vZ_{\pm}$  and for some $\epsilon=\epsilon(s)>0$.
\begin{align} \label{asymRz0P6}
  \begin{split}
 R(\lambda_0 &+ z) = -z^{-1}\Pi_H+\sigma^{-1}\Pi_H+R_\sigma'(\lambda_0\pm \i 0) \\&+  (S-R'_\sigma(\lambda_0\pm \i
0)S_{I_\sigma})(P_\sigma^{\pm})^{-1}(S^*- S_{I_\sigma}^*R_\sigma'(\lambda_0\pm \i 0)) + \vO(|z|^\ep).
   \end{split}
\end{align}
  \end{thm}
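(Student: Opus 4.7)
The plan is to follow the strategy indicated in the paragraph preceding the theorem: reduce to the regular case by a finite-rank perturbation. Set $H_\sigma = H - \sigma\Pi_H$ for small $\sigma > 0$. Since $\ran \Pi_H \subset L^2_t$ with $t > 1$ by \eqref{eq:dec1}, the operator $\sigma\Pi_H$ is a symmetric, non-local, rank-finite perturbation with kernel in a weighted $L^2$ space, which fits within the general framework used throughout Chapters 3 and 5 (in particular the technology of Remark \ref{remark:The case lambda0insigma}~\ref{item:25} and Remarks \ref{remark:microlocal-boundsGEN}, \ref{remark:excase}). First I would verify that $\lambda_0$ is neither an eigenvalue nor a resonance of $H_\sigma$ for all sufficiently small $\sigma > 0$: eigenvalues cannot reappear by spectral theory applied to $H_{|\ran\Pi_H}$, while absence of resonances uses the characterization of $\vE$ in Theorem \ref{thm:physical-modelsRell}~\ref{item:asynoCoul} together with a perturbation argument on the functionals $\check l_{a,\nu,k}$ — since by hypothesis $\vA_3 = \vA_3^{\mathrm{fd}}$ and $\lambda_0$ is exceptional only of second kind, one has $n_{\mathrm{res}} = 0$ to start with, and the perturbation $-\sigma\Pi_H$ is too regular to generate a new resonance (any resonance state of $H_\sigma$ would produce, via subtraction of its $\Pi_H$-component, an $L^2$-eigenfunction of $H$, then of $H_\sigma$, a contradiction).

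Next I would check that the two structural hypotheses \eqref{eq:dirCo1} and \eqref{eq:dirCo2} pass from $H$ to $H_\sigma$. The first is unchanged since it concerns only the subspaces $\vF_a$ coming from the sub-Hamiltonians $H^a$, which are untouched by the perturbation. The second amounts to $\lambda_0\notin \sigma_{\pp}(H'_\sigma)$, which follows for small $\sigma$ from the fact that $\lambda_0 \notin \sigma_{\pp}(H')$ combined with standard perturbation theory for isolated points (Mourre estimate of Proposition \ref{prop:mour2} applied as in \cite{AHS}, together with the good decay \eqref{eq:dec1} of $\ran \Pi_H$). With these ingredients one also obtains the full limiting absorption principle and the Besov-space estimates of Chapter 3 for $H_\sigma$, using \eqref{eq:dec1} as indicated in Remark \ref{remark:The case lambda0insigma}~\ref{item:25}.

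With the Regular Case hypotheses verified for $H_\sigma$, I would apply Theorem \ref{thm:resolv-asympt-physReg0} directly to $H_\sigma$, producing the asymptotics
\begin{align*}
R_\sigma(\lambda_0+z) = R'_\sigma(\lambda_0\pm\i 0) + (S - R'_\sigma(\lambda_0\pm\i 0)S_{I_\sigma})(P^\pm_\sigma)^{-1}(S^* - S_{I_\sigma}^*R'_\sigma(\lambda_0\pm\i 0)) + \vO(|z|^\epsilon)
\end{align*}
in $\vL(H^{-1}_s, H^1_{-s})$ as $z\to 0$ in $\vZ_\pm$. Now combine this with the elementary resolvent identity \eqref{eq:sigmeFormR}, namely $R(\lambda_0+z) = R_\sigma(\lambda_0+z) - z^{-1}\Pi_H + (z+\sigma)^{-1}\Pi_H$, and expand $(z+\sigma)^{-1}\Pi_H = \sigma^{-1}\Pi_H + \vO(|z|)$. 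The two boundary-value contributions assemble to yield \eqref{asymRz0P6}, with the $\vO(|z|)$ term absorbed into the $\vO(|z|^\epsilon)$ remainder (since $\epsilon \leq 1$).

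The main obstacle will be the verification in the first paragraph: transferring the absence of resonance from $H$ to $H_\sigma$, and more generally transferring the whole apparatus of Chapter 3 (Mourre estimate, LAP, microlocal resolvent bounds) to $H_\sigma$ in a form strong enough to feed Theorem \ref{thm:resolv-asympt-physReg0}. The non-locality of $-\sigma\Pi_H$ prevents a naive application of the framework, but the decay condition \eqref{eq:dec1} places $\Pi_H$ into the calculus of polynomially decreasing operators used throughout Chapter 3, so the arguments of Remark \ref{remark:The case lambda0insigma}~\ref{item:25} apply with only cosmetic changes; the only subtle point is to ensure that the limits $R'_\sigma(\lambda_0 \pm \i 0)$ and $(P_\sigma^\pm)^{-1}$ exist and enjoy the same anisotropic boundedness needed for the representation of $R_\sigma(\lambda_0+z)$ in the specific weighted spaces appearing in \eqref{asymRz0P6}.
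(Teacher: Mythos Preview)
Your proposal is correct and follows essentially the same route as the paper: pass to $H_\sigma = H - \sigma\Pi_H$, verify that $\lambda_0$ is regular for $H_\sigma$ (using the apparatus of Remark~\ref{remark:The case lambda0insigma}~\ref{item:25} and the perturbation argument \`a la \cite{AHS} for $\lambda_0\notin\sigma_{\pp}(H'_\sigma)$), apply Theorem~\ref{thm:resolv-asympt-physReg0} to $H_\sigma$, and combine with \eqref{eq:sigmeFormR}. One small correction to your resonance argument: the cleaner route (used after Theorem~\ref{thm:rhoThree}) is that any generalized eigenfunction $u$ of $H_\sigma$ at $\lambda_0$ automatically satisfies $\Pi_H u = 0$ (apply $\Pi_H$ to $(H_\sigma-\lambda_0)u=0$ and use $\Pi_H(H-\lambda_0)=0$), hence $(H-\lambda_0)u=0$; if $u\notin L^2$ this contradicts the second-kind hypothesis, and if $u\in L^2$ then $u\in\ran\Pi_H$ together with $\Pi_H u=0$ forces $u=0$.
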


 Note that the $z^{-\f 1 2}$-term is absent in (\ref{asymRz0P6}). This looks as if there is a discrepancy with the known results given in \cite{JK,Wa2} where there are $z^{-\f 1 2}$-terms in the resolvent expansions for exceptional point of the second kind. However there is not the case,  because under the decay assumption on the threshold eigenstates used in Theorem \ref{thm:resolv-asympt-phys2nd},  one can check by an explicit calculation that the $z^{-\f 1 2}$-terms of \cite{JK,Wa2} also disappear.

It remains to examine the cases where  $\lambda_0$ is an exceptional
point of the first or of the third
 kind, meaning (treating them uniformly) that the set $\vE$ in Theorem
 \ref{thm:physical-modelsRell} is not strictly a subset of $L^2$.
  For that
 we need the additional condition
  \begin{equation}
    \label{eq:dec20}
    \ran \Pi_H\subset L^2_t\text{  for some }t> 3/2,
  \end{equation} where $\Pi_H$ is the orthogonal projection onto $\ker
  (H-\lambda_0)$ (i.e. the eigenprojection if $\lambda_0$ is
  an eigenvalue  of $H$).  We know from
  Theorem \ref{thm:physical-modelsRell} that the  condition
  $\vE\not\subset L^2$ needs $\vA_3\neq \emptyset$.  If $\vA_1\neq
  \emptyset$ and $\vA^{\textrm{fd}}_3\neq \emptyset$ there is a technical problem on
  identifying the geometric and algebraic multiplicities for the
  eigenvalue $-1$ of  certain
  operators  $K^\pm$, cf. \eqref{eq:directSum}. The same method as the
  one used in the previous sections does not work (and in fact we dont
  know if the multiplicities  are equal). On the other hand under the additional condition
  \begin{equation}
    \label{eq:2plus3}
    \widetilde{\vA}=\vA_2\cup \vA^{\textrm{fd}}_3,
  \end{equation}
 the analogue of
  \eqref{eq:directSum} holds (for $H_\sigma$), by the same proof  (see
  also   Remark \ref{remark:The case
  lambda0insigma} \ref{item:25}). In
  this case we can mimic the proof of Theorem \ref{thm:rhoThree}
  and obtain the following result, which is very similar to Theorems
  \ref{thm:rhoThree} and \ref{thm:rhoThree2}.
  \begin{thm}[Exceptional
point of $1$st or $3$rd kind]
    \label{thm:resolv-asympt-phys1st3rd} Suppose the conditions of
  Theorem \ref{thm:physical-modelsRell} and in addition
   \eqref{eq:dirCo1}, \eqref{eq:dirCo2},
  \eqref{eq:dec20}   and \eqref{eq:2plus3}. Suppose $\lambda_0$ is a
resonance of $H$. Then the following asymptotics hold for $R(\lambda_0
+ z)$ as an operator from $H^{-1}_{s}$ to $H^{1}_{-s}$, $s>1$,  for
$z\to 0$ in $\vZ_{\pm}$  and for some $\epsilon=\epsilon(s)>0$.
\begin{equation} \label{asymRz1aa9last}
R(\lambda_0 + z)=-z^{-1} \Pi_H + \frac{\i}{\sqrt{z}} \sum_{j=1}^\kappa \w{u_j, \cdot} u_j + \vO(|z|^{-\f 1 2 +\ep}).
\end{equation} Here
$\set{u_1, \dots, u_\kappa}\subset H^1_{(-1/2)^-}$ is a basis of   resonance states  of
$H$ being independent
 of the choice of the  sign of  $\vZ_{\pm}$.
  \end{thm}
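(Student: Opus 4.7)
The plan is to reduce to the regular-resonance case via the standard shift trick $H_\sigma := H - \sigma \Pi_H$ and then to mimic the Case 1 analysis of Theorem \ref{thm5.10}/Theorem \ref{thm5.20} in the present multi-channel setting. More precisely, for small $\sigma > 0$ I would use the identity, cf.\ \eqref{eq:sigmeFormR},
\begin{equation*}
R(\lambda_0+z) = R_\sigma(\lambda_0+z) - z^{-1}\Pi_H + (z+\sigma)^{-1}\Pi_H,
\end{equation*}
where the last term is bounded as $z \to 0$ and hence absorbed into $\vO(|z|^{-1/2+\epsilon})$, so that the theorem reduces to establishing
\begin{equation*}
R_\sigma(\lambda_0+z) = \tfrac{\i}{\sqrt{z}}\sum_{j=1}^{\kappa} \langle u_j,\cdot\rangle u_j + \vO(|z|^{-1/2+\epsilon}).
\end{equation*}
The decay hypothesis \eqref{eq:dec20} is precisely what is needed (through Remark \ref{remark:The case lambda0insigma}\ref{item:25}) so that the microlocal/Mourre machinery of Chapter 3 applies uniformly to the perturbed operator $H_\sigma$, and so that the Grushin reduction of Section \ref{sec:CoulRellich} and its extension through Lemma \ref{lem:eigentransform2} go through with $H$ replaced by $H_\sigma$.

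Next I would verify that $\lambda_0 \notin \sigma_{\mathrm{pp}}(H_\sigma)$ for $\sigma > 0$ small; this is routine since any $H_\sigma$-eigenfunction $u$ at $\lambda_0$ would satisfy $-\sigma \Pi_H u = \Pi_H(H_\sigma - \lambda_0)u = 0$, hence $\Pi_H u = 0$, so $(H-\lambda_0)u = 0$ with $u \perp \ran \Pi_H$, forcing $u = 0$. A parallel book-keeping argument (project any resonance state of $H$ out of $\ran \Pi_H$) shows that the multiplicity $\kappa$ of $\lambda_0$ as a resonance of $H_\sigma$ equals that of $H$, and that a basis of resonance states can be chosen inside $\vE^{\vG}_{-1}$. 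Crucially, condition \eqref{eq:2plus3} excludes $\vA_1$, so the effective reduced operator $-E_{\vH}^{\pm}(\lambda_0+z)$ has the diagonal structure $P_0 + U^\pm(z) - z$ with $P_0 = \oplus_{a\in\widetilde \vA} h_a$, where for $a\in \vA_2$ one takes $h_a = p_a^2 + w_a$ (with parametrix $h_a^{-1}$ from Lemma \ref{lemma:posit-slowly-decay}, whose boundary values from above and below agree) and for $a \in \vA_3^{\mathrm{fd}}$ one takes $h_a = p_a^2$ (with $G_0$ as parametrix). Both parametrices extend continuously to $z = 0$ in the appropriate weighted/Besov spaces, so no $z^{1/2}$-term from an attractive Coulomb channel intrudes.

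Having set this up, the heart of the argument is a Grushin problem for $W(z) = 1 + r_0(z) U^+(z)$ on the finite-dimensional null-space $\vK$ of $1 + G_0 U^+$, exactly as in Section \ref{sec:resolv-asympt-nearHIGHER}, Case 1 (Proposition \ref{prop5.18}). The expansions of Lemma \ref{lemma:Wresolv-asympt-near-1}\ref{item:22} and the refined expansion of Lemma \ref{prop5.17} (whose proof deliberately avoids the global $\rho > 1$ requirement) then give $\vE_{-+}(z) = \sqrt{z}\,\i C_0^* C_0 + \vO(|z|^{1/2+\epsilon})$ with $C_0$ of full rank $\kappa$ under the assumption that $\lambda_0$ is purely a resonance of $H_\sigma$. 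Inverting this and feeding the result through the Grushin formula \eqref{rep3ipm} produces a basis of resonance states $u_1,\dots,u_\kappa$ of $H_\sigma$ (and hence of $H$, by the multiplicity identification above) normalized by the relations analogous to \eqref{ujk}--\eqref{eq:norma}; the resulting resonance states lie in $H^1_{(-1/2)^-}$ by Theorem \ref{thm:physical-modelsRell}\ref{item:PRel1}.

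The main technical obstacle I anticipate is justifying the uniform resolvent bounds and limiting absorption principle for the reduced resolvent $R'_\sigma(\lambda_0 \pm \i 0)$ in weighted spaces adequate to support the expansion of $W(z)$ and the compactness of $K^\pm = G_0 U^\pm$; this must be done under the weak hypothesis $\rho = 1$ (Coulombic decay), where \eqref{ass5.3b} only holds with $\rho_1 = 3/2$. This is exactly the regime where the Case 1 analysis in Section \ref{sec:resolv-asympt-nearHIGHER} was designed to work (see the remark following Lemma \ref{prop5.17}), so the task reduces to transferring those proofs verbatim to the multi-channel setup of \eqref{eq:2plus3}, using Remark \ref{remark:microlocal-boundsGEN} and Remark \ref{remark:The case lambda0insigma}\ref{item:25} to control the added rank-finite perturbation $-\sigma \Pi_H$ and the multi-$a$ geometry. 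Independence of the choice of sign $\vZ_\pm$ for the states $u_j$ then follows as in Theorem \ref{prop5.5}\ref{item:plusminuss}, since $v^+ f = v^- f$ on the resonance space.
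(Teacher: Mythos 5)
Your proposal follows essentially the same route as the paper: the paper proves this theorem by the shift $H_\sigma=H-\sigma\Pi_H$ together with the decomposition \eqref{eq:sigmeFormR}, and then mimics the Case 1 resolvent expansion (as in Theorems \ref{thm:rhoThree} and \ref{thm:rhoThree2}) for $H_\sigma$ via the multi-channel Grushin reduction of Section \ref{sec:resolv-asympt-physics models near}, using \eqref{eq:2plus3} to secure the analogue of \eqref{eq:directSum} and \eqref{eq:dec20} together with Remark \ref{remark:The case lambda0insigma} \ref{item:25} to handle the nonlocal rank-finite perturbation, exactly as you outline. The identification of the resonance spaces of $H_\sigma$ and $H$ via $\Pi_H u_j=0$ and the sign-independence of the $u_j$ also coincide with the paper's argument.
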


  \begin{remarks}\label{remarks:resolv-asympt-phys} The basis of
    resonance states can be specified in a fashion similar to
    normalization procedures in Theorem \ref{thm5.10} and Remark
    \ref{remark:resolv-asympt-nearNorm}, cf.  (\ref{nonresonant3a})
    and (\ref{nonresonant3b}) (and subsequent the computations).

 We
    consider the imposed conditions \eqref{eq:dirCo1} and
    \eqref{eq:dirCo2} as technically convenient but not being
    crucial. The interested reader may trace the outlined proof of
    Theorem \ref{thm:physical-modelsRell} for the general case where
    these conditions are not imposed and see how the above theory
    modifies.

We already discussed our need for \eqref{eq:2plus3},
    however we remark that \eqref{eq:fastDec} is not strictly
    necessary. In fact there are some results for the case where
    $\vA_3\setminus\vA^{\textrm{fd}}_3\neq \emptyset$ (this is an
    interesting case for the physics models). Then the spherical
    potential $Q_a$ should not to be `too negative', more precisely we
    need that $\sigma(-\Delta_\theta+Q_a)\subset (-1/4,\infty)$ for
    all $a$ in this set. With this extension there should be analogous
    results on the resolvent expansion at $\lambda_0$, not to be
    elaborated on here, see \cite{Wa5}.  However we shall later do a
    version of Theorem \ref{thm:resolv-asympt-physReg0} in a special
    case where indeed $\vA_3\setminus\vA^{\textrm{fd}}_3\neq
    \emptyset$, see Subsection \ref{subsec:Some better arguments}. If
    the above spectral condition on
    $a\in\vA_3\setminus\vA^{\textrm{fd}}_3$ is not fulfilled the
    resolvent asymptotics would be expected to be oscillatory, see
    \cite{SW} where oscillatory behaviour is detected for a one-body
    `toy model'. We shall not study this case; it does not seem to be
    an `easy problem'.
  \end{remarks}


 \chapter{Applications}\label{Applications}

  We will give  applications of the
 previous chapters to scattering theory. We shall primarily study  the non-multiple  case imposing Conditions
 \ref{cond:smooth} and \ref{cond:uniq}. With
 additional efforts the
 multiple  is treatable  somewhat similarly, see Subsectons \ref{subsec:Non-elastic
   scattering} and \ref{subsec:total-cross-sectionsMult} for  actual
 accounts of the
 multiple  case. The non-multiple  case appears rather complicated already,
 and we believe that treating only this case  may be considered as  `heart of
 the matter'. In Sections \ref{sec:rtominus2 potentials},
 \ref{sec:Transmission problem at threshold} and \ref{total
   cross-sections} we consider only the physics models of Subsections
 \ref{First principal example} and \ref{Second  principal example}
 with the particle
dimension  $n=3$
 (in Section \ref{total
   cross-sections} even more specialized).

Although most of the  material  presented in this chapter  is new it depends on the
literature, obviously most importantly for example \cite{DS1, JKW}. Since scattering
theory is an old well-studied subject the literature is large, let us
here  mention the   related works \cite{Bo, Do, CT, De1, De2, De, DG,
Is2, Is3, Is4, ITa, Ne, Sk2, Sk4, Sk5, Sk6, Sk7, SW, Wa6, Ya1, Ya3, Ya4,
Ya5}. Obviously this  list is not complete.

\section{Negative slowly decaying effective potentials}\label{sec:Long-range
  negative effective potentials}
We impose the attractiveness  condition \eqref{eq:virial0} for the cluster
decomposition  $a=a_0$ given in
Condition \ref{cond:uniq}.  More precisely suppose
 $\rho<2$ and  that  the inter-cluster potential
$I_a(x_a)=I_{a_0}(x_a)=I_0(x_a)$  fulfills the condition
\begin{align}\label{eq:virial0dd}
  \begin{split}
    \exists R\geq 0&\quad \exists \epsilon>0\quad\forall y\in \bX_a\text{ with }|y|\geq R: \\
    &I_a(y)\leq-\epsilon \inp{y}^{-\rho}\text{ and }-2I_a(y)-y\cdot
    \nabla I_a(y)\geq \epsilon \w{y}^{-\rho}.
  \end{split}
\end{align}

We assume
\begin{equation}\label{eq:15usual}
\lambda_0\notin \sigma_{\pp}( H'),
\end{equation}  or equivalently $\lambda_0\notin
\sigma_{\pp}(\breve H)$. If \eqref{eq:15usual} is not fulfilled  we may modify the theory to be
discussed
in agreement with Subsection \ref{sec:The case where
  lambda0insigma}, see Remark \ref{remark:restric0}.

We know from Chapter \ref{Spectral analysis of H' near E_0} that the
boundary values $\breve R(\lambda\pm \i 0)$ are smooth (in weighted
spaces) in a real neighbourhood $I\ni \lambda_0$. We take $I$,
$R=R_0\geq 1$ and the operator $B= B_{ R}=: \widecheck B$ exactly as done
in Subsection \ref{subsec:LAP bound}  (this $B$ should not to be mixed
up with the operator
$B_\kappa=B_{\kappa, R}=B(\kappa^2 B^2+1)^{-1}$ in the same chapter).

In addition we assume
 \begin{equation}
   \label{eq:nonEg0}
   \lambda_0\notin
\sigma_{\pp}( H).
\end{equation} If \eqref{eq:nonEg0} is not fulfilled and $\Pi_H$ is the
corresponding eigenprojection then $\lambda_0\notin \sigma_{\pp}
(H-\sigma\Pi_H)$ for  $\sigma>0$,  and we can consider the $H'$
construction for $H_\sigma=H-\sigma\Pi_H$, say denoted by $H'_\sigma$,
  which fits well onto the framework of Chapter \ref{Spectral
    analysis of H' near E_0} since the eigenfunctions decay
polynomially, cf. Theorems \ref{thm:negat-effect-potent},
\ref{thm:negat-effect-potents} and \ref{thm:physical-modelsRell}
\ref{item:PRel1}. In fact if  $\sigma>0$ is taken  small enough the condition
$\lambda_0\notin \sigma_{\pp}( H')$ implies the same property with $H'$
replaced by $H'_\sigma$, see the discussion before Theorem
\ref{thm:rhoThree2}. This would lead to a resolvent formula similar to
\eqref{asymRz0P6}. For simplicity we impose \eqref{eq:15usual} as well as \eqref{eq:nonEg0}, which
leads to the following analogue \eqref{eq:ResB2} of the physics models
resolvent
formula \eqref{asymRz0P}.

With these conditions  we know from Chapters \ref{Spectral analysis of
  H' near E_0} and \ref{chap:lowest
  thr} (see also Remark \ref{remark:resolv-asympt-nearSimp}
\ref{item:decayEig2}) that there exist continuous
boundary values $R(\lambda\pm \i 0)$ (in appropriate spaces) in an   interval of the form
$I_\delta^+=[\lambda_0,\lambda_0+\delta]\subset I$ with  $\delta>0$ small. In
fact we have the formulas
\begin{align}
  \label{eq:ResB2}
  R(\lambda\pm \i 0)= E^\pm(\lambda)- E^\pm_{+}(\lambda)
  E^\pm_{\vH}(\lambda)^{-1} E^\pm_{-}(\lambda),
\end{align} where,  abbreviating
$I_0=I_{a_0}$ and  $p_0=p_{a_0}$,
\begin{align*}
E^\pm(\lambda) &= R'(\lambda\pm \i 0),   \\
E^\pm_+(\lambda)&=(1-R'(\lambda\pm \i0)I_{0})S,   \\
E^\pm_-(\lambda) &= S^*(1- I_{0}R'(\lambda\pm \i 0)),\\
E^\pm_{\vH}(\lambda)&= (\lambda-\lambda_0 )-(p_0^2+ S^*{ I_{0}}S -
S^* I_{0}R'(\lambda\pm \i 0)I_{0} S).
\end{align*}  We also have
\begin{align*}
  R'(\lambda\pm \i 0)= \breve R(\lambda\pm \i 0
)-(p_{0}^2-\lambda)^{-1}\Pi=\breve R(\lambda\pm \i 0
)\Pi'.
\end{align*} The matrix-valued operator $W=S^*{ I_{0}}S$ is
well approximated by a diagonal one fulfilling a global virial
condition,
in fact approximated by  a multiple of the
identity  say denoted by $w {\mathbf 1}$, cf. the discussion
in the beginning of Subsection \ref{subsec:negat-effect-potent}. As in
the same subsection  we modify  the potential $V(z)$
correspondingly and denote the result by $v(z)$. Let
$h=(p_0^2 +w){\mathbf 1}$,
$r(z)=(h-z)^{-1}$ and
\begin{align*}
  r^\pm_\lambda=r(\lambda-\lambda_0\pm\i 0) \mand  v^\pm_\lambda =v(\lambda\pm \i
  0),
\end{align*} and note that
\begin{align}\label{eq:eff_v}
  \begin{split}
  -E^\pm_{\vH}(\lambda)&=h + v^\pm_\lambda-
    (\lambda-\lambda_0 ),\\
-E^\pm_{\vH}(\lambda)^{-1}&=r^\pm_\lambda(\lambda)\parb{{\mathbf 1}+v^\pm_\lambda
                            r^\pm_\lambda(\lambda)}^{-1};\\
 v^\pm_\lambda&=-S^* I_{0}R'(\lambda\pm \i 0)I_{0} S +\vO(r^{-1-\rho}).
  \end{split}
\end{align} If $m_a:=\dim \ker (H^{a}-\lambda_0)=1$,  $a=a_0$, we can here
replace $\vO(r^{-1-\rho})$ by  $\vO(r^{-\infty})$ which would refer  to a polynomially
decreasing term. In any case the  term is a   $\lambda$-independent
local potential (i.e. a function).
 The above  expressions
are  substituted into \eqref{eq:ResB2} to obtain  formulas for
$R(\lambda\pm \i 0)$ to be studied.

Note that
\begin{align}\label{eq:fredInverse}
  \parb{{\mathbf 1}+v^\pm_\lambda r^\pm_\lambda}^{-1}\in \vL(L^2_s)\cap \vL(\vB_{s_0});\quad s\in (s_0,1/2+\rho 3/4),
\,
 s_0=1/2+\rho/4.
\end{align}  This is the best we can do when
$m_a>1$. If $\lambda_0>\Sigma_2$ and
$m_a=1$ then   any $s\in (s_0,1/2+\rho)$
works. On
the other hand if  $\lambda_0=\Sigma_2$ and
$m_a=1$  then this inverse exists in
$\vL(L^2_s)$ for any $s\in (s_0,2\rho+2-s_0)$,
cf. \eqref{eq:comneg}. In particular we have  the following formula  for any $\psi\in \vB_{s_0}$
 (we abbreviate
throughout this section $\vB:=\vB_{1/2}$, $\vB^*=\vB^*_{1/2}$ and
${\vB^*_0}:=\vB^*_{1/2,0}$),
\begin{align}\label{eq:resolBAS}
  \begin{split}
   R(\lambda\pm \i 0 )\psi&=S\phi_{a}^\pm(\lambda) +\breve
   R(\lambda\pm \i 0)\psi_a^\pm(\lambda);\\
  \phi_{a}^\pm(\lambda)&=r^\pm_\lambda f_a^\pm (\lambda)\in \vB^*_{s_0}(\bX_a),\\
  f_a^\pm (\lambda)&=\parb{{\mathbf 1}+v^\pm_\lambda r^\pm_\lambda}^{-1}S^*\parb{1- I_{0}\breve R(\lambda\pm \i 0
    )\Pi'}\psi\in \vB_{s_0}(\bX_a),\\
  \psi_a^\pm(\lambda)&=\Pi'\parb{\psi-I_{0}S\phi_{a}^\pm(\lambda)}\in
  \vB.
  \end{split}
\end{align}
It follows that for any $s>s_0$ the $\vL (L^2_s, L^2_{-s})$--valued
functions $R(\cdot\pm \i 0 )$ are  continuous on $I_\delta^+$. For each
$\lambda\in I_\delta^+$ the operators $R(\lambda\pm \i 0 )\in \vL
(\vB_{s_0}, \vB^*_{s_0})$. It will be convenient to isolate the `main
parts` of \eqref{eq:resolBAS} writing
\begin{align}
  \label{eq:mainSplit}
  R(\lambda\pm \i 0 ) =Sr^\pm_\lambda S^*+\breve R(\lambda\pm
  \i0)\Pi'+\widecheck R(\lambda\pm \i 0 ).
\end{align} Here $\widecheck R(\lambda\pm \i 0 )$  are represented as  sums  of  various terms. Note for
example that for any $\psi\in \vB_{s_0}$ (as above) $ f_a^\pm (\lambda)-S^*\psi\in L^2_s$ for
any  $s\in (s_0,1/2+\rho 3/4)$.

\begin{remark}\label{remark:formRes}
  Note that in particular \eqref{eq:ResB2} as well as \eqref{eq:resolBAS} and
  \eqref{eq:mainSplit} are valid for $\lambda=\lambda_0$. This is
  a consequence of the imposed regularity condition
  \eqref{eq:nonEg0}. For other models, to be treated in Section
  \ref{sec:Transmission problem at threshold}, there are similar
  formulas as \eqref{eq:resolBAS} and \eqref{eq:mainSplit}. Again this
  requires regularity, i.e. absence of  bound  and resonance
  states at the threshold. Note also that the trick of replacing $H$ by
  $H_\sigma=H-\sigma\Pi_H$  in the case \eqref{eq:nonEg0} is not fulfilled is
  not restricted to the case of attractive slowly decaying  effective
  potentials, i.e. the condition \eqref{eq:virial0dd}, but can be used
  as well
  for repulsive slowly decaying  effective potentials fulfilling
  the following version of \eqref{eq:posLowbnd0},
\begin{equation*}
     \exists R\geq0\,\exists \epsilon> 0\,\exists \bar \rho\in [\rho, \tfrac 23(1+\rho))\, \forall y\in \bX_a\text{ with }|y|\geq R: \quad
     I_a(y)\geq\epsilon \inp{y}^{-\bar\rho}.
 \end{equation*}
  This is manifestly done already in
  the proof of Theorem \ref{thm:resolv-asympt-phys2nd} and as before
  doable thanks to the polynomial decay of the eigenfunctions. For possible threshold eigenfunctions
  for  non-slowly decaying effective
potentials  (cf. the asymptotics
  condition  \eqref{eq:posLowbnd0b}) the
  polynomial decay is missing making \eqref{eq:nonEg0} a
  non-trivial assumption in such  cases, see however Theorem \ref{thm:resolv-asympt-phys2nd}.
\end{remark}

\subsection{Sommerfeld's theorem}\label{subsec:Sommerfeld theorem}
 We impose the above conditions
 \eqref{eq:virial0dd}--\eqref{eq:nonEg0} on the threshold $\lambda_0$.
First we recall the following version of  the Sommerfeld's  theorem  above
$\lambda_0$, see \cite[Corollary 1.10]{AIIS} which extends the seminal
work \cite{Is4}.
\begin{thm}\label{prop:Sommerfeld1}  For any
   $\lambda\in I_\delta^+\setminus \set{\lambda_0}$  there exist
    $R= R_0\geq1$ and  $\sigma>0$ such that   for any real function
  $\chi(\cdot<\sigma)\in C^\infty(\R)$,
    which is   supported in $(-\infty,\sigma)$ and whose derivative
  has
  compact support, and for any $\psi\in \vB$
  \begin{align}\label{eq:45lem51}
     \chi(\pm
     \widecheck B<\sigma)R(\lambda\pm \i 0)\psi \in  \vB^*_0;\,\widecheck B= B_R.
  \end{align} Moreover (for each sign) $\phi=R(\lambda \pm \i 0
  )\psi\,(\in\vB^*)$ solves  $(H-\lambda)\phi=\psi$.

Conversely if  $\phi\in L_{-\infty}^2$  solves
$(H-\lambda)\phi=\psi$ for a given   $\psi\in \vB$ and (for either
`plus' or `minus') $\chi(\pm
\widecheck B <\sigma)\phi\in \vB^*_0$ for some $\sigma >0$  and for all
  functions   $\chi(\cdot<\sigma)$ of this type, then
  $\phi=R(\lambda\pm \i 0 )\psi$ (with the same sign).
  \end{thm}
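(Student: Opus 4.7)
The plan is that Theorem~\ref{prop:Sommerfeld1} amounts to the standard Sommerfeld/Ikebe--Sait\=o radiation theorem for $H$ at the non-threshold energy $\lambda\in I_\delta^+\setminus\{\lambda_0\}$, so I would reduce both directions to \cite[Corollary~1.10]{AIIS} after checking that our hypotheses supply its input. First I would verify that input: by shrinking $\delta$ and using \eqref{eq:15usual}, \eqref{eq:nonEg0}, the energy $\lambda$ is neither an eigenvalue of any sub-Hamiltonian (hence $\lambda\notin\vT$) nor an eigenvalue of $H$, so the full Mourre estimate of Lemma~\ref{lemma:Mourre1} applies at $\lambda$ with strict positive constant $d(\lambda)>0$; Condition~\ref{cond:smooth} together with Condition~\ref{cond:geom_singl} (for $a_0$) supplies the remaining regularity.

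For the forward direction, the cited theorem gives exactly $\chi(\pm\widecheck B<\sigma)R(\lambda\pm\i0)\psi\in\vB_0^*$ for $\psi\in\vB$, while $(H-\lambda)R(\lambda\pm\i0)\psi=\psi$ is immediate from the construction of the limiting resolvent via the standard LAP at $\lambda$. For the converse, set $u=\phi-R(\lambda\pm\i0)\psi$; then $u\in L^2_{-\infty}$ solves $(H-\lambda)u=0$, and subtracting the forward direction from the hypothesis yields the one-sided radiation condition $\chi(\pm\widecheck B<\sigma)u\in\vB_0^*$. A non-threshold Rellich-type uniqueness theorem as in \cite[Theorem~1.4]{AIIS} then forces $u\equiv 0$, hence $\phi=R(\lambda\pm\i0)\psi$.

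The main obstacle lies in the last step: one must show that a generalized eigenfunction of $H$ satisfying only the \emph{one-sided} microlocal decay $\chi(\pm\widecheck B<\sigma)u\in\vB_0^*$ (rather than the stronger two-sided $u\in\vB_0^*$ of the classical Rellich theorem) must vanish. In \cite{AIIS} this is obtained by a two-step propagation argument: the hypothesis controls the `incoming' microlocal part directly, while a positive commutator estimate entirely analogous to Lemma~\ref{lemma:microLoc}, but applied to $H$ at $\lambda$ rather than to $\breve H$ at $\lambda_0$, propagates this control to the `outgoing' microlocal part. The argument there is technically simpler than in Chapter~3 because the strict Mourre estimate of Lemma~\ref{lemma:Mourre1} (with $d(\lambda)>0$) replaces the weaker Proposition~\ref{prop:mour2}, and in particular the $\kappa$-distortion machinery of Section~\ref{subsec:Higher commutators} and the removal of the threshold $\lambda_0$ from the spectrum of $H^{a_0}$ become unnecessary. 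Once $u\in\vB_0^*$ is established, the classical Rellich theorem at non-threshold energies yields $u\in L^2_\infty$, and the assumption $\lambda\notin\sigma_\mathrm{pp}(H)$ closes the proof. A secondary technical point, easily dealt with, is ensuring that our parameters $R=R_0$ and $\sigma$ (which enter both the existence and uniqueness parts) can be chosen uniformly in $\lambda$ on compact subsets of $I_\delta^+\setminus\{\lambda_0\}$, which follows from the continuous dependence of $d(\lambda)$ on $\lambda$ and the continuity of the boundary values $R(\lambda\pm\i 0)$ established in the paragraph preceding \eqref{eq:mainSplit}.
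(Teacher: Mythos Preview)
The paper does not prove this theorem at all: it is simply recalled from \cite[Corollary~1.10]{AIIS}, as stated in the sentence immediately preceding the statement. Your proposal correctly identifies the result as that corollary and gives a reasonable outline of how the argument in \cite{AIIS} runs (Mourre estimate at the non-threshold energy, LAP, one-sided radiation condition upgraded to two-sided by a positive-commutator/propagation argument, then the Rellich theorem and absence of eigenvalue).

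Two minor corrections. First, the conditions \eqref{eq:15usual} and \eqref{eq:nonEg0} concern $\lambda_0$, not nearby $\lambda$; the fact that $\lambda\in I_\delta^+\setminus\{\lambda_0\}$ is neither a threshold nor an eigenvalue for small $\delta$ follows instead from the general $N$-body spectral structure (thresholds form a closed countable set, eigenvalues accumulate only at thresholds and only from below). Second, your final paragraph on uniform choice of $R,\sigma$ on compact subsets is superfluous: the theorem as stated allows these parameters to depend on $\lambda$, and indeed the paper remarks just after the statement that most likely $R(\lambda)\to\infty$ and $\sigma(\lambda)\to 0$ as $\lambda\downarrow\lambda_0$.
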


  Here (most likely) $R=R(\lambda)\to \infty$ and $ \sigma=\sigma(\lambda) \to 0$ for $\lambda\downarrow
  \lambda_0$. We state  a new version of the Sommerfeld theorem,
  now at $\lambda=\lambda_0$, but otherwise under the same
  conditions. (A one-body version of the theorem at zero energy is given in
  \cite{Sk4}.) Recall for comparison the quantity $ \vE^\vG _{-s_0,0}$ of
  Lemma \ref{lem:eigentransform},
  \begin{align*}
    \vE^\vG _{-s_0,0} =\{u\in \vB^*_{s_0,0}\mid
      (H-\lambda_0)u=0,\quad\Pi' u\in \vB^*_{0}\}.
  \end{align*} Let
  \begin{align*}
    \widecheck B_\rho=r^{\rho/4}\,\widecheck
  B\,r^{\rho/4}; \quad  r=r_{R},\, \widecheck B= B_R,\, R=R_0,
  \end{align*}
 recalling  that
  $R_0\geq 1$ is chosen in agreement with our version of the  Mourre
  estimate  at  $\lambda_0$   (as done
in Subsection \ref{subsec:LAP bound}).

\begin{thm}[Sommerfeld's  theorem at threshold]\label{prop:Sommerfeld}  There exists
   $\sigma>0$ such that   for
    any real function
  $\chi(\cdot<\sigma)\in C^\infty(\R)$,
    which is   supported in $(-\infty,\sigma)$ and whose derivative
  has
  compact support,   and for any $\psi\in \vB_{s_0}$, the function
  $\phi=\phi^\pm=R(\lambda_0\pm \i 0)\psi\in  \vB^*_{s_0}$ obeys
  \begin{align}\label{eq:45lem5}
    \begin{split}
\Pi'\phi &\in  \vB^*,\\
\chi(\pm
\widecheck B<\sigma)\Pi'\phi &\in  \vB^*_{0},\\
\chi(\pm
     \widecheck B_\rho<\sigma)\Pi\phi &\in  \vB^*_{s_0,0},\\
(H-\lambda_0)\phi &=\psi.
    \end{split}
\end{align}

Conversely suppose   that (for either
`plus' or `minus')
    $\phi\in  \vB^*_{s_0}$ fulfills \eqref{eq:45lem5}
 for some $\sigma >0$,  for all
  functions   $\chi(\cdot<\sigma)$ of this type and for  a given
  $\psi\in \vB_{s_0}$, then  $\phi=R(\lambda_0\pm \i 0 )\psi$  (with the same sign).
  \end{thm}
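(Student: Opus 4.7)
The plan is to split the theorem into an existence/regularity part and a uniqueness/converse part, and in both directions reduce matters to the one-body effective picture via the decomposition \eqref{eq:mainSplit} and the eigentransform of Lemma \ref{lem:eigentransform}.

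For the direct part, I would start from the explicit representation $\phi = R(\lambda_0\pm\i 0)\psi = S r^\pm_{\lambda_0} f^\pm_a(\lambda_0) + \breve R(\lambda_0\pm\i 0)\psi^\pm_a(\lambda_0)$ read off from \eqref{eq:resolBAS}--\eqref{eq:mainSplit}, with $f^\pm_a(\lambda_0)\in \vB_{s_0}(\bX_a)$ and $\psi^\pm_a(\lambda_0)\in\vB$. The bound $\Pi'\phi\in\vB^*$ is then immediate since, modulo $\Pi$-valued lower order contributions, $\Pi'\phi = \Pi'\breve R(\lambda_0\pm\i 0)\psi^\pm_a(\lambda_0)$ and $\breve R(\lambda_0\pm\i 0)\in\vL(\vB,\vB^*)$ by \eqref{eq:bRES}. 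The microlocal refinement $\chi(\pm\widecheck B<\sigma)\Pi'\phi\in\vB^*_0$ would then follow by adapting the proof of Lemma \ref{lemma:microLoc} so as to yield the same sided outgoing/incoming bound in Besov topology at the limit $\epsilon\downarrow 0$ (this is the Besov-upgrade of Proposition \ref{prop:microLoc2} already implicit in the proof of Theorem \ref{thm:powers}). The identity $(H-\lambda_0)\phi=\psi$ follows by taking the norm limit $\epsilon\downarrow 0$ of $(H-\lambda_0)R(\lambda_0\pm\i\epsilon)\psi=\psi+\i\epsilon R(\lambda_0\pm\i\epsilon)\psi$ in a suitable weighted space.

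The `channel' microlocal bound $\chi(\pm\widecheck B_\rho<\sigma)\Pi\phi\in\vB^*_{s_0,0}$ is the genuinely new ingredient and will be the technical core of the direct part. The point is that $\Pi\phi$ is, to leading order, $S\,r^\pm_{\lambda_0}f^\pm_a(\lambda_0)$, so the bound reduces to a threshold one-body Besov microlocal assertion for $r^\pm_{\lambda_0}=(h-\lambda_0\pm\i 0)^{-1}$, where $h=p_0^2+w$ satisfies \eqref{eq:virial}. The weight $r^{\rho/4}$ entering $\widecheck B_\rho=r^{\rho/4}\widecheck B\,r^{\rho/4}$ is the natural rescaling of the conjugate operator when the `classical momentum' at threshold is of size $r^{-\rho/2}$, and the required sided Besov bound at the threshold is an upgrade of the norm limiting absorption bound \eqref{eq:BesR} and the sided microlocal sharpening \eqref{eq:micr1} of [FS,Sk4]; I would establish it by a Mourre-type positive commutator argument for $h$ with conjugate $r^{\rho/4}\widecheck B r^{\rho/4}$ in the spirit of Subsection \ref{subsec:Microlocal bounds and LAP}.

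For the converse, set $u=\phi-R(\lambda_0\pm\i 0)\psi\in\vB^*_{s_0}$. Subtracting the direct part proved above, $u$ solves the homogeneous equation $(H-\lambda_0)u=0$ and inherits the three Besov/microlocal properties of \eqref{eq:45lem5}. In particular $\Pi'u\in\vB^*$ and $\chi(\pm\widecheck B<\sigma)\Pi' u\in\vB^*_0$ are precisely the assumptions of Lemma \ref{lem:eigentransform} \ref{item:1baBC} (matching $s'_0=s_0$ under \eqref{eq:virial0dd}), which gives $u\in\vE^\vG_{-s_0,0}$. The remaining condition $\chi(\pm\widecheck B_\rho<\sigma)\Pi u\in\vB^*_{s_0,0}$ is then used, via the eigentransform $f=T^*u\in\vE^\vH_{-s_0,0}$ and the one-body Sommerfeld theorem at zero energy for $h$ (the converse counterpart of the direct bound discussed above), to force $f\in \vH$, hence $u\in L^2$ by the inversion formula of Lemma \ref{lem:eigentransform}. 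By Theorem \ref{thm:negat-effect-potents} every such $u$ already lies in $H^1_\infty$, and \eqref{eq:nonEg0} then yields $u=0$.

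The main obstacle will be the threshold microlocal analysis for the effective one-body operator $h$: both the direct bound on $r^\pm_{\lambda_0}$ in the weighted conjugate $r^{\rho/4}\widecheck B r^{\rho/4}$ and its converse, i.e.\ the assertion that a generalized eigenfunction of $h$ at zero energy with both incoming and outgoing microlocal Besov vanishing must be $L^2$. Everything else is an adaptation of the $R'$--microlocal calculus of Chapter 3 and of Lemma \ref{lem:eigentransform} to the present boundary-value setting.
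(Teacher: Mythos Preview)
Your direct part is essentially the paper's argument: the decomposition \eqref{eq:resolBAS}--\eqref{eq:mainSplit}, Proposition~\ref{prop:microLoc2} for the $\Pi'$-piece, and a one-body threshold microlocal bound for $r^\pm_{\lambda_0}$ (the paper's \eqref{eq:microFS}, from \cite{FS}) for the $\Pi$-piece. That is fine.

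The converse, however, has a genuine gap. You propose to invoke Lemma~\ref{lem:eigentransform}~\ref{item:1baBC} to place $u=\phi-R(\lambda_0\pm\i0)\psi$ in $\vE^\vG_{-s_0,0}$, saying that ``$\Pi'u\in\vB^*$ and $\chi(\pm\widecheck B<\sigma)\Pi' u\in\vB^*_0$ are precisely the assumptions'' of that lemma. They are not: look at the definition of $\vE^{\vG,\pm}_{-s'_0,\pm\sigma}$ in \ref{item:1baBC}. The first requirement there is $u\in\vB^*_{s'_0,0}$, which is exactly what you are trying to prove. All you have a priori is $u\in\vB^*_{s_0}$. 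So Lemma~\ref{lem:eigentransform}~\ref{item:1baBC} is circular here, and your subsequent use of the $\widecheck B_\rho$ condition (to ``force $f\in\vH$'' via one-body Sommerfeld) is misplaced: once $u\in\vE^\vG_{-s_0,0}$, Theorem~\ref{thm:negat-effect-potents} already gives $u\in H^1_\infty$ without any further input, so your step~3 would be redundant if step~2 worked.

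The point is that the $\widecheck B_\rho$ condition on $\Pi\phi$ is not an extra refinement to be used after reaching $\vE^\vG_{-s_0,0}$; it is the essential ingredient needed to \emph{get there}, i.e.\ to prove $\phi\in\vB^*_{s_0,0}$ in the first place. The paper does this by a direct virial-type computation: with $\theta_R=\sqrt{-\chi'_R}$ one expands $0=-\tfrac12\langle\i[H,\chi_R]\rangle_\phi$ and bounds $-\langle\chi'_R\rangle_{\Pi'\phi}-\langle r^{-\rho/2}\chi'_R\rangle_{\Pi\phi}$ from above by $\tfrac{2}{\sigma}(\langle\widecheck B\rangle_{\theta_R\Pi'\phi}+\langle\widecheck B_\rho\rangle_{r^{-\rho/4}\theta_R\Pi\phi})$ modulo $o(R^0)$. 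The two microlocal hypotheses are used \emph{together} to show that the contributions from $\chi(\widecheck B<\sigma)$ and $\chi(\widecheck B_\rho<\sigma)$ are $o(R^0)$, after which the remaining ``outgoing'' pieces recombine into $-\sigma^{-1}\langle\i[H,\chi_R]\rangle_\phi=0$. This yields \eqref{eq:2ident} directly. An auxiliary technical step is to check $\widecheck B_\rho\Pi\phi\in\vB^*_{s_0}$, which the paper obtains from the equation $E^+_\vH(T^*\phi)=0$ and the symbol identity \eqref{eq:FScomp}. You will need this commutator argument (or an equivalent one); there is no shortcut through Lemma~\ref{lem:eigentransform}~\ref{item:1baBC}.
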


  \begin{proof}  We shall only
    consider the  case of `$+$'.  By Proposition
    \ref{prop:microLoc2}, for some $\sigma>0$
\begin{align}\label{eq:45lem5a}
     \chi( \widecheck  B<\sigma)\breve R(\lambda_0+ \i 0) \Pi'\psi\in  \vB^*_0.
   \end{align} (This is for $\lambda_0> \Sigma_2$; if $\lambda_0=
   \Sigma_2$ the statement is trivial.) Note that $\Pi'\psi\in \vB$ and
   that \eqref{eq:45lem5a} holds with $\widecheck B $ replaced by
   $B_{\kappa}=\widecheck B(\kappa^2 \widecheck B^2+1)^{-1}$ ($\kappa>0$
   small). Then \eqref{eq:45lem5a}  follows by using another function
   of the same type, say $\tilde \chi(\cdot<\sigma)$, { such that }
    \begin{align*}
    \chi(b<\sigma)=\chi(b<\sigma)\tilde \chi(b/(\kappa^2b^2+1)< \sigma).
  \end{align*}

By using \cite[Theorems 4.1 and
 4.2]{FS}  one  can show  (here omitting the argument) that for all small $\sigma>0$
 \begin{align}
   \label{eq:microFS}
   \begin{split}
    \inp{x_{a}}^{t-s_0}\chi\parb{
     \widecheck B_\rho&<\sigma}S r^+_{\lambda_0} \inp{x_{a}}^{-t-s_0-\epsilon}f\in  L^2;\\
\quad &f\in
   L^2,
\, t,\epsilon>0.
   \end{split}
 \end{align}

We use  \eqref{eq:resolBAS}  writing
\begin{align}\label{eq:deuseful}
  \phi^+=R(\lambda_0+ \i 0 )\psi=S\phi_{a}^+(\lambda_0) +\breve
  R(\lambda_0+ \i 0 )\psi_a^+(\lambda_0).
\end{align}
 Now
   \eqref{eq:45lem5} follows by using
  \eqref{eq:45lem5a} and   \eqref{eq:microFS}   to treat the second  and
  the first terms of   \eqref{eq:deuseful}, respectively.

  To show the second assertion (the uniqueness part) note that $\phi^+=R(\lambda_0 + \i 0)\psi$ is a particular solution of  this problem. Whence
    we may assume that $\psi=0$.
     Due to Theorems \ref{thm:negat-effect-potent} and
\ref{thm:negat-effect-potents}  it suffices to
show that $\phi\in \vE^\vG _{-s_0,0} $ (here we use the same notation
if $\lambda_0=
   \Sigma_2$), and whence that
\begin{align}
  \label{eq:2ident}
  \phi\in  \vB^*_{s_0,0}, \quad\Pi' \phi\in \vB^*_{0}.
\end{align}

Introduce a smooth quadratic partition of unity $1=
\chi(\cdot<\sigma)^2+\chi(\cdot>\sigma)^2$ such that
$\chi(b<\sigma)=1$ for $b<\sigma/2$. Let for $R\geq 1$ the function $\chi_R=\chi_R(r)$ be
given by \eqref{eq:14.1.7.23.24}.
Abbreviating
$\theta_R=\sqrt{-\chi_R'}$  we can then estimate
  \begin{align*}
    -&\inp{\chi'_R(r)}_{\Pi'\phi}-\inp{r^{-\rho/2}\chi'_R(r)}_{\Pi\phi}\\
&\leq \tfrac 2{\sigma} \parb{s_R+t_R}+v_R;\\
s_R&=\inp{\widecheck B}_{\chi(\widecheck B>\sigma)\theta_R\Pi'\phi},\\
    t_R&=\inp{\widecheck  B_\rho}_{\chi(\widecheck B_\rho>\sigma)r^{-\rho/4}\theta_R\Pi\phi},
  \end{align*} where $v_R\to  0$ for $R\to \infty $. Next we write
  \begin{align*}
    t_R&=\tilde t_R+\tilde v_R;\\
    \tilde t_R&=\inp{\widecheck  B}_{\chi(\widecheck B_\rho>\sigma)\theta_R\Pi\phi},
\end{align*} and note that  $\tilde v_R=\vO(R^{\rho/2-1})\to 0$
  for $R\to \infty$. After further  commutation we (should) obtain  that
  \begin{align*}
    -&\inp{\chi'_R(r)}_{\Pi'\phi}-\inp{r^{-\rho/2}\chi'_R(r)}_{\Pi\phi}\\
&\leq \tfrac 2{\sigma} \parb{\inp{\widecheck
    B}_{\theta_R\Pi'\phi}+\inp{\widecheck  B}_{\theta_R\Pi\phi}}+o(R^0)\\
&=-{\sigma}^{-1} \inp{\i [H,\chi_R ]}_{\phi}+o(R^0),\\
&=0+o(R^0),
\end{align*} yielding
  \eqref{eq:2ident} and therefore the uniqueness part. However we need to
  argue for the validity of the above estimates. In the second step we
  used that
  \begin{align*}
    \Re{\Pi\theta_R\widecheck B \theta_R \Pi'}=\vO(R^{-2}),
  \end{align*} cf. \eqref{eq:cross}.
In the first step we
  used that
  \begin{align*}
    s_R-\inp{\widecheck B}_{\theta_R\Pi'\phi}&=-\inp{\widecheck B}_{\chi(\check
    B<\sigma)\theta_R\Pi'\phi}=o(R^0),\\
\tilde t_R-\inp{\widecheck  B}_{\theta_R\Pi\phi}&=o(R^0).
  \end{align*} The first bound is easy since $\widecheck B \Pi'\phi\in \vB^*$ (the
  latter seen by an energy bound).

To get the second  bound  it
  suffices to show that  also $\widecheck B_\rho \Pi\phi\in \vB_{s_0}^*$.
 Let us first
  note that due to the assumption $\chi(\widecheck B<\sigma)\Pi'\phi \in
  \vB^*_{0}$   we can  verbatim  use     Step V of  the proof
     of Lemma \ref{lem:eigentransform} to conclude  that  $E^+_\vH
     (\lambda_0)f
     =0$, $f=T^*\phi\in \vB^*_{s_0}(\bX_a)$.  Next  we decompose
\begin{align}\label{eq:redB}
  \begin{split}
    \widecheck B_\rho \Pi\phi&= \widecheck B_\rho ST^*\phi=S\widecheck B_{a,\rho}
  f+\hat \phi,\\
    &\widecheck B_{a,\rho} =
    r_a^{\rho/4-1/2}\Re \parb{x_{a}\cdot p_{a}}
    r_a^{\rho/4-1/2}; \quad r_a=r(x_{a}).
  \end{split}
\end{align}
  By the  arguments
 of the proof of Lemma \ref{lem:Tcont}
  indeed $\hat \phi\in \vB_{s_0}^*$, so it remains only to show that $ \widecheck B_{a,\rho}
 f \in \vB_{s_0}^*(\bX_a)$. Noting  that the operator $\widecheck
 B_{a,\rho}$ has symbol
 \begin{subequations}
 \begin{equation}\label{eq:gs}
   \check
 b_{a,\rho}\in S\parb{s, g};\quad s=\inp{\xi/g_a}^2,\quad g_a=\sqrt{-w},\quad
   g=\inp{x}^{-2} \d x^2+ g_a^{-2}\d \xi^2,
 \end{equation} it suffices in turn to show that $ g^{-1}_a \Opw (s) g_af\in
\vB_{s_0}^*(\bX_a)$. As in \cite[(4.15)]{DS1}
 \begin{equation}\label{eq:FScomp}
   \Opw (s) - g_a^{-1}hg_a^{-1} -2\in  S(\inp{x}^{\rho-2},g)\subset S(1,g).
 \end{equation}
 \end{subequations} By writing $h=(h+v^+_{\lambda_0})-v^+_{\lambda_0}$
  we end up with bounding $-g^{-2}_a v^+_{\lambda_0} f
$, which clearly  belongs to $\vB_{s_0}^*(\bX_a)$. Whence $ \widecheck B_{a,\rho}
 f \in \vB_{s_0}^*(\bX_a)$ is proven.
\end{proof}

\begin{cor}
  \label{cor:energyPbnd} For  any  $\psi\in \vB$ the function $\phi=R(\lambda_0\pm \i 0 )\psi$
  obeys the bounds
  \begin{equation}
    \label{eq:energ2bd}
    \widecheck B_\rho \Pi \phi,\,\,g_a^{-1}\widecheck B_\rho g_a\Pi \phi\in \vB_{s_0}^*.
  \end{equation}
  \begin{proof}
    We substitute \eqref{eq:mainSplit} and use \eqref{eq:redB} and
    \eqref{eq:FScomp}. Note then that $g^{-2}_ahT^*\phi\in  \vB_{s_0}^*$ since $2s_0>\rho$.
\end{proof}

\end{cor}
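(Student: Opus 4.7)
The plan is to follow the same symbolic route as the final paragraph of the proof of Theorem \ref{prop:Sommerfeld}, but starting from the resolvent decomposition \eqref{eq:mainSplit} rather than from the homogeneous-equation uniqueness step. First I would apply $\Pi$ to both sides of \eqref{eq:mainSplit}: since $\breve R(\lambda_0 \pm \i 0) = \Pi'\breve R(\lambda_0 \pm \i 0)\Pi'$, the middle term drops out and one is left with
\begin{equation*}
\Pi\phi = S\,r^\pm_{\lambda_0}\,S^*\psi + \Pi\,\widecheck R(\lambda_0 \pm \i 0)\psi.
\end{equation*}
Every summand of $\Pi\widecheck R(\lambda_0 \pm \i 0)\psi$ already contains a polynomially decreasing factor coming from \eqref{eq:resolBAS} (from $(\mathbf{1} + v^\pm_{\lambda_0} r^\pm_{\lambda_0})^{-1} - \mathbf{1}$, from the off-diagonal pieces of $W - w\mathbf{1}$, and from the factors $I_0\breve R(\lambda_0 \pm \i 0)\Pi'$), so both $\widecheck B_\rho$ and $g_a^{-1}\widecheck B_\rho g_a$ applied to this correction land in $\vB_{s_0}^*$ by the commutator bookkeeping used in the proof of Lemma \ref{lem:Tcont}.

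The principal piece $Su$ with $u := r^\pm_{\lambda_0} S^*\psi$ is the substantive one. I would push $\widecheck B_\rho$ past $S$ via \eqref{eq:redB}, so that $\widecheck B_\rho Su = S\,\widecheck B_{a,\rho}\,u + \hat\phi$ with $\hat\phi \in \vB_{s_0}^*$ (again a commutator $[\widecheck B_\rho, S]$ estimate). What remains is to show $\widecheck B_{a,\rho} u \in \vB_{s_0}^*(\bX_a)$, together with the weighted variant $g_a^{-1}\widecheck B_{a,\rho} g_a u \in \vB_{s_0}^*(\bX_a)$. Using the symbol inclusions \eqref{eq:gs}--\eqref{eq:FScomp} I would replace $\widecheck B_{a,\rho}$ by $g_a^{-1}\Opw(s) g_a$ modulo operators bounded on $\vB_{s_0}^*$, and then replace $\Opw(s)$ by $g_a^{-1} h g_a^{-1} + 2$ modulo $S(1,g)$-errors. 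This produces $g_a^{-1}\widecheck B_{a,\rho} g_a\, u \equiv g_a^{-2} h u + 2u$ modulo $\vB_{s_0}^*$-errors. Since $u = r^\pm_{\lambda_0} S^*\psi$ yields $h u = S^*\psi \in \vB(\bX_a)$ and $2s_0 = 1 + \rho/2 > \rho$, multiplication by $g_a^{-2}\sim \inp{x_a}^\rho$ gives $g_a^{-2} h u \in \vB_{s_0}^*(\bX_a)$. For the unweighted bound $\widecheck B_\rho \Pi \phi$ the same manipulation applies, as $\widecheck B_{a,\rho}$ differs from $g_a^{-1}\Opw(s) g_a$ only by an $S(1,g)$-remainder that is bounded on $\vB_{s_0}^*$.

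The main obstacle is controlling the effective source term. In the uniqueness step of Theorem \ref{prop:Sommerfeld} one disposed of $hf$ using the homogeneous equation $(h + v^+_{\lambda_0})f = 0$, exchanging $hf$ for $-v^+_{\lambda_0}f$ with the decaying potential. Here the inhomogeneity is fed by $\psi$, so one must verify that $g_a^{-2} h u = g_a^{-2}\,(S^*\psi - v^+_{\lambda_0} u - \text{cross terms coming from } W - w\mathbf{1})$ belongs to $\vB_{s_0}^*(\bX_a)$ uniformly; the first summand uses $2s_0 > \rho$ as above, while the remaining ones use the $\vO(r^{-1-\rho})$ decay of $v^\pm_{\lambda_0}$ (cf. the line after \eqref{eq:eff_v}) against $u \in \vB_{s_0}^*$. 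Once all three contributions are placed in $\vB_{s_0}^*$ and combined with the commutator errors $\hat\phi$ from the passage of $\widecheck B_\rho$ (and $g_a$) past $S$, both bounds in \eqref{eq:energ2bd} follow.
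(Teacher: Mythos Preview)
Your route---substitute \eqref{eq:mainSplit}, push $\widecheck B_\rho$ through $S$ via \eqref{eq:redB}, then reduce to $g_a^{-2} h(\cdot)$ via \eqref{eq:FScomp}---is the paper's. The difference is that you split $\Pi\phi = Su + \Pi\widecheck R(\lambda_0 \pm \i 0)\psi$ with $u := r_{\lambda_0}^\pm S^*\psi$, whereas the paper simply writes $\Pi\phi = Sf$ with $f = T^*\phi$ (exact, no correction term). From \eqref{eq:resolBAS} one then has $hf = f_a^\pm(\lambda_0) \in \vB_{s_0}(\bX_a)$, and the inclusion $\langle x_a\rangle^\rho\,\vB_{s_0}\subset\vB_{s_0}^*$ is precisely the condition $2s_0>\rho$.

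Your split causes an internal inconsistency. In paragraph two, for your $u$ one has $hu = S^*\psi$ exactly (no correction terms), but $S^*\psi$ lies only in $\vB_{1/2}(\bX_a)$, and $\langle x_a\rangle^\rho\,\vB_{1/2}\subset\vB_{s_0}^*$ requires $\rho\le \tfrac12+s_0=1+\rho/4$, i.e.\ $\rho\le 4/3$, not $2s_0>\rho$. Paragraph three then writes $hu=S^*\psi - v_{\lambda_0}^+u-\text{cross terms}$, but that identity holds for $u=T^*\phi$, \emph{not} for your principal part $u=r_{\lambda_0}^\pm S^*\psi$; the two differ by $S^*\widecheck R(\lambda_0\pm\i 0)\psi$, and $h(r_{\lambda_0}^\pm S^*\psi)=S^*\psi$ with no extra terms. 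Once you switch to $f=T^*\phi$ throughout, the separate treatment of $\Pi\widecheck R\psi$ becomes superfluous and the $2s_0>\rho$ arithmetic is correct, since $hf=f_a^\pm(\lambda_0)\in\vB_{s_0}$ is already recorded in \eqref{eq:resolBAS}. (That formula, and hence the paper's one-line proof, is stated for $\psi\in\vB_{s_0}$; the hypothesis $\psi\in\vB$ in the corollary appears to be a slip.)
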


\subsection{ Elastic  part of the scattering matrix at $\lambda_0$}\label{subsec: Elastic scattering  at lambda0}

We will to a large degree use \cite{DS1}. We  recall from  \eqref{eq:S} that  $S$ is given in terms
of cluster bound states $\varphi_1, \dots, \varphi_{m_a}$, $m_a=\dim \ker
(H^{a}-\lambda_0)$,  $a=a_0$. The quantity
$\alpha=\alpha_j=(a,\lambda_0,\varphi_j)$, $j\leq m_a
$,  is referred to as a
\emph{channel}.

Let us for convenience here assume $m_a=1$ and denote
$\varphi_1$ by $\varphi_\alpha$ (see Subsecton \ref{subsec:Non-elastic
   scattering} for an example where $m_a=2$). To make contact to \cite{DS1} it is
convenient to  change notation:
Recall that up to a polynomially decreasing potential
$w(x_a)\approx\inp{I_a^{(1)}(\cdot +x_a)}_{\varphi^a}$. Let us now assume that
\begin{align}
   w=V_1+ V_2,
\end{align} where $V_1$ and $V_2$ fulfill the following conditions of
\cite{DS1}. (For Coulomb systems one can take $V_1(x)=-\gamma r^{-1}$
for $r:=\abs{x}\geq 1$ and   $V_2=\vO(r^{-2})$.)

 Let $n=\dim \bX_a$.

\begin{cond}
\label{assump:conditions1}
The function $w$ can be written as a sum of two
real-valued measurable functions, $ w= V_1 + V_2$, such that:
For some $\rho \in (0,2)$ we
have

\begin{enumerate}[\quad\normalfont (1)]

   \item \label{it:assumption1} $V_1$ is a smooth negative function that only
     depends on  the
     radial variable $r$ in the region $r\geq 1$ (that is
     $V_1(x)=V_1(r)$ for $r=|x_a|\geq 1$). There exists $\epsilon_1 > 0$ such
     that $$V_1(r) \leq -\epsilon_1
r^{-\rho},\;r\geq 1.$$
   \item \label{it:assumption2}     For all $\gamma\in \mathbb N_0^{n}$
there exists
$C_{\gamma} >0$ such that
$$
\langle x \rangle^{\rho+|\gamma|} |\partial^{\gamma} V_1(x)|
\leq C_{\gamma}.$$
\item \label{it:assumption3}
     There exists $\tilde\epsilon_1 > 0$ such that
     \begin{equation}
       \label{eq:virialb}
       rV_1'(r)
\leq -(2-\tilde\epsilon_1) V_1(r),\;r\geq 1.
     \end{equation}

\item \label{it:assumption4}
     $V_2=V_2(x)$ is smooth and there exists $\epsilon_2 > 0$ such that for all $\gamma\in \mathbb N_0^{n}$
$$
\langle x \rangle^{\rho+\epsilon_2 +|\gamma|} |\partial^{\gamma} V_2(x)|
\leq C_{\gamma}.$$
\end{enumerate}
\end{cond}

The following condition will be  needed (and imposed) only in the case $V_2\neq
 0$.

\begin{cond}
\label{assump:conditions2}
Let $V_1$ be given as in Condition  \ref
{assump:conditions1}  and $\alpha
:=\tfrac{2}{2+\rho}$. There exists $\bar\epsilon_1 > \max (0,
1-\alpha(\rho+2\epsilon_2))$ such that

\begin{align*}
  &\limsup_{r\to \infty} r^{-1}V_1'(r)\Big (\int^r_1 (-V_1(s))^{-\frac
  {1}{2}}\d s\Big )^2 < 2^{-1}(1-\bar\epsilon_1^2),\\
&\limsup_{r\to \infty} V_1'{}'(r)\Big (\int^r_1 (-V_1(s))^{-\frac
  {1}{2}}\d s\Big )^2 < 2^{-1}(1-\bar\epsilon_1^2).
\end{align*}
\end{cond}

\subsubsection{Scattering for the one-body problem at zero energy, \cite{DS1}}\label{subsubsec: Review of DS1}

We review a number of results from \cite{DS1} valid under Conditions
\ref{assump:conditions1} and \ref{assump:conditions2}. (For a
different approach to one-body scattering theory, see \cite{Is}.)  Recall that
for any $\omega\in \S^{n-1}$, $\lambda\in[0,\infty)$ and $x$ from an
appropriate outgoing/incoming region there exists a solution to the
system of equations
    \begin{align}
      \label{eq:mixed conditions222}
      \begin{split}
\ddot y(t) &=-2\nabla w(y(t)),\\
\lambda&=\tfrac 14\dot y(t)^2 +w(y(t)),\\
y(\pm 1)&=x,\\
\omega&={\pm}\lim_{t\to {\pm}\infty}y(t)/|y(t)|.
      \end{split}
    \end{align}

One obtains a family $y^\pm
(t,x,\omega,\lambda)$ of solutions depending regularily (at least
continuously) on
parameters. Moreover all `scattering orbits' are of this form.
Using these solutions one can construct a solution
$\phi^\pm(x,\omega,\lambda)$ to
the eikonal  equation
\begin{equation}\label{eq:eik2}
  \left(\nabla_x\phi^\pm(x,\omega,\lambda)\right)^2+w(x)=\lambda
\end{equation}
satisfying $\nabla_x\phi^\pm(x,\omega,\lambda)=\tfrac 12\dot
y(\pm1,x,\omega,\lambda)$.

 For $R\geq 1$ and $\sigma\in (0,2)$
 \begin{align}\label{eq:Gamma}
   \begin{split}
   \Gamma^+_{R,\sigma}(\omega)&:=\{y\in {\mathbb R}^n\ |\ y\cdot \omega\geq (1-\sigma)|y|,\;|y|\geq R\};\; \omega\in \S^{n-1},\\
   \Gamma^+_{R,\sigma}&:=\{(y,\omega)\in {\mathbb R}^n \times
   \S^{n-1}\ |\ y \in \Gamma^+_{R,\sigma}(\omega)\}.
  \end{split}
 \end{align}

\begin{lemma}
   \label{lemma:mixed_2}
There exist $R_0\geq 1$ and $\sigma_0\in (0,2)$
such that for
all $R\geq R_0 $ and for all  positive
$\sigma\leq \sigma_0$ the system \eqref{eq:mixed conditions222}
is
solved  for all data
$(x,\omega)\in \Gamma^+_{R,\sigma}$ and $\lambda \geq 0$
by a unique
 function $y^+(t,x,\omega,\lambda),\;t\geq 1$, such that
$y^+(t,x,\omega,\lambda)\in \Gamma^+_{R,\sigma}(\omega)$ for all $t\geq 1$.
Define a vector field $F^+(x,\omega,\lambda)$
 on $\Gamma^+_{R_0,\sigma_0}(\omega)$ by
  \begin{equation}
    \label{eq:vector field}
    F^+(x,\omega,\lambda)=\tfrac 12\dot y^+(t=1,x,\omega,\lambda).
  \end{equation}
Then
\[\rot_x F^+(x,\omega,\lambda)=0.\]
\end{lemma}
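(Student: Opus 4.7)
The plan is to construct $y^+$ as a fixed point of an integral equation obtained by integrating $\ddot y = -2\nabla w(y)$ backward from $t=+\infty$, with the leading behavior of $\dot y(t)$ fixed by the asymptotic direction $\omega$ and the energy relation, and with the trajectory constrained to $\Gamma^+_{R,\sigma}(\omega)$; then to derive $\rot_x F^+=0$ from a Hamilton--Jacobi / Lagrangian manifold argument. The attractiveness of $V_1$ together with the virial inequality \eqref{eq:virialb} forces the radial component of $\dot y(t)$ to stay outward, so $|y(t)|\to\infty$ and $y(t)/|y(t)|\to\omega$; the tangential perturbation produced by $V_2$ is a genuinely short-range correction controlled by Condition~\ref{assump:conditions1}\ref{it:assumption4}. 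For positive $\lambda$ of order one this is a standard short-range scattering construction, giving existence, uniqueness, and smooth dependence of $y^+$, hence of $F^+$, on the data $(x,\omega,\lambda)$.

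\textbf{Uniformity down to $\lambda=0$.} The delicate point is uniformity as $\lambda\searrow 0$, where the energy shell degenerates and no free background flow is available. I would introduce an action-type time reparametrization, essentially $\tau(t)=\int_1^t(\lambda-w(y(s)))^{1/2}\,\d s$, so that the reparametrized speed is uniformly bounded below and the contraction estimate decouples from the small parameter $\lambda$. Condition~\ref{assump:conditions2} enters precisely here: the scale $\bigl(\int_1^r(-V_1)^{-1/2}\,\d s\bigr)^2$ is the natural zero-energy time scale, and the two upper bounds in Condition~\ref{assump:conditions2} on $r^{-1}V_1'$ and $V_1''$ provide the spectral gap needed to keep the Banach contraction nontrivial at $\lambda=0$. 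After choosing $R_0$ large and $\sigma_0$ small in terms of the constants in Conditions~\ref{assump:conditions1}--\ref{assump:conditions2}, standard ODE regularity upgrades the fixed point to smooth dependence on $(x,\omega,\lambda)\in \Gamma^+_{R_0,\sigma_0}(\omega)\times[0,\infty)$.

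\textbf{The curl-free property.} Finally, $\rot_x F^+=0$ is obtained by a Lagrangian manifold argument. In canonical form with $p=\tfrac12\dot y$ one has $\dot y=2p$, $\dot p=-\nabla w(y)$, and conserved energy $p^2+w(y)=\lambda$. For fixed $(\omega,\lambda)$ the set of phase-space points lying on a trajectory with common asymptotic direction $\omega$ at $t=+\infty$ is the unstable manifold of a ``point at infinity''; by symplectic invariance of the Hamiltonian flow together with the asymptotic gauge fixing it is Lagrangian inside $T^*\R^n$. Parametrized by the position at $t=1$ it is the graph $\{(x,F^+(x,\omega,\lambda))\}$, so by the local gradient theorem there exists $\phi^+(\,\cdot\,,\omega,\lambda)$ with $\nabla_x\phi^+=F^+$, which proves $\rot_x F^+=0$; this $\phi^+$ is the generating function mentioned in the paragraph preceding the lemma and may alternatively be produced as a renormalized action integral along $y^+$. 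The hard part of the whole argument is the interplay between the slow decay of $w$ and the zero-energy degeneracy, for which Conditions~\ref{assump:conditions1}\ref{it:assumption3} and~\ref{assump:conditions2} are precisely what is needed to make all estimates uniform in $\lambda\in[0,\infty)$.
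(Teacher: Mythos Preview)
The paper does not give its own proof of this lemma; it is stated as part of a review of results from \cite{DS1} (see the opening sentence of Subsection~\ref{subsubsec: Review of DS1}: ``We review a number of results from \cite{DS1}\dots''). Your outline is in line with the approach actually carried out in \cite{DS1}: the orbit $y^+$ is obtained there by a fixed-point argument for an integral equation, with the crucial uniformity in $\lambda\in[0,\infty)$ secured via a reparametrization tied to the classical action (your $\tau(t)=\int_1^t(\lambda-w)^{1/2}\,\d s$ is precisely the quantity controlling the zero-energy time scale), and Conditions~\ref{assump:conditions1}\ref{it:assumption3} and~\ref{assump:conditions2} are used exactly where you indicate. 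For the curl-free conclusion, \cite{DS1} proceeds by constructing $\phi^+$ directly as a (renormalized) action integral along $y^+$ and verifying $\nabla_x\phi^+=F^+$; your Lagrangian-manifold phrasing is an equivalent way of packaging the same symplectic invariance, and the paper's subsequent definition of $\phi^+$ via $\nabla_x\phi^+=F^+$ (immediately after the lemma) confirms that this is the intended mechanism.
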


 We  define
$\phi^+(x,\omega,\lambda)$ at
 $(x,\omega,\lambda)\in\Gamma^+_{R_0,\sigma_0}\times[0,\infty[$  by
 requiring
 $\nabla_x\phi^+= F^+$ and
 $\phi^+(R_0\omega,\omega,\lambda)=\sqrt {\lambda}R_0 $.  We let
\begin{equation*}
 \phi^-(x,\omega,\lambda):=-\phi^+(x,-\omega,\lambda)\;\text{for }x\in
 \Gamma^-_{R_0,\sigma_0}(\omega) :=\Gamma^+_{R_0,\sigma_0}(-\omega).
\end{equation*}

For $\xi\neq 0$  we write
$\xi=\sqrt{\lambda}\omega$,  $\omega\in \S^{n-1}$, and then
\begin{align*}
  \phi^\pm(x,\xi)=\phi^\pm(x,\omega,\lambda); \quad (x,\omega)\in \Gamma^\pm_{R_0,\sigma_0}.
\end{align*} We are motivated to write, slightly abusely, $(x,\xi)\in \Gamma^\pm_{R_0,\sigma_0}$
instead of $(x,\omega,\lambda)\in \Gamma^\pm_{R_0,\sigma_0}\times
(0,\infty)$  (in fact even for the case $\xi=0$).

Fixing  $0<\sigma<\sigma'<\sigma_0$
 we
introduce  a smoothed out characteristic function
\begin{subequations}
\begin{equation}\label{eq:chi^2}
  \chi_{\sigma,\sigma'}(t)=
\begin {cases} 1, & \text{for}\; t\geq 1-\sigma,\\
0, & \text{for}\; t\leq 1-\sigma'.
\end {cases} \;
\end{equation}
 Next define,
 in terms of \eqref{eq:chi^2} and  the function  $\bar \chi_R=1-\chi_R$ of \eqref{eq:14.1.7.23.24},
\[a_0^\pm(x,\xi):=\chi_{\sigma,\sigma'}(\pm\hat
x\cdot\hat \xi)\bar\chi_{R_0}(\abs{x});
\quad \hat z=z/|z|.\]
 We introduce then a Fourier integral operator $J_0^+$ on $L^2(\R^n)$ by
\begin{equation}
  \label{eq:int_ope9}
  (J_0^\pm f)(x)=(2\pi)^{-n/2}\int \e^{\i
    \phi^\pm(x,\xi)} a_0^\pm(x,\xi) \hat f(\xi)
  \d \xi,
\end{equation}
where
\[\hat f(\xi):=(2\pi)^{-n/2}\int \e^{-\i x\cdot \xi}f(x)\d x\]
denotes the  Fourier transform of $f$.

\end{subequations}

\begin{subequations}

The WKB method suggests to approximate the wave operator by
a Fourier integral operator $J^+$ on $L^2(\R^d)$ of the form
\begin{equation}
  \label{eq:int_ope}
  (J^+f)(x)=(2\pi)^{-n/2}\int \e^{\i
    \phi^+(x,\xi)} a^+(x,\xi) \hat f(\xi)
  \d \xi,
\end{equation}
where the symbol $a^+(x,\xi)$ is
 supported  in  $\Gamma^+_{R_0,\sigma_0}$ and
 constructed  by an iterative procedure (partly recalled in Subsection
 \ref{subsubsec: Elastic scattering at lowest threshold}) attempting to make the difference
 $T^+:=\i (hJ^+-J^+p^2)$  small in $\Gamma^+_{ R_0,\sigma}$. We have
\begin{equation}
  \label{eq:int_ope4}
  (T^+f)(x)=(2\pi)^{-n/2}\int \e^{\i
    \phi^+(x,\xi)} t^+(x,\xi) \hat f(\xi)
  \d \xi,
\end{equation} where
\begin{equation}
  \label{eq:sym_dif}
t^+(x,\xi)= \left((2\nabla_x\phi^+(x,\xi)) \cdot
\nabla_x +(\triangle_x\phi^+(x,\xi))\right)a^+(x,\xi)
 -{\i}\triangle_x a^+(x,\xi).
\end{equation} The symbols  $a^+(x,\xi)$ and $a_0^+(x,\xi)$ coincide
to leading order away from $\xi=0$, more precisely
\begin{align} \label{eq:sym_difbb}
  a^+(x,\xi)\approx \left(\det\nabla_\xi\nabla_x
\phi^+(x,\xi)\right)^{1/2} a_0^+(x,\xi)=\e^{\zeta^+(x,\xi)}a_0^+(x,\xi),
\end{align}
and  $a^+(x,\xi)$ should be thought of as an
`improvement' of  the right-hand side.
 For details of construction, see \cite[Section 5]{DS1}. However,
 since we are partly going to  mimic this construction in Subsection
 \ref{subsubsec: Elastic scattering at lowest threshold}, let us
 here recall that the equation
\begin{equation}
  \label{eq:sym_difb2}
\left((2\nabla_x\phi^+(x,\xi)) \cdot
\nabla_x +(\triangle_x\phi^+(x,\xi))\right)\e^{\zeta^+(x,\xi)}=0,
\end{equation} takes \eqref{eq:sym_dif} onto the form
\begin{align}\label{eq:Bfac}
  \begin{split}
 t^+&=\e^{\zeta^+} \parb{(2\nabla_x\phi^+) \cdot
\nabla_x-\i A^+}b^+,\quad b^+=\e^{-\zeta^+}a^+,\\
A^+&= \Delta +2(\nabla \zeta^+)\cdot \nabla +(\Delta \zeta^+)+ (\nabla \zeta^+)^2.
  \end{split}
\end{align}

\end{subequations}

Similar to \eqref{eq:int_ope}--\eqref{eq:sym_dif}  we introduce a   Fourier integral
operator $J^-$ and $T^-$ as
\begin{align}
  \label{eq:int_ope2}
  \begin{split}
  (J^-f)(x)&=(2\pi)^{-n/2}\int \e^{\i\phi^-(x,\xi)}a^-(x,\xi) \hat
  f(\xi) \d \xi,\\
(T^-f)(x)&=(2\pi)^{-n/2}\int \e^{\i
    \phi^-(x,\xi)} t^-(x,\xi) \hat f(\xi)
  \d \xi.
  \end{split}
\end{align}

 For all $\tau\in L^2(\S^{n-1})$ we introduce
\begin{align}
  \begin{split}
 (J^{\pm}(\lambda)\tau)(x)&=
(2\pi)^{-n/2}
\int
  \e^{\i \phi^{\pm}(x,\omega,\lambda)}\tilde a^{\pm}(x,\omega,\lambda)
  \tau(\omega)\d \omega,\label{eq:jdef}\\
(T^{\pm}(\lambda)\tau)(x)&=
(2\pi)^{-n/2}
\int
  \e^{\i \phi^{\pm}(x,\omega,\lambda)}\tilde t^{\pm}(x,\omega,\lambda)
                                                            \tau(\omega)\d \omega,
  \end{split}
\end{align}
where
\begin{eqnarray*} \tilde a^{\pm}(x,\omega,\lambda)&=&
\tfrac{\lambda^{(n-2)/4}}{\sqrt 2}  a^{\pm}(x, \sqrt{\lambda}\omega),\\
 \tilde t^{\pm}(x,\omega,\lambda)&=&\tfrac{\lambda^{(n-2)/4}}{\sqrt 2}
t^{\pm}(x,\sqrt{\lambda}\omega).\end{eqnarray*}

The functions $\tilde a^{\pm}$ and $\tilde t^{\pm}$ are \emph{continuous} in
$(x,\omega,\lambda)\in \R^d\times \S^{n-1}\times [0,\infty)$, and
therefore  we can \textit {define} $J^{\pm}(\lambda)$ and $T^{\pm}(\lambda)$
at $\lambda=0$ by the expressions (\ref{eq:jdef}). These properties
hinge on \cite[Proposition 5.3]{DS1} stating properties of the function
\begin{equation*}
  \tilde{\zeta}^+(x,\omega,\lambda)={\zeta}^+(x,\sqrt{\lambda}\omega)-
  \ln \lambda^{(2-n)/4};\quad \lambda>0.
\end{equation*} In particular it follows that  there exist locally
uniform  limits  (along with derivatives)
\begin{equation}\label{eq:tildez}
  \tilde{\zeta}^+(x,\omega,0)=\lim_{\lambda\to
  0_+}\tilde{\zeta}^+(x,\omega,\lambda).
\end{equation}
It will be
convenient to use a   splitting  $T^{\pm}
(\lambda)=T^{\pm}_\bd(\lambda)+T^{\pm}_\pr (\lambda)$  in agreement
with a  certain decomposition of $\tilde t^{\pm}(x,\sqrt{\lambda}\omega)$,
see Lemma \ref{lem:22a} and references given before the lemma.
 There are \emph{wave operators}
\begin{align}  W^{\pm}f=\lim _{t\to {\pm}\infty}\e^{\i th}J_0^{\pm}\e^{-\i tp^2}f=
\lim _{t\to {\pm}\infty}\e^{\i th}J^{\pm}\e^{-\i tp^2}f;\;
  \hat f\in C^\infty_\c(\R^n\setminus\{0\}). \label{eq:ooo1}
\end{align}
 The  two operators $W^\pm$ extend   isometrically  on
 $L^2(\R^d)$ with extensions satisfying  $hW^\pm=W^\pm p^2$. Moreover,
\begin{align}
  0=\lim _{t\to {\mp}\infty}\e^{\i th}J_0^{\pm}\e^{-\i tp^2}f
=\lim _{t\to {\mp}\infty}\e^{\i th}J^{\pm}\e^{-\i tp^2}f;\;
  \hat f\in C^\infty_\c(\R^n\setminus\{0\}).
\label{eq:ooo}\end{align}
 For $\rho\in (1/2,2)$ we may write $W^{\pm}=W_{\rm dol}^{\pm}\,\e^{\i
   \psi_{\rm \dol}^\pm(p)}$ in terms of the familiar Dollard wave operators
   \cite{Do} (cf. \eqref{modified-WO} in Section \ref{total cross-sections})  and  explicit real  momentum-depending
   phase factors $\psi_{\rm dol}^\pm$, see \cite [Theorem  6.15]{DS1}.

Let $\Delta_\omega$ denote the Laplace-Beltrami operator on the sphere
$\S^{n-1}$. For  $k\in{\mathbb R}$ we define the Sobolev spaces on the sphere
$H^k(\S^{n-1})=(1-\Delta_\omega)^{-k/2}L^{2}(\S^{n-1})$.  Let $\vL_s^k=\vL_s^k(\R^n)={\mathcal
 L}(H^k(\S^{n-1}),L_s^{2}(\R^n))$  and
$\vL^k=\vL_0^k$ for any $k,s\in \R$.

For $\lambda>0$  we introduce the restricted Fourier transform ${\mathcal
F}_0(\lambda)$  as
\begin{equation}\label{eq:resFour}
{\mathcal F}_0(\lambda)f(\omega)=\tfrac{\lambda^{(n-2)/4}}{\sqrt 2} \hat f(\sqrt
{\lambda}\omega).
\end{equation}
Let  $s>\frac12$ and $k\geq0$.
Note that  ${\mathcal
F}_0(\lambda)\in \mathcal L
(L^{2}_{s+k}(\R^n),H^k(\S^{n-1}))$
  with a continuous  dependence   on $\lambda>0$.
Likewise,
  ${\mathcal F}_0(\lambda)^*\in
\mathcal L ^{-k}_{-s-k}$ with a continuous  dependence    on
$\lambda>0$. Note also that the operator
 \begin{equation}\label{decom1}
 \int_{\R_+}^ \oplus {\mathcal F}_0(\lambda)\, \d
\lambda:L^2(\R^n)\to \int_{\R_+}^\oplus L^2(\S^{n-1})\;\d \lambda
 \end{equation}   is unitary, and
  consequently that it  diagonalizes the operator $p^2$.
Formally, we have
$J^\pm(\lambda)=J^\pm{\mathcal F}_0(\lambda)^*$ and
$T^\pm(\lambda)=T^\pm{\mathcal F}_0(\lambda)^*$. The formal identity
 $W^{\pm}(\lambda)=W^{\pm}{\mathcal F}_0(\lambda)^*=(J^{\pm}+\i r(\lambda{\mp}\i0)
T^{\pm}){\mathcal F}_0(\lambda)^*$ leads us then to consider the \emph{wave matrices}
\begin{equation}\label{wave22}
W^{\pm}(\lambda):=J^{\pm}(\lambda)+\i r(\lambda{\mp}\i0)
T^{\pm}(\lambda),\end{equation}
 which in fact belong to $
\mathcal L ^{-k}_{-s}$  for any
$k\geq 0$ for  a suitable $s=s(k)>s_0$. In this space
$W^{\pm}(\lambda)$ have  continuous dependence of $\lambda\geq 0$
(including $\lambda=0$!).

The scattering operator commutes with $p^2$, which is diagonalized by the
direct integral mapping  (\ref{decom1}).
Because of
 that the general theory of decomposable operators yields a
measurable family $\R_+\ni\lambda\mapsto S(\lambda)$  with the
\emph{scattering  matrix} $S(\lambda)$
    being a unitary operator on  $L^2(\S^{n-1})$ for almost all
    $\lambda$, and such
    that  in terms of  the mapping (\ref{decom1})
\begin{equation}
  S\simeq\int_{\R_+}^\oplus S(\lambda)\;\d \lambda.\label{decom2}
\end{equation}
A main result of \cite{DS1} reads, that for $\lambda\geq 0$ the
scattering  matrix
  \begin{align}
  \label{eq:Smatrix}
                        \begin{split}
                          S(\lambda)
&= -2\pi J^+(\lambda)^{*}T^-(\lambda)
   +2\pi \i T^{+}(\lambda)^*r(\lambda+\i 0)T^-(\lambda)\\
&=-2\pi W^{+}(\lambda)^*T^-(\lambda)
                        \end{split},
\end{align}
defining  a unitary operator $L^2(\S^{n-1})$  with a
strongly continuously dependence  on $\lambda \geq0$. Moreover (\ref{decom2}) is true,
 and
\begin{equation}\label{eq:1Sbnd}
\forall  k\in \R \,\forall \epsilon >0:
\quad S(\lambda)\in {\mathcal L}(H^k(\S^{n-1}),
  H^{k-\epsilon}(\S^{n-1})),
\end{equation}
  depending norm-continuously on
$\lambda\geq 0$.
Hence in particular $S(\lambda)$ maps $C^\infty(\S^{n-1})$ into
itself.

Another main  result of \cite{DS1} adopted to the  setting discussed here (in
particular not including    a certain singular term $V_3$) is the following result:

Suppose in addition to Conditions \ref{assump:conditions1} and
\ref{assump:conditions2} that  $V_1(r)= -\gamma r^{-\rho}$
 for $r\geq 1$. Then
 the kernel $S(0)(\omega,\omega')$ is smooth
  outside the set $\{(\omega,\omega')\mid\omega\cdot \omega'= \cos
  \frac \rho {2-\rho}\pi\}$.

 \subsubsection{Elastic scattering for the $N$-body problem at $\lambda_0$}\label{subsubsec:Back to
  the N-body problem}
Using the constructions $J_0^{\pm}$  in the previous subsection  for
the potential $w=V_1+V_2$ in the variable $x_a$ and recalling
$-E_{\vH} (\lambda\pm \i 0) =h+ v^\pm_\lambda +\lambda_0-\lambda$, $h=p^2+w$, we introduce  wave operators
\begin{align}
  \begin{split}
    \lim _{t\to {\pm}\infty}\e^{\i
  tH}\parb{1\otimes  J_0^{\pm}}&\e^{-\i t H_a}Sf=\lim _{t\to {\pm}\infty}\e^{\i
  tH}\parb{1\otimes  J_0^{\pm}}S\e^{-\i t (p_a^2+\lambda_0)}f;\\\quad
  &\hat f\in C^\infty_\c(\bX_a\setminus\{0\})^{m_a}. \label{eq:ooo12}
  \end{split}
\end{align}
We consider a  channel $\alpha=(a,\lambda_0,\varphi_\alpha)$
(recall that  this means that $H^a\varphi_\alpha=\lambda_0\varphi_\alpha$
with
$\norm{\varphi_\alpha}=1$)
assuming  for simplicity from this point that
\begin{equation}\label{eq:multI}
  m_a=\dim \ker
(H^{a}-\lambda_0)=1
\end{equation}
 (making $\varphi_\alpha$ essentially unique). With this assumption  the above limit is nothing but the \emph{channel
wave operator}
\begin{align}
  \begin{split}
    W_\alpha^{\pm}f&=\lim _{t\to {\pm}\infty}\e^{\i
  tH}\parb{1\otimes  J_0^{\pm}}\varphi_\alpha\otimes \e^{-\i t (p_a^2+\lambda_0)}f\\&=\lim _{t\to {\pm}\infty}\e^{\i
  tH}\parb {\varphi_\alpha\otimes J_0^{\pm}\e^{-\i t (p_a^2+\lambda_0)}f} ;\quad
  \hat f\in C^\infty_\c(\bX_a\setminus\{0\}). \label{eq:ooo13}
  \end{split}
\end{align}

This leads us to define
\begin{align*} {\mathcal
    F}_{\lambda_0}(\lambda)f(\omega)&=\tfrac{(\lambda-\lambda_0)^{(n-2)/4}}{\sqrt
    2} \hat f\parb{\sqrt
                                      {\lambda-\lambda_0}\,\omega},\\
  (J_N^{\pm}(\lambda)\tau)(x)&= (2\pi)^{-n/2} \int \e^{\i
                               \phi^{\pm}(x,\omega,\lambda-\lambda_0)}\tilde
                               a^{\pm}(x,\omega,\lambda-\lambda_0)
                               \tau(\omega)\d \omega,\\
  (T_N^{\pm}(\lambda)\tau)(x)&= (2\pi)^{-n/2} \int \e^{\i
                               \phi^{\pm}(x,\omega,\lambda-\lambda_0)}\tilde
                               t^{\pm}(x,\omega,\lambda-\lambda_0)
                               \tau(\omega)\d \omega,\\
  J_\alpha^{\pm}f&=  {\varphi_\alpha}\otimes J^{\pm}f;\quad f=f(x_a),\\
  T_\alpha^{\pm}&=\i\parb{H J_\alpha^{\pm}-J_\alpha^{\pm}(p_a^2+\lambda_0)},\\
  J_\alpha^{\pm}(\lambda)\tau&= {\varphi_\alpha}\otimes
                               J_N^{\pm}(\lambda)\tau,\\
  T_\alpha^{\pm}(\lambda)\tau&= {\varphi_\alpha}\otimes
                               T_N^{\pm}(\lambda)\tau + \widecheck
                               T_\alpha^{\pm}(\lambda)\tau;\\ &
                               \quad\widecheck T_\alpha^{\pm}(\lambda)=\i
                                                                I^{(2)}_aJ_\alpha^{\pm}(\lambda)+
                                                                \i \parb{I^{(1)}_a-w}J_\alpha^{\pm}(\lambda),\\
  W^{\pm}_\alpha(\lambda)&=J^{\pm}_\alpha(\lambda)+\i
                           R(\lambda{\mp}\i0)
                           T^{\pm}_\alpha(\lambda),\\
  S_{\alpha\alpha}(\lambda) &= -2\pi
                              J^+_\alpha(\lambda)^{*}T^-_\alpha(\lambda)
                              +2\pi \i
                              T^{+}_\alpha(\lambda)^*R(\lambda+\i
                              0)T^-_\alpha(\lambda).
\end{align*}
 Here formally
 \begin{align*}
   J_N^{\pm}(\lambda)=J^{\pm}(\lambda-\lambda_0)= J^{\pm}{\mathcal F}_{\lambda_0}(\lambda)^*\mand T_N^{\pm}(\lambda)=T^{\pm}(\lambda-\lambda_0)= T^{\pm}{\mathcal F}_{\lambda_0}(\lambda)^*,
 \end{align*} where $J^{\pm}$ is the  `improvement' of $J_0^{\pm}$ and
 $T^{\pm}=\i(hJ^{\pm}-J^{\pm}p_a^2)$
  as defined by \eqref{eq:int_ope}--\eqref{eq:sym_dif}  and \eqref{eq:int_ope2}.

 The first  term of $\widecheck T_\alpha^{\pm}(\lambda)$
 has arbitrary  polynomial decay, for example stated precisely as
 $\widecheck T_\alpha^{\pm}(\lambda)\in
\vL_s^k=\vL_s^k(\bX):={\mathcal
 L}(H^k(\S^{n-1}),L_s^{2}(\bX))$ for any $k,s\in \R$, and this  is also the case for
 $\Pi\parb{I^{(1)}_a-w}J_\alpha^{\pm}(\lambda)$. But
 $\Pi'\parb{I^{(1)}_a-w}J_\alpha^{\pm}(\lambda)=\Pi'I^{(1)}_a\Pi
 J_\alpha^{\pm}(\lambda)$ is only one power better than $w$,
  more precisely it has the form
  $\vO(\inp{x}^{-1-\rho})J_\alpha^{\pm}(\lambda)$. Thus we can
   record
  \begin{align}
    \label{eq:smallT}
    \Pi\widecheck T_\alpha^{\pm}(\lambda)=\vO\parb{\inp{x}^{-\infty}},\quad  \Pi'\widecheck T_\alpha^{\pm}(\lambda)=\vO\parb{\inp{x}^{-1-\rho}}J_\alpha^{\pm}(\lambda).
  \end{align}
A
  similar remark is due  for the `non-restricted'
  quantity $\widecheck T_\alpha^{\pm}$
  in the formula
  \begin{align*}
    T_\alpha^{\pm}= {\varphi_\alpha}\otimes T^{\pm}
+ \widecheck T_\alpha^{\pm};\quad \widecheck T_\alpha^{\pm}=I^{(2)}_aJ_\alpha^{\pm}+ \parb{I^{(1)}_a-w}J_\alpha^{\pm}.
  \end{align*}
  Moreover, since for any $f$ with $\hat f\in
  C^\infty_\c(\bX_a\setminus\{0\})$ the quantity $\norm {T_\alpha^{\pm}\e^{-\i t
      (p_a^2+\lambda_0)}f}$ is integrable at $\pm \infty$, the Cook
  argument gives the existence of the wave operator
  $W_\alpha^{\pm}$. Note that this integrability may be shown by a
  stationary phase argument (for example by   using
  \eqref{eq:partitionleft} and a version of  Lemma \ref{lem:22a} \ref{it:p40}).

  Using \cite[Appendix A]{DS1} the elastic part of the scattering
  matrix defined by  \eqref{eq:ooo13} may be shown to be  given by the expression
  $S_{\alpha\alpha}(\lambda)$, $\lambda>\lambda_0$, introduced above. We will
  study some properties of this operator, which is an operator on
  $\vL(\S^{n-1})$ with norm at most one.

  We state  some  basic properties of $J^{\pm}(\lambda)$
and $T^{\pm}(\lambda)=T^{\pm}_\bd(\lambda)+T^{\pm}_\pr (\lambda)$, see
\cite[(5.8),  (5.16), (5.19)  and  Lemmas 6.8 and 6.9]{DS1} and the proof of \cite[Theorem 6.11]{DS1} (which we adapt to the present problem). First we recall some
notation. The most basic one is the function
\begin{align*}
  g(r)=g_\lambda=\sqrt{\lambda-\lambda_0-V_1(r)},\quad \lambda\geq\lambda_0,
\end{align*} which roughly controls the momentum. Next  we introduce
the
 symbols
\begin{align}
\label{eq:def_a0_b}
a(x,\xi) = \frac{\xi^2}{g_{\lambda}(|x|)^2},\quad
b(x,\xi)  =  \frac{\xi}{g_{\lambda}(|x|)} \cdot F(x),
\end{align} where $F$  is an arbitrary (henceforth fixed) vector field on $\R^n$
extending $F(x)=\hat x= x/r$ for $r=\abs{x}\geq1$. Of course the symbols $a$ and $b$ also have $\lambda$-dependence, but
for convenience  this is  here and henceforth  omitted  in the notation.
 Let
$\tilde{\chi}_{-},\tilde{\chi}_{+}\in C^\infty$ be  non-negative
functions obeying
$\tilde{\chi}_{-}+\tilde{\chi}_{+}=1$ and
\begin{subequations}
\begin{align}
  \label{eq:supp tilde chi1}
 &\supp \tilde{\chi}_{-}
 \subseteq(-\infty, 1-\bar \sigma],\\&\supp \tilde{\chi}_{+} \subseteq
 [1-2\bar \sigma,\infty),  \label{eq:supp tilde chi2}
\end{align}
where the number $\bar \sigma>0$ needs to be taken sufficiently small,
depending on the parameter $\sigma$ used in the previous subsection
(see \eqref{eq:chi^2}) and properties of
the phase $\phi^{\pm}(x,\xi)$.  Let ${\chi}_{-},{\chi}_{+}\in C^\infty$ be  non-negative
functions obeying
${\chi}_{-}+{\chi}_{+}=1$ and
\begin{align}
  \label{eq:supp tilde chi1x}
 &\supp {\chi}_{-}
 \subseteq(-\infty, 2),\\&\supp {\chi}_{+} \subseteq
 (1,\infty).  \label{eq:supp tilde chi2x}
\end{align}
 Introduce then symbols
\begin{align}\label{eq:supp tilde chi3}
  \begin{split}
\chi_1&= {\chi}_{+}(a),\\
\chi_2^{\pm}&={\chi}_{-}(a) \tilde{\chi}_{-}(\pm b),\\
  \chi_3^{\pm}&={\chi}_{-}(a) \tilde{\chi}_{+}(\pm b). \end{split}
\end{align} These symbols belong to a class of (parameter-depending)
pseudodifferential operators studied in \cite{FS, DS1}. The `Planck
constant' for this class is $\inp{x}^{-1}g_{\lambda}(|x|)^{-1}$, in
particular at most  $\inp{x}^{\rho/2-1}$.
  \end{subequations} Note the partition of unity in terms
  of corresponding (right-quantized) operators
\begin{align}
  \label{eq:partitionleft}
  1 =\Opr(\chi_1)+\Opr(\chi^\pm_2)+\Opr(\chi^\pm_3).
\end{align}
Recall  $\vL^k:={\mathcal
 L}(H^k(\S^{n-1}),L^{2}(\R^n))$, $k\in \R$.

\begin{lemma}\label{lem:22a}  Let  $\chi_1$,
$\chi^\pm_2$ and $\chi^\pm_3$ be  given by
\eqref{eq:supp tilde chi3}.
\begin{subequations}
  \begin{enumerate}[\normalfont 1)]
\item \label{it:p20}For  all  $k\in [0,\infty)$  and  $\epsilon>0$,
\begin{equation}
  \label{eq:x_weights29bc9}
 (\langle x \rangle g_\lambda)^{-k} \langle x \rangle^{-1/2 -\epsilon}
  g_\lambda^{1/ 2}J_N^{\pm}(\lambda)
\end{equation} is a continuous $\vL^{-k}$--valued  function of
$\lambda\in[\lambda_0,\infty)$. With a bounding constant
  independent of $\lambda\geq \lambda_0$,
 \begin{equation*}
    g_\lambda^{1/2}J_N^{\pm}(\lambda)\in \mathcal{L} \parb{L^2(\S^{n-1}),\vB_{ 1/2}^*(\R^n) }.
  \end{equation*}
\item \label{it:p30}For  all $k\in \R$  and  $\epsilon>0$,
\begin{equation}
  \label{eq:x_weights29c}
 (\langle x \rangle g_\lambda)^{-k} \langle x \rangle^{1/2 -\epsilon} g_\lambda^{-1/2}\Opr(\chi^\pm_2)T_{\bd}^{\pm}(\lambda-\lambda_0)
\end{equation} is a continuous $\vL^{-k}$--valued  function of
$\lambda\in[\lambda_0,\infty)$.
\item \label{it:p40}For all $k,m\in \R$,
\begin{equation}
  \label{eq:x_weights29d}
  \langle x \rangle^m \Opr(\chi_1+\chi^\pm_3) T_{\bd}^\pm(\lambda-\lambda_0),
  \,\,\langle x \rangle^m T_{\pr}^\pm(\lambda-\lambda_0)\mand \langle x \rangle^m \Opr(\chi_1) J_N^\pm(\lambda)
\end{equation} are   continuous $\vL^{-k}$--valued function of
$\lambda\in[\lambda_0,\infty)$.
\end{enumerate}
\end{subequations}
\end{lemma}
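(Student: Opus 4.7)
The plan is to reduce all three bounds to their one-body counterparts from \cite{DS1} applied to $h = p_a^2 + w$ at the shifted spectral parameter $\mu = \lambda - \lambda_0 \geq 0$. By construction $J_N^\pm(\lambda) = J^\pm(\mu)$ and $T_N^\pm(\lambda) = T_{\bd}^\pm(\mu)+T_{\pr}^\pm(\mu)$ where the operators on the right are the one-body Fourier integral operators built from the solutions $\phi^\pm(x,\omega,\mu)$ of the eikonal equation \eqref{eq:eik2}; continuity in $\mu \in [0,\infty)$ then corresponds verbatim to continuity in $\lambda \in [\lambda_0,\infty)$. So the $N$-body aspect plays no role in this lemma beyond the shift, and everything reduces to uniform stationary/non-stationary phase analysis for phases $\phi^\pm$ and amplitudes $\tilde a^\pm$ constructed from $V_1+V_2$.

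For \ref{it:p20}, I would start from the factored form $a^\pm = e^{\zeta^\pm} b^\pm$ of \eqref{eq:sym_difbb}--\eqref{eq:Bfac}. The existence of the limit \eqref{eq:tildez} ensures that $\tilde a^\pm(x,\omega,\mu)$ is jointly continuous down to $\mu=0$; pairing this with the $\vB^*_{1/2}$--type  Besov estimate of \cite[Lemma 6.8]{DS1} applied dyadically in $|x|$ yields the claim for $k=0$. For the weighted case $k>0$ one integrates by parts $k$ times against the angular part of the phase, using that $\nabla_\omega \phi^\pm \sim \langle x\rangle g_\lambda$ on $\supp a^\pm$ so that each integration produces the requisite factor of $(\langle x\rangle g_\lambda)^{-1}$. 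The continuity in $\lambda$ is preserved at every step because the involved symbol classes carry $\mu$-uniform seminorms.

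For \ref{it:p30}, the microlocal cutoff $\Opr(\chi_2^\pm)$ localizes to $|\xi|<\sqrt{2}\,g_\lambda$ and $\mp\hat\xi\cdot F(x)\geq \bar\sigma$. Since $\nabla_x \phi^\pm(x,\xi)= \pm g_\lambda(|x|) F(x) + \vO(|x|^{-\epsilon})$ the phase of the composition $\Opr(\chi_2^\pm)\,T_{\bd}^\pm(\mu)$ is non-stationary in $\xi$ on the support of the new amplitude; a standard non-stationary-phase integration by parts, exactly as in the proof of \cite[Lemma 6.9]{DS1}, gains $k$ powers of $\langle x\rangle g_\lambda$ on each iteration, and the explicit $g_\lambda^{-1/2}$ factor accounts for the Jacobian of the $\omega \leftrightarrow \xi$ passage through ${\mathcal F}_0$. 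The continuous dependence on $\lambda$ again follows because the symbol estimates are uniform in $\mu$ on $[0,\infty)$ thanks to Conditions \ref{assump:conditions1}--\ref{assump:conditions2}.

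Part \ref{it:p40} is the easier regime. On $\supp\chi_1$ one has $|\xi|>g_\lambda$, i.e., $|\xi|>|\nabla_x\phi^\pm|$ up to lower-order corrections, so $\nabla_x(\phi^\pm(x,\xi)-x\cdot\xi)\neq 0$ uniformly and non-stationary phase in $x$ yields arbitrary decay in $|\xi|$, which translates to arbitrary power decay $\langle x\rangle^{-m}$ after composition. On $\supp\chi_3^\pm$ the amplitude of $J^\pm$ forces $\pm\hat x\cdot\hat\xi\geq 1-\sigma'$ while $\chi_3^\pm$ forces $\mp\hat\xi\cdot F(x)\geq \bar\sigma$, a contradiction for $|x|$ large; the overlap with the compact region $|x|\leq R_0$ gives a symbol of compact spatial support whose contribution is Schwartz. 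The residual $T_{\pr}^\pm(\mu)$ is by construction of arbitrarily large negative order in both $\langle x\rangle$ and $\langle\xi\rangle$, which handles the middle term. The main obstacle throughout is the uniformity of the estimates as $\mu\downarrow 0$, where the classical flow degenerates and the momentum scale $g_\lambda$ remains bounded only by virtue of the negativity condition \eqref{eq:virialb}; this is what is packaged in the one-body construction of $\phi^\pm$ and the limit \eqref{eq:tildez}, and once those are granted no further difficulty arises.
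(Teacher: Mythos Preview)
Your reduction to the one-body results of \cite{DS1} via the spectral shift $\mu=\lambda-\lambda_0$ is correct and is exactly what the paper does; the paper gives no independent proof but refers to \cite[(5.8), (5.16), (5.19), Lemmas 6.8--6.9]{DS1} and the proof of \cite[Theorem~6.11]{DS1}.

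Two of your sketched mechanisms are off, however. In part~\ref{it:p30} the gain of the factor $\langle x\rangle g_\lambda^{-1}$ relative to part~\ref{it:p20} does not come from non-stationary phase in $\xi$ (on $\supp\chi_2^\pm$ the composed phase \emph{can} be stationary). It is already present in the amplitude: $t_{\bd}^\pm$ is produced when the transport operator $2\nabla_x\phi^\pm\cdot\nabla_x$ hits the cutoff $\chi_{\sigma,\sigma'}(\pm\hat x\cdot\omega)\bar\chi_{R_0}$ in \eqref{eq:sym_dif}, which gives $t_{\bd}^\pm\sim g_\lambda\langle x\rangle^{-1}\tilde a^\pm$; the factor $\Opr(\chi_2^\pm)$ is only there to separate off the regions treated in part~\ref{it:p40}, and the $(\langle x\rangle g_\lambda)^{-k}$ weight is handled, as in part~\ref{it:p20}, by integration by parts in $\omega$.

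In part~\ref{it:p40} your $\chi_3^\pm$ argument conflates the angular variable $\omega$ of the Fourier integral operator with the dual variable $\xi$ of $\Opr$, and the sign is backwards: $\chi_3^\pm=\chi_-(a)\tilde\chi_+(\pm b)$ forces $\pm b\geq 1-2\bar\sigma$, not $\mp\hat\xi\cdot F(x)\geq\bar\sigma$. There is no support ``contradiction'' between these constraints since they involve different variables. The correct mechanism is non-stationary phase in the intermediate $y$-integration of the composition: the $y$-gradient of the combined phase $(x-y)\cdot\xi+\phi^\pm(y,\omega)$ is $-\xi+\nabla_y\phi^\pm$, and on $\supp t_{\bd}^\pm$ the radial component $\hat y\cdot\nabla_y\phi^\pm/g_\lambda$ is bounded strictly below $1$ (since $\hat y\cdot\omega\leq 1-\sigma$ there), hence below $1-2\bar\sigma$ for $\bar\sigma$ chosen small enough, so it cannot equal $\pm\xi\cdot\hat y/g_\lambda\geq 1-2\bar\sigma$.
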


This lemma will be used in combination with the following excerpts of \cite[Proposition 4.1]{DS1}
(adapted to the present problem). Note that
$\Opl(\chi^\pm_2)=\Opr(\chi^\pm_2)^*$ is given by left-quantization.

\begin{lemma}\label{lemma:reso-1-body} Let $\Lambda$ denote any
  interval of the form $\Lambda=[\lambda_0, \lambda_0']$, and let
  $r^\pm_\lambda=r(\lambda-\lambda_0\pm\i 0)$ for $\lambda\in\Lambda$. Then the
  following bounds hold unifomly in $\lambda\in\Lambda$, and the
  corresponding $\vL\parb{L^2(\R^n)}$--valued functions are continuous.
  \begin{subequations}
  \begin{enumerate}[\normalfont 1)]
\item \label{it:p10a}
For all $\epsilon> 0$
there exists
$C>0$ such that
\begin{align}
  \label{eq:x_weights2}
\|\langle x \rangle^{-\epsilon-1/2} g_\lambda^{1/ 2}r^\pm_\lambda  g_\lambda^{1/ 2}\langle x \rangle^{-\epsilon-1/2}
  \| \leq C.
\end{align}
\item \label{it:p20a} For all
$s \geq   0$ and  $0\leq\epsilon<\epsilon'$  there exists $C > 0$
\begin{align}
  \label{eq:PsDO_part21}
&\|(\langle x \rangle g_\lambda)^{s}\langle x \rangle^{\epsilon-1/2} g_\lambda^{1/ 2}\Opl(\chi^\pm_2)
r^\pm_\lambda g_\lambda^{1/ 2}\langle x \rangle^{-\epsilon'-1/2}
(\langle x \rangle g_\lambda)^{-s}\| \leq C.
\end{align}
\end{enumerate}
\end{subequations}
\end{lemma}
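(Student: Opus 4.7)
The plan is to derive both estimates from a Mourre-type positive commutator argument adapted to the threshold regime, working in a pseudodifferential calculus whose `Planck constant' degenerates at infinity. First I would take as conjugate operator $A = \tfrac12(F(x)\cdot p + p\cdot F(x))$, with $F(x)=\hat x$ for $|x|\geq 1$, and check that the virial condition \eqref{eq:virialb} together with $V_2=\vO(\inp{x}^{-\rho-\epsilon_2})$ yields a positive commutator estimate of the form
\begin{equation*}
f(h)\,\i[h,A]\,f(h) \;\geq\; c\, f(h)\,g_\lambda^2\, f(h) + K
\end{equation*}
for any real $f\in C^\infty_\c$ supported near $\lambda_0$, with $K$ compact and $c>0$ independent of $\lambda\in\Lambda$. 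The crucial feature, in contrast with the non-threshold Mourre estimate, is that the lower bound involves $g_\lambda^2$ rather than a positive constant, which reflects that the classical radial speed at energy $\lambda-\lambda_0$ is $2g_\lambda$; this in turn forces the correct symmetric weight in the LAP to be $g_\lambda^{1/2}\inp{x}^{-1/2-\epsilon}$.

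For part \ref{it:p10a}, I would then run a standard differential inequality on the family
\begin{equation*}
F(\varepsilon) \;=\; g_\lambda^{1/2}\inp{x}^{-1/2-\epsilon}(h-\lambda\mp\i\varepsilon)^{-1}\inp{x}^{-1/2-\epsilon}g_\lambda^{1/2},
\end{equation*}
using $\tfrac{\d}{\d\varepsilon}(h-\lambda\mp\i\varepsilon)^{-1} = \pm\i (h-\lambda\mp\i\varepsilon)^{-2}$ together with the positive commutator bound above and a Hardy-type absorption of the cutoff errors, to conclude that $\norm{F(\varepsilon)}$ is bounded uniformly in $\varepsilon>0$ and $\lambda\in\Lambda$; weak-$*$ compactness then yields \eqref{eq:x_weights2} and continuity in $\lambda$. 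For part \ref{it:p20a}, I would refine the above into a microlocal propagation estimate in the Weyl calculus associated with the metric $g = \inp{x}^{-2}\d x^2 + g_\lambda^{-2}\d\xi^2$, whose Planck constant $\inp{x}^{-1}g_\lambda^{-1}=\vO(\inp{x}^{\rho/2-1})$ is small at infinity uniformly in $\lambda\in\Lambda$. The symbol $\pm b = \pm\xi\cdot F/g_\lambda$ is, modulo lower order, a radial-momentum observable whose Poisson bracket with $a-1 = \xi^2/g_\lambda^2-1$ is strictly positive on $\supp\tilde\chi_-(\pm b)\cap\{a\leq 2\}$; inserting $\Opl(\chi_2^\pm)$ and running a second positive commutator argument of the same flavor as for part \ref{it:p10a}, one gains the asymmetry $\inp{x}^{\epsilon-1/2}$ versus $\inp{x}^{-\epsilon'-1/2}$ on the incoming/outgoing side. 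The full power $(\inp{x}g_\lambda)^s$ is then produced by iteration, using that $\Opl(\chi_2^\pm)$ almost commutes with $(\inp{x}g_\lambda)^s$ in this calculus.

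The principal obstacle is to make all the above estimates uniform down to $\lambda=\lambda_0$, where the characteristic variety of $h-\lambda$ becomes tangent to the zero section in $\xi$ and the threshold analysis genuinely differs from the positive-energy case. This is precisely the role played by Condition \ref{assump:conditions2}: the bounds on $V_1''$ and $r^{-1}V_1'$ in terms of $\bigl(\int_1^r(-V_1)^{-1/2}\bigr)^{-2}$ guarantee that the error terms generated by $\comm{V_2,A}$ and by the composition calculus contribute only a small multiple of the leading $g_\lambda^2$ term, uniformly in $\lambda\in\Lambda$. Once the uniform bounds are established, continuity of the $\vL(L^2)$-valued functions $\lambda\mapsto r_\lambda^\pm$ in the weighted sense follows from joint continuity of $g_\lambda$ in $(\lambda,x)$, a density argument, and the resolvent identity relating $r_\lambda^\pm$ at nearby values of $\lambda$.
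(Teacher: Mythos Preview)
The paper does not prove this lemma at all: it is stated as an excerpt of \cite[Proposition 4.1]{DS1} (adapted to the present setting), and your sketch is essentially the strategy carried out in \cite{FS,DS1} --- a threshold-adapted Mourre/positive-commutator argument with conjugate operator $A=\Re(F\cdot p)$, lower bound $g_\lambda^2$ rather than a positive constant, and a microlocal refinement in the $(\inp{x}^{-1}g_\lambda^{-1})$-calculus to get the asymmetric weights and the gain $(\inp{x}g_\lambda)^s$.

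One small correction: your account of Condition \ref{assump:conditions2} is slightly off. The commutator error $[V_2,A]$ is controlled directly by the decay $V_2=\vO(\inp{x}^{-\rho-\epsilon_2})$ from Condition \ref{assump:conditions1}\ref{it:assumption4}, not by Condition \ref{assump:conditions2}. The latter is a classical nondegeneracy condition on $V_1$ (bounds on $V_1''$ and $r^{-1}V_1'$ relative to $(\int(-V_1)^{-1/2})^{-2}$) which in \cite{DS1} is used to construct the phase $\phi^\pm$ and the symbol $a^\pm$ when $V_2\neq 0$; the basic LAP bound \eqref{eq:x_weights2} already holds under Condition \ref{assump:conditions1} alone (this is essentially \cite{FS}), while \eqref{eq:PsDO_part21} and the propagation machinery are developed in \cite[Section 4]{DS1}. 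Otherwise your outline is on target.
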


Now taking $\tau\in H^k(\S^{n-1})$ with  a sufficiently big $k$
(actually any $k>0$ suffices), we
will show
 by combining \eqref{eq:resolBAS}  with  Lemmas \ref{lem:22a} and \ref{lemma:reso-1-body}
that $S_{\alpha\alpha}(\lambda)\tau$ is a well-defined element of
$L^2(\S^{n-1})$, in fact with a continuous dependence of $\lambda\in
I_\delta^+$.

We insert \eqref{eq:resolBAS}
into the formula
\begin{align}\label{eq:S1}
  S_{\alpha\alpha}(\lambda)\tau = -2\pi
  J^+_\alpha(\lambda)^{*}T^-_\alpha(\lambda)\tau +2\pi \i
  T^{+}_\alpha(\lambda)^*R(\lambda+\i 0)T^-_\alpha(\lambda)\tau.
\end{align}

Ignoring the
contribution from $\widecheck T^{-}_\alpha(\lambda)$ (its  contribution
is  a `partial smoothing operator'
  as exemplified  in  the beginning of the
  proof of Theorem \ref{thm:ScatN}) we obtain by using   Lemma
  \ref{lem:22a}  that the first term $-2\pi
J^+_\alpha(\lambda)^{*}T^-_\alpha(\lambda)\tau\approx -2\pi
J_N^+(\lambda)^{*}T_N^-(\lambda)\tau \in L^2(\S^{n-1})$.

For
the second term in \eqref{eq:S1} we (again) ignore
  terms containing  $\widecheck T^{\pm}_\alpha(\lambda)$ and consider only
${\varphi_\alpha}\otimes T_N^{\pm}(\lambda)$ using  \eqref{eq:partitionleft}
to
 write
\begin{align*}
T_N^{\pm}(\lambda)=\Opr(\chi_1+\chi^\pm_3)T_N^{\pm}(\lambda)+
\Opr(\chi^\pm_2)T_N^\pm(\lambda)= T_{1,3}^\pm(\lambda)+ T_2^\pm(\lambda).
\end{align*}  By Lemma \ref{lem:22a} \ref{it:p40}  the first term has
strong decay, so let us consider the seemingly worse term given (up to a
constant) by
\begin{align*}
 \parbb{\varphi_\alpha\otimes &T_2^+(\lambda)}^* R(\lambda+\i 0)
 \varphi_\alpha\otimes T_2^-(\lambda)\\
&=T_2^+(\lambda)^*S^* R(\lambda+\i 0)S
 T_2^-(\lambda).
\end{align*}  The contribution from the
'leading  terms', cf. \eqref{eq:mainSplit},  are
\begin{align*}
  T_2^+(\lambda)^*r^+_\lambda
 T_2^-(\lambda)\mand
T_2^+(\lambda)^*S^* \breve R(\lambda+\i 0)\Pi'S
 T_2^-(\lambda),
\end{align*} respectively. The second term vanishes. For the first
term we
use Lemma \ref{lemma:reso-1-body}~\ref{it:p20a} for   the case of `+' and  see that indeed $T_2^+(\lambda)^*r^+_\lambda
 T_2^-(\lambda)\tau\approx T_N^+(\lambda)^*r^+_\lambda
 T_N^-(\lambda)\tau$ is a well-defined element of
$L^2(\S^{n-1})$.

Of course there are other terms to consider, and for some of those
also Lemma \ref{lemma:reso-1-body}~\ref{it:p10a} is needed.  We can
check all other terms (cf. the proof of Theorem \ref{thm:ScatN}) and see
that they are well-defined with a continuous dependence of
$\lambda$. Furthermore we can write the formula for the action by
$S_{\alpha\alpha}(\lambda)$ as
\begin{eqnarray}
  \label{eq:SmatrixN}
 S_{\alpha\alpha}(\lambda)\tau
=-2\pi W_\alpha^{+}(\lambda)^*T_\alpha^-(\lambda)\tau,\label{eq:SgodN}
\end{eqnarray}  and this is a  continuous  $L^2(\S^{n-1})$--valued function of $\lambda\in
I^+_\delta$ for $\tau\in H^k(\S^{n-1})$, $k>0$.

Next we will examine the degree of regularity that is needed on $\tau$,
measured by the size of  $k$. Let
\begin{align*}
  S_w(\lambda-\lambda_0)
&= -2\pi J_N^+(\lambda)^{*}T_N^-(\lambda)
   +2\pi \i T_N^{+}(\lambda)^*r(\lambda+\i 0)T_N^-(\lambda);\quad
  \lambda\geq \lambda_0.
\end{align*}  Note that $S_w(\cdot)$  is the scattering
matrix for the one-body problem given by \eqref{eq:Smatrix}.
We will
examine the quantity
\begin{align*}
  \widecheck S_{\alpha\alpha}(\lambda)\tau=S_{\alpha\alpha}(\lambda)\tau-
  S_w(\lambda-\lambda_0) \tau.
\end{align*} By the above
preliminary investigation the term $S_w(\cdot)$
is the `leading term', and we know
 that $k=0$ works for this term
although $k=\epsilon$ for any $\epsilon>0$ is neeeded for
operator-continuity, see \eqref{eq:1Sbnd} and the discussion there. So
we expect well-definedness of $\widecheck
S_{\alpha\alpha}(\lambda)\tau$ and   continuous dependence of $\lambda$
 with  less regularity imposed on  $\tau$. In fact we can show
that $k=-\epsilon$ for a computable $\epsilon>0$ works for this term,
even for   operator-continuity. More generally  we have the following
result given in terms of any $k\geq 0$ satisfying  one of the options
\begin{align}\label{eq:minC0}
  \begin{cases}
   & k+1/2+ (1/2-k)\rho/2< 1/2+\rho/2,\quad \,k\in [0,1/2],\\
&k+1/2< 1/2+\rho/2,\quad \,k>1/2,
  \end{cases}
  \end{align} or equivalently stated, one of the options
\begin{align}\label{eq:minC02}
  \begin{cases}
   & 0\leq k\leq 1/2\quad\mand \quad k< \rho(4-2\rho)^{-1},\\
&1/2<k<\rho/2,\quad \,\rho>1.
  \end{cases}
  \end{align}

  \begin{thm}\label{thm:ScatN} Suppose \eqref{eq:multI}, i.e. $m_a=1$. For any $k\geq 0$ obeying \eqref{eq:minC02} the operator
  \begin{align*}
    \widecheck S_{\alpha\alpha}(\lambda)=S_{\alpha\alpha}(\lambda)- S_w(\lambda-\lambda_0)\in {\mathcal L}\parb{H^{-k}(\S^{n-1}),
  H^k(\S^{n-1})}
\end{align*} with a continuous dependence of $\lambda\in I_\delta^+$.
\end{thm}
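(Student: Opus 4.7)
The plan is to expand out the difference of the two scattering matrices using the channel decomposition of $T_\alpha^\pm(\lambda)$ and $J_\alpha^\pm(\lambda)$ together with the resolvent splitting \eqref{eq:mainSplit}, identify $S_w(\lambda-\lambda_0)$ as the principal contribution, and then show that each of the remaining cross-terms is a partial smoothing operator from $H^{-k}(\S^{n-1})$ to $H^k(\S^{n-1})$ by combining Lemmas \ref{lem:22a} and \ref{lemma:reso-1-body} with the $N$-body resolvent bounds from Chapter \ref{Spectral analysis of H' near E_0}. The numerical constraint \eqref{eq:minC02} will arise naturally from a weight-counting argument.

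First I would substitute $T_\alpha^\pm(\lambda)=\varphi_\alpha\otimes T_N^\pm(\lambda)+\widecheck T_\alpha^\pm(\lambda)$, $J_\alpha^\pm(\lambda)=\varphi_\alpha\otimes J_N^\pm(\lambda)$ and \eqref{eq:mainSplit} into the formula \eqref{eq:S1}. Using $S^*S=1$, $\Pi'S=0$ and $S^*\varphi_\alpha\otimes f_a=f_a$, the combination $-2\pi(\varphi_\alpha\otimes J_N^+(\lambda))^*(\varphi_\alpha\otimes T_N^-(\lambda))+2\pi\i(\varphi_\alpha\otimes T_N^+(\lambda))^*\,Sr^+_\lambda S^*\,(\varphi_\alpha\otimes T_N^-(\lambda))$ reconstructs exactly $S_w(\lambda-\lambda_0)$. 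The remainder $\widecheck S_{\alpha\alpha}(\lambda)$ is therefore a finite sum of cross-terms falling into three classes: (i) terms in which at least one factor $\widecheck T_\alpha^\pm(\lambda)$ appears, which by \eqref{eq:smallT} decomposes as a piece of order $\vO(\langle x\rangle^{-\infty})$ plus a piece of order $\vO(\langle x\rangle^{-1-\rho})J_\alpha^\pm(\lambda)$; (ii) the contribution built from the middle term $\breve R(\lambda+\i0)\Pi'$ of \eqref{eq:mainSplit}, which again survives only through $\widecheck T_\alpha^\pm$-factors thanks to $\Pi'S=0$; (iii) the contribution from $\widecheck R(\lambda\pm\i0)$ as specified in \eqref{eq:resolBAS}, expressible via $(1+v^\pm_\lambda r^\pm_\lambda)^{-1}$ and $\breve R(\lambda\pm\i0)\Pi'$.

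To promote each remainder from $L^2$-boundedness to mapping $H^{-k}(\S^{n-1})\to H^k(\S^{n-1})$, I would read Lemma \ref{lem:22a}\ref{it:p20} dually: extracting $k$ powers of $(\langle x\rangle g_\lambda)$ and the factor $\langle x\rangle^{-1/2-\epsilon}g_\lambda^{1/2}$ allows $J_N^\pm(\lambda)^*$ to absorb a position-weight of order $\langle x\rangle^{k(1+\rho/2)+1/2+\epsilon}g_\lambda^{-k-1/2}$ when targeting $H^k(\S^{n-1})$. Symmetrically, the $T_N^\mp(\lambda)$ factor on the other side absorbs weights via Lemma \ref{lem:22a}\ref{it:p30} once we have localized with $\Opr(\chi_2^\mp)$ through the partition \eqref{eq:partitionleft} (the $\chi_1+\chi_3^\mp$ piece is smoothing by \ref{it:p40}). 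For the $r^\pm_\lambda$ factor in class (iii) and inside the $Sr^+_\lambda S^*$ part of class~(i), Lemma \ref{lemma:reso-1-body}\ref{it:p10a}--\ref{it:p20a} delivers the pair of $g_\lambda^{1/2}\langle x\rangle^{-1/2-\epsilon}$-factors needed to close the weight count. For the $\breve R(\lambda+\i0)\Pi'$ contributions I would invoke Theorem \ref{thm:powers} and Proposition \ref{prop:microLoc2} to redistribute polynomial weights at the cost of $\epsilon$-powers. Counting powers: the available decay in each remainder is $\langle x\rangle^{-1-\rho}$, while the required position-weight for $H^{-k}\to H^k$ boundedness is $\langle x\rangle^{k(1+\rho/2)+1/2+\epsilon}$ in the regime $k\leq 1/2$ (where the unwanted $g_\lambda^{-k-1/2}$ is absorbed by the two $g_\lambda^{1/2}$-factors of Lemmas \ref{lem:22a}\ref{it:p20} and \ref{lemma:reso-1-body}\ref{it:p20a}), and $\langle x\rangle^{k+1/2+\epsilon}$ for $k>1/2$ (where one $g_\lambda^{1/2}$-factor is instead absorbed into the decay gained from $\langle x\rangle^{-1-\rho}$). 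These balances are exactly the two lines of \eqref{eq:minC0}.

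The main obstacle is the class~(iii) term $T_N^+(\lambda)^*\widecheck R(\lambda+\i0)T_N^-(\lambda)$: unlike the other remainders, the non-principal part $\widecheck R(\lambda\pm\i0)$ only carries decay up to $\langle x\rangle^{-1/2+\epsilon}$ coming from the Besov-space limiting absorption principle for $\breve R$ (Theorem \ref{thmlapBnd}), not arbitrary polynomial decay. Unfolding \eqref{eq:resolBAS} one finds that each application of $v^\pm_\lambda$ improves the decay by a factor $\langle x\rangle^{-1-\rho}$ (from \eqref{eq:eff_v}, together with the polynomial decay of $\varphi_\alpha$), and iterating once through $(1+v^\pm_\lambda r^\pm_\lambda)^{-1}$ suffices to bring the total decay above the threshold dictated by \eqref{eq:minC0}. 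The bookkeeping of powers in this iteration, combined with the microlocal bounds of Theorem \ref{thm:powers} and Proposition \ref{prop:microLoc2} used to pass weights through the factors of $\breve R(\lambda\pm\i0)\Pi'$ appearing in $\widecheck R(\lambda\pm\i0)$, is the only delicate point; norm-continuity in $\lambda\in I_\delta^+$ then follows from the norm-continuity statements in Lemmas \ref{lem:22a}, \ref{lemma:reso-1-body} and the continuity of $\breve R(\lambda\pm\i0)$ and $(1+v^\pm_\lambda r^\pm_\lambda)^{-1}$ already recorded in \eqref{eq:fredInverse}--\eqref{eq:resolBAS}.
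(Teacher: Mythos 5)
Your proposal follows the paper's own route: the same subtraction of $S_w(\lambda-\lambda_0)$, the same splitting of $R(\lambda+\i 0)$ via \eqref{eq:resolBAS}--\eqref{eq:mainSplit}, essentially the same three classes of remainder terms, and the same ingredients (Lemma \ref{lem:22a}, the partition \eqref{eq:partitionleft}, Lemma \ref{lemma:reso-1-body}, the limiting absorption bound of Theorem \ref{thm:powers}, and \eqref{eq:fredInverse} with \eqref{eq:strongConst}); your ``iterating once through $(1+v^{+}_\lambda r^{+}_\lambda)^{-1}$'' is exactly the paper's peeling off of one factor $v^{+}_\lambda r^{+}_\lambda$. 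The one blemish is the weight bookkeeping for $J_N^{\pm}(\lambda)$: Lemma \ref{lem:22a} \ref{it:p20} lets it absorb the weight $(\langle x\rangle g_\lambda)^{k}\langle x\rangle^{1/2+\epsilon}g_\lambda^{-1/2}=\langle x\rangle^{k+1/2+\epsilon}g_\lambda^{k-1/2}$, not $\langle x\rangle^{k(1+\rho/2)+1/2+\epsilon}g_\lambda^{-k-1/2}$ as you wrote; with your exponent the balance would give $k<\rho/(2+\rho)$ rather than the full range \eqref{eq:minC02}, so your claim that your count reproduces \eqref{eq:minC0} ``exactly'' does not hold as written. With the corrected factor one uses $g_\lambda^{k-1/2}\lesssim\langle x\rangle^{(1/2-k)\rho/2}$ for $k\le 1/2$ and $g_\lambda^{k-1/2}\lesssim 1$ for $k>1/2$, which is precisely the left-hand side of \eqref{eq:minC0}, and the rest of your argument closes as in the paper. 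A minor simplification you missed: in the term $J^{+}_\alpha(\lambda)^{*}\widecheck T^{-}_\alpha(\lambda)$ only the piece $\Pi\widecheck T^{-}_\alpha(\lambda)=\vO(\inp{x}^{-\infty})$ of \eqref{eq:smallT} contributes, since $J^{+}_\alpha(\lambda)^{*}\Pi'=0$.
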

\begin{remarks}\label{remarks:inelastic-scattering-n}
  \begin{enumerate}[1)]
  \item The number $s=1/2+\rho/2$ on the
  right-hand side in \eqref{eq:minC0} is the number appearing in
  \eqref{eq:strongConst}. The left-hand side
  comes from estimating the factor $\langle x \rangle ^{k+1/2}
  g^{k-\frac 12}$; it is  uniformly bounded by the  power $C{\langle x
    \rangle}^c$ with $c$ given as the expression appearing to the left in
  \eqref{eq:minC0}. If we replace $ I_\delta^+$ by  $
  I_\delta^+\setminus\set{\lambda_0}$  in Theorem \ref{thm:ScatN}  any $k< \rho/2$ suffices.
\item \label{item:multCase} We assumed $m_a=1$. However the interested
  reader may check that if $m_a>1$, then Theorem \ref{thm:ScatN}
  remains valid.
 Moreover the off-diagonal elements  of
  the scattering matrix,  denoted by $S_{\beta\alpha}(\lambda)$ with
  $\alpha\neq \beta$, fulfill the same assertion as the one for $
  \widecheck S_{\alpha\alpha}(\lambda)$ in Theorem \ref{thm:ScatN}.
\item \label{item:multCase3} Under the additional condition that $\lambda_0=\Sigma_2$ one
  can show that $\widecheck S_{\alpha\alpha}(\lambda )$ is bounded on any of the
  spaces $H^{l}(\S^{n-1})$, $l\in \R$,  with a continuous dependence of
  $\lambda\in I_\delta^+$, in fact it is partially
  smoothing. More precisely one can show in this case that  for all
  $l\in \R$ the
  operator $ \widecheck S_{\alpha\alpha}(\lambda)\in {\mathcal L}\parb{H^{l-k},
  H^{l+k}}$  with a continuous dependence on
  $\lambda\in I_\delta^+$. This is for any $k\geq 0$ obeying
\eqref{eq:minC02}, in fact for a computable bigger $k$ using
  \eqref{eq:strongConst} below
  for $s=1+\rho$ (recall from   the discussion after
  \eqref{eq:eff_v} that the latter  boundedness   condition is fulfilled for
  $\lambda_0=\Sigma_2$). The proof  consists of combining ideas from Subsection \ref{subsubsec: Elastic
    scattering at lowest threshold} with \cite[Proposition 4.1]{DS1}, however  we shall not
  elaborate. In particular since $S_w(\lambda-\lambda_0)$ has a similar
  property up a loss of an `$\epsilon$-smoothness', cf. \cite[Theorem 7.2]{DS1}, we
  conclude (more precisely stated) that for all $l\in \R$ and
  $\epsilon>0$ the operator $ S_{\alpha\alpha}(\lambda)\in {\mathcal L}\parb{H^{l},
  H^{l-\epsilon}}$  with a continuous dependence of
  $\lambda\in I_\delta^+$. Note that
  this implies that $S_{\alpha\alpha}(\lambda)\tau\in
  C^\infty(\S^{n-1})$ for $\lambda\in I_\delta^+$ and   $\tau\in C^\infty(\S^{n-1})$.
\end{enumerate}
\end{remarks}
\begin{proof} [Proof of Theorem \ref{thm:ScatN}]

  We need to treat the term $-2\pi
  J^+_\alpha(\lambda)^{*}\widecheck T^-_\alpha(\lambda)$ left out when
  discussing $-2\pi J^+_\alpha(\lambda)^{*}T^-_\alpha(\lambda)$
  above. (Note that $-2\pi
  J^+_\alpha(\lambda)^{*}\parb{{\varphi_\alpha}\otimes T_N^{\pm}(\lambda)}$
  already is subtracted in the definition of $\widecheck
  S_{\alpha\alpha}(\lambda)$.) Writing $J^+_\alpha(\lambda)^{*}\widecheck
  T^-_\alpha(\lambda)=J^+_\alpha(\lambda)^{*}\Pi \widecheck
  T^-_\alpha(\lambda)$ we can invoke \eqref{eq:smallT}.

  The remaining  terms of $\widecheck
S_{\alpha\alpha}(\lambda)$
  fall into three  disjoint groups according to whether  there is a dependence of:
\begin{enumerate}[\bf a)]
\item  \label{item:15} $\breve R(\lambda+\i 0))$, and no dependence of $r_\lambda^+\parb{1+v^+_\lambda r^+_\lambda}^{-1}$.
\item  \label{item:16}
  $r_\lambda^+\parb{1+v^+_\lambda r^+_\lambda}^{-1}$, and no dependence of $\breve R(\lambda+\i
  0))$.
\item \label{item:18} $\breve R(\lambda+\i
  0))$  as well as a dependence of  $r_\lambda^+\parb{1+v^+_\lambda r^+_\lambda}^{-1}$.
\end{enumerate}

\subStep{\ref{item:15}} We need to treat
\begin{align*}
  T^+_\alpha(\lambda)^* \breve R(\lambda+\i
  0)\Pi'T^-_\alpha(\lambda)=\widecheck T^+_\alpha(\lambda)^* \Pi'\breve R(\lambda+\i
  0)\Pi'\widecheck T^-_\alpha(\lambda).
\end{align*}
 Due to \eqref{eq:smallT}   we need
 to bound
\begin{align}\label{eq:Jbnd}
   \inp{x}^{-1/2-\rho+\epsilon}J_\alpha^{\pm}(\lambda)\in \vL^{-k}=\vL^{-k}(\bX)
   \quad \text{ for some }\epsilon>0.
 \end{align}
 By  Lemma \ref{lem:22a} \ref{it:p20} this requires
\begin{align}\label{eq:minC03}
  \begin{cases}
   & k+ (1/2-k)\rho/2<\rho,\quad \,k\in(0,1/2],\\
&k< \rho,\quad \,k>1/2,
  \end{cases}
  \end{align} which is weaker than \eqref{eq:minC02}.

\subStep{\ref{item:16}}  We need to treat
\begin{align*}
  &T^+_\alpha(\lambda)^*Sr^+_\lambda\parb{1+v^+_\lambda r^+_\lambda}^{-1}S^*T^-_\alpha(\lambda)-T_N^+(\lambda)^*r^+_\lambda
  T_N^-(\lambda)\\
&=\parbb{T^+_\alpha(\lambda)^*Sr^+_\lambda S^*T^-_\alpha(\lambda)-T_N^+(\lambda)^*r^+_\lambda
  T_N^-(\lambda)}\\
&-T^+_\alpha(\lambda)^*Sr^+_\lambda \parb{1+v^+_\lambda r^+_\lambda}^{-1}v^+_\lambda r^+_\lambda S^*T^-_\alpha(\lambda)
\end{align*} Due to \eqref{eq:smallT}  the first term is a smoothing
operator, i.e. in ${\mathcal L}(H^{-l}(\S^{n-1}),
  H^{l}(\S^{n-1}))$ for any $l\in \R$.   We shall frequently in the
  rest of the proof use
\begin{align}
  \label{eq:parUniy}
  1=\inp{x}^{-s}\inp{x}^{s};\quad s=1/2+\rho/2.
\end{align} For the second term it suffices, due to
\eqref{eq:fredInverse} and \eqref{eq:parUniy}, to bound
\begin{align}\label{eq:cruBnd}
  \inp{x}^{-s}r^{\mp} _\lambda S^* T_\alpha^{\pm}(\lambda)\in
  \vL^{-k}.
 \end{align}
 Here we used that
 \begin{align}
   \label{eq:strongConst}
    \quad v^+_\lambda \in\vL(L^2_{-s},L^2_s ).
  \end{align} (We could   do better using \eqref{eq:multI}, cf. the discussion after \eqref{eq:eff_v},
  but we prefer to use a method that more or less obviously  generalizes to
  the case where $m_a>1$, c.f. Remark
  \ref{remarks:inelastic-scattering-n} \ref{item:multCase}.) To show
  \eqref{eq:cruBnd} it suffices due to \eqref{eq:smallT} and  Lemma
  \ref{lem:22a} \ref{it:p40} to bound
\begin{align}\label{eq:cruBnd2}
  \inp{x}^{-s}r^{\mp} _\lambda  T_2^{\pm}(\lambda)\in  \vL^{-k},
 \end{align} or equivalently that
\begin{align}\label{eq:cruBnd3}
   T_N^{\pm}(\lambda)^*\Opl(\chi_2^\pm) r^{\pm} _\lambda\in
  \vL\parb{L_s^2,H^k(\S^{n-1})}.
 \end{align} Due to  Lemma
  \ref{lem:22a}   it suffices in  turn to show that
  \begin{align}
    \label{eq:next}
    (\langle x \rangle g_\lambda)^{k} \langle x \rangle^{\epsilon-1/2 }
    g_\lambda^{1/2}\Opl(\chi_2^\pm) r^\pm_\lambda \in
  \vL\parb{L_s^2,L^2}\,\text { for some }\epsilon>0.
  \end{align}
  We need to bound
\begin{align*}
    (\langle x \rangle g_\lambda)^{k} \langle x \rangle^{\epsilon-1/2 } g_\lambda^{1/2}\Opl(\chi_2^\pm) r^\pm_\lambda g_\lambda^{1/2}&\langle x \rangle^{-1/2-2\epsilon}
(\langle x \rangle g_\lambda)^{-k}\\
& \parbb{g_\lambda^{-1/ 2}\langle x \rangle^{1/2+2\epsilon}
(\langle x \rangle g_\lambda)^{k}   \inp{x}^{-s}}\in \vL(L^2).
  \end{align*}
 Using \eqref{eq:PsDO_part21} we need to check  that the last factor is
bounded. This amounts to
\begin{align}\label{eq:minC04}
  \begin{cases}
   & k+1/2+2\epsilon+ (1/2-k)\rho/2\leq s,\quad \,k\in(0,1/2],\\
&k+1/2+2\epsilon\leq s,\quad \,k>1/2,
  \end{cases}
  \end{align} which indeed is fulfilled for small  $\epsilon>0$ thanks
   to \eqref{eq:minC0}.

\subStep{~\ref{item:18}} There are four  terms to be considered  given (up to the
common  factor
$2\pi\i$) by:
\begin{align*}
  &S_1=-T^+_\alpha(\lambda)^*Sr^+_\lambda\parb{1+v^+_\lambda  r^+_\lambda}^{-1}S^*I_a\breve R(\lambda+ \i
    0)\Pi'T^-_\alpha(\lambda),\\
&S_2=T^+_\alpha(\lambda)^* \breve R(\lambda+\i
  0)\Pi' I_aSr^+_\lambda\parb{1+v^+_\lambda r^+_\lambda}^{-1}S^* I_{a}\breve R(\lambda+ \i 0
    )\Pi' T^-_\alpha(\lambda),\\
&S_3=-T^+_\alpha(\lambda)^* \breve R(\lambda+\i
  0)\Pi' I_aSr^+_\lambda S^* T^-_\alpha(\lambda),\\
&S_4=T^+_\alpha(\lambda)^* \breve R(\lambda+\i
  0)\Pi' I_aSr^+_\lambda\parb{1+v^+_\lambda  r^+_\lambda}^{-1}
  v^+_\lambda  r^+_\lambda S^*  T^-_\alpha(\lambda).
\end{align*}

   For $S_1$ we insert the decomposition  \eqref{eq:parUniy}   to the right of
the factor $r^+_\lambda$ (the left one). Suppose we can show that
\begin{align*}
  \inp{x}^{s} S^*I_a\breve R(\lambda\pm  \i
    0)\Pi'T^\mp_\alpha(\lambda)\in
  \vL^{-k},
\end{align*} then we are done using \eqref{eq:cruBnd} and
\eqref{eq:minC04}. But due  to \eqref{eq:smallT}  this term has the form
\begin{align}\label{eq:bndGG}
  \vO(\inp{x}^{s-1-\rho})\breve R(\lambda\pm  \i
    0)\Pi'\widecheck T^\mp_\alpha(\lambda)=\vO(\inp{x}^{s-1-\rho})\breve R(\lambda\pm \i
    0)\vO(\inp{x}^{-1-\rho})J^\mp_\alpha(\lambda),
\end{align} for which we can apply
\eqref{eq:Jbnd}. The argument for  $S_3$ is the same.

We treat $S_2$ by  inserting
\eqref{eq:parUniy}  to the right as well as
to the left of the factor $r^+_\lambda$ (the left one) and then use
\eqref{eq:bndGG} and Lemma \ref{lemma:reso-1-body} \ref{it:p10a} (to
bound $\inp{x}^{-s}r^+_\lambda \inp{x}^{-s}$).

We treat $S_4$ by inserting \eqref{eq:parUniy} to the right of the far
left factor $r^+_\lambda$ and to the left of the far right factor
$r^+_\lambda$. Then we invoke \eqref{eq:cruBnd} and
\eqref{eq:strongConst}.

\end{proof}

\subsubsection{Elastic scattering  at
  $\lambda_0$, a `geometric'  approach}\label{subsubsec:geometric approach}
In the spirit of \cite[Section 8]{DS1} we will give a `geometric' description  of  the operator
$S_{\alpha\alpha}(\lambda)$ studied in Subsection \ref{subsubsec:Back to
  the N-body problem}. To keep
the discussion   short we shall  consider the limiting case $\lambda=\lambda_0$
only. For analogue results for (all most all) non-threshold energies in a  general
$N$-body setting we refer to \cite{Sk6}.

We recall
the following construction for  the one-body problem, see
\cite[Proposition 5.6]{DS1}, which could be a basis for discussing
$\lambda>\lambda_0$ also.

\begin{lemma}\label{lemma:1-body-phi} There exist  $R\geq R_0$ and $\tilde
  \sigma\in (0,\sigma_0]$ such that for all $x=|x|\hat x\in \R^n$ with
  $|x|\geq R$ and
  $\lambda\geq \lambda_0$  there exists a
  unique $\omega\in \S^{n-1}$ satisfying $\omega\cdot \hat x\geq
  1-\tilde\sigma$ (equivalently, $x\in \Gamma^+_{R,\tilde
    \sigma}(\omega)$) and
  $\partial_{\omega}\phi^+(x,\omega,\lambda-\lambda_0)=0$.  We introduce
  the notation $\omega_{\crt}^+=\omega_{\crt}^+(x,\lambda)$ for this
  vector. It is smooth in $x$,  and for some $\breve \epsilon =\breve \epsilon(\rho,\bar
  \epsilon_1,\epsilon_2)>0$
\[\partial ^\gamma_x(\omega_{\crt}^+-\hat x)=\vO(|x|^{-\breve \epsilon-|\gamma|}).\]
 Let
 \begin{equation}
   \label{eq:newph}
   \phi(x,\lambda)=\phi^+(x,\omega_\crt^+(x,\lambda),\lambda-\lambda_0).
 \end{equation}
This function  solves  the eikonal equation
\begin{equation*}
(\partial_x \phi(x,\lambda))^2+w(x)=\lambda-\lambda_0;\quad  |x|\geq
R\mand \lambda\geq \lambda_0.
\end{equation*}
In the spherically symmetric case (viz. $V_2(x)=V_2(r)$) we have $\omega_\crt^+=\hat x$ and
\begin{equation}\label{eq:eik222}
  \phi(x,\lambda)=\phi_\sph(x,\lambda):=\int_{R_0}^{|x|}\sqrt{\lambda-\lambda_0-w(r)}\d
r +\sqrt{\lambda-\lambda_0}R_0.
\end{equation}
\end{lemma}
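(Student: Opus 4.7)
The plan is to obtain $\omega_{\crt}^+(x,\lambda)$ by a quantitative implicit function theorem on the sphere, applied to the tangential-gradient equation $\partial_\omega\phi^+(x,\omega,\lambda-\lambda_0)=0$ in the cap $\{\omega\cdot\hat x\geq 1-\tilde\sigma\}$, and then to derive the eikonal equation for $\phi(x,\lambda)$ by differentiating through the critical-point relation. The model case to keep in mind is the free phase $\sqrt{\mu}\,x\cdot\omega$, whose restriction to $\S^{n-1}$ has a unique non-degenerate maximum at $\omega=\hat x$ with Hessian of order $-\sqrt{\mu}\,|x|\,I_{n-1}$.

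First I would use the representation $\nabla_x\phi^+(x,\omega,\mu)=\tfrac12\dot y^+(1,x,\omega,\mu)$ of Lemma~\ref{lemma:mixed_2}, together with the asymptotic behaviour of the orbit $y^+$ for large $|x|$ in $\Gamma^+_{R,\sigma_0}(\omega)$ and the decay hypotheses in Conditions~\ref{assump:conditions1}--\ref{assump:conditions2}, to extract
\[
\partial_\omega\phi^+(x,\omega,\lambda-\lambda_0)=-g_\lambda(|x|)\,|x|\,\Pi_\omega\hat x+\vO\bigl(g_\lambda(|x|)\,|x|^{1-\breve\epsilon}\bigr),
\]
where $\Pi_\omega$ is the tangential projection at $\omega$, for some $\breve\epsilon=\breve\epsilon(\rho,\bar\epsilon_1,\epsilon_2)>0$ uniform in $\lambda\geq\lambda_0$. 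Differentiating once more in $\omega$ gives a Hessian whose leading part, in a local chart around $\hat x$, is $-g_\lambda(|x|)\,|x|\,I_{n-1}+\vO(g_\lambda(|x|)\,|x|^{1-\breve\epsilon})$, hence invertible of size $g_\lambda|x|\geq c\,|x|^{1-\rho/2}$ once $R$ is large. Since the error in $\partial_\omega\phi^+$ is smaller than the Hessian scale by a genuine negative power of $|x|$, the implicit function theorem on $\S^{n-1}$ yields a unique solution $\omega_{\crt}^+(x,\lambda)$ in the cap, smooth in $x$; the claimed $x$-derivative bounds $\partial^\gamma_x(\omega_{\crt}^+-\hat x)=\vO(|x|^{-\breve\epsilon-|\gamma|})$ follow by iterated differentiation of the defining identity $\partial_\omega\phi^+(x,\omega_{\crt}^+,\lambda-\lambda_0)=0$ together with the analogous decay of mixed derivatives of $\phi^+$ inherited from \cite[Section 5]{DS1}.

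For the eikonal equation, I would compute
\[
\nabla_x\phi(x,\lambda)=(\nabla_x\phi^+)\bigl|_{\omega=\omega_{\crt}^+}+(\partial_\omega\phi^+)\bigl|_{\omega=\omega_{\crt}^+}\!\cdot\nabla_x\omega_{\crt}^+,
\]
where the tangential derivative $\nabla_x\omega_{\crt}^+$ takes values in $T_{\omega_{\crt}^+}\S^{n-1}$, so that the second summand vanishes by the very definition of $\omega_{\crt}^+$. Substituting into $(\nabla_x\phi)^2+w(x)$ and invoking \eqref{eq:eik2} for $\phi^+(\cdot,\omega_{\crt}^+,\lambda-\lambda_0)$ gives the desired identity. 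In the spherically symmetric case, rotational invariance of the Hamiltonian flow and of $w$ implies $\phi^+(Rx,R\omega,\mu)=\phi^+(x,\omega,\mu)$ for $R\in\mathrm{SO}(n)$; the uniqueness in the cap then forces $\omega_{\crt}^+(x,\lambda)=\hat x$, and \eqref{eq:eik222} follows by integrating the reduced one-dimensional eikonal equation $\phi'(r)=\sqrt{\lambda-\lambda_0-w(r)}$ from $R_0$ with boundary value $\sqrt{\lambda-\lambda_0}\,R_0$, inherited from the normalization $\phi^+(R_0\omega,\omega,\mu)=\sqrt{\mu}\,R_0$.

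The main obstacle is the threshold degeneracy: as $\lambda\downarrow\lambda_0$ the momentum scale $g_\lambda(|x|)$ drops to $\sqrt{-V_1(|x|)}\sim|x|^{-\rho/2}$, so the Hessian estimate scales only like $|x|^{1-\rho/2}$ and must still dominate all the perturbative error terms coming from $V_1-$radial corrections and from $V_2$. Verifying this uniformly down to $\lambda=\lambda_0$ requires carefully tracking the precise homogeneity of the corrections in the construction of $\phi^+$, using Condition~\ref{assump:conditions2} (which controls $V_1'$ and $V_1''$ against the characteristic WKB scale $(\int_1^r g_\lambda^{-1}\,ds)^2$) and the decay of $V_2$ from Condition~\ref{assump:conditions1}\ref{it:assumption4}; this is exactly the threshold-compatible analogue of the positive-energy estimates worked out in \cite[Section~5]{DS1}, and it is where the constant $\breve\epsilon$ is pinned down.
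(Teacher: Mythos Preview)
The paper does not give its own proof of this lemma: it is stated as a recollection from \cite[Proposition~5.6]{DS1}, with no argument supplied. Your sketch is a faithful outline of how that result is actually obtained in \cite{DS1}---implicit function theorem on the sphere applied to $\partial_\omega\phi^+=0$, with the non-degeneracy of the Hessian driven by the leading $-g_\lambda(|x|)\,|x|$ scale, the eikonal equation for $\phi$ falling out of the chain rule and the criticality condition, and the spherically symmetric case handled by rotational invariance plus uniqueness.

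Your identification of the main analytical content is also on target: the passage to the threshold $\lambda=\lambda_0$ is exactly where the construction in \cite{DS1} differs from a standard positive-energy stationary-phase argument, and Condition~\ref{assump:conditions2} is precisely what guarantees that the perturbative corrections to $\phi^+$ (from $V_1''$, $V_1'$ and $V_2$) remain subordinate to the Hessian scale $|x|^{1-\rho/2}$ uniformly down to $\lambda=\lambda_0$. So nothing is missing or wrong in your plan; it is simply the argument of \cite[Section~5]{DS1}, which the present paper takes as a black box.
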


  With reference to Theorem  \ref{prop:Sommerfeld} let $\chi(\cdot<\sigma)$ be one of the
functions described  there  (with  $\sigma>0$ sufficiently  small) but taken
with  the additional property  that  $\chi(\cdot <\sigma)=1$ on
$(-\infty, \sigma/2)$. Let $\widecheck B$ and $\widecheck B_\rho$  be given as in  Theorem
\ref{prop:Sommerfeld}, and let
\begin{align*}
 \mathcal V^+_{-s_0,\sigma}(\lambda_0)&=\{u\in \vB^*_{s_0}(\bX) \mid
      \Pi' u\in \vB^*, \quad \chi(
\widecheck B<\sigma)\Pi'u\in \vB^*_{0},\quad(H-\lambda_0)u=0\},\\
 \vR^+_\sigma
 &=\{u\in \vB^*_{s_0,0}(\bX) \mid \Pi' u\in \vB^*, \quad \chi(
\widecheck B<\sigma)\Pi'u\in \vB^*_{0}\}.
\end{align*} We  use the notation
$\hat y=y/|y|$ for (nonzero) vectors $y\in \bX_a$  as well
as
 \begin{align*}
   u_a^+(y)&=c_n\,g_{\lambda_0}^{-{1}/{2}}(|y|)\,|y|^{-\tfrac{n-1}{2}}
   \,\e^{\i\phi(y,\lambda_0)},\quad u_a^-(y)=\overline{u_a^+(y)};\quad c_n= \e^{\i
  \pi\tfrac{1-n}{4}}(4\pi)^{-{1}/{2}},\\
& v_a^{\pm}(y)=\pm\inp
  {y}^{\rho/2}g_{\lambda_0}(|y|)u_a^\pm(y).
 \end{align*}

\begin{thm}\label{thm:repEigen}
\begin{subequations}
  \begin{enumerate}[\normalfont 1)]
\item\label{it:onto}
 The channel wave matrix
 \begin{align*}
   W_\alpha^-(\lambda_0):\,L^2(\S^{n-1})\to
   \mathcal V^+_{-s_0,\sigma}(\lambda_0)\subset  \vB^*_{s_0}
 \end{align*}
  is a
   well-defined   bicontinuous isomorphism. (In particular the space
   $\mathcal V^+_{-s_0,\sigma}(\lambda_0)$ does not depend on the small $\sigma>0$.)
\item \label{it:uniq2}
 For all  $\tau\in L^2(\S^{n-1})$ the vectors
 $\tau^+=S_{\alpha\alpha}(\lambda_0)\tau $ and
 $\phi^+_\tau=W_\alpha^-(\lambda_0)\tau $ are  the unique
 vectors  in $L^2(\S^{n-1})$ and $\mathcal V^+_{-s_0,\sigma}(\lambda_0)$,
 respectively, fulfilling
\begin{align} \label{eq:asymp}
    \phi^+_\tau(x)
-{\varphi_\alpha}(x^a)\parbb{u_a^-(x_a)\tau(-\widehat{x_a})+u_a^+(x^a)\tau^+(\widehat{x_a})}\in
\vR^+_\sigma.
  \end{align}
\item \label{it:ivBder} For all  $\tau\in
  L^2(\S^{n-1})$, and with  $\tau^+=S_{\alpha\alpha}(\lambda_0)\tau $
  and $\phi^+_\tau=W_\alpha^-(\lambda_0)\tau $ as above,
\begin{align} \label{eq:asympBder}
  \begin{split}
    \parb{\widecheck B_\rho \Pi \phi^+_\tau}(x)
-{\varphi_\alpha}(x^a)\parbb{v_a^-(x_a)\tau(-\widehat{x_a})+v_a^+(x^a)\tau^+(\widehat{x_a})}\in \vB^*_{s_0,0}.
  \end{split}
  \end{align}
\end{enumerate}
\end{subequations}
\end{thm}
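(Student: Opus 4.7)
The plan is to combine the Sommerfeld theorem at threshold (Theorem \ref{prop:Sommerfeld}), the mapping properties in Lemma \ref{lem:22a}, and a WKB/stationary-phase analysis of $J_N^{\pm}(\lambda_0)$ paralleling \cite[Section 8]{DS1}. I will treat the three assertions in the order \ref{it:uniq2}, \ref{it:onto}, \ref{it:ivBder}, since \ref{it:uniq2} provides the identification of the channel wave matrix with the asymptotic profile and thereby drives both the surjectivity in \ref{it:onto} and the formula in \ref{it:ivBder}.

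For \ref{it:uniq2}, first I would check that $\phi^+_\tau = W_\alpha^-(\lambda_0)\tau \in \vB_{s_0}^*$ lies in $\mathcal V^+_{-s_0,\sigma}(\lambda_0)$: the equation $(H-\lambda_0)\phi^+_\tau=0$ follows from $T^-_\alpha(\lambda) = \i(HJ^-_\alpha(\lambda)-J^-_\alpha(\lambda)(p_a^2+\lambda_0))$ plus $R(\lambda_0+\i 0)(H-\lambda_0)=1$ on the range of $T^-_\alpha(\lambda_0)$; the microlocal condition $\chi(\widecheck B<\sigma)\Pi'\phi^+_\tau\in\vB_0^*$ follows from the mapping property \eqref{eq:45lem5} of Theorem \ref{prop:Sommerfeld} applied to the $R(\lambda_0+\i 0)$--term together with the outgoing microlocal support of $J_\alpha^-(\lambda_0)\tau$. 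Next, the asymptotic identification \eqref{eq:asymp} is obtained by writing
\begin{equation*}
\phi^+_\tau=\varphi_\alpha\otimes J_N^-(\lambda_0)\tau+\i R(\lambda_0+\i 0)T^-_\alpha(\lambda_0)\tau,
\end{equation*}
analyzing $J_N^-(\lambda_0)\tau$ by stationary phase in $\omega$ around the critical point $\omega=-\widehat{x_a}$ using Lemma \ref{lemma:1-body-phi} to produce the incoming leading term $u_a^-(x_a)\tau(-\widehat{x_a})$, and extracting from the second term the outgoing leading profile $u_a^+(x_a)\tau^+(\widehat{x_a})$ by combining the Sommerfeld representation \eqref{eq:resolBAS} for $R(\lambda_0+\i 0)$ with the one-body stationary-phase analysis of $r^+_{\lambda_0}S^*T^-_\alpha(\lambda_0)$; the coefficient is then forced to be $S_{\alpha\alpha}(\lambda_0)\tau$ by the very definition \eqref{eq:S1}. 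The error terms collected this way are verified to sit in $\vR^+_\sigma$: they are in $\vB^*_{s_0,0}$ because the subleading contributions carry an extra factor of $g_{\lambda_0}^{-1/2}\langle x\rangle^{-\epsilon}$ (cf.\ Lemma \ref{lem:22a}\ref{it:p30}--\ref{it:p40}) and they satisfy $\Pi'\cdot\in\vB^*$ with $\chi(\widecheck B<\sigma)\Pi'\cdot\in\vB^*_0$ by the microlocal bounds of Subsection \ref{subsec:Microlocal bounds and LAP}. For uniqueness, suppose another pair $(\psi,\sigma^+)$ satisfies the same relation; then $\psi-\phi^+_\tau\in\mathcal V^+_{-s_0,\sigma}(\lambda_0)$ and its asymptotic profile equals $\varphi_\alpha\otimes u_a^+(\sigma^+-\tau^+)(\widehat{\cdot})$ modulo $\vR^+_\sigma$, so applying Theorem \ref{prop:Sommerfeld} (with $\psi=0$) forces $\psi-\phi^+_\tau=0$, and then separating incoming from outgoing WKB profiles in $\vB^*_{s_0}/\vB^*_{s_0,0}$ yields $\sigma^+=\tau^+$.

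For \ref{it:onto}, continuity of $W_\alpha^-(\lambda_0):L^2(\S^{n-1})\to\vB_{s_0}^*$ follows from $g_{\lambda_0}^{1/2}J_N^-(\lambda_0)\in\vL(L^2(\S^{n-1}),\vB^*_{1/2}(\bX_a))$ in Lemma \ref{lem:22a}\ref{it:p20} and the mapping $R(\lambda_0+\i 0)\in\vL(\vB_{s_0},\vB^*_{s_0})$; injectivity is immediate from the $u_a^-$ coefficient in \eqref{eq:asymp}. For surjectivity, given $u\in\mathcal V^+_{-s_0,\sigma}(\lambda_0)$ the idea is to read off an `incoming amplitude' $\tau\in L^2(\S^{n-1})$ from the anti-outgoing part of $u$ at infinity along $\bX_a$ (using that $\chi(\widecheck B<\sigma)\Pi'u\in\vB^*_0$ suppresses everything except outgoing+incoming channel profiles, and using the relation between $\widecheck B_\rho\Pi u$ and $u$ from Corollary \ref{cor:energyPbnd}); the difference $u-W_\alpha^-(\lambda_0)\tau$ then lies in $\mathcal V^+_{-s_0,\sigma}(\lambda_0)$ and has no incoming profile, hence by the uniqueness in part \ref{it:uniq2} it must vanish. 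Bi-continuity is then automatic by the open mapping theorem.

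For \ref{it:ivBder}, I would apply $\widecheck B_\rho$ to \eqref{eq:asymp} projected onto $\ran\Pi$ and use that on the channel profiles $\varphi_\alpha\otimes u_a^{\pm}$ the operator $\widecheck B_\rho$ acts to leading order as multiplication by $\pm\langle x_a\rangle^{\rho/2}g_{\lambda_0}(|x_a|)$ by virtue of the eikonal equation $(\nabla_{x_a}\phi(\cdot,\lambda_0))^2+w=0$, whence $\widecheck B_\rho(\varphi_\alpha\otimes u_a^{\pm})\sim \varphi_\alpha\otimes v_a^{\pm}$; the error terms picked up (commutators with the cutoffs and subprincipal symbols) are shown to lie in $\vB^*_{s_0,0}$ exactly as in the last step of the proof of Theorem \ref{prop:Sommerfeld} via Corollary \ref{cor:energyPbnd} and the calculus \eqref{eq:gs}--\eqref{eq:FScomp}. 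The main obstacle is the WKB/stationary-phase identification of the outgoing profile with the scattering amplitude $S_{\alpha\alpha}(\lambda_0)\tau$ at the critical threshold energy, where the usual oscillatory-integral machinery degenerates as $g_{\lambda_0}$ vanishes at infinity; this is precisely why the weight $g_{\lambda_0}^{1/2}$ in Lemma \ref{lem:22a}\ref{it:p20} and the microlocal resolvent bounds in Section \ref{subsec:Microlocal bounds and LAP} are required.
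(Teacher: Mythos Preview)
Your overall strategy---Sommerfeld uniqueness, WKB asymptotics for existence, commutator calculus for \ref{it:ivBder}---matches the paper's, but the technical execution differs in two places, and one of them is a genuine gap.

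For \ref{it:uniq2} the paper does not carry out stationary phase directly on $J_N^-(\lambda_0)$. Instead it splits $W^-_\alpha(\lambda_0) = \varphi_\alpha\otimes W^-_w(0) + \widecheck W^-_\alpha(\lambda_0)$, invokes the known one-body asymptotic profile of $W^-_w(0)\tau$ from \cite[Theorem~8.2]{DS1} (and \cite{Sk5}), and shows via the mapping properties \eqref{eq:phsPro} that the remainder $\widecheck W^-_\alpha(\lambda_0)\tau$ lands in $\vR^+_\sigma$. The identification $\tau^+=S_{\alpha\alpha}(\lambda_0)\tau$ is then obtained not ``by definition'' but by a commutator-limit: for smooth $\tilde\tau$ one computes $\langle\tilde\tau, S_{\alpha\alpha}(\lambda_0)\tau\rangle = -2\pi\i\lim_n\langle J^+_\alpha(\lambda_0)\tilde\tau, [H,\chi_n(r)]W^-_\alpha(\lambda_0)\tau\rangle$ and recovers $\langle\tilde\tau,\tau^+\rangle$ from the boundary term (cf.\ \cite[Theorem~5.7]{DS1}). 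Your direct stationary-phase route is plausible but would amount to reproving the one-body result at threshold.

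The genuine gap is in surjectivity for \ref{it:onto}. You propose to ``read off an incoming amplitude'' from a general $u\in\mathcal V^+_{-s_0,\sigma}(\lambda_0)$, citing Corollary~\ref{cor:energyPbnd}---but that corollary applies only to $\phi=R(\lambda_0\pm\i0)\psi$ with $\psi\in\vB$, not to an arbitrary generalized eigenfunction, and there is no a priori asymptotic expansion of such $u$ to read off. The paper's argument is structural: form the eigentransform $f=T^*u\in\vB^*_{s_0}(\bX_a)$, verify (as in Step~V of the proof of Lemma~\ref{lem:eigentransform}) that $E^+_\vH(\lambda_0)f=0$, set $\tilde f=(1+r^+_{\lambda_0}v^+_{\lambda_0})f$ so that $h\tilde f=0$, and then invoke the one-body bijectivity $W^-_w(0):L^2(\S^{n-1})\to\{\tilde f\in\vB^*_{s_0}\mid h\tilde f=0\}$ from \cite[Theorem~8.2]{DS1} to produce $\tau$. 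The difference $u-W^-_\alpha(\lambda_0)\tau$ is then shown to satisfy all the hypotheses of Theorem~\ref{prop:Sommerfeld} with $\psi=0$ (the verification of $\chi(\widecheck B_\rho<\sigma)\Pi\check u\in\vB^*_{s_0,0}$ requires a separate computation, see \eqref{eq:Bgennem}), forcing it to vanish. Without this Grushin/eigentransform reduction to the one-body problem there is no mechanism to extract $\tau$ rigorously.

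For \ref{it:ivBder} your outline is correct in spirit; the paper again passes to the fiber via \eqref{eq:redB}, reducing to $\widecheck B_{a,\rho}f$, and then uses the symbol identity \eqref{eq:FScomp} together with an interpolation between $f$ and $Af$ (with $A=2g_a^{-1}\Opw(s^{-1})g_a$) rather than a direct eikonal computation.
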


\begin{proof}
  \subStep {I} We insert \eqref{eq:resolBAS} into  the definition
  \begin{align*}
    W^{-}_\alpha(\lambda)&=J^{-}_\alpha(\lambda)+\i R(\lambda{+}\i0)
T^{-}_\alpha(\lambda),
  \end{align*} leading to a study of
  \begin{subequations}
  \begin{align}\label{eq:modW}
    \widecheck
    W^{-}_\alpha(\lambda):=W^{-}_\alpha(\lambda)-{\varphi_\alpha}\otimes
    W^-_w(\lambda-\lambda_0),
  \end{align} where $W^-_w(\cdot)$ is the incoming wave matrix for the
  one-body problem as
  discussed in Subsection \ref{subsubsec: Review of DS1}. We
  take $\lambda=\lambda_0$. It is checked as in the proof of Theorem
  \ref{thm:ScatN}  (i.e.  by using Lemmas
  \ref{lem:22a} and \ref{lemma:reso-1-body})  and by using Proposition
  \ref{prop:microLoc2}, that for any $k\geq 0$ fulfilling
  \eqref{eq:minC02}
  \begin{align}\label{eq:phsPro}
    \begin{split}
     &\widecheck W^{-}_\alpha(\lambda_0)H^{-k} \subset \vB^*_{s_0},\quad
    \Pi' \widecheck W^{-}_\alpha(\lambda_0) H^{-k}\subset \vB^*,\\&
    \chi( \widecheck B<\sigma)\Pi' \widecheck W^{-}_\alpha(\lambda_0)
    H^{-k}\subset \vB^*_{0}, \quad \chi(\widecheck
    B_\rho<\sigma)\Pi\widecheck W^{-}_\alpha(\lambda_0)H^{-k} \subset
    \vB^*_{s_0,0}.
    \end{split}
\end{align}
\end{subequations} (For the last property we can use the argument
\eqref{eq:Bgennem} given below.) In particular \eqref{eq:phsPro}
hold for $k= 0$. Combining the first three assertions of
\eqref{eq:phsPro}  with \cite[Theorem
8.2]{DS1} (containing information on $W^-_w(0)$) we conclude that the
range of $ W^{-}_\alpha(\lambda_0)$ is a subset of
$\mathcal V^+_{-s_0,\sigma}(\lambda_0)$, i.e. that the map in \ref{it:onto}
is a well-defined  map. By the same argument it follows that this  map
is continuous.

  \subStep {II} Let any $\tau\in
  L^2(\S^{n-1})$ be given. By using the splitting \eqref{eq:modW}, Proposition
  \ref{prop:microLoc2}, \cite[Theorem 8.2]{DS1} and \cite{Sk5} we
  deduce that \eqref{eq:asymp} is fulfilled  with
  $\phi^+_\tau=W_\alpha^-(\lambda_0)\tau $ and for some  $\tau^+\in
  L^2(\S^{n-1})$.

 Using the formula
\begin{align*}
  S_{\alpha\alpha}(\lambda_0)^*= 2\pi W^{-}_\alpha(\lambda_0)
  ^{*}T^+_\alpha(\lambda_0)=2\pi \i\,
  W^-_\alpha(\lambda_0)^{*}(H-\lambda_0)J^+_\alpha(\lambda_0)
\end{align*}
 and  \cite[Theorem 5.7]{DS1} we
calculate for any $\tilde \tau\in
C^\infty(\S^{n-1})$
\begin{align*}
  \inp{\tilde\tau, S_{\alpha\alpha}(\lambda_0)\tau}&=-2\pi \i \lim_{n\to
    \infty}\inp{J^+_\alpha(\lambda_0) \tilde \tau, [H,\chi_n
    (r)]W^-_\alpha(\lambda_0)\tau}\\
&=-4\pi  \lim_{n\to
    \infty}\inp{J^+_\alpha(\lambda_0) \tilde \tau,
    g_{\lambda_0}\chi_n' W^-_\alpha(\lambda_0)\tau}\\
&=\inp{\tilde \tau, \tau^+}.
\end{align*}
We conclude that $\tau^+=S_{\alpha\alpha}(\lambda_0)\tau $. (Note that with
our normalization there is an extra factor $1/\sqrt 2$ in
\cite[(3.13) and Theorem 5.7]{DS1}.)

\subStep {III} For uniqueness, suppose that  for  a given $\tau\in
L^2(\S^{n-1})$ the formula \eqref{eq:asymp} is fulfilled with
$\tau^+=\tau_1$ and $\phi^+_\tau=\phi^+_1$
as well as for
$\tau^+=\tau_2$ and $\phi^+_\tau=\phi^+_2$, then by Theorem
\ref{prop:Sommerfeld}  (applied with $\psi=0$)
$\phi^+_1=\phi^+_2$,  which in turn implies that $\tau_1= \tau_2$. The injectivity part of the assertion
\ref{it:onto} follows similarly.

\subStep {IV} We   show that the map in \ref{it:onto} maps
onto (this finishes \ref{it:onto} by the open mapping theorem).  So
let $u\in\mathcal V^+_{-s_0,\sigma}(\lambda_0)$ be given, then we
need to find $\tau\in L^2=L^2(\S^{n-1})$ such that  $u=
W^{-}_\alpha(\lambda_0)\tau$. Due to the assumption $\chi(\widecheck B<\sigma)\Pi'u \in
  \vB^*_{0}$   we can  verbatim  use     Step \textit{V} of  the proof
     of Lemma \ref{lem:eigentransform} to conclude  that  $E^+_\vH f
     =0$, $f=T^*u\in \vB_{s_0}^*(=\vB^*_{s_0}(\bX_a) )$. The vector $\tilde
     f:=(1+r_{\lambda_0}^+v_{\lambda_0}^+)f\in  \vB_{s_0}^*$ fulfills $h\tilde f=0$. Now from \cite [Theorem 8.2]{DS1} we know that the map
 $W^-_w: L^2 (\S^{n-1})\to \set{\tilde f\in \vB^*_{s_0} \mid h\tilde f=0}$ is
 bijective. Whence  $\tilde
     f= W^-_w(0) \tau$ for some $\tau\in L^2$. We
 want to show that $\check u:=u-W^-_\alpha(\lambda_0)\tau=0$. To do
 this  we
 check the conditions of Theorem  \ref{prop:Sommerfeld} (with $\psi=0$). Since
 $u\in\mathcal V^+_{-s_0,\sigma}(\lambda_0)$ also
 $\check u\in\mathcal{V}^+_{-s_0,\sigma}(\lambda_0)$,  and it
 suffices to check that $\chi(\widecheck B_\rho<\sigma)\Pi\check u\in \vB^*_{s_0,0}$.
   Using \eqref{eq:modW} and \eqref{eq:phsPro} we calculate modulo $\vB^*_{s_0,0}$
 \begin{align*}
   \chi(\widecheck B_\rho<\sigma)\Pi\check u&\approx \chi(\widecheck
   B_\rho<\sigma)S f-\chi(\widecheck B_\rho<\sigma)\Pi{\varphi_\alpha}\otimes
    W^-_w(0)\tau\\
&= \chi(\widecheck
   B_\rho<\sigma)S \tilde f -\chi(\widecheck B_\rho<\sigma) S
    W^-_w(0)\tau+ \chi(\widecheck
   B_\rho<\sigma)S (f-\tilde f)\\
&= \chi(\widecheck
   B_\rho<\sigma)S (f-\tilde f).
 \end{align*}  We let $\widecheck B_{a,\rho} $ be given as in
 \eqref{eq:redB} and then
   write, substitute  and estimate
 \begin{align*}
   \chi(\widecheck
   B_\rho<\sigma)S -S\chi(\widecheck
   B_{a,\rho}<\sigma) =-\int_{\C} \widecheck R_\rho(z)[\widecheck
   B_\rho S-S\widecheck B_{a,\rho}]
  \widecheck R_{a,\rho}(z)\d \mu_\chi(z)
 \end{align*}
 by the  arguments
 of the proof of Lemma \ref{lem:Tcont},  estimating
   \begin{align}\label{eq:Bgennem}
   \chi(\widecheck
   B_\rho<\sigma)S (f-\tilde f) \approx -S\chi(\widecheck
   B_{a,\rho}<\sigma)r_{\lambda_0}^+v_{\lambda_0}^+f\approx 0;
 \end{align} in the last step we used   Lemma
   \ref{lemma:reso-1-body}.
  Whence $\check u=0$, and we are done.

  \subStep {V} It remains to show \ref{it:ivBder}. First we note that
  $\cB_\rho \Pi W_\alpha^-(\lambda_0)\tau\in \vB^*_{s_0}$, cf. the
  proofs of Theorem  \ref{prop:Sommerfeld} and Corollary
  \ref{cor:energyPbnd}  (and
  \eqref{eq:Bgennem}).  Writing $u=W_\alpha^-(\lambda_0)\tau$ and
  $f=T^*u\in \vB^*_{s_0}$ we know that $E^+_\vH f
     =0$ and $\cB_{a,\rho} f\in \vB^*_{s_0}$,  and we need to show that
\begin{align} \label{eq:asympBder2}
  \begin{split}
 {\widecheck B_{a,\rho} f}-f_2
\in \vB^*_{s_0,0};\quad f_2={v_a^-(x_a)\tau(-\widehat{x_a})+v_a^+(x^a)\tau^+(\widehat{x_a})}.
  \end{split}
  \end{align} From \eqref{eq:asymp} it follows  that
\begin{align} \label{eq:asympBder3}
  \begin{split}
 f-f_1
\in
\vB^*_{s_0,0};\quad f_1=u_a^-(x_a)\tau(-\widehat{x_a})+u_a^+(x^a)\tau^+(\widehat{x_a}).
  \end{split}
  \end{align}
  With  symbols $s$  and  $g_a$ given  as in  \eqref{eq:gs}
 we can use \eqref{eq:FScomp}   to
  write
  \begin{align*}
    \tilde f:=f-Af\in \vB^*_{s_0,0};\quad A=2g^{-1}_a\Opw (s^{-1})g_a.
  \end{align*}
   From the
  proof of Theorem  \ref{prop:Sommerfeld} it follows that
  $\widecheck B^2_{a,\rho} f\in \vB^*_{s_0}$ and therefore we know
  that  $ \widecheck B^2_{a,\rho}
  \tilde f\in \vB^*_{s_0}$,   and using this fact we can show that
  also $\widecheck  B_{a,\rho}\tilde f\in \vB^*_{s_0,0}$.
    Estimating
  \begin{align*}
    R^{-2s_0}\norm {\chi_R \widecheck  B_{a,\rho}\tilde f}^2\leq
    R^{-2s_0}\norm {\chi_R \tilde f}\,\norm {\chi_R \widecheck  B_{a,\rho}^2\tilde f}+o(R^0)=o(R^0),
  \end{align*} the assertion follows.

  Next we substitute $ f=Af+ \tilde f$ into \eqref{eq:asympBder2} and
  calculate modulo $\vB^*_{s_0,0}$ using \eqref{eq:asympBder3}
  \begin{align*}
    \widecheck  B_{a,\rho} f\approx  \widecheck B_{a,\rho}A f\approx \widecheck B_{a,\rho}A
    f_1\approx A \widecheck  B_{a,\rho}f_1\approx A f_2=f_2-\tilde f_2;\quad
    \tilde f_2=f_2-Af_2.
  \end{align*} We calculate using \eqref{eq:FScomp}
  \begin{align*}
    \tilde f_2\approx \tfrac 12 A\parbb{g^{-1}_a\Opw
      (s)g_af_2-2f_2}\approx \tfrac 12 A g^{-2}_ah f_2
  \end{align*}
  It remains to show that $Ag^{-2}_ah f_2\in \vB^*_{s_0,0}$. We approximate $\tau$ and $\tau^+$
  by sequences of smooth functions in $L^2(\S^{n-1})$ giving
  convergence in  $\vB^*_{s_0}$ since
  $Ag^{-2}_ah\in \vL( \vB^*_{s_0})$. Whence we can assume that
  $\tau,\tau^+\in C^\infty$, which allow us to compute
  $g^{-2}_a hf_2\in \vB^*_{s_0,0}$ and we are done.

\end{proof}

The following bound is a consequence of the definition of wave
operators and Theorem \ref{thm:ScatN}, but for completeness of
presentation we give an independent stationary proof.

\begin{cor}\label{cor:norm_less} Under the same  conditions as above $\norm{S_{\alpha\alpha}(\lambda_0)}\leq 1$.
  \end{cor}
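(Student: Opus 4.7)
The plan is to derive $\|S_{\alpha\alpha}(\lambda_0)\|\leq 1$ from a stationary flux identity built on Theorem \ref{thm:repEigen}. Fix $\tau\in L^2(\S^{n-1})$, set $\tau^+:=S_{\alpha\alpha}(\lambda_0)\tau$ and $\phi:=W_\alpha^-(\lambda_0)\tau\in \vV^+_{-s_0,\sigma}(\lambda_0)$, so that $(H-\lambda_0)\phi=0$ distributionally and $\phi\in H^1_{\mathrm{loc}}$. For the cutoff $\chi_R(x)=\chi(r/R)$ of \eqref{eq:14.1.7.23.24}, testing the eigenequation against $\chi_R\phi$ and taking imaginary parts yields the flux identity
\begin{equation*}
0=\langle\phi,\i[H,\chi_R]\phi\rangle,\qquad \i[H,\chi_R]=2\Re\parb{(\nabla\chi_R)\cdot p},
\end{equation*}
valid for every $R\geq 1$. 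The strategy is to pass to the limit $R\to\infty$ and identify the contributions of the three pieces: the oscillating channel terms (which survive in the limit), the cross terms (which vanish), and the $\Pi'$-part (which contributes with a favourable sign).

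Using the asymptotics \eqref{eq:asymp}, write $\phi=\phi_{\mathrm{ch}}+\phi_r$ with $\phi_{\mathrm{ch}}=\varphi_\alpha(x^a)\bigl(u_a^-(x_a)\tau(-\widehat{x_a})+u_a^+(x_a)\tau^+(\widehat{x_a})\bigr)$ and $\phi_r\in\vR^+_\sigma$. The leading eikonal identity $\hat y\cdot p_y u_a^{\pm}=\pm g_{\lambda_0}(|y|)u_a^{\pm}+\vO(|y|^{-1})$ combined with $g_{\lambda_0}|u_a^{\pm}|^2=|c_n|^2|y|^{-(n-1)}$ gives, after integration against $|\varphi_\alpha(x^a)|^2$ in the $x^a$-variable and passage to spherical coordinates in $y$,
\begin{equation*}
\langle\phi_{\mathrm{ch}},\i[H,\chi_R]\phi_{\mathrm{ch}}\rangle\xrightarrow[R\to\infty]{}2|c_n|^{2}\bigl(\|\tau^+\|^{2}-\|\tau\|^{2}\bigr)=\tfrac{1}{2\pi}\bigl(\|\tau^+\|^{2}-\|\tau\|^{2}\bigr).
\end{equation*}
The pure-phase cross terms between $u_a^+$ and $u_a^-$ carry non-stationary phase $\phi^+-\phi^-=2\phi(\cdot,\lambda_0)$ with radial derivative $2g_{\lambda_0}\neq 0$; a single radial integration by parts shows they are $o(1)$. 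The asymptotics \eqref{eq:asympBder} are invoked at this step to justify the identification of $\omega\cdot p\,\Pi\phi$ with the main term picked up from $v_a^{\pm}$, bypassing the need to differentiate the asymptotic relation \eqref{eq:asymp}.

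For the remainder $\phi_r$, decompose along $\Pi$ and $\Pi'$. The $\Pi$-part lies in $\vB^*_{s_0,0}$, so its contribution to the flux vanishes in the limit by Cauchy--Schwarz against $R^{-1}\chi'(r/R)\omega\cdot p\,\Pi\phi\in\vB^*_{s_0}$ (using again \eqref{eq:asympBder} and Corollary \ref{cor:energyPbnd}). The $\Pi'$-part satisfies $\chi(\widecheck B<\sigma)\Pi'\phi_r\in\vB^*_0$ by definition of $\vR^+_\sigma$, hence after symmetrising $\i[H,\chi_R]$ around $\widecheck B$ and inserting the partition $1=\chi(\widecheck B<\sigma)^2+\chi(\widecheck B>\sigma)^2$, only the purely outgoing component of $\Pi'\phi_r$ survives; its flux-integrand is non-negative in the limit, producing a non-negative remainder. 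Finally, the cross contribution $2\Re\langle\phi_{\mathrm{ch}},\i[H,\chi_R]\phi_r\rangle$ vanishes since for $|x|\geq R_0$ the cutoff $\chi_R$ depends only on $r(x)\approx|x_a|$, hence commutes with $\Pi$ modulo a $\Pi^{a}$-reducing correction whose matrix element against $\Pi'\phi_r$ is zero.

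Assembling the three contributions gives
\begin{equation*}
0=\tfrac{1}{2\pi}\bigl(\|\tau^+\|^2-\|\tau\|^2\bigr)+\vF,\qquad \vF\geq 0,
\end{equation*}
whence $\|S_{\alpha\alpha}(\lambda_0)\tau\|^2=\|\tau^+\|^2\leq\|\tau\|^2$. The principal obstacle is the rigorous disposal of the $\Pi'$-part and the $\Pi$/$\Pi'$ cross term: both require commuting $\chi_R$, $\omega\cdot p$ and $\Pi$ in the Besov calculus while tracking that the leading radial-momentum asymptotics of $\Pi'\phi$ is genuinely outgoing. This is essentially the stationary counterpart of the proof of the uniqueness statement in Theorem \ref{prop:Sommerfeld}, so the same commutator tools---in particular the identity \eqref{eq:Bgennem} and the microlocal regularity $\chi(\widecheck B_\rho<\sigma)\Pi\phi\in\vB^*_{s_0,0}$---are the natural ingredients.
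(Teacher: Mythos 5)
Your proposal is correct and is essentially the paper's own argument: the paper likewise tests the eigenequation for $\phi=W_\alpha^-(\lambda_0)\tau$ against $\chi_R\phi$, splits the resulting flux into the $\Pi$- and $\Pi'$-parts as in the proof of Theorem \ref{prop:Sommerfeld}, discards the asymptotically non-negative outgoing $\Pi'$-flux, and inserts the asymptotics \eqref{eq:asymp} and \eqref{eq:asympBder} to compute the surviving channel flux as a multiple of $\norm{\tau^+}^2-\norm{\tau}^2$. Your slightly different bookkeeping (channel plus remainder first, then $\Pi/\Pi'$ on the remainder) is only a reordering of the same decompositions, with the same key ingredients (Theorem \ref{thm:repEigen}, Corollary \ref{cor:energyPbnd}, and the commutator estimates such as \eqref{eq:cross}/\eqref{eq:Bgennem}).
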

  \begin{proof} Following the proof of Theorem  \ref{prop:Sommerfeld}
    we calculate, abbreviating $\phi=W_\alpha^-(\lambda_0)\tau$ for
    any given  $\tau\in
  L^2(\S^{n-1})$,
\begin{align*}
0=-\tfrac12\inp{\i [H,\chi_R ]}_{\phi}&={\inp{\widecheck
    B}_{\theta_R\Pi'\phi}+\inp{\widecheck  B}_{\theta_R\Pi\phi}}+o(R^0)\\
    &\geq -\Re \inp{\chi'_R\inp {x}^{-\rho/2}\Pi\phi, {\widecheck
        B_\rho}\Pi \phi}+o(R^0)
\end{align*} By \eqref{eq:asymp}
\begin{align*}
    \Pi \phi
-{\varphi_\alpha}(x^a)\parbb{u_a^-(x_a)\tau(-\widehat{x_a})+u_a^+(x^a)\tau^+(\widehat{x_a})}\in
\vB^*_{s_0,0}.
\end{align*}  We insert this and \eqref{eq:asympBder} and take $R\to
\infty$, yielding $0\geq -\norm {\tau}^2+\norm
{S_{\alpha\alpha}(\lambda_0)\tau}^2$.
    \end{proof}

Let us for any $\tau\in
  C^\infty(\S^{n-1})$ write, using  \eqref{eq:resolBAS} and \eqref {eq:deuseful},
\begin{align*}
  R(\lambda_0+ \i 0 )\psi&=S\phi_{a}^+(\lambda_0) +
  R'(\lambda_0+ \i 0 )\psi_a^+(\lambda_0) \text{ with input }
  \psi:=T^{-}_\alpha(\lambda_0)\tau\in \vB_{s_0},\\
\delta(H'-\lambda_0)&=
\tfrac1{2\pi \i}\parb{R'(\lambda_0+ \i 0 )-R'(\lambda_0 -\i 0 )}\in\vL(\vB,\vB^*).
\end{align*}
\begin{cor}\label{cor:norm_less22} Under the above  conditions and
  with $\psi_a^+(\lambda_0)$ defined by \eqref{eq:resolBAS} with
  $\psi=T^{-}_\alpha(\lambda_0)\tau$ for any  given $\tau\in
  C^\infty(\S^{n-1})$, or alternatively given by
  \begin{align*}
    \psi_a^+(\lambda_0)&= \Pi'(H-\lambda_0) \Pi' R(\lambda_0+ \i 0
    )T^{-}_\alpha(\lambda_0)\tau,\,\\&
    T^{-}_\alpha(\lambda_0)\tau=\i(H-\lambda_0)J^{-}_\alpha(\lambda_0)\tau\,\text{ and }\,\tau\in
  C^\infty(\S^{n-1}),
  \end{align*}
the following formula holds:
  \begin{align}\label{eq:transAttrac15}
    \norm {\tau}^2-\norm{S_{\alpha\alpha}(\lambda_0)\tau}^2=\pi\inp{\delta(H'-\lambda_0)}_{\psi_a^+(\lambda_0)}.
  \end{align}
 \end{cor}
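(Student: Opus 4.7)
The strategy is to apply the virial identity twice: once to the generalized eigenfunction $\phi_0 := W_\alpha^-(\lambda_0)\tau$ and once to the ``reduced'' function $\phi' := R'(\lambda_0+\i 0)\psi_a^+(\lambda_0)$; comparing the boundary contributions will yield the identity.

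First, following verbatim the proof of Corollary \ref{cor:norm_less} applied to $\phi_0$ (which satisfies $(H-\lambda_0)\phi_0 = 0$), the virial identity decomposes as
\begin{equation*}
  0 \;=\; -\tfrac12\inp{\i[H,\chi_R]}_{\phi_0} \;=\; \inp{\widecheck B}_{\theta_R\Pi\phi_0} + \inp{\widecheck B}_{\theta_R\Pi'\phi_0} + o(R^0),
\end{equation*}
and the $\Pi$-term's limit is computed via Theorem \ref{thm:repEigen} to be $\|\tau^+\|^2 - \|\tau\|^2$, where $\tau^+ := S_{\alpha\alpha}(\lambda_0)\tau$. Hence $\lim_{R\to\infty}\inp{\widecheck B}_{\theta_R\Pi'\phi_0} = \|\tau\|^2 - \|\tau^+\|^2$.

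Next, identify $\Pi'\phi_0$ explicitly using $W_\alpha^-(\lambda_0) = J_\alpha^-(\lambda_0) + \i R(\lambda_0+\i 0) T_\alpha^-(\lambda_0)$. Since $J_\alpha^-(\lambda_0)\tau = \varphi_\alpha \otimes J_N^-(\lambda_0)\tau$ lies in $\ran \Pi$, one has $\Pi' J_\alpha^-(\lambda_0)\tau = 0$; and by \eqref{eq:resolBAS} combined with $\Pi'\breve R = R'$ from \eqref{eq:basresbreve00} (and $\Pi' S = 0$), it follows that $\Pi' R(\lambda_0+\i 0) T_\alpha^-(\lambda_0)\tau = R'(\lambda_0+\i 0)\psi_a^+ = \phi'$. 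Therefore $\Pi'\phi_0 = \i\phi'$, and since the $\i$-phase cancels in the quadratic expectation, $\inp{\widecheck B}_{\theta_R\Pi'\phi_0} = \inp{\widecheck B}_{\theta_R\phi'}$.

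Finally, compute $\lim_R \inp{\widecheck B}_{\theta_R\phi'}$ independently via a second virial identity, now for the reduced operator $\breve H$ (which agrees with $H'$ on $\ran\Pi'$). Since $\phi' \in \ran\Pi'$ and therefore $\Pi\phi' = 0$, the analogous decomposition reduces to the single boundary term $-\tfrac12\inp{\i[\breve H,\chi_R]}_{\phi'} = \inp{\widecheck B}_{\theta_R\phi'} + o(R^0)$. Using $(\breve H - \lambda_0)\phi' = \psi_a^+$, the standard virial-with-source computation gives $-\tfrac12\inp{\i[\breve H,\chi_R]}_{\phi'} = \Im\inp{\psi_a^+, \chi_R\phi'}$, which converges to $\Im\inp{\psi_a^+, R'(\lambda_0+\i 0)\psi_a^+} = \pi\inp{\delta(H'-\lambda_0)}_{\psi_a^+}$ by Stone's formula. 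Combining the two expressions for $\lim_R \inp{\widecheck B}_{\theta_R\phi'}$ yields the corollary. The main obstacle will be verifying that the virial decomposition transfers cleanly to $\breve H$; this should follow from the relative compactness of $\breve K$ from \eqref{eq:38p}, the fact that $\breve H$ and $H'$ agree on $\ran\Pi'$, and commutator estimates of the type developed in Chapter \ref{Spectral analysis of H' near E_0}.
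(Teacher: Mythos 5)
Your argument is correct and follows essentially the same route as the paper: the virial identity for $\phi_0=W_\alpha^-(\lambda_0)\tau$, the evaluation of the $\Pi$-part via the asymptotics of Theorem \ref{thm:repEigen} giving $\norm{\tau^+}^2-\norm{\tau}^2$, and the identification $\Pi'\phi_0=\i R'(\lambda_0+\i 0)\psi_a^+(\lambda_0)$ so that the $\Pi'$-part becomes $\Im\inp{\psi_a^+,R'(\lambda_0+\i 0)\psi_a^+}=\pi\inp{\delta(H'-\lambda_0)}_{\psi_a^+}$ by Stone's formula. Your phrasing of the last step as a second virial identity with source for $\breve H$ is just a rewording of the paper's computation of $\lim_R\inp{-\tfrac12\i[H',\chi_R]}_{R'(\lambda_0+\i 0)\psi_a^+}$, so there is no substantive difference.
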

 \begin{proof}
   Recall
\begin{align*}
    \phi:=W^{-}_\alpha(\lambda_0)\tau&=J^{-}_\alpha(\lambda_0)\tau+\i R(\lambda_0{+}\i0)
\psi.
  \end{align*} We used in the proof of Corollary \ref{cor:norm_less}
  that $\inp{\widecheck B}_{\theta_R\Pi'\phi}$ asymptotically is non-negative. More
  precisely we can  calculate
  \begin{align*}
    \lim_{R\to\infty }\inp{\widecheck
    B}_{\theta_R\Pi'\phi}&=\lim_{R\to\infty }\inp{-\tfrac12\i
                           [H',\chi_R ]}_{\phi}\\&=\lim_{R\to\infty
                                                   }\inp{-\tfrac12\i
                                                   [H',\chi_R
                                                   ]}_{R'(\lambda_0+
                                                   \i 0
                                                   )\psi_a^+(\lambda_0)}
                                                   =\pi\inp{\delta(H'-\lambda_0)}_{\psi_a^+(\lambda_0)}.
  \end{align*}
\end{proof}
\begin{remark}\label{remark:elast-scattA}  If $\lambda_0=\Sigma_2$ the
  right-hand side of \eqref{eq:transAttrac15} vanishes for all $\tau\in
  C^\infty(\S^{n-1})$. However for the non-multiple case above
   $\Sigma_2$ we dont see any  reason this  be the case. Consequently
  the option of `transmission'
is left as a
  conjecture for
   $\lambda_0>\Sigma_2$  (including the multiple case of Subsection \ref{subsec:Non-elastic
  scattering}). We shall study the problem of `non-transmission' for
the physics models at a
two-cluster threshold in  detail in Section \ref{sec:Transmission
  problem at threshold}.

\end{remark}
\subsubsection{Elastic scattering at $\Sigma_2$}\label{subsubsec: Elastic scattering at  lowest
   threshold}

We shall supplement Theorem \ref{thm:ScatN} under the additional
conditions that $\lambda_0=\Sigma_2$ and that   $V_1(r)= -\gamma r^{-\rho}$ for $r\geq1$, by
then proving  that  the kernel of $S_{\alpha\alpha} (\lambda_0)$ is
smooth outside $\{\omega\cdot \omega'= \cos
  \tfrac \rho{ 2-\rho}\pi\}$. With the new more restrictive condition on  $\lambda_0$,
\begin{equation}
  \label{eq:discr}\lambda_0\in\sigma_\d(H^a).
\end{equation} With \eqref{eq:discr} it is possible to `improve' on the
properties of the operator $T^\pm_\alpha(\lambda_0)$ by solving the
transport equations more carefully, cf. \cite {Bo, Sk2}. Relying  only
on \eqref{eq:discr} this does not
 need $\lambda_0$ to be
the lowest threshold. However we will need the
 additional property
 \begin{subequations}
 \begin{equation}
  \label{eq:discrkkk} \lambda_0\notin \sigma_{\ess}(H'),
\end{equation} see  Lemma \ref{lemma:Respectelast-scatt-at}, which
indeed is fulfilled for $\lambda_0=\Sigma_2$. Although this  condition
can be weakened to  $\lambda_0<\Sigma_3$, cf.
Remark \ref{remark:Sigma3vf_1c-case-lambd}, we will in this subsection
assume  $\lambda_0=\Sigma_2$.

 Recall that in this
 section  we impose \eqref{eq:15usual},  so  \eqref{eq:discrkkk} may be stated equivalently as
   \begin{equation}
     \label{eq:conlowsmo}
     \lambda_0\notin \sigma (H').
   \end{equation}
 \end{subequations}

We shall show  the following analogue of \cite[Theorem 9.3]{DS1}.

\begin{thm}  \label{thm:sings}
Suppose in addition to the conditions of Theorem \ref{thm:ScatN} that
$\lambda_0=\Sigma_2$ and that   $V_1(r)= -\gamma r^{-\rho}$
 for $r\geq 1$. Then
 the kernel $S_{\alpha\alpha}(\lambda_0)(\omega,\omega')$ is smooth
  outside the set $\{(\omega,\omega')\mid \omega\cdot \omega'= \cos
  \tfrac \rho{2-\rho}\pi\}$.
\end{thm}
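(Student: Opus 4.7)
The plan is to decompose the scattering matrix as
\[
S_{\alpha\alpha}(\lambda_0) \;=\; S_w(0) \;+\; \widecheck S_{\alpha\alpha}(\lambda_0),
\]
where $S_w(\cdot)$ is the one-body scattering matrix associated with $h=p_a^2+w$ and $\widecheck S_{\alpha\alpha}(\lambda_0)$ is the remainder of Theorem \ref{thm:ScatN}. Under the additional assumption $V_1(r)=-\gamma r^{-\rho}$ for $r\geq 1$, the one-body result recalled at the end of Subsection \ref{subsubsec: Review of DS1} (i.e.\ \cite[Theorem~9.3]{DS1}) shows that the kernel of $S_w(0)$ is smooth off the set $\{\omega\cdot\omega'=\cos\frac{\rho}{2-\rho}\pi\}$. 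It therefore suffices to show that $\widecheck S_{\alpha\alpha}(\lambda_0)$ has a globally smooth kernel.

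First I would upgrade the bound in Theorem \ref{thm:ScatN} to the partial smoothing property asserted in Remark \ref{remarks:inelastic-scattering-n} \ref{item:multCase3}: for every $l\in\R$ and every admissible $k\geq 0$,
\[
\widecheck S_{\alpha\alpha}(\lambda_0) \in \mathcal{L}\bigl(H^{l-k}(\S^{n-1}),H^{l+k}(\S^{n-1})\bigr).
\]
The key input here is that for $\lambda_0=\Sigma_2$ with $m_a=1$ the effective non-local potential $v^+_{\lambda_0}$ has the full polynomial decay indicated after \eqref{eq:eff_v} (controlled by \eqref{eq:second_bnd}), so the operator bound \eqref{eq:strongConst} can be improved to hold for any $s<1+\rho$. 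Propagating this through the estimates $a)$--$c)$ in the proof of Theorem \ref{thm:ScatN} means every weight $\inp{x}^{\pm s}$ I insert via \eqref{eq:parUniy} can be shifted by an arbitrary amount along the resolvents and the FIO building blocks, using the calculus of Lemmas \ref{lem:22a}, \ref{lemma:reso-1-body} and the microlocal bounds of \cite[Proposition~4.1]{DS1}. The outcome is that the obstruction to increasing $k$ arbitrarily (which in Theorem \ref{thm:ScatN} is encoded in the $\rho$-dependent inequality \eqref{eq:minC02}) disappears once one is allowed to trade weight for regularity at both ends.

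Once this improved mapping property is in place, choosing any $k>0$ and iterating the action $(l-k)\mapsto(l+k)$ shows that $\widecheck S_{\alpha\alpha}(\lambda_0)$ sends $H^{-\infty}(\S^{n-1})$ continuously into $H^{+\infty}(\S^{n-1})$. By the Schwartz kernel theorem combined with Sobolev embedding on the compact manifold $\S^{n-1}$, its kernel is then $C^\infty(\S^{n-1}\times\S^{n-1})$. Combining this with the singular-support statement for $S_w(0)$ yields the claim.

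The principal obstacle is the upgraded partial smoothing step. Each of the four classes of terms contributing to $\widecheck S_{\alpha\alpha}(\lambda_0)$ (the correction coming from $\widecheck T^{\pm}_\alpha(\lambda_0)$ together with the three groups $a)$--$c)$ in the proof of Theorem \ref{thm:ScatN}) must be re-examined, and in each case I need to commute a symmetric pair of weights $\inp{x}^s$ and $\inp{x}^{-s}$ through the composite of $r^\pm_{\lambda_0}$, $\breve R(\lambda_0\pm \i0)$, $(1+v^+_{\lambda_0}r^+_{\lambda_0})^{-1}$ and the FIO $J^\pm_\alpha(\lambda_0)$ without losing control. The weighted microlocal estimates \eqref{eq:x_weights2}--\eqref{eq:PsDO_part21} provide the one-body input, the improved bound on $v^+_{\lambda_0}$ that is only available at $\lambda_0=\Sigma_2$ handles the Feshbach correction, and the resolvent weight estimate \eqref{eq:91BB} handles the $\breve R$-part; the bookkeeping required to line up all the weight exponents simultaneously is where care is needed.
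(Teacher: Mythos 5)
Your reduction hinges on the claim that $\widecheck S_{\alpha\alpha}(\lambda_0)=S_{\alpha\alpha}(\lambda_0)-S_w(0)$ has a globally smooth kernel, and that is precisely where the argument breaks. The partial smoothing asserted in Remark \ref{remarks:inelastic-scattering-n} \ref{item:multCase3} is intrinsically a \emph{finite} gain: each unit of Sobolev regularity on the sphere must be paid for by a weight of order $(\langle x\rangle g_{\lambda_0})^{1}\approx \langle x\rangle^{1-\rho/2}$ (see the factor $(\langle x\rangle g_\lambda)^{-k}$ in \eqref{eq:x_weights29bc9} and in \eqref{eq:PsDO_part21}), while the total weight budget is capped by the fixed polynomial decay of the couplings — $\Pi'\widecheck T^{\pm}_\alpha(\lambda_0)=\vO(\inp{x}^{-1-\rho})J^{\pm}_\alpha(\lambda_0)$ in \eqref{eq:smallT}, $v^+_{\lambda_0}=\vO(r^{-2-2\rho})$ for $m_a=1$, and the $\vO(\inp{x}^{-1-\rho})$ bounds on $\Pi I_a\Pi'$. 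Improving \eqref{eq:strongConst} to $s=1+\rho$ enlarges the admissible $k$ but does not remove the cap; the obstruction encoded in \eqref{eq:minC0}--\eqref{eq:minC02} does not ``disappear''. Moreover, even granting $\widecheck S_{\alpha\alpha}(\lambda_0)\in\vL(H^{l-k},H^{l+k})$ for all $l$ and all admissible $k$, your iteration step is vacuous: a single fixed operator with gain $2k$ cannot be bootstrapped — for $u\in H^{-N}$ you only get $\widecheck S_{\alpha\alpha}(\lambda_0)u\in H^{-N+2k}$, never $H^{+\infty}$. So the smooth-kernel claim for the remainder is unproved and, in view of the finitely decaying inter-cluster couplings, should not be expected to hold; the reduction to \cite[Theorem 9.3]{DS1} therefore collapses.

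The paper's proof is of a different nature: it does not try to show that the difference from the one-body matrix is smoothing, but instead localizes the singularities microlocally. One first improves the modifiers by solving transport equations in a forward cone, using the reduced resolvent $\tilde r_\alpha(\lambda_0)$ of $H^a$ to absorb the $\Pi'I^{(1)}_a\Pi$-couplings order by order (Lemma \ref{lem:conTransp}), so that $T^{\pm}_{\alpha,M}(\lambda_0)$ lives in $\vC^{\pm}_{\bd,\,2s_0}+\vL^{k,a}_s$ with symbols supported in an annular angular region. Then one introduces the fibered scattering wave front set $WF^a_s$, proves that $\breve R(\lambda_0)\Pi'$ respects it (Lemma \ref{lemma:Respectelast-scatt-at}, which is where $\lambda_0=\Sigma_2$, i.e. $\lambda_0\notin\sigma(H')$, enters), and uses threshold propagation of singularities for $r^+_{\lambda_0}$ (Lemma \ref{lemma:propSing}) together with the explicit zero-energy classical flow for $V_1=-\gamma r^{-\rho}$ to show that the wave front sets of $\partial^\delta_\omega t^+_{\alpha,M}(\cdot,\omega)$ and of $R(\lambda_0+\i 0)\partial^{\delta'}_{\omega'}t^-_{\alpha,M}(\cdot,\omega')$ are disjoint whenever $\omega\cdot\omega'\neq\cos\tfrac{\rho}{2-\rho}\pi$, which gives smoothness of the kernel off that set directly from the representation \eqref{eq:Sforma}. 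If you want to salvage your strategy, you would have to replace the global smoothing claim by exactly this kind of angle-dependent microlocal disjointness argument.
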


 We need various  preparation partly similarly to \cite[Section 9]{DS1} to prove this
 result. The first result stated as Lemma \ref{lem:conTransp} uses
 only the conditions of Theorem \ref{thm:ScatN} and
 \eqref{eq:discr}.

To see how
  \eqref{eq:discr}  allows us to `improve'
  $T^\pm_\alpha(\lambda_0)=(H-\lambda_0)J^\pm_\alpha(\lambda_0)$ we  look at the term
\begin{align*}
  \parb{I^{(1)}_a-w}J_\alpha^{\pm}(\lambda_0)=\Pi\parb{I^{(1)}_a-w}J_\alpha^{\pm}(\lambda_0)+\Pi'I^{(1)}_a\Pi
  J_\alpha^{\pm}(\lambda_0).
\end{align*}
 The first term has  polynomial decay, viz. it belongs to $
\vL_s^k={\mathcal
 L}(H^k(\S^{n-1}),L_s^{2}(\bX))$ for any $k,s\in \R$,  and the second term is
 $\vO(\inp{x}^{-1-\rho})J_\alpha^{\pm}(\lambda)$,
 cf. \eqref{eq:smallT}.  We look at the leading
 term after a Taylor expansion
 \begin{align*}
   \Pi'I^{(1)}_a\Pi
  J_\alpha^{\pm}(\lambda_0)\approx\parb{\Pi'x^a\varphi_\alpha}\otimes\parb{\nabla I^{(1)}_a(x_a) J_N^{\pm}(\lambda_0)}.
 \end{align*}  Let $\tilde{r} _\alpha(\lambda_0)$ be the
 reduced resolvent of $H^a$ at $\lambda_0$.
Then we add the term
\begin{align*}
   J_\alpha^{1\pm}(\lambda_0):=-\parb{\tilde{r} _\alpha(\lambda_0)\Pi'x^a\varphi_\alpha}\otimes\parb{\nabla I^{(1)}_a(x_a) J_N^{\pm}(\lambda_0)}
\end{align*} to $J_\alpha^{\pm}(\lambda_0)$  and compute     for the resulting
operator $\breve J_\alpha^{1\pm}(\lambda_0)=J_\alpha^{\pm}(\lambda_0)+J_\alpha^{1\pm}(\lambda_0)$, observing  a
cancellation,
 \begin{subequations}
 \begin{align}\label{eq:comE1}
   \begin{split}
   &(H-\lambda_0)
   \breve J_\alpha^{1\pm}(\lambda_0)\\
&=(p_a^2+I_a)
  J_\alpha^{\pm}(\lambda_0)-\parb{\Pi'x^a\varphi_\alpha}\otimes\parb{\nabla
  I^{(1)}_a(x_a) J_N^{\pm}(\lambda_0)}\\
&-(p_a^2+I_a)\parb{\tilde{r}
  _\alpha(\lambda_0)\Pi'x^a\varphi_\alpha}\otimes\parb{\nabla I^{(1)}_a(x_a)
  J_N^{\pm}(\lambda_0)}\\
&={\varphi_\alpha}\otimes
  T_N^{\pm}(\lambda_0)\tau+\vO(\inp{x}^{-2-\rho})J_\alpha^{\pm}(\lambda_0)\\
&-\parb{\tilde{r}
  _\alpha(\lambda_0)\Pi'x^a\varphi_\alpha}\otimes\parb{h\nabla I^{(1)}_a(x_a)
  J_N^{\pm}(\lambda_0)}+\vO(\inp{x}^{-2-2\rho})J_\alpha^{\pm}(\lambda_0).
   \end{split}
\end{align} The second and the fourth terms have better decay than
 the term $\vO(\inp{x}^{-1-\rho})J_\alpha^{\pm}(\lambda)$ we
 started out with. The third term has a different form and needs
 examination. We compute, using the notation \eqref{eq:tildez} and
 doing for simplicity only the plus case,
 \begin{align}\label{eq:comE2}
\begin{split}
   &h\nabla I^{(1)}_a(x_a)
  \parb{J_N^{+}(\lambda_0)\tau}(x_a)\\&=\parbb{\vO(\inp{x_a}^{-1-\rho})h+\vO(\inp{x_a}^{-2-\rho})p_a+\vO(\inp{x_a}^{-3-\rho})}J_N^{+}(\lambda_0)\tau\\
&=
\int
  \e^{\i \phi^{+}(x_a,\omega,0)}\e^{\tilde\zeta^+
  (x_a,\omega,0)}\vO(\inp{x_a}^{-2-\rho 3/2})
  \tau(\omega)\d \omega.
 \end{split}
 \end{align}
 \end{subequations}
 Whence  the third term is  a power $\vO(\inp{x}^{-1-\rho/2})$ better than
 what we started with.

Partly motivated by the above considerations let  us
 introduce the spaces
 \begin{align}\label{eq:laTensora}
   \begin{split}
    L^{2,a}_{s}&=L^{2,a}_{s}(\bX)=\set{u\in  L^2_{s}(\bX)\mid  \forall m\in
\N: \inp{x^a}^mu\in L^2_{s}(\bX)};\quad s\in \R,\\ L^{2,a}_{-\infty}&=\cup_{s\in\R}
\,L^{2,a}_{s},\\
\vL_s^{k,a}&={\mathcal
 L}(H^k(\S^{n-1}),L_s^{2,a});\quad k,s\in \R.
 \end{split}
\end{align}
 We shall also need   the following subclasses
 $\vC^+_s$, $s\in \R$;  see \eqref{eq:incC} for a relationship.

\begin{defn}\label{defn:Class}
Let $\vF^a$ be the set of functions in
 $L^2_\infty(\bX^a)$
 of the form $f^a=T\varphi_\alpha$,  where $T$ is any multiple
 product of factors of $\Pi'$, $\tilde r_\alpha(\lambda_0)$ and  multiplication
 by  components  of $x^a$. Let $R\geq R_0$ and $\sigma'\in (0,\sigma_0)$
 be given.  We consider  operators of
 the tensor product type
 \begin{align*}
   f^a\otimes J^+_g;\quad (J^+_g\tau)(x_a)=\int_{\S^{n-1}}
  \e^{\i \phi^{+}(y,\omega,0)}\e^{\tilde\zeta^+ (y,\omega,0)}g(y,\omega)
  \tau(\omega)\d \omega,\quad y=x_a,
 \end{align*} where $ f^a\in\vF^a$ and the \emph{symbol} $g\in \vS^+_s$, meaning (with
 reference to \eqref{eq:Gamma}) that
 \begin{align}\label{eq:bndderG}
   |\partial_\omega^\delta\partial_{y}^\gamma
   g(y,\omega)|\leq C_{\delta,\gamma}\inp{y}^{-s-\abs{\gamma}}\mand
   \supp g\subseteq \Gamma^+_{R,\sigma'}.
 \end{align}

 The  class $\vC^+_s$, $s\in \R$,
  is  the set of operators given as a finite  sum  of such
 products, where for each term the symbol $g\in \vS^+_s$.
\end{defn}

Due to properties of $\phi^{+}$ and
 $\tilde\zeta^+$ it follows from the proof of \cite[Theorem
 6.5]{DS1} that
 \begin{align}\label{eq:incC}
   \vC^+_s\subset\vL_{s-(1-\rho/2)k-s_0-\epsilon}^{-k,a};\quad k\geq
   0,\,\epsilon>0.
 \end{align} We also note that $J^+(0)$ is of the form
 $J^+_g$ for some symbol $g\in \vS^+_0$, see
 \eqref{eq:int_ope} and \eqref{eq:jdef}, and therefore
 $J_\alpha^+(\lambda_0)\in \vC^+_0$.

 Ideally  we would like  to solve $
\breve
T^\pm_\alpha(\lambda_0)=(H-\lambda_0)\breve J^\pm_\alpha(\lambda_0)\in
\vL_s^{k,a}$ for any given $k\leq 0$ and $s\geq 0$  by modifying the construction
$J^\pm_\alpha(\lambda_0)$, say denoted by $\breve
J^\pm_\alpha(\lambda_0)\approx J^\pm_\alpha(\lambda_0)$. Of course
like   for the one-body
problem  we cannot do that, but we can  solve transport equations in
a (small) forward cone  (cf. the splitting for the one-body
problem,  $T_N^{\pm}
(\lambda)=T^{\pm}_\bd(\lambda-\lambda_0)+T^{\pm}_\pr (\lambda-\lambda_0)$, corresponding to
 the  decomposition of $t^{\pm}(x,\omega,\lambda)=t^{\pm}(x,\sqrt{\lambda}\omega)$ in  \cite[(5.8) and (5.16)]{DS1}). We shall mimic this procedure using  the
reduced resolvent $\tilde r_\alpha(\lambda_0)$ as an additional
tool. In fact  we computed  above  for $\breve J_\alpha^{1+}(\lambda_0):=J_\alpha^{+}(\lambda_0)+J_\alpha^{1+}(\lambda_0)$
 \begin{align*}
   (H-\lambda_0)\breve J_\alpha^{1+}(\lambda_0)={\varphi_\alpha}\otimes
  T_N^{+}(\lambda_0)+T_\alpha^{1+}(\lambda_0)+T_{\alpha,{\r}}^{1+}(\lambda_0),
 \end{align*} where $T_\alpha^{1+}(\lambda_0)\in \vC^+_{2+\rho 3/2}$   and (for example)
 \begin{align*}
   T_{\alpha,{\textrm
       r}}^{1+}(\lambda_0)\in\vL_{2+\rho -(1-\rho/2)l-s_0-\epsilon}^{-l,a};\quad l\geq 0,\,\epsilon>0.
 \end{align*} The argument relied  on using the first term of a Taylor
 expansion only, however we can do the Taylor
 expansion of $\Pi'I^{(1)}_a\Pi$ to any order, and each term will then
 contribute by a term in $\vC^+_{2+\rho}$ except (possibly) a  `remainder
 term' which can be taken in the   fixed space $\vL_s^{k,a}$ of
 interest. We can argue the same way for the fourth term
 $\vO(\inp{x}^{-2-2\rho})J_\alpha^{\pm}(\lambda_0)$ of \eqref{eq:comE1}. We conclude that
 $(H-\lambda_0)\breve J_\alpha^{1+}(\lambda_0)-{\varphi_\alpha}\otimes T_N^{+}(\lambda_0)+T\in \vL_s^{k,a}$ for
 some $T \in \vC^+_{2+\rho}$.

To improve further, for this  $T \in \vC^+_{2+\rho}$  ideally we would like  to `solve' the equation
$(H-\lambda_0)J_\alpha^{2+}(\lambda_0)\approx T$. This would yield a
better approximation by adding  $J_\alpha^{2+}(\lambda_0)$, viz.
 by considering $\breve J_\alpha^{2+}(\lambda_0):=J_\alpha^{+}(\lambda_0)+J_\alpha^{1+}(\lambda_0)+J_\alpha^{2+}(\lambda_0)$.
 We are lead to considering the following iteration scheme. Suppose
  that for given $m\in\N$ we have constructed $\breve
 J_\alpha^{m+}(\lambda_0):=J_\alpha^{+}(\lambda_0)+J_\alpha^{1+}(\lambda_0)+\cdots+J_\alpha^{m+}(\lambda_0)$
 such that
 \begin{align}
   \begin{split}
    \label{eq:ite} (H-\lambda_0)\breve
   J_\alpha^{m+}(\lambda_0)&-{\varphi_\alpha}\otimes T_N^{+}(\lambda_0)+T_{\pr}^{m+}(\lambda_0)-T_{\bd}^{m+}(\lambda_0)\in
   \vL_s^{k,a}\\
 T_{\pr}^{m+}(\lambda_0)&\in \vC^+_{s_m};\quad s_m=(m-1)\min\set{1-\rho/2,
 \rho/2}+2+\rho,\\
T_{\bd}^{m+}(\lambda_0)&\in \vC^+_{\bd, \,s_1},
   \end{split}
 \end{align} where $\vC^+_{\bd, \,s}$ is given as follows. The
 notation  $M^{\circ}$ refers in general to the interior of a subset $M$
 of a topological space (below taken as $\R^n\times \S^{n-1}$).
\begin{defn}\label{defn:Class2}
Let  $R\geq R_0$ and  $\sigma'\in (0,\sigma_0)$ be given, cf.
Definition \ref{defn:Class}, and  let
 $\sigma\in (0,\sigma')$. Then $\vC^+_{\bd, \,s}$ is the subclass of
 operators
 ${\varphi_\alpha}\otimes J^+_g \in\vC^+_{ s}$ for which the symbol
 $g=g^+_{\bd}\in \vS^+_{s}$ (as required by
 Definition \ref{defn:Class}) but in addition  has the support property
 \begin{align}\label{eq:symBD}
 \supp g^+_{\bd}\subseteq \Gamma^+_{R,\sigma'}\setminus \parb{\Gamma^+_{2R,\sigma}}^{\circ}.
 \end{align}
\end{defn}

So far we have verified \eqref{eq:ite} for $m=1$ only (with
$T_{\bd}^{1+}(\lambda_0)=0$). Now, suppose \eqref{eq:ite} for a given
$m\geq 1$. Then we split
\begin{align*}
  T=T_{\pr}^{m+}(\lambda_0)=\Pi T_{\pr}^{m+}(\lambda_0)+\Pi'T_{\pr}^{m+}(\lambda_0),
\end{align*}
 and both  of the terms to the right contributes to the construction of
$J_\alpha^{(m+1)+}(\lambda_0)$ as follows: For the first term we
mimic  \cite[Propositions  5.4 and 5.5]{DS1} and add correspondingly a
term, say $J_{\alpha,1}^{(m+1)+}(\lambda_0)$. By a Taylor expansion
(as used above) we see that
\begin{align}\label{eq:indStep}
  (H-\lambda_0)J_{\alpha,1}^{(m+1)+}(\lambda_0)-\Pi T_{\pr}^{m+}(\lambda_0)\in \vC^+_{s_{m+1}}+\vC^+_{\bd, \,s_1}+\vL_s^{k,a};
\end{align} we give the details below. Letting
\begin{align*}
  J_{\alpha,2}^{(m+1)+}(\lambda_0)=\tilde{r} _\alpha(\lambda_0)\Pi'T_{\pr}^{m+}(\lambda_0)
\end{align*} we then obtain that indeed \eqref{eq:ite} is fulfilled with
$m$ replaced by $m+1$ and with
\begin{align*}
  \breve J_\alpha^{(m+1)+}(\lambda_0)=\breve
  J_\alpha^{m+}(\lambda_0)+J_\alpha^{(m+1)+}(\lambda_0);\quad J_\alpha^{(m+1)+}(\lambda_0)=J_{\alpha,1}^{(m+1)+}(\lambda_0)+J_{\alpha,2}^{(m+1)+}(\lambda_0).
\end{align*} Note that obviously
$(H-\lambda_0)J_{\alpha,2}^{(m+1)+}(\lambda_0)-\Pi'T_{\pr}^{m+}(\lambda_0)\in
\vC^+_{s_{m+1}}+\vL_s^{k,a}$, cf.  \eqref{eq:comE1} and
\eqref{eq:comE2}.

To complete the recursive construction it  remains to justify
\eqref{eq:indStep}: We use the function in
\eqref{eq:chi^2}, more precisely we consider `cut-offs' $\bar\chi_R(r)$
and  $\chi_{\sigma,\sigma'}(\hat x\cdot \omega)$. Defined in terms of the function $\zeta^+=\zeta^+
  (x_a,\omega,0)$ of \eqref{eq:tildez} we let $A^+$ be the differential
  operator (in $x=x_a$)
  \begin{align*}
    A^+= \Delta +2(\nabla \zeta^+)\cdot \nabla +(\Delta \zeta^+)+ (\nabla \zeta^+)^2,
  \end{align*} cf. \eqref {eq:Bfac},  and we define then for $\Pi
  T_{\pr}^{m+}(\lambda_0)=\varphi_\alpha \otimes
  J^+_{g_{m}}$   correspondingly
  $J_{\alpha,1}^{(m+1)+}(\lambda_0)=\varphi_\alpha\otimes J^+_{g_{m+1}}\in\vC^+_{s_m-1-\rho/2}$, where (with $x=x_a$)
  \begin{align*}
    g_{m+1}:=-\i \chi_{\sigma,\sigma'}(\hat x\cdot \omega)\bar\chi_R(r)\int_1^\infty g_m(y(t,x,\omega,0),\omega)\,\d t;
  \end{align*} here $y(\cdot ,x,\omega,0)$ is the classical
  zero-energy orbit
  starting at $x$ for $t=1$ and with asymptotic normalized velocity
  $\omega=\lim_{t\to +\infty}y/\abs{ y}$, cf. \eqref{eq:mixed conditions222}.  Let
  \begin{align*}
    \breve g_{m+1}=\i \chi_{\sigma,\sigma'}(\hat x\cdot \omega)\bar\chi_R(r)\int_1^\infty A^+g_m(y(t,x,\omega,0),\omega)\,\d t
  \end{align*}
We calculate
  \begin{align*}
    &(H-\lambda_0)\varphi_\alpha\otimes J^+_{g_{m+1}}+\Pi T_{\pr}^{m+}(\lambda_0)+\varphi_\alpha\otimes J^+_{\breve
    g_{m+1}} \\&=\varphi_\alpha \otimes
                                              hJ^+_{g_{m+1}}+(I_a-w)\varphi_\alpha
                 \otimes
                                              J^+_{g_{m+1}}+\Pi T_{\pr}^{m+}(\lambda_0)+\varphi_\alpha\otimes J^+_{\breve g_{m+1}}\\
&\in \vC^+_{s_{m+1}}+\vC^+_{\bd, \,s_1}+\vL_s^{k,a};
  \end{align*} here we used a Taylor expansion to treat the second term
  (as we did in \eqref{eq:comE1}) and the fact that the
  three other terms cancel up to derivatives of the factor
  $\chi_1(r/R)\chi_2(\hat x\cdot \omega)$, cf. \eqref{eq:Bfac}. Using next  that $\breve
  g_{m+1}\in \vS_{s_{m+1}}^+$ indeed \eqref{eq:indStep} follows.

Next, rather than doing the Borel summation as for the one-body
problem, for simplicity we `terminate' the recursive construction at
$m=M$ taken so large that $\vC^+_{s_{m+1}}\subset \vL_s^{k,a}$,
cf. \eqref{eq:incC}. Beforehand we treated for convenience only the
plus case. Leaving it to the reader to figure out how Definitions
\ref{defn:Class} and \ref{defn:Class2} should read in the minus  case, and
how the above procedure correspondingly modifies,
we
consider henceforth  the   `improved' operators
$J^{\pm}_{\alpha,M}(\lambda_0):=\breve J^{M\pm}_{\alpha}(\lambda_0)$
and
\begin{equation*}
  T^\pm_{\alpha,M}(\lambda_0):=(H-\lambda_0)J^{\pm}_{\alpha,M}(\lambda_0)\in
  {\varphi_\alpha}\otimes T_N^{\pm}(\lambda_0)+\vC^\pm_{\bd, \,s_1}+\vL_s^{k,a}.
\end{equation*}
 In turn
 we
 consider  the corresponding  wave matrices
$W^{\pm}_{\alpha,M}(\lambda_0)$ and the scattering operator
$S_{\alpha,M}(\lambda_0)$; see \eqref{eq:Sforma0} below. Like
for the one-body problem the latter  quantites are canonical (this is
stated  more
precisely as the last assertion in the  following conclusion).
\begin{lemma}\label{lem:conTransp}
  Let $k\leq 0$, $s\geq 0$, $R\geq R_0$ and
 $0<\sigma<\sigma'<\sigma_0$. Then the above recursive procedure,
 terminating at any  sufficiently large  $M\in \N$, yields the
 existence of
 $J^{\pm}_{\alpha,M}(\lambda_0)\in
 J^{\pm}_{\alpha}(\lambda_0)+\vC^\pm_{2s_{0}}$, $g_{\bd,M}^\pm \in
 \vS^\pm_{2s_0}$ with $\supp g^\pm_{\bd,M}\subseteq
 \Gamma^\pm_{R,\sigma'}\setminus \parb{\Gamma^\pm_{2R,\sigma}}^{\circ}$ and
 $R_{s,M}^{k,a\pm}\in \vL_s^{k,a}$ such that
 \begin{align*}
   T^\pm_{\alpha,M}(\lambda_0)=(H-\lambda_0)J^{\pm}_{\alpha,M}(\lambda_0)
= (2\pi)^{-n/2}{\varphi_\alpha}\otimes J_{g}^{\pm}+R_{s,M}^{k,a\pm};\quad g=g^\pm_{\bd,M}.
 \end{align*} (In particular $ T^\pm_{\alpha,M}(\lambda_0)\in
\vC^\pm_{\bd, \,2s_0}+\vL_s^{k,a}$.)

 The  wave matrices
$W^{\pm}_{\alpha,M}(\lambda_0)$ and the scattering operator
$S_{\alpha,M}(\lambda_0)$ defined by the operators $J^{\pm}_{\alpha,M}(\lambda_0)$
and $T^\pm_{\alpha,M}(\lambda_0)$ coincide with
$W^{\pm}_{\alpha}(\lambda_0)$ and $S_{\alpha}(\lambda_0)$, respectively.
\end{lemma}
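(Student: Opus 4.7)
The plan is to rigorously justify the recursive scheme \eqref{eq:ite} sketched before the lemma, to quantify what is gained at each step, to terminate the recursion at a uniform $M$, and then to argue that the modification does not change the wave or scattering matrices.

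First, I would fix the three data $k\leq 0$, $s\geq 0$ and the cut-off parameters $R\geq R_0$, $0<\sigma<\sigma'<\sigma_0$ entering Definitions~\ref{defn:Class}--\ref{defn:Class2}. The base step $m=1$ has already been carried out in \eqref{eq:comE1}--\eqref{eq:comE2}, giving \eqref{eq:ite} with $T_{\bd}^{1+}(\lambda_0)=0$ and $T_{\pr}^{1+}(\lambda_0)\in \vC^+_{s_1}$, $s_1=2+\rho$, up to an error in $\vL^{k,a}_s$ obtained by truncating the Taylor expansion of $\Pi'I^{(1)}_a\Pi$ at sufficiently high order; the justification that the discarded remainders lie in $\vL^{k,a}_s$ follows from repeated $x^a$-differentiation combined with the polynomial decay of $\varphi_\alpha$ and $\tilde r_\alpha(\lambda_0)\Pi'x^a\varphi_\alpha$, each of which lies in $L^2_\infty(\bX^a)$ thanks to Conditions \ref{cond:smooth}, \ref{cond:uniq} and \eqref{eq:discr}.

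For the inductive step, assume \eqref{eq:ite} at level $m$ and split $T_{\pr}^{m+}(\lambda_0)=\Pi T_{\pr}^{m+}(\lambda_0)+\Pi'T_{\pr}^{m+}(\lambda_0)$. The $\Pi'$-piece is handled by $J^{(m+1)+}_{\alpha,2}(\lambda_0):=\tilde r_\alpha(\lambda_0)\Pi'T_{\pr}^{m+}(\lambda_0)$, which is well-defined precisely because of \eqref{eq:conlowsmo}; a Taylor argument as in \eqref{eq:comE1}--\eqref{eq:comE2} shows that this contributes a term in $\vC^+_{s_{m+1}}+\vL^{k,a}_s$ with $s_{m+1}=s_m+\min\{1-\rho/2,\rho/2\}$. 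The $\Pi$-piece is treated exactly as in \cite[Propositions 5.4, 5.5]{DS1}: writing $\Pi T_{\pr}^{m+}(\lambda_0)=\varphi_\alpha\otimes J^+_{g_m}$ with $g_m\in \vS^+_{s_m}$, set $J^{(m+1)+}_{\alpha,1}(\lambda_0)=\varphi_\alpha\otimes J^+_{g_{m+1}}$ with
\begin{equation*}
g_{m+1}(x_a,\omega)=-\i\,\chi_{\sigma,\sigma'}(\hat x_a\cdot\omega)\bar\chi_R(|x_a|)\int_1^\infty g_m\parb{y(t,x_a,\omega,0),\omega}\d t,
\end{equation*}
solving the transport equation along zero-energy classical orbits in the cone $\Gamma^+_{R,\sigma'}$. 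Using \eqref{eq:Bfac} and a Taylor expansion of $I_a$ centered on $x_a$, as in \eqref{eq:comE1}, the resulting error equals a term in $\vC^+_{s_{m+1}}$ plus a new boundary contribution supported on $\supp \nabla\bigl(\chi_{\sigma,\sigma'}(\hat x_a\cdot\omega)\bar\chi_R(|x_a|)\bigr)\subseteq\Gamma^+_{R,\sigma'}\setminus(\Gamma^+_{2R,\sigma})^\circ$, hence in $\vC^+_{\bd,s_1}$, plus an $\vL^{k,a}_s$-remainder from truncating the Taylor expansion. This proves \eqref{eq:ite} at level $m+1$.

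Since $s_{m+1}-s_m\geq \min\{1-\rho/2,\rho/2\}>0$, the orders $s_m$ grow linearly. Fix $M$ large enough that
\begin{equation*}
s_M-(1-\rho/2)|k|-s_0-1\geq s,
\end{equation*}
so that the inclusion \eqref{eq:incC} gives $\vC^+_{s_M}\subset \vL^{k,a}_s$; absorb the resulting $\vC^+_{s_M}$-term from \eqref{eq:ite} into the $\vL^{k,a}_s$-remainder. Collect all boundary contributions produced at steps $1,\dots,M$ into a single symbol $g^+_{\bd,M}\in \vS^+_{2s_0}$ (since $s_1=2+\rho>2s_0=1+\rho/2$ and each individual boundary symbol lies in $\vS^+_{s_m}\subset \vS^+_{2s_0}$, with the support property from \eqref{eq:symBD} preserved under finite sums). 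This yields $J^{\pm}_{\alpha,M}(\lambda_0)$ and the claimed representation of $T^\pm_{\alpha,M}(\lambda_0)$; the membership $J^\pm_{\alpha,M}(\lambda_0)-J^\pm_\alpha(\lambda_0)\in \vC^\pm_{2s_0}$ follows because each correction $J^{m\pm}_\alpha$ with $m\geq 1$ lies in $\vC^\pm_{s_m}\subset \vC^\pm_{2s_0}$.

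The hardest step, and the one requiring most care, is the last assertion on invariance of the wave and scattering matrices, since it must be independent of $M$, $R$, $\sigma$, $\sigma'$ and of the truncation choices in the Taylor expansions. For the wave matrices, by \eqref{eq:incC} we have $J^\pm_{\alpha,M}(\lambda_0)-J^\pm_\alpha(\lambda_0)\in \vL^{-k'}_{s_0-\epsilon}$ for all $k'\geq 0$ and some $\epsilon>0$, which is enough weighted decay so that the Cook-type limits \eqref{eq:ooo12}--\eqref{eq:ooo13} defining $W^\pm_\alpha(\lambda_0)$ via either operator coincide on $\hat f\in C^\infty_\c(\bX_a\setminus\{0\})$, hence on all of $L^2(\bX_a)$ by density; I would make this rigorous by applying Lemma~\ref{lem:22a} to bound the difference. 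For the scattering matrix I would substitute the identity
\begin{equation*}
S_{\alpha\alpha,M}(\lambda_0)=-2\pi W^+_{\alpha,M}(\lambda_0)^*T^-_{\alpha,M}(\lambda_0)
\end{equation*}
of \eqref{eq:SmatrixN} and rewrite, using the Sommerfeld characterization of Theorem~\ref{prop:Sommerfeld}, $W^-_{\alpha,M}(\lambda_0)=W^-_\alpha(\lambda_0)$ because both produce solutions of $(H-\lambda_0)\phi=0$ with identical $\chi(-\widecheck B<\sigma)$-microlocal profile at infinity, determined by the same incoming data $\tau$; the uniqueness part of Theorem~\ref{prop:Sommerfeld} then forces equality. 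The $L^2(\S^{n-1})$-valued continuity throughout is provided by Lemma~\ref{lem:22a} applied to the new $T^\pm_{\alpha,M}(\lambda_0)$, noting that the extra boundary symbol $g^\pm_{\bd,M}$ belongs to a class already covered by the bounds \eqref{eq:x_weights29d}. I expect this canonicity argument to be the main obstacle, with the bookkeeping of support properties of boundary symbols being the delicate technical point.
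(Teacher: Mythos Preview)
Your recursive construction and termination argument for the first part is correct and matches the paper's approach (which merely says the result is ``clearly a consequence of the explained construction''). One small bookkeeping slip: from the construction $J^{(m+1)+}_{\alpha,1}\in\vC^+_{s_m-1-\rho/2}$, not $\vC^+_{s_{m+1}}$ (integrating along the flow costs $1+\rho/2$ orders); the conclusion $J^\pm_{\alpha,M}-J^\pm_\alpha\in\vC^\pm_{2s_0}$ remains correct since $s_1-1-\rho/2=2s_0$.

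For the invariance statement there is a genuine gap. Your Cook-limit argument conflates the wave \emph{operator} $W^\pm_\alpha$ (defined by the strong limits \eqref{eq:ooo12}--\eqref{eq:ooo13} on $L^2(\bX_a)$) with the wave \emph{matrix} $W^\pm_\alpha(\lambda_0):L^2(\S^{n-1})\to\vB^*_{s_0}$, which is defined by the stationary formula and maps into generalized eigenfunctions, not into $L^2(\bX)$; a density argument on $\hat f\in C^\infty_\c$ therefore does not address the object in question. Your Sommerfeld-based argument for $W^-_{\alpha,M}=W^-_\alpha$ can indeed be made to work (apply the uniqueness in Theorem~\ref{prop:Sommerfeld} with $\psi=0$ to the difference, noting that $J^-_{\alpha,M}-J^-_\alpha\in\vC^-_{2s_0}\subset L^{2,a}_{s_0-\epsilon}$ satisfies all four conditions of \eqref{eq:45lem5} trivially), but you then express the scattering matrix through $W^+_{\alpha,M}(\lambda_0)^*$ without having established $W^+_{\alpha,M}=W^+_\alpha$. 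The paper sidesteps this by invoking the uniqueness in Theorem~\ref{thm:repEigen}~\ref{it:uniq2}, which pins down \emph{both} $W^-_\alpha(\lambda_0)\tau$ and $S_{\alpha\alpha}(\lambda_0)\tau$ simultaneously via the asymptotic \eqref{eq:asymp}; since the correction in $\vC^-_{2s_0}$ does not alter that leading asymptotic, uniqueness gives both identifications at once. For $W^+_{\alpha,M}=W^+_\alpha$ the paper appeals to the (unstated) time-reversed analogue of Theorem~\ref{thm:repEigen} formulated in terms of $S_{\alpha\alpha}(\lambda_0)^*$.
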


 Note that the parameter $R$, $\sigma$  and $\sigma'$ of  the lemma are used  to
 define the classes $\vC^\pm_{\bd, \,s}$. The first part of the lemma
 is clearly a consequence of the explained
construction. For the second part, note that for the cases of  $W^{-}_{\alpha,M}(\lambda_0)$ and $ S_{\alpha,M}$
 the  assertion  is a
consequence of the uniqueness part  of Theorem
\ref{thm:repEigen}~\ref{it:uniq2}.
The identification $W^{+}_{\alpha,M}(\lambda_0)=W^{+}_{\alpha}(\lambda_0)$
follows from an analogue statement for
$W^{+}_{\alpha}(\lambda_0)$ given in terms of $
S_{\alpha}(\lambda_0)^*$ (not given   in our  presentation).

 As in \eqref{eq:def_a0_b} the notation $g$ means
 $g_{\lambda_0}=\sqrt{-V_1}$.  We introduce  symbols $b$ and $\bar c$ to
 decompose the normalized momentum $\eta:=\xi/g\in \bX_a'$ as
\begin{equation}
   \label{eq:sym_basdef}{\eta}=b{ \hat x}+\bar c,\;b:={\hat x}
\cdot \eta\text{  and }\bar c :=\big(I-\big |\hat x\big\rangle \big\langle {\hat x}\big|\big)\eta.
\end{equation} Of course this decomposition  requires $x\neq 0$ since
$\hat x=x/|x|$ only make sense for such $x$. (In  the  wave front set definition below  this issue
is handled by a cut-off $\bar\chi_1$, cf. \eqref{eq:14.1.7.23.24}. In
\eqref{eq:def_a0_b} a slightly different $b$ `cured' the problem.)

 On the energy shelf $\xi^2 =
\gamma r^{-\rho}$  the quantity $a:=b^2+\bar c^2= \xi^2/g^2=1$ and the
  Hamiltonian orbits solve the ODE on the  `reduced phase space' $ {\mathbb
    T}^*$, consisting of points $(\hat x,\bar c,b)$,
\begin{equation}\label{eq:reduced eqns}
\begin{cases}
\tfrac\d{\d\tau}\hat x=\bar c,\\
\tfrac \d{ \d\tau}\bar c=-(1-\tfrac\rho 2)b\bar c-\bar c^2\hat x,\\
\tfrac\d{ \d\tau}b=(1-\tfrac\rho 2)\bar c^2.
\end{cases}
\end{equation} This is in the
  `new time' $\tau$ given by
  $\tfrac{\d\tau}{
 \d t}=2g/r$.

The maximal solution of \eqref{eq:reduced eqns} that passes
 $z=(\hat x,b,\bar c)\in {\mathbb T}^*$ at $\tau=0$ is denoted by $\gamma(\tau,
z)$.  The quantity $a$ is preserved by the flow, and the equation $\bar c=0$ defines the
fixed points.  Away from those  points
\begin{align*}
b(\tau)=\sqrt a\tanh \sqrt a(1-\tfrac\rho2)(\tau-\tau_0),
\end{align*} showing moreover  that $b$ is monotonely
increasing in $\tau$ from $-\sqrt a$ to $\sqrt a$ (away from fixed points).
We introduce in terms of the variables \eqref{eq:sym_basdef} the `wave front set' $WF^a_s (u)$ of  a distribution $u\in L^{2,a}_{-\infty}$  as  the
subset of
 ${\T}^*$ given by the condition
\begin{align}
   \label{eq:WF^sa}
   &z_1=(\omega_1,\bar c_1,b_1) =(\omega_1,b_1\omega_1+\bar c_1)= (\omega_1,\eta_1) \notin WF^a_s(u) \nonumber\\
&\Leftrightarrow\\
   &\exists\;{{neighbourhoods} }\;\mathcal{N}_{\omega_1}\ni
   {\omega_1},\;\mathcal{N}_{\eta_1}\ni {\eta_1}\ \
\forall \chi_{\omega_1}\in
   C^{\infty}_c(\mathcal{N}_{\omega_1}),\;\chi_{\eta_1}\in
   C^{\infty}_c(\mathcal{N}_{\eta_1}): \nonumber\\
&\Opw\big(\chi_{z_1}\bar\chi_1(r)\big)u\in L_s^{2,a}\textit{ where }\chi_{z_1}=\chi_{z_1}(x,\xi)=\chi_{\omega_1}({\hat x})\chi_{\eta_1}\parb{\xi/g(r)}. \nonumber
 \end{align}
 Obviously this  notion  of wave front set  is a (fibered)  adaption of
 the notion  of `scattering wave front set' $WF^s_{\sc} (v)$ of  a
 distribution $v\in L^{2}_{-\infty}(\R^n)$  of \cite[Subsection 4.2]{DS1} to the
 present problem.

Due to \eqref{eq:conlowsmo} the operators $(H'-\lambda_0)^{-1}$ and $ \breve R(\lambda_0)$
respect the above notion of `fibered scattering wave front set'.
\begin{lemma}
  \label{lemma:Respectelast-scatt-at} For any $ s\in \R$ and $ u\in
  L^{2,a}_{-\infty }$ the
  following  properties hold.
\begin{subequations}
\begin{align}\label{eq:wiii}
  \Pi I_a \Pi' L^{2,a}_{s} &\subset
                             L^{2,a}_{s+1+\rho}, \quad \Pi' I_a \Pi L^{2,a}_{s} \subset
                             L^{2,a}_{s+1+\rho},
  \\
  WF_{s+1+\rho}^a( \Pi
  I_a \Pi'u)&\subset
              WF_s^a( u), \quad \label{eq:wv}
              WF_{s+1+\rho}^a(
              \Pi' I_a \Pi
              u)\subset WF_s^a( u),\\
\label{eq:wi}
  \breve R(\lambda_0)\Pi' L^{2,a}_{s} &\subset L^{2,a}_{s},\\\label{eq:wii}
  WF_s^a(\breve R(\lambda_0)\Pi' u)&\subset WF_s^a( u).
\end{align}
  \end{subequations}
\end{lemma}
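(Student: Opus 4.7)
The plan is to prove the four assertions by exploiting two independent structural facts: (a) the projection sandwich $\Pi I_a \Pi'$ gains decay via Taylor expansion in $x^a$, and (b) the hypothesis \eqref{eq:conlowsmo} forces $R'(\lambda_0)\Pi'$ to be a bounded operator without any nontrivial propagation. The first mechanism handles \eqref{eq:wiii}--\eqref{eq:wv}, the second handles \eqref{eq:wi}--\eqref{eq:wii}.

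For \eqref{eq:wiii}, I expand $I_a(x)=I_a^{(1)}(0+x_a)+\parb{x^a\cdot\nabla_{x^a} I_a^{(1)}}(0,x_a)+\vO(|x^a|^2\inp{x_a}^{-\rho-2})$ plus the compactly supported piece $I_a^{(2)}$. Because $\Pi\Pi'=0$, the scalar leading term $I_a^{(1)}(0,x_a)$ drops out of $\Pi I_a \Pi'$, and the next term carries an explicit $x^a$ that is absorbed by the polynomial (in fact, exponential) decay of the cluster bound state inside $\Pi$. This produces the $\vO(\inp{x_a}^{-\rho-1})$ gain in the $x_a$ direction, exactly as in the derivation of \eqref{eq:90}; that argument is executed verbatim in $L^{2,a}_s$ because $\langle x^a\rangle^m\varphi^{a_0}\in L^2(\bX^{a_0})$ for every $m$, and $\Pi'$ respects $L^{2,a}$-weights by Lemma \ref{Lemma:basic2}. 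The statement \eqref{eq:wv} then follows because, after executing this decomposition, $\Pi I_a\Pi'$ takes the explicit form $\sum_k f_k^a(x^a)\otimes m_k(x_a)$, with $f_k^a$ in $L^{2,a}_\infty$ and $m_k(x_a)\in S(\inp{x_a}^{-1-\rho},g)$ a scalar $x_a$-symbol. Such a sum commutes with $\Opw(\chi_{z_1}\bar\chi_1(r))$ modulo genuinely lower-order terms (relative to the scattering calculus governed by $\inp{x_a}^{-1}g_{\lambda_0}^{-1}$) and therefore preserves the microlocalization defining $WF^a_s$.

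For \eqref{eq:wi}, the hypothesis \eqref{eq:conlowsmo} ensures that $R'(\lambda_0)\Pi'\in\vL(L^2)$. I commute the weights $\langle x_a\rangle^s\langle x^a\rangle^m$ through $R'(\lambda_0)\Pi'$. The resulting commutator is treated exactly as in the proof of \eqref{eq:91BB} (after \eqref{eq:comkap}): one conjugates $\breve H$ by the weight, picking up terms of order $\vO(\kappa)(\breve H-\i)$, and uses the resolvent identity together with the invertibility of $\breve H-\lambda_0$ on $\Pi'\vG$ to close the argument. The only novelty is the $x^a$-weight, but $[p^2,\langle x^a\rangle^m]$ is a first-order operator in $p^a$ whose coefficients decay in $x^a$, so an identical bootstrap using Lemma \ref{Lemma:basic2} yields $R'(\lambda_0)\Pi'\in\vL(L^{2,a}_s)$. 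For \eqref{eq:wii}, given $z_1\notin WF^a_s(u)$, I choose cutoffs $\chi_{z_1}\prec\tilde\chi_{z_1}$ both localized near $z_1$ and compute
\begin{align*}
\Opw(\chi_{z_1}\bar\chi_1)R'(\lambda_0)\Pi' u
=R'(\lambda_0)\Pi'\Opw(\tilde\chi_{z_1}\bar\chi_1)u
+R'(\lambda_0)\bigl[H',\Opw(\chi_{z_1}\bar\chi_1)\bigr]R'(\lambda_0)\Pi' u
+R_0 u,
\end{align*}
where $R_0$ is a smoothing remainder arising from $\Opw(\chi_{z_1}\bar\chi_1)(1-\Opw(\tilde\chi_{z_1}\bar\chi_1))$. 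The first term lies in $L^{2,a}_s$ by \eqref{eq:wi} and the definition of $WF^a_s$ applied to $u$; the commutator with $H'=\Pi'(-\Delta+V)\Pi'$ is of lower order in the scattering calculus (derivatives of $\chi_{z_1}$ gain a factor $\inp{x_a}^{-1}g_{\lambda_0}^{-1}$), so an iteration/bootstrap analogous to the standard propagation-of-singularities argument upgrades the regularity up to index $s$.

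The main obstacle is the last step: verifying that $[H',\Opw(\chi_{z_1}\bar\chi_1)]$ is genuinely of lower order, in a way compatible with the fibered $L^{2,a}$-scale. The non-locality of $\Pi'$ is the source of the difficulty, but the Taylor-based decomposition from paragraph two shows that $[\Pi',\Opw(\chi_{z_1}\bar\chi_1)]=[\Pi,\Opw(\chi_{z_1}\bar\chi_1)]$ is in fact $\vO(\inp{x}^{-\infty})$ because $\Pi$ factors through the bound state $\varphi^{a_0}$, which is smooth and exponentially decaying in $x^a$; thus the only effective commutator is with the local part $-\Delta+V$ of $H'$, and the standard Mourre/scattering-calculus analysis from \cite{DS1, AIIS} applies. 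Ellipticity of $H'-\lambda_0$ (guaranteed by $\lambda_0\notin\sigma(H')$) closes the iteration and produces no propagated singularities, giving \eqref{eq:wii}.
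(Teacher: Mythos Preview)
Your approach is correct and matches the paper's own proof in all four parts. Two presentational points are worth noting.

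First, you over-engineer the role of $\Pi$. The paper states explicitly that ``$\Pi$ and $\Pi'$ are operators in the $x^a$-coordinate while the wave front setting is defined in terms of quantization on $\bX_a$''. In other words, $\Opw(\chi_{z_1}\bar\chi_1)$ quantizes a symbol in $(x_a,\xi_a)$ only, so $[\Pi,\Opw(\chi_{z_1}\bar\chi_1)]=0$ \emph{exactly}, not merely $\vO(\langle x\rangle^{-\infty})$. This is why the paper calls \eqref{eq:wiii}--\eqref{eq:wv} ``almost trivial'': after the Taylor expansion, $\Pi I_a\Pi'$ is a sum of operators of the form (operator in $x^a$)$\,\otimes\,$($x_a$-multiplier of order $\langle x_a\rangle^{-1-\rho}$), and the second factor commutes with the $\Opw$-localizers up to genuinely lower-order symbols. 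Your final paragraph therefore addresses a non-issue.

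Second, your displayed identity for \eqref{eq:wii} is garbled. Using $[\Pi',\Opw(\chi)]=0$ one gets
\[
\Opw(\chi)R'\Pi' u = R'\Pi'\Opw(\chi)u + R'[H',\Opw(\chi)]R'\Pi' u,
\]
with $\chi$ (not $\tilde\chi$) in the first term and no separate $R_0$. The first term is in $L^{2,a}_s$ directly by \eqref{eq:wi} and the hypothesis on $u$; the second is handled by iteration exactly as you describe. The paper organizes this iteration as an operator bound rather than a bootstrap on $u$: it works with $\breve R(\lambda_0)$ and shows $\Opw(\chi_{z_1})\breve R(\lambda_0)\parb{1-\Opw(\tilde\chi_{z_1})}\langle x^a\rangle^{-r}\in\vL(L^2_s,L^2_t)$ for all $s,t$ (with $r=r(s,t)$), exploiting that $\Opw(\chi_{z_1})\parb{1-\Opw(\tilde\chi_{z_1})}$ has disjoint symbolic support and hence is smoothing, while each commutation $[\breve H,\Opw(\chi_{z_1})]$ gains a factor of the Planck constant $\langle x_a\rangle^{\rho/2-1}$. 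This is the same mechanism as your bootstrap, just packaged differently.
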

\begin{proof}
   For \eqref{eq:wiii} and \eqref{eq:wv}  we may use a simplified
   version of
  \eqref{eq:90}. The results are almost trivial since  $\Pi$ and
  $\Pi'$ are operators in the $x^a$-coordinate while the wave front
  setting is defined in terms of quantization on $\bX_a$.

 The  arguments for \eqref{eq:91} works for \eqref{eq:wi}, so it
 remains to consider \eqref{eq:wii}, actually without the factor
 $\Pi'$. So for any $z_1\notin WF^a_s(u)$ we need to estimate
 $\Opw\big(\chi_{z_1}\bar\chi_1(r)\big)\breve R(\lambda_0)u\in
 L_s^{2,a}$, where we can assume that the localization function
 $\chi_{z_1}$ is
 supported sufficiently close to $z_1$. In particular there is a
 slightly bigger one, say denoted by $\tilde \chi_{z_1}$, i.e. $\tilde
 \chi_{z_1}=1$ on $\supp \chi_{z_1}$, such that  $\Opw\big(\tilde \chi_{z_1}\bar\chi_1(r)\big)u\in
 L_s^{2,a}$. Whence it remains to show that
 \begin{align*}
   \Opw\big(\chi_{z_1}\bar\chi_1(r)\big)\breve R(\lambda_0)\parbb{1-
   \Opw\big(\tilde \chi_{z_1}\bar\chi_1(r)\big)}u\in
 L_s^{2,a}.
 \end{align*} For that it suffice to show that
\begin{align*}
   \forall s,t\in \R\,\exists r\geq 0:\quad \Opw\big(\chi_{z_1}\bar\chi_1(r)\big)\breve R(\lambda_0)\parbb{1-
   \Opw\big(\tilde \chi_{z_1}\bar\chi_1(r)\big)}\inp{x^a}^{-r}\in
 \vL\parb{L_s^{2},L_t^{2}}.
 \end{align*}

Clearly we can assume that $s< t$. We recall that the symbols $\chi_{z_1}\bar\chi_1(r)$, $\tilde
\chi_{z_1}\bar\chi_1(r)$ as well as $\xi^2$ belong to a class for
which the corresponding  `Planck
constant' is $\inp{x}^{\rho/2-1}$. Using this fact we can now repeatedly
commute the quantization of
 localized symbols  through factors
of $\breve R(\lambda_0)$ to extract the desired decrease. The first
step consists in noting that $\Opw\big(\chi_{z_1}\bar\chi_1(r)\big)\parbb{1-
   \Opw\big(\tilde \chi_{z_1}\bar\chi_1(r)\big)}$ has arbitrary
 decrease, writing
\begin{align*}
  [\Opw\big(\chi_{z_1}\bar\chi_1(r)\big),\breve R(\lambda_0)]=\breve
  R(\lambda_0)[\breve H,\Opw\big(\chi_{z_1}\bar\chi_1(r)\big)]\breve R(\lambda_0)
\end{align*} and computing the  commutator to the right to gain at
least a
factor $\inp{x_a}^{\rho/2-1}$. In the next step we make a similar
commutation with the far right factor of $\breve R(\lambda_0)$ of a
similar localized operator thereby obtaining another factor
$\inp{x_a}^{\rho/2-1}$. Repeating the argument we produce in this
fashion  efficiently
any power of $\inp{x_a}^{\rho/2-1}$, and taking $r$ sufficient
big (depending on given $s,t\in \R$) we can  efficiently produce  any
power  $\inp{x}^{-m}$, in particular an $m\geq t-s$  as wanted.

\end{proof}

We also shall need the following lemma  which  is based on  an extension of
 parts of \cite[Propositions 4.1 and 4.8]{DS1}  (cf.  the extension  for
obtaining
\eqref{eq:micr1} and Remark \ref{remark:WFelast-scatt-at} given below) and  the propagation of singularities  result
\cite[Proposition 9.1]{DS1} (in fact \eqref{eq:WF3} follows by
from  \eqref{eq:WF2}  and \cite[Proposition 9.1]{DS1}). The underlying
commutator methods are rather robust. In particular, even though the
lemma as  stated has a more qualitative flavour, one can demonstrate concrete bounds for
the claimed embeddings,  and the stability of those  bounds  can be used to
control  a  parameter dependence, cf. \cite[ Remark
9.2]{DS1}. In our application  below the pair of angles
$(\omega,\omega')$ on which  the kernel of the scattering matrix
depends will play the role of such a parameter.

\begin{lemma}
  \label{lemma:propSing}
  \begin{subequations}
Suppose  $u\in L^{2,a}_{s}=L^{2,a}_{s}(\bX)$ for
  some $s< s_0$, and suppose that for some $t>s_0$ and  $\kappa \in (-1,1)$
  \begin{align}\label{eq:WF1}
     WF^a_t(u)\cap\set{ b< \kappa, a=1}=\emptyset.
  \end{align} Then the following assertions hold:
  \begin{enumerate}[1)]
  \item \label{item:liW1}  There exists the limit
    \begin{align*}
      r_{\lambda_0}^+u:=\lim_{R\to \infty}\parb{1\otimes
      r_{\lambda_0}^+}\parb{\chi_R(r)u}\text{  exists in }
  L^{2,a}_{s-2s_0}.
    \end{align*}
  \item \label{item:liW2} The intersection
\begin{align}\label{eq:WF2}
    WF^a_{t-2s_0}(r_{\lambda_0}^+u)\cap\set{ b< \kappa, a=1}=\emptyset,
  \end{align} and (more generally)
\begin{align}\label{eq:WF3}
  \begin{split}
  WF^a_{t-2s_0}(r_{\lambda_0}^+u)&\cap\set{ a =1}\\&\subset\set{\gamma(\tau,z}\mid  \tau\geq0,\,
  z\in WF^a_t(u)\}\cup\set{ b=1}.
  \end{split}
  \end{align}
   \end{enumerate}
\end{subequations}
\end{lemma}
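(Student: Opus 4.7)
The plan is to reduce both assertions to analogous one-body statements for the operator $h = p_a^2 + w$ on $L^2(\bX_a)$, viewing $x^a$ as a parameter. Since $r_{\lambda_0}^+$ acts as $1 \otimes r(\lambda_0 + \i 0)$ under the tensor decomposition $L^2(\bX) = L^2(\bX^a) \otimes L^2(\bX_a)$, and the fibered wave front set $WF^a$ is defined purely by $x_a$-pseudodifferential localization, the hypothesis $u \in L^{2,a}_s$ (arbitrary polynomial $x^a$-decay together with $L^2_s$-integrability in $x_a$) reduces matters to microlocal one-body estimates with uniform control in the $x^a$-parameter. All multiplications by $\inp{x^a}^m$ commute with $r_{\lambda_0}^+$ and with the $x_a$-quantizations defining $WF^a$, so throughout the argument the $x^a$-decay is carried along as an inert weight.

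For part \ref{item:liW1}, I would extend the one-body microlocal limiting absorption principle at zero energy recalled in \eqref{eq:micr1}. Using a smoothed partition of unity $1 = \eta_-(b) + \eta_+(b)$ in the conjugate symbol $b$ of \eqref{eq:sym_basdef}, with $\supp \eta_- \subset (-\infty, \kappa')$ and $\supp \eta_+ \subset (\kappa'', \infty)$ for some $\kappa < \kappa'' < \kappa' < 1$, I split $\chi_R(r)u$ accordingly. The piece $\Opw(\eta_+(b))\chi_R u$ lies uniformly in $L^{2,a}_t$ by the microlocal hypothesis \eqref{eq:WF1} (combined with ellipticity of $h - \lambda_0$ off the energy shell), and applying $r_{\lambda_0}^+$ yields convergence in $L^{2,a}_{s-2s_0}$ by \eqref{eq:micr1}. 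The incoming piece $\Opw(\eta_-(b))\chi_R u$ is controlled by the microlocal refinement
\begin{equation*}
  \norm{\inp{x_a}^{-s_0}\Opw(\eta_-(b)) r_{\lambda_0}^+ \inp{x_a}^{-s_0} f}_{L^2(\bX_a)} \le C \norm{f}_{L^2(\bX_a)},
\end{equation*}
which is an immediate extension of \cite[Proposition 4.1]{DS1} to zero energy via the commutator scheme sketched after \eqref{eq:micr1}. The cutoff $\chi_R$ is then removed as in Step II of the proof of Theorem \ref{thm:negat-effect-potent}, using $\chi_R' u \to 0$ in appropriate norm.

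For part \ref{item:liW2}, the inclusion \eqref{eq:WF2} is a microlocal refinement: given $z_1 = (\omega_1, b_1 \omega_1 + \bar c_1)$ with $b_1 < \kappa$ and $a_1 = 1$, I would choose $\chi_{z_1}$ conically supported inside $\{b < \kappa', a \in (1 - \epsilon, 1 + \epsilon)\}$ with $\kappa' < \kappa$, and commute $\Opw(\chi_{z_1} \bar\chi_1(r))$ through $r_{\lambda_0}^+$. The commutator errors have Planck constant $\inp{x_a}^{\rho/2 - 1}$, so iterated commutation gains arbitrary polynomial $x_a$-decay and the remaining principal term is controlled by the microlocalized LAP above applied to $\Opw(\tilde\chi_{z_1}\bar\chi_1)u$, which lies in $L^{2,a}_t$ by \eqref{eq:WF1}. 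The general statement \eqref{eq:WF3} is propagation of singularities along the reduced flow \eqref{eq:reduced eqns} on the energy shell $\{a = 1\}$: since $b$ is strictly monotone in $\tau$ along orbits away from the fixed set $\{b^2 = 1\}$, a positive commutator argument with an escape function $e(\hat x, \bar c, b)$ increasing along the forward flow and vanishing near $\{b = 1\}$ yields the inclusion. The detailed form of this argument at threshold is carried out in \cite[Proposition 9.1]{DS1}; the main obstacle in adapting it is verifying that the error terms surviving after the positive commutator calculation — coming from the off-principal symbol pieces of $h$ and from the localization at $\{b = 1\}$ — do not disturb the $x^a$-weighting. This is ensured because $h$ involves only $x_a$-operators, so all errors remain in the fibered calculus and are absorbed using Lemma \ref{lemma:Respectelast-scatt-at} (which already establishes that the relevant $x^a$-coupling terms $\Pi I_a \Pi'$ and $\breve R(\lambda_0)\Pi'$ respect $WF^a$) together with the $L^{2,a}$-mapping properties of $r_{\lambda_0}^+$ established in part \ref{item:liW1}.
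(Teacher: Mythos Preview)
Your overall strategy is exactly the paper's: the lemma is not proved in detail there either, but is explicitly referred to as an extension of \cite[Propositions 4.1 and 4.8]{DS1} (for \ref{item:liW1} and \eqref{eq:WF2}) together with the propagation result \cite[Proposition 9.1]{DS1} (for \eqref{eq:WF3}), all fibered over $x^a$. So the reduction to one-body microlocal estimates with $x^a$ as an inert parameter is correct and is the intended route.

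However, your execution of part \ref{item:liW1} has the roles of the two pieces of the partition reversed. The hypothesis \eqref{eq:WF1} says $WF^a_t(u)\cap\{b<\kappa,\,a=1\}=\emptyset$, so it is the \emph{incoming} piece $\Opw(\eta_-(b))u$ (with $\supp\eta_-\subset(-\infty,\kappa)$) that lies in $L^{2,a}_t$ and can be fed directly into the standard LAP; your $\eta_+$ is supported in $\{b>\kappa''\}$ with $\kappa''>\kappa$, which is precisely the region where the hypothesis says nothing. Correspondingly, it is the \emph{outgoing} piece $\Opw(\eta_+(b))u$ (supported in $\{b>\kappa'\}$ for some $\kappa'<\kappa$) that must be handled by the microlocal refinement of \cite[Proposition 4.1]{DS1}, and the relevant bound there places the localization on the \emph{input} side of $r_{\lambda_0}^+$ (or equivalently, by adjoint, an incoming localization on the output side with a genuine weight gain), not the no-gain decoration $\inp{x_a}^{-s_0}\Opw(\eta_-)r_{\lambda_0}^+\inp{x_a}^{-s_0}$ you display. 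Also, \eqref{eq:micr1} is not the right reference here: that statement concerns the special situation $r^+f=r^-f$ and its resulting decay improvement, not a general mapping property of $r_{\lambda_0}^+$.

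Two smaller points. The invocation of Lemma \ref{lemma:Respectelast-scatt-at} at the end is misplaced: that lemma handles the $N$-body objects $\breve R(\lambda_0)\Pi'$ and $\Pi I_a\Pi'$, whereas here $r_{\lambda_0}^+=1\otimes r(0+\i0)$ acts purely in $x_a$ and there is no $x^a$-coupling to absorb --- the fibering is the entire story. For \eqref{eq:WF3}, the paper notes it follows from \eqref{eq:WF2} together with \cite[Proposition 9.1]{DS1}; your escape-function sketch is consistent with this, though again the error terms live entirely in the $x_a$-calculus and need no appeal to the $N$-body lemmas.
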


\begin{remark}\label{remark:WFelast-scatt-at} We note that
  \eqref{eq:WF2} may be  proven as in the proof of \cite[Proposition
  4.8 (iii)]{DS1}, but that \cite[(4.48]{DS1} in fact holds with
  $\epsilon=0$ (this generalization of \cite[Proposition 4.1]{DS1},
  originating from \cite{FS},   was
  noted before, cf. \eqref{eq:micr1}). Another generalization comes
  about noting that the condition on $\chi_+$ in \cite[Proposition 4.1
  (v)]{DS1} is too strong. In fact the relevant property is not that
  $\inf \supp(\chi_+) >C_0$ for some `big' $C_0\geq 1$, but rather
  $\chi_+(a)=0$ in a neighbourhood of $a=1$. This generalization can
  be proven by using a  parametrix construction of
  \cite{Sk4}  (cf. \cite[(3.2)]{Sk4})  replacing the  positivity argument of
  \cite{FS}. Note that for  the same reason we dont need the constant $C_0$ for the other parts of \cite[Proposition 4.1]{DS1} neither.
 \end{remark}
Now, to analyse $S_{\alpha,M}(\lambda_0)(\omega,\omega')$ we write (formally)
\begin{align}\label{eq:Sforma0}
  \begin{split}
S_{\alpha,M}(\lambda_0)(\omega,\omega')=&-2\pi\langle j_{\alpha,M}^+(\cdot,\omega),
t_{\alpha,M}^-(\cdot,\omega')\rangle\\
&+
2\pi\i
\langle t_{\alpha,M}^+(\cdot,\omega),
R(\lambda_0+\i0)t_{\alpha,M}^-(\cdot,\omega')\rangle,
  \end{split}
\end{align}
where  for $x\in \bX$
\begin{align}\label{eq:fundSymb}
  \begin{split}
   j_{\alpha,M}^\pm(x,\omega)&=(2\pi)^{-n/2}\big (\e^{\i \phi^\pm}
\e^{\tilde\zeta^\pm} a_{\alpha,M}^\pm\big )(x,\omega,0),\\
t_{\alpha,M}^\pm(x,\omega)&=t_{\alpha,M}^{1\pm}(x,\omega)+\parb{R_{s,M}^{k,a\pm}\delta_\omega}(x),\\& t_{\alpha,M}^{1\pm}(x,\omega):=(2\pi)^{-n/2}{\varphi_\alpha}(x^a)\,\big (\e^{\i \phi^\pm}
\e^{\tilde\zeta^\pm} g^\pm_{\bd,M}\big )(x_a,\omega,0);
  \end{split}
\end{align} here the `symbol' $a_{\alpha,M}^\pm$ is  taken as a vector-valued function of $y=x_a$ and
$\omega$ in
agreement with Lemma \ref{lem:conTransp}, while $\phi^\pm$,
$\tilde\zeta^\pm$ and  the functions  $ g^\pm_{\bd,M}$ (as introduced in  Lemma
\ref{lem:conTransp}) have   dependence of
the component $x_a$ of $x$ and $\omega$ only (i.e. no dependence of $x^a$). Thus we  write $J^{\pm}_{\alpha,M}(\lambda_0)\in
 J^{\pm}_{\alpha}(\lambda_0)+\vC^\pm_{2s_{0}}$ as
\begin{align*}
  \parb{J^{\pm}_{\alpha,M}(\lambda_0)\tau}(x^a,x_a)&=\int_{\S^{n-1}}
  j_{\alpha,M}^\pm(x,\omega) \tau(\omega)\,\d \omega\\&=(2\pi)^{-n/2}\int_{\S^{n-1}}
  \e^{\i \phi^{+}(y,\omega,0)}\e^{\tilde\zeta^+ (y,\omega,0)}a_{\alpha,M}^\pm(x^a,y,\omega)
  \tau(\omega)\,\d \omega,
\end{align*} which in fact applies to the Dirac delta function
$\tau=\delta_\omega \in \cap_{k<-(n-1)/2} H^k(\S^{n-1})$ thereby defining
the `kernel' $j_{\alpha,M}^\pm$. Similarly $t_{\alpha,M}^\pm$ is the
kernel
\begin{equation*}
 t_{\alpha,M}^\pm(x,\omega)=\parb{(2\pi)^{-n/2}{\varphi_\alpha}\otimes
   J^\pm_g\delta_\omega+R_{s,M}^{k,a\pm}\delta_\omega}(x); \quad g=g^\pm_{{\bd},M}.
\end{equation*}

We note the following bounds,  cf. \cite[Theorem
6.5]{DS1}.
\begin{subequations}
 \begin{align}\label{eq:j_space1}
   &\partial_\omega^\delta j_{\alpha,M}^\pm(\cdot,\omega)\in
 L^{2,a}_{t}\text{ for all } t< -(|\delta|+n/2) (1-\rho/2) -\rho/2,\\
   &\partial_\omega^\delta t_{\alpha,M}^{1\pm}(\cdot,\omega)\in
 L^{2,a}_{t}\text{ for all } t< -(|\delta|+n/2) (1-\rho/2)
     +1+\rho/2,\label{eq:j_space2}\\
&R_{s,M}^{k,a\pm}\partial_\omega^\delta\delta_\omega\in
 L^{2,a}_{s}\text{ provided } {k<-\abs{\delta}-(n-1)/2}.\label{eq:j_space2b}
 \end{align}
\end{subequations}

Let $\phi_\sph^+$ denote the solution of the eikonal equation for the
potential $V_1$ at zero energy, cf. \eqref{eq:eik2}. It is
given by
\begin{equation}\label{eq:tilll}
\phi_{\sph}^+(x_a,\omega)
=\frac{\sqrt{\gamma}}{ 1-\rho/2}\left( |x_a|^{1-\rho/2}\cos (1-\rho/2)
\theta-R_0^{1-\rho/2}\right),
\end{equation}
where $ \cos \theta=\widehat {x_a} \cdot
\omega$. We omit the subscript $a$ writing below (including the
following lemma)  $x$
rather than $x_a$. Using $x^\perp=\frac{\omega-\hat x\cos\theta}{\sin\theta}$ and
$\nabla_x\theta=-\frac{x^\perp}{r}$, we can also compute
\begin{align*}
F_\sph(x,\omega)&=F^+_\sph(x,\omega):=\nabla_x\phi_\sph^+(x,\omega)\\
&=\sqrt{\gamma}r^{-\rho/2}
\left(\hat x\cos(1-\rho/2)\theta+x^\perp\sin(1-\rho/2)\theta\right).
\end{align*}

 The lemma  stated below is a straightforward generalization of
\cite[Lemma 9.4]{DS1},  and as in \cite{DS1} it follows  by
integration by parts. Note  that due to \eqref{eq:j_space2b} only the
contribution from the term
$t_{\alpha,M}^{1\pm}$  of  $t_{\alpha,M}^{\pm}$
in \eqref{eq:fundSymb} matters. Note also that \eqref{eq:j_space1} and
 \eqref{eq:j_space2} provide an  a priori `size' of the quantatives
 $\partial_\omega^\delta j_{\alpha,M}^\pm(\cdot,\omega)$ and
 $\partial_\omega^\delta t_{\alpha,M}^{1\pm}(\cdot,\omega)$.

\begin{lemma}
  \label{lemma:wavefas}
  \begin{subequations}
     Let $k<-(n-1)/2$, $s\geq 0$, $R\geq R_0$ and
 $0<\sigma<\sigma'<\sigma_0$ (cf.  Lemma
\ref{lem:conTransp}). Then for all $M\in \N$ taken large
 enough and
     for all $\omega\in \S^{n-1}$ and
  multiindices  $\delta$ with $|\delta|<-k-(n-1)/2$ the quantities in
  \eqref{eq:fundSymb} obey
  \begin{align}\label{eq:wacom1} WF^a_{s}(\partial_\omega^\delta
j_{\alpha,M}^{\pm} &(\cdot ,\omega)) \subset\Big \{z=(\hat x,
    \bar c, b)\in \T^*\mid  \\& 1-\sigma'\leq
\pm \hat x\cdot \omega
 ,\quad b\hat x+\bar c=\pm {F_{\sph}(\hat x,\pm \omega)}/{\sqrt
                            \gamma}\Big\},\nonumber
    \\
WF^a_{s}(\partial_\omega^\delta
 t_{\alpha,M}^{\pm} &(\cdot ,\omega))\subset \Big\{z=(\hat x,
    \bar c, b)\in \T^*\mid \label{eq:wacom}\\&1-\sigma'\leq
\pm \hat x\cdot \omega \leq 1-\sigma,\quad b\hat x+\bar c=\pm
                                                 {F_{\sph}(\hat
                                                 x,\pm \omega)}/\sqrt
                                                 \gamma\Big\}.\nonumber
  \end{align} Suppose that $\chi_+\in
   C^\infty(\R)$, $\chi_+'\in
   C^\infty_c(\R)$  and $\supp
   \chi_+ \subset(1,\infty)$. Under the same conditions as above the following bounds hold  uniformly in $\omega\in \S^{n-1}$:
   \begin{equation}
     \label{eq:+bound}
     \Opw(\chi_+(a))\partial_\omega^\delta j_{\alpha,M}^{\pm} (\cdot ,\omega),\;\Opw(\chi_+(a))\partial_\omega^\delta t_{\alpha,M}^{\pm} (\cdot ,\omega)\in L^{2,a}_{s}.
 \end{equation}
\end{subequations}
\end{lemma}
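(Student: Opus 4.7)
The plan is to adapt the integration-by-parts argument of \cite[Lemma 9.4]{DS1} to the fibered wavefront set $WF^a$ used here. The key geometric observation is that the oscillatory-integral kernels \eqref{eq:fundSymb} have phase $\phi^\pm(x_a,\omega,0)$ depending only on $x_a$, with gradient $F^\pm(x_a,\omega,0)=\nabla_{x_a}\phi^\pm$; on the support of $a^\pm_{\alpha,M}$ and $g^\pm_{\bd,M}$ (both contained in $\Gamma^\pm_{R,\sigma'}$) this gradient is asymptotic as $|x_a|\to\infty$ to $\pm g(|x_a|)F_{\sph}(\widehat{x_a},\pm\omega)/\sqrt\gamma$. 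Hence the stationary-phase condition, written in the normalized momentum variable $\eta=\xi/g$, reads precisely $\eta=\pm F_{\sph}(\hat x,\pm\omega)/\sqrt\gamma$, which is the manifold appearing on the right of \eqref{eq:wacom1}--\eqref{eq:wacom}. The additional support conditions in $\hat x\cdot\omega$ reflect the supports of $a^\pm_{\alpha,M}$ (giving $\pm\hat x\cdot\omega\geq 1-\sigma'$) and of $g^\pm_{\bd,M}$ (giving in addition $\pm\hat x\cdot\omega\leq 1-\sigma$, from the $\Gamma^\pm_{2R,\sigma}$-exclusion in Definition \ref{defn:Class2}).

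To prove these inclusions I would fix $z_1=(\omega_1,b_1\omega_1+\bar c_1)$ lying outside the right-hand set of \eqref{eq:wacom1} or \eqref{eq:wacom}, choose small neighbourhoods $\vN_{\omega_1}$ and $\vN_{\eta_1}$, and with localization symbols $\chi_{\omega_1},\chi_{\eta_1}$ supported therein study the operator $\Opw(\chi_{z_1}\bar\chi_1(r))$ applied to $\partial_\omega^\delta j^\pm_{\alpha,M}(\cdot,\omega)$. After expanding the Weyl integral in $(x_a',\xi)$ and commuting it past the $\omega$-integration, on the resulting integrand the composite phase gradient $\xi-g(r)F^\pm(x_a,\omega,0)$ is bounded below by $cg(r)$ for some $c>0$, uniformly in the parameters, thanks to the localizations and the standing assumption that $z_1$ lies off the stationary set. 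Introducing the first-order differential operator $L=|\xi-gF^\pm|^{-2}(\xi-gF^\pm)\cdot\nabla_\xi$ and integrating by parts $N$ times produces at each step a factor $\vO(g(r)^{-1})=\vO(r^{\rho/2})$ together with derivatives of $a^\pm_{\alpha,M}$ or $g^\pm_{\bd,M}$, which by \eqref{eq:bndderG} and the analogous bounds for $a^\pm_{\alpha,M}$ carry extra decay in $|x_a|$. Taking $N$ large, membership in $L^{2,a}_s$ for any prescribed $s$ follows. The factor $\varphi_\alpha(x^a)$ in $t^{1\pm}_{\alpha,M}$ and the $x^a$-dependence of $a^\pm_{\alpha,M}$ pose no obstacle since the wavefront set $WF^a$ is tested only in the $x_a$-direction, while the relevant decays in $x^a$ are inherited from the polynomial decay of $\varphi_\alpha$ and of the elements of $\vF^a$ entering the construction of $J^{\pm}_{\alpha,M}(\lambda_0)$.

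The bound \eqref{eq:+bound} follows by a closely related argument: $\chi_+(a)=\chi_+(\eta^2)$ is supported in $\{|\eta|^2>1\}$ while the stationary set $\eta=\pm F_{\sph}(\hat x,\pm\omega)/\sqrt\gamma$ sits on the zero-energy shell $|\eta|^2=1$. Thus $\Opw(\chi_+(a))$ localizes away from the critical set uniformly in $\omega$, and the same $L$-integration-by-parts yields arbitrary $L^{2,a}_s$ control, uniformly in $\omega\in\S^{n-1}$. For the contribution of the remainder $R^{k,a\pm}_{s,M}\partial_\omega^\delta\delta_\omega$ in $t^\pm_{\alpha,M}$, property \eqref{eq:j_space2b} shows it lies in $L^{2,a}_s$ provided $M$ is chosen so that $k<-|\delta|-(n-1)/2$; it therefore contributes no additional points to the wavefront set and respects \eqref{eq:+bound} trivially.

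The hard part will be the uniform bookkeeping of constants and error terms during the repeated integration by parts: one must verify that the iterated derivatives $\nabla_{x_a}^k(gF^\pm)$, the derivatives of $\tilde\zeta^\pm$ coming from the prefactor, and the derivatives of the Weyl symbol $\chi_{z_1}\bar\chi_1(r)$ combine with the symbol bounds \eqref{eq:bndderG} and the class inclusion \eqref{eq:incC} to produce estimates that are uniform in $\omega\in\S^{n-1}$ and with enough decay in $|x_a|$ to close the argument for any prescribed $s$. Once this book-keeping is in place, both \eqref{eq:wacom1}--\eqref{eq:wacom} and the uniform bound \eqref{eq:+bound} follow.
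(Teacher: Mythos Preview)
Your approach is essentially the paper's: the paper simply remarks that the lemma is a straightforward generalization of \cite[Lemma 9.4]{DS1} proved by integration by parts, with the remainder $R^{k,a\pm}_{s,M}\partial_\omega^\delta\delta_\omega$ handled via \eqref{eq:j_space2b} exactly as you say. One small slip: the operator $L$ should differentiate in the dual spatial variable (so that $L$ reproduces the combined phase $(x-x')\cdot\xi+\phi^\pm(x',\omega,0)$ through $\nabla_{x'}\text{phase}=-\xi+gF^\pm$), not in $\xi$; each such integration by parts then gains the Planck constant $(\langle x\rangle g)^{-1}=\vO(\langle x\rangle^{\rho/2-1})$, which is the actual decay factor.
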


Now   we  fix (a big) $\breve k\in\N$ and want to apply Lemma
\ref{lemma:wavefas} under the additional conditions
\begin{align}\label{eq:restri}
  \breve k<-k-(n-1)/2\quad \mand \quad s\geq s_{\breve k}:=(\breve k+\tfrac n2) (1-\tfrac\rho2) +1;
\end{align}
 this
requires $M=M(k,s) $
  taken  large. We consider with  $\chi_R=\chi_R(|x_a|)$ and
$|\delta|,|\delta'|\leq \breve k$
\begin{align}\label{eq:Sforma}
  \begin{split}
  &\partial_\omega^\delta \partial_{\omega'}^{\delta'}S_{\alpha,M}(\lambda_0)(\omega,\omega')\\&=-2\pi\lim_{R\to\infty}\langle \partial_\omega^\delta
  j_{\alpha,M}^+(\cdot,\omega),
\chi_R\,\partial_{\omega'}^{\delta'}t_{\alpha,M}^-(\cdot,\omega')\rangle\\
&+
2\pi\i
\lim_{R\to\infty} \langle \partial_\omega^\delta t_{\alpha,M}^+(\cdot,\omega),
\chi_R R(\lambda_0+\i0)\chi_R\,\partial_{\omega'}^{\delta'}t_{\alpha,M}^-(\cdot,\omega')\rangle
  \end{split}
\end{align} This is still formal, but suppose that we can prove that
the two limits on the right-hand side exist locally uniformly in the
set $\{\omega\cdot \omega'\neq \cos
  \tfrac \rho{ 2-\rho}\pi\}$ for all  multi-indices  with
  $|\delta|,|\delta'|\leq \breve k$, then Theorem \ref{thm:sings} follows.
   \begin{proof}[Proof of Theorem \ref{thm:sings}]
     \subStep{I}  We need for any $\breve k\in\N$ (henceforth
     fixed) to verify the existence and continuity of
     $\partial_\omega^\delta \partial_{\omega'}^{\delta'}S_{\alpha,M}(\lambda_0)(\omega,\omega')$
     in $\{\omega\cdot \omega'\neq \cos \tfrac\rho{ 2-\rho}\pi\}$ for
     $|\delta|,|\delta'|\leq \breve k$. We impose the conditions of
     Lemma \ref{lemma:wavefas} as well as \eqref{eq:restri}. We also  assume that $\sigma'$
      is small (to be tacitly used for
     \eqref{eq:emptyFirst} stated below).

We look at the
     first term to the right in \eqref{eq:Sforma} (before taking the limit), which we claim is
      treatable on $\S^{n-1}\times \S^{n-1}$ (i.e. without restriction on
      $(\omega,\omega')$).  This can be seen by a direct integration by
      parts, but we prefer to give a presentation that conforms with
      our treatment of the second term (see \eqref{eq:strategy}).
 Under the above  conditions   it follows from  Lemma
 \ref{lemma:wavefas} and \cite[(3.5c)]{DS1}
    that for $|\delta|,|\delta'|\leq \breve k$,
     \begin{align}\label{eq:emptyFirst}
       \begin{split}
        WF^a_{s_{\breve k}}(\partial_\omega^\delta
 j_{\alpha,M}^{+} (\cdot ,\omega))\cap WF^a_{s_{\breve k}}(
\partial_{\omega'}^{\delta'} &t_{\alpha,M}^{-} (\cdot
,\omega'))=\emptyset.
       \end{split}
\end{align}
     Note also that
     the  functions $\partial_\omega^\delta
 j_{\alpha,M}^{+} (\cdot ,\omega)$ and $\partial_{\omega'}^{\delta'} t_{\alpha,M}^{-} (\cdot
,\omega')$ are in $L^{2,a}_{-s_{\breve k}}$  due to
     \eqref{eq:j_space1}--\eqref{eq:j_space2b}. Next we insert a suitable
     phase space partition of unity conforming with
     \eqref{eq:emptyFirst} in the expression
     \begin{align*}
       \langle \partial_\omega^\delta
  j_{\alpha,M}^+(\cdot,\omega),
\chi_R\,\partial_{\omega'}^{\delta'}t_{\alpha,M}^-(\cdot,\omega')\rangle,
     \end{align*} and then we remove   the factor $\chi_R$ (by letting
     $R\to \infty)$. The `large $a$-part' is
     controlled by \eqref{eq:+bound} (using here only that   $s\geq
     0$) applying a single
     partition function $\chi_+(a)$. The `small
     $a$-part', treated as indicated above,  needs  many partition
     functions  using
     \eqref{eq:emptyFirst} and a sufficient  `sharpness' of the
     localization  of the
      partition  to make sure that for each
     term  the   partition operator  brings  at least one of
     the above two functions to   $L^{2,a}_{s_{\breve k}}$. Since the other function
     globally is in $L^{2,a}_{-s_{\breve k}}$  \cas applies. This argument
     is first used  with  the factor $\chi_R$ in place. Any commutator with
     this factor gives at least an extra factor $R^{\rho/2-1}$, so in
     the limit  we get (by dominated convergence) the corresponding expression
     without $\chi_R$ which  by the same arguments indeed is well-defined. Finally the convergence is
     uniform in $(\omega,\omega')\in \S^{n-1}\times \S^{n-1}$ since all involved bounds
     can be done   uniformly. This  feature relies on regularity
     of the classical constructions  and  the underlying integration by
     parts arguments (the feature  is only partially stated in Lemma
     \ref{lemma:wavefas}).

\subStep{II} We look at the
     second  term to the right in \eqref{eq:Sforma}, which we claim is
     well-defined with the limit taken
      locally uniformly in $\{\omega\cdot \omega'\neq \cos
  {\tfrac\rho {2-\rho}}\pi\}$. Before taking the limit we insert the expression
  \eqref{eq:resolBAS} (for the plus case) and
  get various   terms (possibly after a further expansion) to treat
  when applied to the function $\partial_{\omega'}^{\delta'}
  t_{\alpha,M}^-(\cdot,\omega')$ to the right.

It is convenient to introduce the following notation.
 \begin{align*}
u_\omega^+&=\partial_\omega^\delta t_{\alpha,M}^{+} (\cdot ,\omega),\\
u_{\omega'}^-&=\partial_{\omega'}^{\delta'} t_{\alpha,M}^{-} (\cdot
               ,\omega'),\\
\vM_\omega^+&=\Big\{z=(\hat x,
    \bar c, b)\in \T^*\mid  1-\sigma'\leq
 \hat x\cdot \omega \leq 1-\sigma,\,\, b\hat x+\bar c=
                                                 {F_{\sph}(\hat
                                                 x, \omega)}/\sqrt
                                                 \gamma\Big\},\\
   \vM_{\omega'}^-&=\Big\{z=(\hat x,
    \bar c, b)\in \T^*\mid 1-\sigma'\leq
- \hat x\cdot \omega' \leq 1-\sigma,\,\, b\hat x+\bar c=-
                                                 {F_{\sph}(\hat
                                                 x,- \omega')}/\sqrt
                                                 \gamma\Big\},\\
\vM^-_{\omega'+}&=\big\{\gamma(\tau, z) \in \T^*\mid  \,\tau\geq 0, \, z\in \vM_{\omega'}^-\big\}\cup\set{a,b=1}.
 \end{align*}

We  follow the indicated scheme, so suppose $\hat R$ is one of the terms of an expansion of
$R(\lambda_0+\i 0)$. Then we need to treat
\begin{align*}
  \lim_{R\to\infty} \langle u_\omega^+,
\chi_R \hat R \chi_R\,u_{\omega'}^-\rangle.
\end{align*} We first study the
expression $\langle u_\omega^+,
\chi_R \hat R \chi_R\,u_{\omega'}^-\rangle$ without the factors
$\chi_R$. We let $t_1=s_{\breve k}$ and intend to  find $t_2\in\R$ such that
\begin{align}\label{eq:strategy}
  \begin{split}
 WF^a_{t_1}( u_\omega^+)&\cap WF^a_{t_2}(\hat R
  u_{\omega'}^-)=\emptyset,\\
u_\omega^+&\in L^{2,a}_{-t_2},\quad
\hat R u_{\omega'}^-\in L^{2,a}_{-t_1}.
  \end{split}
     \end{align}
 More precisely we shall use \eqref{eq:strategy}  for all such terms $\hat R$ except
for a certain `remainder term'  that is treated differently (see the end
of the proof).

  Given  \eqref{eq:strategy}  we  insert a suitable phase space partition of
unity. The `large $a$-part' is treated
by using \eqref{eq:+bound}    to $ u_\omega^+$ (note  that $s\geq t_1$
so that \cas works). For the  `small
     $a$-part' the partition functions there are chosen such that for each
     term  the   partition operator  either brings  $ u_\omega^+$ to
      $L^{2,a}_{t_1}$ or $\hat R u_{\omega'}^-$ to $L^{2,a}_{t_2}$. In
      either case  \cas works to make sense to the expression. These arguments can also
     be done  with  the factors $\chi_R$ in place, and we can also
     control the uniformity in the angles  (cf. the discussion in Step I
     and a remark before Lemma \ref{lemma:propSing}).

The way to prove \eqref{eq:strategy} goes as
follows. Since  $t_1=s_{\breve k}\leq s$ it follows from \eqref{eq:wacom}
that
\begin{align}\label{eq:WF4}
  WF^a_{t_1}(u_\omega^+)\subset\vM_\omega^+.
\end{align} Suppose
\begin{subequations}
  \begin{align}
    \label{eq:hatbeti} \hat R u_{\omega'}^-\in L^{2,a}_{-t_1},
  \end{align} and that
$t_2$ is  chosen such that
\begin{align}\label{eq:WF4b}
WF^a_{t_2}(\hat R
  &u_{\omega'}^-)\cap\set{ a=1}\subset \vM^-_{\omega'+},\\
&u_\omega^+\in L^{2,a}_{-t_2}. \label{eq:WF4c}
  \end{align}
\end{subequations}
 Then
\begin{align*}  WF^a_{t_1}(u_\omega^+)\cap WF^a_{t_2}(\hat R
  u_{\omega'}^-)\subset
\vM_\omega^+\cap \vM^-_{\omega'+},
  \end{align*} so to finish the proof of~\eqref{eq:strategy} we just
  need to check that the right-hand side is empty. It follows from
  \cite[(3.5d)]{DS1} that $\vM_\omega^+\cap \set{a,b=1}=\emptyset$. To complete the proof  we note
  that
  \begin{align*}
    \forall z\in \vM_\omega^+ &:\quad \hat x(\tau,z)\to \omega\text {
    for } \tau\to +\infty,\\
\forall z\in \vM^-_{\omega'+}\setminus \set{a,b=1}&:\quad \hat x(\tau,z)\to -\omega'\text {
    for } \tau\to -\infty.
  \end{align*} But if $\omega\cdot \omega'\neq \cos
  \tfrac\rho {2-\rho}\pi$ we can not find  $z\in \T^*$ obeying these
  asymptotics. So indeed $\vM_\omega^+\cap
  \vM^-_{\omega'+}=\emptyset$  and  \eqref{eq:strategy} follows.

\subStep{III}
It remains to check \eqref{eq:hatbeti}--\eqref{eq:WF4c} for the  terms
$\hat R$ in an expansion of $R(\lambda_0+\i 0)$. Note that the
parameter $t_2$ may depend on the particular term $\hat R$ we consider.

\subStep{$\hat R=Sr^+_{\lambda_0}S^*$} We  write $\hat
R=(1\otimes r^+_{\lambda_0})\Pi=r^+_{\lambda_0}\Pi$ in order to apply
Lemmas \ref{lemma:propSing} and \ref{lemma:wavefas}.
Take  $t_2= s_{\breve  k}-2s_0$ (recall
$s_0=1/2+\rho/4$).
 From \eqref{eq:j_space2} it follows that
  $u_\omega^+\in  L^{2,a}_{-t}$ for $t>t_2-1$, so in particular \eqref{eq:WF4c} is proven.
  To show \eqref{eq:WF4b} we note that $WF^a_{s_{\breve k}}(u_{\omega'}^-)
  \subset \vM^-_{\omega'}$, cf. \eqref{eq:wacom}, and therefore  the condition  \eqref{eq:WF1} is
 fulfilled for
 $u=\Pi u_{\omega'}^-$ with $t= s_{\breve k}$ and
 $\kappa=\kappa_\sigma$, where
\begin{align*}
   \kappa_\sigma=-\cos\parb{(1-\rho/2)\cos^{-1}(1-\sigma)}.
 \end{align*} Then
we learn from
 \eqref{eq:WF3} that
\begin{align*}
  WF^a_{t_2}(\hat R
  u_{\omega'}^-)\cap\set{ a=1}\subset \vM^-_{\omega'+},
  \end{align*} showing \eqref{eq:WF4b}. It follows from \eqref{eq:j_space2}  and
  Lemma \ref{lemma:propSing} \ref{item:liW1}  that
    $\hat R u_{\omega'}^-\in
   L^{2,a}_{-t_1}$, showing \eqref{eq:hatbeti}.

\subStep{$\breve R(\lambda_0)\Pi'$}
 Take $t_2=t_1= s_{\breve k}$. By \eqref{eq:wi} $ u_{\omega'}^-, \breve R(\lambda_0)\Pi' u_{\omega'}^-\in L^{2,a}_{-t_1}$. From the previous case we
  know that  $u_\omega^+\in  L^{2,a}_{-t_2}$ and
  $WF^a_{t_1}(u_{\omega'}^-) \subset \vM^-_{\omega'}$. Hence   by   \eqref{eq:wii}
  \begin{align*}
    WF^a_{t_2}(\breve R(\lambda_0)\Pi'
  u_{\omega'}^-)\cap\set{ a=1}\subset WF^a_{t_2}(
    u_{\omega'}^-)\subset \vM^-_{\omega'}\subset \vM^-_{\omega'+}.
  \end{align*} We have shown \eqref{eq:hatbeti}--\eqref{eq:WF4c} in
  this case.

\subStep{$\widecheck R(\lambda_0+\i 0)$, cf. \eqref{eq:mainSplit}} Take  $t_2=s_{\breve k}-2s_0$ as we did
treating $\hat R=Sr^+_{\lambda_0}S^*$ above, so
  \eqref{eq:WF4c} is known. We also know that the condition  \eqref{eq:WF1} is
 fulfilled for
 $u= u_{\omega'}^-$ with $t=s_{\breve k}$ and
 $\kappa=\kappa_\sigma$, i.e.
\begin{align}\label{eq:WF1bb}
    WF^a_{s_{\breve k}}( u_{\omega'}^-)\cap\set{ b < \kappa, a=1}=\emptyset.
  \end{align}

We also  note that for any  $L\in \N$ (eventually taken large) and with
\begin{align*}
  K=v_{\lambda_0}
r^+_{\lambda_0}=\parb{-S^*I_a\breve R(\lambda_0)\Pi'I_aS+\vO(r^{-1-\rho})}r^+_{\lambda_0},
\end{align*}
\begin{align*}
    Sr^+_{\lambda_0}\parb{1+v_\lambda r^+_\lambda}^{-1}S^*
&=\sum^{2L-1}_{l=0}Sr^+_{\lambda_0} (-K)^lS^*\\
  &+Sr^+_{\lambda_0} (-K)^L\parb{1+K}^{-1} (-K)^LS^*.
\end{align*}
Note  that the present formula  $v^+_\lambda=v^-_\lambda$ gives naturally raise
to the  unambiguous  notation $v_\lambda$. Moreover  the term
$\vO(r^{-1-\rho})$ is actually $\vO(r^{-\infty})$ since $m_a=1$ making
$v_{\lambda_0}=\vO(r^{-2-2\rho})$ (as discussed after  \eqref{eq:eff_v}), however
 only the decay $v_{\lambda_0}=\vO(r^{-1-\rho})$ is used below.

We will treat  the terms in the summation essentially by the
methods used above, while the last term
will  be treated
differently. Let us first examine the  contribution
 to $\widecheck R(\lambda+
  \i0) $ from the terms in the summation.
\subStep{$l=0$}
 We consider
 \begin{align*}
   \widehat R_0:=-Sr^+_{\lambda_0}S^*I_a\breve R(\lambda_0)\Pi' -\breve
   R(\lambda_0)\Pi' I_aS r^+_{\lambda_0}S^*(1-I_a\breve
   R(\lambda_0)\Pi').
 \end{align*}
 By Lemma \ref{lemma:propSing},
 \eqref{eq:WF1bb} and \eqref{eq:wiii}--\eqref{eq:wii}
   it
 follows that
  $\widehat R_0 u_{\omega'}^-\in L^{2,a}_{-t_1}$, i.e.  \eqref{eq:hatbeti}
  holds for this contribution. (This is a
  rough bound due to the  extra factor $I_a$.)

Again we learn from
 \eqref{eq:WF3} (cf. \eqref{eq:WF1bb}) that
\begin{align*}
  WF^a_{t_2}(\widehat R_0
  u_{\omega'}^-)\cap\set{ a=1}\subset \vM^-_{\omega'+},
  \end{align*} showing \eqref{eq:WF4b}.
\subStep{$l\geq 1$}  We consider
 \begin{align*}
   \widehat R_l:=\parb{1-\breve
   R(\lambda_0)\Pi' I_a}S r^+_{\lambda_0}(-K)^lS^*\parb{1-I_a\breve
   R(\lambda_0)\Pi'}.
 \end{align*} Recalling that  $WF^a_{t_1}( u_{\omega'}^-) \subset
 \vM^-_{\omega'}$ we are lead to investigate the action of $S
 r^+_{\lambda_0}(-K)^lS^*$ to a vector $u\in L_{-\infty} ^{2,a}$  with $WF^a_{t_1}(u) \subset
 \vM^-_{\omega'}$. Each factor of $K$ improves the weight by
 $\inp{x}^{2s_0-1-\rho}=\inp{x}^{-q}$, $q:=\rho/2$, more precisely
 we obtain from Lemma \ref{lemma:propSing} that
 \begin{align*}
WF^a_{t_1+l q}(S
 (-K)^lS^*u) \cap\set{ a=1}\subset \vM^-_{\omega'+}&\subset \set{ b\geq \kappa_\sigma, a=1},\\
   WF^a_{t_2+l q}(S
 r^+_{\lambda_0}(-K)^lS^*u) \cap\set{ a=1}&\subset \vM^-_{\omega'+}.
 \end{align*} So we learn that
\begin{align}\label{eq:nBND}
   WF^a_{t_2+l q}(\widehat R_l   u_{\omega'}^-)\cap\set{ a=1}\subset \vM^-_{\omega'+},
 \end{align} in particular \eqref{eq:WF4b} holds.

By Lemma
 \ref{lemma:propSing} \ref{item:liW1} the global weight is improved by a factor
 $\inp{x}^{-q}$  for each action by $K$ as
 long as the condition $s'<-s_0$ of the lemma  is fulfilled
 (note that     $L^{2,a}_{-s_0}$ can
 not  be reached by the   action by  $r^+_{\lambda_0}$). Since $l\geq
 1$ and $
 u_{\omega'}^-\in L^{2}_{-t_2}$ we then  conclude that
 $\widehat R_l
 u_{\omega'}^-\in L^{2}_{-t_1}$, and  hence  also \eqref{eq:hatbeti} is
 shown.

\subStep{Remainder  term} Finally we need to examine
\begin{align*}
   \widehat R:=\parb{1-\breve
   R(\lambda_0)\Pi' I_a}S r^+_{\lambda_0} (-K)^L\parb{1+K}^{-1} (-K)^L     S^*\parb{1-I_a\breve
   R(\lambda_0)\Pi'}.
 \end{align*} The operator  $\parb{1+K}^{-1}=1\otimes \parb{1+K}^{-1}\in \vL (L^2_s(\bX))$,
 $s\in (s_0, s_0+q)$, but we dont  know its microlocal properties. Whence
 we proceed differently by  introducing   the vectors
 \begin{align*}
   v_{\omega'}^-&= S(-K)^L     S^*\parb{1-I_a\breve
   R(\lambda_0)\Pi'}u_{\omega'}^-,\\
v_{\omega}^+&=S\parbb{\parb{1-\breve
   R(\lambda_0)\Pi' I_a}S r^+_{\lambda_0} (-K)^L}^*u_{\omega}^+
 \end{align*} and write
 \begin{align*}
   \langle u_\omega^+,
 \widehat R u_{\omega'}^-\rangle=\langle v_\omega^+,
  \parb{1+K}^{-1}v_{\omega'}^-\rangle .
 \end{align*}
It suffices to show that for some $s\in (s_0,  s_0+q
)$ and $L$ sufficiently large
 \begin{align}\label{eq:suffv}
    v_{\omega'}^- \in L^{2,a}_{s},\quad  v_{\omega}^+\in L^{2,a}_{-s}.
 \end{align} To obtain the result for $ v_{\omega'}^- \in L^{2,a}_{s}$
  we use  Lemma
 \ref{lemma:propSing} repeatedly as explained above    improving the
 global
 weight by  a factor
 $\inp{x}^{-q}$  for each action by
 $K$. It follows that $ v_{\omega'}^- \in
 L^{2,a}_{s}$   holds for any  $s\in (s_0,  s_0+q)$ for    $L$ large  enough. For
 the assertion $v_{\omega}^+\in L^{2,a}_{-s}$ we invoke a parallel
 `incoming' version of Lemma \ref{lemma:propSing} for
 $r^-_{\lambda_0}$ (for simplicity not stated). Note that powers of
 $r^-_{\lambda_0}$ show up when we expand the adjoint to treat
 $v_{\omega}^+$. Since the construction of $u_{\omega}^+$ and
 $u_{\omega'}^-$ appears
 symmetric indeed the incoming version of Lemma \ref{lemma:propSing}
 applies in the same fashion to $u_{\omega}^+$ as we have seen the lemma applies to $u_{\omega'}^-$.

   \end{proof}

\subsection{Scattering for  physics
  models at a two-cluster threshold,  case $\widetilde\vA=\vA_1$}\label{subsec:Non-elastic
  scattering}

In the previous subsections we made several simplifying
assumptions under the condition of an effective attractive  slowly decaying
inter-cluster potential. These were made partly for simplicity of
presentation. Rather than giving a full account on how to remove these
assumptions under the weakest possible conditions we shall here focus
on the models of physics which we studied in  Section
\ref{sec:CoulRellich} (and actually we shall not give  details of proof
below). This means that the effective
inter-cluster potential here is attractive Coulombic, so in agreement
with  Section
\ref{sec:CoulRellich} we consider the case $\widetilde\vA=\vA_1$. (Note that
this condition  includes cases of overall  neutral as well as overall
non-neutral systems of particles.) We shall
focus on treating the multiple two-cluster case, but this will only be
 in the `generic' situation studied in the first part of the proof of Theorem
 \ref{thm:physical-modelsRell}. Whence our main interest here is
 scattering for the  physics
  models with  $\widetilde\vA=\vA_1$  in the simplest possible multiple case.

We consider the setting of Theorem
 \ref{thm:physical-modelsRell} with $\widetilde{\vA}=\vA_1$ under
 the two `generic' conditions \eqref{eq:dirCo1} and
 \eqref{eq:dirCo2}. Let us also  for simplicity  assume that
 $\#\vA_1=2$, i.e. $\vA_1=\set{a_1,a_2}$
 as in Proposition \ref{prop2.3},  and as for  the latter result we
 assume for convenience
 that $\lambda_0$ is a simple eigenvalue for $H^{a_j}$; $j=1,2$. We
 impose \eqref{eq:nonEg0},
 again for simplicity of presentation only.

We are interested in the four parts of the scattering operator
$S_{\beta\alpha}=\parb{W^+_\beta}^*
W^-_\alpha$ defined by \eqref{eq:ooo13}, where
$\alpha,\beta\in\set{(a_j, \lambda_0,\varphi_j)\mid j=1,2}$, and
particularly in the corresponding pieces of the scattering operator
\begin{align*}
 S_{\beta\alpha}(\lambda)
=-2\pi W_\beta^{+}(\lambda)^*T_\alpha^-(\lambda),\quad \lambda\in I^+_\delta=[\lambda_0,\lambda_0+\delta],
\end{align*}  cf.
\eqref{eq:SmatrixN}.

We can prove a complete  analogue of Theorem
\ref{thm:ScatN}. Note that for each channel there is
associated an effective one-body potential denoted by $w_j$ (or
$w_\alpha$), which essentially is attractive Coulombic. We define
correspondingly $\widecheck
S_{\alpha\alpha}(\lambda)$ as the remainder after the one-body
scattering matrix is subtracted, exactly as in Theorem
\ref{thm:ScatN}. Similarly we may define $\widecheck
S_{\beta\alpha}(\lambda)=S_{\beta\alpha}(\lambda)$ for $\alpha\neq
\beta$. With these conventions we obtain the assertion of Theorem
\ref{thm:ScatN}  (with the same value of $k$) for all of the four
entries of $\widecheck S_{\beta\alpha}(\lambda)$. The proof is
essentially the same. In particular the off-diagonal parts of the
scattering matrix are
compact, while the same is the case for the diagonal parts only after
a subtraction of a unitary operator, and this includes the threshold
energy.
 \begin{cor}\label{cor:elast-scattR} Under the above conditions the
    diagonal parts
   $S_{\alpha\alpha}(\lambda_0)$ are unitary up to a compact
   term.  In particular  elastic two-cluster scattering exists in the small
   inter-cluster energy regime.
   \end{cor}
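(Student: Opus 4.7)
The plan is to combine the multi-channel analogue of Theorem~\ref{thm:ScatN}, asserted in the paragraph just above the corollary, with the well-known unitarity of the zero-energy one-body scattering matrix for an attractive Coulombic potential. Indeed, the analogue states that, for each channel $\alpha = (a_j,\lambda_0,\varphi_j)$, the operator
\begin{equation*}
\widecheck S_{\alpha\alpha}(\lambda) \;=\; S_{\alpha\alpha}(\lambda) - S_{w_\alpha}(\lambda-\lambda_0)
\end{equation*}
belongs to $\mathcal L(H^{-k}(\S^{n-1}),H^{k}(\S^{n-1}))$ for a suitable $k\geq 0$ obeying \eqref{eq:minC02}, with a continuous dependence on $\lambda \in I^+_\delta$. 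In particular, taking $\lambda = \lambda_0$, the operator $\widecheck S_{\alpha\alpha}(\lambda_0)$ is compact on $L^2(\S^{n-1})$ (by Sobolev embedding on the sphere and \eqref{eq:minC02}, which guarantees a gain of regularity).

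Next, I would invoke one-body scattering theory as reviewed in Subsection~\ref{subsubsec: Review of DS1}. Since $w_\alpha$ is, up to a polynomially decreasing correction, of the attractive Coulombic form with $V_1(r) = -\gamma r^{-1}$ satisfying Conditions \ref{assump:conditions1} and \ref{assump:conditions2}, the one-body scattering matrix $S_{w_\alpha}(\mu)$ is unitary on $L^2(\S^{n-1})$ for (almost) every $\mu \geq 0$ and depends strongly continuously on $\mu \geq 0$. Since the set of unitary operators is strongly closed, this yields that $S_{w_\alpha}(0)$ is itself unitary on $L^2(\S^{n-1})$.

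Combining the two observations gives $S_{\alpha\alpha}(\lambda_0) = U_\alpha + K_\alpha$ with $U_\alpha := S_{w_\alpha}(0)$ unitary and $K_\alpha := \widecheck S_{\alpha\alpha}(\lambda_0)$ compact, which is the first assertion of the corollary. For the second assertion I would argue as follows. Writing $T_\alpha := S_{\alpha\alpha}(\lambda_0)^*S_{\alpha\alpha}(\lambda_0) = 1 + U_\alpha^*K_\alpha + K_\alpha^*U_\alpha + K_\alpha^*K_\alpha$, we see that $T_\alpha$ differs from the identity by a compact operator, so by Weyl's theorem $1 \in \sigma_{\mathrm{ess}}(T_\alpha)$. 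Hence for every $\varepsilon>0$ there exists $\tau\in L^2(\S^{n-1})$ with $\|\tau\|=1$ and $\|S_{\alpha\alpha}(\lambda_0)\tau\|^2 \geq 1-\varepsilon$, demonstrating nontrivial elastic scattering at the threshold and, by strong continuity of $S_{\alpha\alpha}(\cdot)$ on $I^+_\delta$ (from the same analogue of Theorem \ref{thm:ScatN}), in a neighbourhood of it.

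I do not foresee a substantial obstacle here once the multi-channel analogue of Theorem~\ref{thm:ScatN} is in hand; the only mild subtlety is to verify that $S_{w_\alpha}(0)$ is unitary at the exact energy $\mu=0$ (rather than merely for $\mu>0$ as originally proved in \cite{DS1}), which is handled by the strong continuity statement recalled above.
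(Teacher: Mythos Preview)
Your approach is correct and matches the paper's, which treats the corollary as an immediate consequence of the preceding paragraph (the multi-channel analogue of Theorem~\ref{thm:ScatN}) together with the unitarity of $S_{w_\alpha}(0)$ from \cite{DS1}. One small correction: your claim that ``the set of unitary operators is strongly closed'' is false in general---a strong limit of unitaries is only guaranteed to be an isometry, since strong convergence of $U_n$ does not imply strong convergence of $U_n^*$. However this does not affect the argument, because the review in Subsection~\ref{subsubsec: Review of DS1} states directly that $S(\lambda)$ is unitary for every $\lambda\geq 0$ (including $\lambda=0$), so no limiting argument is needed.
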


  It is not known, even with the above simplifying assumptions,  if $S_{\alpha\alpha}(\lambda_0)$ is  an
 exact unitary operator (we can in fact not  exclude that $\ker
 S_{\alpha\alpha}(\lambda_0)\neq\set{0}$), cf.  Corollary
 \ref{cor:norm_less22} and a brief discussion in Section \ref{sec:Transmission problem at
  threshold}. In particular this is not known for
$\lambda_0=\Sigma_2$ and this threshold being
 multiple (note that in this case the threshold is
automatically a simple eigenvalue of the involved
 sub-Hamiltonians). On the other hand  obviously the scattering matrix
$S_{\alpha\alpha}(\lambda_0)$ in Theorem
\ref{thm:ScatN}  (where $\lambda_0$ is non-multiple) is an exact unitary operator for
$\lambda_0=\Sigma_2$, cf. Remark \ref{remark:elast-scattA}.

As for the second main result Theorem \ref{thm:sings} we need an
analogue of Theorem \ref{thm:repEigen} so that we can  first `improve'
the construction of $S_{\beta\alpha}(\lambda_0)$ in the spirit of
Lemma \ref{lem:conTransp}, not to be elaborated on. By mimicking Subsection
\ref{subsubsec: Elastic scattering at lowest threshold} we then obtain
the following result.
\begin{thm}
  \label{thm:non-elast-scatt0 } Under the above conditions and with $\lambda_0=\Sigma_2$
\begin{align*}
{ \mathrm
{sing \,supp\,}}S_{\beta\alpha}(\lambda_0)\quad
  \begin{cases}
   &\subset\{(\omega,\omega')\mid \omega\cdot \omega'= -1\}\quad \text{for}\quad \alpha=\beta,\\
&=\emptyset\quad \text{for}\quad \alpha\neq\beta.
  \end{cases}
  \end{align*}
\end{thm}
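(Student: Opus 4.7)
\medskip

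The plan is to extend the analysis of Theorem~\ref{thm:sings} (and in particular Subsection~\ref{subsubsec: Elastic scattering at  lowest threshold}) to the two-channel setting, working channel by channel. First I would introduce for each $\alpha=(a_j,\lambda_0,\varphi_j)$, $j=1,2$, the FIOs $J^\pm_\alpha,T^\pm_\alpha$ using the channel eikonal phase $\phi^\pm_\alpha$ and amplitude constructed from the effective potential $w_\alpha:=S_\alpha^*I_{a_j}S_\alpha$ (attractive Coulombic by $\widetilde{\vA}=\vA_1$, so with $\rho=1$), together with the improved operators $J^\pm_{\alpha,M}$, $T^\pm_{\alpha,M}$ obtained by the transport/reduced-resolvent iteration of Lemma~\ref{lem:conTransp}. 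This produces kernels $j^\pm_{\alpha,M}(x,\omega)$, $t^\pm_{\alpha,M}(x,\omega)$ with the same wave-front confinement as in Lemma~\ref{lemma:wavefas}:
\begin{align*}
WF^{a_j}_s(\partial^\delta_\omega j^{\pm}_{\alpha,M}(\cdot,\omega))
&\subset\bigl\{\pm\hat{x}\cdot\omega\ge 1-\sigma',\,\,b\hat{x}+\bar c=\pm F^+_{\sph,\alpha}(\hat x,\pm\omega)/\sqrt{\gamma_\alpha}\bigr\},\\
WF^{a_j}_s(\partial^\delta_\omega t^{\pm}_{\alpha,M}(\cdot,\omega))
&\subset\bigl\{1-\sigma'\le\pm\hat{x}\cdot\omega\le 1-\sigma,\,\,b\hat{x}+\bar c=\pm F^+_{\sph,\alpha}(\hat x,\pm\omega)/\sqrt{\gamma_\alpha}\bigr\}.
\end{align*}
Here the wave-front set is the one defined over the cluster index of the channel, and the critical manifolds $\vM^{\pm,\alpha}_\omega$ are defined as in Step~II of the proof of Theorem~\ref{thm:sings}.

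Next I would write, in analogy with \eqref{eq:Sforma},
\begin{align*}
\partial^\delta_\omega\partial^{\delta'}_{\omega'}S_{\beta\alpha}(\lambda_0)(\omega,\omega')
&=-2\pi\lim_{R\to\infty}\langle\partial^\delta_\omega j^+_{\beta,M}(\cdot,\omega),\chi_R\partial^{\delta'}_{\omega'}t^-_{\alpha,M}(\cdot,\omega')\rangle\\
&\quad+2\pi i\lim_{R\to\infty}\langle\partial^\delta_\omega t^+_{\beta,M}(\cdot,\omega),\chi_R R(\lambda_0+i0)\chi_R\partial^{\delta'}_{\omega'}t^-_{\alpha,M}(\cdot,\omega')\rangle,
\end{align*}
expand $R(\lambda_0+i0)$ using the Grushin representation \eqref{eq:resolBAS} (adapted to the multiple case as in Section~\ref{sec:Long-range negative effective potentials} and Subsection~\ref{subsubsec: Elastic scattering at  lowest threshold}), and carry out the strategy \eqref{eq:strategy} for each resulting term. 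For the diagonal entry $\alpha=\beta$, exactly the same analysis as in Theorem~\ref{thm:sings} (with the classical orbits controlled by the ODE \eqref{eq:reduced eqns} for $\rho=1$) yields $\vM^{+,\alpha}_\omega\cap\vM^{-,\alpha}_{\omega'+}=\emptyset$ unless $\omega\cdot\omega'=\cos\frac{\rho\pi}{2-\rho}=\cos\pi=-1$, giving the stated elastic singular support.

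For the off-diagonal entry $\alpha\neq\beta$, the key extra ingredient is that the two channels involve different cluster decompositions $a_1\neq a_2$ with the cluster bound states $\varphi_j$ living on orthogonal internal subspaces $\bX^{a_j}$, and the geometric properties \eqref{eq:22A}--\eqref{eq:89A} (and their $N$-body analogues) imply $\bX_{a_1}\cap\bX_{a_2}=\{0\}$. Consequently, the tensor structure of $j^+_{\beta,M}=\varphi_\beta\otimes J^+_{g_\beta}$ and $t^-_{\alpha,M}$ (with its dominant part $\varphi_\alpha\otimes J^-_{g_\alpha}$) forces a rapid-decay pairing: integrating out the $\varphi_\beta\otimes\overline{\varphi_\alpha}$ factor using \eqref{eq:elementEst2} and polynomial decay of cluster bound states produces arbitrarily fast decay in the inter-cluster direction. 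Combined with the wave-front confinement to disjoint channel-directions ($\hat{x}_{a_2}\cdot\omega\approx 1$ for $j^+_{\beta,M}$ versus $\hat{x}_{a_1}\cdot(-\omega')\approx 1$ for $t^-_{\alpha,M}$, in transverse cluster coordinates), this shows that the wave-front sets do not intersect for any $(\omega,\omega')\in\S^2\times\S^2$; the resolvent piece is then handled as in Step~II of the proof of Theorem~\ref{thm:sings}, using the Grushin parametrix and propagation of singularities (Lemmas~\ref{lemma:Respectelast-scatt-at} and \ref{lemma:propSing}, now applied with respect to each $a_j$ in turn), together with the crucial observation that $(H-\lambda_0)$-propagation cannot bridge outgoing channel-$\beta$ trajectories to incoming channel-$\alpha$ ones when $a_1\ne a_2$.

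The main obstacle will be the propagation-of-singularities step for off-diagonal terms involving the full resolvent $R(\lambda_0+i0)$: one needs a version of Lemma~\ref{lemma:propSing} that simultaneously controls the $\bX^{a_1}$- and $\bX^{a_2}$-wave-fronts through the action of the reduced resolvent $R'(\lambda_0\pm i0)$, together with the `remainder' estimate of the type \eqref{eq:suffv} applied separately in each cluster decomposition. I expect this to require an iteration on powers $K^L$ of the operator $K=v_{\lambda_0}r^+_{\lambda_0}$ (now matrix-valued as in \eqref{eq:EeffsCo}), combined with the cross-channel decay produced by \eqref{eq:2proj} and Lemma~\ref{Lemma:basic2A}, to push the inner products into a regime where Cauchy--Schwarz and the refined wave-front bounds close the argument.
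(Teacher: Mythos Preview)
Your overall strategy is correct and matches the paper's own (deliberately terse) indication that one ``mimics Subsection~\ref{subsubsec: Elastic scattering at  lowest threshold}'': the diagonal case $\alpha=\beta$ is verbatim Theorem~\ref{thm:sings} with $\rho=1$ (so $\cos\tfrac{\rho\pi}{2-\rho}=\cos\pi=-1$), and the off-diagonal case relies on the polynomial decrease of all cross-channel objects.

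Two corrections and one simplification are in order for the off-diagonal part. First, the internal subspaces $\bX^{a_1}$ and $\bX^{a_2}$ are \emph{not} orthogonal (nor need they intersect trivially); it is the \emph{inter-cluster} subspaces that satisfy $\bX_{a_1}\cap\bX_{a_2}=\{0\}$ by \eqref{eq:26}. What actually drives the rapid decay is the bound $|x^{a_1}|+|x^{a_2}|\ge c|x|$ used in the proof of Lemma~\ref{Lemma:basic2A}, together with the polynomial decay of the $\varphi_j$'s. Second, the phrase ``the wave-front sets do not intersect'' is not well-posed as stated: $WF^{a_1}_s$ and $WF^{a_2}_s$ live on different reduced phase spaces $\T^*$, so they cannot be intersected directly. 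The correct statement is purely algebraic: every cross-channel building block ($s_{12}$, $W_{12}$, $\check s_{12}$, $K_{12}$, and hence the off-diagonal entries of $v^+_{\lambda_0}$ and of $(1+v^+_{\lambda_0}r^+_{\lambda_0})^{-1}$) is polynomially decreasing in the sense of Proposition~\ref{prop2.3}, so after differentiation in $(\omega,\omega')$ one still lands in spaces where Cauchy--Schwarz applies with arbitrary weights.

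Consequently the ``main obstacle'' you flag is not really an obstacle. No two-cluster propagation-of-singularities lemma bridging $a_1$ and $a_2$ is needed: once you expand $R(\lambda_0+\i0)$ via the Grushin formula and the Neumann series for $(1+v_{\lambda_0}r^+_{\lambda_0})^{-1}$ as in Step~III of the proof of Theorem~\ref{thm:sings}, every term of $\langle t^+_{\beta,M},\widehat R\, t^-_{\alpha,M}\rangle$ with $\alpha\neq\beta$ either factors through a polynomially decreasing off-diagonal entry of the $2\times2$ effective problem, or through a pairing $S_2^*(\cdots)S_1$ of the type covered by Lemma~\ref{Lemma:basic2A}. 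The remainder term \eqref{eq:suffv} is then handled exactly as before, channel by channel.
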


 \section{Effective $r^{-2}$ potentials,  atom-ion
   case}\label{sec:rtominus2 potentials} One may  consider elastic
 scattering in the non-slowly decaying case, as before  slightly above
 a
given two-cluster threshold $\lambda_0$. How does
 $S_{\alpha\alpha}(\lambda)$
behave as
$\lambda\to \lambda_0$?

Suppose $n=3$ and consider the dynamical nuclei model, see Subsection
\ref{First principal example}. Consider a two-cluster decomposition
$a=a_0=(C_1,C_2)$ of $N$ charged particles. Suppose \eqref{eq:discr},
i.e.
\begin{align*}
  \lambda_0\in \sigma_\d(H^a),
\end{align*} and that the total charge of cluster $C_1$ vanishes:
\begin{align*}
  Q_1=\sum_{j\in C_1}q_j=0.
\end{align*} Let {$\alpha=(a, \lambda_0,\varphi_\alpha)$} be a
{channel}; $\varphi_\alpha=\varphi^{C_1}\otimes
\varphi^{C_2}=\varphi^1\otimes \varphi^2$. Recall from Subsection
\ref{First principal example} the cluster charge moment
\begin{align*}
  \inp{\varphi^1,\widetilde
  Q_1\varphi^1}_2={\inp[\Big]{\varphi^1,\sum_{j\in
    C_1}q_j(x_j-R_1)\varphi^1}}.
\end{align*}

Now we can state a result from \cite{Sk2} (see \cite[Theorem 2.6 (3)]{Sk2}):

\begin{thm}[Elastic two-cluster scattering away from thresholds]
\label{thm:effective-c-n case}
  \begin{enumerate}[1)]
  \item There exist
\begin{align*}
  W_\alpha^\pm f=\lim_{t\to \pm \infty}
\,\e^{\i t H}\e^{-\i t H_a}\parb {\varphi_\alpha\otimes f}.
\end{align*}
\item Let $S_{\alpha\alpha}=(W_\alpha^+)^*W_\alpha^-$ and  {$I_c=(\lambda_0,
    \infty)\setminus \vT$}. Let  {$\theta$ denote  the coordinate vector of
    $\omega-\omega'\in \bX_a$}. Then for   any $\epsilon>0$
  \begin{align*}
    S_{\alpha\alpha}&\simeq \int_{I_c}^\oplus S_{\alpha\alpha}(\lambda)\;\d\lambda,\\
  S_{\alpha\alpha}(\lambda,\cdot )&\in C^\infty(I_c, \parb{\S^{2}\times \S^{2}}\setminus\{\theta=0\}),\\
S_{\alpha\alpha}(\lambda,\omega,\omega')-\delta (\theta)+&\pi^{-1}\tfrac{M_1M_2}{M_1+M_2}Q_2\inp{\varphi^1,\widetilde
  Q_1\varphi^1}_2\cdot
\tfrac{\theta}{|\theta|^2}=\vO_{{\lambda}} (|\theta|^{-\epsilon}).
  \end{align*}
\end{enumerate}
\end{thm}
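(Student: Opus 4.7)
The plan is to treat the two assertions separately. For (1), I would verify the existence of channel wave operators $W_\alpha^\pm$ by Cook's method. The critical point is that along the asymptotic motion $\e^{-\i t H_a}(\varphi_\alpha\otimes f)$ with $\hat f\in C_\c^\infty(\bX_a\setminus\{0\})$, the inter-cluster coordinate $x_a$ satisfies $|x_a|\sim |t|$, and the effective interaction felt by the cluster state is, by the expansion \eqref{eq:expans1} with $Q_1=0$, of order $\vO(|x_a|^{-2})$ after averaging in the internal variable. Combined with the exponential decay of $\varphi^1,\varphi^2$ one gets $\|I_aJ_0^\pm\e^{-\i t H_a}(\varphi_\alpha\otimes f)\|\in L^1(|t|\to\infty)$ after removing the principal direction of oscillation (i.e.\ after introducing an appropriate WKB modifier, or equivalently, a Dollard-type phase). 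The reduction to an effective one-body problem via the Grushin machinery of Chapter \ref{Reduction to a two-body problem} then identifies $W_\alpha^\pm$ with one-body wave operators for the non-local operator $p_a^2+W$ where $W(x_a)=\langle\varphi_\alpha,I_a\varphi_\alpha\rangle_{L^2(\bX^a)}+\text{(poly.\ decreasing)}$.

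For (2), once the effective one-body problem is identified, the scattering matrix $S_{\alpha\alpha}(\lambda)$ of $H$ at a two-cluster energy $\lambda$ coincides, modulo smoothing contributions originating from $\Pi' I_a\Pi$ and from $R'(\lambda\pm\i0)$ (controlled as in Theorem \ref{thm:ScatN} and Remark \ref{remarks:inelastic-scattering-n}~\ref{item:multCase3}), with the one-body scattering matrix associated to the $\vO(|x_a|^{-2})$ effective potential. Since $\lambda_0\in\sigma_\d(H^a)$ and we work in the interior of $I_c$, the reduced resolvent $R'(z)$ is smooth in $z$ on $I_c$, so I would  write
\[
S_{\alpha\alpha}(\lambda)=-2\pi J^+_\alpha(\lambda)^*T^-_\alpha(\lambda)+2\pi\i\,T^+_\alpha(\lambda)^*R(\lambda+\i 0)T^-_\alpha(\lambda)
\]
and propagate singularities using the oscillatory-integral representations $J_0^\pm$ from Subsection~\ref{subsubsec: Review of DS1}. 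Stationary-phase analysis on the product kernel shows $(\omega,\omega')\in\mathrm{sing\,supp}\,S_{\alpha\alpha}(\lambda)$ forces $\omega=\omega'$ (i.e.\ $\theta=0$), giving smoothness off the diagonal.

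The leading singularity at $\theta=0$ I would extract by a Born-type computation. Writing $V(R)=-Q_2\,\langle\varphi^1,\widetilde Q_1\varphi^1\rangle_1\cdot R/|R|^3+\vO(|R|^{-3})$ and converting from the physical separation $R$ to the intrinsic coordinate $x_a$ via the mass-metric identity $|x_a|_q=|R|\sqrt{2M_1M_2/M}$, the first Born term for $S_{\alpha\alpha}(\lambda_0)(\omega,\omega')-\delta(\omega-\omega')$ is proportional to the Fourier transform of the dipole-type tail. Using
\[
\mathcal{F}\Big[\frac{R}{|R|^3}\Big](\xi)=-\frac{4\pi\i\,\xi}{|\xi|^2}
\]
in $\R^3$, setting $\xi=k(\omega-\omega')=k\theta$ and tracking the appropriate factors of $k$, $1/(2\pi)$, and the reduced-mass rescaling, produces the claimed leading singularity
\[
-\pi^{-1}\frac{M_1M_2}{M_1+M_2}\,Q_2\,\langle\varphi^1,\widetilde Q_1\varphi^1\rangle_2\cdot\frac{\theta}{|\theta|^2}.
\]
The remainder $\vO_\lambda(|\theta|^{-\epsilon})$ is produced by the $\vO(|R|^{-3})$ tail of $V$ (whose Fourier transform is of the form $\vO(|\theta|^0)$ with logarithmic corrections, hence absorbable) together with a second-Born contribution that is better behaved at $\theta=0$ thanks to the compactness of $T_\alpha^\pm(\lambda)^*T_\alpha^\mp(\lambda)$ on spheres of non-critical spherical modes.

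The main obstacle will be a clean accounting of the mass-metric constants and of the normalization conventions linking $R$-coordinates to $x_a$-coordinates, together with the verification that the non-local term $S^*I_aR'(\lambda\pm\i0)I_aS$ (which contributes to the true effective potential at order $\vO(r^{-3})$) and the cross-term involving $R'(\lambda\pm\i0)$ in the expression for $S_{\alpha\alpha}(\lambda)$ both produce only $\vO_\lambda(|\theta|^{-\epsilon})$ remainders. The key estimates here are the polynomial decay properties of $\Pi' I_a\Pi$ (Lemma \ref{Lemma:basic2}~\ref{item:3mq}), the smoothness of $R'(\lambda+\i0)$ on $I_c\setminus\vT$ away from thresholds, and propagation-of-singularities arguments of the type used in the proof of Theorem \ref{thm:sings} adapted to the sub-Coulombic ($r^{-2}$) tail.
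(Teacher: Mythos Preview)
The paper does not prove this theorem: immediately before the statement it writes ``Now we can state a result from \cite{Sk2} (see \cite[Theorem 2.6 (3)]{Sk2})'', and there is no proof in the paper. So there is nothing to compare your argument against here, and the relevant question is only whether your sketch stands on its own.

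Your Born-approximation computation for the leading singularity $-\pi^{-1}\tfrac{M_1M_2}{M_1+M_2}Q_2\langle\varphi^1,\widetilde Q_1\varphi^1\rangle\cdot\theta/|\theta|^2$ is the right idea and does produce the correct coefficient once the mass-metric conversion is tracked. However, several of the tools you invoke are misplaced. The stationary modifiers $J_0^\pm$ of Subsection~\ref{subsubsec: Review of DS1}, Theorem~\ref{thm:ScatN}, and Remark~\ref{remarks:inelastic-scattering-n}~\ref{item:multCase3} are all for the \emph{attractive slowly decaying} case ($\rho<2$, effective potential $\sim -|x_a|^{-\rho}$); they are not applicable to the $\vO(|x_a|^{-2})$ setting of this theorem, where the correct choice is $J_0^\pm=1$ and the short-range scattering-matrix formula \eqref{eq:scMatr}. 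Likewise the Grushin reduction of Chapter~\ref{Reduction to a two-body problem} is engineered for $z$ \emph{near} $\lambda_0$ and buys nothing on $I_c$ away from thresholds; the relevant input there is simply the limiting absorption principle for the full resolvent $R(\lambda+\i0)$ at non-threshold $\lambda$ (e.g.\ \cite{AIIS}), not smoothness of $R'(z)$. Your mention of a Dollard-type phase is also unnecessary: since $Q_1=0$ the inter-cluster interaction $I_a$ itself is $\vO(|x_a|^{-2}\langle x^a\rangle)$, so the unmodified wave operators exist directly by Cook's method. The actual proof in \cite{Sk2} proceeds via an $N$-body FIO construction with improved transport equations (in the spirit of Lemma~\ref{lem:conTransp} but for non-threshold energies), combined with microlocal resolvent estimates to control the second Born term; that machinery, rather than the Grushin reduction or the \cite{DS1} attractive-Coulomb constructions, is what yields the $C^\infty$-smoothness off the diagonal and the remainder bound $\vO_\lambda(|\theta|^{-\epsilon})$.
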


The latter bound is $\lambda$-dependent with a locally uniform
dependence.
A natural question is, if  (under conditions) there  is a
uniform bound on an interval
$(\lambda_0,\lambda_0+\delta)$, $\delta>0$? This
question
 depends on the resolvent behaviour at $\lambda_0$, and Chapter
  \ref{chap:resolv-asympt-near} provides  some  information of
  relevance for  this
  type of  problem. A related problem is, if  low-energy elastic scattering  \emph{exists},  cf. Corollary
\ref{cor:elast-scattR}. In the next section we address the latter
problem, not only for the above case  (for which  $I_a(x^a=0)=0$) but also
for the effective repulsive Coulombic case.

\section{Non-transmission  at
  a threshold,  physics
  models}\label{sec:Transmission problem at threshold}
We consider a non-transmission problem  for  the  physics
  models of Sections \ref{$N$-body
  Schr\"odinger operators} and \ref{$N$-body Schr\"odinger operators with infinite mass
  nuclei}  with $n=3$. Although the statement of the problem makes sense more
generally we restrict the attention to these models.

Mimicking \eqref{eq:ooo13} we introduce for a given  channel
 $\alpha=(a,\lambda_0, \varphi_\alpha)$ (at a given two-cluster threshold
  $\lambda_0$) the channel wave operators
\begin{align}
  \begin{split}
    W_\alpha^{\pm}f&=\lim _{t\to {\pm}\infty}\e^{\i
  tH}\parb{1\otimes  J_0^{\pm}}\varphi_\alpha\otimes \e^{-\i t (p_a^2+\lambda_0)}f\\&=\lim _{t\to {\pm}\infty}\e^{\i
  tH}\parb {\varphi_\alpha\otimes J_0^{\pm}\e^{-\i t (p_a^2+\lambda_0)}f} ;\quad
  \hat f\in C^\infty_\c(\bX_a\setminus\{0\}). \label{eq:ooo132}
  \end{split}
\end{align}   In the
effective Coulombic cases  the  appearing
stationary modifiers $J_0^{\pm}$  are
chosen   as in
\eqref{eq:int_ope9} and \eqref{eq:ooo13} (i.e. also this way for the
repulsive case), and in the effective non-Coulombic case (like in Theorem
\ref{thm:effective-c-n case}) we  take  $J_0^{\pm}=1$.  Let $\Pi^\alpha=\ket {\varphi_\alpha}\bra {\varphi_\alpha}\otimes 1$ and
\begin{align*}
  \Pi^\alpha_{\pm}=\slim_{t\to \pm \infty} \e^{\i
  tH}\Pi^\alpha\e^{-\i
  tH}1_{\R\setminus\sigma_\pp(H)}(H).
\end{align*}

We are interested in `transmission' (for example  `exchange', cf. \cite{CT}), or rather lack of transmission
between channels  with the energy constraint of localization slightly above $\lambda_0$. Such lack of
transmission may be phrased mathematically
as
\begin{align}\label{eq:proConj}
  \norm[\big]{\parb{1-\Pi^\alpha_+}\Pi^\alpha_-F_\delta(H-\lambda_0)}\to
  0\text{ for }\delta\to
  0_+.
\end{align} Here $F_\delta$ denotes the
characteristic function $F([0,\delta])$ on $\R$.

Note that a state $\psi=1_{\R\setminus\sigma_\pp(H)}(H)\psi$ is in the range
of the projection $\Pi^\alpha_{\pm}$ if and only if $(1-\Pi^\alpha)\psi(t)\to 0$ for $t\to \pm \infty$; $\psi(t):=\e^{-\i tH}\psi$. Intuitively a consequence of
\eqref{eq:proConj} is that if for a state $\psi$ localized in energy slightly
above $\lambda_0$ and effectively $\psi(t)\approx \Pi^\alpha
\psi(t)$ for $t\to -\infty$ then also $\psi(t)\approx \Pi^\alpha
\psi(t)$ for $t\to +\infty$. Whence, asymptotically in the low  energy
regime, only elastic
scattering occurs and transmission to another channel than (the
incoming) $\alpha$ does not occur.

Clearly
$\Pi^\alpha_{\pm}W_\alpha^{\pm}=W_\alpha^{\pm}$, and by asymptotic
completeness (cf. \cite{De}) $\Pi^\alpha_{\pm}\subset W_\alpha^{\pm}\parb{W_\alpha^{\pm}}^*$. Whence
\begin{equation*}
  \Pi^\alpha_{\pm}=W_\alpha^{\pm}\parb{W_\alpha^{\pm}}^*.
\end{equation*} Consequently we can compute
\begin{equation*}
  \parb{1-\Pi^\alpha_+}\Pi^\alpha_-F_\delta(H-\lambda_0)=\parb{W_\alpha^--W_\alpha^+S_{\alpha\alpha}}F_\delta(p_a^2)\parb{W_\alpha^-}^*.
\end{equation*} Now
\begin{equation*}
  \norm{\parb{W_\alpha^--W_\alpha^+S_{\alpha\alpha}}f}^2=\norm{f}^2-\norm{S_{\alpha\alpha}f}^2,
\end{equation*}
 showing with the above identity that  a necessary and sufficient condition for \eqref{eq:proConj} is
that  $S_{\alpha\alpha}(\lambda)$  is asymptotically isometric, that is
\begin{equation}\label{eq:asymU}
  \norm{1-S_{\alpha\alpha}(\lambda)^*S_{\alpha\alpha}(\lambda)}\to 0 \text{ for
  }\lambda\to (\lambda_0)_+.
\end{equation}

Note that in the context of Corollary  \ref{cor:elast-scattR} it
follows from the assertion that  the limiting diagonal parts
$S_{\alpha\alpha}(\lambda_0)\neq 0$, however it
is \emph{not stated} that $S_{\alpha\alpha}(\lambda_0)$ are
isometric, and we can not show this to be the case. See
Remark \ref{remark:elast-scattA}  for  a conjecture on
transmission for the attractive Coulombic case.

We consider  the physical
 models with $n=3$, and for convenience we impose Condition \ref{cond:uniq}
 with $a_0=a$, i.e. we consider only the non-multiple case.
 Then there  are three  cases  for which we can show \eqref{eq:asymU} (or
 equivalently \eqref{eq:proConj}):
 \begin{enumerate}[\bf I)]
\item
 Effective  repulsive Coulombic
 case.
\item $I_a(x^a=0)=0$, `above the
 Hardy limit' and $\lambda_0$  be `regular'.
\item $I_a(x^a=0)=0$, `fastly decaying case' and $\lambda_0$  be
  `maximally exceptional of $1$st kind'.
\end{enumerate} Note that I) corresponds to Case 1 in Sections \ref{$N$-body
  Schr\"odinger operators} and \ref{$N$-body Schr\"odinger operators with infinite mass
  nuclei} (with an additional sign condition for charges).  II)
corresponds to Cases 2 and  3 in the same
sections, while III)  corresponds to Case 3 only and includes the
occurrence of  resonance states (but not bound states). The terminology `above the
 Hardy limit' is explained in the simple case $m_a=1$ in Subsection \ref{subsec:Some better
   arguments}, see \eqref {eq:non-osci},
 while `regular' as in Section \ref{sec:two-clust-threshres} refers to
 $\lambda_0$ not be neither an eigenvalue nor a resonance. In
 general the
 condition `above the
 Hardy limit' is a spectral condition   for the $m_a\times m_a$ matrix-valued
 effective potential $S^*_aI_aS_a=Q_a|x_a|^{-2}+B_a$,  or rather for
 its leading term  $Q_a(\hat x_a)$. By definition it means that the eigenvalues of
 $-\Delta_\theta+Q_a(\theta)$ are all strictly bigger than $-1/4$, cf.
 Section
 \ref{sec:CoulRellich}, and
 of course this condition is  fulfilled for Case 3 (where $Q_a=0$), but it might not
be fulfilled in Case 2.  The condition `exceptional of $1$st kind'
refers to
 $\lambda_0$ be a resonance but not  an eigenvalue, cf.   Section \ref{sec:two-clust-threshres}. The added word
 `maximally' refers to maximal multiplicity of the space of resonance
 states, i.e. $n_{\mathrm{res}}=m_a$. Note that the fastly decaying
 regular case
  is included in Case II).

We summarize our results  as follows.
\begin{thm}\label{thm:non-transmission-at} For the physics
 models with $n=3$ and with Condition \ref{cond:uniq} fulfilled
 at $\lambda_0$ for  $a_0=a$  there is no transmission assuming
 {\rm{I), II)}} or {\rm{III)}}. Whence for  each of these  cases
   there is no transmission in the  small inter-cluster
   energy regime from any   given  (incoming) channel
   $\alpha=(a,\lambda_0, \varphi_\alpha)$  to any different (outgoing)
  channel, i.e. \eqref{eq:proConj} and  \eqref{eq:asymU} are  fulfilled.
\end{thm}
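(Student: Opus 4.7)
The plan is to reduce the non-transmission statement to the equivalent asymptotic-isometry property \eqref{eq:asymU}, which was already established as equivalent to \eqref{eq:proConj} in the discussion preceding the theorem. The next task is to derive an identity generalizing Corollary \ref{cor:norm_less22} from the threshold to nearby energies $\lambda\in[\lambda_0,\lambda_0+\delta]$:
\begin{equation*}
  \norm{\tau}^2-\norm{S_{\alpha\alpha}(\lambda)\tau}^2=\pi\inp{\delta(H'-\lambda_0)}_{\psi_a^+(\lambda)}+o(1),\quad \lambda\downarrow\lambda_0,
\end{equation*}
with $\psi_a^+(\lambda)=\Pi'(H-\lambda)R(\lambda+\i 0)T^-_\alpha(\lambda)\tau$. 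This follows by the same virial/integration-by-parts argument as in Corollary \ref{cor:norm_less22}, now using Theorem \ref{prop:Sommerfeld1} (Sommerfeld at non-threshold energy) in place of Theorem \ref{prop:Sommerfeld}, together with the structural decomposition \eqref{eq:mainSplit} adapted to the appropriate effective setting (repulsive Coulomb, critical $r^{-2}$, or fastly decaying).

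For Case I, I would appeal to Section \ref{sect5.2}: the effective inter-cluster potential is positive slowly decaying, so Corollary \ref{5.2.2corSpecM} gives sub-exponential decay of the spectral density $\norm{\e^{-a\w{x_0}^{1-\mu}}e'(\lambda)\w{x_0}^{-s}}$ of $H'$ as $\lambda\downarrow\lambda_0$. Combined with uniform control of $\psi_a^+(\lambda)$ in weighted (in fact exponentially weighted) spaces — which follows from the mapping properties of $T^-_\alpha(\lambda)$ in Lemma \ref{lem:22a} together with the Gevrey estimates of Theorem \ref{5.2.2th3} applied to the operators appearing in $\psi_a^+(\lambda)$ — the right-hand side tends to zero, yielding \eqref{eq:asymU}.

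For Case II, the key structural input is the condition $I_a(x^a=0)=0$, which through the Taylor expansion $I_a(x^a+x_a)=0+\vO(|x^a|)$ implies that the off-diagonal block $\Pi' I_a\Pi$ acts with one extra power of decay on channel-localized states. The "above Hardy limit" hypothesis $\inf\sigma(-\Delta_\theta+Q_a(\theta))>-1/4$ rules out oscillatory resolvent asymptotics (cf.\ the discussion in Remarks \ref{remarks:resolv-asympt-phys} and the reference \cite{SW}), so that Theorem \ref{thm:resolv-asympt-physReg0} or the analogous critical-decay version from Subsection \ref{subsec:Some better arguments} applies and gives a continuous boundary value $R'(\lambda_0\pm\i 0)$. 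The vanishing of $I_a$ at $x^a=0$ then forces $\psi_a^+(\lambda)\to 0$ strongly in the relevant weighted space, killing the pairing. For Case III, I would use the sharp expansion of Theorem \ref{thm:resolv-asympt-phys1st3rd00}, namely $R(\lambda_0+z)=(\i/\sqrt z)\sum_{j=1}^{n_{\mathrm{res}}}\inp{u_j,\cdot}u_j+\vO(|z|^{-1/2+\epsilon})$ with $\Pi_H=0$, together with the normalization \eqref{eq:norma} of the resonance states against $\int\varphi_j I_a u_k\,\d x$. The maximality $n_{\mathrm{res}}=m_a$ ensures that the $u_j$ exactly span the channel degrees of freedom, so that the singular $(\i/\sqrt z)$-term in $R(\lambda_0+z)$ produces precisely a unitary leading-order contribution in the stationary formula \eqref{eq:SgodN} for $S_{\alpha\alpha}$; the remainder $\vO(|z|^{-1/2+\epsilon})$ then contributes $\vO(|z|^\epsilon)$ to $S_{\alpha\alpha}(\lambda)$ after combining with the $\vO(|z|^{1/2})$ size of the $T_\alpha^-(\lambda)$ kernel near threshold.

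The main obstacle is Case III. The cancellation between the $z^{-1/2}$ singularity of the resolvent and the threshold behavior of the channel operators $T_\alpha^\pm(\lambda)$ has to be computed explicitly, and the output must be exactly unitary — not merely bounded. The maximality hypothesis $n_{\mathrm{res}}=m_a$ is crucial here: without it, one does not have enough resonance states to absorb the full incoming flux, and genuine transmission can occur, as will be demonstrated in Subsection \ref{subsec: An example of transmission}. Making this cancellation rigorous requires a careful bookkeeping of the normalization \eqref{normalization2H}–\eqref{ujk} linking the resonance basis to the channel basis via the functionals $c(S^*u_j)$, and this matching is where the three hypotheses (channel attractiveness $I_a(0)=0$, fast decay, and maximality) combine.
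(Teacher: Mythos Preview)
Your overall strategy --- extend the transmission identity of Corollary~\ref{cor:norm_less22} to nearby energies and then show the right-hand side vanishes --- is not the route the paper takes, and in Cases~I) and~II) it runs into genuine obstacles.

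\textbf{Case I.} You invoke Corollary~\ref{5.2.2corSpecM} and Lemma~\ref{lem:22a}, but neither fits. Corollary~\ref{5.2.2corSpecM} estimates the spectral density of $H$ (not $H'$), and the whole Gevrey machinery of Section~\ref{sect5.2} is restricted to $\lambda_0=\Sigma_2$; it says nothing for higher two-cluster thresholds, which the theorem covers. Lemma~\ref{lem:22a} concerns the FIO construction built on eikonal phases $\phi^\pm$ solving $|\nabla\phi^\pm|^2+w=\lambda-\lambda_0$ for an \emph{attractive} $w$; there is no such construction in the repulsive case. The paper instead factorizes $S_{\alpha\alpha}(\lambda)=S_1(\lambda-\lambda_0)\widetilde S_{\alpha\alpha}(\lambda)$ through relative wave operators for the pair $(H,H_a+w)$, writes $\widetilde S_{\alpha\alpha}$ via~\eqref{eq:tildeS}, and uses the one-body estimate $\norm{\delta(h_a-\lambda+\lambda_0)}_{\vL(L^2_s,L^2_{-s})}=o(1)$ from~\cite{Na} (see~\eqref{eq:35resRep}) together with the uniform LAP~\eqref{eq:reLAP} to kill both correction terms.

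\textbf{Case II.} Your claim that ``$I_a(x^a=0)=0$ forces $\psi_a^+(\lambda)\to 0$'' is the whole content of the case and is not justified; the extra decay from $\Pi' I_a\Pi=\vO(\inp{x}^{-2})$ is already built into \eqref{eq:resolBAS}--\eqref{eq:mainSplit} and by itself does not make $\psi_a^+(\lambda)$ small in the norm paired against $\delta(H'-\lambda)$. The paper's argument is completely different: it subtracts the one-body scattering matrix $S_w(\lambda-\lambda_0)$ for a carefully chosen $w=q(\theta)r^{-2}\chi_2^2+|\nabla\chi_1|^2+|\nabla\chi_2|^2$ with $h_a=p_a^2+w\geq 0$ (no zero resonance, by the Hardy condition~\eqref{eq:non-osci} and an integration-by-parts argument), and then controls $\widecheck S_{\alpha\alpha}(\lambda)=S_{\alpha\alpha}(\lambda)-S_w(\lambda-\lambda_0)$ via explicit $k$-uniform bounds on $R^\pm_{\nu,k}=(p^2+\tfrac{\nu^2-1/4}{r^2}-k^2\mp\i0)^{-1}$ (Steps~2--3 of Subsection~\ref{subsec:Some better arguments}) combined with the interpolated Fourier-restriction bound~\eqref{eq:interp}. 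The point is that the relevant small parameter is the \emph{one-body} spectral density, not $\delta(H'-\lambda)$; cf.\ Remark~\ref{remarks:non-transmission-bem}~\ref{item:tran4}.

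\textbf{Case III.} Here your outline is close to the paper's. The paper inserts the resolvent expansion~\eqref{asymRz1b} into the stationary formula~\eqref{eq:scMatr} (with $J_0^\pm=1$) and computes, exactly as in~\cite[Section~5]{JK}, that the $\i/\sqrt z$ term paired with the $\vF_{\lambda_0}(\lambda)^*$ factors (each $\sim(\lambda-\lambda_0)^{1/4}$) gives the finite limit $-2\inp{Y_0,\cdot}Y_0$, so $\lim S_{\alpha\alpha}(\lambda)=1-2\inp{Y_0,\cdot}Y_0$, a Householder reflection. The maximality condition $\kappa=m_a$ together with the normalization~\eqref{normalization5.1.2} makes the coefficient in~\eqref{eq:Levison2} equal to $1$. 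Contrary to your impression, this case is the shortest once the resolvent expansion is in hand; the hard analysis lies in Cases~I) and~II).
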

 Our results  are   proven in subsequent subsections under
the following additional (convenient but non-essential) simplifying conditions:

\begin{enumerate} [a)]
\item   $m_a=1$.
\item $\lambda_0\notin
\sigma_{\pp}( H)\cup \sigma_{\pp}( H')$.
\end{enumerate}

\begin{remarks}\label{remarks:non-transmission-bem}
  \begin{enumerate}[1)]
  \item \label{item:tran1} We dont know  if \eqref{eq:asymU} holds in full generality for
    the fastly decaying case and $\lambda_0$ be an exceptional point of $2$nd
    or $3$rd kind, although we have  results with a decay condition for the latter
    cases (for which $\lambda_0\in
\sigma_{\pp}( H)$).  The proof
    for III) is based on Theorem
    \ref{thm:resolv-asympt-phys1st3rd} which applies  under the decay condition   $\ran
  \Pi_H\subset L^2_t$ for some $t> 3/2$, where
 $\Pi_H$ denotes the orthogonal projection onto $\ker
  (H-\lambda_0)$. Hence  under this
  additonal condition,
    III) can be extended to cover the `maximally exceptional case of
    $3$rd kind'. Note that the (local) scattering theory of $H$ and
    $H_\sigma:=H-\sigma\Pi_H$, $\sigma>0$,  are identical, and that
    $\lambda_0$ in this case is a `maximally exceptional point of $1$st kind' of  $H_\sigma$.

\item \label{item:tran2}
Similarly we dont know in full generality if the regularity
condition in II) is necessary for the
    critical decay rate case. However  for  the 'exceptional case  of
    $2$nd kind' we can use Theorem \ref{thm:resolv-asympt-phys2nd} for
    which the decay condition $\ran
  \Pi_H\subset L^2_t$ for some $t>1$ is needed. Note that
    $\lambda_0$ is a regular point of  $H_\sigma$ in this case.

\item \label{item:tran3} If $m_a>1$ and the  word
 `maximally' is omitted in Case {\rm{III)}},
 i.e. $n_{\mathrm{res}}\in\set{1,\dots,m_a-1}$,  we  show in
 Subsection \ref{subsec: An example of transmission} that
 \emph{transmission  does occurs}.
\item \label{item:tran4}  A key ingredient of  our proof of Theorem
  \ref{thm:non-transmission-at}  is various low-energy bounds on the one-body
  spectral density operator. These are absent for the
  effective attractive Coulombic case.
\end{enumerate}
  \end{remarks}

\subsection{Proof of  \eqref{eq:asymU}, Case I)}\label{subsec:Some  arguments}

We consider the Case I), i.e.   the repulsive Coulombic
 case.  We use Subsections \ref{subsec:positive-effect-potent} and
 \ref{subsec:psdep} and
 introduce a globally positive function $\widehat W$ as in
 \eqref{eq:posLowbnd} with $\rho=\bar \rho =1$. More
 precisely  we
 here  consider  the non-multiple case version discussed in Remark \ref{remark:formRes}, imposing the   above
 conditions  a) and b).  As in Subsection \ref{subsec:positive-effect-potent} we then
take $w=\widehat W$. We define $r^\pm_\lambda=(h_a+\lambda_0-\lambda
\mp\i 0)^{-1}$,  $h_a=p_a^2+w$ and $\lambda\geq \lambda_0$, and recall the bound
\begin{align}\label{eq:reLAP}
  \norm{\inp{x}^{-s}r^\pm_\lambda\inp{x}^{-s}}\leq
  C, \quad s>s_0=\tfrac12 +\tfrac  \rho 4,
\end{align} cf. \cite {Na} and \cite{Ya2}. Recall also that $r^+_{\lambda_0}=r^-_{\lambda_0}$.

Next we introduce relative wave operators
\begin{align*}
  \Omega_\alpha^\pm f=\lim_{t\to \pm \infty}
\,\e^{\i t H}\e^{-\i t (H_a+w)}\parb {\varphi_\alpha\otimes f}
\end{align*} and the corresponding scattering operator
 $(\Omega_\alpha^+)^*\Omega_\alpha^-$.

Using one-body Isozaki-Kitada scattering theory we can write
\begin{align}\label{eq:deltaIK}
  \delta(h_a-\lambda+\lambda_0)=W_1^\pm(\lambda-\lambda_0)W_1^
\pm(\lambda-\lambda_0)^*,
\end{align} cf. \cite {DS1}. Here $W_1^\pm(\cdot)$ signify wave
matrices for the pair $(p_a^2, h_a)$ corresponding to  Isozaki-Kitada
wave operators $W_1^\pm$, cf. \cite {DS1}. Clearly  the wave operators $
W_\alpha^\pm=\Omega_\alpha^\pm W_1^\pm$. Let us use $W_1^-$ to
diagonalize $h_a$. Whence we introduce the `identification operator'
$J_\alpha=\varphi_\alpha\otimes (W_1^-\cdot)$ and the corresponding relative wave operators
\begin{align*}
  \widetilde W_\alpha^\pm=\slim_{t\to \pm \infty}
\,\e^{\i t H}J_\alpha \e^{-\i t (p^2_a+\lambda_0)}.
\end{align*} Note that $\widetilde W_\alpha ^-=W_\alpha^-$,  while
$\widetilde W_\alpha^+=W_\alpha^+S_1$ with  $S_1$ being  the scattering
operator for the pair  $(p_a^2, h_a)$. Whence
\begin{align*}
  S_{\alpha\alpha}=\parb{W_\alpha^+}^*W_\alpha^-=S_1\parb{\widetilde
  W_\alpha^+}^*\widetilde W_\alpha^-=S_1\widetilde S_{\alpha\alpha},
\end{align*} yielding upon diagonalizing $p^2_a+\lambda_0$ by the
Fourier transform
\begin{align*}
  S_{\alpha\alpha}(\lambda)=S_1(\lambda-\lambda_0)\widetilde S_{\alpha\alpha}(\lambda).
\end{align*} (A similar factorization formula appears for a one-body
problem in \cite{Ba}.)
We can represent, cf.   \cite [Appendix
 A]{DS1} or \cite [Section 7.3]{Ya3},
\begin{align}\label{eq:tildeS}
  \widetilde S_{\alpha\alpha}(\lambda)=1-2\pi J_\alpha(\lambda)^{*}T_\alpha(\lambda)
   +2\pi \i T_\alpha(\lambda)^*R(\lambda+\i
  0)T_\alpha(\lambda),
\end{align} where
$J_\alpha(\lambda)=\varphi_\alpha\otimes
W_1^-(\lambda-\lambda_0)$ and $T_\alpha(\lambda)=\i
(H-\lambda_0)J_\alpha(\lambda)=\i(I_a-w)J_\alpha(\lambda)$. Note that formally
$J_\alpha(\lambda)=J_\alpha\vF_{\lambda_0}(\lambda)^*$,  where
$\vF_{\lambda_0}(\lambda)=\vF_0(\lambda-\lambda_0)$ with
$\vF_0(\cdot)$ specified by \eqref{eq:resFour}. Note also that Taylor
expansion yields  $T_\alpha(\lambda)\approx\vO(\inp{x}^{-2}\inp{x^a}^3)J_\alpha(\lambda)\approx\vO(\inp{x}^{-2})J_\alpha(\lambda)$.

Now we can verify \eqref{eq:asymU}. By inserting \eqref{eq:tildeS}
into \eqref{eq:asymU} it suffices to show that
\begin{subequations}
\begin{align}\label{eq:De1new}
  \norm{J_\alpha(\lambda)^{*}T_\alpha(\lambda)}&\to 0,\\
\norm{T_\alpha(\lambda)^*R(\lambda+\i
  0)T_\alpha(\lambda)}&\to 0.\label{eq:De2new}
\end{align}
\end{subequations}

To show \eqref{eq:De1new} we note the bound
\begin{align}
  \label{eq:35resRep}
  \norm{\delta(h_a-\lambda+\lambda_0)}_{\vL(L^2_s ,L^2_{-s})}=o((\lambda-\lambda_0)^0)\text
  { for }s>s_0,
\end{align} cf.  \cite {Na}. We use it with $s=1$  in combination  with \eqref{eq:deltaIK},
proving \eqref{eq:De1new}.

To derive  \eqref{eq:De2new} it suffices by the same argument to show the bound
\begin{align*}
\sup_{\lambda\in I_\delta^+}\,\,\norm{\inp{x}^{-1}
R(\lambda+\i
  0) \inp{x}^{-1}}  <\infty,
\end{align*} which  in turn follows from \eqref{eq:resolBAS},
\eqref{eq:mainSplit} (cf. Remark \ref{remark:formRes})   and
\eqref{eq:reLAP}.

\subsection{Proof of  \eqref{eq:asymU}, Case II)}\label{subsec:Some better
  arguments}

For    Case II)  the wave operators are defined   by  \eqref{eq:ooo132}
with  $J_0^{\pm}=1$. Note the following
formula for the scattering matrix for $\lambda$ slightly above  $\lambda_0$,
\begin{align}\label{eq:scMatr}
  S_{\alpha\alpha}(\lambda)= 1-2\pi J^+_\alpha(\lambda)^{*}T^-_\alpha(\lambda)
   +2\pi \i T^{+}_\alpha(\lambda)^*R(\lambda+\i
  0)T^-_\alpha(\lambda),
\end{align} where
$J^+_\alpha(\lambda)=J^-_\alpha(\lambda)=\varphi_\alpha\otimes
\vF_{\lambda_0}(\lambda)^*=:J_\alpha(\lambda)$, $\vF_{\lambda_0}(\lambda)=\vF_0(\lambda-\lambda_0)$  and
similarly $T^+_\alpha(\lambda)=T^-_\alpha(\lambda)=\i
I_aJ_\alpha(\lambda)=:T_\alpha(\lambda)$.

As in Theorem \ref{thm:ScatN} we subtract a one-body scattering
matrix introducing
\begin{align*}
  \widecheck S_{\alpha\alpha}(\lambda)=S_{\alpha\alpha}(\lambda)-
  S_w(\lambda-\lambda_0);
\end{align*} here $w$ is a reel  effective one-body potential to be
determined. Since we know that $S_w(\lambda-\lambda_0) $ is unitary
we aim at showing  that
\begin{align}\label{eq:aim}
  \norm{\widecheck S_{\alpha\alpha}(\lambda)}\to 0\text{ for }\lambda\to (\lambda_0)_+.
\end{align}

To examine this problem we need a formula for $ R(\lambda+\i 0)$ like
the one we used in the proof of Theorem \ref{thm:ScatN}. In agreement
with  the condition of Case II)
$\lambda_0$ is neither  a resonance nor an eigenvalue of $H$  and
the `above the Hardy limit' condition
\begin{equation}\label{eq:non-osci}
  \nu_0:=\min_{\nu\in \sigma} \Re \nu >0
\end{equation} is imposed. Here  the set $\sigma$ is given as in the analogous formula \eqref{eq:crrit}
 for the multiple case, that is computed by the eigenvalues  of  a perturbed
 Laplace-Beltrami operator $-\Delta_{\theta}+q(\theta)$  as
 in  \eqref{eq:relmunu}. Alternatively stated
\begin{equation*}
  \inf \sigma\parb{-\Delta_\theta+q(\theta)}>-1/4.
\end{equation*}

We assume  that
$m_a=1$  and  that $\lambda_0\notin
\sigma_{\pp}( H')$,  cf.  the conditions a) and b) stated after
Theorem \ref{thm:non-transmission-at}.

Under the condition \eqref{eq:non-osci} one could hope  that   the operator
$(p_a^2+W+\lambda_0-\lambda \mp\i 0)^{-1}$ with $W:=\inp{\varphi_\alpha,
  I_a\varphi_\alpha\otimes \cdot}$  would be a good auxillary operator for
stydying asymptotics as $\lambda\to (\lambda_0)_+$, cf.   Subsections
\ref{subsec:Homogeneous-effect-potent} and \ref{subsec:hd2ep}. However zero   could be an
eigenvalue or a
resonance for $p_a^2+W$,  spoiling this idea. To avoid this
problem we are lead to  modifying
$W(x_a)$ by a suitable stronger decaying term (the perturbation
 will be of order $\vO(\abs{x_a}^{-3})$) to assure zero be regular for
 the auxillary operator. More precisely we introduce, using the same quadratic partition of unity as in \eqref{eq:ansG},
\begin{align*}
   h_a&=\chi_1(r)p_a^2\chi_1(r)+\chi_2(r)
\parb{p_a^2+\tfrac{q(\theta)}{r^2}}\chi_2(r) =p_a^2+ w,\\w &=\abs{\nabla\chi_1}^2+\abs{\nabla\chi_1}^2+\tfrac{q(\theta)}{r^2}\chi^2_2=W+\vO(r^{-3}),\\
r_\lambda^\pm&=(h_a-k^2 \mp\i 0)^{-1},\quad k\in (0,1], \quad \lambda=\lambda_0+k^2.
\end{align*}
Due to \eqref{eq:non-osci}  $h_a\geq 0$, and in fact $
h_a$ is strictly
positive (i.e. $h_a\geq 0$ and $0\notin \sigma_{\pp}(h_a)$).

We will
show that $h_a$ does not have a resonance at zero  by an integration
by parts argument, cf. \cite{Ji},
and  we will show a number of properties of $r_\lambda^\pm$ by a
parametrix construction. In addition we will need  properties of
 the restriction of the Fourier transform
$\vF_{0}(k^2)$.

\subStep{1}  Let, in analogy with the zero-energy formulas  \eqref{eq:ansG},
\begin{align*}
  G_k^\pm&=\chi_1(r)( h_a\mp\i)^{-1}\chi_1(r)+\sum_{\nu\in
  \sigma}\chi_2(r) r^{-1} R^\pm_{\nu,k} r \chi_2(r)\otimes
     P_{\nu};\\
  R^\pm_{\nu,k}&=( p^2+\tfrac{\nu^2-1/4}{r^2}-k^2 \mp\i 0)^{-1},\quad
               p=-\i \partial_r,\,k>0.
\end{align*} Here $ R^\pm_{\nu,k}\in \vL_{s',-s}:=\vL\parb{L_{s'}^2(\R_+,\d
  r),L_{-s}^2(\R_+,\d r)}$ for $s,s'>1$, in fact with a bound
independent of $\nu\geq\nu_0$ and $k\in (0,1)$. For simplicity  we
assume here and below that
$\nu_0\in (0,1)$. See the proof of
\cite[Proposition 4.1]{Ca}, and  note also that this  proof shows that the weighted space operator
norm vanishes uniformly in $k$ as $\nu\to \infty$. We define the
adjoint $\parb{ K^\pm_{k}}^*=\parb{h_a-k^2 }\parb{
G_k^\pm}^*-1
$  and conclude that $K^\pm_{0}:=\lim_{k\to 0_+ }
K^\pm_{k}\in \vL\parb{\vH^1_{-1-\epsilon}}$ for any sufficiently  small $\epsilon>0$.

\subStep{2}
 We derive   bounds  of the operator $
R^+_{\nu,k}$ acting on functions on $\R_+$, here assuming $\nu>1$. Let
$\epsilon\in (0,\nu_0)$. The kernel is given
explicitly as
\begin{align*}
R^+_{\nu,k}(r,r')=-2^{-1}\sqrt{rr'}\int _0^\infty\, \exp\parb{\i\rho t+\i
  k^2 r r'/(2t)-\i \pi \nu/2}\,J_\nu(t) t^{-1}\,\d t;\quad \rho=\tfrac{r^2+r'^2}{2rr'}.
\end{align*} (See for example \cite {Ca}.)  We write
$1=\chi_{2kr'}(t)+\bar\chi_{2kr'}(t)$, cf. the notation
\eqref{eq:14.1.7.23.24}, and
\begin{align*}
   \exp(\cdot)=\parb{1+\rho^2(1-y)^2}^{-1}\parb{1-\i\rho
     (1-y)\partial_t}\exp(\cdot);\quad y=\tfrac{k^2(rr')^2}{(r^2+r'^2)t^2}.
\end{align*}
We split the above integral $\int_0^\infty=\int_0^\infty \cdot \chi_{2kr'}(t)\,\d t +\int_0^\infty
\cdot \bar\chi_{2kr'}(t)\,\d t$ and note that for $\nu\geq \nu_0$ a  Bessel
function bound, cf. the proof of \cite[Proposition 4.1]{Ca},  yields
\begin{align*}
  \abs[\Big]{\int_0^{\infty}\cdot \chi_{2kr'}(t)\,\d t}\leq
  o(\nu^0)\min \{(kr')^{\nu_0},1\}\leq o(\nu^0)(kr')^\epsilon,
\end{align*}
\begin{subequations}
leading with the  Hilbert-Schmidt criterion to the operator bound
\begin{align}\label{eq:smal2}
  \norm[\Big]{\sqrt{rr'}\int_0^{\infty}\cdot \chi_{2kr'}(t)\,\d t}_{\vL_{s',-s}}\leq
  o(\nu^0)k^\epsilon;\quad  s>1,\,s'>1+\epsilon.
\end{align} On the other hand, assuming  (for \eqref{eq:smal4}) that $\nu>1$,
\begin{align*}
  \bar\chi_{2kr'}(t)y\leq 1/4,
\end{align*} leading by integration by parts and Bessel function bounds
 to
 \begin{align}\label{eq:smal4}
  \abs[\Big]{\int_0^{\infty}\cdot \bar\chi_{2kr'}(t)\,\d t}&\leq
  o(\nu^0)\rho^{-1},\\
\norm[\Big]{\sqrt{rr'}\int_0^{\infty}\cdot \bar\chi_{2kr'}(t)\,\d t}_{\vL_{s',-s}}&\leq
  o(\nu^0);\quad  s+s'>2,\,s,s'>0.\label{eq:smal5}
\end{align}
 Note that for \eqref{eq:smal5} we used \eqref{eq:smal4},
the  Hilbert-Schmidt criterion, the bound
\begin{align*}
  r^{\kappa}\rho^{-1}r'^{-\kappa}\leq 2;\quad \kappa\in [-1,1],
\end{align*} as well as the identity
\begin{align*}
  2J_\nu'(t)=J_{\nu-1}(t)-J_{\nu-1}(t);
\end{align*} cf. \cite [(3.6.17)]{Ta1}.
 \end{subequations}
\begin{subequations}
Now by combining \eqref{eq:smal2} and \eqref{eq:smal5} we can write
\begin{align}\label{eq:smal6}
  \begin{split}
   R^+_{\nu,k}&=R^+_{\nu,k,1}+R^+_{\nu,k,2};\\
R^+_{\nu,k,1}&\in \vL_{s',-s};\quad s>1,\,s'>1+\epsilon,\\
\norm{R^+_{\nu,k,1}}_{\vL_{s',-s}}&\leq o(\nu^0)k^\epsilon.\\
R^+_{\nu,k,2}&\in \vL_{s',-s};\quad s=1-\epsilon,\,s'>1+\epsilon,\\
\norm{R^+_{\nu,k,2}}_{\vL_{s',-s}}&\leq o(\nu^0).\end{split}
\end{align}

\subStep{3} We show  bounds  of  $
R^+_{\nu,k}$ for the case $\nu\leq 1$.  Let
again $\epsilon\in (0,\nu_0)$.
 Note the representation, cf. \cite [pp. 228--230]{Ta1},
\begin{align*}
R^+_{\nu,k}(r,r')=\tfrac{\pi \i}{2k}(kr_<)^{1/2}J_\nu(kr_<)(kr_>)^{1/2}H^{(1)}_\nu(kr_>),
\end{align*} as well as the  classical  pointwise bounds
\begin{align*}
  \begin{split}
   \abs{t^{1/2}J_\nu(t)}&\leq
                         C_\nu\parb{\tfrac{t}{\inp{t}}}^{\nu+1/2},\\
\abs{t^{1/2}H_\nu^{(1)}(t)}&\leq
C_\nu\parb{\tfrac{t}{\inp{t}}}^{-\nu+1/2}.
 \end{split}
\end{align*}
 Whence
 \begin{align*}
   \forall \nu\in (0,1/2]:\quad \abs{(kr_<)^{1/2}J_\nu(kr_<)(kr_>)^{1/2}H^{(1)}_\nu(kr_>)}\leq C_\nu^2k\,r_<^{\nu+1/2}\,r_>^{-\nu+1/2},
\end{align*}
\begin{align*}
   \forall \nu\in ( 1/2,\infty):\quad
   \abs{(kr_<)^{1/2}J_\nu(kr_<)(kr_>)^{1/2}H^{(1)}_\nu(kr_>)}\leq
    C_\nu^2k\,r_<.
 \end{align*} Using the Hilbert-Schmidt criterion we obtain from
 these bounds:
\begin{align}\label{eq:BesHan10}
   \begin{split}
   \forall \nu\in (0,1/2]:\quad R^+_{\nu,k}=R^+_{\nu,k,2}&\in \vL_{s',-s};\quad
   s=1-\epsilon,\,s'>1+\epsilon,\\&
\norm{R^+_{\nu,k,2}}_{\vL_{s',-s}}\leq C_{\nu,s,s'}.
   \end{split}
\end{align}
\begin{align}\label{eq:smal6220}
  \begin{split}
  \forall \nu\in (1/2,1]:\quad R^+_{\nu,k}=R^+_{\nu,k,2}&\in \vL_{s',-s};\quad
   s,s'>1/2, \,s+s'>2,\\&
\norm{R^+_{\nu,k,2}}_{\vL_{s',-s}}\leq C_{\nu,s,s'}.
 \end{split}
\end{align}

Clearly $R^+_{\nu,k,1}=0$ in \eqref{eq:BesHan10} and in \eqref{eq:smal6220}, and it is
natural to ask if we also can take $R^+_{\nu,k,1}=0$ in
\eqref{eq:smal6}? Well, we dont know. Note that
it is known for the above pointwise bound  of the Bessel function
that   the constant $C_\nu\to \infty$  for
$\nu\to \infty$, see  \cite {La}. This  is the reason  we proceeded  differently
in Step \textit{2} to treat the regime $\nu>1$.
\end{subequations}
\begin{subequations}
  \subStep{4}   With reference to Step \textit{1},  we claim that
\begin{align}\label{eq:notin_specK2}
  -1\notin
\sigma(K^+_{0}).
\end{align} This is equivalent to asserting that $r_\lambda^\pm$ is    regular at
$k=0$, so it remains to show that  $0$ is not a `resonance' for $h_a$. We say
that  $0$ is  a resonance if there exists $0\neq f\in
\vH^1_{-s_0}\setminus \vH^1$, $s_0=1+\nu_0$,
such that $h_af=0$, cf. Definition \ref{defn:hom}. Note that the
notation $\vH_t^1$ here and henceforth is used as an alternative to
$H_t^1$ (to conform with the notation of Chapter \ref{chap:lowest thr}).  So suppose $ f\in
\vH^1_{-s_0} $ obeys $ h_af=0$. Then we  learn from the proof of Theorem
\ref{thm:short-effect-potent}  using
\eqref{eq:smal6}--\eqref{eq:smal6220} with $k=0$ that $f\in \vH^1_{-s'}$ for
any $s'>1-\nu_0$ (note that $K^+_{0} \in \vC\parb{\vH^1_{-s'}}$ for
$s'\in (1-\nu_0, 2-\nu_0)$). In particular $f\in \vH^1_{-1}$. Next we mimic the
proof of Theorem \ref{thm:negat-effect-potent}, that is integrate by part
and deduce that
\begin{align*}
  0=\lim_{R\to \infty}\inp{\chi_Rf, \chi_R h_af}= \lim_{R\to
  \infty}\inp{\chi_Rf,  h_a\chi_Rf}\geq c\norm{|x|^{-1}f}^2.
\end{align*} Whence $f=0$.
\subStep{5}  We can  obtain    bounds  of $r_\lambda^\pm$ using \eqref{eq:notin_specK2}  and
\eqref{eq:smal6}--\eqref{eq:smal6220}.  Note that
thanks to  the
time-reversal property the latter bounds also hold with  $R^+_{\nu,k}$ replaced by
$R^-_{\nu,k}$. Bounds of $r_\lambda^\pm$ can then be derived from  the formulas
\begin{align}\label{eq:perResFo}
   r_\lambda^+=\parb{1+ K^+_{k}}^{-1}
(G_k^-)^*=G_k^+\parb{1+ (K^-_{k})^*}^{-1}.
\end{align} We can here use $G^\pm_k=G^\pm_{k,1}+G^\pm_{k,2}$ in
agreement with the splittings
$R^+_{\nu,k}=R^+_{\nu,k,1}+R^+_{\nu,k,2}$ in
\eqref{eq:smal6}--\eqref{eq:smal6220}. Implemented in
\eqref{eq:perResFo} leads to two formulas for
$r^+_\lambda=r^+_{\lambda,1}+r^+_{\lambda,2}$ exhibiting  mapping properties
directly determined  by \eqref{eq:smal6}--\eqref{eq:smal6220}.

Since we are assuming that
$\lambda_0$ is not a resonance, we are lead to write
\begin{equation*}
  -E_{\vH}^+(\lambda)=p_a^2+w+v^+_\lambda+\lambda_0-\lambda
\end{equation*} and then consider
\begin{equation*}
  \parb{ \widetilde K^+_{k}}^*=\parb{-E_{\vH}^+(\lambda)^*}\parb{
r_\lambda^+}^*-1.
\end{equation*}
 By the regularity condition of  Case II) the limiting operator $\widetilde K^+_{0}:=\lim_{k\to 0_+ }
\widetilde K^+_{k}\in \vL\parb{\vH^1_{-1-\epsilon}}$  (which exists for any sufficiently
small $\epsilon>0$) fulfills   that
\begin{align}\label{eq:notin_specK2e}
  -1\notin
\sigma( \widetilde K^+_{0}).
\end{align} By using
\eqref{eq:notin_specK2}--\eqref{eq:notin_specK2e} in combination with
\eqref{eq:smal6}--\eqref{eq:smal6220}  we  can  obtain bounds of the
inverse of $E_{\vH}^+(\lambda)$. However we also need bounds on the
Fourier transform to treat \eqref{eq:scMatr}.
 \end{subequations}

\subStep{6} Bounds on the restriction of the Fourier transform. We note   the bound
\begin{align*}
  \norm{\vF_{\lambda_0}(\lambda)\inp {x}^{-1-\epsilon}}_{\vL(L^2(\R^3),L^2(\S^2))}\leq
  C_{\epsilon,\epsilon'}(\lambda-\lambda_0)^{\epsilon'},\quad
  0<2\epsilon'<\epsilon\leq 1/2,
\end{align*} which follows by using the
well-known expression for the free three-dimensional resolvent kernel
and  the Hilbert-Schmidt criterion. In terms of the $k$-variable
it corresponds to the first of the following
bounds. For the  second bound we refer to \cite[Theorem 3.2]{Ag}.
\begin{align*}
    \norm{\vF_0(k^2)\inp {x}^{-1-\bar\epsilon}}_{\vL(L^2(\bX),L^2(\S^2))}&\leq
  Ck^{\epsilon'},\quad  0<\epsilon'<\bar\epsilon\leq 1/2,\\
 \norm{\vF_0(k^2)\inp {x}^{-s}}^2_{\vL(L^2(\bX),L^2(\S^2))}&\leq
  Ck^{-1},\quad s>1/2,
  \end{align*}

 Interpolation of these bounds  yields
 \begin{align}\label{eq:interp}
\begin{split}
   \forall\kappa\in[0,1],\,&\epsilon'<\bar \epsilon\leq 1/2,\,s>1/2:\\ &\norm{\vF_0(k^2)\inp {x}^{-\kappa(1+\bar\epsilon)-(1-\kappa)s}}_{\vL(L^2(\bX),L^2(\S^2))}\leq
  Ck^{\kappa\epsilon'-(1-\kappa)/2}.
\end{split}
 \end{align}

\subStep{7}
 We can now treat \eqref{eq:scMatr}. The middle term
of \eqref{eq:scMatr}  may be written
\begin{align*}
  -2\pi J_\alpha(\lambda)^{*}T_\alpha(\lambda)&=-2\pi\i
  \vF_{\lambda_0}(\lambda) \parb{w+\vO(r^{-3})}\vF_{\lambda_0}(\lambda)^*\\&=-2\pi\i
  \vF_{\lambda_0}(\lambda) w\vF_{\lambda_0}(\lambda)^*+o((\lambda-\lambda_0)^0).
\end{align*} The first term corresponds to the `Born term' for the  one-body scattering
matrix with potential $w$,  and similarly in the third term of
\eqref{eq:scMatr}  the contribution $2\pi \i
T_\alpha(\lambda)^*Sr^+_\lambda S^*T_\alpha(\lambda)$,
recalling  $r^\pm_\lambda=(p_a^2+w+\lambda_0-\lambda \mp\i 0)^{-1}$,
may be written up to a term of order $o((\lambda-\lambda_0)^0)$  as
\begin{align*}
 2\pi \i
\vF_{\lambda_0}(\lambda)wr^+_\lambda w\vF_{\lambda_0}(\lambda)^*.
\end{align*}
 The  combination of this expression,   the identity operator $1$ and the  Born
 term is exactly  $S_w(\lambda-\lambda_0)$. Whence it suffices to show that
\begin{align*}
  \norm{T_\alpha(\lambda)^*\parb{R(\lambda+\i
  0)-Sr^+_\lambda
  S^*}T_\alpha(\lambda)}\to 0.
\end{align*}
Using formulas as \eqref{eq:resolBAS} and \eqref{eq:mainSplit} (cf. Remark \ref{remark:formRes}) it  suffices
to show that
\begin{subequations}
\begin{align}\label{eq:De1}
  \norm{T_\alpha(\lambda)^*\breve R(\lambda+\i
  0)\Pi'T_\alpha(\lambda)}&\to 0,\\
\norm{T_\alpha(\lambda)^*\widecheck R(\lambda+\i
  0)T_\alpha(\lambda)}&\to 0.\label{eq:De2}
\end{align}

To prove  these assertions we look at three cases   a)--c) as in  the proof of Theorem \ref{thm:ScatN}.
\subStep{\ref{item:15}} The  asymptotics \eqref{eq:De1} follows by first writing
\begin{align*}
  T_\alpha(\lambda)^*\breve R(\lambda+\i
  0)\Pi'T_\alpha(\lambda)=J_\alpha(\lambda)^*\vO(\inp{x}^{-2})\breve R(\lambda+\i
  0)\Pi'\vO(\inp{x}^{-2})J_\alpha(\lambda)
\end{align*} and then using Theorem \ref{thm:powers} and
\eqref{eq:interp}.
\end{subequations}

 As for \eqref{eq:De2} we expand $\widecheck R(\lambda+\i
0)$ as in the proof of Theorem \ref{thm:ScatN}, cf. Cases b) and
  c) in the proof.

\subStep{\ref{item:16}}  We write
\begin{align*}
  &T_\alpha(\lambda)^*Sr^+_\lambda\parb{1+v^+_\lambda
    r^+_\lambda}^{-1}S^*T_\alpha(\lambda)-T_\alpha(\lambda)^*Sr^+_\lambda
    S^*T_\alpha(\lambda)\\
&=-T^+_\alpha(\lambda)^*Sr^+_\lambda \tilde v^+_\lambda r^+_\lambda S^*T^-_\alpha(\lambda);\\
&\quad \quad \quad \tilde v^+_\lambda = \parb{1+v^+_\lambda
  r^+_\lambda}^{-1}v^+_\lambda.
\end{align*}

In this formula we insert the two representations of
$r^+_\lambda=r^+_{\lambda,1}+r^+_{\lambda,2}$ from the beginning of
Step \textit{5} (replacing the  factor of
$r^+_\lambda$ to the right with the middle expression of
\eqref{eq:perResFo} and replacing the  factor of
$r^+_\lambda$ to the left  with the third  expression of
\eqref{eq:perResFo}). Then   we can show the
desired decay.

When we exand the two sums we obtain four terms. Let us first consider
the contribution when the type $r^+_{\lambda,2}$ appear twice.  We
insert $1=\inp{x}^{s}\inp{x}^{-s}$ with $s=1-\epsilon$, $\epsilon>0$
small, next to the two factors of $r^+_{\lambda,2}$. Thus for the
left factor we write
$r^+_{\lambda,2}=\inp{x}^{s}\inp{x}^{-s}r^+_{\lambda,2}$ and then in
turn
$r^+_{\lambda,2}=r^+_{\lambda,2}\inp{x}^{-1-2\epsilon}\inp{x}^{1+2\epsilon}$
using bounds (essentially given by
\eqref{eq:smal6}--\eqref{eq:smal6220}) to estimate
$\inp{x}^{-s}r^+_{\lambda,2}\inp{x}^{-1-2\epsilon}$. Similary for the
right factor we write
$r^+_{\lambda,2}=r^+_{\lambda,2}\inp{x}^{-s}\inp{x}^{s}$. In
combination with \eqref{eq:interp} (used with a vanishing right-hand
side as $k\to 0_+$) we then obtain the bound
$o((\lambda-\lambda_0)^0)$.

We can obtain the same conclusion for the three other terms  in a
similar way. The only difference lies in the choice of
parameters. Thus the  worse mapping property of $r^+_{\lambda,1}$
leads to an application of \eqref{eq:interp} with a (small)
negative power of $k$. However the bounds \eqref{eq:smal6}  offer positive powers in this case, which in
combination indeed leads to the bound
$o((\lambda-\lambda_0)^0)$.

\subStep{~\ref{item:18}} There are four  terms to be considered  given (up to the
common  factor
$2\pi\i$) by:
\begin{align*}
  &S_1=-T_\alpha(\lambda)^*Sr^+_\lambda\parb{1+v^+_\lambda  r^+_\lambda}^{-1}S^*I_a\breve R(\lambda+ \i
    0)\Pi'T_\alpha(\lambda),\\
&S_2=T_\alpha(\lambda)^* \breve R(\lambda+\i
  0)\Pi' I_aSr^+_\lambda\parb{1+v^+_\lambda r^+_\lambda}^{-1}S^* I_{a}\breve R(\lambda+ \i 0
    )\Pi' T_\alpha(\lambda),\\
&S_3=-T_\alpha(\lambda)^* \breve R(\lambda+\i
  0)\Pi' I_aSr^+_\lambda S^* T_\alpha(\lambda),\\
&S_4=T_\alpha(\lambda)^* \breve R(\lambda+\i
  0)\Pi' I_aSr^+_\lambda\parb{1+v^+_\lambda  r^+_\lambda}^{-1}
  v^+_\lambda  r^+_\lambda S^*  T_\alpha(\lambda).
\end{align*}

We argue similarly as for  Case b) and conclude again the
bound $o((\lambda-\lambda_0)^0)$, skipping  the details. Whence
\eqref{eq:De2} is established.

\subsection{Proof of  \eqref{eq:asymU}, Case III)}\label{subsec: caseIII}
 We consider  the assertion \eqref{eq:asymU} for Case 3  of  Sections \ref{$N$-body
  Schr\"odinger operators} and \ref{$N$-body Schr\"odinger operators with infinite mass
  nuclei} under the conditions a) and b) stated after Theorem \ref{thm:non-transmission-at}.
  By definition of Case III), $\lambda_0$ is a
resonance  but not an eigenvalue of $H$. We can apply \eqref{asymRz1b}  of Theorem \ref{thm5.20}
with $\rho=1$ (or alternatively Theorem \ref{thm:resolv-asympt-phys1st3rd}), used formally with small $z>0$. More
precisely we
consider $R(\lambda +\i 0)$ with $\lambda$ slightly above
$\lambda_0$, given by  \eqref{asymRz1b}. We  insert this expression in
\eqref{eq:scMatr} (again with the wave operators given by \eqref{eq:ooo132}
with  $J_0^{\pm}=1$).

As $\lambda\to (\lambda_0)_+$ the second term on the right-hand side of
\eqref{eq:scMatr}  disappears in the limit, cf. \eqref{eq:interp}. As
for the third term there is a cancellation of powers of
$\lambda-\lambda_0$ exactly as for  the one-body resonance case of
\cite{JK}. In fact we compute as in \cite[Section 5]{JK}, and by using Remark \ref{remark:resolv-asympt-nearNorm},
\begin{align*}
  \lim_{\lambda\to (\lambda_0)_+}\,2\pi \i T^{+}_\alpha(\lambda)^*R(\lambda+\i
  0)T^-_\alpha(\lambda)=-2 \inp{Y_0,\cdot}Y_0;\quad Y_0=(4\pi)^{-1/2}.
\end{align*} Whence
\begin{align}\label{eq:Levison}
  \lim_{\lambda\to (\lambda_0)_+}\,S_{\alpha\alpha}(\lambda)=1-2 \inp{Y_0,\cdot}Y_0,
\end{align} and since the right-hand side is unitary, indeed
\eqref{eq:asymU} is proven.

We remark that \eqref{eq:Levison} agrees with Levison's theorem for
the one-body problem, see for example \cite{Ne}. In the context of
\eqref{asymRz1b} of Theorem \ref{thm5.20} with multiplicity $m=m_a>1$
and upon assuming $\kappa=n_{\mathrm{res}}=m$ (the maximality
condition) we obtain the same result as \eqref{eq:Levison}, see below. We will study
the case  $1\leq \kappa =n_{\mathrm{res}}<m$ in the next subsection,
but  let us here give a formula valid in both situations.

We will use the normalization in Remark
\ref{remark:resolv-asympt-nearNorm}, i.e.  $(c(S^*u_j),c(S^*u_k))
=\delta_{jk}$ for $j,k =1, \dots, \kappa$, where for any $u\in \vE$
(with $\vE$ given in Theorem \ref{thm:physical-modelsRell})
\begin{align*}
  c_i(S^*u)=\f{1}{2\sqrt{\pi}}\int_\bX  \varphi_i I_0u \,\d x;\quad \quad i=1, \dots, m.
\end{align*}

Considering  $\alpha_i=(a,\lambda_0,\varphi_i)$, $i=1,\dots,m$, we compute
the following substitute for \eqref{eq:Levison},
\begin{align}\label{eq:Levison2}
  \lim_{\lambda\to (\lambda_0)_+}\,S_{\alpha_i\alpha_i}(\lambda)=1-2
  \sum_{j\leq \kappa}\abs{c_i(S^*u_j)}^2 \inp{Y_0,\cdot}Y_0.
\end{align} Obviously the result \eqref{eq:asymU} for  Case III) follows from
\eqref{eq:Levison2} (since  $\kappa=m$ in that case),
cf. \eqref{normalization5.1.2}.

\subsection{An example of transmission}\label{subsec: An example of transmission}

For Case  III) treated above, but with the  maximality
condition replaced by $1\leq \kappa=n_{\mathrm{res}}<m$,  the limiting
scattering operator might differ from  \eqref{eq:Levison}.
Thanks  to  \eqref{eq:Levison2},  obviously this
happens for $\alpha_i=(a,\lambda_0,\varphi_i)$ unless $\sum_{j\leq
  \kappa}\abs{c_i(\psi_j)}^2=1$ (i.e. not smaller). The other extreme
is that  $\sum_{j\leq
  \kappa}\abs{c_i(\psi_j)}^2=0$ meaning that $ \lim_{\lambda\to
  (\lambda_0)_+}\,S_{\alpha_i\alpha_i}(\lambda)=1$, which is a unitary
operator. If on the other hand $ 0<\sum_{j\leq
  \kappa}\abs{c_i(\psi_j)}^2<1$, the isometry property of  $ \lim_{\lambda\to
  (\lambda_0)_+}\,S_{\alpha_i\alpha_i}(\lambda)$ is  not fulfilled and
transmission from the channel $\alpha_i$ will occur. We will construct such example by
redefining the basis $\varPhi:=\set{\varphi_1,\dots,\varphi_m}$. Whence we consider a general unitary
transformation $(\varphi'_1,\dots,\varphi'_m)^t=M
(\varphi_1,\dots,\varphi_m)^t$. We introduce the basis
$\varPhi':=\set{\varphi'_1,\dots,\varphi'_m}$
and notation $S_{\varPhi}$, $S_{\varPhi'}$ and similarly for adjoints to indicate the dependence of basis.

We compute
\begin{align}
  \label{eq:tran1}
  S^*_{\varPhi'}= \overline MS^*_{\varPhi}.
\end{align}

Using the notation $\alpha_i=(a,\lambda_0,\varphi_i)$ and $\alpha'_i=(a,\lambda_0,\varphi'_i)$, $i=1,\dots,m$,
we similarly
compute
\begin{align}
  \label{eq:Stran}
  \parb{S_{\alpha'_i\alpha'_j}(\lambda_0)}_{i,j\leq
    m}=\parbb{\lim_{\lambda\to
      (\lambda_0)_+}\,S_{\alpha'_i\alpha'_j}(\lambda)}_{i,j\leq
    m}=\overline M \parb{S_{\alpha_i\alpha_j}(\lambda_0)}_{i,j\leq m}\overline M ^*.
\end{align}

Let us on $\vE$ introduce the $\C^m$-valued function $c_{\varPhi}$,
slightly abusing notation,
\begin{align*}
  c_{\varPhi,i}(u)= \f{1}{2
  \sqrt{\pi}}\w{1, (S^*_{\varPhi}I_0
  u)_i}_0=\f{1}{2\sqrt{\pi}}\int_\bX  \varphi_i I_0u \ \d x;\quad
i=1,\dots, m,\,\, u\in\vE.
\end{align*}

Using \eqref{eq:tran1}  we compute
\begin{align}
  \label{eq:tra}
  c_{\varPhi'}(u)=\overline M c_{\varPhi}(u);\quad u\in\vE.
\end{align}

The normalization of the resonance
functions in Remark
\ref{remark:resolv-asympt-nearNorm}  reads with this notation
\begin{align*}
  (c_{\varPhi}(u_j),c_{\varPhi}(u_k)) =\delta_{jk}; \quad j,k =1, \cdots, \kappa.
\end{align*}

Let $e_1, \dots, e_m$  denote the canonical basis in $\C^m$,
$(e_j)_i=\delta_{ij}$. We choose the unitary matrix $M$ such that
\begin{align*}
  \overline Mc_{\varPhi}(u_j)=e_j; \quad j\leq \kappa.
\end{align*} Thanks to \eqref{eq:tra} this leads to
\begin{align}
  \label{eq:newba}
  c_{\varPhi',i}(u_j)=\delta_{ij}; \quad i\leq m,
 j\leq \kappa.
\end{align}

Using \eqref{eq:newba} we
compute as in Subsection \ref{subsec: caseIII}
\begin{align*}
  S_{\alpha'_k\alpha'_l}(\lambda_0)&=\delta_{kl}1-2
  \sum_{j\leq \kappa}c_{\varPhi',k}(u_j)\overline{c_{\varPhi',l}(u_j)}
  \,\inp{Y_0,\cdot}Y_0\\
  &=\delta_{kl}\begin{cases} 1-2\,\inp{Y_0,\cdot}Y_0,\text{ for
    }k,l\leq \kappa,\\ 1, \text{
      otherwise}.
\end{cases}
\end{align*} In view of \eqref{eq:Stran} this formula is an explicit
diagonalization of  the
operator-valued matrix
$\parb{S_{\alpha_i\alpha_j}(\lambda_0)}_{i,j\leq m}$. Each of the
appearing diagonal elements is  given by either $1-2\,\inp{Y_0,\cdot}Y_0$
or $1$ (depending on its  location). Since both of these options are
unitary we conclude that transmission from an incoming  channel associated with
$(a,\lambda_0)$, if occurring at all,  is limited to outgoing channels
in the family  associated with
$(a,\lambda_0)$. Morover indeed such  transmission does not occur  from any of the
channels $\alpha'_i=(a,\lambda_0,\varphi'_i)$, $i=1,\dots,m$. However
for  `mixtures' of these channels indeed transmission occurs. Thus
for example  considering $\alpha=(a,\lambda_0,\varphi)$ for the
mixture
$\varphi=\cos \theta \,\varphi'_1+\sin \theta
\,\varphi'_m$, $\cos \theta \sin \theta \neq 0$, we compute using  \eqref{eq:Stran}
\begin{align*}
  S_{\alpha\alpha}(\lambda_0)=1-2\cos^2\theta\,\inp{Y_0,\cdot}Y_0,
\end{align*} and  this operator is \emph{not isometric}.

\section{Threshold behaviour of total  cross-sections in atom-ion scattering}
\label{total cross-sections}

The scattering process for multi-particle Coulomb systems with initial  two-cluster data
has been studied in physics literature, both experimentally and theoretically.
In particular, in the collision of a  neutral cluster with a  charged
one (atom-ion scattering),
physical pictures suggest that if the neutral sub-system has no static dipole moment,
the total cross-sections would be finite. Its mathematical proof is subtle. In \cite{es}, V. Enss and B. Simon  put forward
several open questions and the sixth one of them is  the following:
\bigskip

\parbox[t]{15cm}{
``{\textit 6.  Atom-Ion Scattering.} An induced polarization picture suggests that Coulomb cross-sections with one neutral and one charged cluster will be finite if the neutral system has no static dipole moment. We are unable to prove this. Can one obtain explicit bounds in such a case?''}
\\[2mm]

In \cite{JKW}, the authors give an affirmative answer to this question of Enss-Simon. In fact they prove the finiteness of Coulomb total cross-sections  in atom-ion scattering  for non-threshold energies and study
the Born-Oppenheimer approximation. Here we are interested in the threshold behaviour of total cross-sections.
\\

Recall the well-known  fact  in one-body scattering  theory (see
for example \cite{Ya4})
that if a  bounded real potential $V$ on $\R^3$  decays  like
$ \vO( |x|^{-\rho})$
for some  $ \rho >2$  then the  total cross-section for the scattering process
described by the couple of operators ( $-\Delta$, $-\Delta + V(x)$)  is finite, while if $ V(x) \approx \frac{C}{|x|^2}$
as $ |x| \to \infty$  for some $C \ne 0$ then the  total cross-section is infinite.
In the  scattering theory for multi-particle Coulomb systems with initial two-cluster data, the inter-cluster
interaction between the two clusters decays  like $ \vO( |x|^{-1})$ in
the general case,
like  $ \vO( |x|^{-2})$ if one of the clusters is neutral (atom-ion scattering) and
like $ \vO( |x|^{-3})$  if  both clusters are neutral (atom-atom
scattering), see Subsection \ref{First principal example}.
 Here $x\in\R^3$
denotes the relative position between mass-centers of the two clusters.
For atom-ion scattering,
the known results for the  one-body case suggest that without an additional assumption, the total
cross-section  would  be infinite.
On the other hand  with  the assumption that the atom is in
the fundamental state the finiteness
of total cross-section  follows from \cite{JKW},  since by the
symmetry of Coulomb potentials
there is no static dipole moment for the atom in this case,
cf. Subcases  3b and 3c listed in Subsection \ref{First principal example}.  The goal of this section is to study the threshold behaviour
of the total cross-section in atom-ion scattering at the lowest threshold $\Sigma_2$.
\\

Consider the Hamiltonian of a diatomic molecule with $N$ electrons
which can be written  in the form
\begin{eqnarray} \label{Hamilton}
{ H}_{\textrm {phys}}&=&\sum _{k=1}^{2}\frac{1}{2m_{k}}\bigl(-\Delta _{x_{k}}\bigr)
\ +\ \sum _{j=3}^{N+2}\frac{1}{2}\bigl(-\Delta _{x_{j}}\bigr)\ +\
\frac{Z_{1}Z_{2}}{|x_{1}-x_{2}|}\label{diatomic-hamiltonian}\\
& &+\ \sum _{k=1}^{2}\sum _{j=3}^{N+2}\frac{q_{j}Z_{k}}{|x_{j}-x_{k}|}
\ +\ \sum\limits _{3\leq l<j\leq N+2}\frac{q_{l}q_{j}}{|x_{l}-x_{j}|},\nonumber
\end{eqnarray}
where $x_{k}\in \R ^{3}$, $k=1,2$, denote the position of the two nuclei
with mass $m_{k}$ and charge $Z_{k}>0$ and $x_{j}\in \R ^{3}$, $j=3,
\ldots ,N+2$, denote the position of $N$ electrons with mass $1$ and
charge $q_{j} \in \R$ (for the physical case charges
are equal and negative). Planck's constant $\hbar$ is taken to be $1$
in this formula.  Obviously \eqref{diatomic-hamiltonian} is a special
form of \eqref{0.1Cou}.

We are interested in scattering processes where the incoming  channel
is a two-cluster one, while the outgoing  channel is arbitrary.
Let $a=(C_1,C_2)$ be a two-cluster decomposition of $\{1,\ldots ,N+2\}$,
i.e.  a partition $(C_1,C_2)$ of the set of particle labels $\{1,\ldots ,N+2\}$,
where $j\in C_j$ for $j=1,2$. In order to make explicit calculations, we choose so called \emph{ clustered atomic
coordinates } $(x,y)\in \R^3\times \R^{3N}$ adapted to this cluster decomposition:
\begin{equation*}\label{h}
h\ =\ \left(\frac{1} {2M_1}+\frac{1}{2M_2}\right) ^{1/2}, \quad \ M_k
=m_k+|C'_k|,\quad  C'_k=C_k\setminus \{k\}, \quad
k=1, 2,
\end{equation*}
\begin{eqnarray*}
R_k&=&\frac{1}{M_k}\biggl(m_kx_k+\sum _{j\in C'_k}x_j\biggr),
k=1, 2, \nonumber\\
x&=&R_1-R_2, \label{x}\\
y_j&=&x_j-x_k,\quad j\in C'_k,\quad k=1,2,\quad \label{y}\\
l(y)&=&\frac{1}{M_1}\sum _{j\in C'_1}y_j-\frac{1}{M_2}
\sum _{j\in C'_2}y_j. \label{l(y)}
\end{eqnarray*}
In the Born-Oppenheimer approximation studied in \cite{JKW}, $h$ is regarded as a small parameter. Here
$h$ is regarded as  a constant, so we set $h=1$.
Note  that $R_k$ is the center of mass of the cluster $C_k$  for
$k=1,2$, and that $x$ is the relative position of these centers of
mass. These coordinates are well adapted to describe two-cluster
scattering of diatomic molecules.
After removal of  the molecular center of mass motion, the Hamiltonian
${ H}_{\textrm{phys}}$ may be written in this system of coordinates as
\begin{equation}\label{P-Pe}
{ H}\ =\ -\Delta _{x}+{ H}_\e(x),\ \ { H}_\e(x)\ =\
{\ H}^a+{\rm I}_{a}(x), \
\end{equation}
where the sub-Hamiltonian ${ H}^a$ is given by
\begin{equation}
{ H}^a ={ H}^{C_{1}}+{ H}^{C_{2}}, \label{P^a}
\end{equation}
with
\[
H^{C_k} =  \sum _{j\in C_{k}'}
\Bigl(-\frac{1}{2}\Delta _{y_{j}}+\frac{Z_{k}q_{j}}{|y_{j}|}\Bigr)
-\frac{1}{2m_{k}}\biggl(\sum _{j\in C_{k}'}\partial _{y_{j}}\biggr)^{2}
+\sum\limits _{\nfrac{l,j\in
C_{k}'}{l<j}}\frac{q_{l}q_{j}}{|y_{l}-y_{j}|},
\]
and the inter-cluster interaction ${ I}_{a}$ by
\begin{equation*}
{ I}_{a}(x)
=\frac{Z_{1}Z_{2}}{|x-l(y)|}+\sum\limits
_{\nfrac{k\in C_{1}'}
{j\in C_{2}'}}\frac{q_{k}q_{j}}{|y_{k}-y_{j}+x-l(y)|}
+ \sum _{j\in C_{1}'}\frac{Z_{2}q_{j}}{|y_{j}+x-l(y)|}+\sum _{j\in C_{2}'}
\frac{Z_{1}q_{j}}{|x-l(y)-y_{j}|}. \label{I_a}
\end{equation*}
$H_e(x)$ is the electronic Hamiltonian in the Born-Oppenheimer approximation.  Finally, we set
\begin{equation}
  \label{free-motion}
  { H}_a\ =\ -\Delta _{x}+{ H}^a.
\end{equation}
This operator as well as the full Hamiltonian $H$  are  considered as  self-adjoint operators on $L^2_{x,y} = L^2(\R^{3(N+1)}; {\rm d}x{\rm d}y)$.

For an arbitrary cluster decomposition $a=(C_1,\ldots ,C_k)$ of
$\{1,\ldots ,N+2\}$, i.e.   $C_1\cup \cdots \cup C_k=
\{1,\ldots ,N+2\}$ and $C_j\cap C_k=\emptyset$, for $j\neq k$,
we can, as for the case $k=2$ discussed above,  choose adapted coordinates $(x_a,y^a)$. We call
${H}^{a}$ the sub-Hamiltonian, $x_a\in \R^{3(k-1)}$ the inter-cluster
coordinates, $y^a$ the intra-cluster coordinates, and ${ I}_{a}(x_a,y^a)$
the inter-cluster interaction. By ${ D}_{x_a}$
(resp. ${ D}_{y^a}$) and by $-\Delta_{x_a}$ (resp. $-\Delta_{y^a}$),
we denote $-\i$ times the gradient
and the Laplacian in the inter-cluster (resp. intra-cluster) coordinates.
It is well known  \cite{Do, De, Sk6}
that  for this Schr\"odinger operator ${H}$, the
Dollard  wave operators
\begin{equation}\label{modified-WO}
  W^{\pm} _\alpha\ =\ s-\lim _{t\to \pm \infty}\e^{\i t{H}}
\e^{-\i t\bigl(-\Delta _{x_a}+\int _{{{\pm 1}}}^{t}I_{a}({2}s{ D}_{x_a},0)ds+
\lambda_{\alpha}\bigr)}J _{\alpha}
\end{equation}
exist for an  arbitrary  channel $\alpha =(a, \lambda_{\alpha},
\varphi_{\alpha})$, recalling  that (by definition) $a$ is an arbitrary cluster
decomposition ($\neq a_{\rm{max}}$) and
$\varphi_{\alpha}$ is a normalized  eigenfunction of ${H}^{a}$ with
eigenvalue $\lambda_\alpha$,
${H}^{a}\varphi_{\alpha}=\lambda_{\alpha}\varphi_{\alpha}$. The
operator
$J_\alpha$ denotes the identification operator,
which is defined for
any $L^2$-function $f$ of the variable $x_a$ by
\begin{equation}
  \label{identification}
(J _{\alpha}f)(x_a,y_a)
=f(x_a)\varphi_{\alpha}(y_a).
\end{equation}
Furthermore, the family of wave operators $ \{W^{\pm} _\alpha,
\text{ all }\alpha \}$ is asymptotically complete \cite{De}.
If $a=(C_{1},C_{2})$ is a two-cluster decomposition with one
neutral cluster (an atom), say $C_{1}$, i.e.
\begin{equation}\label{neutral}
\sum _{j\in C'_{1}}q_{j}\ =\ -Z_{1},
\end{equation}
then for any channel $\alpha =(a,\lambda_\alpha,\varphi_{\alpha})$
the  wave operators simply (cf. Section \ref{sec:rtominus2
  potentials}) as
%
\begin{equation}\label{WO}
W^{\pm} _\alpha \ =\ s-\lim _{t\to \pm \infty}\e^{\i t{H}}
\e^{-\i t\bigl(-\Delta _{x_a}+\lambda_\alpha\bigr)}J _{\alpha}.
\end{equation}
%
For any two  channels $\alpha$ and $\beta$ we  define
the associated scattering operator from channel $\alpha$
to channel $\beta$ by
\begin{equation}\label{S-T-matrices}
{\rm S}_{\beta \alpha}\ =\ \parb{W^+ _\beta}^{\ast}
W^- _\alpha,\ \ {\rm T}_{\beta \alpha}\ =\
{\rm S}_{\beta \alpha}-\delta _{\beta \alpha},
\end{equation}
where $\delta _{\beta \alpha}=1$ if $\alpha =\beta$ and $0$ otherwise.

Following \cite{es}, we define the total scattering cross-sections as follows.
 For $\lambda > \lambda_\alpha$, we introduce the magnitude of the
momentum associated with the kinetic energy of the relative motion
of the two clusters in the  channel $\alpha$ via
\begin{equation}
  \label{magnitude}
n_\alpha(\lambda):= (\lambda -\lambda_\alpha)^{\frac 1 2}.
\end{equation}
For $g\in C_{\c}^{\infty}(I_\alpha;\C )$,
$I_\alpha=(\lambda_\alpha,\infty ) $, and $\omega\in \S^2$,   we consider
the wave packet
\begin{equation}\label{wave-packet-h}
\R^3 \ni x \ \mapsto \
g_\omega (x)\ =  \tilde{g} (\omega\cdot x)
\end{equation}
where
\[ \tilde{g}(\nu) =
 \ \frac{1}{2\sqrt{\pi }}\int _{\R}
e^{in_\alpha(\mu ) \nu}
\frac{g(\mu )}{n_\alpha(\mu )^{1/2}}\, d\mu.
\]
The normalization is chosen such that
\[
\|g\|_{L^2(\R)} =
\|\tilde{g}\|_{L^2(\R)}.
\]
 Denoting by $\vC$ the set of all channels
we want to apply the operator ${\rm T}_{\beta \alpha}$,  for any
$ \beta \in \vC $,  to the
function
$g_\omega (x)\varphi_\alpha (y)$. Since this function does not belong to
$L^{2}(\R^{3(N+1)})$ - it  decays rapidly only in the direction defined by
$\omega$ -  we regularize it by multiplication by a function $h_{R,\omega}
\in L^{\infty}(\R^3)$, depending only  on the variable
$x-(\omega \cdot x)\omega$ transversal to the direction
$\omega$ of the
incident wave packet $g_\omega(x)$, such that  pointwisely
\begin{equation}\label{limit-pointwise}
\lim _{R\to \infty} h_{R,\omega}\ =\ 1 .
\end{equation}
For the purpose of this paper we shall specify this cut-off function
to be a Gaussian, explicitly  we take
\begin{equation}
  \label{Gaussian}
  h_{R,\omega}(x)= \e^{-(x-(\omega \cdot x)\omega)^2/R}.
\end{equation}

\noindent  {\bf Definition. } (\cite{JKW})
For $\lambda \in I_\alpha$ and $\omega \in \S^2$,  we shall say that
the total cross-section $\sigma_\alpha
(\lambda ,\omega )$ with the incoming channel $\alpha$
exists at the energy $\lambda$ with
the incident direction $\omega$, if the following limit is finite:
 \begin{equation}
 \label{def}
\sigma _{\alpha} (\lambda ,\omega)
 \ := \ \lim_{n \to \infty}\lim _{R\to \infty}
\sum_{\beta \in \vC,\,\lambda_\beta <\lambda}\| {\rm T}_{\beta \alpha}h_{R,\omega}
g_{n,\omega}\varphi_{\alpha}\|^2,
\end{equation}
where $g_{n, \omega}$ is defined as in (\ref{wave-packet-h}) with $g$ replaced by $g_n$:
\[
g_n (\mu ) = n^{-1/2} h( (\mu-\lambda)/n) \]
and $h$ is any $C_\c^\infty (\R)$-function normalized by $\int_\R |h(\mu)|^2 d\mu = 1$.
\\

 Recall that in \cite{es} and \cite{Wa3},
the total cross-section is defined as a distribution in $\mu\in I_\alpha$ by
 \begin{equation}\label{d1}
\int_{I_\alpha}\sigma _{\alpha} (\mu ,\omega)
|g(\mu)|^2\, \d\mu \ = \ \lim _{R\to \infty}
\sum_{\beta \in \vC}\| {\rm T}_{\beta \alpha}h_{R,\omega}
g_{\omega}\varphi _{\alpha}\|^2,
\end{equation}
 for all
$g\in C_{\c}^{\infty}(I_\alpha ; \C )$.
Since $|g_n(\cdot)|^2$ converges to $\beta_\lambda(\cdot)$,
the Dirac measure at $\lambda$,
as $n\to\infty$, the definitions (\ref{def}) and (\ref{d1})
coincide if the distribution defined in (\ref{d1})
 can be identified with a continuous
function  in a neighbourhood of $\lambda$. For fastly decaying pair potentials, total cross-sections can also be defined through scattering amplitudes (see \cite{ITa}).

\subsection{Finiteness of total cross-sections in atom-ion scattering}\label{subsec:finit-total-cross}

\begin{hypothesis}\label{hypo-alpha}
Let $\alpha =(a,\lambda_\alpha ,\varphi_\alpha )$ be a channel with $\lambda_\alpha\in
\sigma_{\rm d}({ H}^{a})$ and
cluster decomposition $a=(C_1,C_2)$ such that the cluster $a_1$ is neutral (an atom), that is
\begin{equation}\label{neutral-bis}
\sum _{j\in C'_{1}}q_{j}\ + Z_{1} = 0.
\end{equation}
Assume that $\lambda_\alpha \ =\ \lambda^1_{\alpha} \, + \, \lambda^2_{\alpha}$,  $\varphi_\alpha=\varphi_\alpha^1\otimes \varphi_\alpha^2$,
$H^{C_j}\varphi_\alpha^j=\lambda_\alpha^j\varphi_\alpha^j$ for $j=1,2$ and that
 $\lambda^1_{\alpha}$  is a simple
 eigenvalue of $H^{C_1}$ (i.e. the eigenvalue for  the neutral cluster
 is non-degenerate).
\end{hypothesis}

The following result of \cite{JKW} shows the finiteness   of $\sigma
_{\alpha}(\lambda, \omega)$ for $\lambda$ outside the set
\begin{equation*}
  \vT_{\rm
  p}=\vT_{\rm
  p}(H)=\vT(H)\cup\sigma_{\pp}(H)
\end{equation*}
 of
thresholds and eigenvalues of ${ H}$,
and it provides an  optical formula which is useful in many problems.

\begin{thm}[\cite{JKW}]\label{sigma-alpha-exists}
Let $\alpha =(a,\lambda_\alpha,\varphi_\alpha)$ be a
scattering channel satisfying
Hypothesis~\ref{hypo-alpha}.  We set
\begin{subequations}
\begin{equation}\label{F}
F(z, \omega) = \Bigl \langle{ I}_ae^\omega_\alpha \, , \, { R}(z)\,\parb{
{ I}_ae^\omega_\alpha }\Bigr\rangle, \qquad \Im z \ne 0,
\end{equation}
where
\begin{align}\label{eq:e15}
  e^\omega_\alpha (x,y)={\rm e}^{i n_\alpha (\lambda)\omega  \cdot x}
\varphi_\alpha (y).
\end{align}
\end{subequations}
Then, for any energy $\lambda \in  I_\alpha \setminus \vT_{\rm p}$
and any incident direction $\omega \in
\S^{2}$, the limit
\begin{subequations}
\begin{equation} \label{lim1}
F(\lambda + \i 0, \omega) = \lim_{\epsilon\to 0_+} F(\lambda + \i \epsilon, \omega)
\end{equation}
exists and defines a continuous function of $\lambda \in  I_\alpha \setminus \vT_{\rm p}$.
The total scattering cross-section
$\sigma_\alpha(\lambda , \omega )$
exists    for any energy $\lambda \in  I_\alpha \setminus \vT_{\rm p}$
and any incident direction $\omega \in
\S^{2}$ and  one has the optical formula
\begin{equation}\label{optical}
\sigma_\alpha(\lambda , \omega )  =\ \frac{1}
{ n_\alpha (\lambda)} \Im F(\lambda + \i 0, \omega).
\end{equation}
\end{subequations}
\end{thm}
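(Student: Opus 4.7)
The plan is to derive the optical formula via the standard three-step scheme: unitarity of the $S$-matrix, a stationary representation of $T_{\alpha\alpha}$, and a passage to the limit using the limiting absorption principle. I expect to follow closely the argument of \cite{JKW}, whose main result this is; I outline the key steps in the present notation.

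\textbf{Step 1 (Unitarity reduction).} By asymptotic completeness of the $N$-body wave operators (Derezi\'nski \cite{De}), for any $\psi\in L^2(\bX)$ with $\psi=F(H\in \Lambda)\psi$ for a bounded Borel set $\Lambda\subset I_\alpha$, one has $\sum_\beta \norm{S_{\beta\alpha}\psi}^2=\norm{\psi}^2$, the sum running over all channels $\beta$ with $\lambda_\beta<\sup \Lambda$. Writing $S_{\beta\alpha}=\delta_{\beta\alpha}+T_{\beta\alpha}$ one gets the identity
\begin{equation*}
\sum_\beta \norm{T_{\beta\alpha}\psi}^2=-2\Re \inp{\psi,T_{\alpha\alpha}\psi}.
\end{equation*}
Applied to the regularised wave packet $\psi_{R,n}=h_{R,\omega}g_{n,\omega}\otimes \varphi_\alpha$ (which, via $F(H_a\in \operatorname{supp}|g_n|^2)$--calculus and a Cook-type argument, is asymptotically supported in such an energy window as $R\to\infty$), this reduces the definition \eqref{def} of $\sigma_\alpha(\lambda,\omega)$ to the computation of $\lim_{n\to\infty}\lim_{R\to\infty}\parb{-2\Re \inp{\psi_{R,n},T_{\alpha\alpha}\psi_{R,n}}}$.

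\textbf{Step 2 (Stationary formula for $T_{\alpha\alpha}$).} Writing $T_{\alpha\alpha}=(W_\alpha^+)^*(W_\alpha^--W_\alpha^+)$ and using $(H-\lambda_\alpha)J_\alpha=I_a J_\alpha$ together with the Cook-Kato formula for $W_\alpha^\pm$ in \eqref{WO}, one obtains the distributional kernel representation, for $f,\tilde f\in C_{\c}^\infty(I_\alpha)$,
\begin{equation*}
\inp{J_\alpha\tilde f,T_{\alpha\alpha} J_\alpha f}=-2\pi \i\int_{I_\alpha}\overline{\tilde f(\mu)}f(\mu)\inp{I_a J_\alpha\delta_\mu,R(\mu+\i 0)I_a J_\alpha\delta_\mu}_{\vL(H^s,H^{-s})}\,\d\mu + L_{f,\tilde f},
\end{equation*}
where $L_{f,\tilde f}$ collects lower-order boundary-type contributions that vanish in the limits taken below and $\delta_\mu$ encodes the restricted Fourier transform $\vF_0(\mu-\lambda_\alpha)^*$ evaluated at the momentum sphere. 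The neutrality hypothesis \eqref{neutral-bis} guarantees $I_a=\vO(|x|^{-2})$, so $I_a J_\alpha\delta_\mu\in L^2_s$ for all $s<3/2$, and the pairing above is meaningful by the weighted $L^2$ LAP for $H$ at the non-threshold, non-eigenvalue energy $\mu\in I_\alpha\setminus \vT_\pupo$ (Mourre theory, cf. Proposition~\ref{prop:mour2} and \cite{PSS}).

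\textbf{Step 3 (Passage to the limit and identification with $F$).} Fixing $\omega\in\S^2$ and $\lambda\in I_\alpha\setminus\vT_\pupo$, the cut-off $h_{R,\omega}$ of \eqref{Gaussian} depends only on coordinates transverse to $\omega$, so stationary phase in the $\omega$-direction implemented on $\vF_0(\mu-\lambda_\alpha)^*(h_{R,\omega}g_{n,\omega})$ concentrates the support of the effective Fourier variable on $n_\alpha(\mu)\omega$ with a Gaussian profile of transverse width $R^{-1/2}$ and normalisation factor carrying $n_\alpha(\mu)^{-1/2}$. Using the LAP-continuity of $\mu\mapsto \inp{I_a e^\omega_\alpha,R(\mu+\i 0)I_a e^\omega_\alpha}$ on $I_\alpha\setminus \vT_\pupo$, the $R\to\infty$ limit gives
\begin{equation*}
-2\Re\inp{\psi_{R,n},T_{\alpha\alpha}\psi_{R,n}}\xrightarrow[R\to\infty]{}\frac{2}{n_\alpha(\lambda)}\int_{I_\alpha}|g_n(\mu)|^2\,\Im F(\mu+\i 0,\omega)\,\d\mu +o(1),
\end{equation*}
(with the error coming from the smoothness of $\mu\mapsto n_\alpha(\mu)^{-1}\Im F(\mu+\i 0,\omega)$ on the support of $g_n$). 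The subsequent $n\to\infty$ limit yields $\sigma_\alpha(\lambda,\omega)=n_\alpha(\lambda)^{-1}\Im F(\lambda+\i 0,\omega)$.

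\textbf{Main obstacle.} The delicate point is the rigorous treatment of the double limit $\lim_{n\to\infty}\lim_{R\to\infty}$. The wave packet $h_{R,\omega}g_{n,\omega}\otimes\varphi_\alpha$ is $L^2$ for each $R$ but its $R\to\infty$ formal limit is not, so one cannot pass to the limit inside $T_{\alpha\alpha}$ as an operator. The step that requires care is to show that, after the symmetrisation of Step~1, all contributions other than the principal one (the diagonal resolvent pairing displayed in Step~2) are $o(1)$ uniformly in $n$ as $R\to\infty$. This requires combining (i) the $|x|^{-2}$ decay of $I_a$, (ii) microlocal/LAP estimates of the form $\inp{x}^{-s}R(\mu+\i 0)\inp{x}^{-s}$ for $s>1/2$ uniformly on compacts of $I_\alpha\setminus\vT_\pupo$, and (iii) Gaussian stationary phase controlling the Fourier transform of $h_{R,\omega}$. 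The neutrality condition \eqref{neutral-bis} is what enables (i) (otherwise $I_a$ only decays as $|x|^{-1}$ and the stated conclusion fails), and this is precisely the mechanism by which atom-ion scattering has finite non-threshold total cross-sections, as foreseen in the Enss-Simon problem.
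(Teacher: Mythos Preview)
Your overall scheme (unitarity reduction, stationary $T_{\alpha\alpha}$ formula, limiting absorption) matches the strategy of \cite{JKW}, but Step~2 contains a decisive error that breaks the argument. You claim that neutrality gives $I_a=\vO(|x|^{-2})$ and therefore $I_a J_\alpha\delta_\mu\in L^2_s$ for all $s<3/2$. The decay statement is correct, but the $L^2_s$ conclusion is not: a function of size $|x|^{-2}$ in three dimensions belongs to $L^2_s(\R^3_x)$ only for $s<1/2$ (the integrability condition is $\int_1^\infty r^{2s-4}r^2\,\d r<\infty$, i.e.\ $2s-2<-1$). Hence $I_a e^\omega_\alpha\in L^2_{(1/2)^-}$ only, exactly at the borderline where the standard weighted LAP bound $\inp{x}^{-s}R(\lambda+\i0)\inp{x}^{-s}\in\vL(L^2)$ for $s>1/2$ fails to apply to the pairing $F(\lambda+\i0,\omega)$. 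The paper says this explicitly right after the theorem: ``$I_a e^\omega_\alpha$ only belongs to $L^2_{(1/2)^-}$. The existence of the limit \eqref{lim1} is non-trivial.''

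What is actually needed, and what your outline does not supply, is the refined analysis of \cite{JKW}: one splits $I_a e^\omega_\alpha=\Pi I_a e^\omega_\alpha+\Pi' I_a e^\omega_\alpha$ and uses that the $\Pi$-part has much better decay (the simplicity of $\lambda_\alpha^1$ forces the static dipole moment of the neutral cluster to vanish by reflection symmetry of the Coulomb sub-Hamiltonian, so $\inp{\varphi_\alpha,I_a e^\omega_\alpha}_y=\vO(|x|^{-4})$, cf.\ \cite[Lemma~A.1]{JKW} and \eqref{eq:16ii}), while the $\Pi'$-part is handled by phase-space localizations in the relative inter-cluster kinetic energy that exploit its off-shell nature. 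Your ``Main obstacle'' paragraph lists the right ingredients abstractly but treats the LAP as if it applied directly; the genuine obstacle is precisely that it does not, and the dipole-vanishing mechanism together with microlocal resolvent estimates is the missing idea.
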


Note that under Hypothesis \ref{hypo-alpha},  ${ I}_a
e^\omega_\alpha $ only belongs to  $ L_{( 1 /2)^-}^2 = L_{( 1/
  2)^-}^2(\R^{3 +3N}; {\d}x{\d}y)$. The existence of the limit
(\ref{lim1}) is non-trivial.  Its proof given in \cite{JKW} uses
the limiting absorption principle (cf. Theorem \ref{thm:powers}) and phase space analysis
through appropriate localizations in the relative kinetic
energy of the two clusters. For the same reason, the results
established in Chapter \ref{chap:resolv-asympt-near} on  threshold resolvent
asymptotics can not be applied directly, because they hold as
operators from $L_s^{2}$ to $L_{-s}^{2}$ with  at least $s>1$.
Nevertheless, as the reader will see,
various ingredients from Chapter \ref{chap:resolv-asympt-near}  can be applied to study  the threshold behaviour of $\sigma_\alpha(\lambda , \omega)$.
In this subsection we  consider for convenience only the  simplest  case, where the  channel satisfies
the following condition. (In Subsection
\ref{subsec:total-cross-sectionsMult} we consider a more general setup.)

\subsection{Total cross-sections at $\Sigma_2$,  non-multiple
  two-cluster  case}\label{subsec:total-cross-sectionNon}

\begin{hypothesis}\label{hypo-beta} Let $\alpha = (a, \lambda_\alpha, \varphi_\alpha)$ be a scattering channel with $\lambda_\alpha=\lambda_0 = \Sigma_2$, the lowest threshold  of $H$.
Assume that $\Sigma_2$ is a  non-multiple two-cluster threshold  and $\Sigma_2 \not\in \sigma(H')$, where $H' = \Pi_\alpha'H\Pi_\alpha'$ and
 \[
 \Pi_\alpha' u = u - \w{\varphi_\alpha, u}_y\otimes \varphi_\alpha, \quad u \in L^2(\R^{3+3N}_{x,y}).
 \]
\end{hypothesis}

Under Hypothesis \ref{hypo-beta}, we can apply the Grushin reduction (cf.
\cite{Wa2}, extended in  Chapter \ref{chap:resolv-asympt-near})
to study the behaviour of $\sigma_\alpha(\lambda, \omega)$ for
$\lambda $ near $ \lambda_0= \Sigma_2$. The more  general situations
where $ \lambda_\alpha$ possibly is a multiple two-cluster threshold and/or $\lambda_\alpha \in \sigma(H')$  can also be studied by the
methods developed in the present work (the multiple case will indeed
be studied in Subsection
\ref{subsec:total-cross-sectionsMult}).  Under the Hypothesis \ref{hypo-beta},  $R'(z)= (H'-z)^{-1}\Pi_\alpha'$ is holomorphic for $z$ near $\lambda_0$ and one has
\begin{equation}
R(z) = E(z) - E_+(z) E_{\mathcal H}(z)^{-1} E_-(z)
\end{equation}
where
\begin{eqnarray*}
E(z) &= & R'(z) \\
E_+(z) &=& S - R'(z) H S \\
E_-(z) &=& S^* - S^* HR'(z) \\
E_{\mathcal H}(z) &= & S^* (z- H + H R'(z) H)S
\end{eqnarray*}
and $S : L^2(\R^{3}_x) \to L^2(\R^{3+3N}_{x,y})$, $(x,y) \in \R^3 \times \R^{3N}$, is defined by
\[
Sf(x,y) = f(x)\varphi_\alpha(y), \quad f \in L^2_x.
\]
Under Hypotheses \ref{hypo-alpha} and \ref{hypo-beta},  it follows
from Theorem \ref{sigma-alpha-exists} that for $\lambda >\lambda_0$
close to  $\lambda_0$
\begin{align}
  \begin{split}
  \sigma_\alpha(\lambda , \omega )  &= \frac{1}
{ n_\alpha (\lambda)} \Im \w{ { I}_ae^\omega_\alpha, \,R(\lambda + \i 0) ({ I}_ae^\omega_\alpha) } \\
 &=  -  \frac{1}
{ n_\alpha (\lambda)}  \Im \inp{ f(\lambda, \omega) ,E_{\mathcal H}(\lambda + \i 0)^{-1}f(\lambda, \omega) }_x;\label{cs-threshold}
  \end{split}\\
 f(\lambda, \omega) &=  E_-(\lambda)({ I}_ae^\omega_\alpha).
\end{align}
 Here we used the fact that $R'(\lambda + \i 0) = R'(\lambda)$ is self-adjoint for $\lambda$ near $\lambda_0$.
 One can calculate
 \begin{subequations}
   \begin{align}\label{eq:16i}
     f(\lambda, \omega) &= \inp{ \varphi_\alpha ,{ I}_ae^\omega_\alpha}_y -
 \inp{ \varphi_\alpha, \,I_aR'(\lambda)({ I}_ae^\omega_\alpha)}_y,\\\label{eq:16ii}
\inp{ \varphi_\alpha,\,{ I}_ae^\omega_\alpha}_y  &=
\vO({|x|^{-4}}),\\
\inp{\varphi_\alpha ,\,I_a R'(\lambda)({I}_ae^\omega_\alpha) }_y &\in L_{( 5/ 2)^-}^2(\R^{3}_x),\label{eq:16iii}
   \end{align}
 \end{subequations}
 uniformly in $\lambda$ and $\omega$. For \eqref{eq:16ii} we refer to    \cite[ Lemma A.1]{JKW}. Hence
\begin{align*}
 f (\lambda, \omega) \in L_{(5/ 2)^-}^2=
L_{( 5/ 2)^-}^2(\R^{3}_x)\text {  uniformly in }\lambda \text { and }\omega.
 \end{align*}
Similarly,
\[E_{\mathcal H}(z)  =   z- \lambda_0 - (-\beta_x + U(z)),
\]
where
\[
U(z) =  \inp{\varphi_\alpha, \, I_a\varphi_\alpha }_y - \inp{\varphi_\alpha,\,I_a R'(z) I_a
  (\varphi_\alpha\otimes \cdot) }_y
\]
is an operator-valued holomorphic function for $z$ near $\lambda_0$. One  has
\[
\w{\varphi_\alpha,\,I_a\varphi_\alpha }_y = \vO({|x|^{-4}}), \quad  \inp{ \varphi_\alpha,\,I_a
  R'(z) I_a (\varphi_\alpha\otimes \cdot)}_y \in \mathcal
L(H^1_s,H^{-1}_{s+4});\quad s\in\R.
\]
Set
\begin{equation}
U = U(\lambda_0)  =  \inp{\varphi_\alpha, I_a\varphi_\alpha}_y - \inp{ \varphi_\alpha,\,I_a R'(\lambda_0) I_a (\varphi_\alpha\otimes \cdot)}_y.
\end{equation}
Comparing $f (\lambda_0, \omega)$ with $U$, one observes
\begin{equation}\label{eq:16compU}
f (\lambda_0, \omega) =E_-(\lambda_0)(I_a\varphi_\alpha)= U 1,
\end{equation}
that is, $U$ applied to the constant function  $1\in L_{( -3/
  2)^-}^2(\R^{3}_x)$.

Recall from Section \ref{sec:resolv-asympt-nearLOWEST} that the asymptotic expansion of the resolvent $(-\Delta_x - z')^{-1}$, $z'= z-\lambda_0$ with $\Im z \neq 0$,  in weighted $L^2$-spaces
is given by
\[
(-\Delta_x -z')^{-1} = G_0 + \sqrt{z'} G_1 + \cdots
\]

\begin{thm}[Regular case]\label{cs-case0} Let Hypotheses \ref{hypo-alpha} and \ref{hypo-beta} be satisfied.
 Assume that $\lambda_0=\Sigma_2$ is a regular point of $H$ (see
 Subsection \ref{sec:case-lambda_0=sub}). Then
 \begin{subequations}
  \begin{equation}\label{asymp-cs-case0}
\sigma_\alpha(\lambda, \omega) = 4 \pi |s_{\textrm lnt}|^2 + o(1)
\quad\text{as} \quad  \lambda \to (\lambda_0)_+,
\end{equation}
where
\begin{equation}\label{s-length}
s_{\rm {lnt}} = \frac{1}{4\pi} \int_{\R^3_x} (1+
UG_0)^{-1}E_-(\lambda_0)(I_a\varphi_\alpha) \ \d x.
\end{equation}
 \end{subequations}
\end{thm}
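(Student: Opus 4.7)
The plan is as follows. Combining the optical formula \eqref{optical} with \eqref{cs-threshold} and the identity $-E_{\vH}(\lambda+\i 0)^{-1} = W(z+\i 0)^{-1}r_0(z+\i 0)$ from Section~\ref{sec:resolv-asympt-nearLOWEST} (with $z=\lambda-\lambda_0=k^2$, $k:=n_\alpha(\lambda)$), I rewrite
\[
\sigma_\alpha(\lambda,\omega)\ =\ \frac{1}{k}\,\Im\bigl\langle f(\lambda,\omega),\,W(z+\i 0)^{-1}r_0(z+\i 0)\,f(\lambda,\omega)\bigr\rangle.
\]
For $s\in(3/2,2)$ the Jensen--Kato type expansions in $\vL(L^2_s,L^2_{-s})$ give $r_0(z+\i 0)=G_0+k\,G_1+\vO(k^{1+\ep})$ and $W(z+\i 0)^{-1}=W_0^{-1}-k\,W_0^{-1}W_1 W_0^{-1}+\vO(k^{1+\ep})$, where $W_0=1+G_0U$, $W_1=G_1U$, and invertibility of $W_0$ is ensured by regularity of $\lambda_0$. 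From \eqref{eq:16i}--\eqref{eq:16iii} one reads off $f(\lambda,\omega)=f_0+\vO(k^{5/2-s})$ in $L^2_s$ uniformly in $\omega$, with $f_0=U\cdot 1$ as in \eqref{eq:16compU}.

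Next I extract the leading-order imaginary part. The operator $W_0^{-1}G_0=(1+G_0U)^{-1}G_0=G_0(1+UG_0)^{-1}$ is self-adjoint on weighted spaces---it is the zero-energy inverse of the effective one-body Hamiltonian $-\Delta_x+U$---so $\Im\inp{f_0,W_0^{-1}G_0 f_0}=0$, and by the same self-adjointness the cross terms between $f_0$ and $f(\lambda,\omega)-f_0$ at this order contribute only to $\Re$. The $\vO(k)$ coefficient of $W(z+\i 0)^{-1}r_0(z+\i 0)$ applied to $f_0$ simplifies via the algebraic identity
\[
\bigl(1-UW_0^{-1}G_0\bigr)f_0\ =\ U\tilde\psi,\qquad \tilde\psi:=W_0^{-1}\cdot 1,
\]
which follows from $W_0^{-1}G_0 U=1-W_0^{-1}$ applied to the constant $1$. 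Since $G_1$ has constant kernel $\i/(4\pi)$ and \eqref{s-length} reads $s_{\mathrm{lnt}}=\frac{1}{4\pi}\int U\tilde\psi\,\mathrm{d}x$, we find $G_1(U\tilde\psi)=\i s_{\mathrm{lnt}}$ as a constant function, and hence
\[
\bigl\langle f_0,\,W_0^{-1}G_1(1-UW_0^{-1}G_0)\,f_0\bigr\rangle\ =\ \i s_{\mathrm{lnt}}\int f_0\tilde\psi\,\mathrm{d}x\ =\ 4\pi\i\, s_{\mathrm{lnt}}^2,
\]
using self-adjointness of $U$ to rewrite $\int f_0\tilde\psi\,\mathrm{d}x=\int U\tilde\psi\,\mathrm{d}x=4\pi s_{\mathrm{lnt}}$. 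Taking $\Im$ and dividing by $k$ yields the limit $4\pi|s_{\mathrm{lnt}}|^2$.

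The main obstacle is error control within the narrow weight window $s\in(3/2,2)$. Writing $r_k:=f(\lambda,\omega)-f_0$, at the $\vO(k^0)$ level the cross terms $\inp{f_0,W_0^{-1}G_0 r_k}+\inp{r_k,W_0^{-1}G_0 f_0}$ are real and $\Im\inp{r_k,W_0^{-1}G_0 r_k}=0$, both by self-adjointness of $W_0^{-1}G_0$. At the $\vO(k)$ level the cross terms are $\vO(k\cdot\|f_0\|_{L^2_s}\|r_k\|_{L^2_s})=\vO(k^{1+(5/2-s)})$, giving $o(1)$ after division by $k$; the expansion remainders give $\vO(k^\ep)$. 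The tightness of the window traces back to the slow decay $V_a(x)=\vO(|x|^{-4})$ combined with $e^{\i k\omega\cdot x}-1=\vO(k|x|)$, which confines $f_0$ to $L^2_{5/2^-}$ and no better, preventing the use of heavier weights that would simplify the higher-order resolvent asymptotics---every estimate must be performed just above $s=3/2$ and just below $s=5/2$.
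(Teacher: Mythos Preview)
Your proof is correct and follows essentially the same route as the paper. Both arguments start from \eqref{cs-threshold}, expand $-E_\vH(\lambda_0+z')^{-1}=W(z')^{-1}r_0(z')$ to order $\sqrt{z'}$ in $\vL(L^2_s,L^2_{-s})$ for $s\in(3/2,5/2)$, use self-adjointness of $(1+G_0U)^{-1}G_0$ to kill the $k^{-1}$ term, and compute the leading contribution via the rank-one structure of $G_1$. The paper writes the $\sqrt{z'}$ coefficient directly as $(1+G_0U)^{-1}G_1(1+UG_0)^{-1}$ and evaluates $\Im\langle g_0,G_1 g_0\rangle=\tfrac1{4\pi}|\int g_0|^2$ with $g_0=(1+UG_0)^{-1}f_0$; you reach the same expression through the equivalent identity $(1-UW_0^{-1}G_0)=(1+UG_0)^{-1}$ and the substitution $\tilde\psi=W_0^{-1}\cdot 1$. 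Two minor remarks: your weight window can be taken as $(3/2,5/2)$ rather than $(3/2,2)$ since here $\rho_0=4$, and your rate $O(k^{5/2-s})$ for $r_k$ is more than needed---the paper only uses $f(\lambda,\omega)\to f_0$ in $L^2_s$, which suffices because $\Im\langle f,W_0^{-1}G_0 f\rangle$ vanishes identically (not merely asymptotically).
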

\pf Since  $\lambda_0$ is a regular point of $H$, $(1+ G_0U)^{-1}$ exists and is bounded from $L_s^{2}$ to $L^2_{-s}$ for any $s >1$.
Let $z' = z-\lambda_0$ and $\Im z\neq 0$.  One has
\begin{eqnarray*}
 \lefteqn{E_{\mathcal H}(\lambda_0 +z')^{-1}}\\
 & = &- \parb{1+ (-\Delta_x -z')^{-1}U( z')}^{-1} (-\Delta_x -z')^{-1} \\
&= & -(1+G_0U)^{-1}G_0 + \sqrt{z'}\left\{ (1+G_0U)^{-1}G_1 \parb{U(1+ G_0 U)^{-1}G_0 -1}\right\} + o(\sqrt{z'}) \\
&= &  -(1+G_0U)^{-1}G_0 - \sqrt{z'} (1+G_0U)^{-1}G_1 (1+ UG_0)^{-1} + o(\sqrt{z'})
\end{eqnarray*}
in $L_s^{2} \to L^2_{-s}$, $s>  3/ 2$.
Making use of (\ref{cs-threshold}), we obtain for $\lambda' = \lambda-\lambda_0$
\begin{eqnarray}
\lefteqn{\sigma_\alpha(\lambda , \omega ) } \nonumber \\
 &= &   \frac{1}{ \sqrt{\lambda'}}  \Im \inp[\Big]{  f(\lambda, \omega),\,\left( (1+G_0U)^{-1}G_0 + \sqrt{\lambda'} (1+G_0U)^{-1}G_1 (1+ UG_0)^{-1} + o(\sqrt{\lambda'})\right) f(\lambda, \omega) }_x  \nonumber \\
 &= &
  \Im \inp{   f(\lambda_0, \omega),\,(1+G_0U)^{-1}G_1 (1+ UG_0)^{-1}  f(\lambda_0, \omega)}_x  + o(1)\nonumber
\end{eqnarray}
as $\lambda' \to 0_+$. Since   $G_1$ has the  constant integral kernel $\frac{\i}{4\pi}$,  the leading term in the above equation is equal to
\[
\frac{1}{4\pi}\abs[\Big]{\int_{\R^3_x} (1+
UG_0)^{-1}E_-(\lambda_0)(I_a\varphi_\alpha) \ {\d} x}^2,
\] proving \eqref{asymp-cs-case0} and \eqref{s-length}. \hfill $\Box$
\\

Note that since
\[
 E_-(\lambda_0)(I_a\varphi_\alpha) =f(\lambda_0, \omega) \in L_{( 5/ 2)^-}^2,
\]
indeed the constant $s_{\textrm {lnt}}$ is well defined. In addition it can be written as
\[
s_{\textrm {lnt}}= \frac{1}{4\pi} \int_{\R^3_x} (1+ UG_0)^{-1}U 1 \
\d x.
\]
Recall that in the scattering theory for the pair $(-\Delta, -\Delta +
V(x))$, the scattering length is related to the low-energy limit of
the total cross-section. In dimension three it is equal to
\[
 \frac{1}{4\pi} \int_{\R^3_x} (1+ VG_0)^{-1}V(x)\ {\d}x,
\]
if $V(x)$ decreases sufficiently rapidly  (cf.  \cite[Theorem 5.1]{JK}). Analogously to one-body
scattering, $s_{\textrm {lnt}}$ given in (\ref{s-length}) can be interpreted
as a \textit{scattering length} in atom-ion scattering.
\\

\begin{thm}[Exceptional
point of $1$st kind]\label{cs-case1} Let Hypotheses \ref{hypo-alpha} and \ref{hypo-beta} be satisfied.
 Assume that $\lambda_0=\Sigma_2$ is a resonance but not an eigenvalue  of $H$. Then
\begin{equation} \label{asymp-cs-case1}
\sigma_\alpha(\lambda, \omega) =\tfrac{1}{\lambda-\lambda_0} \parb{4 \pi +
o(1)} \quad\text{as} \quad \lambda \to (\lambda_0)_+.
\end{equation}
\end{thm}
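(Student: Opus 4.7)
The plan is to insert the leading order resolvent expansion from Proposition~\ref{prop5.9} Case~1 into the optical formula \eqref{cs-threshold}. The reduction of Subsection~\ref{subsec:total-cross-sectionNon} produces a scalar effective operator $E_{\vH}(\lambda_0+z)$ on $L^2(\R^3_x)$, and the resonance multiplicity $\kappa$ at $\lambda_0$ equals $1$ under the given hypotheses (the condition $\Sigma_2\notin\sigma(H')$ ensures that the reduction in this subsection matches the non-multiple framework of Section~\ref{sec:resolv-asympt-nearLOWEST}). Proposition~\ref{prop5.9} then yields, in $\vL(-1,s;1,-s)$ for $s>1$,
\begin{equation*}
E_{\vH}(\lambda_0 + z)^{-1} = -\frac{\i}{\sqrt{z}}\,\inp{\psi_1,\cdot}\,\psi_1 + \vO(|z|^{-1/2+\ep}),
\end{equation*}
where $\psi_1$ is the normalized resonance state satisfying $|c(\psi_1)| = 1$, with $c(\psi_1) = \inp{1, U\psi_1}/(2\sqrt{\pi})$.

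Substituting $z = \lambda - \lambda_0 + \i 0$ and exploiting the continuity $f(\lambda, \omega)\to f(\lambda_0, \omega)$ in $L^2_{(5/2)^-}$, the optical formula reduces to
\begin{equation*}
\sigma_\alpha(\lambda,\omega) = \frac{|\inp{\psi_1, f(\lambda_0, \omega)}|^2}{\lambda - \lambda_0} + \vO\parb{(\lambda - \lambda_0)^{-1+\ep}}.
\end{equation*}
The remaining task is therefore to show that the numerator equals $4\pi$. By \eqref{eq:16compU}, $f(\lambda_0,\omega) = U\cdot 1$ independently of $\omega$, and self-adjointness of $U$ (inherited from self-adjointness of $R'(\lambda_0)$) gives $\inp{\psi_1, f(\lambda_0,\omega)} = \overline{\inp{1, U\psi_1}}$.

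To evaluate $\inp{1, U\psi_1}$, I would use the identity $E_{\vH}(\lambda_0)\psi_1 = 0$, i.e.\ $U\psi_1 = \Delta\psi_1$, together with the Coulombic asymptotics $\psi_1(x) \sim -\inp{1, U\psi_1}/(4\pi|x|)$ modulo an $L^2$-remainder $w$, as supplied by Theorem~\ref{thm5.2}\ref{item:first}. Since $U\psi_1 \in L^2_{(5/2)^-}$ is genuinely integrable, the divergence theorem applied to $\int_{|x|<R}\Delta\psi_1\,\d x$ yields, after letting $R\to\infty$, a surface flux of precisely $\inp{1,U\psi_1}$ (independent of $R$), the $L^2$-remainder $w$ and its gradient contributing negligibly. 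Hence $\int_{\R^3}U\psi_1\,\d x = \inp{1, U\psi_1}$, and combining with the normalization $|c(\psi_1)|=1$ gives $|\inp{1, U\psi_1}|^2 = 4\pi$, as required. The main technical obstacle will be the careful justification of the surface-flux limit against the non-$L^2$ resonance state: the pointwise Coulombic asymptotics from Theorem~\ref{thm5.2}\ref{item:first} must be controlled on large spheres, and the contribution of $\nabla w$ across $\{|x|=R\}$ must be shown to vanish via a Cauchy-Schwarz argument combined with the elliptic regularity estimates underlying that theorem.
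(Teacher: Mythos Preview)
Your approach is essentially the paper's. The paper's proof is the three lines you already have: insert the Case~1 expansion $E_{\vH}(\lambda_0+z')^{-1}=-\tfrac{\i}{\sqrt{z'}}\bigl(\langle\psi,\cdot\rangle\psi+o(1)\bigr)$ into \eqref{cs-threshold}, then compute $\langle\psi,f(\lambda_0,\omega)\rangle=\langle\psi,U\cdot 1\rangle=\langle U\psi,1\rangle=2\sqrt\pi$ using the self-adjointness of $U$ and the normalization $\tfrac{1}{2\sqrt\pi}\langle U\psi,1\rangle=1$.

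The surface-flux argument you propose is redundant. Once you have $\langle\psi_1,f(\lambda_0,\omega)\rangle=\overline{\langle 1,U\psi_1\rangle}$, the normalization $|c(\psi_1)|=1$ with $c(\psi_1)=\langle 1,U\psi_1\rangle/(2\sqrt\pi)$ gives $|\langle 1,U\psi_1\rangle|^2=4\pi$ \emph{directly}; there is nothing further to evaluate. Your divergence-theorem computation establishes only the tautology $\int U\psi_1\,\d x=\langle 1,U\psi_1\rangle$, and the ``technical obstacle'' you anticipate (controlling $\nabla w$ across large spheres) simply does not arise. Drop that entire paragraph and you have the paper's proof.

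One minor point: Proposition~\ref{prop5.9} is stated for the double two-cluster setting of Subsection~\ref{sec:case-lambda_0=sub}, whereas Hypothesis~\ref{hypo-beta} puts you in the non-multiple case. The paper accordingly cites the analogous result from \cite{Wa2} (or the method of Proposition~\ref{prop5.8}); the scalar expansion you wrote down is correct, but the reference should be adjusted.
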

\pf
In general  $\lambda_0$ is a resonance  (an eigenvalue, resp.) of $H$
if and only if $0$ is a resonance (an eigenvalue, resp.) of $-\Delta_x
+ U$, cf. Lemma \ref{prop5.1}.
 Since we assume that  $\lambda_0$ is exceptional
point of $1$st kind  the operator $E_{\mathcal H}(\lambda_0 +z')^{-1}$ admits the  asymptotics
\begin{eqnarray*}
 \lefteqn{E_{\mathcal H}(\lambda_0 +z')^{-1}}\\
 & = &- \parb{ 1+ (-\Delta_x -z')^{-1}U( z')}^{-1} (-\Delta_x -z')^{-1} \\
&= & -\tfrac{\i}{\sqrt{z'}} \parb{ \w{\psi,\, \cdot}_x \psi + o(1)}, \quad z' \to 0,
\end{eqnarray*}
 in $L^2_s \to L^2_{-s}$, $s> 1$, where $\psi \in L_{(- 1/
  2)^-}^2$ is a resonance  state of $-\Delta + U$ satisfying
\begin{align}\label{eq:normreseen}
  \tfrac{1}{2\sqrt \pi} \w{U\psi, 1} =1.
\end{align}
See the proof of \cite[Theorem 3.9]{Wa2} or that of
Proposition \ref{prop5.8}.
We deduce from (\ref{cs-threshold}) that for $\lambda' = \lambda-\lambda_0$,
\[
\sigma_\alpha(\lambda, \omega )
 =   \tfrac{1}{\lambda'}  ( |\w{\psi,  f(\lambda_0, \omega)}|^2 +
 o(1))\,\text{ as }\,\lambda' \to 0_+.
\]
 Noting  that
\begin{eqnarray*}
\w{\psi,  f(\lambda_0, \omega)}& = & \w{\psi, E_-(\lambda_0) (I_a\varphi_\alpha)}  \\
 &=&   \w{U \psi, 1}_x = 2 \sqrt \pi,
\end{eqnarray*}
  (\ref{asymp-cs-case1}) follows.  \hfill $\Box$

It is a remarkable phenomenon that in presence of threshold resonance,  total cross-sections display a universal behavior near the threshold, independent of concrete form of potentials.
If $\lambda_0=\Sigma_2$ is an eigenvalue of $H$, the method used here
does not allow us to give the leading term of $\sigma_\alpha(\lambda,
\omega ) $, due to the lack of decay. However with some  weak decay of the
$L^2$-eigenfunctions at $\lambda_0$ one can obtain similar results as
above. We shall give  elaboration in the next subsection in the more
general setting of $\lambda_0=\Sigma_2$ possibly being  a multiple two-cluster threshold.

\subsection{Total cross-sections at $\Sigma_2$,  multiple
  two-cluster case}\label{subsec:total-cross-sectionsMult}
 In this subsection we consider the atom-ion model with  $\alpha
 =(a,\lambda_\alpha,\varphi_\alpha)$ being a
scattering channel satisfying
Hypothesis~\ref{hypo-alpha}. Rather than Hypothesis \ref{hypo-beta} we
impose the following condition using notation from Sections
\ref{sec:CoulRellich} and \ref{sec:resolv-asympt-physics models
  near}. Note that the atom-ion model is a specific example  covered
by these sections. Although  we can not use the results of Section \ref{sec:resolv-asympt-physics models
  near} directly,  we can use their proofs, cf. Subsection
\ref{subsec:total-cross-sectionNon}. Again we consider only the case
$\lambda_0 = \Sigma_2$ being  a two-cluster  threshold.
We shall impose the condition
\begin{align}
    \label{eq:2plus33}
    \widetilde{\vA}=\vA^{\textrm{fd}}_3,
  \end{align}
 and in addition \eqref{eq:dec20}, that is
\begin{align}
    \label{eq:dec200}
    \ran \Pi_H\subset L^2_t\text{  for some }t> 3/2,
  \end{align} where $\Pi_H$ is the orthogonal projection onto $\ker
  (H-\lambda_0)$.

\begin{hypothesis}\label{hypo-gamma}  Let $\alpha = (a, \lambda_\alpha,
  \varphi_\alpha)$ be a  channel with $\lambda_\alpha=\lambda_0 =
  \Sigma_2$, the lowest threshold  of $H$. Assume
 that   $\Sigma_2$ is
  two-cluster, \eqref{eq:dirCo1}, \eqref{eq:dirCo2}, \eqref{eq:2plus33}
  and \eqref{eq:dec200} (in particular $a\in\vA^{\textrm{fd}}_3$).
\end{hypothesis}

We remark that Hypothesis \ref{hypo-gamma} implies  Hypothesis
\ref{hypo-alpha} for any channel $\alpha$,  and that it is more
general (i.e. weaker)  than the combination of  Hypotheses \ref{hypo-alpha} and
  \ref{hypo-beta} used in the previous subsection.  The content of
  \eqref{eq:2plus33} is that for the
  two-cluster threshold  $\Sigma_2$ it holds for any $a=(C_{1},C_{2})\in
  \widetilde\vA$ (possibly not unique) that one of the
  clusters is neutral, say  \eqref{neutral} is fulfilled for $C_{1}$.

\begin{thm}[Regular or
   exceptional of $2$nd kind case] \label{cs-caseReg+3}  Let  Hypothesis
  \ref{hypo-gamma} be satisfied for
  $\alpha=(a,\lambda_\alpha,\varphi_\alpha)$,
  and assume that $\lambda_\alpha=\lambda_0=\Sigma_2$ is either  regular  or
   exceptional of $2$nd kind. Let $
 H_\sigma:=H-\sigma\Pi_H$ with  $\sigma> 0$ small. Then, using the
 notation \eqref{eq:e15} and that of
 Theorem  \ref{thm:resolv-asympt-phys2nd}, we let
\begin{subequations}
 \begin{align}\label{eq:fcross}
   f_\sigma(\lambda, \omega) &=  \parb{S^*-
                        S_{I_\sigma}^*R_\sigma'(\lambda)} ({
                        I}_{a,\sigma} e^\omega_\alpha);\quad \lambda\geq \lambda_0.
 \end{align} Here
 \begin{align*}
   S_{I_\sigma}^*(\cdot)=\sum_{\beta=(b,E_\beta,\varphi_\beta):\,b\in
   \widetilde{\vA}}^{\oplus}\inp{ \varphi_\beta| I_{b,\sigma}(\cdot)}_{y_b},\quad I_{b,\sigma}=I_b-\sigma\Pi_H,
 \end{align*}
   and we   recall  that $R_\sigma'(\lambda)=(\Pi'H_\sigma\Pi'-\lambda)^{-1}\Pi'$. Let similarly $U_\sigma(\lambda-\lambda_0)$ be
 defined by writing the effective Grushin Hamiltonian for  $
 H_\sigma$ as $-E_{\vH,\sigma}(\lambda_0 +z) = P_0+U_\sigma(z)-z$, $z =\lambda-\lambda_0\geq
 0$,  along the lines of
 \eqref{P0},  \eqref{U}  and \eqref{Ulambda}.  Then
  \begin{equation}\label{asymp-cs-case09}
\sigma_\alpha(\lambda, \omega) = 4 \pi |s_{\textrm lnt}|^2 + o(1)\quad\text{as} \quad   \lambda \to (\lambda_0)_+,
\end{equation}
where
\begin{equation}\label{s-length9}
s_{\rm {lnt}} = \frac{1}{4\pi} \int_{\R^3_x} (1+
U_\sigma G_0)^{-1}f_\sigma(\lambda_0, \omega)\ \d x\,\in
\C^m;\quad U_\sigma=U_\sigma (0),\, m={\#
  \widetilde{\vA}}.
\end{equation}
 \end{subequations}
\end{thm}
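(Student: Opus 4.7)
The plan is to extend the proof of Theorem \ref{cs-case0} from the non-multiple setting to the multiple two-cluster setting, using the $\sigma$-shift device of Theorem \ref{thm:resolv-asympt-phys2nd} to handle the possible presence of an $L^2$-eigenvalue at $\lambda_0$. First, I would apply the optical formula \eqref{optical} to reduce the question to the leading imaginary part of $F(\lambda+\i 0,\omega) = \inp{I_a e^\omega_\alpha, R(\lambda+\i 0) I_a e^\omega_\alpha}$. Then I substitute the decomposition
\begin{equation*}
R(\lambda_0+z) = -z^{-1}\Pi_H + \sigma^{-1}\Pi_H + R_\sigma(\lambda_0+z),
\end{equation*}
cf. \eqref{eq:sigmeFormR}. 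For real $z=\lambda-\lambda_0>0$ the first two terms are self-adjoint and hence contribute only to the real part of $F$; the relevant pairings $\inp{I_a e^\omega_\alpha,\Pi_H I_a e^\omega_\alpha}$ are well-defined since $\ran \Pi_H\subset L^2_t$ for some $t>3/2$ by \eqref{eq:dec200}, while $I_a e^\omega_\alpha\in L^2_{(1/2)^-}$. Consequently $\Im F(\lambda+\i 0,\omega) = \Im \inp{I_a e^\omega_\alpha, R_\sigma(\lambda+\i 0) I_a e^\omega_\alpha} + o(1)$.

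Next I would insert the Grushin representation for $R_\sigma$ as in \eqref{eq:ResFor}, adapted with $S=(S_b)_{b\in\widetilde{\vA}}$ mapping into the direct sum $\bigoplus_{b\in\widetilde{\vA}}L^2(\bX_b)$, and write
\begin{equation*}
\inp{I_a e^\omega_\alpha, R_\sigma(\lambda+\i 0) I_a e^\omega_\alpha} = \inp{I_{a,\sigma} e^\omega_\alpha, R'_\sigma(\lambda) I_{a,\sigma} e^\omega_\alpha} - \inp{f_\sigma(\lambda,\omega), E_{\vH,\sigma}(\lambda+\i 0)^{-1}f_\sigma(\lambda,\omega)}_x,
\end{equation*}
where $f_\sigma(\lambda,\omega)$ is the $\C^m$-valued object in \eqref{eq:fcross}; since $\lambda_0\notin\sigma_{\pp}(H'_\sigma)$ (by the assumption that $\lambda_0$ is regular or $2$nd kind for $H$, combined with the eigenprojection subtraction), the term $\inp{\cdot,R'_\sigma(\lambda)\cdot}$ is continuous in $\lambda$ near $\lambda_0$ and self-adjoint-valued, so its imaginary part is $o(1)$. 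The main analytic task is to derive a second-order expansion of $E_{\vH,\sigma}(\lambda_0+z)^{-1}$ in the norm topology of $\vL(\vH_s,\vH_{-s})$ for some $s>1$, extending the leading-order bound behind \eqref{eq:asW2}. Following Lemma \ref{lemma:W-asympt-near} and its consequence \eqref{term2} adapted to the matrix-valued effective potential $U_\sigma(z) = U_\sigma + z U_{\sigma,1} + O(z^2)$ (available under \eqref{eq:2plus33} and \eqref{eq:dec200}, since each effective building block corresponds to a fastly decaying channel as in Proposition \ref{prop2.3}), I would obtain
\begin{equation*}
E_{\vH,\sigma}(\lambda_0+z)^{-1} = -(1+G_0 U_\sigma)^{-1}G_0 - \sqrt{z}\,(1+G_0 U_\sigma)^{-1}G_1 (1+U_\sigma G_0)^{-1} + o(\sqrt{z}).
\end{equation*}

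Finally, I would take the imaginary part. Each block of the diagonal matrix $G_1$ has the rank-one integral kernel $\tfrac{\i}{4\pi}\ket{1}\bra{1}$ since $\dim \bX_b = 3$ for every $b\in\widetilde{\vA}\subset\vA^{\mathrm{fd}}_3$. Using $U_\sigma^*=U_\sigma$ together with symmetry of $G_0$, the rank-one factorization yields
\begin{equation*}
\Im\inp{f_\sigma(\lambda_0,\omega),\,G_1(1+U_\sigma G_0)^{-1}f_\sigma(\lambda_0,\omega)} = \tfrac{1}{4\pi}\,\bigl|\inp{\mathbf{1},(1+U_\sigma G_0)^{-1}f_\sigma(\lambda_0,\omega)}\bigr|^2,
\end{equation*}
with $\mathbf{1}=(1,\dots,1)\in\C^m$. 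Dividing by $n_\alpha(\lambda)=\sqrt{\lambda-\lambda_0}$ cancels the $\sqrt z$ factor and produces $4\pi|s_{\mathrm{lnt}}|^2$ with $s_{\mathrm{lnt}}$ as in \eqref{s-length9}, establishing \eqref{asymp-cs-case09}. The main obstacle I foresee is the $\sqrt{z}$-expansion of $E_{\vH,\sigma}^{-1}$: verifying the mapping properties \eqref{Wz00}--\eqref{Wz1} for $W_\sigma(z)$ and the invertibility of $1+G_0 U_\sigma$ in the relevant weighted space requires the effective potential $U_\sigma$ to inherit enough decay and the right Fredholm structure from the ambient fastly decaying inter-cluster interactions and the decay \eqref{eq:dec200} of $\ran\Pi_H$; this is the point where \eqref{eq:2plus33} plays a crucial role, paralleling Remark \ref{remark:resolv-asympt-near2} and the treatment of Case 2 at higher thresholds.
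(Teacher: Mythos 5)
Your route is the paper's: combine the optical formula of Theorem \ref{sigma-alpha-exists} with the decomposition \eqref{eq:sigmeFormR}, and then repeat the proof of Theorem \ref{cs-case0} for the $\sigma$-shifted, matrix-valued Grushin data, the decay \eqref{eq:dec200} and the condition \eqref{eq:2plus33} supplying respectively the weights and the channel-wise expansion $G_0+\sqrt{z}\,G_1+\dots$ needed for the $\sqrt{z}$-asymptotics of $E_{\vH,\sigma}(\lambda_0+z)^{-1}$. (The paper records the weight issue as the single check $f_\sigma(\lambda,\omega)\in\vH_{(3/2)^+}$ uniformly, which you should state explicitly, since $s>3/2$ is exactly what legitimizes pairing against the $o(\sqrt z)$ error in $\vL(\vH_s,\vH_{-s})$.) However, two of your displayed steps are false as written and need repair. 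First, your Grushin identity equates $\inp{I_ae^\omega_\alpha,\,R_\sigma(\lambda+\i 0)I_ae^\omega_\alpha}$ with an expression built from $I_{a,\sigma}e^\omega_\alpha$; the Grushin formula applied to the vector $I_ae^\omega_\alpha$ produces $E_{-,\sigma}(\lambda)(I_ae^\omega_\alpha)$, not the $f_\sigma$ of \eqref{eq:fcross}. The passage from $I_ae^\omega_\alpha$ to $I_{a,\sigma}e^\omega_\alpha$ is only valid at the level of imaginary parts, and is justified because $\Pi_H$ is the eigenprojection of $H_\sigma$ at $\lambda_0-\sigma$ (and of $H$ at $\lambda_0$), so that $R_\sigma(\lambda+\i 0)\Pi_H=-(\lambda-\lambda_0+\sigma)^{-1}\Pi_H$ and $R(\lambda+\i 0)\Pi_H=-(\lambda-\lambda_0)^{-1}\Pi_H$ are self-adjoint and all cross terms are real --- the same mechanism you already invoke to discard the $\Pi_H$-terms of \eqref{eq:sigmeFormR} (whose middle coefficient is $(z+\sigma)^{-1}$, not $\sigma^{-1}$, though that slip is immaterial). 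Without this observation you prove an asymptotics with $E_{-,\sigma}(\lambda)(I_ae^\omega_\alpha)$ in place of \eqref{eq:fcross}, which is not the stated formula \eqref{s-length9} and also loses the normalization $f_\sigma(\lambda_0,\omega)=U_\sigma[\cdots]$ exploited in Theorem \ref{cs-caseReg+40}.

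Second, your final rank-one computation is incorrect precisely in the new, multiple case $m=\#\widetilde{\vA}>1$. On $\oplus_{b}L^2(\bX_b)$ the operator $G_1$ is block-diagonal, $G_1=\tfrac{\i}{4\pi}\sum_{b}\ket{1_b}\bra{1_b}$, and is not the single rank-one operator $\tfrac{\i}{4\pi}\ket{\mathbf 1}\bra{\mathbf 1}$. Hence, after moving $(1+G_0U_\sigma)^{-1}$ to the left slot as its adjoint $(1+U_\sigma G_0)^{-1}$ (a factor your display also omits; it is available since $U_\sigma=U_\sigma(0)$ and $G_0$ are self-adjoint), the leading imaginary part equals $\tfrac{1}{4\pi}\sum_{b}\abs[\big]{\int_{\bX_b}\parb{(1+U_\sigma G_0)^{-1}f_\sigma(\lambda_0,\omega)}_b\,\d x}^2$, i.e. $4\pi$ times the $\C^m$-norm squared of $s_{\rm lnt}$ as in \eqref{s-length9}, and not $\tfrac{1}{4\pi}\abs{\inp{\mathbf 1,(1+U_\sigma G_0)^{-1}f_\sigma}}^2$, which is the modulus squared of the \emph{sum} of the channel integrals; the two differ for $m>1$. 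Both repairs are routine (the second follows from the block structure you yourself note), and with them your argument coincides with the paper's.
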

\begin{proof}
  We can mimic the proof of Theorem \ref{cs-case0}, combining Theorem
  \ref{sigma-alpha-exists} and \eqref{eq:sigmeFormR},  checking first the
  analogue bound
\begin{align*}
 f _\sigma(\lambda, \omega) \in \vH_{( 3/ 2)^+}=
\vH_{(3/ 2)^+}(\R^{3}_x).
 \end{align*}  The latter assertion follows readily from
 \eqref{eq:dec200} (yielding in fact  that $f _\sigma(\lambda, \omega) \in
 \vH_s$ provided   $s\leq t$ and  $s<5 /2$).
\end{proof}

\begin{thm}[Exceptional
point of $1$st or $3$rd kind] \label{cs-caseReg+40} Let Hypothesis
  \ref{hypo-gamma} be satisfied for
  $\alpha=(a,\lambda_\alpha,\varphi_\alpha)$. Assume that
  $\lambda_\alpha=\lambda_0=\Sigma_2$ is exceptional of $1$st or $3$rd kind. Let $
 H_\sigma=H-\sigma\Pi_H$ with  $\sigma> 0$ small and
 $\set{u_1, \dots, u_\kappa}\subset H^1_{(-1/2)^-}$ be  a basis of   resonance states  of
$H$, more precisely it is taken as a basis of   resonance states  of
$H_\sigma$ fulfilling the normalization
\begin{subequations}
\begin{align}
  \label{normalization5.1.2333}
(c(T^*u_i), c(T^*u_j)) =\beta_{ij},  \quad i, j =1, \cdots, \kappa,
\end{align} cf. \eqref{eq:normreseen} and  Proposition \ref{prop5.7b}. By definition $T$ is given
by \eqref{eq:commIdent} and
\begin{align}\label{eq:15c}   c(v) = \tfrac{1}{2 \sqrt{\pi}}\parb{\w{1, \Sigma_{k\leq
  m}U_{\sigma,{1k}}v_k}_x, \cdots, \w{1, \Sigma_{k\leq
  m}U_{\sigma,{mk}}v_k}_x}\,\in \C^m;\,m={\#
  \widetilde{\vA}}.
\end{align}
\end{subequations}
\begin{subequations}
Here the elements of $
   \widetilde{\vA}$ are labelled   by the numbers $1,\dots,m$.

 Then, with
   the convention that  $1$ labels  $a$,
\begin{align} \label{asymp-cs-case10}
  \begin{split}
 \sigma_\alpha(\lambda, \omega) &=\tfrac{1}{\lambda-\lambda_0} \parbb{\sum_{j=1}^\kappa \,\abs{\w{T^*u_j, f_\sigma(\lambda_0, \omega)}}^2   +
o(1)}\\&=\tfrac{4
\pi}{\lambda-\lambda_0} \parbb{\sum_{j=1}^\kappa \,\abs{c_1(T^*u_j)}^2   +
o(1)} \quad\text{as} \quad \lambda \to (\lambda_0)_+.
  \end{split}
\end{align}

In particular
\begin{align} \label{asymp-cs-case1022}
 \sigma_\alpha(\lambda, \omega) \leq\tfrac{4
\pi}{\lambda-\lambda_0} \parb{1  +
o(1)} \quad\text{as} \quad \lambda \to (\lambda_0)_+,
  \end{align}
while  in the maximally exceptionally case where  $\kappa=m$,
\begin{align} \label{asymp-cs-case102}
 \sigma_\alpha(\lambda, \omega) =\tfrac{4
\pi}{\lambda-\lambda_0} \parb{1  +
o(1)} \quad\text{as} \quad \lambda \to (\lambda_0)_+.
  \end{align}
\end{subequations}
\end{thm}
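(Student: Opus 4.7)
The plan is to apply the optical formula \eqref{optical} combined with the resolvent asymptotics of Theorem \ref{thm:resolv-asympt-phys1st3rd}, along the lines of the proof of Theorem \ref{cs-case1} but adapted to the present multiple-threshold setting via the regularization trick $H_\sigma = H - \sigma\Pi_H$.

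First I would use the decomposition \eqref{eq:sigmeFormR} together with the expansion of Theorem \ref{thm:resolv-asympt-phys1st3rd} applied to $H_\sigma$ (for which $\lambda_0$ is exceptional of $1$st kind, hence satisfies the hypotheses of that theorem) to obtain, for $z \to 0$ in $\vZ_+$,
\begin{equation*}
R(\lambda_0 + z) = -z^{-1}\Pi_H + (z+\sigma)^{-1}\Pi_H + \tfrac{\i}{\sqrt{z}}\sum_{j=1}^\kappa \w{u_j, \cdot}u_j + \vO(|z|^{-1/2+\epsilon}).
\end{equation*}
Since $\ran \Pi_H \subset L^2_t$ with $t > 3/2$ by \eqref{eq:dec200} and the resonance states $u_j$ lie in $H^1_{(-1/2)^-}$, the pairings with $I_a e^\omega_\alpha$ are well-defined despite the limited decay of the latter (here one uses the explicit bound $\inp{\varphi_\alpha, I_ae^\omega_\alpha}_y = \vO(|x|^{-4})$ from \eqref{eq:16ii} together with \eqref{eq:dec200} to legitimately substitute the expansion into the quadratic form $F(z,\omega)$ of \eqref{F}). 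For $z = \lambda - \lambda_0 > 0$ real, the first two terms are real, so only the $\i z^{-1/2}$ term contributes to $\Im F$, giving via \eqref{optical} and $n_\alpha(\lambda)=\sqrt{\lambda-\lambda_0}$:
\begin{equation*}
\sigma_\alpha(\lambda, \omega) = \tfrac{1}{\lambda-\lambda_0}\sum_{j=1}^\kappa |\w{u_j, I_a e^\omega_\alpha}|^2 + o\bigl(\tfrac{1}{\lambda-\lambda_0}\bigr).
\end{equation*}

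Next I would identify the pairing using the eigentransform machinery of Remark \ref{remark:GrushinBB} (here in its extended form for $H_\sigma$): since $u_j = E^+_{+,\sigma}(\lambda_0) T^* u_j$, taking adjoints and recognizing \eqref{eq:fcross} as precisely $(E^+_{+,\sigma}(\lambda_0))^*(I_a e^\omega_\alpha)$ yields $\w{u_j, I_a e^\omega_\alpha} = \w{T^* u_j, f_\sigma(\lambda_0, \omega)}$, establishing the first equality of \eqref{asymp-cs-case10}. Continuity of $f_\sigma(\cdot, \omega)$ at $\lambda_0$ (from holomorphy of $R'_\sigma$ at $\lambda_0$, cf. Hypothesis \ref{hypo-beta}-style reasoning for $H_\sigma$) ensures we may evaluate at $\lambda = \lambda_0$. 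Since $n_\alpha(\lambda_0) = 0$ implies $e^\omega_\alpha|_{\lambda = \lambda_0} = 1\otimes \varphi_\alpha$, a calculation parallel to \eqref{eq:16compU} gives $f_\sigma(\lambda_0, \omega) = U_\sigma \mathbf{1}_a$, where $\mathbf{1}_a$ is the vector in $\vH$ with $1$ in the $a$-slot and zero elsewhere. Combining with the definition \eqref{eq:15c} of $c$ then produces $\w{T^* u_j, U_\sigma \mathbf{1}_a\,} = 2\sqrt{\pi}\,c_1(T^* u_j)$, which gives the second equality in \eqref{asymp-cs-case10}.

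Finally, the bounds \eqref{asymp-cs-case1022} and \eqref{asymp-cs-case102} follow by Parseval's identity applied to the orthonormal family $\{c(T^* u_j)\}_{j=1}^\kappa \subset \C^m$ (by \eqref{normalization5.1.2333}): writing $e_1$ for the first standard basis vector of $\C^m$,
\begin{equation*}
\sum_{j=1}^\kappa |c_1(T^* u_j)|^2 = \sum_{j=1}^\kappa |(c(T^* u_j), e_1)|^2 = |Pe_1|^2 \le 1,
\end{equation*}
where $P$ is the orthogonal projection onto $\spa\{c(T^* u_j)\}_{j \le \kappa}$; equality holds when $\kappa = m$ since then $P = 1$.

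The main obstacle will be rigorously justifying the substitution of the resolvent asymptotics into the optical formula, given that $I_a e^\omega_\alpha \in L^2_{(-1/2)^-}$ lies just outside the space $L^2_s$, $s > 1$, for which Theorem \ref{thm:resolv-asympt-phys1st3rd} is stated. The workaround is to split $I_a e^\omega_\alpha = \Pi(I_a e^\omega_\alpha) + \Pi'(I_a e^\omega_\alpha)$ and exploit that the first part, computed as $\inp{\varphi_\beta, I_a e^\omega_\alpha}_{y_b}$ for each $b \in \widetilde{\vA}$, decays like $|x|^{-4}$ by \eqref{eq:16ii}, while the second part couples only through factors of $\Pi'I_a$ and the reduced resolvent $R'_\sigma$, for which the improved mapping properties carry sufficient decay to recast the pairing through $f_\sigma$.
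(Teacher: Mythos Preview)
Your approach is correct and the computations (in particular $f_\sigma(\lambda_0,\omega)=U_\sigma\mathbf 1_a$ and the Parseval argument for \eqref{asymp-cs-case1022}--\eqref{asymp-cs-case102}) coincide with the paper's. The difference is in the order of operations. The paper does not substitute the full resolvent expansion of Theorem~\ref{thm:resolv-asympt-phys1st3rd} into the optical formula; instead it mimics Theorem~\ref{cs-case1}, meaning it first applies the Grushin identity \eqref{cs-threshold} (adapted to $H_\sigma$) to rewrite
\[
\sigma_\alpha(\lambda,\omega)=-\tfrac{1}{n_\alpha(\lambda)}\,\Im\bigl\langle f_\sigma(\lambda,\omega),\,E_{\vH,\sigma}(\lambda+\i0)^{-1}f_\sigma(\lambda,\omega)\bigr\rangle,
\]
and only then inserts the expansion of $E_{\vH,\sigma}^{-1}$ (from Proposition~\ref{prop5.9}, Case~1). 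Since $f_\sigma(\lambda,\omega)\in\vH_{(3/2)^+}$ (by \eqref{eq:dec200} and the argument in the proof of Theorem~\ref{cs-caseReg+3}), the error term is controlled directly in the required weighted space, so the ``main obstacle'' you identify never arises. Your proposed workaround---splitting $I_a e^\omega_\alpha=\Pi(\cdot)+\Pi'(\cdot)$ and exploiting the $\vO(|x|^{-4})$ decay of the $\Pi$-part and the $\Pi' I_a\Pi$ gain---is exactly what the identity \eqref{cs-threshold} packages, so your route converges to the paper's but after a detour. One small point you use implicitly: the identification $\langle u_j,I_a e^\omega_\alpha\rangle=\langle T^*u_j,f_\sigma(\lambda_0,\omega)\rangle$ requires $\Pi_H u_j=0$ (to absorb the $\sigma\Pi_H$ difference between $I_a$ and $I_{a,\sigma}$), which indeed holds by the argument following \eqref{asymRz1aa2}.
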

\begin{proof} We obtain \eqref{asymp-cs-case10} by   mimicking  the proof of Theorem \ref{cs-case1}. The analogue of
  \eqref{eq:16compU} reads
  \begin{align*}
     f_\sigma(\lambda_0, \omega)&= E_- (H_\sigma-\lambda_0)
    \varphi_\alpha=E_- (H_\sigma-\lambda_0) S_1[1]\\&= -E_\vH [\Sigma^{\oplus}_k\,\beta_{1k}]=(P_0+U_\sigma)[\Sigma^{\oplus}_k\,\beta_{1k}]=U_\sigma[\Sigma^{\oplus}_k\,\beta_{1k}].
  \end{align*}

The assertions
\eqref{asymp-cs-case1022} and \eqref{asymp-cs-case102} follow from
\eqref{asymp-cs-case10} and  the
normalization condition \eqref{normalization5.1.2333}.
\end{proof}

\begin{remark*} The quantity $\sum_{j=1}^\kappa \,\abs{c_1(T^*u_j)}^2$
  of \eqref{asymp-cs-case10} appears also in \eqref{eq:Levison2} and
  in Subsection \ref{subsec: An example of transmission}
  although for a different setup. This is not a coincidence, an
  overall conclusion may be viewed as the same: It may be interpreted as a
  probability for resonance
  induced  scattering phenomena at a threshold. This impact appears
  with probability one in the maximally exceptionally case for both models.

\end{remark*}




\end{document}